\documentclass{article}
\usepackage{microtype}
\usepackage[utf8]{inputenc}
\usepackage{color}
\usepackage{proofnet} 
\usepackage{fullpage}
\usepackage{subfigure}
\usepackage{mathsetc}

\usepackage{prerex}
\usepackage{wrapfig}

\usepackage{xargs}
\usepackage{enumerate}
\usepackage{appendix}

\newtheorem{definition}{Definition}
\newtheorem{theorem}{Theorem}
\newtheorem{lemma}{Lemma}
\newtheorem{coro}{Corollary}
\newtheorem{conjecture}{Conjecture}

% Fix latex
\def\smallskip{\vskip\smallskipamount}
\def\medskip{\vskip\medskipamount}
\def\bigskip{\vskip\bigskipamount}

% Hand made two heads curved arrow
\usepackage{tipa}
\usepackage{bm}

%\tikzexternalize
%\tikzsetexternalprefix{figures/} 
 
% Hand made theorem
\def\claim{  \noindent\textit{Proof.} }
\def\endclaim{ }
\newenvironment{prf}{\claim}{\endclaim}

\title{On paths-based criteria for polynomial time complexity in proof-nets}
\author{Matthieu Perrinel}

\definecolor{darkgreen}{rgb}{0,0.4,0}
\definecolor{lightblue}{rgb}{0.8,0.8,1}
\definecolor{lightgreen}{rgb}{0.8,1,0.8}

\begin{document}
\maketitle
\tableofcontents

\begin{abstract}
  Girard's Light linear logic ($LLL$) characterized polynomial time in the proof-as-program paradigm with a bound on cut elimination. This logic relied on a stratification principle and a "one-door" principle which were generalized later respectively in the systems $L^4$ and $L^{3a}$. Each system was brought with its own complex proof of Ptime soundness. In this paper we propose a broad sufficient criterion for Ptime soundness for linear logic subsystems, based on the study of paths inside the proof-nets, which factorizes proofs of soundness of existing systems and may be used for future systems. As an additional gain, our bound stands for any reduction strategy whereas most bounds in the literature only stand for a particular strategy.
\end{abstract}

%\includeonly{
%intro,
%linear,
%contextSem,
%stratification,
%dependence,
%applications,
%l40
%}
\section{Introduction}
\paragraph{Implicit computational complexity} For decades, computer scientists have tried to characterize complexity classes by restricting syntactically models of computation. This effort is known as implicit computational complexity. The main application is to achieve automated certification of a program complexity. Some consider also other goals for implicit computational complexity: to understand the root of complexity, as suggested by Dal Lago~\cite{lago2006context} or to create polytime mathematics, as suggested by Girard~\cite{girard1995light}.

There are different approaches to implicit complexity, corresponding to different models of computation: restriction of recursion~\cite{cobham1965intrinsic,bellantoni1992new,leivant1994ramified}, interpretation methods~\cite{bonfante2001lexicographic}, type systems~\cite{aehlig2002syntactical,hofmann2003linear}, restriction of linear logic~\cite{lafont2004soft,girard1992bounded,girard1995light,danos2003linear}. One of the interests of the linear logic approach is the possibility to quantify over types (second order quantifiers). This allows, for example, to write a sorting algorithm which can be applied to any type of data, as soon as a comparison function is given. A milestone in that field was the creation of Light Linear Logic ($LLL$) by Girard~\cite{girard1995light} which characterizes polynomial time (Ptime). In the following study, we were mainly interested by systems based on $LLL$ characterizing Ptime, with the perspective of automated inference of complexity bounds as a final goal. 

\paragraph{Proofs as programs} In the linear logic approach, the programs are proofs of linear logic formulae and the execution of programs is done by the elimination of the cut rule in the proof~\cite{gentzen1935untersuchungen,girard1987linear}. Programming in such a system seems quite unnatural for most people. Fortunately, the proofs-as-programs correspondence~\cite{howard1980formulae} states that a logical system corresponds to a type system for $\lambda$-calculus. Thus, we would like to transform subsystems of linear logic into type systems for $\lambda$-calculus such that any typed $\lambda$-term reduces to its normal form in a number of steps bounded by a polynomial on its size (Ptime soundness). For any Ptime function $\phi$, such a type system also has to type at least one $\lambda$-term computing $\phi$ (Ptime extensionnal completeness). $\lambda$-calculus is not used directly as a programming language but functionnal programming languages (such as Lisp, Haskell or Caml) are based on it. Thus, a type system for $\lambda$-calculus characterizing a class can be seen as a first step towards the creation of a real programming language characterizing the class. Such a transformation was done on $LLL$ by Baillot and Terui, who transformed the logical system $LLL$ into a type system $DLAL$ for $\lambda$-calculus~\cite{baillot2004light}.

\paragraph{Weak and strong bounds}A programming language comes with a reduction strategy, which determines the order of the reduction. For example: do we reduce the arguments before passing them to functions (call by value) or not (call by name)? We want this strategy to be simple enough to be understood by programmers. Thus a type system must enforce a polynomial bound for any strategy (strong bound) or at least for a simple strategy. Complexity bounds are sometimes proved for farfetched strategies, which are unlikely to be implemented in a real programming language. In this paper, we will prove bounds which do not depend on the strategy (strong bounds).
 
\paragraph{Intensional completeness} $DLAL$ is extensionnally complete, so for every Ptime function $f : \mathbb{N} \rightarrow \mathbb{N}$, there exists a $\lambda$-term $t$ computing $f$ and typable by $DLAL$. However, there may be other $\lambda$-terms computing $f$ which are not typable by $DLAL$. For example, there exists $\lambda$-terms computing multiplication but the shortest $\lambda$-term for multiplication on Church unary integers ($\lambda m.\lambda n.\lambda f.m (n f)$) is not typable. This $\lambda$-term is not a complex term created for the purpose of tricking $DLAL$. Therefore, it seems that a programming language based on $DLAL$ would not type some natural $Ptime$ programs. We would like type systems typing more polynomial programs than $DLAL$. 

\paragraph{Linear logics by levels} Several extensions of $LLL$ have been studied, like $L^4$ and its refinement $L^4_0$~\cite{baillot2010linear} and $L^{3a}$~\cite{dorman2009linear}. The main novelty in those systems is that stratification and depth are no longer related (see section \ref{section_applications}). Unfortunately, the expressivity gain seems small: no meaningful program separating $L^4$ from $LLL$ or $L^4_0$ from $L^4$ has been found yet. For example, none of those systems contain the proof-net $Mult$ corresponding to $\lambda m.\lambda n.\lambda f.m (n f)$. So, it is a small step on the path of expressivity. However these systems thought new ideas that can help futre progress. It seems it unblocked the situation, by bringing new ideas. Proving strong polynomial bounds for $L^4$,$L^4_0$ and $L^3_a$  is thus interesting because it would make type systems based on these logics possible. Nevertheless, such proofs would be more interesting if the methods used  were general enough to be used on future systems. 
\paragraph{Factorizing proofs}Indeed, many systems based on linear logic have been defined to characterize several complexity classes, each system coming with its own soundness proof. Those proofs are often similar, so it seems we could ease both the search and the understandings of such proofs by factorizing parts of those proofs. An important progress was made in this direction by Dal Lago with context semantics~\cite{lago2006context}: he provided a common method to prove complexity bounds for several systems like $ELL$, $LLL$ and $SLL$. Here we go a step further by designing higher level criteria, based on context semantics. The idea of geometry of interaction~\cite{girard1989geometry} and context semantics is to study the reduction of proof-nets (or $\lambda$-terms) by leaving the proof-net unchanged and analysing instead some paths in it.

\paragraph{Contributions} In this paper, we will define a ``stratification criterion'' and a ``dependence control criterion'' on proof-nets. Stratification alone implies a strong elementary bound on cut elimination. If both criteria are satisfied they imply a strong polynomial time bound. We will then prove that $ELL$ satisfies the stratification criterion, and that $LLL$, $L^4$ and $L^{3a}$ satisfy both criteria. This proves strong polynomial bounds $L^4$ and $L^{3a}$ (for which only weak bounds were previously known). We also prove a strong polynomial bound for $L^4_0$ thanks to the $L^4$ strong bound.

\paragraph{Related works}
In the search for an expressive system for complexity properties, Dal Lago and Gaboardi have defined the type system $dlPCF$ which can validate all $dlPCF$ programs. Type-checking in $dlPCF$ is undecidable, but one can imagine restricting $dlPCF$ to a decidable fragment. Their framework can be seen as a top-down approach. Here we follow instead a bottom-up line of work: we take inspiration from previous deciable type systems characterizing $Ptime$ and try to relax conditions without losing neither soundess nor decidability.

Our main tool will be context semantics. Context semantics is related to geometry of interaction~\cite{danos1995proof} and has first been used to study qualitative properties~\cite{gonthier1992geometry}. In~\cite{baillot2001elementary}, Baillot and Pedicini use geometry of interaction to characterize elementary time. In~\cite{lago2006context}, Dal Lago adapted context semantics to study quantitative properties of $cut$-elimination. From this point of view an advantage of context semantics compared to the syntactic study of reduction is its genericity: some common results can be proven for different variants of linear logic, which allows to factor out proofs of complexity results for these various systems. Our framework is slightly different from Dal Lago's context semantics. In particular, Dal Lago worked in intuitionnistic linear logic, and we work in classical linear logic. So the results of~\cite{lago2006context} can not be directly applied. However most theorems of~\cite{lago2006context} have correspondents in our framework, with quite similar proofs.

Many papers on complexity in linear logic work on subsystems of linear logic. Our goal was to define criteria on full linear logic so that our results would be as reusable as possible. Unfortunately, we had to get rid off digging. Some other works deal with full linear logic. For example, in~\cite{laurent2006obsessional}, Tortora De Falco and Laurent define a criterion on the elements of the relational model of full linear logic: the obsessional cliques. Then they show that the linear logic proofs mapped to obsessional cliques are exactly the $SLL$ proofs and hence $Ptime$. 

\paragraph{Outline} 
In the remaining of this introduction we will give an intuitive understanding of stratification and dependence control. In section~\cite{section_proofnets}, we present linear logic and proof-nets. In section~\cite{section_contextsemantics}, we present our version of context semantics and prove useful lemmas. The only significant results in this part which were not in~\cite{lago2006context} are those on ``underlying formulae'' (lemmas \ref{lemma_underlying_step} and \ref{lemma_underlying_finite}). Notice also the Dal Lago's weight theorem, for which we give a more detailed and formal proof than the original. In sections \ref{section_stratification} and \ref{section_dependence}, we give our stratification and dependence control conditions and prove that they imply polynomial soundness. In section \ref{section_applications}, we prove strong polynomial bounds for $LLL$, $L^4$ and $L^{3a}$ (only the first being previously proved). In section \ref{section_l40}, we prove a strong polynomial bound for $L^4_0$ for which no bound was previously known.

\subsection{The roots of complexity}
\label{complex}
To investigate the roots of complexity, we begin by looking at a famous non-normalizing $\lambda$ term: $t= (\lambda x. x x) (\lambda y. y y)$. It is often said that the cause of this divergence is the fact that $x$ is applied to itself~\cite{pierce2002types}. However the problem seems to be more subtle, because the $\lambda$ term $u= (\lambda x. x x) (\lambda y. y)$ normalizes. The difference seems to be that in $t$, $\lambda y. y y$ will duplicate itself during reduction, whereas in $u$, $\lambda y. y$ is applied to itself without duplication. If we want to control complexity precisely, we will need to make a difference between self-application and self-duplication. The usual type systems for $\lambda$ calculus, based on intuitionnistic logic (Simple typing, System F) can not do this. This is why we will use a type system based on a logic which controls duplication: linear logic. 

While highlighting the difference between self-application and self-duplication, this example was not really convincing. Indeed System F already accepts $(\lambda x. x x) (\lambda y. y)$ while refusing $(\lambda x. x x) (\lambda y. y y)$. Let's look at other examples to see the limits of system $F$.

Let us use the Church encoding for natural numbers. Integers are typed by type $\underline{\mathbb{N}}=\forall X. \oc (X \multimap X) \multimap \oc (X \multimap X)$. The integer $n$ is represented by
\begin{equation*}
  {\underline  n} = \lambda f. \lambda x. \underbrace{ f(f(...(f ~ x)))}_{n \text{ applications of } f} : \underline{\mathbb{N}}
\end{equation*}
Then the following terms represent the functions $n \mapsto 2.n$, $n \mapsto 2^n$ and $n \mapsto Ackermann(n,n)$. In fact, System $F$ makes no difference between those terms, typing each term with $\mathbb{N} \Rightarrow \mathbb{N}$. The complexity of those functions being quite different (linear, exponential and over-elementary), we would like to understand what makes their complexity so different and tell them apart. 

\begin{align*}
  mult & := \lambda n. n \left( \lambda a. \lambda f. \lambda x. f (f (a f x)) \right) 0 \\
  exp & := \lambda n. n \left( \lambda a. \lambda f. \lambda x. a f (a f x) \right) \\
  ack & := \lambda n. n \left( \lambda a. \lambda k. (S k) a 1 \right) S ~ n 
\end{align*}

We can see that those three functions are built similarly: they are all based on the iteration of some auxiliary function. They all have a subterm of shape $n (\lambda a. t) b$ as a key component. 

When we reduce those terms, $n (\lambda a. t) b$ becomes $(\lambda a. t) (\lambda a. t) \cdots (\lambda a. t) b$. So $\lambda a. t$ is self-applicating. Depending on the use of $a$ in $t$, it might also be self-duplicating. This will tell apart the complex functions from the simple ones. Indeed, in $mult$, $a$ is used only once. In $exp$, $a$ is used twice, so $(\lambda a. t)$ is self-duplicating. But this duplication is bounded. In $ack$, $a$ is itself iterated, and during the $i$-th iteration of $\lambda a. \lambda k. (S k) a 1$, the number of times $a$ is used is equal to $k+1$ so depends on the next iteration. So, in $ack$, $(\lambda a. t)$ is self-duplicating in an unbounded way.

This analysis is quite unformal. In order to forbid this notion of ``self-duplication'', we will use linear logic. More precisely we will define a subset of linear logic in which self-duplication is impossible.

\subsection{Proof-nets, intuitively}

In the same way that $\lambda$-calculus computes functions with the $\beta$-reduction, linear logic computes functions with the $cut$-elimination. It is the elimination of the $cut$-rules in proofs of linear logic (the {\it Hauptsatz} of Gentzen). This elimination is more natural on {\it proof nets}, which are graphical representation of proofs. The proof nets for $L^4$ will be defined in section \ref{section_proofnets}. Here, we will only give an intuitive understanding of proof-nets, necessary for \ref{strat_dep}.

Intuitively, the edges of a proof net correspond either to programs or to requests. For example the function $x \mapsto x +2$ is a program. $x+2$ alone is a program, but it comes with a request $x^\perp$ for a parameter $x$. The program represented by the proof net, is the program labelling its pending edge. $\otimes$ can be thought of as an application and $\parr$ as an abstraction. 

Some programs ($x \mapsto x + x $ for example) need to duplicate a subprogram ($x$ here). We want to control this operation. When we duplicate a program, we need to duplicate the requests associated to it (in $x \mapsto (x*y) + (x*y)$ we observe that to duplicate $x*y$ we need to request twice $x$ and $y$). This is the purpose of boxes: a box is a part of the proof net with a principal door ($!P$) which is a program we want to duplicate, and optional auxiliary doors ($?P$) which are requests associated to the program. Only programs preceded by a $!$ can be duplicated. So, contracting various requests for a same program $x$ in a single requests can only be done if $x$ is preceded by a $!$. 
 The right part of figure \ref{deltadelta} is labelled corresponding to this intuitive understanding of proof nets.

%\tikzset{external/remake next}
\begin{figure}[ht]
\centering
\begin{tikzpicture}
\begin{scope}[scale=0.9]
\draw (0,0) node (par) {$\parr$};
\draw (par) ++(140 : 1) node (cont) {$?C$};
\draw (cont)++( 60 : 1) node (der)  {$?D$};
\draw (der) ++( 60 : 1) node (tens) {$\otimes$};
\draw (tens)++( 60 : 1) node (ax)   {$ax$};
\draw (tens)++(120 : 1) node [draw, circle, inner sep=0.02cm] (bprom) {$!P$};
\draw (bprom)++(-0.5, 0.7) node (ax2) {$ax$};
\draw (bprom)++(-1,0) node [draw, circle, inner sep=0.02cm] (wprom) {$?P$};

\draw [<-, thin] (par)--(cont);
\draw [<-, thin] (cont)--(der);
\draw [<-, thin](cont) to [bend left=10] (wprom);
\draw [<-, thick](der)--(tens) node [right, midway] {$f$};
\draw [<-, thin]  (tens) to [bend left] (ax.180);
\draw [<-, thin] (tens)--(bprom) node [left,midway] {$e$};
\draw [<-, thick] (bprom) to [bend right] (ax2);
\draw [->, thick] (ax2) to [bend right] (wprom);
\draw [->, thin] (ax.0) to [out=-55, in = 70] (par);
\draw [thin] (bprom) -| ++(0.3,1) -| ++(-1.6,-1) -- (wprom) -- (bprom);

\draw (par) ++(2,-1) node (cut) {$cut$};
\draw (par) ++(4,-0.5) node (tens) {$\otimes$};
\draw (tens)++(30:1) node (ax3) {$ax$};
\draw (ax3) ++(-30:1) node (sort) {};
\draw (tens) ++(120:1.5) node [draw, circle, inner sep=0.02cm](bang) {$!$};

\draw [->, thin] (par) to [bend right] (cut);
\draw [<-, thin] (cut) to [out=0, in=-120] (tens);
\draw ($(cut)!0.5!(tens)$) node [right] {$\Delta^\perp$};
\draw [<-, thin] (tens)--(bang) node [pos=0.8, right] {$\Delta := !(!x \mapsto x (!x))$};
\draw [<-, thin] (tens) to [bend left] (ax3);
\draw [->, thin] (ax3) to [bend left] (sort);
\draw ($(ax3)!0.5!(sort)$) node [below] {$\Delta (\Delta)$};
\draw [thin] (bang) -| ++(2,5.5) -| ++(-4,-5.5) -- (bang);

\draw (bang) ++(0,1) node (par) {$\parr$};
\draw [<-, thick] (bang)--(par) node [midway, right] {$!x \mapsto x (!x)$};
\draw (par) ++(140 : 1) node (cont) {$?C$};
\draw (cont)++( 60 : 1) node (der)  {$?D$};
\draw (der) ++( 60 : 1) node (tens) {$\otimes$};
\draw (tens)++( 60 : 1) node (ax)   {$ax$};
\draw (tens)++(120 : 1) node [draw, circle, inner sep=0.02cm] (bprom) {$!P$};
\draw (bprom)++(-0.5, 0.7) node (ax2) {$ax$};
\draw (bprom)++(-1,0) node [draw, circle, inner sep=0.02cm] (wprom) {$?P$};

\draw [<-, thick] (par)--(cont) node [midway, below left] {$(!x)^\perp$};
\draw [<-, thick] (cont)--(der) node [midway, right] {$(!x)^\perp$};
\draw ($(cont)!0.5!(wprom)$) node [left] {$(!x)^\perp$};
\draw [<-, thick]  (cont)to [bend left=10] (wprom);
\draw [<-, ultra thick](der)--(tens) node [midway, right] {$x^\perp$};
\draw [<-, thick] (tens)--(bprom) node [midway, left] {$!x$};
\draw [<-, thick] (tens) to [bend left] (ax);
\draw ($(tens)!0.5!(ax) + (0.4,0)$) node  {$(x (!x))^\perp$};
\draw [<-, ultra thick] (bprom) to [bend right] (ax2);
\draw ($(bprom)!0.5!(ax2) +(0.3,0.3)$) node {$x$};
\draw [->, ultra thick] (ax2) to [bend right] (wprom);
\draw ($(ax2)!0.5!(wprom) +(-0.3,0.3)$) node {$x^\perp$};
\draw [->, thick] (ax) to [out=-10, in = 50] (par);
\draw ($(ax)!0.6!(par) +(1.2,0)$) node {$x (!x)$};
\draw [thick] (bprom) -| ++(0.3,1) -| ++(-1.6,-1) -- (wprom) -- (bprom);
\end{scope}
\end{tikzpicture}
\caption{This proof net reduces to itself, it represents $(\lambda x. x x) (\lambda x. x x)$}
\label{deltadelta}
\end{figure}

\subsection{Stratification and dependence control}
\label{strat_dep}
To avoid the complexity explosion of $ack$, we want to forbid iteration of functions $\lambda x.t$ where the number of times $x$ is used in $(t)u$ depends on $u$. The first such restriction was created by Girard. In $LLL$, he forbids the dereliction and digging principles ($!A \multimap A$ and $!A \multimap !!A$). Such a restriction corresponds to elementary-time (tower of exponentials of fixed height) functions. For example, you can see in figure \ref{deltadelta} that the proof-net corresponding to $(\lambda x. x x)(\lambda y. y y)$ is ruled out because it uses a dereliction (the $?D$ link). The proof-net corresponding to Ackermann is ruled out because it would use digging. We will call similar restrictions {\em stratification conditions} (more details in section \ref{section_stratification}).

Though stratification gives us a bound on the length of the reduction, elementary time is not considered as a reasonable bound. Figure \ref{exp} explains us how the complexity arises, despite stratification. On this proof net, the box A duplicates the box B. Each copy of B duplicates C, each copy of C,... To avoid it, Girard~\cite{girard1995light} limited the number of $?P$-doors of each $!$-boxes to 1. To keep some expressivity, he introduced a new modality $\S$ with $\S$-boxes which can have an arbitrary number of $?P$-doors. We will call {\em dependence control condition} any restriction preventing this kind of sequences (more details in section \ref{section_dependence}.

%\tikzset{external/remake next}
\begin{figure}[ht]
  \centering
  \begin{tikzpicture}
    \tikzstyle{door}=[draw, circle, inner sep=0.02cm]
    \draw (0,0) node [door] (bang1) {!P};
\draw (bang1) node [below right] {$C$}; 
\draw (bang1)++(-1.2,-0.7) node (why1) {?C};
\draw (why1)++(-0.4,0.7) node [door] (ancl1) {?P};
\draw (why1)++(0.4,0.7) node [door] (ancr1) {?P};
\draw (bang1) ++(0,0.6) node (tens1) {$\otimes$};
\draw (tens1) ++(-0.6,0.3) node (ax1b) {$ax$};
\draw (tens1) ++(-0.6,0.6) node (ax1h) {$ax$};
\draw (bang1) ++ (1,-1) node (cut1) {$cut$};
\draw [<-] (bang1)--(tens1);
\draw [<-] (tens1) to [bend right] (ax1h);
\draw [<-] (tens1) to [bend right] (ax1b);
\draw [->] (ax1h) to [bend right] (ancl1);
\draw [->] (ax1b) to [bend right] (ancr1);
\draw (3,0) node (bang2) [door] {!P};
\draw (bang2) node [below right] {$B$}; 
\draw (bang2)++(-1.2,-0.7) node (why2) {?C};
\draw (why2)++(-0.4,0.7) node [door]  (ancl2) {?P};
\draw (why2)++(0.4,0.7) node  [door] (ancr2) {?P};
\draw (bang2) ++(0,0.6) node (tens2) {$\otimes$};
\draw (tens2) ++(-0.6,0.3) node (ax2b) {$ax$};
\draw (tens2) ++(-0.6,0.6) node (ax2h) {$ax$};
\draw (bang2) ++ (1,-1) node (cut2) {$cut$};
\draw [<-] (bang2)--(tens2);
\draw [<-] (tens2) to [bend right] (ax2h);
\draw [<-] (tens2) to [bend right] (ax2b);
\draw [->] (ax2h) to [bend right] (ancl2);
\draw [->] (ax2b) to [bend right] (ancr2);
\draw (6,0) node (bang3)  [door] {!P};
\draw (bang3) node [below right] {$A$}; 
\draw (bang3)++(-1.2,-0.7) node (why3) {?C};
\draw (why3)++(-0.4,0.7) node [door] (ancl3) {?P};
\draw (why3)++(0.4,0.7) node [door] (ancr3) {?P};
\draw [->](bang1) to [out=-80, in=180] (cut1);
\draw [<-](cut1) to [out=0, in=-140] (why2);
\draw [->](bang2) to [out=-80, in=180] (cut2);
\draw [<-] (cut2) to [out=0, in=-140] (why3);
\draw (bang3) ++(0,0.6) node (tens3) {$\otimes$};
\draw (tens3) ++(-0.6,0.3) node (ax3b) {$ax$};
\draw (tens3) ++(-0.6,0.6) node (ax3h) {$ax$};
\draw [<-] (bang3)--(tens3);
\draw [<-] (tens3) to [bend right] (ax3h);
\draw [<-] (tens3) to [bend right] (ax3b);
\draw [->] (ax3h) to [bend right] (ancl3);
\draw [->] (ax3b) to [bend right] (ancr3);
\draw [->] (bang3) -- ++ (0,-1);
\draw (bang1) --++(0.4,0) --++(0,1.5) --++(-2.4,0) |- (ancl1)--(ancr1)--(bang1);
\draw (bang2) --++(0.4,0) --++(0,1.5) --++(-2.4,0) |- (ancl2)--(ancr2)--(bang2);
\draw (bang3) --++(0.4,0) --++(0,1.5) --++(-2.4,0) |- (ancl3)--(ancr3)--(bang3);
\draw [<-](why1) to [bend left] (ancl1);
\draw [<-](why1) to [bend right](ancr1);
\draw [<-](why2) to [bend left] (ancl2);
\draw [<-](why2) to [bend right] (ancr2);
\draw [<-](why3) to [bend left] (ancl3);
\draw [<-](why3) to [bend right](ancr3);
\draw (why1) ++(-1,-0.5) node (suite) {};
\draw [->,dashed] (why1) to [bend left] (suite);

\end{tikzpicture}
\caption{This proof net (if extended to n boxes) reduces in $2^n$ steps}
\label{exp}
\end{figure}

\section{Linear logic}
\label{section_proofnets}

\paragraph{}Linear logic ($LL$)~\cite{girard1987linear} can be considered as a refinement of System F~\cite{girard1971extension} where we focus especially on how the duplication of formulae is managed. In System F, $A \Rightarrow B$ means ``with many proofs of $A$, I can create a proof of $B$''. Linear logic decompose it into two connectives: $\oc A$ means ``infinitely many proofs of $A$'', $A \multimap B$ means ``using exactly one proof of $A$, I can create a proof of $B$''. We can notice that we can represent $A \Rightarrow B$ with $(\oc A) \multimap B$. In fact, $A \multimap B$ is a notation of $A^\perp \parr B$. $(\_)^\perp$ can be considered as a negation and $\parr$ as a disjunction. In fact the disjunctions $\vee$ and conjunction $\wedge$ are separated into two disjunctions ($\parr$ and $\oplus$) and two conjunctions ($\otimes$ and $\with$). In this paper, we will only use the ``multiplicative'' ones: $\parr$ and $\otimes$. 

Finally $\forall$ and $\exists$ allow us, as in System F, to quantify over the sets of formulae. As examples, let us notice that $\forall X. X \multimap X$ is true (for any formula $X$, using exactly one proof of $X$, we can create a proof of $X$). On the contrary, $\forall X. X \multimap (X \otimes X)$ is false because, in the general case, we need two proofs of $X$ to prove $X \otimes X$. The set $\mathcal{F}_{LL}$, defined as follows, designs the set of formulae of linear logic.
\begin{equation*}\label{def_fll}
  \mathcal{F}_{LL}  = X \mid X^\perp  \mid \mathcal{F}_{LL} \otimes \mathcal{F}_{LL} \mid \mathcal{F}_{LL} \parr \mathcal{F}_{LL} \mid \forall X \mathcal{F}_{LL} \mid \exists X \mathcal{F}_{LL} \mid !\mathcal{F}_{LL} \mid ?\mathcal{F}_{LL} 
\end{equation*}

In the following paper, we will study variations of Linear Logic. These variations are not really subsystems of $LL$ because we extend the language of formula with indexes on atomic formulae and a connective $\S$. However, those variations all are subsystems of a same system, which we will call $LL_0$. We will work inside $LL_0$ for the main results, keeping in mind that we are not interested in $LL_0$ itself but in its subsystems. The system $LL_0$ uses the following formulas, where $X$ ranges over a denumerable set of variables ($X, X^\perp$ are atomic formulae) and $p$ ranges over $\mathbb{Z}$.
\begin{equation*} \label{def_fll0}
  \mathcal{F}_{LL_0}  = p.X \mid p.X^\perp  \mid \mathcal{F}_{LL_0} \otimes \mathcal{F}_{LL_0} \mid \mathcal{F}_{LL_0} \parr \mathcal{F}_{LL_0} \mid \forall X \mathcal{F}_{LL_0} \mid \exists X \mathcal{F}_{LL_0} \mid !\mathcal{F}_{LL_0} \mid ?\mathcal{F}_{LL_0} \mid \S \mathcal{F}_{LL_0}
\end{equation*}

You can notice that $(\_)^\perp$ is only defined on atomic formulae. We define inductively an involution $(\_)^\perp$ on $\mathcal{F}_{LL_0}$, which can be considered as a negation:\label{def_perpformula} $(p.X)^\perp =p.X^\perp$, $(p.X^\perp)^\perp =p.X$, $(A \otimes B)^\perp = A^\perp \parr B^\perp$, $(A \parr B)^\perp = A^\perp \otimes B^\perp$, $(\forall X.A)^\perp = \exists X.A^\perp$, $(\exists X.A)^\perp = \forall X.A^\perp$, $(!A)^\perp = ?(A^\perp)$, $(?A)^\perp= \oc (A^\perp)$ and $(\S A)^\perp = \S (A^\perp)$.

The $\S$ connective was introduced by Girard~\cite{girard1995light}. This modality is useful for expressivity. It is difficult to give it a meaning, and it has no real equivalent in usual logics. The integer $p$ in atomic formula $p.X$ means $\underbrace{\S\S \cdots \S}_{p\text{ symbols}}X$. In fact it was used by Baillot and Mazza~\cite{baillot2010linear} to replace the $\S$ connective. In this paper we will prove results which stands for both systems, that is why we allow both notations in $LL_0$. One of the particularity of the $p.X$ notation compared to the $\S$ one is that the indexes can only be on atomic formulae. To change the indexes of all the atomic formulae of a formula, we define for any $q \in \mathbb{N}$ and $A \in \mathcal{F}_{LL_0}$ the formula $q.A$ as follows: $q.(p.X)=(q+p).X$, $q.(p.X^\perp)=(q+p).X^\perp$ and $q.(~)$ commutes with every other connective. For example $q.(A \otimes B)= (q.A)\otimes(q.B)$ and $q.(\exists X.A) = \exists X.(q.A)$.\label{def_decindices}

We can observe that the formulae of linear logic $F_{LL}$ form a subset of $F_{LL_0}$. We can define a forgetful mapping from $F_{LL_0}$ to $F_{LL}$: $(p.X)_{/LL}=X$, $(p.X^\perp)_{/LL}=X^\perp$, $(\S A)_{/LL}=A_{/LL}$ and $(~)_{/LL}$ commutes with all other connectives. Similarly, we define $(~)_{/L^4}$ which erases the indexes on the atomic formulae and $(~)_{/L^4_0}$ which erases the $\S$ connectives.\label{def_proj_fll}

%\tikzset{external/remake next} 
\begin{figure}
  \subfigure[][proof net computing to $n \mapsto n+n$]{
    \begin{tikzpicture}
      \tikzstyle{type}=[opacity=0]
      \tikzstyle{every node}=[inner sep=0.05cm]
      \tikzstyle{neutpos}=[princdoor]
      \tikzstyle{neutneg}=[auxdoor]
      \tikzstyle{exbox}=[]
      \tikzstyle{excont}=[]
      \draw  (0,0) node (parrn) [par] {};
      \draw (parrn) ++(60:1) node (bang) {};
      \draw [ar] (parrn)--++(0,-0.4) node [type] {$! \mathbb{N} \multimap  \mathbb{N}$};
      \draw (bang) ++(60:0.3) node [par] (parrf) {};
      \draw (parrf) ++(60:1.5) node [neutpos, exbox] (neut) {};
      \draw (neut) ++(60:0.7) node [par] (parrx) {};
      \draw [revar] (parrn)--(parrf) node [type] {$\mathbb{N}$};
      \draw [revar, exbox] (parrf)--(neut) node [type, exbox] {$!(\alpha^\perp \parr \alpha) $};
      \draw [revar, exbox] (neut)--(parrx) node [type, exbox] {$\alpha^\perp \parr \alpha$};
      \draw (parrn) ++(175:2)node [cont, excont] (contn) {};
      \draw [opacity=0] (parrn)--(contn) node [type, below, excont] {$? \mathbb{N}^\perp$}; 
      \draw [revar, excont] (parrn) to [out=160, in = -40](contn);
      \draw (parrf) ++(175:1)node [cont] (contf) {};
      \draw [revar] (parrf) -- (contf);    
      
      \draw (parrx) ++ (-1,1) node [ax] (ax3) {};
      \draw (ax3) ++ (-2,0) node [ax] (ax2) {};
      \draw (ax2) ++ (-2,0) node [ax] (ax1) {};
      \draw [revar] (parrx) to [out=90, in=0] (ax3);
      \draw [ar] (ax1) to [out=-170, in = 170] (parrx);
      
      \draw (ax3) ++ (-1,-0.4) node [tensor] (appnfx2) {};
      \draw (appnfx2) ++(-2.5,0) node [tensor] (appnfx1) {};
      \draw [revar] (appnfx2) to [out = 50, in = 180] (ax3);
      \draw [revar] (appnfx2) to [out =130, in = 0] (ax2);
      \draw [revar] (appnfx1) to [out = 40, in = 180] (ax2);
      \draw [revar] (appnfx1) to [out =100, in = 0] (ax1);

      \draw (appnfx2) ++(-0.5,-1.8) node [tensor] (appnf2) {};
      \draw (appnf2) ++(-2.5,0) node [tensor] (appnf1) {};
      \draw [opacity=0, name path=appapp2] (appnfx2) -- (appnf2);
      \draw [opacity=0, name path=appapp1] (appnfx1)--(appnf1);
      \draw [opacity=0, name path=neutline] (neut) --++(-5,0);
      \draw [name intersections={of=appapp1 and neutline}] (intersection-1) node [neutneg, exbox] (neut1) {};
      \draw [name intersections={of=appapp2 and neutline}] (intersection-1) node [neutneg, exbox] (neut2) {};
      \draw [exbox, box, name path= box] (neut) -| ++ (1,1.8)  -| ($(-1,0)+(neut1)$) -- (neut1) -- (neut2)--(neut);
      \draw [ar, exbox] (appnfx1)--(neut1) node [type, exbox] {$\alpha \otimes \alpha^\perp$};
      \draw [ar, exbox] (appnfx2)--(neut2) node [type, exbox] {$\alpha \otimes \alpha^\perp$};
      \draw [ar, exbox] (neut1)--(appnf1) node [type, exbox] {$?(\alpha \otimes \alpha^\perp)$};
      \draw [ar, exbox] (neut2)--(appnf2) node [type, exbox] {$?(\alpha \otimes \alpha^\perp)$};
      \draw [revar, name path=application1, opacity=0] (contn) -- (appnf1);
      \draw [revar, name path=application2, opacity=0] (contn) -- (appnf2);
      \draw [opacity =0, name path=boxbottom] (bang) --++(-5,0);
      \draw [name intersections={of=application1 and boxbottom}] (intersection-1) node [der] (dern1) {};
      \draw [name intersections={of=application2 and boxbottom}] (intersection-1) node [der] (dern2) {};
      \draw [ar] (appnf1)--(dern1) node [type, left] {$\mathbb{N}^\perp$};
      \draw [ar, excont] (dern1) --(contn) node [type, left, excont] {$? \mathbb{N}^\perp$};
      \draw [ar] (appnf2)--(dern2) node [type] {$\mathbb{N}^\perp$};
      \draw [ar, excont] (dern2)--(contn) node [type, excont] {$? \mathbb{N}^\perp$};
      \coordinate (bang2pos) at ($(appnf2)+(-0.4,0.7)$);
      \coordinate (bang1pos) at ($(appnf1)+(-0.4,0.7)$);
      \coordinate (whyn2pos) at ($(bang2pos)+(-1,0)$);
      \coordinate (whyn1pos) at ($(bang1pos)+(-1,0)$);
      
      \nvar{\margedoor}{0.3cm}
      \nvar{\hauteurboite}{0.7cm}
      
      \draw (appnf2) ++(-0.6, 0.3) node [ax] (axapp2) {};
      \draw (appnf1) ++(-0.6, 0.3) node [ax] (axapp1) {};
      \draw [revar] (appnf2) to [out=120,in=0] (axapp2);
      \draw [revar] (appnf1) to [out=120,in=0] (axapp1);
      \draw [ar] (axapp1) to [out=-160, in = 160] (contf);
      \draw [ar] (axapp2) to [out=-160, in = 150](contf);
      
      \draw [opacity=0] (bang)-| ++(1.5,3.3) -| ++(-8,-3.3) -- (dern1) -- (dern2)--(bang);
    \end{tikzpicture}
  }
  \subfigure[][Syntactic tree of $\lambda n. \lambda f. \lambda x . (n f) (nf) x$]{
    \begin{tikzpicture}
      \tikzstyle{every node}=[inner sep=0.05cm]
      \begin{scope}[yscale=0.55, xscale=0.7]
        \draw[] (0,0) node  (lambn) {$\lambda n$};
        \draw[orange, opacity=0] (lambn)++(2.2,-2.2) node (bang) {!};
        \draw[] (bang)++(0.8,-0.8) node (lambf) {$\lambda f$};
        \draw[orange, opacity=0] (bang)--++(-4,0) --++(0,-6.3) --++(8.5,0) |- (bang);
        \draw[] (lambf)++(0.8,-0.8) node (lambx) {$\lambda x$};
        \draw[] (lambx)++(0.8,-0.8) node (appx1) {$@$};
        \draw[] (appx1)++(0.8,-0.8) node (appx2) {$@$};
        \draw[] (appx1)++(-1.5,-0.5) node (nf1) {$@$};
        \draw[] (appx2)++(-1.5,-0.5) node (nf2) {$@$};
        \draw[orange, opacity=0] (lambn)++(-0.5,-0.8) node (contn) {$!D$};
        \draw[orange, opacity=0] (contn)++(-0.5,-1) node (dern1) {$L!$};
        \draw[orange, opacity=0] (contn)++(0.5,-1) node (dern2) {$L!$};
        \draw[orange, opacity=0] (lambf)++(-3,-2.5) node (contf) {$!D$};
        \draw[orange, opacity=0] (contf)++(-0.5,-1) node (derf2) {$L!$};
        \draw[orange, opacity=0] (contf)++(0.5,-1) node (derf1) {$L!$};
        \draw[orange, opacity=0] (nf1)++(-1.5,-1.5) node (redf1) {$R!$};
        \draw[orange, opacity=0] (nf2)++(-2.5,-1.8) node (redf2) {$R!$};
        \draw[orange, opacity=0] (redf1)--++(0.5,0) --++(0,-0.7)--++(-2,0) |- (redf1);
        \draw[orange, opacity=0] (redf2)--++(0.5,0) --++(0,-0.7)--++(-2,0) |- (redf2);
        
        \draw (lambn)--(bang.center)--(lambf)--(lambx)--(appx1)--(appx2);
        \draw [<-] (appx1)--(nf1);
        \draw [<-] (appx2)--(nf2);
        \draw (lambn)--(contn.center);
        \draw (contn.center)--(dern1.center) (contn.center)--(dern2.center);
        \draw (lambf)--(contf.center);
        \draw [->](dern2.center) to [out=-60, in=-150] (nf1);
        \draw [->] (dern1.center) to [out=-100, in =-150] (nf2);
        \draw [->](contf.center) to [out=-120, in= -70] (nf2);
        \draw [->](contf.center) to [out=-60,in=-70] (nf1);
        \draw [->](lambx) to [out=-120, in=-90] (appx2);
        \draw (lambn)--++(0,0.8);
      \end{scope}
    \end{tikzpicture}
  }
  \caption{ \label{curry_howard_iso}We can observe graphically the proofs-as-program correspondence: if we erase the $\oc P$, $\wn P$ and $\wn C$ of the proof net and reverse it, we obtain the syntactic tree of the corresponding $\lambda$-term}
\end{figure}  

Proof-nets are the programs of linear logic. They are graph-like structures in which the links correspond to uses of logical rules. We can consider them as $\lambda$-terms, with added information on how the duplication of variables is managed. In fact the proofs-as-programs correspondence gives us a mapping from the intuitionistic fragment of proof-nets to $\lambda$-terms. As an example, you can observe in figure \ref{curry_howard_iso} a proof-net corresponding to the $\lambda$-term $\lambda n. \lambda f. \lambda x. (n f) (n f x) x$ which computes the function $ \left \{ \begin{array}{lcl} \mathbb{N} & \mapsto & \mathbb{N} \\ n & \mapsto & n+n \end{array} \right .$. We can see that if we erase the exponential links ($\wn C$, $\wn P$ and $\oc P$) and reverse the proof net, we get the syntactic tree of the $\lambda$-term. The $\otimes$ link corresponds to the application of a function and the $\parr$ link corresponds to the abstraction.

\begin{definition}[proof-net]\label{def_proofnet}
  A {\em proof-net} is a graph-like structure defined inductively by the graphs of figure \ref{rules_labeling} ($G$ and $H$ being proof-nets). The vertices are called $links$, a link $l$ is labelled by a connector $\alpha(l)$. The set of edges is written $E_G$. The edge $e$ is labelled by a formula $\beta(e)$ in $\mathcal{F}_{LL_0}$.
%The set of formulas of $G$ is $F_G$.
\end{definition}

%\tikzset{external/remake next}
\begin{figure}
  \centering
  \begin{tikzpicture}
    %  \tikzstyle{every node}= [inner sep=0.1]
    \tikzstyle{level}=[opacity = 0]
    \begin{scope}[scale = 0.85]
      \begin{scope}[shift={(0,1)}]
        \draw (0,0) node (ax) {$ax$};
        \draw[->] (ax) to [bend left] (0.8,-0.8);
        \draw ($(ax)!0.8!(0.8,-0.8)+(0.7,0)$) node {$p.A^\perp$};
        \draw ($(ax)+(0.4,0)$) node [level,right] {$i$};
        \draw[->] (ax)to [bend right](-0.8,-0.8);
        \draw ($(ax)!0.8!(-0.8,-0.8)+(-0.7,0)$) node {$A$};
        \draw ($(ax)+(-0.5,0)$) node [level,left] {$i+p$};
      \end{scope}

    \draw (5,0) node (cut){$cut$};
    \draw (cut) ++ (-1.1,1) node [proofnet ] (G) {$G$};
    \draw (cut) ++ (1.1,1) node [proofnet] (H) {$H$};
    \draw[<-] (cut) to [bend left] (G.-40);
    \draw ($(cut)!0.5!(G.-40)+(-0.4,-0.2)$) node {$A$};
    \draw ($(cut)!0.8!(G.-40)+(0.2,0)$) node [level] {$i$};
    \draw[<-] (cut) to [bend right](H.-140);
    \draw ($(cut)!0.5!(H.-140)+(0.5,-0.2)$) node {$A^\perp$};
    \draw ($(cut)!0.8!(H.-140)+(-0.2,0)$) node [level] {$i$};
    \draw [->] (G.-150) --++ (0,-0.4);
    \draw [->] (H.-30) --++ (0,-0.4);

    \begin{scope}[shift={(-1,-2.2)}]
      \draw (1,0) node (tens){$\otimes$};
      \draw (tens) ++ (-0.8,1) node [proofnet, inner xsep=0.25cm ] (G) {$G$};
      \draw (tens) ++ (0.8,1) node [proofnet, inner xsep=0.25cm] (H) {$H$};
      \draw[<-] (tens) to [bend left] (G.-50);
      \draw ($(tens)!0.5!(G.-50)+(-0.5,0)$) node {$A$};
      \draw ($(tens)!0.8!(G.-50)+( 0.2,0)$) node [level] {$i$};
      \draw[<-] (tens) to [bend right](H.-130);
      \draw ($(tens)!0.5!(H.-130)+( 0.5,0)$) node {$B$};
      \draw ($(tens)!0.8!(H.-130)+(-0.2,0)$) node [level] {$i$};
      \draw[->] (tens) --++ (0,-0.6) node [level, left] {$i$} node [type] {$A \otimes B$};
      \draw [->] (G.-145) --++ (0,-0.8);
      \draw [->] (H.-35) --++ (0,-0.8);
      
      \draw (5.5,1) node [proofnet, inner xsep=0.7cm] (G){$G$};
      \draw (G) ++(-0.1,-1) node (par)  {$\parr$};
      \draw[<-] (par) to [bend left] (G.-145);
      \draw ($(par)!0.7!(G.-145)+(-0.4,0)$) node {$A$};
      \draw ($(par)!0.7!(G.-145)$) node {$i$};
      \draw[<-] (par)to [bend right](G.-45);
      \draw ($(par)!0.7!(G.-45)+(0.4,0)$) node {$B$};
      \draw ($(par)!0.7!(G.-45)$) node {$i$};
      \draw[->] (par) --++ (0,-0.6) node [midway, right] {$A \parr B$} node [midway, left] {$i$};
      \draw[->] (G.-165) --++ (0,-0.8);
      
      \draw (12,1) node [proofnet] (G) {$G$};
      \draw (G.-45)++(0,-0.8) node (ex)   {$\exists$};
      \draw [->](G.-135) --++(0,-0.8);
      \draw[<-] (ex) -- (G.-45) node [midway, right] {$A[B/X]$} node [level, left] {$i$};
      \draw[->] (ex) --++ (0,-0.6) node [midway, right] {$\exists X A$} node [level, left] {$i$};
      
      \draw (9,1) node [proofnet] (G) {$G$};
      \draw (G.-45)++(0,-0.8) node (forall)   {$\forall$};
      \draw[<-] (forall) -- (G.-45) node [midway, right] {$A\{Z/X\}$} node [level, left] {$i$};
      \draw[->] (forall) --++ (0,-0.6) node [midway, right] {$\forall X A$} node [level, left] {$i$};
      \draw[->] (G.-135) --++(0,-0.8);
    \end{scope}

    % dereliction
    \begin{scope}[shift={(-1,-4.2)}]
      \draw (0,0) node [proofnet] (G) {$G$};
      \draw (G.-45)++(0,-0.8) node (forall)   {$?D$};
      \draw[<-] (forall) -- (G.-45) node [midway, right] {$A$} node [level, left] {$i$};
      \draw[->] (forall) --++ (0,-0.6) node [midway, right] {$? A$} node [level, left] {$i$-1};
      \draw[->] (G.-135) --++(0,-0.8);
    \end{scope}

    % contraction
    \begin{scope}[shift={(3,-4.2)}]  
      \draw (0,0) node [proofnet, inner xsep=0.5cm] (G) {$G$};
      \draw (G.-45)++(0,-1) node (forall)   {$?C$};
      \draw[<-] (forall) -- (G.-20) node [near start, right] {$?A$} node [level, near end, left] {$i$};
      \draw[<-] (forall) -- (G.-150) node [near end, right] {$?A$} node [level, near start, left] {$i$};
      \draw[->] (forall) --++ (0,-0.6) node [midway, right] {$?A$} node [level, midway, left] {$i$};
      \draw[->] (G.-160) --++(0,-0.8);
    \end{scope}
    
    %weakening
    \begin{scope}[shift={(6,-4.2)}]
      \draw (0,0) node [proofnet] (G) {$G$};
      \draw[->] (G.-90) --++(0,-0.8);
      \draw (G)++(1,0) node (weak) {$?W$};
      \draw[->] (weak) --++(0,-1)  node [midway, right] {$?A$} node [level, midway, left] {$i$};
    \end{scope}

    %paragraph
    \begin{scope}[shift={(10,-4.2)}]    
      \draw (0,0) node [proofnet] (G) {$G$};
      \draw (G.-45)++(0,-0.8) node (forall)   {$\S$};
      \draw[<-] (forall) -- (G.-45) node [midway, right] {$A$} node [level, midway, left] {$i$};
      \draw[->] (forall) --++ (0,-0.6) node [midway, right] {$\S A$} node [level, midway, left] {$i$-1};
      \draw[->] (G.-135) --++(0,-0.8);
    \end{scope}

    %promotion
    \begin{scope}[shift={(1,-8)}]
      \draw (0,1) node [proofnet,inner xsep=1.8cm] (G) {$G$};
      \draw (G)++(1.9,-1.2) node [princdoor] (bang)   {};
      \draw (G)++(0.1,-1.2) node [auxdoor] (whyn) {};
      \draw (G)++(-1.9,-1.2) node [auxdoor] (whyn2) {};
      \draw[ar] (whyn2 |- G.south) -- (whyn2) node [type] {$A_1$} node [level] {$i$};
      \draw[ar] (whyn |- G.south) -- (whyn) node [type] {$A_k$} node [level] {$i$};
      \draw[ar] (bang |- G.south) -- (bang) node [type] {$C$} node [level] {$i$};
      \draw[ar] (whyn2) --++(0,-0.8) node [type] {$\wn A_1$} node [level] {$i$-1};
      \draw[ar] (whyn) --++(0,-0.8) node [type] {$\wn A_k$} node [level] {$i$-1};
      \draw[ar] (bang) --++(0,-0.8) node [type] {$\oc C$} node [level] {$i$-1};
      \draw (whyn)--(bang) -| ++(0.6,1.6) -| ($(whyn2)+(-0.6,0)$) -- (whyn2);
      \draw [dashed] (whyn2) -- (whyn);
    \end{scope}
      
    % digging
    \begin{scope}[shift={(8,-8)}]
      \draw (-1,1) node [proofnet] (G) {$G$};
      \draw (G.-45)++(0,-0.8) node (forall)   {$?N$};
      \draw[<-] (forall) -- (G.-45) node [midway, right] {$??A$} node [level, left] {$i$};
      \draw[->] (forall) --++ (0,-0.6) node [midway, right] {$? A$} node [level, left] {$i$-1};
      \draw[->] (G.-135) --++(0,-0.8);
    \end{scope}

    \end{scope}
  \end{tikzpicture}
  \caption{ \label{rules_labeling} Construction of proof-nets. For the $\forall$ rule, we require $Z$ not to be free in the formulae labelling the other conclusions of $G$}
\end{figure}

\paragraph{Premises and conclusions}\label{def_premise} For any link $l$, the incoming edges of $l$ are called the {\em premises} of $l$.\label{def_conclusion} The outgoing edges of $l$ are called the {\em conclusions} of $l$. $prem(l)$ refers to the premise of $l$ and $concl(l)$ refers to the set of conclusions of link $l$. If $l$ has only one conclusion, we identify $concl(l)$ with its only element.\label{def_tail} The {\em tail} of edge $(l,m)$ refers to $l$,\label{def_head} while the head of $(l,m)$ refers to $m$.%\label{def_successor} If $l$ is the head (resp. the tail) of a conclusion (resp. a premise) of $m$, then $l$ is a direct successor (resp. predecessor) of $m$. 
Some edges have no conclusion.\label{def_pending} They are called the pending edges of the net, and by convention we say that their conclusion is $\bullet$.

\paragraph{Boxes}\label{def_box} \label{def_box}The rectangle in the $?P, !P$ rule is called a box. Formally a box is a subset of the links of the proof-net. We say that an edge $(l,m)$ belongs to box $B$ if $l$ is in $B$. We require the boxes to be arborescent: two boxes are either disjoint, or one contains the other.\label{def_depth} The number of boxes containing an element (box, link or edge) $x$ is its depth written $\partial(x)$.\label{def_maxdepth} $\partial_G$ is the maximum depth of an edge of $G$. The set of boxes of $G$ is $B_G$. % $I_G$ is the set of links except doors.

Let us call $B$ the box in figure \ref{rules_labeling}.\label{def_principaldoor} The link labelled $!P$ is the principal door of $B$, its conclusion is written $\sigma(B)$.\label{def_auxiliarydoor} The $?P$ links are the auxiliary doors of box $B$.\label{def_sigmaib} The edge going out of the $i$-th auxiliary door (the count begins by $0$) is written $\sigma_i(B)$.\label{def_sigmab} The edge going out of the principal door is written $\sigma(B)$.\label{def_dgb} $D_G(B)$ is the set of doors of $B$ and \label{def_dg}$D_G=\max_{B \in B_G}|D_G(B)|$. The doors of box $B$ are not considered in box $B$, they are exactly the links which are not in $B$ but whose tails of their premises are in $B$.\label{def_rhoge} $\rho_G(e)$ is the deepest box of $G$ containing $e$. %\label{def_pg} $P_G$ is the set $\Set{\sigma(B)}{B \in B_G}$.

\paragraph{Quantifiers}\label{def_eigenvariable}We call {\em eigenvariables} of a proof-net, the variables $Z$ replaced by a $\forall$ link. We will always suppose that they are pairwise distinct. Any proof-net which does not respect this convention can be transformed in a proof-net with pairwise distinct eigenvariables by variables. This is possible because when we add a $\forall$ link to a proof-net, the eigenvariable can not be free in the other pending edges, so even if the eigenvariables are equal, they can not be related. This allows to refer to ``{\em the} link associated to the eigenvariable $Z$''.

\paragraph{Substitutions}\label{def_acrochettheta}Let $A$ be a formula and $\theta=\{(X_1,A_1); \cdots (X_n, A_n)\}$ be a substitution, then $A[\theta]$ refers to $A[A_1/X_1 | \cdots | A_n / X_n]$: the parallel substitution of the variables. Notice that we do not replace the occurences of variables which are bound. Similarly, if $Y \in FV(\theta(X))$ and $X$ appears in the scope of a $\exists Y$ or $\forall Y$, then we use $\alpha$-conversion to distinguish those $Y$s. 

\paragraph{Lists}In the following, we will define many concepts based on lists.\label{def_arobase} $@$ represents concatenation ($[a_1;…;a_n]@[b_1;…;b_k]$ is defined as $[a_1;…;a_n;b_1;…;b_k]$) and \label{def_insertion}$.$ represents ``push'' ($[a_1;…;a_n].b$ is defined as $[a_1;…;a_n;b]$).\label{def_listrestriction} $|[a_1;…;a_j]|$ refers to the length of the list: $j$. If $X$ is a set, $|[a_1; \cdots ; a_j]|_X$ is the number of indices $i$ such that $a_i$ is in $X$. 

\paragraph{Subtree}\label{def_subtree} Let $T,T'$ be two $I,L$-trees, we say that $T$ is a {\em subtree} of $T'$ (written $T \blacktriangleleft T'$) if $T$ is a subgraph of $T'$ such that if $v$ is a vertex of $T$ and $(v,v')$ is an edge of $T'$, then $v'$ is a vertex of $T'$ and $(v,v')$ is an edge of $T$. 

\paragraph{Other notations}\label{def_cardinal} Whenever $X$ is a set, $|X|$ is the cardinal of $X$. We will sometimes manipulate towers of exponentials,\label{def_exponentialtower} and will define $a^c_0$ as $c$ and $a^c_{b+1}$ as $a^{a^c_b}$.\label{def_setimage} Finally, if $f$ is a mapping from $E$ to $F$ and $A \subseteq E$, then $f(A)$ refers to $\Set{f(x)}{x \in E}$

%\tikzset{external/remake next}
\begin{figure}[h]\centering
  \begin{tikzpicture}
    % par / tensor rule
    \begin{scope}[shift={(1,-4)}]
      \draw (0,0) node (cut) {$cut$};
      \draw (cut) ++ (-1.1,0.5) node (par) {$\parr$};
      \draw (cut) ++ ( 1.1,0.5) node (tens){$\otimes$};
      \draw (par) ++ (120:.8) node (restgg) {};
      \draw (par) ++ (60 :.8) node (restgd) {};
      \draw (tens) ++ (120:.8) node (restdg) {};
      \draw (tens) ++ (60 :.8) node (restdd) {};
      
      \draw [->] (tens) to [out=-90, in=0] (cut);
      \draw [->] (par) to [out=-90, in=180] (cut);
      \draw [<-] (tens) -- (restdg);
      \draw [<-] (tens) -- (restdd);
      \draw [<-] (par) -- (restgg);
      \draw [<-] (par) -- (restgd);
      
      \draw (restgg) ++ (4,0) node (restegg) {};
      \draw (restgd) ++ (4,0) node (restegd) {};
      \draw (restdg) ++ (4,0) node (restedg) {};
      \draw (restdd) ++ (4,0) node (restedd) {};

      \draw (par) ++ (4.5,0) node (cutg) {$cut$};
      \draw (tens) ++ (3.5,0) node (cutd) {$cut$};
      
      \draw [->] (restegg) to [out=-60, in=160] (cutg);
      \draw [<-] (cutg) to [out=20, in=-120] (restedg);
      \draw [->] (restegd) to [out=-60, in = 160] (cutd);
      \draw [<-] (cutd) to [out=20, in=-120] (restedd);
      
      \draw [->, very thick] (1.5,0.5) -- ++(1,0) node [below] {$cut$};
    \end{scope}
    
    \begin{scope}[shift={(8.8,-3)}]
      \draw (2,0) node [draw, regular polygon, regular polygon sides=3, shape border rotate=180] (arbre) {$~T~$};
      \draw (arbre.-90) ++ (-1.5,0.5) node (ax) {$ax$};
      \draw (ax) ++ (1,-1) node (cut) {$cut$};
      \draw (ax) ++ (-0.7,-1.5) node (rest) {};
      \draw [->, out=0 , in=140] (ax) to node [type, left] {$A$} (cut);
      \draw [->,out=-90, in= 40] (arbre.-90) to node [edgename, right] {$f$} (cut);
      \draw [->,out=-170,in=90] (ax) to  node [type, left] {$p.A^\perp$} (rest);
      
      \draw (arbre) ++ (4,0) node [draw, regular polygon, regular polygon sides=3, shape border rotate=180] (arbre) {$p.T$};
      \draw (rest) ++ (4,0) node (rest) {};
      \draw [->, out=-120, in=60] (arbre.-90) to (rest);
      \draw[->, very thick] (2.6,-0.5) -- ++(1,0) node [below] {$cut$};
    \end{scope}
  \end{tikzpicture}
  \caption{\label{mult_rules} Multiplicative rules for $cut$-elimination. Explanations for the axiom rule: for any edge $h$ such that there is a directed path from $e$ to $h$, we replace its formula $\beta (h)$ by $p.\beta (h)$  }
\end{figure}

\paragraph{$cut$-elimination}The terms of $\lambda$-calculus correspond, through the proofs-as-programs paradigm, to proof-nets. Intuitively, proof-nets are $\lambda$-terms which have been reversed, where applications and abstraction are respectively replaced by $\otimes$ and $\parr$ and with additionnal information on duplication. It can be observed in figure \ref{curry_howard_iso}. $cut$-elimination, is a a relation proof-nets which corresponds to $\beta$-reduction. The rules of $cut$-elimination can be found in figures \ref{mult_rules}, \ref{exp_rules} and \ref{quant_rules}. Proof-nets are stable under $cut$-elimination.

%\tikzset{external/remake next}
\begin{figure}[h]\centering
  \begin{tikzpicture}[scale=0.95]
    \tikzstyle{door}=[draw, circle, inner sep=0]
    
    %contraction    
    \begin{scope}[shift={(0,0)}]
      \draw (0,0) node (bang) [princdoor] {};
      \draw (bang) ++ (-0.8,0) node  [auxdoor] (auxd) {};
      \draw (auxd) ++ (-1,0) node [auxdoor] (auxg) {};
      \draw (auxd) -- (bang) -|++ (0.3,0.7) -| ($(auxg)+(-0.3,0)$) |- (auxg);
      \draw [dashed] (auxg) -- (auxd);
      \draw (bang) ++ (0.5,-0.8) node (cut) {$cut$};
      \draw (bang) ++ (1,0) node (cont) {$?C$};
      \draw (cont) ++ (60:1) node (contd) {};
      \draw (cont) ++ (120:1)node (contg) {};
      \draw (auxd) ++ (0,-1) node (restd) {};
      \draw (auxg) ++ (0,-1) node (restg) {};
      \draw [->] (bang) to [out=-90, in=150] (cut);
      \draw [->] (cont) to [out=-90, in=30] (cut);
      \draw [<-] (cont) -- (contg);
      \draw [<-] (cont)--(contd);
      \draw [->] (auxg)--(restg);
      \draw [->] (auxd)--(restd);
      
      \draw (bang) ++ (-2,-3)  node (bang1) [door] {$!P$};
      \draw (bang1) ++ (-0.8,0) node  [door] (auxd1) {$?P$};
      \draw (auxd1) ++ (-1,0) node  [door] (auxg1) {$?P$};
      \draw (auxd1) -- (bang1) -|++ (0.3,0.7) -| ($(auxg1)+(-0.3,0)$) |- (auxg1);
      \draw [dashed] (auxg1) -- (auxd1);
      
      \draw (bang1) ++ (3,0)  node (bang2) [door] {$!P$};
      \draw (bang2) ++ (-0.8,0) node  [door] (auxd2) {$?P$};
      \draw (auxd2) ++ (-1,0) node  [door] (auxg2) {$?P$};
      \draw (auxd2) -- (bang2) -|++ (0.3,.7) -| ($(auxg2)+(-0.3,0)$) |- (auxg2);
      \draw [dashed] (auxg2) -- (auxd2);
      
      \draw (auxd1) ++ (-60:1.2) node (contd) {$?C$};
      \draw (auxg1) ++ (-60:1.2) node (contg) {$?C$};
      \draw (contd) ++ (0,-0.8) node (restd) {};
      \draw (contg) ++ (0,-0.8) node (restg) {};
      \draw [->] (auxd1) -- (contd);
      \draw [->] (auxg1) -- (contg);
      \draw [->] (contd) -- (restd);
      \draw [->] (contg) -- (restg);
      \draw [->] (auxd2) -- (contd);
      \draw [->] (auxg2) -- (contg);
  
      \draw (bang2) ++ (0.8,0.8) node (contg) {};
      \draw (bang2) ++ (1.5,0.8) node (contd) {};
      \draw (bang2) ++ (0.5,-0.8) node (cut2) {$cut$};
      \draw (bang1) ++ (1.5,-0.8) node (cut1) {$cut$};
      \draw [->] (bang1) to [out=-60, in=170] (cut1);
      \draw [<-] (cut1) to [out=10, in=-100] (contg);
      \draw [->] (bang2) to [out=-90, in=150] (cut2);
      \draw [<-] (cut2) to [out=30, in = -90] (contd);
      \draw [->, very thick] (-0.5,-1) -- ++(0,-0.6) node [right] {$cut$};
    \end{scope}
    
    %dereliction
    \begin{scope}[shift={(5.5,0)}]
      \draw (0,0) node (bang) [princdoor] {};
      \draw (bang) ++ (-0.8,0) node  [auxdoor] (auxd) {};
      \draw (auxd) ++ (-1,0) node [auxdoor] (auxg) {};
      \draw (auxd) -- (bang) -|++ (0.3,0.8) -| ($(auxg)+(-0.3,0)$) |- (auxg);
      \draw [dashed] (auxg) -- (auxd);
      \draw (bang) ++ (0.5,-0.8) node [cut] (cut) {};
      \draw (bang) ++ (1,0) node [der] (der) {};
      \draw (der) ++ (0,1)node (dered) {};
      \draw (auxg) ++ (0,-1) node (restg) {};
      \draw (auxd) ++ (0,-1) node (restd) {};
      \draw [->] (bang) to [out=-90, in=150] (cut);
      \draw [->] (der) to [out=-90, in=30] (cut);
      \draw [<-] (der) -- (dered);
      \draw [->] (auxg)--(restg);
      \draw [->] (auxd)--(restd);
      \draw [<-] (bang) --++ (0,0.5) node (hbang) {};
      \draw [<-] (auxd) --++ (0,0.5) node (hauxd) {};
      \draw [<-] (auxg) --++ (0,0.5) node (hauxg) {};
      
      \draw (auxd) ++ (5,0) node [der] (derd) {};
      \draw (derd) ++ (-1,0) node [der] (derg) {};
      \draw (bang) ++ (5,0) node (bang) {};
      \draw (derd) ++ (0,-1) node (restd) {};
      \draw (derg) ++ (0,-1) node (restg) {};
      \draw [->] (derd)--(restd);
      \draw [->] (derg)--(restg);
      \draw [<-] (derd) --++ (0,0.5) node (hderd) {};
      \draw [<-] (derg) --++ (0,0.5) node (hderg) {};
      \draw (bang) ++ (0,0.6) node (hbang) {};
      \draw (hbang)++(0.5,-1) node (cut) {$cut$};
      \draw (hbang)++(1,0.4) node (rest) {};
      \draw [->] (hbang) to [out=-90, in=140] (cut);
      \draw [->] (rest) to [out=-90, in=50] (cut);
      \draw [->, very thick] ($(der)!0.2!(derg)$) -- ($(der)!0.7!(derg)$) node [below] {$cut$};
    \end{scope}

    %weakening elimination
    \begin{scope}[shift={(5.5,-3)}]
      \draw (0,0) node (bang) [door] {$!P$};
      \draw (bang) ++ (-0.7,0) node  [auxdoor] (auxd) {};
      \draw (auxd) ++ (-1,0) node  [auxdoor] (auxg) {};
      \draw (auxd) -- (bang) -|++ (0.3,0.7) -| ($(auxg)+(-0.3,0)$) |- (auxg);
      \draw [dashed] (auxg)--(auxd);
      \draw (bang) ++ (0.5,-0.8) node (cut) {$cut$};
      \draw (bang) ++ (1,0) node (weak) {$?W$};
      \draw [ar] (auxd) -- ++ (0,-0.8) node (restd) {};
      \draw [ar] (auxg) -- ++ (0,-0.8) node (restg) {};
      \draw [->] (bang) to [out=-90, in=150] (cut);
      \draw [->] (weak) to [out=-90, in=30] (cut);
      
      \draw (auxd) ++ (4,0) node (weakg) {$?W$};
      \draw (weakg) ++ (1,0) node (weakd) {$?W$};
      \draw [ar] (weakg) -- ++ (0,-0.8);
      \draw [ar] (weakd) -- ++ (0,-0.8);
      \draw [->, very thick] ($(weak)!0.2!(weakg)$) -- ($(weak)!0.7!(weakg)$) node [below] {$cut$};
    \end{scope}

    %neutral elimination
    \begin{scope}[shift={(-1,0)}]
      \draw (0,-6.5) node (cut) {$cut$};
      \draw (cut) ++ (-0.7,0.5) node (neutg) {$\S$};
      \draw (cut) ++ ( 0.7,0.5) node (neutd) {$\S$};
      \draw [<-] (neutg) --++ (0,0.5) node (restg) {};
      \draw [<-] (neutd) --++ (0,0.5) node (restd) {};
      \draw [->] (neutg) to [out=-90, in=180] (cut);
      \draw [->] (neutd) to [out=-90, in= 0] (cut);
      
      \nvar{\decNeutr}{3.5cm}
      \draw (restg) ++ (\decNeutr,0) node (restg) {};
      \draw (restd) ++ (\decNeutr,0) node (restd) {};
      \draw (cut) ++ (\decNeutr,0.5) node (cut) {$cut$};
      \draw [->] (restg) to [out=-90, in =180] (cut);
      \draw [->] (restd) to [out=-90, in= 0] (cut);
      \draw [->, very thick] ($(neutd)!0.2!(cut)$) -- ($(neutd)!0.7!(cut)$) node [below] {$cut$};
    \end{scope}
    
    %digging elimination
    %before cut
    \begin{scope}[shift={(0,-9)}]
      \draw (0,1) node [proofnet, inner xsep=1cm] (G) {$G$};
      \draw (G)++(-1 ,-1.2) node [auxdoor] (whyint1) {};
      \draw (G)++(0.1,-1.2) node [auxdoor] (whyint2) {};
      \draw (G)++(1,-1.2) node [princdoor] (bangint)   {};
      \draw[ar] (whyint1 |- G.south) -- (whyint1);
      \draw[ar] (whyint2 |- G.south) -- (whyint2);
      \draw[ar] (bangint |- G.south) -- (bangint);
      \draw (whyint2)--(bangint) -| ++(0.6,1.8) -| ($(whyint1)+(-0.6,0)$) -- (whyint1);
      \draw [dashed] (whyint1) -- (whyint2); 
      \draw [ar] (whyint1) -- ++ (0,-0.8);
      \draw [ar] (whyint2) -- ++ (0,-0.8);
      
      \draw (bangint) ++ (1.5,0.2) node [dig] (dig) {};
      \draw (bangint) ++(0.7,-0.5) node [cut] (cut) {};
      \draw [ar] (bangint) to [out=-90, in=180] (cut);
      \draw [ar] (dig) to [out=-90, in=0] (cut);
      \draw [revar] (dig) --++ (0,0.7);
      \draw [->, very thick] ($(dig)+(0.7,0)$)-- ++ (1,0) node [below] {$cut$};
    \end{scope}
    \begin{scope}[shift={(7,-9)}]
      %after cut
      \draw (0,1) node [proofnet, inner xsep=1cm] (G) {$G$};
      \draw (G)++(-1 ,-0.8) node [auxdoor] (whyint1) {};
      \draw (G)++(0.1,-0.8) node [auxdoor] (whyint2) {};
      \draw (G)++(1,-0.8) node [princdoor] (bangint)   {};
      \draw (bangint) ++ (0.5,0) node [below] {$B_1$};
      
      \draw[ar] (whyint1 |- G.south) -- (whyint1);
      \draw[ar] (whyint2 |- G.south) -- (whyint2);
      \draw[ar] (bangint |- G.south) -- (bangint);
      \draw (whyint2)--(bangint) -| ++(0.5,1.2) -| ($(whyint1)+(-0.5,0)$) -- (whyint1);
      \draw [dashed] (whyint1) -- (whyint2); 
      
      \draw (whyint1) ++(0,-0.7) node [auxdoor] (whyext1){};
      \draw (whyint2) ++(0,-0.7) node [auxdoor] (whyext2) {};
      \draw (bangint) ++(0,-0.7) node [princdoor] (bangext) {};
      \draw (bangext) ++ (0.5,0) [below] node {$B_2$};
      \draw[ar] (whyint1) -- (whyext1);
      \draw[ar] (whyint2) -- (whyext2);
      \draw[ar] (bangint) -- (bangext);
      \draw (whyext2)--(bangext) -| ++(0.7,2.1) -| ($(whyext1)+(-0.7,0)$) -- (whyext1);
      \draw [dashed] (whyint1)--(whyint2);
      \draw (bangext) ++ (1.5,0) coordinate (dig);
      \draw ($(bangext)!0.5!(dig)$) ++(0,-0.5) node [cut] (cut) {};
      \draw [ar] (bangext) to [out=-90, in=180] (cut);
      \draw [ar] (dig) to [out=-90, in=0] (cut);
      \draw (dig) --++ (0,0.7);
      \node [dig] (dig1) at ($(whyext1)+(0,-0.8)$) {};
      \node [dig] (dig2) at ($(whyext2)+(0,-0.8)$) {};
      \draw [ar] (whyext1) -- (dig1);
      \draw [ar] (whyext2) -- (dig2);
      \draw [ar] (dig1) -- ++ (0,-0.6);
      \draw [ar] (dig2) -- ++ (0,-0.6);
    \end{scope}
    
  \end{tikzpicture}
  \caption{\label{exp_rules} Exponential rules for $cut$-elimination}
\end{figure}

%\tikzset{external/remake next}
\begin{figure}[h]\centering
  \begin{tikzpicture}
    \begin{scope}
      \draw (0,0) node (cut) {$cut$};
      \draw (cut) ++ (-0.5,1) node (neutg) {$\forall$};
      \draw (cut) ++ ( 0.5,1) node (neutd) {$\exists$};
      \draw [<-] (neutg) --++ (0,0.8) node (restg) {} node [midway, left] {$A \{Z/X \}^\perp$};
      %\draw [<-] (neutd) --++ (0,0.8) node [midway, right] {$A \{B/X \}$} node  [above=0, draw, regular polygon, regular polygon sides=3, shape border rotate=180, inner sep=0cm] (restd) {\scriptsize $T\{B/X\}$} ;
      \draw [<-] (neutd) --++ (0,0.8) node (restd) {} node [midway, right] {$A \{B/X \}$};
      \draw [->] (neutg) to [out=-90, in=150] (cut);
      \draw ($(neutg)!0.4!(cut)+(-0.2,0)$) node [left] {$\forall X, A^\perp$}; 
      \draw [->] (neutd) to [out=-90, in= 30] (cut);
      \draw ($(neutd)!0.4!(cut)+(0.2,0)$) node [right] {$\exists X, A$};
      
      \draw (restg) ++ (8,0) node (restg) {};
      \draw (restd) ++ (8,0) node (restd) {};
      \draw (cut) ++ (8,0.5) node (cut) {$cut$};
      \draw [->] (restg) to [out=-90, in =150] (cut);
      \draw ($(restg)!0.4!(cut)+(-0.2,0)$) node [left] {$A\{B/X\}^\perp$};
      \draw ($(restd)!0.4!(cut)+( 0.2,0)$) node [right] {$A\{B/X\}$};
      \draw [->] (restd) to [out=-90, in= 30] (cut);
      
      \draw [very thick, ->] (2,1) --++(2.5,0) node [below] {$cut$};
    \end{scope}
  \end{tikzpicture}
  \caption{\label{quant_rules} Quantifier rule for $cut$-elimination. The substitution of the eigenvariable $Z$ by $B$ takes place on the whole net.}
\end{figure}

\begin{lemma}{\cite{baillot2010linear}}
  Proof-nets are stable under $cut$-elimination.
\end{lemma}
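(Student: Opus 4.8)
Since $cut$-elimination is the transitive closure of one-step reduction, it suffices to show that a single firing of any redex of Figures~\ref{mult_rules}, \ref{exp_rules} and~\ref{quant_rules} turns a proof-net into a proof-net. As ``proof-net'' means, by Definition~\ref{def_proofnet}, ``obtained from the construction rules of Figure~\ref{rules_labeling}'', the task for a one-step reduction $G \to G'$ is to exhibit a construction of $G'$. The plan is to argue by induction on the construction of $G$. Fix the fired cut $c$. If the last construction step of $G$ introduces a link, door or box that is not $c$, is not one of the two links whose conclusions $c$ joins, and does not consume an edge feeding those conclusions, then it commutes with the reduction: $G = R(G_1)$ (or $R(G_1,G_2)$), the redex lies inside $G_1$, one applies the induction hypothesis to $G_1 \to G_1'$ and sets $G' = R(G_1')$ (the premises of $R$ — in particular the eigenvariable freshness condition — survive, since reduction introduces no new free occurrence of a variable, and a pending edge consumed by a later $cut$ is never consumed by anything else, so the commutation is legitimate). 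Iterating, one reduces to the situation where $c$ is the last rule and, inside each of its two immediate sub-proof-nets, the link above $c$ is introduced last, i.e. $G = \mathrm{cut}(G_1,G_2)$ with $G_1$ ending on a link $l_1$ and $G_2$ on a link $l_2$. There then remains a finite case analysis on $(\alpha(l_1),\alpha(l_2))$, which by duality of the two formulas at $c$ must be one of the redex patterns of the three figures.

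\emph{Multiplicative, quantifier and $\S$ cases.} For $\otimes/\parr$: $G_1 = {\parr}(N)$ with $N$ a proof-net carrying pending edges typed $A$ and $B$ at the same level, and $G_2 = {\otimes}(N^L,N^R)$ with $N^L$, $N^R$ proof-nets carrying pending edges typed $A^\perp$ and $B^\perp$ at that same level; the reduct is built by two successive applications of the $cut$ rule — cut $A$ against $A^\perp$, then cut $B$ against $B^\perp$, the latter edge still being pending — hence a proof-net. For the axiom rule one erases the axiom and shifts, by $p$, every formula and every level along the directed subtree $T$ issued from $e$; this preserves every labelling constraint of Figure~\ref{rules_labeling} because $p.(\cdot)$ commutes with all connectives, is compatible with duality ($(p.A)^\perp = p.A^\perp$), and leaves the ``$i{-}1$'' level decrements at $?D$, $\S$ and at box doors untouched (inside a box the decoration $!C$ of the principal door becomes $!(p.C) = p.(!C)$, so the box conditions still hold). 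For $\forall/\exists$ one substitutes $B$ for the eigenvariable $Z$ throughout $G$; this is legal because $Z$ is free in no other conclusion and substitution commutes with every construction rule, after an $\alpha$-renaming if needed to restore pairwise-distinct eigenvariables. The $\S/\S$ case just erases the two $\S$ links and splices the two premises, which sit at the same level, so all level bookkeeping matches.

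\emph{Exponential cases, and the main obstacle.} When $l_2$ is the principal door of a box $B$, the fact that the box rule is the last step of $G_2$ gives $G_2 = \mathrm{box}(M)$ with $M$ a proof-net and $B$ a faithful copy of $M$ decorated by $!C$ on the principal door and $?A_j$ on the auxiliary doors — exactly the material needed to rebuild the result. Against $?W$ one deletes $B$ and turns each auxiliary door into a $?W$ link; against $?D$ one opens $B$, replacing its auxiliary doors by $?D$ links and cutting the former principal-door content against the dereliction's context; against $?N$ the box $B$ is pushed one level down inside the box carried by the $?N$-premise, its border levels being decremented by one so as to match the ``$i{-}1$'' decorations on the $?N$ and $\oc/\wn$ door edges; against $?C$ one makes two copies of $B$ and of everything it contains and reconnects them through the contraction. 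The genuinely delicate verifications live here: in the $?C$ case one must check that each copy is again a proof-net — using the auxiliary fact that the interior of a box of a proof-net, read with its doors removed, is itself a proof-net, which is immediate since the box rule was applied to such a net — and that the nesting of boxes stays arborescent after duplication; in the $?N$ case one must check the index arithmetic around the two now-nested boxes; and, as noted above, the axiom rule requires re-checking well-formedness along the whole re-indexed subtree. Once the invariants (i) duality of the two formulas at every $cut$ and $ax$, (ii) the per-link typing and level constraints of Figure~\ref{rules_labeling}, (iii) arborescent box nesting together with its door-formula and level-decrement conditions, and (iv) the eigenvariable condition are isolated and each shown to be preserved by each rule, what is left is the mechanical enumeration of the exponential box cases, which is where the bulk of the work goes.
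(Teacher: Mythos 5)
Your overall skeleton (induction on the inductive construction of Definition~\ref{def_proofnet}, commute the last rule past the redex, then a case analysis on the redex, with the axiom re-indexing singled out) is reasonable in spirit, but the load-bearing step is not justified and is in fact false as stated. You claim that ``iterating'' the commutation reduces every instance to the shape where the cut $c$ is the last rule and, inside each of its two sub-proof-nets, the link above $c$ is introduced last, so that e.g.\ $G_2=\mathrm{box}(M)$ or $G_2=\otimes(N^L,N^R)$ with nothing below. Rule permutations in the inductive construction are blocked not only when a later rule consumes a conclusion of the earlier one, but also by the arity/connectedness side-conditions of Figure~\ref{rules_labeling}: a $\parr$, $?C$ or box rule needs its inputs to be conclusions of one and the same already-built net, while $\otimes$ and $cut$ need two disjoint nets. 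Concretely, take $G_2$ built as $N^L$ (conclusions $A^\perp, C$) and $N^R$ (conclusions $B^\perp, D$), then the $\otimes$ on $A^\perp,B^\perp$, then a $\parr$ on $C,D$: this $\parr$ satisfies your commutation criterion (it touches neither the cut links nor their premises) yet cannot be permuted above the $\otimes$, because before the $\otimes$ the edges $C$ and $D$ live in different components; so the $\otimes$ can never be made the last rule of $G_2$. Likewise, in every exponential case of Figure~\ref{exp_rules} the auxiliary-door conclusions of the cut box may feed contractions or further cuts inside $G_2$, and those rules depend on the box rule, so ``the box rule is the last step of $G_2$'' is simply unavailable. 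In such configurations the reduct is still a proof-net, but showing this requires re-sequentializing globally (in the example above, the $\parr$ on $C,D$ must now be applied \emph{after} both residual cuts), i.e.\ a splitting/sequentialization argument, not a replay of the original construction order. This is precisely the content that a from-scratch proof must supply and that your sketch elides; the $?N$ case is also misdescribed (the digging step turns one box into two nested boxes with $?N$ links under the auxiliary doors, it does not push $B$ inside a box carried by the $?N$-premise).

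For comparison, the paper does not re-prove this graph-theoretic stability at all: it imports it, observing that the untyped nets are exactly untyped $meLL$ nets, which are stable under reduction by~\cite{baillot2010linear}, and then only checks that the formula labelling of the reduct is compatible with the rules. The single nontrivial typing point is the axiom step, where all formulae along the directed paths out of the surviving conclusion are replaced by $p.(\cdot)$; this is handled by an induction on the proof-net using the fact that every construction rule is stable under $F\mapsto p.F$. That last part you do address, essentially as the paper does; if you want to keep your self-contained route, the missing piece is the sequentialization of the reduct around an arbitrary context attached to the redex interface, which is substantially more work than the commutation argument you wrote.
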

\begin{proof}
  The untyped version of the system corresponds exactly to the untyped version of $meLL$ which is stable under cut reduction~\cite{baillot2010linear}. An analysis of the rules shows that the types of the reduced nets are compatible with the rules. Most case are straightforward. The case of the axiom rule is not. 

  We have to verify that, given a proofnet $G$ an edge $f$ which has no head, if for all edges $h$ such that we have a directed path from $h$ to $f$ we replace the type $\beta(h)$ by $p.\beta(c)$, the resulting graph is still a proofnet. We can do so by induction on the proofnet. If the proofnet is an axiom, then the only edge whose type is replaced is $f$ itself and the axiom is still valid. Else, we can make a disjunction over the last rule used to build the proofnet, we apply the induction hypothesis to the subproofnet(s). The rule used on the original proofnet is still valid now, because each rule is stable by application of the $F \mapsto p.F$ transformation. More details on this rule can be found in~\cite{baillot2010linear}.
\end{proof}

\section{Context semantics}
\label{section_contextsemantics}
\paragraph{}The idea of geometry of interaction and context semantics is to study the reduction of proof-nets (or $\lambda$-terms) by leaving the proof-net unchanged and analysing instead some paths in it. It has been a key tool for the study of optimal reduction in $\lambda$-calculus~\cite{gonthier1992geometry}. Dal Lago~\cite{lago2006context} adapted context semantics so as to use it to prove quantitative properties on proof-net reduction, and applies it to light logics. From this point of view an advantage of context semantics compared to the syntactic study of reduction is its genericity: some common results can be proven for different variants of linear logic, which allows to factor out proofs of complexity results for these various systems.

\paragraph{} In the present work we wil show that context semantics is a powerful method because it allows to prove strong bounds on reduction which we do not know (yet) how to prove directly by syntactical means. The usual method to prove that a given relation strongly normalizes in a bounded number of steps is to define a weight for every object and to show that the weight decreases along any step of the relation. In the case of $cut$-elimination on linear logic proof-nets, we see that the contraction ($?C$ link) will be hard to handle. Indeed, we duplicate a whole box, which can contain an arbitrary proof-net. Whatever the weight associated to the proof-net inside the duplicated box is, it seems like we duplicate the weight when we duplicate the box. So, it seems hard to define any quantity which would decrease during $cut$-elimination.

\paragraph{}During a step of $cut$-elimination, most of the proof-net does not change. Most edges of the reduced net can be related with an edge from the original net. For example we want to say that $e'$ of figure \ref{example_trace_b} comes from $e$ of figure \ref{example_trace_a}. Similarly, $e1$ and $e2$ come from $e'$. We say that $e'$, $e1$ and $e2$ are reducts of $e$. The idea of context semantics is to anticipate $cut$-elimination and speak of the future links and edges. If we consider the reduct of an edge $e$ during a sequence of $cut$-elimination, we can observe that it forms a ``reducts tree'': each reduct has 2 sons in the next proof-net if it is in a box which is duplicated during the $cut$-elimination step, 0 son if it is in a box which is deleted, 1 son otherwise. What we want to capture is the set of leafs of this tree. Indeed, when a box is duplicated, it may increase the number of links and edges, but for every duplicated edge $e$, the leafs of the reducts tree of $e$ is splitted between the two immediate reducts of $e$. So the number of leafs of reducts trees does not increase. For the moment, we will call ``duplicates'' this informal notion of ``leafs of reducts trees''. The following definitions will help us capturing it by formal definitions (duplicates will correspond to ``$\mapsto$-canonical potential edges'').

\subsection{Definition of contexts}

\paragraph{}\label{def_expsignature} The language $Sig$ of {\it exponential signatures} is defined by induction by the following grammar:
\begin{equation*}
Sig = \sige \mid \sigl(Sig) \mid \sigr(Sig) \mid \sigp(Sig) \mid \sign(Sig,Sig)
\end{equation*}
An exponential signature corresponds to a list of choices of premises of $?C$ links. $\sigr(t)$ means: ``I choose the right premise, and in the next $?C$ links I will use $t$ to make my choices''. The construction $n(t,u)$ allows to encapsulate two sequels of choices into one. It corresponds to the digging rule ($??A \vdash B \rightsquigarrow ?A \vdash B$, represented by the $?N$ link in proof-nets) which ``encapsulates'' two $?$ modalities into one. The $p(t)$ construction is a degenerated case of the $n$ construction. Intuitively, $p(t)$ corresponds to $n(\varnothing, t)$. In this paper, we will use the symbols $t,u$ and $v$ to denote exponential signatures.

\paragraph{}\label{def_potential}A { \it potential} is a list of standard exponential signatures: an exponential signature corresponds to the duplication of one box, but a node is duplicated whenever any of the boxes containing it is cut with a $?C$ node. A potential is meant to represent all those possible duplications of a node, so there will be an exponential signature for each box containing it. The set of potentials is $Pot$. In this paper, we will use the symbols $P,Q,R$ for potentials. 

\paragraph{}\label{def_potentialbox} A  {\it potential box} is the couple of a box $B$ and a potential of length $\partial(B)$. For any box $B$, we denote the set of potential boxes whose first component is $B$ by $Pot(B)$.  We define similarly the notions of potential edges and the notation $Pot(e)$, potential links and the notaion $Pot(l)$, and so on. However, this notion does not capture the intuitive notion of ``future duplicate''. Indeed, $(B, [l(l(l(r(l(r(e))))))])$ is a potential box of figure \ref{path_example}, even if it does not correspond to a duplicate of $B$. In section \ref{timecomplexity}, we will define the notion of canonical potential, which fixes this mismatch. 

%\paragraph{} A {\it multiplicative trace element} is one of the following characters :  $\parr_l , \parr_r, \otimes_l, \otimes_r, \forall, \exists$. An {\it exponential trace element} is either $\S$, $!_t$ or $?_t$ with $t$ a potential. A trace element is either a multiplicative trace element or an exponential trace element. A trace element means ``I have crossed an element with this label, from that premise to its conclusion''. The set of trace elements is $TrEl$. ${(TrEl = \S \mid !_{Sig} \mid ?_{Sig} \mid \parr_l  \mid  \parr_r \mid  \otimes_l \mid  \otimes_r \mid  \forall \mid  \exists)}$
\paragraph{}\label{def_traceelement} A {trace element} is one of the following characters:  $\parr_l , \parr_r, \otimes_l, \otimes_r, \forall, \exists,\S,!_t,?_t$ with $t$ a potential. A trace element means ``I have crossed an element with this label, from that premise to its conclusion''. The set of trace elements is $TrEl$. ${(TrEl = \S \mid !_{Sig} \mid ?_{Sig} \mid \parr_l  \mid  \parr_r \mid  \otimes_l \mid  \otimes_r \mid  \forall \mid  \exists)}$

\paragraph{}\label{def_trace} A {\it trace } is a non-empty list of trace elements. The set of traces is $Tra$. A trace is a partial memory of the links we have crossed. Intuitively, it remembers the path crossed up to $cut$-eliminations. For example, we want the traces of the three paths of figure \ref{example_trace} to be the same.

%\tikzset{external/remake next} 
\begin{figure}\centering
  \nvar{\margedoor}{0.3cm}

  \subfigure[]{
    \begin{tikzpicture}
      \label{example_trace_a}
      \begin{scope}[decoration={ markings, mark=at position 0.5 with {\arrow{>}}}] 
        \draw (0,0) node [princdoor] (bang) {};
        \draw (bang) ++(-0.8,0) node [auxdoor] (whyn) {};
        \draw [box] (bang) -| ++(\margedoor, 1) -| ($(whyn)+(- \margedoor, 0)$) -- (whyn) -- (bang);
        \node [ax] (ax) at ($(bang)!0.5!(whyn) + (0,0.5)$) {};
        \draw [out=180, in=90] (ax) to (whyn);
        \draw [out=  0, in=90] (ax) to node [edgename, above] {$e$} (bang);
        \draw ($(bang)!0.5!(whyn)$) ++(0,-0.8) node [par] (par) {};
        \draw (whyn)--(par);
        \draw [very thick, postaction={decorate}] (bang)--(par);
        \draw (par) ++(2,0) node [tensor] (tens) {};
        \draw ($(par)!0.5!(tens)$)++(0,-0.5) node [cut] (cut) {};
        \draw [very thick, postaction={decorate}] (par) to [out=-60, in=180] (cut);
        \draw [very thick, postaction={decorate}] (cut) to [out=0, in=-120] (tens);
        \draw (tens)++(60: 1) node [cont] (cont) {};
        \draw (cont)++(120:1) node [weak] (rest) {};
        \draw (cont) -- (rest);
        \node (sortie) at ($(cont)+(1.1,0)$) {};
        \draw ($(cont)!0.5!(sortie)$)++(0,0.7) node [ax] (ax) {};
        \draw [very thick, postaction={decorate}] (tens) to [out=60, in=-120] (cont);
        \draw [very thick, postaction={decorate}] (cont) to [out=70, in=-170] (ax);
        \draw [very thick, postaction={decorate}] (ax) to [out=0, in=110] (sortie);
        \draw (tens)++(-0.3,0.8) node (inv) {$\cdots$};
        \draw (inv)--(tens);
      \end{scope}
    \end{tikzpicture}
  }
  \subfigure[]{
    \label{example_trace_b}
    \begin{tikzpicture}
      \begin{scope}[decoration={ markings, mark=at position 0.5 with {\arrow{>}}}] 
        \draw (0,0) node [princdoor] (bang) {};
        \draw (bang) ++(-0.8,0) node [auxdoor] (whyn) {};
        \draw [box] (bang) -| ++(\margedoor, 1) -| ($(whyn)+(- \margedoor, 0)$) -- (whyn) -- (bang);
        \node [ax] (ax) at ($(bang)!0.5!(whyn) + (0,0.5)$) {};
        \draw [out=180, in=90] (ax) to (whyn);
        \draw [out=  0, in=90] (ax) to node [edgename, above] {$e'$} (bang);
        \node [cont] (cont) at ($(bang)+(1.2,0)$) {};
        \node [cut] (cut) at ($(bang)!0.5!(cont)+(0,-0.8)$) {};
        \draw [very thick, postaction={decorate}, out=-80, in=180] (bang) to (cut);
        \draw [very thick, postaction={decorate}, out=0, in=-100] (cut) to (cont);
        \draw (cont)++(120:0.8) node [weak] (rest) {};
        \draw (cont) -- (rest);
        \node (sortie) at ($(cont)+(1.1,0)$) {};
        \node [ax] (ax) at ($(cont)!0.5!(sortie)+(0,0.7)$) {};
        \draw [very thick, postaction={decorate}] (cont) to [out=70, in=-170] (ax);
        \draw [very thick, postaction={decorate}] (ax) to [out=0, in=110] (sortie);
        \draw (whyn)++(-1,0) node (inv) {$\cdots$};
        \draw ($(whyn)!0.5!(inv)$) ++(0,-0.5) node [cut] (cutl) {};
        \draw (whyn) to [out=-90, in=0] (cutl);
        \draw (inv) to [out=-90, in=180] (cutl);
      \end{scope}
    \end{tikzpicture}
  }
  \subfigure[]{
    \label{example_trace_c}
    \begin{tikzpicture}
      \begin{scope}[decoration={ markings, mark=at position 0.5 with {\arrow{>}}}] 
        \draw (0,0) node [princdoor] (bang1) {};
        \draw (bang1) ++(-0.8,0) node [auxdoor] (whyn1) {};
        \draw [box] (bang1) -| ++(\margedoor, 1) -| ($(whyn1)+(- \margedoor, 0)$) -- (whyn1) -- (bang1);
        \node [ax] (ax) at ($(bang1)!0.5!(whyn1) + (0,0.5)$) {};
        \draw [out=180, in=90] (ax) to (whyn1);
        \draw [out=  0, in=90] (ax) to node [edgename, above] {$e1$} (bang1);
        \node [weak] (weak) at ($(bang1)+(1,0)$) {};
        \node [cut]  (cut1)  at ($(bang1)!0.5!(weak)+(0,-0.5)$) {};
        \draw (bang1) to [out=-90, in=180] (cut1);
        \draw (weak)  to [out=-90, in=  0] (cut1);

        \draw (2.6,0) node [princdoor] (bang2) {};
        \draw (bang2) ++(-0.8,0) node [auxdoor] (whyn2) {};
        \draw [box] (bang2) -| ++(\margedoor, 1) -| ($(whyn2)+(- \margedoor, 0)$) -- (whyn2) -- (bang2);
        \node [ax] (ax) at ($(bang2)!0.5!(whyn2) + (0,0.5)$) {};
        \draw [out=180, in=90] (ax) to (whyn2);
        \draw [out=  0, in=90] (ax) to node [edgename, above] {$e2$} (bang2);
        \node[ax] (ax2) at ($(bang2)+(1.3,-0.2)$) {};
        \node[cut] (cut2) at ($(bang2)+(0.5,-0.5)$) {};
        \draw [very thick, postaction={decorate}] (bang2) to [out= -80, in= 180] (cut2);
        \draw [very thick, postaction={decorate}] (cut2)  to [out=0   , in=-160] (ax2);
        \node [etc] (etc2) at ($(cut2)+ (1.4,-0.2)$) {};
        \draw [very thick, postaction={decorate}] (ax2) to [out= -20, in=100] (etc2);

        \node[cont] (cont) at ($(whyn1)!0.5!(whyn2)+(0,-1)$) {};
        \draw (whyn1) to [out= -80,in=170] (cont);
        \draw (whyn2) to [out=-100,in= 10] (cont);
        \node [etc] (rest) at ($(cont)+(-1.5,0)$) {};
        \node [cut] (cutb) at ($(cont)!0.5!(rest)+(0,-0.5)$) {};
        \draw (rest) to [out= -80, in=180] (cutb);
        \draw (cont) to [out=-100, in=0]   (cutb);
      \end{scope}        
    \end{tikzpicture}
  }
  \caption{\label{example_trace} The path of subfigures (a) and (b) (from the principal door to the pending edge of the proof net) should lead to the same trace} 
\end{figure}

\paragraph{}\label{def_polarity} A {\it polarity } is either + or -. It will tell us in which way are we crossing the edges. The set of polarities is $Pol$.\label{def_perp} We will use the notation $b^\perp$ for ``the polarity dual to $b$'' ($+^\perp=-$ and $-^\perp=+$). Similarly, $\parr_x^\perp = \otimes_x$, $\otimes_x^\perp = \parr_x$, $!_{i_1 \cdots i_n}^\perp = ?_{i_1 \cdots i_n}$,… We extend the notion of dual on traces by $([e_1; \cdots ; e_k])^\perp = [e_1^\perp; \cdots ; e_k^\perp]$. We also define, for any formula $A$, $A^b$ as $A^+=A$ and $A^-=A^\perp$.

\paragraph{}\label{def_context} A {\it context} is an element $(e,P,T,p)$ of ${C_G=E_G \times Pot \times Tra \times Pol}$. It can be seen as a token that will travel around the net. It is located on edge $e$ (more precisely its duplicate corresponding to $P$) with orientation $p$ and carries information $T$ about its past travel.

\paragraph{}\label{def_rightsquigarrow}\label{def_hookrightarrow} The links define two relations $\rightsquigarrow$ and $\hookrightarrow$ on contexts. The rules are presented in figures \ref{multiplicative_context_semantic} and \ref{exponential_context_semantic}. For any rule $(e,P,T,p) \rightsquigarrow (g,Q,U,q)$ the dual rule $(g,Q,U^\perp,q^\perp) \rightsquigarrow (e,P,T^\perp,p^\perp)$ holds as well. For example, the first $\parr$ rule also gives the following rule: $(c,P,T^\perp.\otimes_l,-)=(c,P,(T.\parr_l)^\perp, +^\perp) \rightsquigarrow (a,P,T^\perp,+^\perp)=(a,P,T^\perp,-)$. So for any $P,T$, $(c,P,T.\otimes_l,-) \rightsquigarrow (a,P,T.\otimes_l,-)$.\label{def_mapsto} $\mapsto$ is the union of $\rightsquigarrow$ and $\hookrightarrow$.

\nvar{\sepColonnes}{3cm}
\nvar{\sepRegles}{4.5cm}
%\tikzset{external/remake next} 
\begin{figure}\centering
  \begin{tabular}{cc}
    \begin{tikzpicture}
      \draw (0,0) node (n) {$cut$};
      \draw (n) ++ (45:0.9) node (r) {};
      \draw (n) ++ (135:0.9) node (l) {};
      \draw[<-] (n) to [bend left]  (l);
      \draw (l)  node [below left] {$e$};
      \draw[<-] (n) to [bend right] (r);
      \draw (r) node [below right] {$f$};
      
      \draw (\sepColonnes,0) node (n) {$ax$};
      \draw (n) ++ (-45:0.9) node (r) {};
      \draw (n) ++ (-135:0.9) node (l) {};
      \draw[->] (n) to [bend right]  (l);
      \draw (l) node [below left] {$g$};
      \draw[->] (n) to [bend left] (r);
      \draw (r) node [below right] {$h$};
      
      \draw (\sepColonnes + \sepRegles,0) node (rules) {
        $\begin{array}{c}
          (e,P,T,+) \rightsquigarrow (f,P,T,-)\\
          (g,P,T,-) \rightsquigarrow (h,P,T,+) 
        \end{array}$
      };
    \end{tikzpicture} 
    
    \\ \\
    
    \begin{tikzpicture}
      \draw (0,0) node (n) {$\parr$};
      \draw[<-] (n) --++(120:1) node [near end, left] {$a$};
      \draw[<-] (n) --++(60:1) node [near end, right] {$b$};
      \draw[->] (n) --++(-90:1) node [near end, right] {$c$};
      
      \draw (\sepColonnes,0) node (n) {$\otimes$};
      \draw[<-] (n) --++(120:1) node [near end, left] {$e$};
      \draw[<-] (n) --++(60:1) node [near end, right] {$f$};
      \draw[->] (n) --++(-90:1) node [near end, right] {$g$};
      
      \draw (\sepColonnes + \sepRegles,0) node (rules) {
        $\begin{array}{c}
          (a,P,T,+) \rightsquigarrow (c,P, T.\parr_l,+) \\
          (b,P,T,+) \rightsquigarrow (c,P, T.\parr_r,+) \\
          (e,P,T,+) \rightsquigarrow (g,P, T.\otimes_l,+) \\
          (f,P,T,+) \rightsquigarrow (g,P, T.\otimes_r,+) 
        \end{array}$
      };
    \end{tikzpicture}\\ \\
    
    \begin{tikzpicture}      
      \draw (0,0) node (forall) {$\forall$};
      \draw[<-] (forall) --++(0,0.8) node [midway, left] {$e$};
      \draw[->] (forall) --++(0,-0.8) node [midway, left] {$f$};
      
      \draw (\sepColonnes,0) node (ex) {$\exists$};
      \draw[<-] (ex) --++(0,0.8) node [midway, left] {$g$};
      \draw[->] (ex) --++(0,-0.8) node [midway, left] {$h$};
      \draw (\sepColonnes + \sepRegles,0) node (rules) {
        $\begin{array}{c}
          (e,P,T,+) \rightsquigarrow (f,P, T.\forall, +) \\
          (g,P,T,+) \rightsquigarrow (h,P, T.\exists, +)
        \end{array}$
      };
      
    \end{tikzpicture} \\
  \end{tabular}
  \caption{ \label{multiplicative_context_semantic}Multiplicative and quantifier rules of context semantics}
\end{figure}

% \tikzset{external/remake next}
\begin{figure}\centering
  \begin{tabular}{cc}
    %contraction and neutral
    \begin{tikzpicture}
      \node[neut] (n) at (0,0) {};
      \draw[<-] (n) --++(0,0.6) node [midway, left] {$h$};
      \draw[->] (n) --++(0,-0.6) node [midway, left] {$i$};
      
      \draw (\sepColonnes,0) node (n) {$?C$};
      \draw[<-] (n) --++(120:0.6) node [edgename] {$e$};
      \draw[<-] (n) --++(60:0.6) node [edgename, right] {$f$};
      \draw[->] (n) --++(-90:0.6) node [edgename] {$g$};
      
      \draw (\sepColonnes + \sepRegles,0) node (rules) {
        $\begin{array}{c}
          (e,P, T.?_t ,+) \rightsquigarrow (g,P, T.?_{\sigl(t)},+)\\
          (f,P, T.?_t ,+) \rightsquigarrow (g,P, T.?_{\sigr(t)},+)\\
          (h,P, T ,+) \rightsquigarrow (i,P,T. \S , +) 
        \end{array}$
      };
    \end{tikzpicture}
    \\ \\
    
    \begin{tikzpicture}
      \draw (0,0) node (n) {$?D$};
      \draw[<-] (n) --++(0,0.6) node [midway, left] {$e$};
      \draw[->] (n) --++(0,-0.6) node [midway, left] {$f$};
      
      \draw (\sepColonnes,0) node (n) {$?N$};
      \draw[<-] (n) --++(90:0.6) node [midway, left] {$g$};
      \draw[->] (n) --++(-90:0.6) node [midway, left] {$h$};

      \draw (\sepColonnes + \sepRegles,0) node (rules) {
        $\begin{array}{c}
          (e,P, T ,+) \rightsquigarrow (f,P, T.?_{\sige}, +) \\
          (g,P, T.?_{t_1}.?_{t_2} ,+) \rightsquigarrow (h,P, T.?_{\sign(t_1,t_2)} , +)\\ 
          (g,P, ?_t ,+) \rightsquigarrow (h,P, ?_{\sigp(t)} , +) 
        \end{array}$
      };
    \end{tikzpicture}
    \\ \\
    
    \begin{tikzpicture}
      \draw (0,0) node (n) {$?P$};
      \draw[<-] (n) --++(0,0.6) node [midway, left] {$e$};
      \draw[->] (n) --++(0,-0.6)  node [midway, left] {$f$};
      \draw (\sepColonnes,0) node  (m) {$!P$};
      \draw[<-] (m) --++(0,0.6) node [midway, left] {$g$};
      \draw[->] (m) --++(0,-0.6) node [midway, left] {$h$};
      \draw (n) -| ++(-.5,.7) (n)--(m) -| ++(.5,.7);
      
      \draw (\sepColonnes + \sepRegles,0) node (rules) {$\begin{array}{c}
          (e, P.t, T,+) \rightsquigarrow (f,P, T.?_t, +) \\
          (g, P.t, T,+) \rightsquigarrow (h,P, T.!_t, +) \\
          (f,P, !_t, -) \hookrightarrow (h, P, !_t, +) \\
        \end{array}$};
    \end{tikzpicture} \\
  \end{tabular}
  \caption{\label{exponential_context_semantic} Exponential rules of the context semantics}
\end{figure}

The behaviour induced by those rule can be understood by observing the path of figure \ref{path_example}. Crossing the $\parr$ downward, we have to keep as an information that we come from the right premise of the $\parr$, so that we know which premise of the $\otimes$  will be cut with $a$ after the reduction of the left cut. In context semantics, the information about the past is contained in the trace, so we put $\parr_r$ on the trace.

The jump from $d$ to $e$ can be surprising: indeed if we reduce the left cut, then the principal door of $B$ is cut with the auxiliary door of $B'$. And if we reduce this cut, the boxes $B$ and $B'$ fuse. It seems that box $B$ has disappeared and will never have any more duplicates. To understand why this jump is necessary, we can look at another sequence of $cut$-eliminations: if we reduce both cuts, the principal door of $B$ will be cut with a contraction link (which comes from $d$). To take all the duplicates of $B$ into account, we have to make this jump.

%\tikzset{external/remake next}
\begin{figure}\centering
  \begin{tikzpicture}
    \tikzstyle{porte}=[draw, circle, inner sep =0.2]
    \draw (0,0) node [princdoor] (b) {};
    \draw (b) --++ (0.7, 0) --++(0,1) -| ++(-1.4,-1) -- (b);
    \node (bName) at ($(b)+(0.9,0.4)$) {$B$};
    \draw (b) ++ (-0.7,-1) node [par] (par) {};
    \draw (par) ++ (-0.5,1) node (inc) {$\cdots$};
    \draw [->, very thick] (b)--(par) node [midway, below right] {a};
    \draw [->] (inc)--(par);
    \draw (par) ++ (1, -0.5) node (cut1) [cut] {};
    \draw (par) ++ (2, 0) node (tens) [tensor] {};
    \draw [->, very thick, out=-80, in=180] (par) to node [midway, below left] {b} (cut1) ;
    \draw [->, very thick, out=-100, in=0] (tens) to node [midway, below right] {c} (cut1);
    \draw (tens) ++ (120:0.7) node [etc] (inc) {} ;
    \draw (tens) ++ ( 0.7,1) node [auxdoor] (why) {};
    \draw [->] (inc)--(tens);
    \draw [->, very thick] (why)--(tens) node [midway, below right] {d};
    \draw (why) ++ (0.8,0) node [princdoor] (bang) {};
    \draw (why)--(bang) -| ++(0.4,0.8) -| ($(why)+(-0.4,0)$) -- (why);  
    \node (bp) at ($(bang)+(0.6,0.4)$) {$B'$};
    \draw (bang) ++ (1, -0.5) node (cut2) [cut] {};
    \draw (bang) ++ (2, 0) node (der) [cont] {};
    \draw [->, very thick, out=-80, in=180] (bang) to node [midway, below] {e} (cut2) ;
    \draw [->, very thick, out=-100, in=0] (der) to node [midway, below] {f}(cut2) ;
    \draw (der) ++ (120:0.8) node [der] (der1) {};
    \draw (der) ++ (60:0.8) node [der] (der2) {};
    \draw [<-] (der) -- (der1);
    \draw [<-, very thick] (der) -- (der2) node [edgename, right] {$g$};
    \draw (der1)++(0,0.7) node [etc] (etc1) {};
    \draw (der2)++(0,0.7) node [etc] (etc2) {};
    \draw [->] (etc1)--(der1);
    \draw [->] (etc2)--(der2);
    
    \begin{scope}[xshift=9cm]
      \draw (0,0) node [princdoor] (b) {};
      \draw (b) --++ (0.7, 0) --++(0,1) -| ++(-1.4,-1) -- (b);
      \draw (b) ++ (-0.5,-1) node [par] (par) {};
      \draw (par) ++ (-0.7,1) node (inc) {$\cdots$};
      \draw [->] (b)--(par);
      \draw [->] (inc)--(par);
      \draw (par) ++ (1, -0.5) node (cut1) [cut] {};
      \draw (par) ++ (2, 0) node (tens) [tensor] {};
      \draw [->, out=-80, in=180] (par) to (cut1) ;
      \draw [->, out=-100, in=0] (tens) to (cut1);
      \draw (tens) ++ (120:0.7) node [etc] (inc) {} ;
      \draw [->] (inc)--(tens);
      \draw (tens) ++ (60:0.7) node [cont] (cont) {};
      \draw [->] (cont)--(tens);
      \draw (cont)++(110:2) node [auxdoor] (why1) {};
      \draw (cont)++(25:1) node [auxdoor] (why2) {};

      \draw (why1) ++ (0.8,0) node [princdoor] (bang1) {};
      \draw (why1)--(bang1) -| ++(0.4,0.8) -| ($(why1)+(-0.4,0)$) -- (why1);  
      \draw (bang1) ++ (0.7, -0.4) node (cut2) [cut] {};
      \draw (bang1) ++ (1.4, 0) node (der1) [der] {};
      \draw [->, out=-80, in=180] (bang1) to (cut2) ;
      \draw [->, out=-100, in=0] (der1) to (cut2) ;
      \draw (der1)++(0,0.7) node [etc] (etc1) {};
      \draw [->] (etc1)--(der1);

      \draw (why2) ++ (0.8,0) node [princdoor] (bang2) {};
      \draw (why2)--(bang2) -| ++(0.4,0.8) -| ($(why2)+(-0.4,0)$) -- (why2);  
      \draw (bang2) ++ (0.7, -0.4) node (cut3) [cut] {};
      \draw (bang2) ++ (1.5, 1.2) node (der2) [der] {};
      \draw [->, out=-80, in=180] (bang2) to (cut3);
      \draw [->, out=-90, in=10] (der2) to (cut3);
      \draw (der2)++(0,0.7) node [etc] (etc2) {};
      \draw [->] (etc2)--(der2);

      \draw [->] (why1)--(cont);
      \draw [->] (why2)--(cont);
    \end{scope}

    \draw[->, very thick] (6,0)--(7,0);
  \end{tikzpicture}
  
  \caption{ \label{path_example} $(a, [~], !_{[\sigr(e)]}, +)  \rightsquigarrow (b, [~], \parr_r :: !_{[\sigr(e)]}, +)  \rightsquigarrow (c, [~], \parr_r :: !_{[\sigr(e)]} , -)  \rightsquigarrow (d, [~], !_{[\sigr(e)]}, -)  \hookrightarrow (e, [~], !_{[\sigr(e)]}, +)  \rightsquigarrow (f, [~], !_{[\sigr(e)]}, -) \rightsquigarrow (g, [~], !_{\sige}, -)$} 
\end{figure}

\subsection{Dealing with the digging}

\label{def_standard}A {\em standard} exponential signature is one that does not contain the constructor $\sigp$.\label{def_quasistandard} An exponential signature $t$ is {\em quasi-standard} iff for every subtree $\sign(t_1,t_2)$ of $t$, the exponential signature $t_2$ is standard. 

The $n( \_, \_)$ construction corresponds to the $?N$ $cut$-elimination of figure \ref{exp_rules}: we want to be able to speak about the duplicate of some edge in the duplicate $t_1$ of $B_1$, which is itself in the duplicate $t_2$ of $B_2$. But, if we are interested in the box $B_2$, then the $n( \_, \_)$ construction has no meaning, because we are not in any duplicate of $B_1$. When we are only interested in the duplicates of box $B_2$, we use the $p( \_)$ construction. 

These two notions are quite related: if a potential node is in the $\sign (t_1,t_2)$ duplicate of box $B$, then it is somehow contained in the $\sigp(t_2)$ duplicate of box $B$. The $\sqsubseteq$ relation defined below formalizes this relation.

We can notice that if $n(s_1,s_2)$ is a duplicate box $B$ then it means that $s_2$ is a valid duplicate of box $B_2$. So $p(s_2)$ is a valid duplicate of box $B$. More generally if $s$ is a duplicate of any box $B$, and $s \sqsubseteq s'$ then $s'$ is a duplicate of box $B$. So it is enough to describe the duplicates of a box which are minimal for $\sqsubseteq$ (which we call standard) to describe all its duplicates.

We can also notice that the $\sigp( \_)$ construction, has no meaning in potentials. If edge $e$ of $G$ is contained in box $B$, and we want to speak about a future duplicate of $e$ in the normal form of $G$. Not only do we need to know in which duplicates of box $B_2$ this duplicate of $e$ will be, but also in which duplicate of $B_1$. So, only standard signatures will be used in potentials. 

\paragraph{}\label{def_sqsubseteq}The binary relation $\sqsubseteq$ on $Sig$ is defined as follows:
\begin{align*}
  \sige      \sqsubseteq & \sige & &\\
  \sigl(t)   \sqsubseteq & \sigl(t')   &\Leftrightarrow&  t \sqsubseteq t' \\
  \sigr(t)   \sqsubseteq & \sigr(t')   &\Leftrightarrow&  t \sqsubseteq t' \\
  \sigp(t)   \sqsubseteq & \sigp(t')   &\Leftrightarrow&  t \sqsubseteq t' \\
  \sign(t_1,t_2)  \sqsubseteq & \sigp(t') &\Leftrightarrow&  t_2 \sqsubseteq t' \\
  \sign(t_1,t_2) \sqsubseteq & \sign(t'_1,t'_2) &\Leftrightarrow&  t_1 \sqsubseteq t'_1 \text{ and } t_2=t'_2 \\
\end{align*}
If $t \sqsubseteq t'$, then $t'$ is a simplification of $t$.\label{def_sqsubset} We also write $t \sqsubset t'$ for ``$t \sqsubseteq t'$ and $t \neq t'$''. We can observe that $\sqsubseteq$ is an order and $\sqsubset$ a strict order.

Our notation is reversed compared to Dal Lago's notation. Intuitively, $\sqsubseteq$ corresponds to an inclusion of future duplicates, but with the notation of~\cite{lago2006context}, $\sqsubseteq$ corresponds to $\supseteq$. We find this correspondence counter-intuitive, so we reversed the symbol. We found this change really important for the formalization of the ``Dal Lago's weight theorem'' where we manipulate the $\sqsubseteq$ relation really often. The correspondence is made precise through the $\Subset$ relation which encapsulates both $\subseteq$ and $\sqsubseteq$. We will later prove that the $\Subset$ relation is conserved during $cut$-elimination.

\paragraph{}\label{def_Subset}We define $\Subset$ on $Pot(B_G) \times Sig$ by
\begin{equation*}
  (B,P,t) \Subset (B',P',t') \Leftrightarrow  \left \{
  \begin{array}{l}
    B \subset B' \text{ and } P=P'.s'@Q \text{ with } s' \sqsubseteq t'\\
    \text{or} \\
    B=B' \text{, } P=P' \text{ and } t \sqsubset t'
  \end{array}
  \right .
\end{equation*}

\subsection{Context semantics and time complexity}
\label{timecomplexity}
\paragraph{}In this section we want to capture the potential boxes (resp. edges) which really correspond to duplicates of a box (resp. an edge). The definition can be difficult to understand. To give the reader an understanding of it, we will first try to capture intuitively this notion in the case of depth 0. We will do so by successive refinements.

\paragraph{}First, we could say that $(B,[t])$ corresponds to a duplicate iff $t$ corresponds to a sequel of choices of duplicates along a $cut$-elimination sequel, and the duplicate we chose eiter will not be part of a $cut$, or the cut will open it. The relation on contexts which corresponds to $cut$-elimination is $\mapsto$. So, we could make the first following attempt: ``$(B,[s])$ corresponds to a duplicate iff $(\sigma(B),[~],[!_s],+) \mapsto^* (e,P,T,-) \not \mapsto$. In figure \ref{example_trace}, $(B,[\sigl(\sige)])$ and $(B,[\sigr(\sige)])$ would be considered valid duplicates for the box, as they are expected to be.

\paragraph{}However, this definition would allow potential boxes which refuse choices: for example $(B,[\sige])$ in figure \ref{path_example} satisfies this definition, because $(\sigma(B),[~],[\oc_{\sige}],+) \mapsto^5 (f,[~],[\oc_{\sige}],-) \not \mapsto$. We would like the potentials of $B$ in this figure to be $(B,[\sigl(\sige)])$ and $(B,[\sigr(\sige)])$. $(B,[\sige])$ is just an intermediary step. Thus, our second try would be ``$(B,[t])$ corresponds to a duplicate iff $(\sigma(B),[~],[!_t],+) \mapsto (e_1, P_1, T_1, p_1) \cdots \mapsto (e_n,P_n,T_n,p_n) \not \mapsto$ and $\forall u, (\sigma(B),[~],[!_u],p) \mapsto (e_1, Q_1, U_1, q_1) \cdots \mapsto (e_n,Q_n,U_n,q_n) \Rightarrow (e_n,Q_n,U_n,q_n) \not \mapsto$''. 

Thus, if $(e_n,P_n,T_n,p_n)$ has no successor by $\mapsto$ because it faces a contraction or a digging while having an inappropriate exponential signature, then we could find another $u$ which would take the same path from $(e_1,p_1)$ to $(e_n,p_n)$ but with an appropriate exponential signature allowing it to continue the path. Thus, in such a situation, $t$ would not be considered to be a duplicate. On the contrary, if $(e_n,P_n,T_n,p_n)$ has no successor because it is a pending edge, or it arrives at a weakening, or it arrives at a dereliction with a trace equal to $[\oc_{\sige}]$, then changing the exponential signature does not allow us to continue the path, so $t$ would be considered to be a duplicate.

%\tikzset{external/remake next}
\begin{figure}\centering
  \begin{tikzpicture}
    \draw (0,0) node [princdoor] (bang) {};
    \draw (bang) ++ (0.4,-0.3) node {$B$};
    \draw (bang) -| ++ (0.7,0.9) -| ($(bang)+(-0.7,0)$)--(bang);
    \draw (bang)++ (2.5,0) node [dig] (dig) {};
    \draw ($(bang)!0.5!(dig) + (0,-0.7)$) node [cut] (cut) {};
    \draw [->, out=-90,in=180] (bang) to (cut);
    \draw [->, out=-90,in=0]   (dig)  to (cut);
    \draw (dig) ++(0,0.7) node [auxdoor] (aux) {};
    \draw (dig)--(aux);
    \draw (aux) ++ (0.8,0) node [princdoor] (princ) {};
    \draw (aux) ++ (0,0.6) node [der] (derint) {};
    \draw [<-] (aux)--(derint);
    \draw (derint) ++ (0.4,0.3) node [ax] (ax) {};
    \draw [out=180,in=90] (ax) to (derint);
    \draw [out=0,in=90] (ax) to (princ);
    \draw (aux)--(princ)-| ($(princ)+(0.4,1.2)$) -| ($(aux)+(-0.4,0)$);
    \draw (princ) ++(1,0) node [der] (derext) {};
    \draw ($(princ)!0.5!(derext)+(0,-0.6)$) node [cut] (cut2) {};
    \draw [out=-90,in=180] (princ) to (cut2);
    \draw [out=-90,in= 0] (derext) to (cut2);
    \draw (derext) ++ (0,0.5) node [etc] (etc) {};
    \draw (derext)--(etc);
    
    \draw [->,very thick] (5,0) -- (8,0);
    \begin{scope}[xshift=4cm]
      \draw (6,0) node [princdoor] (bang) {};
      \draw (bang) ++ (0.4,-0.3) node {$B_2$};
      \draw (bang) -| ++ (0.7,1.6) -| ($(bang)+(-0.7,0)$)--(bang);
      \draw (bang)++(0,0.7) node [princdoor] (bangint) {};
      \draw (bangint) -| ++(0.4,0.6) -| ($(bangint)+(-0.4,0)$)--(bangint);
      \draw (bangint) ++ (0.4,-0.3) node {$B_1$};
      \draw [->] (bangint) -- (bang);
      \draw (bang) ++ (1.5,0) node [auxdoor] (aux) {};
      \draw ($(bang)!0.5!(aux) + (0,-0.7)$) node [cut] (cut) {};
      \draw [->, out=-90,in=180] (bang) to (cut);
      \draw [->, out=-90,in=0]   (aux)  to (cut);
      \draw (aux) ++ (0.8,0) node [princdoor] (princ) {};
      \draw (aux) ++ (0,0.6) node [der] (derint) {};
      \draw [<-] (aux)--(derint);
      \draw (derint) ++ (0.4,0.3) node [ax] (ax) {};
      \draw [out=180,in=90] (ax) to (derint);
      \draw [out=0,in=90] (ax) to (princ);
      \draw (aux)--(princ)-| ($(princ)+(0.4,1.2)$) -| ($(aux)+(-0.4,0)$) -- (aux);
      \draw (princ) ++(1,0) node [der] (derext) {};
      \draw ($(princ)!0.5!(derext)+(0,-0.6)$) node [cut] (cut2) {};
      \draw [out=-90,in=180] (princ) to (cut2);
      \draw [out=-90,in= 0] (derext) to (cut2);
      \draw (derext) ++ (0,0.5) node [etc] (etc) {};
      \draw (derext)--(etc);
    \end{scope}
  \end{tikzpicture}
  \caption{\label{copies_simplification_condition}We do not want to consider $(B, \sign( \sige, \sigl( \sigr ( \sigl( \sige )))))$ as a valid duplicate of $B$}
\end{figure}

Similarly, we do not want the potential box to make choices on situations which will never happen. For example, we refuse $(B,[\sigr(\sigl(\sige))])$. So we add the condition ``$\forall u, (\sigma(B),[~],[!_u],p) \mapsto (e_1, Q_1, U_1, q_1) \cdots \mapsto (e_n,Q_n,U_n,q_n) \Rightarrow t \preccurlyeq u$.\label{def_preccurlyeqsig} With $t \preccurlyeq u$ defined inductively by : either $t=\sige$ or $t=x(t')$, $u=x(u')$ and $t' \preccurlyeq u'$ (with $x \in \{ \sigl, \sigr \}$)''.

\begin{definition}
  $t \preccurlyeq^v u$ if we are in one of the following cases:
  \begin{itemize}
    \item $t= \sige$ and $u = v$
    \item $t= \sigl (t')$, $u= \sigl (u')$ and $t' \preccurlyeq^v u'$
    \item $t= \sigr (t')$, $u= \sigr (u')$ and $t' \preccurlyeq^v u'$
    \item $t= \sigp (t')$, $u= \sigp (u')$ and $t' \preccurlyeq^v u'$
    \item $t= \sign (t',w)$, $u= \sign (u',w)$ and $t' \preccurlyeq^v u'$
  \end{itemize}
  We also write $t \preccurlyeq u$ for $\exists v, t \preccurlyeq^v u$
\end{definition}

This means that, every exponential signature leading to the same exact path would be longer than $t$. In our example, $(B,[\sigr(\sigl(\sige))])$ would not be a duplicate of $B$, because $\sigr(\sige)$ would lead to the same path as $\sigr(\sigl(\sige))$ and $\sigr(\sigl(\sige)) \not \preccurlyeq \sigr( \sige)$.

Finally, figure \ref{copies_simplification_condition} shows that those property must be true not only for $t$ but also for all its simplifications. Indeed, we do not want to consider $(B, \sign( \sige,\sigl( \sige )))$ to be a valid potential box because $\sigl( \sige ) $ do not correspond to a copy of $B_2$.

If the depth of box $B$ is $>0$, then $[t_1;\cdots;t_n]$ can not be a valid duplicate for $B$ if $[t_1;\cdots;t_{n-1}]$ is not a valid duplicate for the box containing immediately $B$. What we have described for boxes at depth 0 will give the notion of $\mapsto$-copy. A potential corresponds to a duplicate if all its signatures are copies, in this case it will be called a $\mapsto$-canonical potential. In fact, we will need restricted notions of duplicates which correspond to duplicates when we forbid the elimination of some cuts. In the same way that the $\mapsto$ relation corresponds to full $cut$-elimination, we will define restrictions of $\mapsto$ corresponding to our restrictions of $cut$-elimination. Those relations will satisfy some important properties which we regroup under the notion of ``cut simulation''.

\paragraph{}\label{def_cutsimulation}A relation $\rightarrow$ on contexts is a ``cut simulation'' if:
\begin{itemize}
\item $\rightarrow \subseteq \mapsto$
\item If $(e,P,T.t,-) \rightarrow (f,Q,T@U,q)$ and $|T|\geq 1$ then for every $T'$ with $|T'|>1$ and $P'$ with $|P'|=|P|$, $(e,P',T'.t,p) \rightarrow$
\item If $(e,P,T,+) \rightarrow (f,Q,T@U,q)$ and $|T|>1$ then for every $T'$ with $|T'|>1$ and $P'$ with $|P'|=|P|$, $(e,P',T',p) \rightarrow$
\end{itemize}

\paragraph{}We now consider a cut simulation $\rightarrow$. In this section, the only cut simulation we use is $\mapsto$ itself. However, in section \ref{section_stratification}, we will use other relations.

\paragraph{Copy context}\label{def_copycontext} A context $(e,P,[!_t]@T,p)$ is a $\rightarrow$-copy context if 
\begin{itemize}
  \item ~~ $(e,P,[!_t]@T,p) \rightarrow (e_1, P_1, T_1, p_1) \cdots \rightarrow (e_n,P_n,T_n,p_n) \not \rightarrow$
  \item If $(e,P,[!_u]@T,p) \rightarrow (e_1, Q_1, U_1, p_1) \cdots \rightarrow (e_n,Q_n,U_n,p_n)$, then $t \preccurlyeq u$ and $(e_n, Q_n, U_n, p_n) \not \rightarrow$
\end{itemize}

\begin{definition}[Copy]\label{def_copy}
  A $\rightarrow$-copy of a potential box $(B,P)$ is a standard exponential signature $t$ such that for all $t \sqsubseteq u$, $(\sigma(B),P,[!_u],+)$ is a $\rightarrow$-copy context.
\end{definition}

The set of $\rightarrow$-copies of $(B,P)$ is $C_\rightarrow(B,P)$. The set of simplifications of $\rightarrow$-copies of $(B,P)$ is $Si_{\rightarrow}(B,P)$. Intuitively, $Si_{\mapsto}(B,P)$ corresponds to the duplicates of $B$, knowing the duplicates of the outter boxes we are in.

%\paragraph{} A proof net has {\em strictly positive weights} if for every potential box $(B,P)$ of $G$, $|Si_{\mapsto}(B,P)| \geq 1$.

\begin{definition}[Canonical potential]\label{def_canonicalpotential}Let $e \in B_{\partial(e)} \subset ... \subset B_1 $ , a {\it $\rightarrow$-canonical potential} for $e$ is a potential $[s_{1} ;  ... ; s_{\partial(e)}]$ such that $\forall i\leq \partial(e), s_i \in C_\rightarrow(B_i, [s_{1}; \cdots ; s_{i-1}])$.  The set of $\rightarrow$ canonical sequences for $e$ is $L_{\rightarrow}(e)$. We define $L_\rightarrow(B)=L_\rightarrow (\sigma(B))$. Let $x$ be an edge or a box, we will write $Can_\rightarrow(x)$ for $\{x \} \times L_{\rightarrow}(x)$ and $Can(x)$ for $Can_{\mapsto}(x)$.
\end{definition}

Intuitively, a $\mapsto$-canonical potential for $e$ is the choice, for all box $B_i$ containing $e$, of a copy of $B_i$. $L_{\mapsto}(e)$ corresponds to all the possible duplicates of $e$.\label{def_canonicalbox} If $P \in L_{\rightarrow}(B)$, we say that $(B,P)$ is a {\em canonical box}.\label{def_canonicaledge} We define similarly the notion of {\em canonical edge} and {\em canonical link}.

\begin{definition}[Canonical context]\label{def_canonicalcontext}A context $(e,P,[!_t]@T,p) \in C_G$ is said {$\rightarrow$-\it canonical} if:
  \begin{itemize}
  \item $P \in L_{\rightarrow}(e)$
  \item $t$ is quasi standard and for every $t \sqsubseteq v$, $(e,P,[!_v]@T,p)$ is a $\rightarrow$-copy context
  \item For all cutting of $T$ of the shape $T= U.!_u@V$, $u$ is standard and there exists $v \preccurlyeq u$ such that for every $v \sqsubseteq w$, $(e,P,[!_w]@V~,p~)$ is a $\rightarrow$-copy context
  \item For all cutting of $T$ of the shape $T= U.?_u@V$, $u$ is standard and there exists $v \preccurlyeq u$ such that for every $v \sqsubseteq w$, $(e,P,[!_w]@V^\perp,p^\perp)$ is a $\rightarrow$-copy context
  \item There exists $(B,Q) \in Pot(B_G)$ and $v \in Sig$ such that $(\sigma(B),Q,[\oc_v],+) \rightarrow^* (e,P,[\oc_t]@T,p)$
  \end{itemize}
\end{definition}
  The only paths that interest us in this paper are those which correspond to copies or simplifications of copies of potential boxes. The contexts have special properties, that we need in some lemmas. In particular, the first trace elements of the traces are $\oc_t$ trace elements. If $P \in L_\rightarrow(B)$ and $t \in Si_\rightarrow(B,P)$, then the context $(\sigma(B),P,[\oc_t],+)$ is $\rightarrow$-canonical. Lemma \ref{canonical_context_stable} will show that any context of the path beginning by $(\sigma(B),P,[\oc_t],+)$ is $\rightarrow$-canonical.

\begin{lemma} \label{copy_context_stable}
  If $C \rightarrow D$ then $C$ is a $\rightarrow$-copy context if and only if $D$ is a $\rightarrow$-copy context
\end{lemma}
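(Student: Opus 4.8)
The plan is to run a case analysis on the single rule $r$ (from figures~\ref{multiplicative_context_semantic} and~\ref{exponential_context_semantic}) fired by the step $C\rightarrow D$, and to transport the two clauses defining a $\rightarrow$-copy context along $r$ in both directions. Three preliminary observations set this up. (i)~Since $\rightarrow$ is a cut simulation, $\rightarrow\subseteq\mapsto$; the rules of $\rightsquigarrow$ and $\hookrightarrow$ are syntax-directed, pairwise non-overlapping, and each is recoverable from its target as well as from its source, so $\mapsto$---hence $\rightarrow$---is a partial function, indeed a partial injection, on contexts. This makes the $\rightarrow$-reduction of a context a well-defined maximal sequence, and forces $D$ to be its first context. (ii)~The kind of the head trace element is invariant under a $\rightarrow$-step: no rule pops the head of the trace, because popping, merging and decomposing of trace elements all act on the last element and a backward rule that would empty a one-element trace (such as the duals of $?D$, $?P$ or $\parr$) is not fireable; and the only rules rewriting the head are the duals of $?C$ and $?N$ applied to a one-element trace, which send $!_{\sigl(s)}$ or $!_{\sigr(s)}$ to $!_s$, send $!_{\sigp(s)}$ to $!_s$, and send $!_{\sign(s_1,s_2)}$ to the trace $[!_{s_1};!_{s_2}]$---always leaving a $!$ at the head. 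Hence $C$ has the shape $(e,P,[!_t]@T,p)$ iff $D$ has the shape $(e_1,P_1,[!_{t_D}]@T_1,p_1)$, so the statement is meaningful (if neither has this shape the equivalence holds vacuously). (iii)~By the last two clauses of the definition of a cut simulation, whether $r$ is fireable at a context of shape $(e,P,[!_t]@T,p)$, which edge and polarity it leads to, and how it transforms $(P,T)$, depend only on $e$, $p$, $P$, $T$ and the head constructor of~$t$; and they depend on the head constructor of~$t$ only when $r$ is a dual of $?C$ or $?N$ acting on a one-element trace. Call $r$ \emph{transparent} in every other case and \emph{reading} in this boundary case; for a reading $r$ one has $t=\sigl(t_D)$, $t=\sigr(t_D)$, $t=\sigp(t_D)$, or $t=\sign(t_D,w)$, with $t_D$ the head signature of $D$.

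Granting these, the transport is short. \emph{Forward implication.} Let $C$ be a $\rightarrow$-copy context with $C\rightarrow D$; clause~$1$ for $D$ is the tail of clause~$1$ for $C$. For clause~$2$, fix $u_D$ such that the variant of $D$ with head signature $u_D$ follows the same link/polarity trajectory as the $\rightarrow$-reduction of $D$, and pull it back: put $u:=u_D$ if $r$ is transparent and $u:=\sigl(u_D)$ (resp.\ $\sigr(u_D)$, $\sigp(u_D)$, $\sign(u_D,w)$) if $r$ is reading, matching the head of~$t$. By (iii) the variant of $C$ with head signature $u$ fires $r$, steps to the $u_D$-variant of $D$, hence follows the whole trajectory of $C$'s reduction; clause~$2$ for $C$ therefore gives $t\preccurlyeq u$ and non-reducibility of the final context, and the inductive definition of $\preccurlyeq$ turns $t\preccurlyeq u$ into $t_D\preccurlyeq u_D$ (in the $\sign$ case one reads off simultaneously that the second components agree and that $t_D\preccurlyeq u_D$). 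As the two reductions share their final context, this is clause~$2$ for $D$. \emph{Backward implication.} Given $D$ a $\rightarrow$-copy context with $C\rightarrow D$, clause~$1$ for $C$ is ``$C$ followed by the reduction of $D$''; for clause~$2$, a head signature $u$ for which the $u$-variant of $C$ follows $C$'s trajectory must make its first step land on $D$'s edge, which for a reading $r$ forces $u$ to have the same head constructor as $t$, hence $u=\sigl(u_D)$ (etc.) for a suitable $u_D$; by (iii) the $u$-variant of $C$ then steps to the $u_D$-variant of $D$, which follows $D$'s trajectory, so clause~$2$ for $D$ yields $t_D\preccurlyeq u_D$, whence $t\preccurlyeq u$, together with non-reducibility of the endpoint (the transparent subcase being $u=u_D$, $t=t_D$).

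The real work---and the only place where errors could creep in---is the verification of (ii) and (iii): a once-and-for-all walk through each rule of figures~\ref{multiplicative_context_semantic} and~\ref{exponential_context_semantic} and its dual, checking that the head of the trace is untouched outside the boundary cases and reading off from the two fireability clauses of a cut simulation that fireability and the reached edge are insensitive to the contents of a sufficiently long trace. Inside the boundary cases the delicate rule is $?N$: since the dual of $?N$ with the $\sign$ constructor splits one $!$ element into two, one must check that the copy-context clauses of $D$---which only vary the head signature---still suffice to pin down the second component $w$ of $\sign(t_D,w)$; this is precisely where the matching of the second components in the definition of $\preccurlyeq$, and the non-fireability of the popping rules on short traces, come in. Once (ii) and (iii) are established, what remains is the case distinction above.
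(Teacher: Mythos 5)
Your overall plan---unfold Definition~\ref{def_copy} of a $\rightarrow$-copy context and transport its two clauses across the single step, with a case analysis on the rule of Figures~\ref{multiplicative_context_semantic} and~\ref{exponential_context_semantic}---is exactly what the paper's own proof gestures at (it is literally the one-liner ``straightforward, by definition''), and your forward direction, as well as the backward direction for every rule except $\wn N$, goes through. But the backward direction at a $\wn N$ step rests on a claim that is false, and it is precisely the point you defer at the end. You assert that for a reading step, landing on $D$'s edge ``forces $u$ to have the same head constructor as $t$''. This holds for $\wn C$ (the $\sigl$/$\sigr$ duals exit through different premises, hence different edges), but not for $\wn N$: from the conclusion of a $\wn N$ with a one-element trace, both $\sign(u_1,u_2)$ and $\sigp(u')$ step to the \emph{same} premise with the \emph{same} polarity, and even when the constructor is $\sign$ the first step puts no constraint at all on the second component $u_2$. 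So the hypothesis of clause~2 for $C$ can be realised by contexts that are not head-variants of $D$; clause~2 for $D$, which only varies the head signature over the fixed tail $[\oc_{t_2}]$, says nothing about them, while clause~2 for $C$ would require $\sign(t_1,t_2)\preccurlyeq u$, which fails whenever $u=\sigp(\cdot)$ or $u=\sign(\cdot,u_2)$ with $u_2\neq t_2$.

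Moreover this is not a merely tedious verification you left out: from the stated definitions alone it cannot be completed. Consider a $\wn N$ link whose premise $g$ is the conclusion of a $\wn C$ whose two premises come from $\wn W$ links (a legal proof-net fragment), let $h$ be the conclusion of the $\wn N$, and take $C=(h,[~],[\oc_{\sign(\sige,\sigl(\sige))}],-)\mapsto(g,[~],[\oc_{\sige};\oc_{\sigl(\sige)}],-)=D$. Then $D$ is a $\mapsto$-copy context: its path crosses the $\wn C$ to the left and stops at the $\wn W$, the head $\oc_{\sige}$ is never read, $\sige\preccurlyeq u_D$ for every $u_D$, and every head-variant gets stuck at the same place. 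But $C$ is not: $u=\sigp(\sigl(\sige))$ (or $u=\sign(\sige,\sigl(\sigl(\sige)))$) yields a path through exactly the same edges with the same polarities, yet $\sign(\sige,\sigl(\sige))\not\preccurlyeq u$ since $\preccurlyeq$ has no $\sign$/$\sigp$ case and requires equal second components in the $\sign$/$\sign$ case. So discharging your deferred check needs an extra idea---some additional hypothesis on $C$ (e.g.\ the canonicity present in Lemma~\ref{canonical_context_stable}, the only place the lemma is invoked, or an argument via underlying formulae excluding such head signatures)---which is missing from your proposal and, to be fair, is also glossed over by the paper's ``straightforward'' proof.
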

\begin{proof}
  Straightforward, by definition of $\rightarrow$-context
\end{proof}
\begin{lemma} \label{canonical_context_stable}
  If $C$ is $\rightarrow$-canonical and $C \rightarrow D$ then $D$ is $\rightarrow$-canonical.
\end{lemma}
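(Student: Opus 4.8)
The plan is to proceed by a case analysis on which rule of figures \ref{multiplicative_context_semantic} and \ref{exponential_context_semantic} is used for the step $C \rightarrow D$, checking in each case the five clauses of Definition \ref{def_canonicalcontext} for $D$ from those for $C$.

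Clause~5 is immediate in every case: if $(\sigma(B),Q,[\oc_v],+) \rightarrow^* C$ witnesses it for $C$, then appending the step $C \rightarrow D$ produces a witness for $D$. For all the rules that leave the potential unchanged and only modify the tail of the trace --- the $ax$, $cut$, $\parr$, $\otimes$, $\forall$, $\exists$, $?D$, $?N$, $?C$, $\S$ rules and the $\hookrightarrow$ jump --- clause~1 is trivial, since the two edges involved lie in exactly the same boxes and hence carry the same set $L_\rightarrow$; and clauses 2, 3, 4 are obtained from the corresponding clauses for $C$ through Lemma \ref{copy_context_stable}. The observation that makes Lemma \ref{copy_context_stable} applicable is that every rule depends only on the polarity and on the tail of the trace: if $C = (e,P,[\oc_t]@T,p) \rightarrow D$, then for every signature $w$ the probe $(e,P,[\oc_w]@T,p)$ fires the same rule and reaches the matching probe of $D$, so ``being a copy context'' transports from $C$'s probes to $D$'s probes. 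In addition, for the $?C$, $?D$, $?N$ rules one has to check that wrapping the top signature by $\sigl$, $\sigr$, $\sigp$, $n(\cdot,\cdot)$ or replacing it by $\sige$ keeps each trace signature standard (needed in clause~3), keeps the lead signature quasi-standard (clause~2), and preserves a $\preccurlyeq$-minimal witness; all of this follows directly from the definitions of standard, quasi-standard and $\preccurlyeq$, since these wrappings are injective and $\preccurlyeq$-monotone.

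The substantive cases are the $\oc P$ and $?P$ rules, which move a signature between the potential and the trace. When the token leaves a box (the $\rightsquigarrow$ rules with polarity $+$: $(g,P.t,T,+) \rightsquigarrow (h,P,T.\oc_t,+)$ through the principal door, $(e,P.t,T,+) \rightsquigarrow (f,P,T.?_t,+)$ through an auxiliary door), clause~1 of $C$ says $t \in C_\rightarrow(B,P)$, so dropping $t$ yields a canonical potential for the door's conclusion, and the new trace element $\oc_t$ (resp.\ $?_t$) meets the relevant trace clause for $D$ because $t$ is standard (copies are standard) and, for the principal door, the conclusion edge \emph{is} $\sigma(B)$, so $(\sigma(B),P,[\oc_w],+)$ is a copy context for all $t \sqsubseteq w$ directly by Definition \ref{def_copy} --- for an auxiliary door one first routes through the jump $(f,P,[\oc_w],-) \hookrightarrow (\sigma(B),P,[\oc_w],+)$ and applies Lemma \ref{copy_context_stable}. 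The hard direction is the dual pair, where the token \emph{enters} a box with polarity $-$, consuming a $?_t$ or $\oc_t$ from the top of $C$'s trace and pushing $t$ onto the potential: one must re-establish clause~1 for $D$, i.e.\ that this $t$ lies in $C_\rightarrow(B,P)$. This is precisely the information that clauses 3 and 4 of the canonical context are designed to carry --- the relevant $?_t$/$\oc_t$ element of $C$'s trace comes with a witness $v \preccurlyeq t$ all of whose $\sqsubseteq$-simplifications give copy contexts on the door's conclusion edge, equivalently (via the $\hookrightarrow$ jump to $\sigma(B)$ and Lemma \ref{copy_context_stable}) copy contexts on $\sigma(B)$ --- and I expect the main obstacle to be turning this $\preccurlyeq$-minimal trace data into the $\sqsubseteq$-closed statement $t \in C_\rightarrow(B,P)$ demanded by Definition \ref{def_copy}. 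Once that reconciliation is carried out, the remaining clauses of $D$ follow as before by Lemma \ref{copy_context_stable}.
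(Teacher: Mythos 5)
Your outline reproduces the paper's overall strategy (a case analysis on the $\rightarrow$-step, with Lemma \ref{copy_context_stable} discharging the routine clauses), but it stops exactly where the work lies, so there is a genuine gap. For the case you yourself single out as ``the hard direction'' --- the token entering a box $B$ with polarity $-$, popping a $\oc_u$ or $\wn_u$ element and pushing $u$ onto the potential --- you only \emph{name} the obstacle and then write ``once that reconciliation is carried out\ldots''. That reconciliation is the content of the case: clauses 3/4 of Definition \ref{def_canonicalcontext} only give you some witness $v \preccurlyeq u$ all of whose simplifications yield $\rightarrow$-copy contexts at the door's conclusion, whereas clause~1 for $D$ requires the $\sqsubseteq$-closed statement $u \in C_\rightarrow(B,P)$ for the popped signature $u$ itself. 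Because $\preccurlyeq$-minimality is built into the definition of a copy context, passing from $v$ to a possibly strictly larger $u$ is not a formality (if $v \neq u$ and both run the same maximal path, $u$ cannot be $\preccurlyeq$-minimal for it), so one must argue why this situation cannot occur for a canonical context, using the remaining clauses (in particular reachability from a principal door) and the machinery of Lemmas \ref{lemma_prec_left_right}/\ref{lemma_prec_right_left}. Without that argument the proposal is an outline, not a proof of the lemma.

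There is also a smaller but real flaw in the one case the paper actually writes out (crossing an auxiliary door of $B$ downwards): for the freshly created $\wn_{t'}$ trace element you ``route through the jump $(f,P,[\oc_w],-) \hookrightarrow (\sigma(B),P,[\oc_w],+)$ and apply Lemma \ref{copy_context_stable}'', which tacitly assumes $\hookrightarrow \subseteq \rightarrow$. The lemma is stated for an arbitrary cut simulation, and for the simulations $\mapsto_s$ of Section \ref{section_stratification} --- precisely the ones this lemma must later serve --- the jump over $B$ may be forbidden. The paper's proof handles this by a case split: if $(f,P,[\oc_{t'}],-) \rightarrow (\sigma(B),P,[\oc_{t'}],+)$ one takes the witness $v = t'$ and uses Lemma \ref{copy_context_stable}; otherwise one takes $v = \sige$, the context being stuck and hence vacuously a copy context. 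The same implicit assumption also appears in your treatment of the other auxiliary-door cases, so it needs to be removed throughout.
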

\begin{proof} 
  The proof is done by a long but straightforward case by case analysis. Here, we consider only one of the cases. We suppose that $C=(e,P.t',[!_t]@T,+) \rightarrow (f,P,[!_t]@T.?_{t'},+)=D$, crossing an auxiliary door of a box $B$ downwards. $P.t'$ is $\rightarrow$-canonical for $e$ so by definition $P$ is $\rightarrow$-canonical for $\sigma (B)$. $\sigma (B)$ and $f$ belong to the same boxes so $P$ is $\rightarrow$-canonical for $f$. $C$ being canonical, $t$ is quasi-standard, and for every $t \sqsubseteq u$, $(e,P.t',[!_u]@T,+)$ is a $\rightarrow$-copy context. Moreover, by definition of cut simulations and by considering that $(e,P.t',[!_t]@T,+) \rightarrow (f,P,[!_t]@T.?_{t'},+)$, we get that $(e,P.t',[!_u]@T,+) \rightarrow (f,P,[!_u]@T.?_{t'},+)$. So $(f,P,[!_u]@T.?_{t'},+)$ is a $\rightarrow$-copy context (lemma \ref{copy_context_stable}). 

Let us suppose that $T.?_{t'}=U.?_u@V$, we have to prove that there exists $v \preccurlyeq u$ such that for all $v \sqsubseteq w$, $(f,P,[!_w]@V^\perp,-)$ is a $\rightarrow$-copy context. The proof depends whether $V=[~]$ or not. If $V$ is not empty, $T=U.?_u@W$ for some $W$, so there exists $v \preccurlyeq u$ such that for every $v \sqsubseteq w$, $(e,P.t',[!_w]@W^\perp,-)$ is a $\rightarrow$-copy context. By lemma \ref{copy_context_stable}, we prove that $(f,P,[!_w]@V^\perp,-)$ is a $\rightarrow$-copy context. Else (if $V=[~]$), we know that $P.t'$ is $\rightarrow$-canonical for $e$, so $t'$ is a $\rightarrow$-copy of $(B,P)$. If $(f,P,!_{t'},-) \rightarrow (\sigma(B),P,!_{t'},+)$, we can take $v=t'$ and prove the needed properties with lemma \ref{copy_context_stable}. Else, we take $v=\sige$. The case $T.?_{t'}=U.!_u@V$ is treated in the same way but is simpler because $V \neq [~]$.
\end{proof}

We define a weight $T_G$. In the following, we will prove that $T_G$ is an upper bound on the length of the longest cut-reductions sequence. This result was already proved in~\cite{lago2006context} for a really similar framework. The proof of Dal Lago was quite convincing for people having a good intuition on proof-net. however it was not really formal. Here, we take a complementary approach: we propose a proof which is quite formal, with many details provided. The downside us that our proof is much more difficult to read than the one in~\cite{lago2006context}.

\begin{definition}[weights]\label{def_tg}For every proof-net G, we define 
\begin{equation*}
T_G = \sum_{e \in E_G} | L_{\mapsto}(e)| + 2. \sum_{B \in B_G} \left( D_G(B) \sum_{P \in L_{\mapsto}(B)} \sum_{t \in Si_{\mapsto}(B,P)} |t| \right ) 
\end{equation*}
\end{definition}

\label{def_acyclic}Let $G$ be a proof net. We say that $G$ is cyclic if there exists $e \in E_G$, $P \in L_G(e)$, $p \in Pol$, $s,s' \in Sig$ such that $(e,P,[\oc_s],p)$ is a $\mapsto$-copy context and $(e,P,[!_s],p) \mapsto^* (e,P,[!_{s'}],p)$. Otherwise, $G$ is acyclic.

Intuitively, a cyclic proof structure may not normalize. In term of $T_G$: suppose that the cycle consumes a part of the exponential signatures and that a path corresponding to a copy $t$ of a potential box $(B,P)$ passes through this cycle. Then we can insert the ``consumed part'' as many times as we want in $t$, creating an infinity of copies for $B$, so $T_G$ would be infinite. 

\label{def_positiveweights}We say that $G$ has positive weights if for every $(B,P) \in Pot(B_G)$, $1 \leq |C_{\mapsto}(B,P) < \infty$.

In the rest of this section, we will prove that our $\mapsto$-copies correspond to the copies of Dal Lago \cite{lago2006context}, all proof-nets are acyclic, have positive weights and that if $G \rightarrow_{cut} H$ then $T_G > T_H$. The readers who are not interested in those proofs may safely skip those parts and jump towards Section \ref{section_stratification}.

\subsection{Copymorphisms: motivations and definitions}
\paragraph{}The idea of context semantics is to anticipate the $cut$-elimination. We said $(e,P)$ and $(e,Q)$ represent two future duplicates of the edge $e$. However, we have not made explicit the link between the potential edges and the duplicates of the edge. We would like to explicit, for all $G \rightarrow_{cut} H$ {\em a mapping $\phi$ from potential edges of $G$ to the potential edges of $H$.}

\paragraph{}Furthermore, we would like to use this mapping to prove that $T_H < T_G$ so this mapping must relate the copies and canonical potential of $G$ and $H$. The copies are defined by paths, {\ so we would like $\phi$ to keep $\mapsto$ intact}.

\paragraph{}Looking at the rules of $\rightarrow_{cut}$ we see that it will be impossible to have a complete mapping which ``keeps $\mapsto$ intact''. For example, in the $\parr / \otimes$ case, what could be the images of the edges cut? So $\phi$ will be a partial mapping. The important paths are the paths corresponding to copies, such paths begin at principal doors, so we would like all $(\sigma(B),P)$ (with $B\in B_G$) to be in the domain of $\phi$. Conversely, we would like all the $(\sigma(B'),P')$ (with $B' \in B_H$) to be images of some $(\sigma(B),P)$ (with $B \in B_G$). Notice that it is not possible to ask the images of $(\sigma(B),P)$ ($B\in B_G$) to be principal doors of $H$ because of the dereliction rule, cf figure \ref{der_copymorphism}

%\tikzset{external/remake next}
\begin{figure}
  \centering
  \begin{tikzpicture}
    \begin{scope}[shift={(5.5,0)}]
      \draw (0,0) node (bang) [princdoor] {};
      \draw (bang) ++ (-0.8,0) node [auxdoor] (auxd) {};
      \draw (auxd) ++ (-1.2,0) node [auxdoor] (auxg) {};
      \draw (auxd) -- (bang) -|++ (0.4,0.8) -| ($(auxg)+(-0.4,0)$) |- (auxg);
      \draw [dotted] (auxg)--(auxd);
      \draw (bang) ++ (0.5,-0.8) node [cut] (cut) {};
      \draw (bang) ++ (1,0) node [der] (der) {};
      \draw (der) ++ (0,1) node (dered) {};
      \draw (auxd) ++ (0,-1) node (restd) {};
      \draw (auxg) ++ (0,-1) node (restg) {};
      \draw [ar] (bang) to [out=-90, in=150] (cut);
      \draw [ar, out=-90, in=30] (der) to node [midway, right] {$e$} (cut) ;
      \draw [revar] (der) -- (dered);
      \draw [ar] (auxd)--(restd) node [midway, right] {$g_k$};
      \draw [ar] (auxg)--(restg) node [midway, right] {$g_1$};
      \draw [revar] (bang) --++ (0,0.5) node (hbang) {} node [midway, right] {$f$};
      \draw [revar] (auxd) --++ (0,0.5) node (hauxd) {};
      \draw [revar] (auxg) --++ (0,0.5) node (hauxg) {};
      
      \draw (auxd) ++ (5,0) node [der] (derd) {};
      \draw (auxg) ++ (5,0) node [der] (derg) {};
      \draw [dotted] (derg)--(derd);
      \draw (bang) ++ (5,0) node (nbang) {};
      \draw (nbang)++(0.5,0) node (hbang) {};
      \draw (derd) ++ (0,-1) node (restd) {};
      \draw (derg) ++ (0,-1) node (restg) {};
      \draw [ar] (derd)--(restd) node [midway, right] {$g'_k$};
      \draw [ar] (derg)--(restg) node [midway, right] {$g'_1$};
      \draw [revar] (derd) --++ (0,0.5) node (hderd) {};
      \draw [revar] (derg) --++ (0,0.5) node (hderg) {};
      \draw (hbang)++(0.5,-1) node (cut') {$cut$};
      \draw (hbang)++(1,0.4) node (rest) {};
      \draw [ar, out=-90, in=140] (hbang) to node [midway, above right] {$f'$} (cut');
      \draw [ar] (rest) to [out=-90, in=50] (cut');
      \draw [->, very thick] ($(der)!0.2!(derg)$) -- ($(der)!0.8!(derg)$);
    \end{scope}
  \end{tikzpicture}
  \caption{\label{der_copymorphism} The dereliction elimination rules motivates the introduction of a second morphism to capture the relation between a proof net and its reduct}
\end{figure}

\paragraph{} We will prove later that paths corresponding to $\mapsto$-copies end by ``final contexts'' as defined below. Here, we do not need the result but just the intuition to guide our definition. Notice that, in the definition of Dal Lago, the copies are defined by paths finishing by final-contexts. The Dal Lago's definition is a theorem in our framework. This change of definition was necessary because, in the proofs of our main results, we will need to consider copies for relations other than $\mapsto$. 

\begin{definition}\label{def_finalcontext}
  A context $((l,m),P,[!_{\sige}]@T,p)$ is said final if either:
  \begin{itemize}
  \item $p=-$ and $\alpha(l)=?D$ and $T=[~]$
  \item $p=-$ and $\alpha(l)=?W$
  \item $p=+$ and $\alpha(m)=\bullet$
  \end{itemize}
  The set of final contexts of $G$ is written $F_G$.  
\end{definition}

\paragraph{}We would like paths finishing by final contexts to be preserved by $\phi$. So we would like conclusions of $?D$ and $?W$ links, and pending edges of $E_G$ to be in the domain of $\phi$ (for all potentials) and such edges of $E_H$ to be in the codomain of $\phi$ (for all potentials). However, when we analyse the $?D$ and $?W$ cut elimination rules, we see that it will not be possible. In figure \ref{der_copymorphism}, edge $e$ is intuitively transformed in many edges: $\{g'_1, \cdots, g'_k\}$ and if we defined any of those to be the image of $e$, the $\mapsto^*$ relation would not be kept intact (for example $(f,[\sige],T,+) \mapsto^* (e,[~],T .!_{\sige},-)$ but we can not have a similar relation between $f'$ and $g'_i$). Does this mean that we can not explicit a correspondence between the copies of $G$ and $H$? In fact we just have to be more subtle. The paths we are interested in are those who begin by a principal door and finish by contexts $(e,P,[!_{\sige}],-)$. Such a context can not come from the interior of box $B$ (it would have a longer trace) so either it comes from $\sigma(B)$ (in this case, the box being deleted we will deal with this copy in another way) or from a $\hookrightarrow$ rule. In the latter case, we see that the corresponding path would be stopped by a $g'_i$. There is some correspondence between $e$ and the $g'_i$, even if it can not be captured by $\phi$. We will need another mapping $\psi$ to capture it.

\paragraph{}We gather all the properties written above into the concept of ``copymorphism'':

\begin{definition}[Copymorphism]\label{def_copymorphism}
  Let $G$ and $H$ be two proof-nets, a copymorphism from $G$ to $H$ is a tuple $(D_{\phi},D_{\phi}', \phi, \psi)$ with:

  \begin{enumerate}
  \item $D_{\phi} \subseteq E_G$, $D_{\phi}' \subseteq E_H$
  \item $\phi$ a partial surjective mapping from $Pot(D_{\phi}) \times Sig$ to $Pot(D_{\phi}') \times Sig$ such that:
    \begin{enumerate}
    \item \label{copymorphism_path_cons} The paths are the same in $G$ and $H$: 
   \begin{equation*}
     \text{If } \left .
     \begin{array}{c}
       \phi(e,P,t)=(e',P',t')\\ 
       \phi(f,Q,u)=(f',Q',u')
     \end{array}
     \right \} \text{, then } \forall T,U \in Tra, \forall p,q \in Pol
     \begin{array}{c}
       (e,P,[!_t]@T,p) \mapsto^* (f,Q,[!_u]@U,q)\\
       \Leftrightarrow \\
       (e',P',[!_{t'}]@T,p) \mapsto^* (f',Q',[!_{u'}]@U,q)
     \end{array}
   \end{equation*}

 \item \label{copymorphism_subset_cons} $\phi$ preserves the inclusion of boxes: 
   \begin{equation*}
     \text{If } \left .
     \begin{array}{c}
       \phi(\sigma(B),P,t)=(\sigma(B'),P',t')\\ 
       \phi(\sigma(C),Q,u)=(\sigma(C'),Q',u')
     \end{array}
     \right \} \text{, then }
     \begin{array}{c}
       (B, P, t) \Subset (C,Q, u)  \\
       \Leftrightarrow  \\
       (B',P',t') \Subset  (C',Q',u')
     \end{array}
   \end{equation*}
 \item \label{copymorphism_sqsubset} If $\phi(e,P,t)$ is defined and $t \sqsubseteq u$, then $\phi(e,P,u)$ is defined.
 \item \label{copymorphism_boxwise} $\phi$ is boxwise:
   \begin{equation*}
     \left .
     \begin{array}{c}
       B \text{ deepest box containing } e \text{ with } \sigma(B)\in D_{\phi} \\
       \phi(\sigma(B),P,t)=(\sigma(B'),P',t') \\
       \phi(e,P.t@Q,u)=(e',Q',u') \\
       \phi(e,P.t@R,u)=(f',R',v')
     \end{array}
     \right \} \Rightarrow 
     \left \{
     \begin{array}{c}
       e'=f'\\
       u'=v' \\
       \rho(e'=f')=B'  \\
       Q'=R'=P'.t'       
     \end{array}
     \right .
   \end{equation*}
   In the limit case where there is no box $B$ such that $\sigma(B) \in D_{\phi}$ and $e \in B$, we mean that $e'$ is at depth 0.
 \item \label{copymorphism_princ_defined}Images of principal doors are defined: let $(\sigma(B),P) \in Pot(D_{\phi})$ and $t \in C_{\mapsto}(B,P)$.
 \item \label{copymorphism_princ_doors} All principal doors of $H$ are in the image, by $\phi$, of the principal doors of $G$:
   \begin{equation*}
     \Set*{\phi(\sigma(B),P,s) }{ (\sigma(B),P) \in Pot(D_{\phi}),  s\in Sig } = \Set{ (\sigma(B'),P',s') }{ (B',P') \in Pot(B_H), s'\in Sig } 
   \end{equation*}

 \item \label{copymorphism_final_defined}Images of final contexts are defined: 
   \begin{itemize}
   \item Let $(e,P,[\oc_t]@T,p) \in F_G$ with $(e,P) \in Can(D_\phi)$, then $\phi(e,P,t)$ is defined.
   \item Let $(f,Q,[\oc_u]@U,q) = \psi(e,P,[\oc_t]@T,p)$, then $\phi(f,Q,\oc_u)$ is defined.
   \end{itemize}
   
 \item \label{copymorphism_final_cons}Images of final contexts are final
   \begin{equation*}
     \begin{array}{c}
       \Set*{ (e',P',[!_{t'}]@T,b) \in C_H }{ \exists(e,P,[!_t]@T,b) \in F_G  \text{ and }  \phi(e,P,t)=(e',P',t')  } \\
       \cup \\
       \Set*{ (e',P',[!_{t'}]@T,b) \in C_H}{ \exists C \in C_G, \exists (e,P,[!_t]@T,b) \in \psi(C)  \text{ and }  \phi(e,P,t)=(e',P',t') }\\
     \end{array}
     = F_H
   \end{equation*}
 \item \label{copymorphism_phi_division} If $\phi(e,P,t)=(e',P',t')$ and $\phi(f,Q,u)=(e',P',u')$ then $(e,P)=(f,Q)$
 \item \label{copymorphism_phi_copy_not_empty} For all $B \in B_G$, if $\sigma(B) \not \in D_\phi$ then $\forall P \in L_G(B), C_G(B,P) \neq \varnothing$ 
 \end{enumerate}
\item \label{copymorphism_psi} $\psi$ is a partial mapping from  $D_{\psi}= F_G \cap \left( {\mathbf (E_G - D_{\phi})}  \times Pot \times Tra \times Pol \right) $  to $\mathcal{P}(D_{\phi}' \times Pot \times Tra \times Pol)$ such that $\forall (e,P,T,b) \in D_{\psi}$,there exists $k \in \mathbb{N}$ such that: 
  \begin{enumerate}
    \item \label{copymorphism_psi_j} $\forall j \leq k, \left ( (f,Q,U,c) \mapsto^j (e,P,T,b) \right ) \Rightarrow f \not \in (\sigma(B_G) \cap D_\phi)$ 
    \item \label{copymorphism_psi_k} $\left( (f,Q,U,c) \mapsto^k (e,P,T,b) \right ) \Leftrightarrow (f,Q,U,c) \in \psi(e,P,T,b)$
    %\item \label{copymorphism_psi_final} If $(f,Q,[!_s]@U,c) \in \psi(e,P,T,b)$ then $\phi(f,Q,s)=(g,R,t)$ with $(g,R,[!_t]@U,c) \in F_H$
  \end{enumerate}
\end{enumerate}
\end{definition} 

In order to use this definition of copymorphism, we will have to prove that the paths corresponding to a copy (or simplification of a copy) indeed have a final context as last context. This is the goal of the next subsection.

\subsection{Underlying formula of a context}
\label{subsection_underlying}
\subsubsection{Intuition of the proof}
\paragraph{} The idea of this subsection is to prove that, if a context $C$ comes from an edge $e$ with an empty trace, then $C$ still has information about the formula $\beta(e)$. We can define an underlying formula for the contexts, and this underlying formula will be stable by $\rightsquigarrow$ in typed proof-nets. A consequence from this, is that if we begin a path by a context with a ``right trace'', then this path will never be blocked by a mismatch between the trace and the type of link above our edge (the tail of our edge). For example we will never have $(e,P,T.\forall,-)$ with the tail of $e$ being a $\otimes$ link. However, it does not prevent blocking situations such as $(e,P,T.!_{\sige},-)$ with the tail of $e$ being a $?C$ link.

\paragraph{} Our first idea to define the underlying formula of $(e,P,T,+)$ would be to use $T$ to prune the syntactic tree of $\beta(e)$. For example the underlying formula of $(e,P, [!_{\sigl(\sige)};\parr_r;\forall],+)$ with $\beta(e)= \forall X.  ?(X \otimes X^\perp) \parr !(X^\perp \parr X)$ would be $X^\perp \parr X$. Noticing that crossing $cut$ and $axiom$ does not change the trace but transforms the labelling formula into its dual, we would define the underlying formula of $(e,P,T,-)$ as $\beta(e)^\perp$ pruned using $T$.

\paragraph{} However, there is a problem with this definition when we cross a $\exists$ link downwards. For example if $(c,[~],[\parr_r;!_{\sige};\otimes_r],+) \mapsto (d,[~],[\parr_r!_{\sige};\otimes_r;\exists],+)$ with $\beta(c)= ?(X \otimes X^\perp) \otimes !(X^\perp \parr X)$ and $\beta(d)=\exists Y.Y \otimes Y^\perp$ (figure \ref{intuition_underlying_formula}). The underlying formula of $(c,[~],[\parr_r;!_{\sige};\otimes_r],+)$ is $X$, but $(d,[~],[\parr_r;!_{\sige};\otimes_r;\exists],+)$ has no underlying formula: the trace is not compatible with the syntactic tree of $\beta(d)$. A solution would be to index the $\exists$ with the formula they capture. During the paths, this index could be transferred to eigenvariables and $\forall$ links as in figure \ref{intuition_underlying_formula}.
%\tikzset{external/remake next}
\begin{figure}
  \centering
  \begin{tikzpicture}
    \node [exists] (exists) at (0,0) {};
    \node [forall] (forall) at ($(exists)+(6,0)$) {};
    \node [cut] (cut) at ($(exists)!0.5!(forall)+(0,-0.8)$) {};
    \draw [ar, out=-110, in=  0]   (forall) to node [edgename, above] {$e$} (cut);
    \draw [ar, out= -70, in=180] (exists) to node [edgename, above] {$d$} node [type, below] {$\exists Y. Y^\perp \otimes Y$} (cut);
    \node [tensor] (tensor) at  ($(exists)+(0,0.8)$) {};
    \node [par]    (par)    at  ($(forall)+(0,0.8)$) {};
    \draw [ar] (tensor)--(exists) node [edgename] {$c$} node [type] {$\wn(X \otimes X^\perp) \otimes \oc(X^\perp \parr X)$};
    \draw [ar] (par)   --(forall) node [edgename] {$f$};
    \node [ax] (ax) at ($(par)+(0,1)$) {};
    \draw [ar, out=-20, in=60] (ax) to node [edgename] {$g$} (par);
    \draw [ar, out=-160,in=120](ax) to node [edgename] {$h$} (par);
    \node [princdoor] (bang) at ($(tensor)+(60:1)$) {};
    \draw [ar] (bang)--(tensor) node [type] {$b$};
    \node [etc] (etc) at ($(tensor)+(120:0.8)$) {};
    \draw [ar] (etc)--(tensor);
    \draw (bang)-| ++(1.5,2)-| ($(bang)+(-0.5,0)$)--(bang);
    \node [par] (pari) at ($(bang)+(0,1.1)$) {};
    \draw [ar] (pari) -- (bang) node [edgename] {$a$} node [type] {$X^\perp \parr X$};
    \node [ax] (axi) at ($(pari)+(0,0.7)$) {};
    \draw [out=-20, in=60, ar] (axi) to (pari);
    \draw [out=-160,in=120,ar] (axi) to (pari);
  \end{tikzpicture}
\caption{\label{intuition_underlying_formula} To keep the same underlying formula along the path, we would like to extend the contexts in the following way: $(a,[\sige],[\parr_r],\varnothing, +) \mapsto^2 (c,[~],[\parr_r;\oc_{\sige};\otimes_r],\varnothing, +) \mapsto (d,[~],[\parr_r;!_{\sige};\otimes_r;\exists_{!(Y^\perp \parr Y)}],\varnothing,+) \mapsto (e,[~],[\parr_r;!_{\sige};\otimes_r;\exists_{!(Y^\perp \parr Y)}],\varnothing,-) \mapsto (f,[~], [\parr_r;!_{\sige};\otimes_r], \{Z \mapsto !(Y^\perp \parr Y)\},-) \mapsto^3 (f,[~],[\parr_r;!_{\sige};\parr_l], \{Z \mapsto !(Y^\perp \parr Y) \},+) \mapsto (e,[~],[\parr_r;!_{\sige};\parr_l;\forall_{Y^\perp \parr Y}], \varnothing,+) \mapsto (d,[~],[\parr_r;!_{\sige};\parr_l;\forall_{Y^\perp \parr Y}], \varnothing,-) \mapsto (c,[~],[\parr_r;!_{\sige};\parr_l], \varnothing,-)$}
\end{figure}

\paragraph{}In figure \ref{intuition_underlying_formula}, we extend traces, and we insert substitutions in the contexts to deal with eigenvariables. To define the underlying formula of $(e,P,T,\theta,+)$ we could use $T$ to prune $\beta(e)[\theta]$ and instantiate its ground formulae bound by quantifier connectives. Then, we could hope that the underlying formula would not change along a $\mapsto$ path (as in figure \ref{intuition_underlying_formula}). However, with multiple crossing of $\exists$ and $\forall$ links, the situation becomes a bit more complicated, substitutions will not be precise enough to describe the situations.

\begin{wrapfigure}{l}{1.5cm}
  \centering
  \begin{tikzpicture}
    \node[exists] (ex) at (0,0) {};
    \draw[->] (ex)--++ (0,-1) node [edgename] {$f$} node [type] {$\exists Y. Y$};
    \draw[<-] (ex)--++ (0, 1) node [edgename] {$e$} node [type] {$Z \otimes A$};
  \end{tikzpicture}
\end{wrapfigure}

\paragraph{}Indeed, let us suppose that $(e,P,T,\{Z \mapsto (Z_1^\perp \parr Z_1)\},+) \mapsto (f,P,T.\exists_B,\{Z \mapsto (Z_1^\perp \parr Z_1)\},+)$, crossing a $\exists$ link whose associated formula ($Z \otimes A$, for example)contains all the free $Z$ of $\beta(e)$. How will we define $B$? If we define the list of substitutions of eigenvariables as $\{Z \mapsto Z_1^\perp \parr Z_1, Y \mapsto Z \otimes A \}$, we will have to maintain a tree of dependencies. If we define it as $\{Y \mapsto (Z_1^\perp \parr Z_1) \otimes A \}$ it will be harder to prove, when the path will come back to cross upwards the $\exists$ link, that the formula associated to the $\forall$ in the trace corresponds to the formula associated to the $\exists$ link in the proof net. Our solution is to show that the trees of dependencies  can be considered as {\em truncations} of a {\em ztree}\footnote{The ztrees are trees dealing with eigenvariables, most often named $Z$ in the litterature, hence the ``z'' of ``ztree''} which only depend on the edge and potential of the context (not on the trace). What we mean by ``truncation'' is defined in the following paragraph.

\paragraph{Trees truncations}\label{def_truncation} Let $I$ and $L$ be two sets, we can define the set of $I,L$-trees as the set of directed connex acyclic graphs such that the leafs (nodes of outward arity 0) are labelled by an element of $L$ and the internal nodes (nodes which are not leafs) are labelled by an element of $I$. If $L$ is reduced to one element or $L=I$, we can define a truncation relation on trees: $T \vartriangleleft U$. If $L=I$, it means there is an injection $\phi$ from the nodes of $T$ to the nodes of $U$ such that 
\begin{itemize}
\item The ``son/father'' relation is preserved by $\phi$
\item For all nodes $n$ of $T$, the outward arity of $\phi(n)$ is either 0 or the outward arity of $t$
\item For all nodes $n$, the labels of $t$ and $\phi(t)$ are equal.
\end{itemize}
In the case where $L$ is reduced to one element, we do not require the labels to be equal when $n$ is an internal node and $\phi(n)$ a leaf.

\subsubsection{Substitution trees}
\label{subsection_ztree}
%\paragraph{} Let $E$ be a set of variables, $\mathcal{X}_E$ refers to the substitution which to each variable $Y$ of $E$ assigns the variable $\mathcal{X}_Y$. We assume here an injective mapping $Y \mapsto \mathcal{X}_Y$ from variables to variables and that, if not explicitly, the variables we use are never in the image of $\mathcal{X}$ so $\mathcal{X}_Y$ is different from all the other variables used.

\paragraph{Substitution trees}\label{def_substitutiontree} are finite sets of finitely branching, but potentially infinite, trees whose internal nodes are labelled by a substitution on a unique variable, and whose leafs are labelled by the void function. We define an alternative notation for substitution trees by coinduction. Let $T$ be a substitution tree, by definition it is a set of finitely many trees : $\{T_1, \cdots , T_k \}$. Each $T_i$ admits an unique node $t_i$ with no incoming edge, if $t_i$ is not a leaf it is labelled by $\{X_i \mapsto B_i \}$. Let $\underline{T_1},\cdots ,\underline{T_k}$ be the notations for $T_1,\cdots,T_k$, we define the notation $\underline{T}$ for $T$ by $\underline{T}= \{(X_1,B_1,\underline{T_1}) \cdots (X_k,B_k,\underline{T_k})\}$, the trees reduced to a unique node being omitted.

\begin{definition}[ztree]\label{def_ztree}
  Let $(e,P)$ be a potential edge of $G$, $P=[p_1;\cdots ; p_{\partial(e)}]$ and $Z_1 \cdots Z_k$ the free eigenvariables of $\beta(e)$. The complete substitution of $(e,P)$ (written $ztree(e,P)$) is defined coinductively by the following. Let $f_i$ be the conclusion of the $\forall$ link associated to $Z_i$. We defined the set $E_{(e,P)}$ by :
  \begin{equation*}
    E_{(e,P)} = \Set*{ Z_i }{\exists (g_i,R_i) \in Pot(E_G),\begin{array}{c} (f_i,[p_1; \cdots ; p_{\partial(f_i)}],\forall,+) \mapsto^* (g_i,R_i,\forall,-) \\ \text{ and }g_i\text{ is the conclusion of a }\exists\text{ link }l_i \end{array}}
    \end{equation*}
  Then $ztree(e,P)$ is defined by 
  \begin{equation*}
    ztree(e,P)= \Set*{(Z_i,B_i,ztree(h_i,R_i)) }{ Z_i \in E_{(e,P)} }
  \end{equation*}
  with $B_i$ the formula captured by $l_i$ and $h_i$ its premise.
\end{definition}

\paragraph{}This definition is quite formal. The idea is the following: if the eigenvariable $Z$ is part of the formula $A$ labelling edge $e$ and $l$ is the $\forall$ link associated with $Z$. Then, when $l$ is cut with a $\exists$ with associated formula $B$, then $Z$ will be replaced by $B$ in the formula labelling $e$. It may happen that $l$ is duplicated during the $cut$ elimination process, so $Z$ would be replaced by different formulae in different duplicates of $e$. However, if we fix a potential $P$, then there is at most one $\exists$ link $(m,Q)$ so that $(l,P)$ will be cut with $(m,Q)$ during cut elimination. We can compute this $(m,Q)$ (if there is one) using the paths of context semantics. There may be other eigenvariables inside $B$ (the formula associated to the $\exists$ link), but in the same way, we can compute by which formulae those eigenvariables will be replaced and so on. In the end we have a big tree which says $Y$ will be replaced by $B$, $Z$ will be replaced by $C$, but inside $B$ the eigenvariable $W$ will be replaced by the formula $D$, and inside $D$... A node of this tree may have no child (it corresponds to an edge without eigenvariables in its formula, or such that none of its eigenvariables will be replaced during $cut$-elimination). But, for the moment, we have no assurance that the tree is finite. If $ztree(e,P)$ is finite, it defines a substitution $\sigma$ such that $\beta(e)[\sigma]$ is exactly the label of $(e,P)$ in the normal form of the proof-net. Figure \ref{example_ztree} illustrates the definition $ztree$.

%\tikzset{external/remake next}
 \begin{figure}
  \centering
  \begin{tikzpicture}
    \node[par] (par) at (0,0) {};
    \nvar{\sep}{1.5cm}
    \node[tensor](tensor) at ($(par)+(\sep,0)$) {};
    \node[ax](axl1) at ($(par)+(0,1)$) {};
    \node[ax](axl2) at ($(tensor)+(0,1)$) {};
    \draw[ar, out=-160,in=120] (axl1) to node [type,left] {$A$} (par);
    \draw[ar, out=-20,in=120] (axl1) to (tensor);
    \draw[ar, out=-160,in= 60] (axl2) to (par);
    \draw[ar, out= -20,in= 60] (axl2) to node [type] {$A$} (tensor);
    \node[exists](expar) at ($(par)+(0,-1)$) {};
    \node[exists](extens)at ($(tensor)+(0,-1)$) {};
    \draw[ar] (par)--(expar) node [type,left] {$A \parr A^\perp$} node [midway, right] {$h$};
    \draw[ar] (tensor)--(extens) node [type] {$A^\perp \otimes A$}node [midway, left] {$g$};
    \node[forall](fatens) at ($(extens)+(3,0)$) {};
    \node[forall](fapar)  at ($(fatens)+(\sep,0)$) {};
    \node[cut] (cuttens) at ($(extens)!0.5!(fatens)+(0,-0.6)$) {};
    \node[cut] (cutpar) at ($(expar)!0.5!(fapar)+(0,-1.5)$) {};
    \draw[ar, out=-70,in=180,looseness=0.7] (extens) to node [type,below=0.22] {$\exists X. X \otimes X^\perp$} (cuttens);
    \draw[ar, out=-90,in=180] (expar) to  node [type,below left] {$\exists X. X^\perp \parr X$} (cutpar);
    \draw[ar, out=-110,in=  0,looseness=0.7] (fatens) to (cuttens);
    \draw[ar, out=-90,in=  0] (fapar)  to (cutpar);
    \node[exists] (exh) at ($(fapar)+(2,0)$) {};
    \node[par] (par) at ($(exh)+(0,0.9)$) {};
    \node[ax] (axm1) at ($(par)!0.5!(fatens)+(0,1.6)$) {};
    \node[ax] (axm2) at ($(par)!0.5!(fapar)+(0,1.3)$) {};
    \draw[ar, out=-170,in=90] (axm1) to node [type,left] {$Z_1^\perp \parr Z_1$}(fatens);
    \draw[ar, out= -20,in=120] (axm1) to (par);
    \draw[ar, out=-160,in=90] (axm2) to node [type,left]{$Z_2 \otimes Z_2^\perp$}(fapar);
    \draw[ar, out= 0,in=60] (axm2) to (par);
    \draw[ar] (par)--(exh) node [type] {$(Z_1 \otimes Z_1^\perp)\parr(Z_2^\perp \parr Z_2)$} node [midway, left] {$f$};
    \node[exists] (exb) at ($(exh)+(0,-1)$) {};
    \draw [ar] (exh) -- (exb) node [type] {$\exists Y.Y$};
    \node[forall] (fah) at ($(exh)+ (5,0)$) {};
    \node[forall] (fab) at ($(fah)+(0,-1)$) {};
    \node[cut] (cut) at ($(exb)!0.5!(fab)+(0,-0.6)$) {};
    \draw[ar, out=-60,in=180, looseness=0.5] (exb) to node [type,below] {$\exists X.\exists Y.Y$}(cut);
    \draw[ar, out=-120,in=  0,looseness=0.5] (fab) to node [type,below] {$\forall X.\forall Y.Y^\perp$} (cut);
    \draw[ar] (fah)--(fab) node [type] {$\forall Y.Y^\perp$};
    \node[etc] (etc) at ($(fah)+(0,1)$) {};
    \draw[ar] (etc)--(fah) node [type] {$Z^\perp$} node [left, midway] {\Large $e$};
  \end{tikzpicture}
  \caption[hey]{\label{example_ztree}$\begin{array}{rl}
     ztree(e,[~]) & = \left \{ \left(Z, (Z_1 \otimes Z_1^\perp) \parr (Z_2^\perp \parr Z_2), ztree(f,[~]) \right) \right \} \\
     & = \left \{ \left(Z, (Z_1 \otimes Z_1^\perp) \parr (Z_2^\perp \parr Z_2), \{(Z_1, A^\perp, ztree(g,[~])), (Z_2, A^\perp, ztree(h,[~])) \} \right) \right \} \\
     & = \left \{ \left(Z, (Z_1 \otimes Z_1^\perp) \parr (Z_2^\perp \parr Z_2), \{(Z_1, A^\perp, \varnothing), (Z_2, A^\perp, \varnothing) \} \right) \right \} 
    \end{array}$}
\end{figure}

\paragraph{} Let $\Theta$ be a finite substitution tree, the substitution induced by $\Theta$ is defined inductively by: If $\Theta=\{(Z_1,B_1,\Theta_1) , \cdots , (Z_k,B_k,\Theta_k)\}$ then the substitution induced by $\Theta$ is $\theta_\Theta = \left \{Z_1 \mapsto B_1[\theta_{\Theta_1}] , \cdots , Z_k \mapsto B_k [\theta_{\Theta_k}] \right \}$.

\paragraph{}We will first define a notion of underlying formulae on objects a bit different from contexts, because it is easier to make an induction definition on these. The definition of the underlying formulae of a context will rely on this first definition. You can notice that we define underlying formula{\textbf e}, it is plural because we consider all the formulae that may replace $\beta(e)$ (with $(e,P)$ the potential edge the context comes from) during $cut$-elimination, for all possible reduction strategies. This exhaustivity is achieved by the union on all truncations of $ztree$s in the $\exists$ and $\forall$ part of the definition. Later, in Subsection \ref{subsection_ztreefinite}, we willl prove that those underlying formulae ``converge'' towards a unique formula, which we will call {\em the} underlying formula of the context.

\label{def_betasetshit}Let $(e,P) \in Pot(E_G)$, $A \in \mathcal{F}_{LL_0}$ and $T,T'$ be two lists of trace elements (traces or empty list) and $p$ a polarity, $\beta_{\{\}}(A,e,P,T,T',p)$ is defined by induction on $|T|$ by:
\begin{itemize}
\item If $T= [~]$, then $\beta_{\{\}}(A,e,P,T,T',p)= \{ A \}$
\item If $A = B \otimes C$ and $T=U.\otimes_l$, then $\beta_{\{\}}(A,e,P,T,T',p)=\beta_{\{\}}(B,e,P,U,[\otimes_l]@T',p)$
\item If $A = B \otimes C$ and $T=U.\otimes_r$, then $\beta_{\{\}}(A,e,P,T,T',p)=\beta_{\{\}}(C,e,P,U,[\otimes_r]@T',p)$
\item If $A = B \parr C$ and $T=U.\parr_l$, then $\beta_{\{\}}(A,e,P,T,T',p)= \beta_{\{\}}(B,e,P,U,[\parr_l]@T',p)$
\item If $A = B \parr C$ and $T=U.\parr_r$, then $\beta_{\{\}}(A,e,P,T,T',p)= \beta_{\{\}}(C,e,P,U,[\parr_r]@T',p)$
\item If $A = \forall X. B$ and $T=U.\forall$, then
  \begin{itemize}
  \item If there exists a potential exists link $(l,Q)$ (whose associated formula will be named $C$) such that $(e,P,[\forall]@T',p) \rightsquigarrow^* (concl(l),Q,[\forall],-)$, then 
    \begin{equation*}
      \beta_{\{\}}(A,e,P,T,T',p)= \bigcup_{\Theta \vartriangleleft ztree(prem(l),Q)} \beta_{\{\}}(B[C[\theta_{\Theta}]_{/LLL}/X],e,P,U,[\forall]@T',p)
    \end{equation*}
    \item If such a potential link does not exist, $\beta_{\{\}}(A,e,P,T,T',p)= \beta_{\{\}}(B,e,P,U,[\forall]@T',p)$.
  \end{itemize}
\item If $A = \exists X. B$ and $T=U.\exists$, then
  \begin{itemize}
    \item If there exists a potential exists link $(l,Q)$ (whose associated formula will be named $C$) such that $(concl(l),Q,[\exists],+) \rightsquigarrow (e,P,[\exists]@T',p)$, then 
      \begin{equation*}\beta_{\{\}}(A,e,P,T,T',p)=  \bigcup_{\Theta \vartriangleleft ztree(prem(l),Q)}\beta_{\{\}}(B[C[\theta_\Theta]_{/LLL}/X],e,P,U,[\exists]@T',p) 
      \end{equation*}
    \item If such a potential link does not exist, $\beta_{\{\}}(A,e,P,T,T',p)= \beta_{\{\}}(B[Y/X],e,P,U,[\exists]@T',p)$ with $Y$ a variable which is not used yet.
  \end{itemize}
\item If $A = ! B$ and $T=U.!_t$, then $\beta_{\{\}}(A,e,P,T,T',p)=\beta_{\{\}}(B,e,P,U,[!_t]@T',p)$
\item If $A = ? B$ and $T=U.?_t$, then $\beta_{\{\}}(A,e,P,T,T',p)=\beta_{\{\}}(B,e,P,U,[?_t]@T',p)$
\item If $A = \S B$ and $T=U.\S$, then $\beta_{\{\}}(A,e,P,T,T',p)=\beta_{\{\}}(B,e,P,U,[\S]@T',p)$
\item Otherwise, it is undefined
\end{itemize}

\begin{definition}[underlying formulae]\label{def_underlyingformulae} Let $(e,P,T,p)$ be a context of $G$. Then, the {\em underlying formulae} of $(e,P,T,p)$, written $\beta_{\{\}}(e,P,T,p)$, is the set $\bigcup_{\Theta_e \vartriangleleft ztree(e,P)} \beta_{\{\}}(\beta(e)^p_{/LLL}[\theta_{\Theta_e}],e,P,T,[~],p)$.
\end{definition}

We would like to state that the set of underlying formulae of a context is stable by $\rightsquigarrow$ (if $C \rightsquigarrow D$, $\beta_{\{\}}(C)=\beta_{\{\}}(D)$. However, it is not true: Let us suppose that $C=(e,P,T,+) \rightsquigarrow (f,P,T.\forall,+)=D$, crossing a $\forall$ link downwards. Let $Z$ be the eigenvariable associated with this $\forall$ link and let us suppose that this $\forall$ link will be cut with an $\exists$ link whose associated formula is $B$. Then $Z$ is replaced by $B$ in every formula of $\beta_{\{\}}(D)$ while we can choose truncations of $ztree(e,P)$ which do not contain the $\{Z \mapsto B \}$ root. So, in this case we have $C \rightsquigarrow D$ and $\beta_{\{\}}(C) \supset \beta_{\{\}}(D)$. In the case of crossing a $\forall$ link upwards, we have $C \rightsquigarrow D$ and $\beta_{\{\}}(C) \subset \beta_{\{\}}(D)$. However, the difference is always on the formulae which use the smallest truncations. The larger truncations are possible for both contexts and lead to the same formulae. It is the meaning of Lemma \ref{lemma_underlying_step}.

\begin{lemma}
\label{lemma_underlying_step}
If $(e,P,T,p) \rightsquigarrow (f,Q,U,q)$, then for all $A \in \beta_{\{\}}(e,P,T,p) \cup \beta_{\{\}}(f,Q,U,q)$ there exists $B \in \beta_{\{\}}(e,P,T,p) \cap \beta_{\{\}}(f,Q,U,q)$ and a substitution $\sigma$ such that $B= A[\sigma]$.
\end{lemma}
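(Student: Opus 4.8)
The plan is to proceed by a case analysis on the $\rightsquigarrow$-rule used in $(e,P,T,p) \rightsquigarrow (f,Q,U,q)$, following the same spirit as the discussion preceding the lemma. For every rule, the underlying formulae of the two contexts are computed by the same recursive procedure $\beta_{\{\}}(\cdot,e,P,T,T',p)$, so most cases are symmetric and immediate: the multiplicative rules ($\parr$, $\otimes$), the modality rules ($!$, $?$, $\S$), and the $cut$/$ax$ rules do not modify the trace in an incompatible way and do not touch eigenvariables, so one checks directly that $\beta_{\{\}}(e,P,T,p) = \beta_{\{\}}(f,Q,U,q)$ and takes $B = A$, $\sigma = \emptyset$. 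For the $cut$ and axiom rules one uses that $\beta(e)^{p}_{/LLL} = \beta(f)^{q}_{/LLL}$ up to the polarity flip, which is exactly what the $(\cdot)^{p}$ convention was designed to absorb; and that $ztree(e,P)$ and $ztree(f,Q)$ coincide because $e$ and $f$ sit inside the same boxes, so the outer union over truncations is the same for both.

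The only genuinely interesting cases are crossing a $\forall$ link (downwards, i.e. from premise to conclusion) and crossing an $\exists$ link (downwards), together with their duals obtained from the ``$\rightsquigarrow$ has dual rules'' convention (crossing $\forall$ upwards and $\exists$ upwards). Consider $C=(e,P,T,+) \rightsquigarrow (f,P,T.\forall,+)=D$, crossing the $\forall$ link $l$ associated to eigenvariable $Z$. If $l$ is not cut with any $\exists$ link along the path (the ``no potential link'' branch of the definition of $\beta_{\{\}}$ at a $\forall$), then the recursion on $D$ simply strips the $\forall$ off and no substitution of $Z$ is forced, so again the two sets agree. If $l$ is cut with a potential $\exists$ link $(m,Q)$ carrying formula $B$, then in computing $\beta_{\{\}}(D)$ one is forced, at the corresponding $\forall$ step, to range over truncations $\Theta \vartriangleleft ztree(prem(m),Q)$ and substitute $B[\theta_\Theta]_{/LLL}$ for $Z$; whereas in $\beta_{\{\}}(C)$ the variable $Z$ is a free eigenvariable of $\beta(e)$ and the outer union already ranges over truncations $\Theta_e \vartriangleleft ztree(e,P)$, one of whose immediate children is exactly $(Z, B, ztree(prem(m),Q))$ — this uses that $E_{(e,P)}$ contains $Z$ precisely because such an $(m,Q)$ exists, and that $ztree$ is defined coinductively so the subtree below $Z$ is $ztree(prem(m),Q)$. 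Hence every formula in $\beta_{\{\}}(D)$ is already in $\beta_{\{\}}(C)$ (take the truncation of $ztree(e,P)$ that keeps the $Z$-branch and extends it by the given $\Theta$), giving $\beta_{\{\}}(D) \subseteq \beta_{\{\}}(C)$; and conversely any $A \in \beta_{\{\}}(C)$ coming from a truncation that does not expand (or does not include) the $Z$-branch is the $\sigma$-instance, with $\sigma$ extending $Z \mapsto$ (its value under the larger truncation), of some $B \in \beta_{\{\}}(C)\cap\beta_{\{\}}(D)$ obtained by expanding that branch. The $\exists$-downwards case is the mirror image: there $\beta_{\{\}}(C) \subseteq \beta_{\{\}}(D)$, since crossing $\exists$ from conclusion to premise in the net direction removes the obligation to instantiate, while the duals swap the two inclusions.

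The key bookkeeping point, which I would isolate as the crux, is the identification of the relevant subtree of $ztree(e,P)$ with $ztree(prem(l),Q)$ (resp. $ztree(prem(m),Q)$): one must check that the potential $(prem(l),Q)$ appearing as a child in Definition~\ref{def_ztree} is the very same potential $\exists$ link that the path from $(f_Z,[p_1;\cdots;p_{\partial(f_Z)}],\forall,+)$ reaches, which is forced by determinism of $\rightsquigarrow$ on the relevant fragment and by the fact that the potential of $e$ restricted to the boxes containing the $\forall$ link is a prefix of $P$. Given that identification, the construction of $B$ and $\sigma$ is mechanical: $B$ is the underlying formula obtained by choosing, in the outer union, the common refinement of the two truncations involved, and $\sigma$ collects the $\{Z \mapsto \cdot\}$ bindings that the smaller truncation omitted, read off from the larger one. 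I expect the main obstacle to be purely notational — keeping straight which of the two contexts carries the ``larger'' family of truncations in each of the four quantifier sub-cases, and verifying that $\theta_\Theta$ composed along nested truncations behaves associatively — rather than conceptual; the conceptual content is entirely the observation made just before the lemma statement that the discrepancy between $\beta_{\{\}}(C)$ and $\beta_{\{\}}(D)$ lives only on the ``small truncation'' formulae and disappears on large ones.
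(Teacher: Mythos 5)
Your proposal follows essentially the same route as the paper's proof: a case analysis on the $\rightsquigarrow$ rule, with set equality for the non-quantifier steps, and for the $\forall$/$\exists$ steps the identification of the $Z$-branch of $ztree(e,P)$ with $ztree(prem(l),R)$ plus the common-refinement (union) of the two truncations to produce the witness $B$, with $\sigma$ collecting the bindings the smaller truncation omitted. One wording caution: the lemma requires $B=A[\sigma]$ (the formula from the larger truncation is a substitution instance of the one from the smaller truncation), which is exactly what your crux paragraph describes, so avoid the phrase ``$A$ is the $\sigma$-instance of $B$'', which reads as the opposite direction.
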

\begin{prf}
We make a disjunction on the $\rightsquigarrow$ rule used.

If the rule is neither a $\exists$ nor a $\forall$ rule, then the proof is technical but straightforward. In these cases, we can prove that $\beta_{\{\}}(e,P,T,p) = \beta_{\{\}}(f,Q,U,q)$. Let us take a formula $A \in \beta_{\{\}}(e,P,T,p)$ and show that $A \in \beta_{\{\}}(f,Q,U,q)$. It is important to notice that $ztree(e,P)$ and $ztree(f,Q)$ are equal on the intersection of their domains. And those are the only one interesting, since the other variables (or their replacement) will be deleted by the pruning. So we can take $\Theta_f$ to be $\Theta_e$ restricted to the eigenvariables which are free in $\beta(f)$. 

We have to prove that $A \in \beta_{\{\}}(\beta(f)^q_{/LLL}[\theta_{\Theta_e}],f,Q,U,[~],q)$. The formulae $\beta(f)$ and $\beta(e)$ are almost the same. The only possible differences are a translation of the index of ground variables (in the axiom case) and the addition or deletion of the head connective. The first kind of difference is erased because the formula considered is not $\beta(e)$ but $\beta(e)_{/LLL}$. The second kind of difference is erased during the computations of $\beta_{\{\}}(\beta(f)^q_{/LLL}[\theta_{\Theta_e}],f,Q,U,[~],q)$ and $\beta_{\{\}}(\beta(e)^p_{/LLL}[\theta_{\Theta_e}],e,P,T,p)$. Indeed, we prune the syntactic tree of the formula using the trace, and when we delete the head connective, we also delete the right-most element of the trace. Notice that if $(e,P,T=(T_1.\forall @T_2),p)\rightsquigarrow (f,Q,U=(T_1.\forall @U_2),q)$, then when we compute $\beta_{\{\}}(e,P,T,p)$ and $\beta_{\{\}}(f,Q,U,q)$ we will have as inductive cases respictively $\beta_{\{\}}(B,e,P,T_1.\forall,T_2,p)$ and $\beta_{\{\}}(B,f,Q,T_1.\forall,U_2,q)$. The $\rightsquigarrow$ rules only depend on the right-most trace element so $(e,P,[\forall]@T_2,p) \rightsquigarrow (f,Q,[\forall]@U_2,q)$, so the $\exists$ potential link mentioned in the rule will be the same. Same for the $\exists$ induction cases. As a consequence $A \in \beta_{\{\},\Theta}(f,Q,U,q)$. The other inclusion is done similarly.

\begin{wrapfigure}{l}{2cm}
  \centering
  \begin{tikzpicture}
    \node[exists] (ex) at (0,0) {};
    \coordinate (exup) at ($(ex)+(0, 0.5)$); 
    \coordinate (exdo) at ($(ex)+(0,-0.5)$); 
    \draw [->] ($(exup)+(0,0.5)$)--(exup) node [edgename] {$e$} node [type] {$B[C/X]$};
    \draw (exup)-- (ex); 
    \draw [->] (ex)-- (exdo)  node [edgename] {$f$} node [type] {$\exists X.B $};
    \draw (exdo) -- ($(exdo)+(0,-0.5)$);
  \end{tikzpicture}
\end{wrapfigure}

In the case of crossing an $\exists$ link downward. $(e,P,T,+) \rightarrow (f,P,T.\exists,+)$. 

\begin{align*} 
  \beta_{\{\}}(f,Q,U,q) & = \bigcup_{\Theta_f \vartriangleleft ztree(f,Q)}\beta_{\{\}}(\beta(f)^q_{/LLL}[\theta_{\Theta_f}],f,Q,U,[~],q) \\
  & = \bigcup_{\Theta_f \vartriangleleft ztree(f,Q)} \beta_{\{\}}(\exists X.B_{/LLL} [\theta_{\Theta_f}],f,P,T.\exists,[~],+) \\
  \beta_{\{\}}(f,Q,U,q) & = \bigcup_{\Theta_f \vartriangleleft ztree(f,Q)} \bigcup_{\Theta_e \vartriangleleft ztree(e,P)}\beta_{\{\}}(B_{/LLL}[\theta_{\Theta_f}][C_{/LLL}[\theta_{\Theta_e}]/X],f,P,T,[\exists],+)\\
\end{align*}

\begin{align*} 
  \beta_{\{\}}(e,P,T,p) & = \bigcup_{\Theta_e \vartriangleleft ztree(e,P)}\beta_{\{\}}(\beta(e)^p_{/LLL}[\theta_{\Theta_e}],e,P,T,[~],p) \\
  & = \bigcup_{\Theta_e \vartriangleleft ztree(e,P)} \beta_{\{\}}(B[C/X]_{/LLL} [\theta_{\Theta_e}],e,P,T,[~],+) \\
\end{align*}
The only difference between $ztree(e,P)$ and $ztree(f,Q)$ is that the domain of the former may be bigger because $C$ may contain some eigenvariables. So the truncations of $ztree(f,Q)$ all are truncations of $ztree(e,P)$. The only difference between $\beta_{\{\}}(e,P,T,p)$ and $\beta_{\{\}}(f,Q,U,q)$ is that in the latter we can make a different truncation for the branch corresponding to an eigenvariable $Z$ in $\Theta_e$ (for occurences of $Z$ in $C$) and in $\Theta_f$ (for occurences of $Z$ in $B$). However, even if we use this possibility to pick a formula $A$ in $\beta_{\{\}}(f,Q,U,q)$ which is not in $\beta_{\{\}}(e,P,T,p)$, we can define $\Theta$ as the graph union of $\Theta_e$ and $\Theta_f$. Then  $B=\bigcup_{\Theta \vartriangleleft ztree(e,P)} \beta_{\{\}}(B[C/X]_{/LLL} [\theta_{\Theta}],e,P,T,[~],+)$ is in the intersection of the two sets and $B$ can be obtained from $A$ by a substitution.

\begin{wrapfigure}{l}{2cm}
  \centering
  \begin{tikzpicture}
    \node[forall] (ex) at (0,0) {};
    \coordinate (exup) at ($(ex)+(0, 0.5)$); 
    \coordinate (exdo) at ($(ex)+(0,-0.5)$); 
    \draw [->] ($(exup)+(0,0.5)$)--(exup) node [edgename] {$e$} node [type] {$B[Z/X]$};
    \draw (exup)-- (ex); 
    \draw [->] (ex)-- (exdo)  node [edgename] {$f$} node [type] {$\forall X.B $};
    \draw (exdo) -- ($(exdo)+(0,-0.5)$);
  \end{tikzpicture}
\end{wrapfigure}

  In the case of crossing a $\forall$ link downward. $(e,P,T,+) \rightsquigarrow (f,P,T.\forall,+)$. There are two cases, whether the potential $\forall$ link will be cut with an $\exists$ link or not. However, only the first case is interesting. In the case where the link will not be cut, we can use the same argumentation as in the multiplicative and exponential links. Thus, here we will supose that there exists a potential $\exists$ link $(l,R)$ such that $(f,P,T.\forall, +) \rightsquigarrow (concl(l),R,\forall, -)$. We will denote by $C$ the formula associated to $l$.

  \begin{align*} 
    \beta_{\{\}}(f,Q,U,q) & = \bigcup_{\Theta_f \vartriangleleft ztree(f,Q)}\beta_{\{\}}(\beta(f)^q_{/LLL}[\theta_{\Theta_f}],f,Q,U,[~],q) \\
    & = \bigcup_{\Theta_f \vartriangleleft ztree(f,Q)} \beta_{\{\}}(\forall X.B_{/LLL} [\theta_{\Theta_f}],f,P,T.\forall,[~],+) \\
    \beta_{\{\}}(f,Q,U,q) & = \bigcup_{\Theta_f \vartriangleleft ztree(f,Q)} \bigcup_{\Theta_l \vartriangleleft ztree(prem(l),R)}\beta_{\{\}}(B_{/LLL}[\theta_{\Theta_f}][C_{/LLL}[\theta_{\Theta_l}]/X],f,P,T,[\forall],+)\\
  \end{align*}

  \begin{align*} 
    \beta_{\{\}}(e,P,T,p) & = \bigcup_{\Theta_e \vartriangleleft ztree(e,P)}\beta_{\{\}}(\beta(e)^p_{/LLL}[\theta_{\Theta_e}],e,P,T,[~],p) \\
    & = \bigcup_{\Theta_e \vartriangleleft ztree(f,Q) \cup \{(Z,C,ztree(prem(l),R))\} } \beta_{\{\}}(B[Z/X]_{/LLL} [\theta_{\Theta_e}],e,P,T,[~],+) \\
  \end{align*}
  Here, the difference between $\beta_{\{\}}(e,P,T,p)$ and $\beta_{\{\}}(f,Q,U,q)$ is that in the latter the replacement of $X$ by $C$ is always done, whereas in the former, the truncation of $ztree(e,P)$ may delete this substitution. However, if we take a formula $A$ of $\beta_{\{\}}(e,P,T,p)$, with a substitution tree $\Theta_e$ which does not make a substitution on $Z$, we can define the formula $B$ obtained by extending $\Theta_e$ in $\Theta_e'$ which does not cut the edge $(Z,C)$ but its successors. $B$ belongs to both sets and $B$ can be obtained from $A$ by a substitution.

  The cases of crossing an $\exists$ or a $\forall$ link upwards are quite similar to the downwards cases, the roles of $e$ and $f$ are just swapped.
\qed
\end{prf}

\begin{lemma}
  \label{lemma_underlying_mapsto}
  If $C \mapsto D$ and $\beta_{\{\}}(C) \neq \varnothing$ then $\beta_{\{\}}(D) \neq \varnothing$.
\end{lemma}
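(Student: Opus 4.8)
The plan is to use the decomposition of $\mapsto$ into $\rightsquigarrow$ and $\hookrightarrow$ and to treat the two relations separately. For a $\rightsquigarrow$-step the statement is an immediate consequence of Lemma~\ref{lemma_underlying_step}: if $C \rightsquigarrow D$ and $\beta_{\{\}}(C) \neq \varnothing$, I would pick any $A \in \beta_{\{\}}(C)$, regard it as an element of $\beta_{\{\}}(C) \cup \beta_{\{\}}(D)$, and invoke the lemma to obtain a $B \in \beta_{\{\}}(C) \cap \beta_{\{\}}(D)$ (namely $B = A[\sigma]$ for some substitution $\sigma$). In particular $\beta_{\{\}}(D) \neq \varnothing$. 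Nothing else is needed in this case.

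It remains to handle a $\hookrightarrow$-step, and this is the only place that requires a direct argument. Up to duality such a step has the form $(f,P,[!_t],-) \hookrightarrow (h,P,[!_t],+)$, where $f = \sigma_i(B)$ is the conclusion of an auxiliary door and $h = \sigma(B)$ the conclusion of the principal door of a common box $B$; hence $\beta(f) = ?A_i$, $\beta(h) = !C_B$ for suitable formulae $A_i, C_B$, and $\partial(f) = \partial(h)$. First I would unfold the definition of $\beta_{\{\}}$ on the source context: $\beta(f)^- = !(A_i^\perp)$ is already guarded by a $!$, and the trace $[!_t]$ is a single $!$-element, so the clause for $A = !B$, $T = U.!_t$ applies and leaves an empty residual trace $U = [~]$; therefore $\beta_{\{\}}(f,P,[!_t],-) = \Set{(A_i^\perp)_{/LLL}[\theta_\Theta]}{\Theta \vartriangleleft ztree(f,P)}$. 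Since this set is nonempty by hypothesis, $ztree(f,P)$ must be defined, which forces $|P| = \partial(f)$.

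Then I would run exactly the same computation on the target: from $\beta(h)^+ = !C_B$ and the one-element trace $[!_t]$ one gets $\beta_{\{\}}(h,P,[!_t],+) = \Set{(C_B)_{/LLL}[\theta_\Theta]}{\Theta \vartriangleleft ztree(h,P)}$. Since $\partial(h) = \partial(f)$ we have $|P| = \partial(h)$, so $ztree(h,P)$ is defined; the empty truncation $\varnothing \vartriangleleft ztree(h,P)$ is always available, and it induces the identity substitution, so $(C_B)_{/LLL} \in \beta_{\{\}}(D)$ and $\beta_{\{\}}(D) \neq \varnothing$. The dual rule $(h,P,[?_t],-) \hookrightarrow (f,P,[?_t],+)$ is symmetric, using $\beta(h)^- = ?(C_B^\perp)$ and $\beta(f)^+ = ?A_i$.

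The only delicate point — and it is minor — is the $\hookrightarrow$ case: one must check that the conclusions of the two doors carry formulae of exactly the modal shape demanded by the pruning clause for a single $!$- (resp. $?$-) trace element, and that $ztree$ is defined at the target potential. The latter is precisely where the hypothesis $\beta_{\{\}}(C) \neq \varnothing$ is used, via $|P| = \partial(f) = \partial(h)$; without it the claim would be vacuous rather than false. Everything else reduces to Lemma~\ref{lemma_underlying_step} plus routine unfolding of $\beta_{\{\}}$.
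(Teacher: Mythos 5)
Your proof is correct and follows essentially the same route as the paper: the $\rightsquigarrow$ case is discharged by Lemma~\ref{lemma_underlying_step}, and the $\hookrightarrow$ case is handled by unfolding $\beta_{\{\}}$ at the principal door, where the single $\oc_t$ trace element matches the $\oc$-formula and the empty truncation of $ztree(\sigma(B),P)$ yields an element of $\beta_{\{\}}(D)$. The only cosmetic differences are your extra care that $ztree$ is defined at the target (via $\partial(\sigma_i(B))=\partial(\sigma(B))$) and your ``dual'' $\hookrightarrow$ case, which is superfluous since the paper only defines the jump from an auxiliary door to the principal door with a $\oc_t$ trace element.
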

\begin{proof}
  If it is a $\rightsquigarrow$ step, then we can use Lemma \ref{lemma_underlying_step}. If it is a $\hookrightarrow$ step, $B=(\sigma_i(B),P,[\oc_t],-) \hookrightarrow (\sigma(B),P,[\oc_t],+)=D$. Then 
  \begin{align*}
    \beta_{\{\}}(C) &= \bigcup_{\Theta_e  \vartriangleleft ztree(prem(l),Q)} \beta_{\{\}}(\beta(\sigma(B))_{/LLL}^+[\theta_{\Theta_e}],\sigma(B),P,[\oc_t],[~],+)\\
    \beta_{\{\}}(C) &\supset \beta_{\{\}}(\beta(\sigma(B))_{/LLL}[\theta_{\varnothing}],\sigma(B),P,[\oc_t],[~],+)\\
    \beta_{\{\}}(C) &\supset \beta_{\{\}}(\oc A,\sigma(B),P,[\oc_t],[~],+)\\    
    \beta_{\{\}}(C) &\supset \beta_{\{\}}(A,\sigma(B),P,[],[\oc_t],+)\\    
    \beta_{\{\}}(C) &\supset \{A\}
  \end{align*}
\end{proof}

\paragraph{}This lemma implies that if we begin a path by a context whose trace corresponds to its formula (for example $(\sigma(B), P, !_t, +)$), the path will never be blocked by a mismatch between the top element of the trace and the link encountered. For instance, we will never reach a context $(e,Q,T.\forall,-)$ with $\beta(e)=A \otimes B$. This lemma will allow us to prove the corollary \ref{charac_final_contexts}  which says that our copies are exactly the copies in the Dal Lago's definition. It will be important to prove the Dal Lago's weight theorem in our framework. But this is not the only purpose of this lemma. It will be used to prove the strong bound for $L^4_0$ in Section \ref{section_l40}.

\subsubsection{Our copies are the same as those of Dal Lago's}
Here, we will show that our definition of $\mapsto$-copies matches the definition of the copies of Dal Lago.

\begin{lemma}[Subtree property~\cite{lago2006context}]\label{subtree_property}
  For any exponential signature $t$ and any $u \blacktriangleleft t$, there exists $v$ such that $t \sqsubseteq v$ and: If $\forall t \sqsubseteq w, \exists f,Q,T,q, (e,P,!_w,+) \mapsto^* (f,Q,[!_{\sige}]@T,q)$, Then there is $(g,R)$ such that $(e,P,!_v,+) \mapsto^* (g,R,!_u,-)$.
\end{lemma}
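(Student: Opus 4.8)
My plan is to prove Lemma~\ref{subtree_property} by induction on the structure of the exponential signature $t$ (equivalently, on the depth of the subtree $u$ inside $t$). The statement we want is: given $t$ and $u \blacktriangleleft t$, we produce a witness $v$ with $t \sqsubseteq v$ such that whenever all simplifications $w$ of $t$ lead, from $(e,P,!_w,+)$, to a context of the shape $(f,Q,[!_{\sige}]@T,q)$, the particular path started from $v$ reaches a context sitting exactly on the subtree-occurrence, i.e.\ $(g,R,!_u,-)$ for some $(g,R)$. The key intuition, which matches the informal discussion around the copy contexts and the $\sqsubseteq$ relation, is that a path run from a larger signature $w \sqsupseteq t$ behaves like the path run from $t$ except that when $t$ would be ``exhausted'' (become $\sige$ at a $?D$ or $?C$ or $!P$ crossing), the larger signature still has material left to continue; the leftover is governed precisely by where $u$ sits inside $t$.

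\textbf{The induction.} The base case is $u = t$: then take $v = t$ itself; $t \sqsubseteq t$ holds, and the hypothesis $\forall t \sqsubseteq w, (e,P,!_w,+) \mapsto^* (f,Q,[!_{\sige}]@T,q)$ applied to $w = t$ already gives a path; I need to observe that along this path there is necessarily a last context of the form $(g,R,!_{\sige},-) = (g,R,!_u,-)$, because the trace-element $!_{\sige}$ appearing in $(f,Q,[!_{\sige}]@T,q)$ must have been produced by the crossing of some structure (an axiom/cut turning $?_{\sige}$ into $!_{\sige}$, or an auxiliary door), and right before that production the context was of the form $(g,R,!_{\sige},-)$. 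For the inductive step, write $t = \mathtt{l}(t')$ (the cases $\mathtt{r}$, $\mathtt{p}$, and $\mathtt{n}(t',w_0)$ are analogous, using the matching clauses of $\sqsubseteq$). Then $u \blacktriangleleft t'$ unless $u=t$. We run the path from $(e,P,!_w,+)$ for $w \sqsupseteq t$; by Lemma~\ref{lemma_underlying_mapsto} the path is never blocked by a trace/formula mismatch, so it proceeds; the first time the head trace element $!_{\mathtt{l}(\cdots)}$ is ``consumed'' at a $?C$-link (via the rule $(e',P',T.?_{\sigl(s)},+) \rightsquigarrow \dots$ read in reverse, or its dual), the residual signature becomes $t'$ (resp.\ the residual of $w$), and we are in a context $(e'',P'',!_{t'},q')$ from which we can invoke the induction hypothesis on $(t', u, e'', P'')$ to obtain $v'$ with $t' \sqsubseteq v'$; then set $v = \mathtt{l}(v')$, which satisfies $t \sqsubseteq v$ by the defining clause $\sigl(t') \sqsubseteq \sigl(v') \Leftrightarrow t' \sqsubseteq v'$.

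\textbf{The main obstacle.} The delicate point is not the bookkeeping of which $\sqsubseteq$-clause to use, but showing that the hypothesis ``$\forall t \sqsubseteq w$, the path from $!_w$ reaches $[!_{\sige}]@T$'' is correctly \emph{inherited} by the residual problem after one consumption step: I must argue that for every $t' \sqsubseteq w'$, the residual context $(e'',P'',!_{w'},q')$ reached after the same initial path segment indeed has a continuation ending in $[!_{\sige}]@T'$. This requires that the initial path segment (up to the first consumption of the head exponential signature) depends only on the \emph{context around $e$} and not on the particular signature carried, which is exactly the content of the cut-simulation clauses in Definition~\ref{def_cutsimulation} (the path before the signature is touched is ``signature-blind''), together with the observation that $t \sqsubseteq w$ forces the initial segments to agree. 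I would isolate this as a sub-claim: \emph{if $t \sqsubseteq w$ and $(e,P,!_t,+) \mapsto^* (e'',P'',!_{t'},q')$ is the prefix up to the first point where the leading signature strictly shrinks, then $(e,P,!_w,+)$ follows the same prefix to $(e'',P'',!_{w'},q')$ with $t' \sqsubseteq w'$}. Once this sub-claim is in hand, the induction closes cleanly, and the final $(g,R,!_u,-)$ is the context produced by the innermost application of the base case. The termination/finiteness of the path (so that ``the first consumption step'' is well-defined) follows because each $\mapsto$ step along such a copy-path strictly simplifies either the position or the carried signature, which is already implicit in the acyclicity discussion preceding Definition~\ref{def_tg}; if one wants to avoid invoking acyclicity here, the induction on $|t|$ bounds the number of consumption steps directly.
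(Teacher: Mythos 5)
The paper itself states Lemma~\ref{subtree_property} without proof (it is imported from Dal Lago), so your argument can only be judged on its own terms; as written it has a genuine gap exactly where the lemma is non-trivial. You treat the four constructors as interchangeable (``the cases $\sigr$, $\sigp$, $\sign(t',w_0)$ are analogous''), but the $\sign$ case is not analogous, and your recipe ``$v =$ same head constructor applied to the recursive witness'' fails there. Suppose the branch from the root of $t$ to $u$ passes through a node $\sign(t_1,t_2)$ with $u$ inside the \emph{right} component $t_2$. When the leading $\oc_{\sign(t_1,t_2)}$ is consumed at a $\wn N$ link, the trace becomes $[\oc_{t_1};\oc_{t_2}]$: the piece containing $u$ is now the rightmost element, it is later pushed into the potential at an auxiliary door, and the path started from your $v$ never carries $u$ as a single trace element, so no context $(g,R,[\oc_u],-)$ is reached. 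The existential over simplifications $v$ with $t \sqsubseteq v$ is in the statement precisely for this situation: the witness must replace that node using the clause $\sign(t_1,t_2)\sqsubseteq\sigp(t_2')$, because a leading $\oc_{\sigp(t_2)}$ crosses the $\wn N$ link and becomes the single element $\oc_{t_2}$, after which the path tracks $t_2$ directly. Any proof must make this case explicit; moreover the left-branch subcase ($u$ inside $t_1$) also differs from the $\sigl/\sigr$ cases: after the split the leftmost $\oc_{t_1}$ is untouched until the trace shrinks back to a singleton, so the recursion point is ``the first return to a singleton trace'', not ``the first $\wn C$ consumption'', and the inheritance of the hypothesis across that segment uses the fact that simplifications which keep the $\sign$ node must keep $t_2$ unchanged, so all of them follow the same segment.

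Two further repairs are needed even in the cases you do treat. In your base case $u=t$, you claim the path necessarily contains ``a last context of the form $(g,R,!_{\sige},-)=(g,R,!_u,-)$'', but $u=t$ is in general not $\sige$; what you need is a context $(g,R,[\oc_t],-)$, and the correct justification is that every $\mapsto$ rule rewrites only the rightmost trace element, so the leading signature can change only when the trace is a singleton with negative polarity; since the hypothesis (applied to $w=t$) forces the leading element to eventually become $\oc_{\sige}$, the first such modification (when $t\neq\sige$) happens at a context of exactly the required shape. Finally, your ``signature-blind prefix'' sub-claim is the right idea but is not the content of Definition~\ref{def_cutsimulation}, which only axiomatizes which relations count as cut simulations; what you need is the kind of statement proved in Lemmas~\ref{lemma_skeletons_cons} and~\ref{lemma_prec_left_right}: the rules inspect only the top constructors of the signatures they consume, so two signatures whose consumed skeletons agree follow the same edges. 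Note the prefix is not literally independent of the signature (a $\wn C$ crossing branches on $\sigl$ versus $\sigr$), so the claim must be phrased for $w$ with $t\sqsubseteq w$, where simplification leaves the consumed constructors unchanged except for $\sign\mapsto\sigp$ --- which is again the case your sketch does not address.
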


\begin{definition}\label{def_skeleton}
  The {\em skeleton} of a trace element is defined by: the skeleton of $!_t$ is $!$, the skeleton of $?_t$ is $?$, the skeleton of other trace elements are themselves. We define the skeleton of a trace as the list of the skeletons of its trace elements.
\end{definition}
\begin{lemma}
\label{lemma_skeletons_cons}
  If $(e_0,P_0,T_0,p_0) \mapsto (e_1,P_1,T_1,p_1) \cdots \mapsto (e_n,P_n,T_n,p_n)$, $(e_0,Q_0,U_0,p_0) \mapsto (e_1, Q_1, U_1,p_1) \cdots \mapsto (e_n,Q_n, U_n,p_n)$ and the skeletons of $T_1$ and $U_1$ are equal, then the skeletons of $T_n$ and $U_n$ are equal.
\end{lemma}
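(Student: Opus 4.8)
Write $\mathrm{sk}(T)$ for the skeleton of a trace $T$ (Definition~\ref{def_skeleton}). The plan is to prove the statement by induction on $n$, isolating a purely local claim: the skeleton of the new trace produced by a $\mapsto$ step is a function of the skeleton of the old trace and of the ``crossing data'' of that step (which link is crossed, with which source and target edges and which polarities), and this crossing data is shared by the two paths. Note that a skeleton has the same length as the trace it abstracts, so equal skeletons have equal length.

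For $n=1$ there is nothing to prove, this being exactly the hypothesis. For the inductive step, apply the induction hypothesis to the two paths truncated to length $n-1$: the skeletons of $T_{n-1}$ and $U_{n-1}$ agree. It then remains to treat the last step, i.e.\ to show that if $(e,P,T,p)\mapsto(e',P',T',p')$ and $(e,Q,U,p)\mapsto(e',Q',U',p')$ — the source edge $e$ and polarity $p$, and the target edge $e'$ and polarity $p'$, being common to the two path hypotheses — and $\mathrm{sk}(T)=\mathrm{sk}(U)$, then $\mathrm{sk}(T')=\mathrm{sk}(U')$. First, the two steps are realised by the same rule of Figures~\ref{multiplicative_context_semantic} and~\ref{exponential_context_semantic} (or the dual of such a $\rightsquigarrow$-rule). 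Each rule crosses a fixed link in a fixed direction, entering by one edge with a prescribed polarity and leaving by another with a prescribed polarity, so the tuple $(e,p,e',p')$ determines the rule outright, except at the few links carrying two rules for a given direction: the $?C$ link read upward from its conclusion to a premise — where the two rules go to the two distinct premises, so the target edge $e'$ settles which — and the $?N$ link read downward, together with a $?P$-conclusion read with polarity $-$ (where the jump $\hookrightarrow$ competes with the dual of the $?P$-rule) — where the competing rules are distinguished by the length of the trace, hence by its skeleton, which the induction hypothesis shares. So the applied rule is a function of $(e,p,e',p')$ and $\mathrm{sk}(T)$, all common to the two steps.

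Second, for each such rule $R$, a direct inspection of the two figures shows that $\mathrm{sk}(T')$ is a (partial) function $g_R$ of $\mathrm{sk}(T)$: the rule either leaves the trace unchanged ($cut$, $ax$, $\hookrightarrow$), appends one trace element whose skeleton is fixed (a $\parr_x$, $\otimes_x$, $\forall$, $\exists$, $\S$, a $?$, or a $!$), deletes the last trace element, replaces a top $?_t$ by $?_{\sigl(t)}$ or $?_{\sigr(t)}$ — or a top $!_t$ by the analogous $!$ in the dual rule — without altering the skeleton, contracts a top $?_{t_1}.?_{t_2}$ into $?_{\sign(t_1,t_2)}$, deleting one $?$ from the skeleton (and dually turns a top $!_{\sign(t_1,t_2)}$ into $!_{t_1}.!_{t_2}$), or replaces a top $?_t$ by $?_{\sigp(t)}$ leaving the skeleton fixed (and dually). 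In no case does the effect on the skeleton mention a potential, an exponential signature, or any content of the trace beyond its skeleton. Combining the two observations, the last steps of the two paths use the same rule $R$, whence $\mathrm{sk}(T')=g_R(\mathrm{sk}(T))=g_R(\mathrm{sk}(U))=\mathrm{sk}(U')$, closing the induction.

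The part I expect to require the most care is the justification that the applied rule is pinned down by $(e,p,e',p')$ and $\mathrm{sk}(T)$ in the $?C$-upward, $?N$, and $?P$-with-polarity-$-$ cases, where a priori one could fear that the choice of rule — and therefore the skeleton transformation — depends on the exponential signatures carried by the trace. The resolution is that there the target edge (for $?C$ and for $\hookrightarrow$ versus the dual $?P$-rule) or the trace length (for $?N$) already disambiguates, and both are data shared between the two paths; in the $?C$-upward case the two candidate rules have the same skeleton-effect anyway. Everything else is a routine, if slightly tedious, pass over the rule tables.
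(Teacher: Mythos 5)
Your overall strategy is the same as the paper's: the paper's proof is a one-liner ("it is enough to prove the case $n=1$, which can be done by analysis of all the $\mapsto$ rules"), and you carry out exactly that reduction plus the rule-by-rule inspection, made explicit through the claim that the rule fired at each step is a function of the shared data $(e,p,e',p')$ and the skeleton, and that each rule's effect on the skeleton depends only on the skeleton.

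However, your list of ambiguous links omits the one case where that determinacy claim actually fails: the $\wn N$ link crossed \emph{upwards}. The dual of the digging rule, $(h,P,T.\oc_{\sign(t_1,t_2)},-)\rightsquigarrow(g,P,T.\oc_{t_1}.\oc_{t_2},-)$, and the dual of the $\sigp$-rule, $(h,P,[\oc_{\sigp(t)}],-)\rightsquigarrow(g,P,[\oc_t],-)$, both start at the conclusion $h$ of the $\wn N$ link with polarity $-$ and both end at its premise $g$ with polarity $-$. When the trace is a singleton $[\oc_x]$, which rule applies is decided by the root constructor of $x$ ($\sign$ versus $\sigp$), a piece of information the skeleton does not record; and, unlike the $\wn C$-upward case, the two rules have \emph{different} skeleton effects (one appends a $\oc$ to the skeleton, the other leaves it unchanged). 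So the tuple $(e,p,e',p')$ together with $\mathrm{sk}(T)$ does not pin down the rule, and your step-level claim $\mathrm{sk}(T')=g_R(\mathrm{sk}(T))$ with $R$ determined by shared data breaks precisely there: two contexts at $h$ with traces $[\oc_{\sign(t_1,t_2)}]$ and $[\oc_{\sigp(u)}]$ have equal skeletons, take steps to the same edge $g$ with the same polarity, and end with skeletons $[\oc;\oc]$ and $[\oc]$. To close this case you must either show that such a divergence cannot occur under the hypotheses in force when the lemma is invoked (the paths the paper applies it to start from traces differing only in the leftmost exponential signature), or restrict/handle the $\sign$-versus-$\sigp$ singleton situation explicitly; the paper's own terse proof does not spell this out either, but your write-up, by committing to the determinacy claim, makes the hole visible and needs to address it.
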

\begin{proof}
  It is enough to prove for the case $n=1$, which can be done by analysis of all the $\mapsto$ rules. 
\end{proof}

\begin{lemma}\label{lemma_prec_left_right}
  If $t_n \preccurlyeq^v t_n'$, $T_n$ has the same skeleton as $U_n$ and $(e_1,P_1,[!_{t_1}]@T_1,p_1) \mapsto (e_2,P_2,[!_{t_2}]@T_2,p_2) \mapsto \cdots \mapsto (e_n,P_n,[!_{t_n}]@T_n,p_n)$ then there exists $t_1 \preccurlyeq^v t_1'$ such that $(e_1,P_1,[!_{t'_1}]@T_1,p_1) \mapsto \cdots \mapsto (e_n,P_n,[!_{t'_n}]@T_n,p_n)$.

  If $t_n' \preccurlyeq^v t_n$, $T_n$ has the same skeleton as $U_n$ and $(e_1,P_1,[!_{t_1}]@T_1,p_1) \mapsto (e_2,P_2,[!_{t_2}]@T_2,p_2) \mapsto \cdots \mapsto (e_n,P_n,[!_{t_n}]@T_n,p_n)$ then there exists $t_1' \preccurlyeq^v t_1$ such that $(e_1,P_1,[!_{t'_1}]@T_1,p_1) \mapsto \cdots \mapsto (e_n,P_n,[!_{t'_n}]@T_n,p_n)$.
\end{lemma}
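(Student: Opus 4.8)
The statement is symmetric in the two halves (swap the roles of $t_n \preccurlyeq^v t_n'$ and $t_n' \preccurlyeq^v t_n$), so I will prove the first half and note that the second is identical after exchanging primed and unprimed signatures. The natural proof is by induction on the length $n$ of the path. The base case $n=1$ is trivial: take $t_1' = t_n' = t_n'$ (here $t_1 = t_n$). For the inductive step I will assume the claim for paths of length $n-1$ and analyse the last step $(e_{n-1},P_{n-1},[!_{t_{n-1}}]@T_{n-1},p_{n-1}) \mapsto (e_n,P_n,[!_{t_n}]@T_n,p_n)$ by a case distinction on which $\mapsto$-rule is used, together with a parallel analysis of what that same rule does when the first trace element is replaced by $!_{t'_{n-1}}$ for a suitable $t'_{n-1} \preccurlyeq^v t'_n$ obtained from the induction hypothesis applied to the tail of the path.

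**Key steps.** First I would observe that in almost every $\mapsto$-rule the leading trace element $!_t$ is either untouched (all multiplicative, quantifier, $?D$, $?W$, $?C$ acting on trace elements other than the head, box-door-down rules acting on edges other than the head edge) or its exponential signature $t$ is modified in a way that is \emph{monotone} with respect to $\preccurlyeq^v$. Concretely, the rules that can act on the head $!_t$ are: crossing a $!P$-door upward (rule $(g,P.t,T,+)\rightsquigarrow(h,P,T.!_t,+)$, which demotes $!_t$ from the head into the trace — but then $[!_{t_{n}}]@T_n$ no longer has $t_n$ at the head, so this case cannot be the last step of a path whose $n$-th context still has shape $[!_{t_n}]@T_n$ with $t_n$ the distinguished signature; I need to be slightly careful here and rely on the fact that the relevant paths keep a $!$-trace-element at the head, as guaranteed by Lemma~\ref{canonical_context_stable} and the discussion of copy contexts), and the $\hookrightarrow$ rule $(f,P,!_t,-)\hookrightarrow(h,P,!_t,+)$ which leaves $t$ unchanged. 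So the only genuinely interesting situation is when the head $!_t$ participates as the whole trace in a $\hookrightarrow$ step or when we are at the very start of the path. In all the "transparent" cases, $t_{n-1} = t_n$ and $t'_{n-1} := t'_n$ works directly; in the $\hookrightarrow$ case likewise $t$ is preserved verbatim on both sides. The role of the hypothesis "$T_n$ has the same skeleton as $U_n$" is invoked through Lemma~\ref{lemma_skeletons_cons}: it guarantees that when I run the modified path with $t'_i$ instead of $t_i$, the skeletons of the traces stay aligned step-by-step, so that at each stage the \emph{same} $\mapsto$-rule is applicable to the modified context as to the original one — the rules only inspect skeletons and the head exponential signature, never deeper signature structure except through $\preccurlyeq^v$-monotone operations.

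**Main obstacle.** The delicate point is the bookkeeping of $\preccurlyeq^v$ through the inductive substructure of exponential signatures when a $?C$-rule or a $?N$-rule rewrites the \emph{head} exponential signature: e.g. a context with head $!_{\sigl(t)}$ (coming from the dual of $(e,P,T.?_t,+)\rightsquigarrow(g,P,T.?_{\sigl(t)},+)$) unfolds one constructor of the signature. Here I must check that if $t_{n-1}$ has leading constructor $\sigl$ and $t_{n-1}\preccurlyeq^{v} t_{n-1}'$ — equivalently $t_{n-1}' = \sigl(s')$ with (inner) $\preccurlyeq^v$ — then peeling off the $\sigl$ on both sides preserves the relation and preserves applicability of the rule; this is exactly the recursive clause in the definition of $\preccurlyeq^v$, and similarly the $\sign(-,w)$ clause handles the $?N$-rule (the \emph{second} component $w$ being required equal is precisely what makes $?N$ work, since the $n(t_1,t_2)$ construction in that rule keeps $t_2$ rigid). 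I would organise this as a short sub-lemma: for each exponential rule, if the rule sends head-signature $a$ to head-signature $b$ and $a \preccurlyeq^v a'$, then there is $b'$ with $b \preccurlyeq^v b'$ and the rule also sends $a'$ to $b'$ — and dually. Once that sub-lemma is in place the main induction is purely mechanical. I expect the write-up to be long but routine; the genuine content is entirely in the monotonicity sub-lemma and in correctly invoking Lemma~\ref{lemma_skeletons_cons} to keep the two runs synchronised.
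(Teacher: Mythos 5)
Your proposal is correct and is essentially the paper's own (very terse) proof: induction on the length of the path plus a case-by-case analysis of the $\mapsto$ rules, the whole content being that each rule only inspects the surface of the head exponential signature, so peeling $\sigl$, $\sigr$, $\sigp$ or splitting $\sign(\cdot,w)$ with the second component $w$ kept rigid (exactly the clause of $\preccurlyeq^v$ you single out) transfers the relation step by step in both directions, while all other rules leave the head untouched. One small tidy-up: the hypothesis sits at index $n$, so either transfer it through the last step and then apply the induction hypothesis to the initial segment, or apply the induction hypothesis to the tail and then handle the first step — your inductive-step description currently mixes these two decompositions (and the worry about the $!P$-door rule is spurious, since that rule appends a trace element on the right and never touches the leftmost $!$).
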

\begin{proof}
  We can prove the lemma by induction on $n$ and case-by-case analysis of the $\mapsto$ rules. This works because the $\mapsto$ only take into account the surface of the exponential signatures. And during a $\mapsto$ step, exponential signatures are decreasing for the $\blacktriangleleft$ relation.
\end{proof}

\begin{lemma}\label{lemma_prec_right_left}
  If $t_n \preccurlyeq^v t_n'$, $(e_1,P_1,[!_{t_1}]@T_1,p_1) \mapsto (e_2,P_2,[!_{t_2}]@T_2,p_2) \mapsto \cdots \mapsto (e_n,P_n,[!_{t_n}]@T_n,p_n)$ and $(e_1,Q_1,[!_{t'_1}]@U_1,p_1) \mapsto \cdots \mapsto (e_n,Q_n,[!_{t'_n}]@U_n,p_n)$ then $t_1 \preccurlyeq^v t'_1$
\end{lemma}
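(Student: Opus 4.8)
The plan is to argue by induction on $n$. The base case $n=1$ is immediate: then $t_1=t_n\preccurlyeq^v t'_n=t'_1$. For the induction step, one first applies the induction hypothesis to the two truncated paths obtained by deleting the initial context (indices $2$ to $n$), which gives $t_2\preccurlyeq^v t'_2$, and it then suffices to prove the one-step statement: from $(e_1,P_1,[!_{t_1}]@T_1,p_1)\mapsto(e_2,P_2,[!_{t_2}]@T_2,p_2)$, $(e_1,Q_1,[!_{t'_1}]@U_1,p_1)\mapsto(e_2,Q_2,[!_{t'_2}]@U_2,p_2)$ and $t_2\preccurlyeq^v t'_2$, deduce $t_1\preccurlyeq^v t'_1$. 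The key preliminary observation is that both first steps use \emph{the same} $\mapsto$-rule: the rule applied is determined by the pair $(e_1,e_2)$, by the polarity $p_1$, and -- for a $?C$ link -- by which premise is involved, and all of these are common to the two paths. So the proof is a case analysis on that common rule.

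In every case where the rule does not rewrite the leading trace element we have $t_1=t_2$ and $t'_1=t'_2$, and $t_1\preccurlyeq^v t'_1$ follows directly from the hypothesis. This covers all multiplicative, quantifier and axiom/cut rules, the $\hookrightarrow$ rule, and every exponential rule as soon as the traces have length $\geq 2$. The leading element can only be rewritten by the $?C$ or $?N$ rule crossed upwards (the duals of the rules of Figure~\ref{exponential_context_semantic}), and only when the trace is the single element $[!_{t_1}]$, resp.\ $[!_{t'_1}]$; the dereliction, weakening and box-door duals would turn a one-element trace into the empty list, which is not a context, hence they never occur at an internal step of the path (those are exactly the situations producing a final context). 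In the remaining cases one uses that $\preccurlyeq^v$ is, by design, stable under precisely the signature operations these rules perform: from the defining clauses, $\sigl(s)\preccurlyeq^v\sigl(s')$ iff $s\preccurlyeq^v s'$ (and likewise for $\sigr$ and for $\sigp$), and $\sign(s_1,s_2)\preccurlyeq^v\sign(s'_1,s'_2)$ iff $s_1\preccurlyeq^v s'_1$ and $s_2=s'_2$. Reading these equivalences from right to left turns $t_2\preccurlyeq^v t'_2$ into $t_1\preccurlyeq^v t'_1$. As in the proof of Lemma~\ref{lemma_prec_left_right}, this works because $\mapsto$ only inspects the surface of the exponential signatures, so everything below the surface is transported unchanged.

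The step I expect to require the real care is the \emph{synchronisation} of the two traces, which is what makes the case analysis above legitimate: one needs to know that the common rule rewrites the leading element of one trace exactly when it rewrites that of the other, and that, for the $?N$ link, it fires in the same variant ($\sigp$ versus $\sign$). This rests on the two traces having the same skeleton throughout the path; since both start with a $!$-element, it follows from Lemma~\ref{lemma_skeletons_cons} as soon as the two initial traces have equal skeletons, which holds in every use of the lemma (there the initial traces are the one-element traces $[!_{t_1}]$ and $[!_{t'_1}]$). The most delicate sub-case is the $?N$ split, where the leading $!_{\sign(s_1,s_2)}$ becomes $!_{s_1}.!_{s_2}$: there the induction hypothesis applied to the suffix yields $s_1\preccurlyeq^v s'_1$, but one also needs $s_2=s'_2$, i.e.\ equality of the standard component split off by the rule. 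I would obtain this by strengthening the induction hypothesis so that it additionally records equality of the trace elements created by such splits -- those components are standard by quasi-standardness of the signatures involved, hence cannot be merged or reshaped asymmetrically, and they are only ever decreased for the subtree order $\blacktriangleleft$ in lockstep along the two paths -- which closes the argument.
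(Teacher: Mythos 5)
You follow what is evidently the intended route: induction on $n$, synchronisation of the first steps of the two paths through the common edges and polarities, and a case analysis in which only a $\wn C$ or $\wn N$ crossed upwards on a singleton trace can touch the leading signature, those cases being closed by reading the defining clauses of $\preccurlyeq^v$ from right to left. The paper gives no proof of this lemma at all (only the two-line sketch for the companion Lemma~\ref{lemma_prec_left_right}, "surface-based induction and case analysis"), so there is nothing more specific to compare against; your treatment of the non-critical cases, of the impossibility of a $\wn D$/$\wn W$/door step consuming the leading element mid-path, and your observation that the statement implicitly needs the equal-skeleton information that its uses supply, are all sound.

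The gap is exactly the $\sign$-split case you flag, and your proposed repair does not close it. You need equality of the components $s_2$ and $s'_2$ split off by the $\wn N$ crossing, and you propose to record it in a strengthened induction hypothesis on the grounds that these components are "only ever decreased in lockstep" and are standard. But lockstep surface operations only force the two components to have the same root constructor at each step where they are inspected; they never force equality, and the hypotheses of the lemma (which constrain $T_n$ and $U_n$ at most up to skeleton, and say nothing about quasi-standardness) leave room for a residual difference at the end of the path. Concretely, take a $\wn N$ link whose premise is the conclusion of a $\wn C$ link whose left premise is the conclusion of a $\wn W$ link; then $(h,P,[\oc_{\sign(\sige,\sigl(\sige))}],-)\mapsto(g,P,[\oc_{\sige};\oc_{\sigl(\sige)}],-)\mapsto(f,P,[\oc_{\sige};\oc_{\sige}],-)$ and $(h,P,[\oc_{\sign(\sige,\sigl(\sigl(\sige)))}],-)\mapsto(g,P,[\oc_{\sige};\oc_{\sigl(\sigl(\sige))}],-)\mapsto(f,P,[\oc_{\sige};\oc_{\sigl(\sige)}],-)$ traverse the same edges with the same polarities and with equal skeletons throughout, the final leading signatures are both $\sige$, yet $\sign(\sige,\sigl(\sige))\not\preccurlyeq\sign(\sige,\sigl(\sigl(\sige)))$, since the $\sign$ clause of $\preccurlyeq^v$ demands equal second components. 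So the invariant you want to add ("equality of the trace elements created by such splits") is false in the stated generality; a correct argument must either import a hypothesis tying the whole final traces together (not just their leading elements and skeletons) or weaken the conclusion in the $\sign$ case. This also shows the difficulty you identified is not a detail of bookkeeping but a genuine one, which the paper's own (absent) proof and its use of the lemma inside Lemma~\ref{lemma_copy_context_final_context} -- where only equal skeletons are available -- do not address either.
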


\begin{lemma}\label{lemma_copy_context_final_context}
  Let $C=(e,P,[\oc_t]@T,p)$ be a context, with $\beta_{\{\}}(C) \neq \varnothing$. $C$ is a $\mapsto$-copy context if and only if there exists a final context $C_f \in F_G$ such that $C \mapsto^* C_f$
\end{lemma}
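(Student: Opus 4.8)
The plan is to prove the two implications separately; the ``if'' direction is essentially bookkeeping, while the ``only if'' direction carries the real content.

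\textbf{From a final context back to a copy context.} Suppose $C=(e,P,[\oc_t]@T,p)\mapsto^{*}C_f$ with $C_f=(e_n,R_n,T_n^{\star},p_n)\in F_G$. I would first note that every final context is $\mapsto$-irreducible: in the three clauses of Definition~\ref{def_finalcontext} the only conceivable move is, respectively, the upward $?D$-rule (impossible, as it would pop the unique trace element $\oc_{\sige}$ and leave an empty trace), the upward $?W$-rule (nonexistent, since $?W$ has no premise), and a step out of a pending edge (nonexistent); in each case $\hookrightarrow$ is excluded because the tail of $e_n$ is not a $?P$-door. This gives the first clause of Definition~\ref{def_copycontext}. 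For the second clause, let $u$ be such that the $u$-path $(e,P,[\oc_u]@T,p)\mapsto^{n}(e_n,R_n',U_n',p_n)$ follows the same edges and polarities as the $t$-path. Since $[\oc_t]@T$ and $[\oc_u]@T$ have equal skeleton, Lemma~\ref{lemma_skeletons_cons} makes every trace along the $u$-path skeleton-equal to the corresponding one along the $t$-path; in particular $U_n'$ has the same length and skeleton as $T_n^{\star}=[\oc_{\sige}]@T'$, so the same three-case inspection shows $(e_n,R_n',U_n',p_n)\not\mapsto$. Writing the two families of traces as $[\oc_{t_i}]@T_i$ and $[\oc_{u_i}]@U_i$ (legitimate because the left end of the trace is never removed and always stays an $\oc$-element), we have $t_n=\sige$, hence $t_n\preccurlyeq^{u_n}u_n$, and Lemma~\ref{lemma_prec_right_left} yields $t=t_1\preccurlyeq^{u_n}u_1=u$, i.e. $t\preccurlyeq u$. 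Thus $C$ is a $\mapsto$-copy context.

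\textbf{From a copy context to a final context.} Suppose $C=(e,P,[\oc_t]@T,p)$ is a $\mapsto$-copy context, so $C\mapsto^{n}C_n\not\mapsto$ and the minimality clause holds. Since $\beta_{\{\}}(C)\neq\varnothing$, Lemma~\ref{lemma_underlying_mapsto} gives $\beta_{\{\}}(C_n)\neq\varnothing$, so $C_n$ cannot be blocked by a mismatch between the active trace element and the link it faces; a case inspection of the rules then leaves exactly the following reasons for $C_n\not\mapsto$: (i) $p_n=+$ and $C_n$ sits on a pending edge; (ii) $p_n=-$ and the tail of $e_n$ is a $?W$; (iii) $p_n=-$, the tail is a $?D$ and the trace is $[\oc_{\sige}]$; (iv) $p_n=-$, the tail is a $?D$ but the active signature is not $\sige$ (or the trace has length $1$ with non-$\sige$ signature); (v) $p_n=-$, the tail is a $?C$ or $?N$ and the active signature is inappropriate for that link. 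Cases (i)--(iii) are precisely Definition~\ref{def_finalcontext}, so it suffices to rule out (iv) and (v). In each such case I would use the subtree property (Lemma~\ref{subtree_property}) together with Lemma~\ref{lemma_prec_left_right} to produce a signature $u$ whose path coincides with the $t$-path on its first $n$ steps --- the perturbation being confined ``below the surface'' until step $n$, and $\mapsto$ inspecting only the surface of signatures --- but which at step $n$ either exposes an active signature that lets the blocked rule fire (contradicting ``$(e_n,R_n',U_n',p_n)\not\mapsto$'' in the minimality clause) or is a strict $\preccurlyeq$-predecessor of $t$ (contradicting ``$t\preccurlyeq u$''). Either contradiction forces $C_n$ to be of type (i)--(iii), i.e. final, which also identifies $C_n$ as the $C_f$ we want.

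\textbf{Main obstacle.} The delicate step is the construction of the witness $u$ in cases (iv) and (v): one must track which sub-signature of the original $t$ ends up exposed as the active trace element of $C_n$ after the intermediate $?C$/$?N$ peelings --- which is exactly what the subtree property supplies --- and then perform a minimal local surgery on that sub-signature ($\sige\mapsto\sigl(\sige)$, or deletion of a spurious outermost constructor) which both keeps the first $n$ steps literally unchanged and breaks one of the two halves of the minimality clause. Verifying that such a surgery is invisible to the first $n$ steps relies on the recurring fact that a $\mapsto$ step reads only the surface of the exponential signatures and that signatures shrink along $\blacktriangleleft$ at each step; keeping this case analysis complete (in particular when the inappropriate signature at a $?C$ or $?N$ has shape $\sigp(\cdot)$ or $\sign(\cdot,\cdot)$) is where most of the care is needed.
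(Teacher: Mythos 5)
Your proof is correct and follows essentially the same route as the paper: for the easy direction, irreducibility of final contexts, skeleton preservation (Lemma~\ref{lemma_skeletons_cons}) and Lemma~\ref{lemma_prec_right_left}; for the hard direction, Lemma~\ref{lemma_underlying_mapsto} to rule out trace/type mismatches, then the perturbation of the exponential signature via Lemma~\ref{lemma_prec_left_right} played against the two halves of the copy-context minimality clause (using that $\preccurlyeq$ is an order). The extra appeal to the subtree property (Lemma~\ref{subtree_property}) is superfluous — Lemma~\ref{lemma_prec_left_right} alone carries the bookkeeping in the paper's argument — but this does not change the proof.
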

\begin{proof}
  Let us suppose that $(e,P,[\oc_t]@T,p) \mapsto (e_1,P_1,T_1,p_1) \cdots \mapsto (e_n,P_n,T_n,p_n) =C_f$ and $C_f$ is final. Looking at the rules of $\mapsto$ and the definition of final contexts, $C_f \not \mapsto$. Let us suppose that there is an exponential signature $u$ such that $(e,P,[\oc_u]@T,p) \mapsto (e_1, Q_1,U_1,p_1) \cdots (e_n,Q_n,U_n,p_n)$. The skeletons of $[!_t]@T$ and $[\oc_u]@T$ are equal, so the skeletons of $T_n$ and $U_n$ are equal (Lemma \ref{lemma_skeletons_cons}). Thus, no matter which case of final context is $C_f$, $(e_n,Q_n,U_n,p_n) \not \mapsto$. We know that $T_n$ has the shape $[!_{\sige}]@T'$ so $U_n$ has the shape $[!_v]@U'$. By definition of $\preccurlyeq$, $\sige \preccurlyeq v$. So, according to lemma \ref{lemma_prec_right_left}, $t \preccurlyeq u$.

  Now, let us suppose that $(e,P,[\oc_t]@T,p)$ is a copy context, then $(e,P,[\oc_t]@T,p) \mapsto (e_1,P_1,T_1,p_1) \cdots \mapsto (e_n,P_n,T_n,p_n) \not \mapsto$ and forall $u$ such that $(e,P,[\oc_u]@T,p) \mapsto (e_1, Q_1,U_1,p_1) \cdots (e_n,Q_n,U_n,p_n)$, $t \preccurlyeq u$ and $(e_n,Q_n,U_n,p_n) \not \mapsto$. We will show that $(e_n,P_n,T_n,p_n)$ is a final context. We supposed that $\beta_{\{\}}(e,P,[!_t]@T,p)$ is  not empty, so $\beta_{\{\}}(e_n,P_n,T_n,p_n)$ is not empty by Lemma~\ref{lemma_underlying_mapsto}. So, knowing that $(e_n,P_n,T_n,p_n) \not \mapsto$ and that the leftmost trace element is a $!$ element, the only possibilities for $(e_n,P_n,T_n,p_n)$ are $((\_,\bullet),P_n,T_n,+)$ or $(e_n,P_n,[\oc_y],-)$ with the tail of $e_n$ being a $?C$, $?D$, $?N$ or $?W$ link.

  We will prove that the tail of $e_n$ can not be a $\wn C$ or $\wn N$ by absurd. If the tail of $e_n$ was a $?C$ or $?N$ and $\sige \neq y$ then notice that $\sige \preccurlyeq y$ so there exists $t' \preccurlyeq^y t$ such that $(e,P,[!_{t'}]@T,p) \mapsto \cdots (e_n,R',!_{\sige},-)$ (Lemma \ref{lemma_prec_left_right}). However, because $(e,P,[!_{t}]@T,p)$ is a copy context, $t \preccurlyeq t'$. Then, $\preccurlyeq$ being an order, $t'=t$ so $y = \sige$ which is a contradiction. If we suppose that the tail of $e$ was a $?C$ or $?N$ and $y = \sige$ then notice that $\sige \preccurlyeq \sigl(\sige)$ and $\sige \preccurlyeq \sigp(\sige)$ so, by Lemma \ref{lemma_prec_left_right}, there exists $u$ such that $(e,P,[\oc_u]@T,p) \mapsto (e_1, Q_1,U_1,p_1) \cdots (e_n,Q_n,U_n,p_n) \mapsto$ which would contradict our assumption of $(e,P,[\oc_t]@T,p)$ being a $\mapsto$-copy context. So the tail of $e_n$ is neither $\wn C$ nor $\wn N$.

  So the only fact left to prove is that the leftmost trace element is $!_{\sige}$. If it is not, we can find an exponential signature $u$ such that $u \preccurlyeq t$, $u \neq t$ and $(e,P,!_u,+) \mapsto (e_1, Q_1,U_1,p_1) \cdots (e_n,Q_n,U_n,p_n)$ (Lemma \ref{lemma_prec_left_right}). 
\end{proof}

\begin{theorem}
  \label{charac_final_contexts}
  Let $t$ be a standard exponential signature. $t \in C_{\mapsto}(B,P)$ if and only if for any $t \sqsubseteq u$, there exists a final context $C_u$ such that $(\sigma(B),P,[\oc_u],+) \mapsto^* C_u$.
\end{theorem}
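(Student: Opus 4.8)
The plan is to reduce Theorem~\ref{charac_final_contexts} to the characterization of $\mapsto$-copy contexts already established in Lemma~\ref{lemma_copy_context_final_context}, together with the definition of $C_{\mapsto}(B,P)$ (Definition~\ref{def_copy}). Recall that by definition $t \in C_{\mapsto}(B,P)$ means exactly that for every $t \sqsubseteq u$ the context $(\sigma(B),P,[\oc_u],+)$ is a $\mapsto$-copy context. So the statement to prove is essentially: for a standard signature $t$, \emph{``for all $t \sqsubseteq u$, $(\sigma(B),P,[\oc_u],+)$ is a $\mapsto$-copy context''} is equivalent to \emph{``for all $t \sqsubseteq u$, $(\sigma(B),P,[\oc_u],+) \mapsto^* C_u$ for some final $C_u$''}. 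Since the quantifier ``for all $t \sqsubseteq u$'' is the same on both sides, it suffices to show, for each fixed $u$ with $t \sqsubseteq u$, that $(\sigma(B),P,[\oc_u],+)$ is a $\mapsto$-copy context iff it reduces to a final context. This is precisely Lemma~\ref{lemma_copy_context_final_context}, \emph{provided} its hypothesis $\beta_{\{\}}(C) \neq \varnothing$ is met.

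Hence the first real step is to check that $\beta_{\{\}}(\sigma(B),P,[\oc_u],+) \neq \varnothing$ for all the relevant $u$. The edge $\sigma(B)$ is the conclusion of a principal door, so $\beta(\sigma(B))$ has the form $\oc C$; by the computation rule for $\beta_{\{\}}$ on a $!$-connective with trace head $!_u$, we get $\beta_{\{\}}(\oc C[\theta],\sigma(B),P,[~],[\oc_u],+) = \beta_{\{\}}(C[\theta],\sigma(B),P,[~],[\oc_u],+)$, and an empty continuation of the trace makes this set equal to $\{C[\theta]\}$ once we pick, say, the trivial truncation $\Theta = \varnothing$ of $ztree(\sigma(B),P)$. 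This mirrors exactly the argument in the proof of Lemma~\ref{lemma_underlying_mapsto}. So $\beta_{\{\}}(\sigma(B),P,[\oc_u],+) \neq \varnothing$ unconditionally, and Lemma~\ref{lemma_copy_context_final_context} applies to each $(\sigma(B),P,[\oc_u],+)$.

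With that in hand the proof is a short unfolding: assume $t \in C_{\mapsto}(B,P)$; then for every $t \sqsubseteq u$, $(\sigma(B),P,[\oc_u],+)$ is a $\mapsto$-copy context, and since its underlying formulae are nonempty, Lemma~\ref{lemma_copy_context_final_context} gives a final $C_u$ with $(\sigma(B),P,[\oc_u],+) \mapsto^* C_u$. Conversely, assume that for every $t \sqsubseteq u$ there is such a final $C_u$; then again by Lemma~\ref{lemma_copy_context_final_context} (nonemptiness of underlying formulae holding as above), $(\sigma(B),P,[\oc_u],+)$ is a $\mapsto$-copy context for every $t \sqsubseteq u$, which is the definition of $t \in C_{\mapsto}(B,P)$. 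One small point to be careful about: in Definition~\ref{def_copy} the copy is required to be a \emph{standard} exponential signature, which matches the hypothesis of the theorem; and we should note that $\sqsubseteq$ is reflexive (remarked just after Definition~\ref{def_sqsubseteq}), so taking $u = t$ is legitimate and the two families of conditions really do range over the same set of $u$'s.

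The main obstacle, such as it is, is not in this theorem itself but in verifying the hypothesis of Lemma~\ref{lemma_copy_context_final_context} — i.e. making sure that for \emph{all} the $u \sqsupseteq t$ we actually have $\beta_{\{\}} \neq \varnothing$, and not merely for $u = t$. This is handled by the observation above that the computation of $\beta_{\{\}}$ at a principal-door conclusion only ``consumes'' the leading $\oc_u$ from the trace and then returns a singleton, so the exponential signature $u$ inside $\oc_u$ is irrelevant to nonemptiness. Everything else is a direct translation between the definition of $C_{\mapsto}(B,P)$ and the ``reaches a final context'' criterion, so there is no further combinatorial content to grind through.
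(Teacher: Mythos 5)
Your proof is correct and follows essentially the same route as the paper: both directions are obtained by unfolding Definition~\ref{def_copy} and applying Lemma~\ref{lemma_copy_context_final_context} to each $(\sigma(B),P,[\oc_u],+)$ with $t \sqsubseteq u$. Your explicit check that $\beta_{\{\}}(\sigma(B),P,[\oc_u],+) \neq \varnothing$ (via the $\oc$-rule for $\beta_{\{\}}$ at a principal-door conclusion) is a detail the paper's proof leaves tacit, and it is correct as you argue it.
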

\begin{proof}
  If $t \in C_{\mapsto}(B,P)$, let us consider $u \in Sig$ such that $t \sqsubseteq u$. By definition of copies, $(\sigma(B),P,[\oc_t],+)$ is a $\mapsto$-copy context. So, by Lemma \ref{lemma_copy_context_final_context}, there exists a final context $C_u$ such that $(\sigma(B),P,[\oc_u],+) \mapsto^* C_u$.

  Now, let us suppose that for any $t \sqsubseteq u$, there exists a final context $C_u$ such that $(\sigma(B),P,[\oc_u],+) \mapsto^* C_u$. Then, according to Lemma \ref{lemma_copy_context_final_context}, for any $t \sqsubseteq u$, $(\sigma(B),P,[\oc_u],+)$ is a $\mapsto$-copy context. So $t \in C_\rightarrow(B,P)$.
\end{proof}

\subsection{Proof of Dal Lago's weight theorem}
\label{subsection_proof_dal_lago}
In this subsection, we will assume that $G$ and $H$ are two proof-nets and that there is a copymorphism $(D_\phi, D_\phi', \phi, \psi)$ from $G$ to $H$. We will first exhibit a correspendence between the $\mapsto$-canonical edges of $G$ and the $\mapsto$-canonical edges of $H$ by a serie of lemma. Then we will prove that the weight $T_G$ decreases along $cut$-elimination.%Notice that all the contexts we are dealing with will be $\mapsto$-copy contexts. So, all the tuples $(e,P,t)$ with $e \in D_\phi$ that we will deal with, will be in the domain of $\phi$.

\begin{lemma}\label{lemma_copymorphism_cons_copy_context}
  If $(\sigma(B),P,[!_t],p)$ is a $\mapsto$-copy context and $\phi(\sigma(B),P,t)=(\sigma(B'),P',t')$ then $(\sigma(B'),P',[!_{t'}],p)$ is a $\mapsto$-copy context.
\end{lemma}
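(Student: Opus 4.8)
The plan is to route everything through the characterisation of $\mapsto$-copy contexts by \emph{final contexts} (Lemma~\ref{lemma_copy_context_final_context}), since that is precisely the part of the context-semantic machinery that copymorphisms are built to track: their defining axioms guarantee that paths are preserved (axiom~\ref{copymorphism_path_cons}) and that images of final contexts are again final (axiom~\ref{copymorphism_final_cons}). I would give the argument for $p=+$, which is the case that matters; when $p=-$ the leftmost trace element $\oc_t$ cannot be consumed by any $\mapsto$ rule at the conclusion of a $!P$ link, so $(\sigma(B),P,[\oc_t],-)$ is already $\not\mapsto$, whence being a copy context degenerates to $t=\sige$ and the claim is immediate.

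So assume $(\sigma(B),P,[\oc_t],+)$ is a $\mapsto$-copy context. Because $\beta(\sigma(B))=\oc A$ has head connective $\oc$, which matches the leftmost trace element, the empty truncation of $ztree(\sigma(B),P)$ already yields $A_{/LLL}\in\beta_{\{\}}(\sigma(B),P,[\oc_t],+)$, so this set is non-empty, and it stays non-empty along the whole path by Lemma~\ref{lemma_underlying_mapsto}. Lemma~\ref{lemma_copy_context_final_context} then produces a final context $C_f=(f,Q,[\oc_{\sige}]@U,q)\in F_G$ with $(\sigma(B),P,[\oc_t],+)\mapsto^* C_f$. Next I would feed $C_f$ into the copymorphism: by axiom~\ref{copymorphism_final_defined}, either $\phi(f,Q,\sige)$ is defined, or $C_f\in D_\psi$ and $\phi$ is defined on the data produced by $\psi(C_f)$ — the second alternative being the one needed when $C_f$ sits on the conclusion of a $?D$ or $?W$ link, which is contracted during cut-elimination, cf.\ figure~\ref{der_copymorphism}. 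In either case one gets a context $C_f'$ of $H$ which is final by axiom~\ref{copymorphism_final_cons}; since final contexts have leftmost trace element $\oc_{\sige}$, the signature component of $\phi(f,Q,\sige)$ is forced to be $\sige$, say $\phi(f,Q,\sige)=(f',Q',\sige)$. Applying path-consistency (axiom~\ref{copymorphism_path_cons}) to $\phi(\sigma(B),P,t)=(\sigma(B'),P',t')$ and $\phi(f,Q,\sige)=(f',Q',\sige)$ transports the path $(\sigma(B),P,[\oc_t],+)\mapsto^*(f,Q,[\oc_{\sige}]@U,q)$ into $(\sigma(B'),P',[\oc_{t'}],+)\mapsto^*(f',Q',[\oc_{\sige}]@U,q)=C_f'\in F_H$. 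Finally $\beta_{\{\}}(\sigma(B'),P',[\oc_{t'}],+)$ is non-empty by the same head-connective argument applied to $\beta(\sigma(B'))$, so Lemma~\ref{lemma_copy_context_final_context} read in $H$ gives that $(\sigma(B'),P',[\oc_{t'}],+)$ is a $\mapsto$-copy context, which is the claim.

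The main obstacle is the bridging step. The $\mapsto^*$-path witnessing that $(\sigma(B),P,[\oc_t],+)$ is a copy context runs through contexts that need not lie in the domain of $\phi$, so path-consistency cannot be applied move by move: it only relates pairs of \emph{endpoints} that are both in the domain of $\phi$. This is exactly why the reasoning has to be funnelled through a principal door and a final context — the two kinds of data that axioms~\ref{copymorphism_princ_defined}, \ref{copymorphism_final_defined} and \ref{copymorphism_final_cons} pin down. The genuinely delicate case is when the terminal final context lies outside $D_\phi$ (the dereliction/weakening situation of figure~\ref{der_copymorphism}), where one must go through the $\psi$-component and verify, using axioms~\ref{copymorphism_psi_j}--\ref{copymorphism_psi_k} together with \ref{copymorphism_final_cons}, that the surrogate context it supplies is both final in $H$ and actually reachable from $(\sigma(B'),P',[\oc_{t'}],+)$; checking that $\phi$ (or $\psi$) is in fact defined at the relevant endpoints is where the bookkeeping is heaviest.
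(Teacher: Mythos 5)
Your proposal is correct and follows essentially the same route as the paper's proof: extract a final context reachable from $(\sigma(B),P,[\oc_t],+)$ via the final-context characterisation, push it through $\phi$ when its edge lies in $D_\phi$ and through $\psi$ followed by $\phi$ otherwise, then use axioms~\ref{copymorphism_path_cons}, \ref{copymorphism_final_defined} and \ref{copymorphism_final_cons} to transport the path and conclude. Your explicit verification that $\beta_{\{\}}$ is non-empty at the principal doors (needed to invoke Lemma~\ref{lemma_copy_context_final_context} in both $G$ and $H$) and your separate treatment of $p=-$ are small refinements the paper leaves implicit, not a different argument.
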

\begin{proof}
  By Theorem \ref{charac_final_contexts}, there exists $C_e=(e,Q,[\oc_{u}]@U,q) \in F_G$ such that $(\sigma(B),P,[!_t],+)\mapsto^* C_e$.
  \begin{itemize}
  \item If $e \in D_{\phi}$, then by rule \ref{copymorphism_final_defined} $\phi(e,Q,u)$ is defined. Let $(e',Q',u')$ be $\phi(e,Q,u)$. By rule \ref{copymorphism_path_cons}, $(\sigma(B'),P',[!_{t'}]@T,+) \mapsto (e',Q',[\oc_{u'}]@U,q)$. And by rule \ref{copymorphism_final_cons}, $(e',Q',[\oc_{u'}]@U,q)$ is a final context for $H$.
  \item Else, by rule \ref{copymorphism_psi}, $(e,Q, [\oc_{u}]@U,q) \in D_{\psi}$. There exists $k$ such that  $\forall (f,R,[!_v]@V,r) \in C_{G} ,(f,R,[!_v]@V,r) \mapsto^k (e,Q,[\oc_{u}]@U,q) \Leftrightarrow (f,R,[!_v]@V,r) \in \psi(e,Q,[!_{u}]@U,q)$ (rule \ref{copymorphism_psi_k}) and if $j < k$, then $(f,R,[!_v]@V,r) \mapsto^{j} (e,Q,[!_{u}]@U,q) \Rightarrow f \not \in \sigma(B_G)$. So there is $(f,R,[\oc_v]@V,r) \in \psi(e,Q,[\oc_{u}]@U,q)$ such that $(\sigma(B),P,[\oc_t],+) \mapsto^* (f,R,[!_v]@V,r) \mapsto^k (e,Q,[\oc_{u}]@U,q)$. By rule \ref{copymorphism_final_defined}, $\phi(f,R,v)$ is defined. Let $(f',R',v')$ be $\phi(f,R,v)$, according to rule \ref{copymorphism_final_cons}, $(f',R',[\oc_{v'}]@V,r)$ is final. According to rule \ref{copymorphism_path_cons}, $(\sigma(B'),P', [\oc_{t'}],+) \mapsto^* (f',R',[\oc_{v'}]@V,r)$.
  \end{itemize}
  So $(\sigma(B'),P',[\oc_{t'}],+)$ is a $\mapsto$-copy context.
\end{proof}

\begin{lemma}\label{lemma_copymorphism_revcons_copy_context}
  If $(\sigma(B'),P',[!_{t'}],+)$ is a $\mapsto$-copy context and $\phi(\sigma(B),P,t)=(\sigma(B'),P',t')$ then $(\sigma(B),P,[!_{t}],+)$ is a $\mapsto$-copy context.
\end{lemma}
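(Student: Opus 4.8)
The plan is to mirror the proof of Lemma~\ref{lemma_copymorphism_cons_copy_context}, but pulling a witnessing path back from $H$ to $G$ instead of pushing one forward. First I would note that, $\sigma(B)$ and $\sigma(B')$ being principal doors, the contexts $(\sigma(B),P,[\oc_t],+)$ and $(\sigma(B'),P',[\oc_{t'}],+)$ have nonempty underlying formulae: since $\beta(\sigma(B'))$ has the form $\oc C$, the empty truncation of $ztree(\sigma(B'),P')$ already places $C_{/LLL}$ in $\beta_{\{\}}(\sigma(B'),P',[\oc_{t'}],+)$, and likewise for $(\sigma(B),P,[\oc_t],+)$; hence Lemma~\ref{lemma_copy_context_final_context} applies to both. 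Applying it to the $\mapsto$-copy context $(\sigma(B'),P',[\oc_{t'}],+)$ produces a final context $C_{e'}=(e',Q',[\oc_{\sige}]@U,q)\in F_H$ with $(\sigma(B'),P',[\oc_{t'}],+)\mapsto^* C_{e'}$.

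Next I would invoke rule~\ref{copymorphism_final_cons}: $F_H$ is exactly the union of the $\phi$-images of the final contexts of $G$ with the $\phi$-images of the sets $\psi(C)$ for $C\in D_\psi$. This splits into two cases. In the first, there is $(e,Q,[\oc_{\sige}]@U,q)\in F_G$ with $\phi(e,Q,\sige)=(e',Q',\sige)$. I then apply rule~\ref{copymorphism_path_cons} to the two defined values $\phi(\sigma(B),P,t)=(\sigma(B'),P',t')$ and $\phi(e,Q,\sige)=(e',Q',\sige)$, instantiating the trace tail by $[~]$ and the polarity by $+$: from the path $(\sigma(B'),P',[\oc_{t'}],+)\mapsto^*(e',Q',[\oc_{\sige}]@U,q)$ in $H$, the $\Leftarrow$ direction of the equivalence yields $(\sigma(B),P,[\oc_{t}],+)\mapsto^*(e,Q,[\oc_{\sige}]@U,q)$ in $G$, whose target lies in $F_G$; Lemma~\ref{lemma_copy_context_final_context} then gives that $(\sigma(B),P,[\oc_{t}],+)$ is a $\mapsto$-copy context.

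In the second case there are $C\in D_\psi$ and $(e,Q,[\oc_{v}]@U,q)\in\psi(C)$ with $\phi(e,Q,v)=(e',Q',\sige)$. By rule~\ref{copymorphism_psi_k} there is $k$ with $(e,Q,[\oc_v]@U,q)\mapsto^{k}C$, and $C\in F_G$ since $D_\psi\subseteq F_G$. Applying rule~\ref{copymorphism_path_cons} to $\phi(\sigma(B),P,t)=(\sigma(B'),P',t')$ and $\phi(e,Q,v)=(e',Q',\sige)$ as before, the $H$-path pulls back to $(\sigma(B),P,[\oc_{t}],+)\mapsto^*(e,Q,[\oc_v]@U,q)$; appending the $k$-step run to $C$ gives $(\sigma(B),P,[\oc_{t}],+)\mapsto^* C$ with $C$ final in $G$, and Lemma~\ref{lemma_copy_context_final_context} again concludes.

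The only delicate point, exactly as in Lemma~\ref{lemma_copymorphism_cons_copy_context}, is the $\psi$-case: the context returned by rule~\ref{copymorphism_final_cons} there need not itself be a final context of $G$, so it is rule~\ref{copymorphism_psi_k} that furnishes the missing $k$-step tail reaching an actual element of $F_G$, and one has to check that this tail lives entirely in $G$ and is glued at the far end of the pulled-back path. Everything else is a routine transfer along $\phi$ through rule~\ref{copymorphism_path_cons}.
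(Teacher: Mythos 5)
Your proof is correct and follows essentially the same route as the paper's: extract a final context in $H$ from the copy context via the characterization lemma, split by rule~\ref{copymorphism_final_cons} into the $F_G$ case and the $\psi$ case, pull the path back through rule~\ref{copymorphism_path_cons}, and in the $\psi$ case append the $k$-step tail from rule~\ref{copymorphism_psi_k} to reach a genuine final context of $G$. Your explicit check that the underlying formulae of the principal-door contexts are nonempty (so Lemma~\ref{lemma_copy_context_final_context} applies) is a detail the paper leaves implicit, but it changes nothing in the argument.
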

\begin{proof}
  $(\sigma(B'),P',[!_{t'}],+)$ is a $\mapsto$-copy context so $(\sigma(B'),P',[!_{t'}],+) \mapsto^* (e',Q',[!_{u'}]@U,q)$ with $(e',Q',[!_{u'}]@U,q) \in F_H$. So, according to rule \ref{copymorphism_final_cons}, either  there exists $(e,Q) \in Pot(E_G)$ and $u \in Sig$ such that $(e,Q,[!_u]@U,q) \in F_G$ and $\phi(e,Q,u)=(e',Q',u')$ or there exists $e,Q,u$ and $(f,R,[!_v]@V,r)\in F_G$ such that $(e,Q,[!_u]@U,q) \in \psi(f,R,[!_v]@V,r)$ and $\phi(e,Q,u)=(e',Q',u')$. We examine both cases.
  \begin{itemize}
  \item If we suppose that there exists $e,Q,u$ such that $(e,Q,[!_u]@U,q) \in F_G$ and $\phi(e,Q,u)=(e',Q',u')$. So, according to rule \ref{copymorphism_path_cons}, $(\sigma(B),P,[!_{t}],+) \mapsto^* (e,Q,[!_u]@U,q)$ and $(e,Q,[!_u]@U,q) \in F_G$ so $(\sigma(B),P,[\oc_t],+)$ is a $\mapsto$-copy context.
  \item If we suppose that there exists $e,Q,u$ and $(f,R,[!_v]@V,r) \in F_G$ such that $(e,Q,[!_u]@U,q) \in \psi(f,R,[!_v]@V,r)$ and $\phi(e,Q,u)=(e',Q',u')$. According to rule \ref{copymorphism_path_cons}, $(\sigma(B),P,[!_t],+) \mapsto^* (e,Q,[!_u]@U,q)$. According to rule \ref{copymorphism_psi_k}, $(e,Q,[!_u]@U,q)\mapsto^*(f,R,[!_v]@V,r)$ and $(f,R,[!_v]@V,r)$ is final. By transitivity $(\sigma(B),P,[!_t],+) \mapsto^* (f,R,[!_v]@V,r)$ so $(\sigma(B),P,[\oc_t],+)$ is a $\mapsto$-copy context of $(B,P)$.
  \end{itemize}
\end{proof}

\begin{lemma}\label{copymorphism_phi_standard} If $\phi ( \sigma (B), P, t)=(\sigma(B'), P', t')$ and $t$ is standard, then $t'$ is standard.
\end{lemma}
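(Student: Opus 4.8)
The plan is to reduce the statement to the characterization of standard signatures as the $\sqsubseteq$-minimal ones (as discussed after the definition of $\sqsubseteq$), and then to derive a contradiction by pulling a strict simplification of $t'$ back through $\phi$. So, assume for contradiction that $t'$ is not standard, hence not $\sqsubseteq$-minimal, and pick some $v' \sqsubset t'$. The first step is to exhibit a preimage of $(\sigma(B'),P',v')$: by property \ref{copymorphism_princ_doors}, the image of $\phi$ on potential principal doors is exactly $\Set{(\sigma(B''),P'',s'')}{(B'',P'')\in Pot(B_H),\ s''\in Sig}$, and $(\sigma(B'),P')$ is a potential principal door of $H$ (it is the image potential edge of $\phi(\sigma(B),P,t)$), so there exist $(\sigma(C),Q)\in Pot(D_\phi)$ and $s\in Sig$ with $\phi(\sigma(C),Q,s)=(\sigma(B'),P',v')$.

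The second step is to force this preimage to be the potential edge $(\sigma(B),P)$ itself. The two facts $\phi(\sigma(B),P,t)=(\sigma(B'),P',t')$ and $\phi(\sigma(C),Q,s)=(\sigma(B'),P',v')$ have the same underlying image potential edge $(\sigma(B'),P')$, so property \ref{copymorphism_phi_division} gives $(\sigma(C),Q)=(\sigma(B),P)$, i.e.\ $C=B$ and $Q=P$; thus $\phi(\sigma(B),P,s)=(\sigma(B'),P',v')$. The third step applies preservation of $\Subset$ (property \ref{copymorphism_subset_cons}) to $\phi(\sigma(B),P,t)=(\sigma(B'),P',t')$ and $\phi(\sigma(B),P,s)=(\sigma(B'),P',v')$, which yields $(B,P,s)\Subset(B,P,t)\Leftrightarrow(B',P',v')\Subset(B',P',t')$. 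On the right-hand side the boxes and potentials coincide, so by the definition of $\Subset$ it reduces to $v'\sqsubset t'$, which holds; hence $(B,P,s)\Subset(B,P,t)$, and again by the definition of $\Subset$ (same box, same potential) this reduces to $s\sqsubset t$. But $t$ is standard, hence $\sqsubseteq$-minimal, so no such $s$ exists --- contradiction. Therefore $t'$ is $\sqsubseteq$-minimal, i.e.\ standard.

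I expect two delicate points. The genuinely structural one is the step "$t$ standard $\Rightarrow$ $t$ has no strict $\sqsubseteq$-simplification" (and, dually, "not standard $\Rightarrow$ not $\sqsubseteq$-minimal"): this is morally the remark that every $\sigp$-node of a signature can be replaced by an $\sign(\sige,\cdot)$-node to produce a strictly smaller signature, but making it rigorous requires spelling out that $\sqsubseteq$ is a congruence along $\sigl,\sigr,\sigp$ and along the first component of $\sign$, and --- if one wants the equivalence with $\sigp$-freeness rather than merely with $\sqsubseteq$-minimality for arbitrary signatures --- combining it with quasi-standardness, which is exactly why quasi-standardness is built into the canonical contexts. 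The other, more routine point is making sure the preimage produced from surjectivity of $\phi$ on potential principal doors really lives on $(\sigma(B),P)$ and not on some other door with the same image; this is the sole use of \ref{copymorphism_phi_division}, and it is the only step in the argument that is not pure bookkeeping.
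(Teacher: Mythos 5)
Your proof is correct and follows essentially the same route as the paper's own: assume $t'$ is not standard, pick $v' \sqsubset t'$, pull it back through $\phi$ using rule \ref{copymorphism_princ_doors}, identify the preimage's potential edge with $(\sigma(B),P)$ via rule \ref{copymorphism_phi_division}, and use rule \ref{copymorphism_subset_cons} together with the definition of $\Subset$ to obtain $s \sqsubset t$, contradicting the standardness of $t$ (the paper merely applies \ref{copymorphism_subset_cons} before \ref{copymorphism_phi_division}, which changes nothing). The delicate point you flag --- that ``$t'$ not standard'' yields a strict $\sqsubset$-predecessor only via a congruence/quasi-standardness argument --- is taken for granted in the paper's proof as well, so your treatment is, if anything, more careful on exactly the step where care is needed.
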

\begin{proof}
  Let us assume that $t'$ is not standard. Then, there exists $u' \in Sig$ such that $u' \sqsubset t'$. We have $(B',P',u') \Subset (B',P',t')$. Let $C,Q,u$ such that $\phi(\sigma(C),Q,u)=(B',P',u')$. Then, by rule \ref{copymorphism_subset_cons}, $(C,Q,u) \Subset (B,P,t)$. However, because of rule \ref{copymorphism_phi_division}, we know that $(C,Q)=(B,P)$. So $u \sqsubset t$. This is impossible, because $t$ is standard.
\end{proof}

\begin{lemma}\label{lemma_copymorphism_cons_copy}
  If $(B,P) \in Can_{\mapsto}(B_G)$, $t \in C_{\mapsto}(B,P)$ and $\phi(\sigma(B),P,t)=(\sigma(B'),P',t')$ then $t' \in C_{\mapsto}(B',P')$.
\end{lemma}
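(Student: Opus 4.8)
The plan is to use the characterization of copies via final contexts (Theorem \ref{charac_final_contexts}) together with the already-established compatibility of $\phi$ with copy contexts (Lemma \ref{lemma_copymorphism_cons_copy_context}) and with the $\sqsubseteq$-ordering (rule \ref{copymorphism_sqsubset} and Lemma \ref{copymorphism_phi_standard}). Recall that $t \in C_\mapsto(B,P)$ means precisely that for every $t \sqsubseteq u$ the context $(\sigma(B),P,[\oc_u],+)$ is a $\mapsto$-copy context, equivalently (by Theorem \ref{charac_final_contexts}) that $(\sigma(B),P,[\oc_u],+)$ reaches a final context. We must show the same for $(\sigma(B'),P',[\oc_{u'}],+)$ for every $t' \sqsubseteq u'$.

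First I would fix an arbitrary $u'$ with $t' \sqsubseteq u'$ and produce a preimage of it under $\phi$ compatible with the chosen copy. The key point is that $\phi(\sigma(B),P,t)$ is defined and equals $(\sigma(B'),P',t')$; since $u'$ is a simplification of $t'$, rule \ref{copymorphism_subset_cons} (preservation of $\Subset$, noting $(B',P',u') \Subset (B',P',t')$ when $u' \sqsubset t'$, and the trivial case $u'=t'$) together with rule \ref{copymorphism_princ_doors} (surjectivity onto principal doors of $H$) and rule \ref{copymorphism_phi_division} (injectivity on images) lets me identify a unique potential box $(B,P)$ — the same one — and a unique $u$ with $u \sqsubseteq t$ such that $\phi(\sigma(B),P,u)=(\sigma(B'),P',u')$. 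Here I use that $\phi$ is only defined on $Pot(D_\phi)\times Sig$ and that by rule \ref{copymorphism_sqsubset} $\phi(\sigma(B),P,u)$ is defined whenever $u \sqsubseteq t$ (going the other direction I would need that every simplification of $t'$ arises from a simplification of $t$, which is exactly what rules \ref{copymorphism_subset_cons} and \ref{copymorphism_phi_division} give). Since $t \in C_\mapsto(B,P)$ and $u \sqsubseteq t$, the context $(\sigma(B),P,[\oc_u],+)$ is a $\mapsto$-copy context. By Lemma \ref{lemma_copymorphism_cons_copy_context}, $(\sigma(B'),P',[\oc_{u'}],+)$ is then a $\mapsto$-copy context.

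Having shown $(\sigma(B'),P',[\oc_{u'}],+)$ is a $\mapsto$-copy context for every $t' \sqsubseteq u'$, I then need that $t'$ is standard: this is exactly Lemma \ref{copymorphism_phi_standard}, using that $t \in C_\mapsto(B,P)$ implies $t$ is standard (copies are standard by Definition \ref{def_copy}). Combining standardness of $t'$ with the fact that $(\sigma(B'),P',[\oc_{u'}],+)$ is a $\mapsto$-copy context for all $t' \sqsubseteq u'$ gives $t' \in C_\mapsto(B',P')$ by definition.

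The main obstacle I anticipate is the bookkeeping in the second paragraph: matching each simplification $u'$ of $t'$ with a simplification $u$ of $t$ such that $\phi(\sigma(B),P,u)=(\sigma(B'),P',u')$, and checking this correspondence is well-defined (the uniqueness coming from \ref{copymorphism_phi_division}, the existence in the needed direction from surjectivity \ref{copymorphism_princ_doors} combined with \ref{copymorphism_subset_cons}, and the fact that the box component stays $(B,P)$ from \ref{copymorphism_phi_division}). Once that correspondence is in place, the rest is a direct appeal to the preceding lemmas. An alternative, possibly cleaner route is to bypass the preimage construction entirely by invoking Theorem \ref{charac_final_contexts} directly: for each $t' \sqsubseteq u'$ build a path from $(\sigma(B'),P',[\oc_{u'}],+)$ to a final context of $H$ by transporting, through rules \ref{copymorphism_path_cons} and \ref{copymorphism_final_cons} (and \ref{copymorphism_final_defined}, \ref{copymorphism_psi}), the final-context-terminating path that exists for the corresponding preimage in $G$ — this is essentially what Lemma \ref{lemma_copymorphism_cons_copy_context} already packages, so reusing it is the economical choice.
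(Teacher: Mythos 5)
Your proof is correct and follows essentially the paper's skeleton: standardness of $t'$ via Lemma \ref{copymorphism_phi_standard}, then, for each simplification $u'$ of $t'$, a preimage under $\phi$ obtained from rule \ref{copymorphism_princ_doors}, compared with $(\sigma(B),P,t)$ through rule \ref{copymorphism_subset_cons}, and finally transported by Lemma \ref{lemma_copymorphism_cons_copy_context}. The one genuine divergence is how the preimage is pinned down: you invoke rule \ref{copymorphism_phi_division} to force it to be the same potential box $(B,P)$, so that only the case $t \sqsubset u$ arises; the paper instead keeps both alternatives allowed by $\Subset$ --- same box with $t \sqsubset u$, or a strictly enclosing box $C \supset B$ with $P=Q.v@Q'$ and $v \sqsubseteq u$ --- and resolves the second one using the hypothesis $(B,P)\in Can_{\mapsto}(B_G)$ (the components of a canonical $P$ are copies of the enclosing boxes). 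Your shortcut is legitimate as the definition stands --- the paper uses rule \ref{copymorphism_phi_division} in exactly this way in the proof of Lemma \ref{copymorphism_phi_standard} --- and it has the mild advantage of not needing the canonicity of $P$ at this step. One thing to fix: you repeatedly write $u \sqsubseteq t$, but with the paper's (reversed) convention a simplification of $t$ is a $u$ with $t \sqsubseteq u$, and that is the direction you actually need, both to read rule \ref{copymorphism_sqsubset} correctly and to conclude from $t \in C_{\mapsto}(B,P)$ that $(\sigma(B),P,[\oc_u],+)$ is a $\mapsto$-copy context (the definition of copy quantifies over all $u$ with $t \sqsubseteq u$); your prose (``every simplification of $t'$ arises from a simplification of $t$'') has the right meaning, so only the symbolic inequalities need reversing.
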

\begin{proof}
  We know that $t$ is a copy so $t$ is standard. By Lemma \ref{copymorphism_phi_standard}, $t'$ is standard. Let $u' \in Sig$ such that $t' \sqsubseteq u'$. By rule \ref{copymorphism_princ_doors}, there exists $(C,Q,u)$ such that $\phi(\sigma(C),Q,u)=(\sigma(B'),P',t')$. 

Either $t' \sqsubset u'$ or $t'=u'$. If $t' \sqsubset u'$, $(B',P',t') \Subset (B',P',u')$. Then, by rule \ref{copymorphism_subset_cons}, $(B,P,t) \Subset (C,Q,u)$. Either $(B,P)=(C,Q)$ and $t \sqsubset u$ or $B \subset C$ and $P=Q.v@Q'$ with $v \sqsubseteq u$. In both cases, we will prove that $(\sigma(C),Q,[!_u],+)$ is a copy context. Lemma \ref{lemma_copymorphism_cons_copy_context} will then give us that $(\sigma(C'),Q',[!_{u'}],+)$ is a $\mapsto$-copy context, concluding the proof.
    \begin{itemize}
    \item In the first case, $B=C$, $P=Q$ and $t \sqsubset u$. We already know that $t \in C_{\mapsto}(B,P)$ so, by definition of $C_{\mapsto}(B,P)$, $(\sigma(C),Q,[!_u],+)$ is a $\mapsto$-copy context.
    \item If $B \subset C$ and $P=Q.v@Q'$ with $v \sqsubseteq u$. We know that $P \in L_G(B)$ so the signatures composing it are copies of their respective corresponding boxes. So $v \in C_{\mapsto}(C,Q)$, so $(\sigma(C),Q,[!_u],+)$ is a copy context.
    \end{itemize}

If $t' = u'$. Then, it is similar to the first case of $t' \sqsubset u'$: $(\sigma(B),P,[\oc_t],+)$ is a $\mapsto$-copy context. So $(\sigma(B'),P',[\oc_{t'}],+)$ is a $\mapsto$-copy context.
\end{proof}

\begin{lemma}\label{lemma_copymorphism_revcons_copy}
  If $(B',P') \in Can_{\mapsto}(B')$, $t' \in C_{\mapsto}(B',P')$, $\phi(\sigma(B),P,t)=(\sigma(B'),P',t')$ and $t \sqsubseteq u$ then $(\sigma(B),P,[!_u],+)$ is a copy context.  
\end{lemma}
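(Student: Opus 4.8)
The plan is to reduce the statement to Lemma~\ref{lemma_copymorphism_revcons_copy_context} by computing the image $\phi(\sigma(B),P,u)$ and showing that the corresponding context on the $H$-side is a $\mapsto$-copy context. First I would dispose of the trivial case $t=u$: then $\phi(\sigma(B),P,u)=(\sigma(B'),P',t')$, and since $t'\in C_{\mapsto}(B',P')$ the context $(\sigma(B'),P',[!_{t'}],+)$ is a $\mapsto$-copy context (instantiate $w:=t'$ in the definition of $C_{\mapsto}$); Lemma~\ref{lemma_copymorphism_revcons_copy_context} then immediately yields that $(\sigma(B),P,[!_t],+)=(\sigma(B),P,[!_u],+)$ is a $\mapsto$-copy context.

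For the remaining case $t\sqsubset u$, I would first observe that $\phi(\sigma(B),P,u)$ is defined, by rule~\ref{copymorphism_sqsubset} applied to $\phi(\sigma(B),P,t)$ together with $t\sqsubseteq u$, and that by rule~\ref{copymorphism_princ_doors} its value has the form $(\sigma(D'),R',u')$ for some box $D'$ of $H$. Since $t\sqsubset u$, the second clause of the definition of $\Subset$ gives $(B,P,t)\Subset(B,P,u)$, so rule~\ref{copymorphism_subset_cons} of the copymorphism transports this to $(B',P',t')\Subset(D',R',u')$. The argument then splits along the two clauses of $\Subset$. In the equality clause ($D'=B'$, $R'=P'$, $t'\sqsubset u'$) I take $c:=t'$, which lies in $C_{\mapsto}(D',R')=C_{\mapsto}(B',P')$ and satisfies $c\sqsubseteq u'$. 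In the inclusion clause ($B'\subset D'$, $P'=R'.s'@Q$ with $s'\sqsubseteq u'$) I take $c:=s'$: since $(B',P')\in Can_{\mapsto}(B')$ we have $P'\in L_{\mapsto}(B')$, and by Definition~\ref{def_canonicalpotential} each signature composing a canonical potential is a $\mapsto$-copy of the corresponding box with the corresponding prefix potential, so $s'\in C_{\mapsto}(D',R')$ and $s'\sqsubseteq u'$. In either case $c\in C_{\mapsto}(D',R')$ with $c\sqsubseteq u'$, so the definition of $C_{\mapsto}$ tells us that $(\sigma(D'),R',[!_{u'}],+)$ is a $\mapsto$-copy context.

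Finally, with $\phi(\sigma(B),P,u)=(\sigma(D'),R',u')$ and $(\sigma(D'),R',[!_{u'}],+)$ a $\mapsto$-copy context, Lemma~\ref{lemma_copymorphism_revcons_copy_context} (applied with $t$ replaced by $u$, $B'$ by $D'$, $P'$ by $R'$, $t'$ by $u'$) gives that $(\sigma(B),P,[!_u],+)$ is a $\mapsto$-copy context, which is exactly the conclusion sought.

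The only delicate point I anticipate is the bookkeeping in the inclusion clause of $\Subset$: one has to check that the signature $s'$ in the decomposition $P'=R'.s'@Q$ is exactly the copy-signature attached to the box $D'$ — that is, the entry at depth $\partial(D')$ in the chain of boxes containing $B'$ — so that canonicity of $P'$ really does force $s'\in C_{\mapsto}(D',R')$. This is the same reasoning already used in the last case of the proof of Lemma~\ref{lemma_copymorphism_cons_copy}, so it should be routine; everything else is a mechanical combination of rules~\ref{copymorphism_sqsubset},~\ref{copymorphism_subset_cons},~\ref{copymorphism_princ_doors} and Lemma~\ref{lemma_copymorphism_revcons_copy_context}.
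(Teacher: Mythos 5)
Your proof is correct and follows essentially the same route as the paper's: the trivial case $t=u$ reduces directly to Lemma~\ref{lemma_copymorphism_revcons_copy_context}, and for $t\sqsubset u$ you use rule~\ref{copymorphism_sqsubset} to get $\phi(\sigma(B),P,u)$, transport $(B,P,t)\Subset(B,P,u)$ through rule~\ref{copymorphism_subset_cons}, split on the two clauses of $\Subset$ exactly as the paper does (using $t'\in C_{\mapsto}(B',P')$ in one case and canonicity of $P'$ in the other), and conclude via Lemma~\ref{lemma_copymorphism_revcons_copy_context}. The only difference is cosmetic: you make explicit, via rule~\ref{copymorphism_princ_doors}, that the image of the principal door is again a principal door, a point the paper leaves implicit.
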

\begin{proof}
  If $t=u$, then $(\sigma(B'),P',[\oc_{t'}],+)$ is a $\mapsto$-copy context so, by lemma \ref{lemma_copymorphism_revcons_copy_context}, $(\sigma(B),P,[\oc_t],+)$ is a $\mapsto$-copy context.

If $t \sqsubset u$, $(B,P,t) \Subset (B,P,u)$. According to rule \ref{lemma_copymorphism_subseteq}, $\phi(\sigma(B),P,u)$ is defined. Let $(\sigma(C'),Q',u')$ be $\phi(\sigma(B),P,u)$. Then, according to rule \ref{copymorphism_subset_cons}, $(B',P',t') \Subset (C',Q',u')$. Either $(B',P')=(C',Q')$ and $t' \sqsubset u'$ or $B' \subset C'$ and $P'=Q'.v'@R'$ with $v' \sqsubseteq u'$. In both cases we will prove that $(\sigma(C'),Q',[\oc_{u'}],+)$ is a $\mapsto$-copy context. Lemma \ref{lemma_copymorphism_revcons_copy_context} will then give us that $(\sigma(B),P,[!_{u}],+)$ is a $\mapsto$-copy context, concluding the proof.
  \begin{itemize}
  \item If $(B',P')=(C',Q')$ and $t' \sqsubset u'$ then, because $t' \in C_{\mapsto}(B',P')$, $(\sigma(B'),P',[\oc_{u'}],+)$ is a $\mapsto$-copy context.
  \item If $B' \subset C'$ and $P'=Q'.v'@R'$ with $v' \sqsubseteq u'$. Then, knowing that $P' \in L_{\mapsto}(B')$, we know that $v' \in C_{\mapsto}(C',Q')$ so $(\sigma(C'),Q',[\oc_{u'}],+)$ is a $\mapsto$-copy context.
  \end{itemize}
\end{proof}

\begin{coro}\label{lemma_copymorphism_cons_canonical}
  If $\phi(e,P,t)=(e',P',t')$ and  $P \in L_{\mapsto}(e)$, then $P' \in L_{\mapsto}(e')$
\end{coro}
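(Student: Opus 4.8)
The statement to prove is Corollary~\ref{lemma_copymorphism_cons_canonical}: if $\phi(e,P,t)=(e',P',t')$ and $P \in L_{\mapsto}(e)$, then $P' \in L_{\mapsto}(e')$. The plan is to unfold the definition of $L_{\mapsto}$ and reduce the claim about the edge $e'$ to a sequence of claims about the principal doors of the boxes containing $e'$, each of which is handled by Lemma~\ref{lemma_copymorphism_cons_copy}. Write $e' \in B'_{\partial(e')} \subset \cdots \subset B'_1$ for the boxes of $H$ containing $e'$, and $P' = [s'_1;\cdots;s'_{\partial(e')}]$. We must show $s'_i \in C_{\mapsto}(B'_i,[s'_1;\cdots;s'_{i-1}])$ for every $i \le \partial(e')$.

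**The main step: using boxwiseness and Lemma~\ref{lemma_copymorphism_cons_copy}.** The key is that $\phi$ is boxwise (rule~\ref{copymorphism_boxwise}): the potential $P'$ assigned to $e'$ by $\phi$ is determined, prefix by prefix, by the images of the principal doors of the boxes containing $e$. More precisely, if $B$ is the deepest box of $G$ containing $e$ with $\sigma(B) \in D_\phi$, then $\phi(\sigma(B),P_{\restriction},t_B) = (\sigma(B'),P',t')$ for appropriate $P_\restriction, t_B$ read off from $P$, and $B'$ is the deepest box of $H$ containing $e'$, i.e. $B' = B'_{\partial(e')}$ with $P' = [s'_1;\cdots;s'_{\partial(e')-1}]$ and $t' = s'_{\partial(e')}$. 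Since $P \in L_{\mapsto}(e)$, the potential $P$ is canonical for $e$, hence in particular $[s_1;\cdots;s_{\partial(B)}]$ (the prefix of $P$ of length $\partial(B) = \partial(e)-?$... here I mean the prefix indexing the boxes containing $B$, together with the last entry $t_B$) forms a canonical potential for $B$, so $t_B \in C_{\mapsto}(B,P_\restriction)$ and $(B,P_\restriction) \in Can_{\mapsto}(B_G)$. Lemma~\ref{lemma_copymorphism_cons_copy} then gives $t' = s'_{\partial(e')} \in C_{\mapsto}(B',P')$, i.e. $s'_{\partial(e')} \in C_{\mapsto}(B'_{\partial(e')},[s'_1;\cdots;s'_{\partial(e')-1}])$, which is the top-level condition.

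**Descending by induction.** To get the conditions for $s'_i$ with $i < \partial(e')$, I iterate: each box $B'_i$ containing $e'$ is the image under $\phi$ of some box $B_j$ containing $e$ (this is where we need that $\phi$ maps principal doors to principal doors and that the nesting structure is preserved — rules~\ref{copymorphism_boxwise} and~\ref{copymorphism_subset_cons}, together with rule~\ref{copymorphism_princ_doors} to know every box of $H$ arises this way). Applying boxwiseness to the edge $\sigma(B'_{i+1})$ in place of $e'$ — or more cleanly, applying Lemma~\ref{lemma_copymorphism_cons_copy} to the chain $\sigma(B'_{\partial(e')}), \sigma(B'_{\partial(e')-1}),\ldots$ — one shows by downward induction on $i$ that $(B'_i,[s'_1;\cdots;s'_{i-1}]) \in Can_{\mapsto}(B_H)$ and $s'_i \in C_{\mapsto}(B'_i,[s'_1;\cdots;s'_{i-1}])$. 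The induction hypothesis that $[s'_1;\cdots;s'_{i-1}]$ is already canonical for $B'_i$ is exactly what lets us invoke Lemma~\ref{lemma_copymorphism_cons_copy} at stage $i-1$, since that lemma requires $(B,P) \in Can_{\mapsto}(B_G)$ (here $(B_j, \cdot) \in Can_{\mapsto}(B_G)$, which follows because $P \in L_{\mapsto}(e)$ makes all prefixes canonical).

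**The anticipated obstacle.** The routine part is the bookkeeping; the delicate part is the very first reduction — translating "$P' \in L_{\mapsto}(e')$" into statements purely about principal doors of $G$, using boxwiseness. One must check that the deepest box of $G$ containing $e$ whose door lies in $D_\phi$ really does map to the deepest box of $H$ containing $e'$, and that there are no boxes of $H$ "created between" two boxes of $G$ that would be missed (the dereliction rule, cf.\ Figure~\ref{der_copymorphism}, is exactly the subtle case the copymorphism axioms were designed to handle, so rule~\ref{copymorphism_boxwise}'s limit clause about depth~$0$ must be invoked when $e$ is not inside any box with $\sigma(B) \in D_\phi$). Once that correspondence of box-chains is in place, everything else is a mechanical application of Lemmas~\ref{lemma_copymorphism_cons_copy} and the fact that membership in $L_{\mapsto}$ is, by Definition~\ref{def_canonicalpotential}, a conjunction of copy conditions over the nested boxes.
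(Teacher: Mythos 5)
Your proof is correct and takes essentially the same approach as the paper: the paper argues by induction on $\partial(e)$, using the boxwiseness rule (rule \ref{copymorphism_boxwise}, including its depth-$0$ limit clause for the base case) to identify the deepest box $B'$ of $H$ containing $e'$ and its potential prefix, and Lemma \ref{lemma_copymorphism_cons_copy} to get the copy condition at each level. Your downward induction along the chain of boxes of $H$ containing $e'$ is just an unrolled form of the paper's recursion on depth (applied to $\sigma(B)$ at each step), so the two arguments coincide.
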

\begin{proof}
  We will make the proof by induction on $\partial(e)$. Let us suppose that $\phi(e,P,t)=(e',P',t')$ and $P \in L_{\mapsto}(e)$. Then, 
  \begin{itemize}
  \item If $P = [~]$ then by rule \ref{copymorphism_boxwise}, $e'$ has depth $0$ so $P'=[~]$, which is canonical for $e'$.
  \item Else, let $B$ the deepest box including $e$ such that $\sigma(B) \in D_{\phi}$. Then we can decompose $P$ in $P=Q.t@R$ with $Q \in L_{\mapsto}(B)$ and $t \in C_{\mapsto}(B,Q)$. Let $(\sigma(B'),Q',t')$ be $\phi(\sigma(B),Q,t)$ then according to rule \ref{copymorphism_boxwise}, $\rho(e')=B'$ and $P'=Q'.t'$. According to lemma \ref{lemma_copymorphism_cons_copy}, $t' \in C_{\mapsto}(B',Q')$. Moreover, by induction hypothesis, $Q' \in L_{\mapsto}(B')$. So $P'$ is canonical for $e'$.
  \end{itemize}
\end{proof}

\begin{coro}\label{lemma_copymorphism_cons_canonical2}
  If $(B',P') \in Can(B_H)$ and $t' \in Sig$, then there exists $t \in Sig$ and $(B,P) \in Can(B_G)$ such that $\phi(\sigma(B),P,t)=(\sigma(B'),P',t')$.
\end{coro}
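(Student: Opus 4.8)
The plan is to prove the statement by induction on $\partial(B')$, peeling off one box at a time from the outside and using the reverse companions of Corollary~\ref{lemma_copymorphism_cons_canonical} to certify that the recovered preimage potential is canonical rather than merely an element of $Pot(D_\phi)$.

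For the base case $\partial(B')=0$ we have $P'=[~]$, and rule~\ref{copymorphism_princ_doors} directly supplies some $(\sigma(B),P)\in Pot(D_\phi)$ and $t$ with $\phi(\sigma(B),P,t)=(\sigma(B'),[~],t')$. It then remains to see that $(B,P)$ is actually canonical; I would argue that $B$ has depth $0$, so that $P=[~]\in L_\mapsto(B)$ trivially. If $\sigma(B)$ lay inside a box whose principal door is in $D_\phi$, the boxwise rule~(\ref{copymorphism_boxwise}) would place $\rho(\sigma(B'))$ at the image of that box, contradicting $\partial(\sigma(B'))=0$; an enclosing box with principal door \emph{outside} $D_\phi$ has to be excluded by combining rule~\ref{copymorphism_phi_copy_not_empty} with the mere fact that $\phi$ is defined on $(\sigma(B),P,t)$.

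For the inductive step $\partial(B')=n+1$, let $\widetilde{B'}$ be the box immediately containing $B'$, and decompose the canonical potential as $P'=\widetilde{Q'}.w'$ with $\widetilde{Q'}\in L_\mapsto(\widetilde{B'})$ and $w'\in C_\mapsto(\widetilde{B'},\widetilde{Q'})$; note $\rho(\sigma(B'))=\widetilde{B'}$. Applying the induction hypothesis to $(\widetilde{B'},\widetilde{Q'})$ and the signature $w'$ yields $(\widetilde B,\widetilde Q)\in Can(B_G)$ and $\widetilde w$ with $\phi(\sigma(\widetilde B),\widetilde Q,\widetilde w)=(\sigma(\widetilde{B'}),\widetilde{Q'},w')$. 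By Lemma~\ref{lemma_copymorphism_revcons_copy} (for every $\widetilde w\sqsubseteq u$) the context $(\sigma(\widetilde B),\widetilde Q,[!_u],+)$ is a $\mapsto$-copy context, and since $w'$ is standard, Lemma~\ref{copymorphism_phi_standard} read together with rules~\ref{copymorphism_subset_cons} and \ref{copymorphism_phi_division} forces $\widetilde w$ to be standard as well; hence $\widetilde w\in C_\mapsto(\widetilde B,\widetilde Q)$ and $\widetilde Q.\widetilde w\in L_\mapsto(\sigma(\widetilde B))$. Then rule~\ref{copymorphism_princ_doors} provides $(\sigma(B),P)\in Pot(D_\phi)$ and $t$ with $\phi(\sigma(B),P,t)=(\sigma(B'),P',t')$, and the boxwise rule~(\ref{copymorphism_boxwise}) together with the injectivity rule~\ref{copymorphism_phi_division} forces $\widetilde B$ to be the deepest box containing $\sigma(B)$ with principal door in $D_\phi$ and forces $P=\widetilde Q.\widetilde w$. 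Since $\rho(\sigma(B))=\widetilde B$ and $\widetilde Q.\widetilde w\in L_\mapsto(\widetilde B)$, we get $P\in L_\mapsto(\sigma(B))$, i.e.\ $(B,P)\in Can(B_G)$, which finishes the induction.

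I expect the main obstacle to be exactly this last identification: upgrading the preimage $(\sigma(B),P,t)$ produced by rule~\ref{copymorphism_princ_doors} from ``$P$ is a potential in $Pot(D_\phi)$'' to ``$P$ is canonical''. This means propagating the ``being a copy'' property backward along $\phi$ not just for the innermost signature (where Lemma~\ref{lemma_copymorphism_revcons_copy_context} applies) but for every signature of $P$, which needs a careful interplay of the boxwise property, the $\Subset$-preservation rule~\ref{copymorphism_subset_cons}, the injectivity rule~\ref{copymorphism_phi_division}, and rule~\ref{copymorphism_phi_copy_not_empty} for boxes whose principal door is not in $D_\phi$; the degenerate depth-$0$ cases of the boxwise rule also require some bookkeeping.
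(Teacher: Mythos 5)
Your induction skeleton is the same as the paper's (induct on $\partial(B')$, apply the induction hypothesis to the immediately enclosing box, use rule~\ref{copymorphism_princ_doors} to get some preimage of $(\sigma(B'),P',t')$, then pin down its potential), but two of the identifications you rely on are contradicted by the concrete copymorphisms built in the proof of Theorem~\ref{dallago_weight}. First, standardness does \emph{not} transfer backwards along $\phi$: in the digging case one has $\phi(\sigma(B),P,\sigp(t))=(\sigma(B_2),P,t)$, a standard image signature with a non-standard preimage. Lemma~\ref{copymorphism_phi_standard} only gives the forward direction, and your combination of rules~\ref{copymorphism_subset_cons} and~\ref{copymorphism_phi_division} cannot rule out the branch of $\Subset$ in which the image of the simplification lands on a strictly larger box --- which is exactly what happens for digging. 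So $\widetilde w$ need not be a copy, and $\widetilde Q.\widetilde w$ need not be canonical. Second, the identification $P=\widetilde Q.\widetilde w$ is false in general: for $B'=B_1$ in the digging case the preimage of $(\sigma(B_1),\widetilde{Q'}.w',t')$ is $(\sigma(B),\widetilde Q,\sign(t',w'))$ with $\widetilde B=B$, i.e.\ you are in the \emph{equality} branch of $(B,P_1,t)\Subset(\widetilde B,\widetilde Q,\widetilde w)$, which your argument never treats, and the canonical potential is $\widetilde Q$, one entry shorter than your candidate; conversely, in the dereliction case the opened box has its principal door outside $D_\phi$, so between the deepest enclosing box with door in $D_\phi$ and $B$ there are boxes invisible to $\phi$ and the canonical potential of $B$ is strictly \emph{longer} than $\widetilde Q.\widetilde w$. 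The paper resolves both issues differently: it shows (via $\Subset$-preservation and rule~\ref{copymorphism_boxwise}) which box is the deepest one containing $\sigma(B)$ with door in $D_\phi$, then \emph{completes} the inherited prefix into a canonical potential using rule~\ref{copymorphism_phi_copy_not_empty}, and uses the boxwise rule to see that changing the completion does not change the image. That completion step is absent from your inductive step, and it is exactly what you flagged as ``the main obstacle''.

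The base case has the same disease. It is not true that $\partial(B)=0$ when $\partial(B')=0$: in the dereliction step, a box sitting inside the opened box of $G$ (whose principal door is \emph{not} in $D_\phi$) is mapped to a box of $H$ at depth one less, so the preimage can live at positive depth. Rule~\ref{copymorphism_phi_copy_not_empty} does not exclude such enclosing boxes --- it only guarantees that they have copies, which is precisely what makes the completion possible --- and the mere fact that $\phi(\sigma(B),P,t)$ is defined puts no constraint on them, since $\phi$ is defined on all of $Pot(D_\phi)\times Sig$. What your boxwise argument does establish is only that no box containing $\sigma(B)$ has its principal door in $D_\phi$; the correct conclusion is not that $P=[~]$, but that every slot of the potential of $\sigma(B)$ concerns a box with door outside $D_\phi$, so rule~\ref{copymorphism_phi_copy_not_empty} lets you choose a canonical potential and the limit case of rule~\ref{copymorphism_boxwise} guarantees the image is unchanged by that choice --- again the completion argument your proposal is missing.
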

\begin{proof}
  We prove this by induction on the depth of $B'$. Let us assume that the property is true for every $(C',Q') \in Pot(B_H)$ with $\partial(C') < \partial(B')$. Let $C'$ be the deepest box containing $B'$, and $u'$ the exponential signature such that $Q'.u'=P'$. Then, by induction hypothesis, there exists $u \in Sig$ and $(\sigma(C),Q) \in Can(B_G)$ such that $\phi(\sigma(C),Q,u)=(\sigma(C'),Q',u')$. 

Let $(\sigma(B),P_1,t) \in \phi^{-1}(\sigma(B'),P',t')$. $(B',P',t') \Subset (C',Q',u')$ so $(B,P_1,t) \Subset (C,Q,u)$. We prove by contradiction that $C$ is the deepest box containing $B$ such that $\sigma(C) \in D_{\phi}$. Let $D$ be the deepest box containing $B$ such that $\sigma(D) \in D_{\phi}$ and let us suppose that $D \subset C$. Then we extend $Q.u$ in a potential for $D$, $R$. Let $(\sigma(D'),R',v')=\phi(\sigma(D),R,v)$. $(D,R,v) \Subset (C,Q,u)$ so, according to rule \ref{copymorphism_subset_cons}, $(D',R',v') \Subset (C',Q',u)$. $u$ is a $\mapsto$-copy so is standard so $D' \subset C'$. According to rule \ref{copymorphism_boxwise}, $\rho(B')=D'$ which contradicts our hypothesis that $\rho(B)=C'$. So $C$ is the deepest box containing $B$ such that $\sigma(C) \in D_{\phi}$.

Thanks to rule \ref{copymorphism_phi_copy_not_empty}, we can complete $Q.u$ in a canonical potential $P$ for $B$. And, thanks to rule \ref{copymorphism_boxwise}, $\phi(\sigma(B),P,t)=\phi(\sigma(B),P_1,t)=(B',P',t')$.
\end{proof}

Let us suppose that there exists a copymorphism between two proof-nets $G$ and $H$. If we compute the differences of weights between $G$ and $H$ ($W_G-W_H$ and $T_G-T_H$), there are many simplifications. So, the differences depends mostly on the edges of $G$ (resp $H$) which are not in $D_{\phi}$ (resp $D_{\phi}'$). In the cut reduction rules, most of the edges are in $D_{\phi}$ or $D_{\phi}'$, so we have only a few edges to consider to compute those differences.

We will separate the weight $T_G$ into two subweights $T_G=T^1_G+2\cdot T^2_G$ with $T^1_G=\sum_{e \in E_G}|L_{\mapsto}(e)|$ and $T^2_G=\sum_{ \substack{B \in B_G \\ P \in L_{\mapsto}(B)\\ t \in Si_{\mapsto}(e,P)}}D_G(B).|t|$.

\begin{align*}
  T_G^1-T_H^1 &= \sum_{e \in E_G}|L_{\mapsto}(e)| - \sum_{e \in E_H}|L_{\mapsto}(e)|\\
  &= \sum_{\substack{e \in E_G \cap D_{\phi}\\ P \in L_{\mapsto}(e)}}1 + \sum_{e\in E_G \cap \overline{D_{\phi}}}|L_{\mapsto}(e)| - \sum_{\substack{f \in E_H \cap D_{\phi}' \\ Q \in L_{\mapsto}(f)}}1 -\sum_{f \in E_H \cap \overline{D_{\phi}'}}|L_{\mapsto}(f)|\\
  &= \sum_{\substack{e \in D_\phi\\ P \in L_{\mapsto}(e)}}1 - \sum_{\substack{f \in D'_\phi \\ Q \in L_{\mapsto}(f)}}|\Set{(e,P,t) }{ \phi(e,P,t)=(f,Q,\sige)}| + \sum_{e\in  \overline{D_{\phi}}}|L_{\mapsto}(e)| -\sum_{f \in  \overline{D_{\phi}'}}|L_{\mapsto}(f)| \\
  &= \sum_{\substack{e \in D_\phi \\ P \in L_{\mapsto}(e)}}1 - \sum_{\substack{e \in D_\phi \\ P \in L_{\mapsto}(e)}}|\Set{ t }{ \phi(e,P,t)=(\_,\_, \sige) }| + \sum_{ e \in  \overline{D_{\phi}}}|L_{\mapsto}(e)| -\sum_{f \in  \overline{D_{\phi}'}}|L_{\mapsto}(f)| \\
  T_G^1-T_H^1&= \sum_{\substack{e \in D_\phi \\ P \in L_{\mapsto}(e)}}1-|\Set{ t }{ \phi(e,P,t)=(\_,\_, \sige)}|+ \sum_{e\in  \overline{D_{\phi}}}|L_{\mapsto}(e)| -\sum_{f \in  \overline{D_{\phi}'}}|L_{\mapsto}(e)| \\
\end{align*}

For the transformation between the second and third line, notice that we have $|\Set{(e,P,t)}{\phi(e,P,t)=(f,Q,\sige)}|=1$ for every $(f,Q) \in Can(D'_{\phi})$ because of the rule \ref{copymorphism_phi_division} of the definition of copymorphisms. The transformation between the third and fourth line, use Corollaries \ref{lemma_copymorphism_cons_canonical} and \ref{lemma_copymorphism_cons_canonical2}.

The formula of the last line may seem more complex than the first line. However, when we will use this formula, we will notice that most potential edges of $G$ will have exactly one image in $Pot(E_H)$ with exponential signature $\sige$. So, we will immediately notice that most of the terms of the sum are equal to 0.

\begin{align*}
  T_G^2 - T_H^2 & = \sum_{\substack{B \in B_G \\ P \in L_{\mapsto}(B) \\ s \in Si_{\mapsto}(B,P)}}D_G(B).|s| - \sum_{\substack{B \in B_H \\ Q \in L_{\mapsto}(B) \\ t \in Si_{\mapsto}(B,Q)}}D_H(B).|t| \\
  T_G^2 - T_H^2 & = \sum_{\substack{\sigma(B) \in D_{\phi} \\ P \in L_{\mapsto}(B) \\ s \in Si_{\mapsto}(B,P)}}D_G(B).|s| + \sum_{\substack{\sigma(B) \in \overline{D_{\phi}} \\ P \in L_{\mapsto}(B) \\ s \in Si_{\mapsto}(B,P)}}D_G(B).|s| - \sum_{\substack{B \in B_H \\ Q \in L_{\mapsto}(B) \\ t \in Si_{\mapsto}(B,Q)}}D_H(B).|t| \\
  T_G^2 - T_H^2 & \geq \sum_{\substack{\sigma(B) \in D_{\phi} \\ P \in L_{\mapsto}(B) \\ s \in Si_{\mapsto}(B,P)\\(C,Q,t)=\phi(B,P,s) }}D_G(B).|s| - \sum_{\substack{B \in B_H \\ Q \in L_{\mapsto}(B) \\ t \in Si_{\mapsto}(B,Q)}}D_H(B).|t| + \sum_{\substack{\sigma(B) \in \overline{D_{\phi}} \\ P \in L_{\mapsto}(B) \\ s \in Si_{\mapsto}(B,P)}}D_G(B).|s|\\
  T_G^2 - T_H^2  & \geq  \sum_{\substack{\sigma(B) \in D_{\phi} \\ P \in L_{\mapsto}(B) \\ s \in Si_{\mapsto}(B,P)\\ (C,Q,t)=\phi(B,P,s)}}D_G(B).|s|-D_H(C).|t| + \sum_{\substack{\sigma(B) \in \overline{D_{\phi}} \\ P \in L_{\mapsto}(B) \\ s \in Si_{\mapsto}(B,P)}}D_G(B).|s|
\end{align*}

Here, we use the fact that every $(\sigma(B'),P') \in Pot(B_H)$ are images of $(\sigma(B),P) \in Pot(B_G)$ (rule \ref{copymorphism_princ_doors} of the definition of copymorphisms). Similar to the case $T_G^1$, the formula of the second line may seem complex, but most of the terms of the left sum will be equal to $0$.

\begin{theorem}[Dal Lago's weight theorem]\label{dallago_weight}
  Suppose that $G \rightarrow_{cut} H$, $T_H$ is finite and $H$ has positive weights then $T_G > T_H$, $T_G$ is finite and $G$ has positive weights.
\end{theorem}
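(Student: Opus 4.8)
The plan is to reduce the theorem to a single construction: for every cut-elimination step $G \rightarrow_{cut} H$, exhibit a copymorphism $(D_\phi, D_\phi', \phi, \psi)$ from $G$ to $H$ (Definition \ref{def_copymorphism}), and then read off the weight comparison from the two difference formulas for $T_G^1 - T_H^1$ and $T_G^2 - T_H^2$ established just before the statement. Everything else has been prepared: Corollaries \ref{lemma_copymorphism_cons_canonical} and \ref{lemma_copymorphism_cons_canonical2} already give the bijective-like correspondence between canonical potentials of $G$ and of $H$ induced by a copymorphism.

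\textbf{Step 1: building the copymorphism.} I would proceed by case analysis on the reduction rule fired (Figures \ref{mult_rules}, \ref{exp_rules}, \ref{quant_rules}). In every case $D_\phi$ is the set of edges of $G$ surviving the step canonically (all edges except the handful directly consumed by the cut and its redex), $D_\phi'$ the corresponding set in $H$, and away from the redex $\phi$ is the identity on potentials and exponential signatures, with potential prefixes shifted only where a box is created, merged, or duplicated. The delicate cases are the exponential ones. For the $?C$/box rule the box $B$ is split into $B_1$ (indexed by $\sigl$) and $B_2$ (indexed by $\sigr$): an edge $e$ of $B$ with potential $P.\sigl(s)@Q$ is sent into the copy $B_1$ with potential $P'.s@Q$, symmetrically for $\sigr$; this is exactly what makes clauses \ref{copymorphism_boxwise} and \ref{copymorphism_path_cons} hold, using the context-semantics rules for $?C$ and the stability of canonicity along paths (Lemma \ref{canonical_context_stable}). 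For the $?N$/digging rule the translation on signatures follows the $\sign(\_,\_)$ versus $\sigp(\_)$ discipline and the order $\sqsubseteq$ of Section \ref{timecomplexity}, so that clauses \ref{copymorphism_subset_cons} and \ref{copymorphism_sqsubset} hold through the definition of $\Subset$. For the $?D$ and $?W$ rules the box disappears and $\phi$ cannot reach the created derelictions/weakenings, so one introduces $\psi$ as motivated by Figure \ref{der_copymorphism}, sending a final context of $G$ lying on a deleted edge to the set of contexts reaching it after $k$ steps, with $k$ chosen so that clauses \ref{copymorphism_psi_j} and \ref{copymorphism_psi_k} hold; the ``final contexts are final'' clauses \ref{copymorphism_final_defined} and \ref{copymorphism_final_cons} then follow from Definition \ref{def_finalcontext} and Lemma \ref{lemma_copy_context_final_context}. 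The multiplicative, axiom, $\S$ and $\forall/\exists$ cases are routine: the $\forall/\exists$ case uses that the whole-net substitution of the eigenvariable is precisely what $ztree$ records, so underlying formulae and hence path behaviour are unchanged.

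\textbf{Step 2: the weight comparison.} Given the copymorphism, I would plug into the identity for $T_G^1 - T_H^1$ and the inequality for $T_G^2 - T_H^2$ derived above. In each reduction case all but finitely many terms cancel: almost every potential edge of $G$ has a unique image in $Pot(E_H)$ with exponential signature $\sige$ (contributing $1 - 1 = 0$ to the first difference), and almost every canonical potential box is mapped isometrically, $|s| = |t|$ with $D_G(B) = D_H(C)$ (contributing $0$ to the second difference). What remains is the edges and doors genuinely created, merged or deleted by the step. A short per-rule bookkeeping --- for $?C$/box one counts the new $?C$ and $cut$ links and the duplication of the doors of $B$, for $?D$/box one counts the edge killed against the $g'_i$ created, for $?N$/digging one compares an $n$-signature with its $p$-reduct --- shows $T_G^1 - T_H^1 \geq 0$ and $T_G^2 - T_H^2 \geq 0$ with at least one strict, hence $T_G > T_H$.

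\textbf{Step 3: finiteness and positive weights.} Finiteness of $T_G$ follows once the defining sums are finite: the copymorphism maps $C_\mapsto(B,P)$ into $C_\mapsto(B',P')$ (Lemmas \ref{lemma_copymorphism_cons_copy_context}, \ref{lemma_copymorphism_cons_copy}) and, conversely, every copy of $H$ is hit (Corollary \ref{lemma_copymorphism_cons_canonical2}, Lemma \ref{lemma_copymorphism_revcons_copy}), so each summand of $T_G$ is finite whenever the corresponding summand of $T_H$ is, and there are finitely many of them. For positive weights, given $(B,P) \in Pot(B_G)$ with $\sigma(B) \in D_\phi$, a copy $t' \in C_\mapsto(B',P')$ (non-empty since $H$ has positive weights) pulls back through $\phi$ to a copy of $(B,P)$ by Lemma \ref{lemma_copymorphism_revcons_copy_context}, and $C_\mapsto(B,P)$ is finite because distinct copies of $(B,P)$ cannot share an image (clause \ref{copymorphism_phi_division} together with $\preccurlyeq$); when $\sigma(B) \notin D_\phi$, clause \ref{copymorphism_phi_copy_not_empty} gives non-emptiness directly and finiteness is inherited from a surviving box. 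I expect \textbf{Step 1 --- verifying all clauses of the copymorphism for the exponential rules, especially $?C$/box duplication and $?N$/digging --- to be the main obstacle}, since this is where the interplay between the path rewriting rules, the order $\sqsubseteq$, and the auxiliary map $\psi$ must be handled with care; Steps 2 and 3 are then essentially bookkeeping on top of the lemmas already proved.
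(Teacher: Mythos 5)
Your proposal follows the paper's proof essentially verbatim: a per-rule construction of a copymorphism $(D_\phi,D_\phi',\phi,\psi)$ (identity away from the redex, $\sigl/\sigr$ splitting for the $?C$/box step, the $\sign/\sigp$ discipline for digging, $\psi$ for the $?D$ and $?W$ steps), followed by plugging into the pre-computed expressions for $T_G^1-T_H^1$ and $T_G^2-T_H^2$, and the same treatment of finiteness and positive weights via Lemmas \ref{lemma_copymorphism_revcons_copy_context}--\ref{lemma_copymorphism_revcons_copy} and clause \ref{copymorphism_phi_copy_not_empty}.

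One point in your Step 2 is wrong as stated and would make the bookkeeping fail if you insisted on it: it is not true that $T_G^1-T_H^1\geq 0$ and $T_G^2-T_H^2\geq 0$ in every case. For the $?C$/box step the paper computes $T_G^1-T_H^1=-2\cdot k\cdot|L_{\mapsto}(\sigma(B_0))|<0$ (the duplicated auxiliary edges $b_i,c_i$ make the first sum grow), and similarly the digging step gives a strictly negative first difference; the decrease of the total weight comes from $T_G^2-T_H^2$, and only after multiplying by the coefficient $2$ in $T_G=T_G^1+2\cdot T_G^2$ (together with the door count $D_G(B)=k+1$) does one get $T_G-T_H\geq 2|L_{\mapsto}(\sigma(B))|>0$. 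So the correct conclusion of the per-rule bookkeeping is that the weighted combination decreases, not that each component does; this is precisely why the factor $2$ and the factor $D_G(B)$ appear in Definition \ref{def_tg}. With that correction your argument coincides with the paper's.
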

\begin{proof}
  We will examine every rule of reduction. For each rule, we will exhibit a copymorphism to prove the inequalities. We will only present the non trivial images. On the drawings, the edges of $G$ (resp. $H$) which are not in $D_{\phi}$ (resp. $D_{\phi}'$) will be drawn dashed. The edges of $D_{\phi}$ which have different images by $\phi$ in $E_H$ depending on the associated potential and signature, will be drawn thicker. 

  Let us consider a box of $B$. In most of the cases, $\sigma(B)$ will be in $D_\phi$. The copies of $(B,P)$ will correspond, via $\phi$ to distinct copies of potential boxes of $H$. $H$ having positive weights, there is a non null and finite number of those copies so there is a non null and finite number of copies of $(B,P)$. For the cases where $\sigma(B) \not \in D_\phi$: in the weakening and dereliction cases, it is straightforward that the box disappearing has exactly one copy: $\sige$. In the case of the box fusion, the copies of the deleted box correspond exactly to the copies of the fused box (in $H$) so there is a non null, finite number of copies.
  
  The difference between $T_G$ and $T_H$ can always be expressed as sums over sets of canonical potentials or sets of copies. Those are finite sums, because $G$ and $H$ are supposed to have positive weights.

  \begin{itemize}
  \item 
    %\tikzset{external/remake next}
    \begin{tikzpicture}[baseline=1.5cm]
        \draw (0,0) node (cut) {$cut$};
        \draw (cut) ++ (-1,1) node (par) {$\parr$};
        \draw (cut) ++ ( 1,1) node (tens){$\otimes$};
        \draw (par) ++ (120:1) node (restgg) {};
        \draw (par) ++ (60 :1) node (restgd) {};
        \draw (tens) ++ (120:1) node (restdg) {};
        \draw (tens) ++ (60 :1) node (restdd) {};
        
        \draw [->, out=-90, in=30, dashed] (tens) to  node [right, midway] {$d$} (cut);
        \draw [->,out=-90, in=150, dashed] (par) to node [left, midway] {$c$} (cut) ;
        \draw [<-] (tens) -- (restdg) node [left,midway] {$e$};
        \draw [<-] (tens) -- (restdd) node [right,midway] {$f$};
        \draw [<-] (par) -- (restgg) node [left,midway] {$a$};
        \draw [<-] (par) -- (restgd) node [right, midway] {$b$};
        
        \draw (restgg) ++ (6,0) node (restegg) {};
        \draw (restgd) ++ (6,0) node (restegd) {};
        \draw (restdg) ++ (6,0) node (restedg) {};
        \draw (restdd) ++ (6,0) node (restedd) {};
        
        \draw (par) ++ (6.4,0) node (cutg) {$cut_A$};
        \draw (tens) ++ (5.6,0) node (cutd) {$cut_B$};
        
        \draw [->,out=-60, in=150] (restegg) to node [left, midway] {$a$} (cutg);
        \draw [<-,out=30, in=-120] (cutg) to node [right, near end] {$e$} (restedg);
        \draw [->,out=-60, in = 150] (restegd) to node [left, near start] {$b$} (cutd);
        \draw [<-,out=30, in = -120] (cutd) to node [right, midway] {$f$}(restedd);
    \end{tikzpicture}

    All edges except $c$ and $d$ are in $D_\phi$. For every $(e,P) \in Pot(D_\phi)$ and $t \in Sig$, $\phi(e,P,t)=(e,P,t)$.
    
    \begin{align*}
      T_G^1-T_H^1 &= \sum_{\substack{e \in D_\phi \\ P \in L_{\mapsto}(e)}}1-|\Set{ t }{ \phi(e,P,t)=(\_,\_, \sige)}|+ \sum_{e\in  \overline{D_{\phi}}}|L_{\mapsto}(e)| -\sum_{f \in  \overline{D_{\phi}'}}|L_{\mapsto}(e)| \\
      T_G^1-T_H^1 &= \sum_{\substack{e \in D_\phi \\ P \in L_{\mapsto}(e)}}(1-1)+ \sum_{e\in \{c,d\}}|L_{\mapsto}(e)| -\sum_{f \in \varnothing}|L_{\mapsto}(e)| \\
      T_G^1-T_H^1 &= |L_{\mapsto}(c)|+|L_{\mapsto}(d)|
    \end{align*}
    
    \begin{align*}
      T_G^2 - T_H^2 & =  \sum_{\substack{\sigma(B) \in D_{\phi} \\ P \in L_{\mapsto}(B) \\ s \in Si_{\mapsto}(B,P)\\ (f,Q,t)=\phi(B,P,s)}}D_G(B).|s|-D_H(f).|t| + \sum_{\substack{\sigma(B) \in \overline{D_{\phi}} \\ P \in L_{\mapsto}(B) \\ s \in Si_{\mapsto}(B,P)}}D_G(B).|s| \\
      T_G^2 - T_H^2 & =  \sum_{\substack{\sigma(B) \in D_{\phi} \\ P \in L_{\mapsto}(B) \\ s \in Si_{\mapsto}(B,P)\\ (f,Q,t)=\phi(B,P,s)}}0 + \sum_{\substack{B \in \varnothing \\ P \in L_{\mapsto}(B) \\ s \in Si_{\mapsto}(B,P)}}D_G(B).|s| \\
      T_G^2 - T_H^2 & = 0
    \end{align*}
    
    So $T_G > T_H$.
    
  \item 
    %\tikzset{external/remake next}
    \begin{tikzpicture}[baseline=0.8cm]
      \tikzstyle{door}=[draw, circle, inner sep=0]
      
      \begin{scope}[shift={(0,0)}]
        \draw (0,0) node (bang) [princdoor, very thick] {};
        \draw (bang) ++ (-0.8,0) node  [auxdoor, very thick] (auxd) {};
        \draw (auxd) ++ (-0.8,0) node [auxdoor, very thick] (auxg) {};
        \draw [very thick] (auxd) -- (bang) -|++ (0.5,1) --++(-2.6,0) |- (auxg);
        \draw (bang) ++ (0.5,-0.2) node {$B_0$};
        \draw [dotted] (auxg) -- (auxd);
        \draw (bang) ++ (0.7,-0.8) node (cut) {$cut$};
        \draw (bang) ++ (1.4,0) node (cont) {$?C$};
        \draw (cont) ++ (60:1) node (contd) {};
        \draw (cont) ++ (120:1)node (contg) {};
        \draw (auxg) ++ (0,-1) node (restg) {};
        \draw (auxd) ++ (0,-1) node (restd) {};
        \draw [->] (bang) to [out=-90, in=150] (cut);
        \draw [->, dashed, out=-90, in=30] (cont) to node [midway, below] {$d$} (cut);
        \draw [<-] (cont) -- (contg) node [midway, left]  {};
        \draw [<-] (cont)--(contd) node [midway, right] {};
        \draw [->] (auxg)--(restg) node [midway, left] {$a_1$};
        \draw [->] (auxd)--(restd) node [midway, left] {$a_k$};

        \draw (bang) ++ (6,0)  node (bang1) [door] {$!P$};
        \draw (bang1) ++ (-0.8,0) node  [door] (auxd1) {$?P$};
        \draw (auxd1) ++ (-0.8,0) node [door] (auxg1) {$?P$};
        \draw (auxd1) -- (bang1) -|++ (0.5,1) --++(-2.4,0) |- (auxg1);
        \draw (bang1) ++ (0.5,-0.2) node {$B_l$};
        \draw [dotted] (auxg1) -- (auxd1);
        
        \draw (bang1) ++ (3,0)  node (bang2) [door] {$!P$};
        \draw (bang2) ++ (-0.8,0) node  [door] (auxd2) {$?P$};
        \draw (auxd2) ++ (-0.8,0) node [door] (auxg2) {$?P$};
        \draw (auxd2) -- (bang2) -|++ (0.5,1) --++(-2.4,0) |- (auxg2);
        \draw (bang2) ++ (0.5,-0.2) node {$B_r$};
        \draw [dotted] (auxg2) -- (auxd2);
        
        \draw (auxg1) ++ (-70:2) node (contg) {$?C$};
        \draw (auxd1) ++ (-70:2) node (contd) {$?C$};
        \draw (contg) ++ (0,-0.8) node (restg) {};
        \draw (contd) ++ (0,-0.8) node (restd) {};
        \draw [->, dashed] (auxg1) -- (contg) node [midway, left] {$b_1$};
        \draw [->, dashed] (auxd1) -- (contd) node [midway, left] {$b_k$};
        \draw [->] (contg) -- (restg) node [midway, left] {$a_1$};
        \draw [->] (contd) -- (restd) node [midway, left] {$a_k$};
        \draw [->, dashed] (auxg2) -- (contg) node [midway, left] {$c_1$};
        \draw [->, dashed] (auxd2) -- (contd) node [midway, left] {$c_k$};
        \draw [dotted] (contg)--(contd);
        
        \draw (bang2) ++ (0.8,0.8) node (contg) {};
        \draw (bang2) ++ (1.5,0.8) node (contd) {};
        \draw (bang2) ++ (0.8,-0.8) node (cut2) {$cut$};
        \draw (bang1) ++ (2,-0.8) node (cut1) {$cut$};
        \draw [->] (bang1) to [out=-60, in=180] (cut1);
        \draw [<-, out=0, in=-90] (cut1) to node [near end, right] {} (contg);
        \draw [->, out=-90, in=160] (bang2) to (cut2);
        \draw [<-, out=30, in=-90] (cut2) to node [near end, right] {} (contd);
      \end{scope}
    \end{tikzpicture}
    
    The only particuliar case is for $\sigma(B_0)$. We set: $\phi(\sigma(B_0),P,\sigl(s)) = (\sigma(B_l), P, s)$ and $\phi(\sigma(B_0),P,\sigr(s)) = (\sigma(B_r), P, s)$. In the other cases, $(\sigma(B_0),P,s)$ will be outside the domain of $\phi$. Edges inside $B_0$ will be separated between $B_l$ and $B_r$ according to the exponential signature corresponding to $B_0$ in their potential: $\phi(e,P.\sigl(t)@Q,u)=(e_l,P.t@Q,u)$ and $\phi(e,P.\sigr(t)@Q,u)=(e_r,P.t@Q,u)$ (with $e_l$ and $e_r$ the edges corresponding to $e$ in $B_l$ and $B_r$).
    
    \begin{align*}
      T_G^1-T_H^1 &= \sum_{\substack{e \in D_\phi \\ P \in L_{\mapsto}(e)}}1-|\Set{ t }{ \phi(e,P,t)=(\_,\_, \sige) }|+ \sum_{e\in  \overline{D_{\phi}}}|L_{\mapsto}(e)| -\sum_{f \in  \overline{D_{\phi}'}}|L_{\mapsto}(e)| \\
      T_G^1-T_H^1 &= \left( \sum_{P \in L_{\mapsto}(\sigma(B_0))}1-2 \right )+ |L_{\mapsto}(d)| -\sum_{i=1}^{i \leq k}|L_{\mapsto}(b_i)|-\sum_{i=1}^{i \leq k}|L_{\mapsto}(c_i)| \\
      T_G^1-T_H^1 &= -|L_{\mapsto}(\sigma(B_0))| + |L_{\mapsto}(\sigma(B_0))| -2 \cdot k |L_{\mapsto}(\sigma(B_0))| \\
      T_G^1-T_H^1 &= -2\cdot k \cdot |L_{\mapsto}(\sigma(B_0))| \\
    \end{align*}
    
    \begin{align*}
      T_G^2 - T_H^2 & =  \sum_{\substack{\sigma(B) \in D_{\phi} \\ P \in L_{\mapsto}(B) \\ s \in Si_{\mapsto}(B,P)\\ (f,Q,t)=\phi(B,P,s)}}D_G(B).|s|-D_H(f).|t| + \sum_{\substack{\sigma(B) \in \overline{D_{\phi}} \\ P \in L_{\mapsto}(B) \\ s \in Si_{\mapsto}(B,P)}}D_G(B).|s| \\
      T_G^2 - T_H^2 & =  \sum_{\substack{P \in L_{\mapsto}(\sigma(B_0)) \\ s \in Si_{\mapsto}(B_0,P)\\ (f,Q,t)=\phi(\sigma(B_0),P,s)}} (k+1).|s|-(k+1).(|s|-1) + \sum_{\substack{B \in \varnothing \\ P \in L_{\mapsto}(B) \\ s \in Si_{\mapsto}(B,P)}}D_G(B).|s| \\
      T_G^2 - T_H^2 & =  \sum_{\substack{P \in L_{\mapsto}(\sigma(B)) \\ s \in Si_{\mapsto}(B,P)}} (k+1) + 0 \\
      T_G^2 - T_H^2 & = (k+1) \cdot \sum_{P \in L_{\mapsto}(\sigma(B))}|Si_{\mapsto}(B,P)|
    \end{align*}
    \begin{align*}
      T_G - T_H & = -2\cdot k \cdot |L_{\mapsto}(\sigma(B))| + 2\cdot (k+1) \cdot  \sum_{P \in L_{\mapsto}(\sigma(B))}|Si_{\mapsto}(B,P)| \\
      T_G - T_H & \geq -2 \cdot k \cdot |L_{\mapsto}(\sigma(B))| + 2 \cdot (k+1) \cdot \sum_{P \in L_{\mapsto}(\sigma(B))}1 \\
      T_G - T_H & \geq 2 |L_{\mapsto}(\sigma(B)| > 0
    \end{align*}

  \item 
    %\tikzset{external/remake next}
    \begin{tikzpicture}[baseline=0.9cm]
      \tikzstyle{door}=[draw, circle, inner sep=0]
        \draw (0,0) node (bang) [door] {$!P$};
        \draw (bang) ++ (-0.8,0) node  [door] (auxd) {$?P$};
        \draw (auxd) ++ (-1,0) node [door] (auxg) {$?P$};
        \draw (auxd) -- (bang) -|++ (0.3,1.25) --++(-2.4,0) |- (auxg);
        \draw [dotted] (auxg) -- (auxd);
        \draw (bang) ++ (0.5,0) node {$B_0$};
        \draw (bang) ++ (0.7,-0.8) node (cut) {$cut$};
        \draw (bang) ++ (1.4,0) node (der) {$?D$};
        \draw (der) ++ (0,1)node (dered) {};
        \draw (auxg) ++ (0,-0.7) node (restg) {};
        \draw (auxd) ++ (0,-0.7) node (restd) {};
        \draw [->,out=-90, in=150, dashed] (bang) to (cut);
        \draw [->,out=-90, in= 30, dashed] (der) to node [edgename,below right] {$c$} (cut);
        \draw [<-] (der) -- (dered) node [midway, right] {$b$};
        \draw [->] (auxg)--(restg) node [edgename] {$d_1$};
        \draw [->] (auxd)--(restd) node [edgename] {$d_k$};
        \draw [<-] (bang) --++ (0,0.5) node (hbang) {} node [midway, left] {$a$};
        \draw [<-] (auxd) --++ (0,0.5) node (hauxd) {};
        \draw [<-] (auxg) --++ (0,0.5) node (hauxg) {};
        \draw [dotted] (hbang) -|++ (0.2,0.6) -| ($(hauxg)+(-0.2,0)$) -- cycle;
        
        \draw (auxg) ++ (6,0) node (derg) {$?D$};
        \draw (auxd) ++ (6,0) node (derd) {$?D$};
        \draw (bang) ++ (6,0) node (bang) {};
        \draw (derg) ++ (0,-0.7) node (restg) {};
        \draw (derd) ++ (0,-0.7) node (restd) {};
        \draw [->] (derg)--(restg) node [edgename]{$d_1$};
        \draw [->] (derd)--(restd) node [edgename]{$d_k$};
        \draw [dotted] (derg)--(derd);
        \draw [<-] (derg) --++ (0,0.5) node (hderg) {};
        \draw [<-] (derd) --++ (0,0.5) node (hderd) {};        
        \draw (bang) ++ (0,0.5) node (hbang) {};
        \draw [dotted] (hbang) -|++ (0.2,0.6) -| ($(hderg)+(-0.2,0)$) -- cycle;
        \draw (hbang)++(0.5,-1) node (cut) {$cut$};
        \draw (hbang)++(1,0.4) node (rest) {};
        \draw [->, out=-90, in=140] (hbang) to node [near start, left] {$a$} (cut);
        \draw [->, out=-90, in= 50] (rest) to node [near start, right] {$b$} (cut);
    \end{tikzpicture}

    The only edges of $E_G$ which are not in $D_\phi$ are the premises of the $cut$. For the edges in $B_0$, we simply delete the exponential signature corresponding to $B_0$ in their potential: $\phi(e,P.t@Q,u)=(e,P@Q)$. Moreover, for every $(c,P) \in Pot(c)$, $(c,P,[\oc_{\sige}],-) \in F_H$ so $(c,P,[\oc_{\sige}],-) \in D_{\psi}$. We set $\psi(c,P,[\oc_{\sige}],-)= \{(d_1,P,[\oc_{\sige}],-),\cdots,(d_k,P,[\oc_{\sige}],-)\}$.
    \begin{align*}
      T_G^1-T_H^1 &= \sum_{\substack{e \in D_\phi \\ P \in L_{\mapsto}(e)}}1-|\Set{ t }{ \phi(e,P,t)=(\_,\_, \sige) }|+ \sum_{e\in  \overline{D_{\phi}}}|L_{\mapsto}(e)| -\sum_{f \in  \overline{D_{\phi}'}}|L_{\mapsto}(e)| \\
      T_G^1-T_H^1 &= \sum_{\substack{e \in D_\phi \\ P \in L_{\mapsto}(e)}}(1-1)+ 2 \cdot |L_{\mapsto}(\sigma(B))| -0 \\
      T_G^1-T_H^1 &= 2 \cdot |L_{\mapsto}(\sigma(B))|
    \end{align*}
    
    \begin{align*}
      T_G^2 - T_H^2 & =  \sum_{\substack{\sigma(B) \in D_{\phi} \\ P \in L_{\mapsto}(B) \\ s \in Si_{\mapsto}(B,P)\\ (f,Q,t)=\phi(B,P,s)}}D_G(B).|s|-D_H(f).|t| + \sum_{\substack{\sigma(B) \in \overline{D_{\phi}} \\ P \in L_{\mapsto}(B) \\ s \in Si_{\mapsto}(B,P)}}D_G(B).|s| \\
      T_G^2 - T_H^2 & =  \sum_{\substack{\sigma(B) \in D_{\phi} \\ P \in L_{\mapsto}(B) \\ s \in Si_{\mapsto}(B,P)\\ (f,Q,t)=\phi(B,P,s)}}0 + \sum_{\substack{P \in L_{\mapsto}(\sigma(B)) \\ s \in Si_{\mapsto}(B,P)}}D_G(\sigma(B)).|s| \\
      T_G^2 - T_H^2 & =  \sum_{\substack{P \in L_{\mapsto}(\sigma(B)) \\ s \in Si_{\mapsto}(B,P)}}D_G(\sigma(B)).|s| \\
    \end{align*}
    \begin{align*}
      T_G- T_H &= 2 |L_{\mapsto}(\sigma(B))| + 2  \sum_{\substack{P \in L_{\mapsto}(\sigma(B)) \\ s \in Si_{\mapsto}(B,P)}}D_G(\sigma(B)).|s| \\
      T_G- T_H &> 0
    \end{align*}
    
  \item 
    %\tikzset{external/remake next}
    \begin{tikzpicture}[baseline=0.6cm]
      \tikzstyle{door}=[draw, circle, inner sep=0]
        \draw (0,0) node (bang) [door] {$!P$};
        \draw (bang) ++ (0.5,0) node {$B_0$};
        \draw (bang) ++ (-0.7,0) node  [door] (auxd) {$?P$};
        \draw (auxd) ++ (-0.9,0) node [door] (auxg) {$?P$};
        \draw[dashed] (auxd) -- (bang) -|++ (0.3,0.7) --++(-2.2,0) |- (auxg);
        \draw [dotted] (auxg) -- (auxd);
        \draw (bang) ++ (0.5,-0.8) node (cut) {$cut$};
        \draw (bang) ++ (1.4,0) node (weak) {$?W$};
        \draw (auxg) ++ (0,-1) node (restg) {};
        \draw (auxd) ++ (0,-1) node (restd) {};
        \draw [->, dashed] (bang) to [out=-90, in=150] (cut);
        \draw [->, dashed] (weak) to [out=-90, in=30] node [edgename,right] {$c$} (cut);
        \draw [->] (auxg)--(restg) node [edgename] {$d_1$};
        \draw [->] (auxd)--(restd) node [edgename] {$d_k$};
        
        \draw (auxg) ++ (6,0) node (weakg) {$?W$};
        \draw (auxd) ++ (6,0) node (weakd) {$?W$};
        \draw (weakg) ++ (0,-1) node (restg) {};
        \draw (weakd) ++ (0,-1) node (restd) {};
        \draw [->] (weakg)--(restg) node [edgename] {$d_1$};
        \draw [->] (weakd)--(restd) node [edgename] {$d_k$};
        \draw [dotted] (weakg)--(weakd);
    \end{tikzpicture}
    
    Here, many edges of $E_G$ are not in $D_\phi$: the premises of the $cut$ and the edges inside the deleted box. For every $(c,P) \in Pot(e)$ and $T \in Tra$, $(c,P,[\oc_{\sige}]@T,-) \in F_G$ so $(c,P,[\oc_{\sige}]@T,-) \in D_\psi$. We set $\psi(c,P,[\oc_{\sige}]@T,-)=\{(d_1,P,[\oc_{\sige}]@T,-),\cdots,(d_k,P,[\oc_{\sige}]@T,-)\}$.

    \begin{align*}
      T_G^1-T_H^1 &= \sum_{\substack{e \in D_\phi \\ P \in L_{\mapsto}(e)}}1-|\Set{ t }{ \phi(e,P,t)=(\_,\_, \sige) }|+ \sum_{e\in  \overline{D_{\phi}}}|L_{\mapsto}(e)| -\sum_{f \in  \overline{D_{\phi}'}}|L_{\mapsto}(e)| \\
      T_G^1-T_H^1 &= \sum_{\substack{e \in D_\phi \\ P \in L_{\mapsto}(e)}}(1-1)+ \sum_{e\in  \overline{D_{\phi}}}|L_{\mapsto}(e)| -\sum_{f \in \varnothing}|L_{\mapsto}(e)| \\
      T_G^1-T_H^1 & =  \sum_{e\in  \overline{D_{\phi}}}|L_{\mapsto}(e)|\\
      T_G^1-T_H^1 & > 0
    \end{align*}
    
    \begin{align*}
      T_G^2 - T_H^2 & =  \sum_{\substack{\sigma(B) \in D_{\phi} \\ P \in L_{\mapsto}(B) \\ s \in Si_{\mapsto}(B,P)\\ (f,Q,t)=\phi(B,P,s)}}D_G(B).|s|-D_H(f).|t| + \sum_{\substack{\sigma(B) \in \overline{D_{\phi}} \\ P \in L_{\mapsto}(B) \\ s \in Si_{\mapsto}(B,P)}}D_G(B).|s| \\
      T_G^2 - T_H^2 & =  \sum_{\substack{\sigma(B) \in D_{\phi} \\ P \in L_{\mapsto}(B) \\ s \in Si_{\mapsto}(B,P)\\ (f,Q,t)=\phi(B,P,s)}}D_G(B).|s|-D_G(B).|t| + \sum_{\substack{P \in L_{\mapsto}(\sigma(B_0)) \\ s \in Si_{\mapsto}(B,P_0)}}D_G(\sigma(B_0)).|s| \\
      T_G^2-T_H^2 & =  0 + \sum_{\substack{P \in L_{\mapsto}(\sigma(B_0)) \\ s \in Si_{\mapsto}(B_0,P)}}D_G(\sigma(B_0)).|s| )\\
      T_G^2-T_H^2 & > 0
    \end{align*}
    
    Donc $T_G > T_H$
  \item 
    %\tikzset{external/remake next}
    \begin{tikzpicture}[baseline=1.5cm]
      \tikzstyle{door}=[draw, circle, inner sep=0]
      \begin{scope}
        \draw (0,0) node (cut) {$cut$};
        \draw (cut) ++ (-0.5,1) node (neutg) {$\S$};
        \draw (cut) ++ ( 0.5,1) node (neutd) {$\S$};
        \draw [<-] (neutg) --++ (0,0.8) node (restg) {};
        \draw [<-] (neutd) --++ (0,0.8) node (restd) {};
        \draw [->,out=-90, in=150, dashed] (neutg) to node [midway, left] {$a$} (cut);
        \draw [->,out=-90, in=30, dashed] (neutd) to node [midway, left] {$b$} (cut);
        
        \draw (restg) ++ (4,0) node (restg) {};
        \draw (restd) ++ (4,0) node (restd) {};
        \draw (cut) ++ (4,1) node (cut) {$cut$};
        \draw [->] (restg) to [out=-90, in =150] (cut);
        \draw [->] (restd) to [out=-90, in= 30] (cut);
      \end{scope}
    \end{tikzpicture}
    
    It is quite similar to the $\parr$/$\otimes$ case: we just delete two edges.
    \begin{align*}
      T_G^1-T_H^1 &= \sum_{\substack{e \in D_\phi \\ P \in L_{\mapsto}(e)}}1-|\Set{ t }{ \phi(e,P,t)=(\_,\_, \sige) }|+ \sum_{e\in  \overline{D_{\phi}}}|L_{\mapsto}(e)| -\sum_{f \in  \overline{D_{\phi}'}}|L_{\mapsto}(e)| \\
      T_G^1-T_H^1 &= \sum_{\substack{e \in D_\phi \\ P \in L_{\mapsto}(e)}}(1-1)+ \sum_{e\in  \overline{D_{\phi}}}|L_{\mapsto}(e)| -\sum_{f \in \varnothing \cap \overline{D_{\phi}'}}|L_{\mapsto}(e)| \\
      T_G^1-T_H^1 &= 0+  |L_{\mapsto}(a)|+|L_{\mapsto}(b)|- 0 \\
      T_G^1-T_H^1 &> 0 \\
    \end{align*}

    \begin{align*}
      T_G^2 - T_H^2 & =  \sum_{\substack{\sigma(B) \in D_{\phi} \\ P \in L_{\mapsto}(B) \\ s \in Si_{\mapsto}(B,P)\\ (f,Q,t)=\phi(B,P,s)}}D_G(B).|s|-D_H(f).|t| + \sum_{\substack{\sigma(B) \in \overline{D_{\phi}} \\ P \in L_{\mapsto}(B) \\ s \in Si_{\mapsto}(B,P)}}D_G(B).|s| \\
      T_G^2 - T_H^2 & =  \sum_{\substack{\sigma(B) \in D_{\phi} \\ P \in L_{\mapsto}(B) \\ s \in Si_{\mapsto}(B,P)\\ (f,Q,t)=\phi(B,P,s)}}0 + \sum_{\substack{B \in \varnothing \\ P \in L_{\mapsto}(B) \\ s \in Si_{\mapsto}(B,P)}}D_G(B).|s| \\
      T_G^2 - T_H^2 & = (0 + 0) = 0
    \end{align*}
    
    So $T_G > T_H$. The quantifier case and the axiom case are the same, so we will not present them.

  \item 
    %\tikzset{external/remake next}
    \begin{tikzpicture}[baseline=1.5cm]
        \draw (0,1) node [inner xsep=1cm] (G) {};
        \draw (G)++(-1 ,-1.2) node [auxdoor] (whyint1) {};
        \draw (G)++(0.1,-1.2) node [auxdoor] (whyint2) {};
        \draw (G)++(1,-1.2) node [princdoor] (bangint)   {};
        \draw[ar] (whyint1 |- G.south) -- (whyint1) node [midway, left] {$b_1$};
        \draw[ar] (whyint2 |- G.south) -- (whyint2) node [midway, left] {$b_k$};
        \draw[ar] (bangint |- G.south) -- (bangint);
        \draw (whyint2)--(bangint) -| ++(0.6,1.8) -| ($(whyint1)+(-0.6,0)$) -- (whyint1);
        \draw [dotted] (whyint1) -- (whyint2); 
        \draw (bangint) node [below right] {$B$};
        
        \draw (bangint) ++ (1.5,0.2) node [dig] (dig) {};
        \draw (bangint) ++(0.7,-0.5) node [cut] (cut) {};
        \draw [ar] (bangint) to [out=-90, in=180] (cut);
        \draw [ar,out=-90,in=0,dashed] (dig) to (cut);
        \draw [revar] (dig) --++ (0,0.7);
        \draw [ar] (whyint1) --++(0,-0.6) node [midway, left] {$a_1$};
        \draw [ar] (whyint2) --++(0,-0.6) node [midway, left] {$a_k$};
      \begin{scope}[shift={(7,0)}]
        %after cut
        \draw (0,1) node [inner xsep=1cm] (G) {};
        \draw (G)++(-1 ,-0.8) node [auxdoor] (whyint1) {};
        \draw (G)++(0.1,-0.8) node [auxdoor] (whyint2) {};
        \draw (G)++(1,-0.8) node [princdoor] (bangint)   {};
        \draw (bangint) ++ (0.5,0) node [below] {$B_1$};
        
        \draw[ar] (whyint1 |- G.south) -- (whyint1) node [midway, left] {$b_1$};
        \draw[ar] (whyint2 |- G.south) -- (whyint2) node [midway, left] {$b_k$};
        \draw[ar] (bangint |- G.south) -- (bangint);
        \draw (whyint2)--(bangint) -| ++(0.5,1.2) -| ($(whyint1)+(-0.5,0)$) -- (whyint1);
        \draw [dotted] (whyint1) -- (whyint2); 
        
        \draw (whyint1) ++(0,-1) node [auxdoor] (whyext1){};
        \draw (whyint2) ++(0,-1) node [auxdoor] (whyext2) {};
        \draw (bangint) ++(0,-1) node [princdoor] (bangext) {};
        \draw (bangext) ++ (0.5,0) [below] node {$B_2$};
        \draw[ar, dashed] (whyint1) -- (whyext1) node [midway, left] {$d_1$};
        \draw[ar, dashed] (whyint2) -- (whyext2) node [midway, left] {$d_k$};
        \draw[ar] (bangint) -- (bangext);
        \draw [dotted] (whyext1)--(whyext2);
        \draw (whyext2)--(bangext) -| ++(0.7,2.4) -| ($(whyext1)+(-0.7,0)$) -- (whyext1);
        \draw [dotted] (whyint1)--(whyint2);
        \draw (bangext) ++ (1.5,0) coordinate (dig);
        \draw ($(bangext)!0.5!(dig)$) ++(0,-0.5) node [cut] (cut) {};
        \draw [ar] (bangext) to [out=-90, in=180] (cut);
        \draw [ar,out=-90,in=0] (dig) to (cut);
        \draw (dig) --++ (0,0.7);
        \draw (whyext1) ++(0,-0.9) node [dig] (dig1) {};
        \draw (whyext2) ++(0,-0.9) node [dig] (digk) {};
        \draw [ar, dashed] (whyext1)--(dig1) node [midway, left] {$c_1$};
        \draw [ar, dashed] (whyext2)--(digk) node [midway, left] {$c_k$};
        \draw [ar] (dig1) --++(0,-0.7) node [midway, left] {$a_1$};
        \draw [ar] (digk) --++(0,-0.7) node [midway, left] {$a_k$};
      \end{scope}
    \end{tikzpicture}

    In this case the images of $(\sigma(B),P,t)$ will depend on $t$: $\phi(\sigma(B),P,\sigp(t))=(\sigma(B_2),P,t)$ and $\phi(\sigma(B),P,\sign(t_1,t_2))= (\sigma(B_1),P.t_2,t_1)$. In the other cases, $(\sigma(B_0),P,s)$ will be outside the domain of $\phi$. Similarly, for the edges inside $B$, their exponential signature are divided in two: $\phi(e,P.\sign(t_1,t_2)@Q,u)=(e,P.t_1.t_2@Q,u)$.
    
    \begin{align*}
      T_G^1-T_H^1 &= \sum_{\substack{e \in D_\phi \\ P \in L_{\mapsto}(e)}}1-|\Set{ t }{ \phi(e,P,t)=(\_,\_, \sige) }|+ \sum_{e\in  \overline{D_{\phi}}}|L_{\mapsto}(e)| -\sum_{f \in \overline{D_{\phi}'}}|L_{\mapsto}(e)| \\
      T_G^1-T_H^1 &= 0 - |L_{\mapsto}(B)|+ 0 -\sum_{i=1}^k|L_{\mapsto}(c_i)| -\sum_{i=1}^k|L_{\mapsto}(d_i)|\\
      T_G^1-T_H^1 &= -|L_{\mapsto}(B)| -|L_{\mapsto}(B_2)|.k - |L_{\mapsto}(B_1)|.k\\
    \end{align*}
    
    \begin{align*}
      T_G^2 - T_H^2 & =  \sum_{\substack{\sigma(B) \in D_{\phi} \\ P \in L_{\mapsto}(B) \\ s \in Si_{\mapsto}(B,P)\\ (f,Q,t)=\phi(B,P,s)}}D_G(B).|s|-D_H(f).|t| + \sum_{\substack{\sigma(B) \in \overline{D_{\phi}} \\ P \in L_{\mapsto}(B) \\ s \in Si_{\mapsto}(B,P)}}D_G(B).|s| \\
      T_G^2 - T_H^2 & =  \sum_{\substack{P \in L_{\mapsto}(B) \\ n(t_1,t_2) \in Si_{\mapsto}(B,P)}}(k+1).(|t_1|+|t_2|+1)-(k+1).|t_1|+\sum_{\substack{P \in L_{\mapsto}(B) \\ p(s) \in Si_{\mapsto}(B,P)}}(k+1).(|t|+1)-(k+1).|s|\\
      T_G^2 - T_H^2 & =  \sum_{\substack{P \in L_{\mapsto}(B) \\ n(t_1,t_2) \in Si_{\mapsto}(B,P)}}(k+1).(|t_2|+1)+\sum_{\substack{P \in L_{\mapsto}(B_2) \\ s \in Si_{\mapsto}(B_2,P)}}(k+1))\\
      T_G^2 - T_H^2 & \geq  |L_{\mapsto}(B)|. (k+1).2+ \sum_{P.s \in L_{\mapsto}(B_1)}(k+1)\\
      T_G^2 - T_H^2 & \geq  |L_{\mapsto}(B)|.(k+1)+|L_{\mapsto}(B_2)|.(k+1) + |L_{\mapsto}(B_1)|.(k+1) \\
    \end{align*}
    
    \begin{align*}
      T_G-T_H & \geq -|L_{\mapsto}(B)| -|L_{\mapsto}(B_2)|.k - |L_{\mapsto}(B_1)|.k + 2 |L_{\mapsto}(B_2)|.(k+1)+2.|L_{\mapsto}(B)|.(k+1) + |L_{\mapsto}(B_1)|.(k+1) \\
      T_G-T_H & \geq (2k+1)\cdot |L_{\mapsto}(B)| + (k+2) \cdot |L_{\mapsto}(B_2)| + (k+2)\cdot |L_{\mapsto}(B_1)|\\
      T_G-T_H & \geq |L_{\mapsto}(B)| > 0      
    \end{align*}
    So $T_G > T_H$
    
  \item
    %\tikzset{external/remake next}    
    \begin{tikzpicture}[baseline=1cm]
      \draw (-0.4,1) node [proofnet, inner xsep=1.1cm] (G) {$A$};
      \node [princdoor] (bang) at ($(G.-10)+(0,-0.8)$) {};
      \node [auxdoor] (auxd) at ($(G.-80)+(0,-0.8)$) {};
      \node [auxdoor] (auxg) at ($(G.-170)+(0,-0.8)$) {};
      \draw [ar] (G.-10)--(bang);
      \draw [ar] (G.-90)--(auxd);
      \draw [ar] (G.-170)--(auxg);
      \draw (auxd) -- (bang) -|++ (0.35,1.5) -| ($(auxg)+(-0.35,0)$) |- (auxg);
      \node at ($(bang)+(0.6,0)$) {$B_A$};
      \draw [dotted] (auxg) -- (auxd);
      \draw (auxg) ++ (0,-1) node (restg) {};
      \draw (auxd) ++ (0,-1) node (restd) {};
      \draw [ar] (auxg)--(restg);
      \draw [ar] (auxd)--(restd);

      \draw ($(G)+(4.2,0)$) node [proofnet, inner xsep=1.1cm] (H) {$B$};
      \node [princdoor] (bangh) at ($(H.-10)+(0,-0.8)$) {};
      \node [auxdoor] (auxdh) at ($(H.-80)+(0,-0.8)$) {};
      \node [auxdoor] (auxgh) at ($(H.-170)+(0,-0.8)$) {};
      \draw [ar] (H.-10)--(bangh);
      \draw [ar] (H.-90)--(auxdh);
      \draw [ar] (H.-170)--(auxgh);
      \draw (auxdh) -- (bangh) -|++ (0.35,1.5) -| ($(auxgh)+(-0.35,0)$) |- (auxgh);
      \node at ($(bangh)+(0.6,0)$) {$B_B$};
      \draw [dotted] (auxgh) -- (auxdh);
      \draw [ar] (auxdh) --++ (0,-0.6) node (restdh) {};
      %\draw [ar] (auxdh)--(restdh);
      \draw [ar] (bangh) --++ (0,-0.6) node (restgh) {};
      
      \node [cut] (cut) at ($(bang)!0.5!(auxgh)+(0,-0.8)$) {};
      \draw [ar,dotted] (bang) to [out=-90, in=150] node [edgename,below left] {$c_A$} (cut);
      \draw [ar,dotted] (auxgh) to [out=-90, in=30] node [edgename,below right]{$c_B$} (cut);
      
      \nvar{\decder}{5.2cm}
      \node [proofnet, inner xsep=0.8cm] (Gr) at ($(H)+(4.4,0)$)    {$A$};
      \node [proofnet, inner xsep=0.8cm] (Hr) at ($(Gr)+(2.3,0)$) {$B$};
      \node [auxdoor]   (auxgd) at ($(Gr. -70)+(0,-0.8)$) {};
      \node [auxdoor]   (auxgg) at ($(Gr.-168)+(0,-0.8)$) {};
      \node [princdoor] (bangr) at ($(Hr. -12)+(0,-0.8)$) {};
      \node [auxdoor]   (auxdd) at ($(Hr. -70)+(0,-0.8)$) {};

      \node [cut] (cutr)  at ($(Gr.-12)!0.5!(Hr.-168)+(0,-0.5)$) {};
      \draw [ar] ($(Hr.-155)$) to [bend left]  (cutr);
      \draw [ar] ($(Gr. -25)$)  to [bend right] (cutr);
      
      \draw (auxdd) -- (bangr) -| ++(0.35,1.5) -| ($(auxgg)+(-0.35,0)$) -- (auxgg);
      \draw [dotted] (auxgg)--(auxgd)--(auxdd);
      \draw [ar] (Gr.-168) -- (auxgg);
      \draw [ar] (Gr. -80) -- (auxgd);
      \draw [ar] (Hr. -80) -- (auxdd);
      \draw [ar] (Hr. -12) -- (bangr);
      \draw [ar] (auxgg) --++ (0,-1) node (restgg) {};
      \draw [ar] (auxgd) --++ (0,-1) node (restgd) {};
      \draw [ar] (auxdd) --++ (0,-1) node (restdd) {};
      \draw [ar] (bangr) --++ (0,-1) node (restbang) {};
      \draw [->, very thick] ($(bangh)!0.3!(auxgg)+(0,0.5)$) -- ($(bangh)!0.7!(auxgg)+(0,0.5)$) node [below left] {$cut$};
    \end{tikzpicture}

    Here, two edges are deleted, but $T_G^2-T_H^2$ is not as simple to compute as in the $\parr$/$\otimes$ or the $\S$ case because two boxes fuse. 
    \begin{align*}
      T_G^1-T_H^1 &= \sum_{\substack{e \in D_\phi \\ P \in L_{\mapsto}(e)}}1-|\Set{ t }{ \phi(e,P,t)=(\_,\_, \sige) }|+ \sum_{e\in  \overline{D_{\phi}}}|L_{\mapsto}(e)| -\sum_{f \in \overline{D_{\phi}'}}|L_{\mapsto}(e)| \\
      T_G^1-T_H^1 &= 0 + |L_{\mapsto}(c_A)|+|L_{\mapsto}(c_B)| - 0\\
      T_G^1-T_H^1 &> 0 \\
    \end{align*}
    
    \begin{align*}
      T_G^2 - T_H^2 & =  \sum_{\substack{\sigma(B) \in D_{\phi} \\ P \in L_{\mapsto}(B) \\ s \in Si_{\mapsto}(B,P)\\ (f,Q,t)=\phi(B,P,s)}}D_G(B).|s|-D_H(f).|t| + \sum_{\substack{\sigma(B) \in \overline{D_{\phi}} \\ P \in L_{\mapsto}(B) \\ s \in Si_{\mapsto}(B,P)}}D_G(B).|s| \\
      T_G^2 - T_H^2 & =  \sum_{\substack{P \in L_{\mapsto}(B_B) \\ s \in Si_{\mapsto}(B_B,P)}}(k_B+1).|s|-(k_A+k_B+1).|s|+\sum_{\substack{P \in L_{\mapsto}(B_A) \\ s \in Si_{\mapsto}(B_A,P)}}(k_A+1).|s|\\
      T_G^2 - T_H^2 & =  \sum_{\substack{P \in L_{\mapsto}(B) \\ s \in Si_{\mapsto}(B,P)}}|t|\\
      T_G^2 - T_H^2 & >  0 \\
    \end{align*}

    So $T_G-T_H>0$.
  \end{itemize}
\end{proof}

\begin{coro}\label{lemma_tg_finite}
  If $G$ is a proof-net, then $G$ has positive weights and $T_G$ is finite. 
\end{coro}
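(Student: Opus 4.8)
The plan is to argue by induction along $\rightarrow_{cut}$, using Theorem~\ref{dallago_weight} as the inductive step. Since typed second-order linear logic proof-nets are normalizing~\cite{girard1987linear}, every proof-net $G$ reduces to a cut-free net in finitely many steps; I would induct on the length $n$ of a shortest such reduction. If $n>0$, let $G \rightarrow_{cut} H$ be the first step of a shortest normalizing reduction of $G$; then $H$ has a shortest normalizing reduction of length $n-1$, so by the induction hypothesis $T_H$ is finite and $H$ has positive weights, and Theorem~\ref{dallago_weight} applied to $G \rightarrow_{cut} H$ gives that $T_G$ is finite and $G$ has positive weights. Hence the only genuine work is the base case: $G$ cut-free.

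For $G$ cut-free I would prove that $C_{\mapsto}(B,P) = \{\sige\}$ for every canonical potential box $(B,P)$. The key point is that in a cut-free net the only $\mapsto$-rules taking polarity $+$ to polarity $-$ are the two $cut$ rules, which are now unavailable; so every $\mapsto$-path starting with polarity $+$ keeps polarity $+$. Along such a path no rule ever removes or alters a $!$-trace-element --- the rules that do so (crossing a principal or auxiliary door upwards, or the $\hookrightarrow$ rule) all require polarity $-$ --- while new $!$-elements are only appended at the right end; hence the leftmost trace element of every context on the path issued from $(\sigma(B),P,[\oc_t],+)$ stays equal to $\oc_t$. Moreover this path descends through the net towards the pending edges, which is well-founded since a proof-net (built inductively) has no directed cycles, and by Lemma~\ref{lemma_underlying_mapsto} it is never blocked by a mismatch between its trace and the formulae it meets; so it terminates at a pending-edge context, which is final precisely when $t=\sige$. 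Together with Lemma~\ref{lemma_copy_context_final_context} (whose hypothesis $\beta_{\{\}}\neq\varnothing$ is clear here, the trace being $[\oc_t]$), this shows that $(\sigma(B),P,[\oc_u],+)$ is a $\mapsto$-copy context iff $u=\sige$, so $C_{\mapsto}(B,P)=\{\sige\}$ by Definition~\ref{def_copy}. Then $Si_{\mapsto}(B,P)=\{\sige\}$, so $\sum_{t\in Si_{\mapsto}(B,P)}|t|=0$, and each edge $e$ has the single canonical potential $[\sige;\cdots;\sige]$ of length $\partial(e)$; therefore $T_G=|E_G|<\infty$, and $G$ has positive weights since $|C_{\mapsto}(B,P)|=1$ for every $(B,P)$.

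The step I expect to cost the most effort is the base case, and inside it the verification that a descending $+$-polarity path in a cut-free net is never stuck and never re-enters a box --- this is exactly where the underlying-formula machinery of Subsection~\ref{subsection_underlying} is needed --- together with the routine but fiddly check that the $\sigp$ and $\sign$ cases of the $\mapsto$-rules do not touch the leftmost $!$-trace-element. The inductive step, by contrast, is immediate from Theorem~\ref{dallago_weight} once normalization of typed linear logic is granted as the well-founded order; no extra ingredient is needed there.
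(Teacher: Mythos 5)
Your proposal is correct and follows essentially the same route as the paper: a cut-free base case where one shows $C_{\mapsto}(B,P)=\{\sige\}$ for every potential box (the paper phrases this as ``paths from principal doors only go downwards'', you phrase it via preservation of the $+$ polarity and Lemmas \ref{lemma_underlying_mapsto} and \ref{lemma_copy_context_final_context}, which is the same mechanism), followed by propagating finiteness and positivity backwards along a normalizing reduction sequence using Theorem \ref{dallago_weight}. The only discrepancy is the bookkeeping of the cut-free value of $T_G$ (your $|E_G|$ versus the paper's $|E_G|+\sum_{B\in B_G}D_G(B)$, which hinges on the convention for $|\sige|$), and this is immaterial since only finiteness is needed.
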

\begin{proof}
  We first prove that whenever $G$ is normal with respect to $cut$-elimination, $G$ has positive weights and $T_G$ is finite. The proof-net has no $cut$, so for every $(B,P) \in Pot(B_G)$ and $t \in Sig$, the paths beginning by $(\sigma(B),P,[\oc_t],+)$ are always going downwards, in particular we never cross a $\wn C$ or $\wn N$ link upwards. So for every $(B,P) \in Pot(B_G)$, $C_{\mapsto}(B,P) = \{\sige\}$. So $G$ has positive weights and for every $e \in E_G$, $L_{\mapsto}(e)=1$. Thus, $T_G=|E_G|+\sum_{B \in B_G}D_G(B)$. $T_G$ is finite.

  Then, let us consider any proof-net. According to \cite{girard1987linear}, proof-nets strongly normalize so we can consider a sequence $G \rightarrow_{cut} G_1 \rightarrow_{cut} G_n \not \rightarrow$. We know that $G_n$ is normal so $G_n$ has positive weights and $T_{G_n}$ is finite. We can use Theorem \ref{dallago_weight}, to prove that $G_{n-1}$ has positive weights and $T_{G_{n-1}}$ is finite, so $G_{n-2}$ has positive weights and $T_{G_{n-2}}$ is finite, and so on. So $G$ has positive weights and $T_G$ is finite.
\end{proof}

\begin{coro}
\label{coro_dal_lago_theo}
If $G$ is a proof-net, then the length of any path of reduction is bounded by $T_G$
\end{coro}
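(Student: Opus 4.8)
The plan is to derive Corollary \ref{coro_dal_lago_theo} directly from Theorem \ref{dallago_weight} (Dal Lago's weight theorem) together with Corollary \ref{lemma_tg_finite}, by an induction on the length of the reduction path. First I would invoke Corollary \ref{lemma_tg_finite} to know that $T_G$ is finite and that $G$ has positive weights, so that $T_G \in \mathbb{N}$ is a well-defined natural number; this is the ``base'' ingredient that makes the argument numeric rather than merely ordinal.

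Next, consider an arbitrary reduction path $G = G_0 \rightarrow_{cut} G_1 \rightarrow_{cut} \cdots \rightarrow_{cut} G_n$. I want to show $n \leq T_G$. The key step is to apply Theorem \ref{dallago_weight} repeatedly, but care is needed about its hypotheses: the theorem requires that $T_{G_{i+1}}$ is finite and that $G_{i+1}$ has positive weights before it concludes $T_{G_i} > T_{G_{i+1}}$. These hypotheses are exactly what Corollary \ref{lemma_tg_finite} guarantees for \emph{every} proof-net, so in particular for each $G_i$ along the path. Hence for each $i < n$ we get $T_{G_i} > T_{G_{i+1}}$, and since all these weights are non-negative integers (finiteness gives integrality; positive weights and the form of $T_{G}$ in Definition \ref{def_tg} give non-negativity), a strictly decreasing chain $T_{G_0} > T_{G_1} > \cdots > T_{G_n} \geq 0$ of natural numbers of length $n+1$ forces $n \leq T_{G_0} = T_G$.

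Concretely I would write: by Corollary \ref{lemma_tg_finite}, each $T_{G_i}$ is a finite non-negative integer and each $G_i$ has positive weights; by Theorem \ref{dallago_weight} applied to $G_i \rightarrow_{cut} G_{i+1}$ (whose hypotheses on $G_{i+1}$ are met), $T_{G_i} > T_{G_{i+1}}$; therefore $T_G = T_{G_0} \geq T_{G_1} + 1 \geq \cdots \geq T_{G_n} + n \geq n$, so $n \leq T_G$. Since the path was arbitrary, $T_G$ bounds the length of every reduction path, and in particular — combined with the strong normalization of proof-nets cited from \cite{girard1987linear} — this is a strong bound valid for any strategy.

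The only subtlety, and the place I would be most careful, is checking that the hypotheses of Theorem \ref{dallago_weight} are genuinely available at each step: the theorem is stated in the somewhat unusual ``backward'' form (it assumes good properties of the \emph{target} $H$ and deduces them plus the strict inequality for the \emph{source} $G$), so one must not try to run it forward. The resolution is precisely that Corollary \ref{lemma_tg_finite} already establishes the needed properties unconditionally for all proof-nets, so the backward form of the theorem can be applied to each individual step independently. There is no real obstacle here beyond this bookkeeping; the heavy lifting was done in Theorem \ref{dallago_weight} and Corollary \ref{lemma_tg_finite}.
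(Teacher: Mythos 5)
Your proposal is correct and follows essentially the same route as the paper: take an arbitrary reduction sequence, note each $G_i$ is a proof-net by stability under $cut$-elimination, apply Theorem \ref{dallago_weight} at each step to obtain the strictly decreasing chain $T_G > T_{G_1} > \cdots > T_{G_n} \geq 0$, and conclude the length is bounded by $T_G$. The only difference is that you spell out explicitly that Corollary \ref{lemma_tg_finite} discharges the finiteness/positive-weights hypotheses of the ``backward'' weight theorem at every step, a bookkeeping point the paper leaves implicit.
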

\begin{proof}
Suppose $G$ is a proof-net and $G \mapsto G_1 \mapsto \cdots \mapsto G_n$. Then, $G_n$ is a proof-net (stability of proof nets with respect to $cut$-elimination). According to Lemma \ref{dallago_weight}, $T_{G} > T_{G_1} > \cdots > T_{G_n}  \geq 0$. So $n < T_G$.
\end{proof}

In~\cite{lago2006context}, the acyclicity of proof nets is proved along with the decrease of $T_G$. Here, we chose to separate the two results because the acyclicity needs to define another relation. We define $\Rightarrow_{cut}$ on proof-nets, which is the $\rightarrow$ relation, where the $?W$ rewriting steps are allowed only if all the cuts of the nets are $?W/!P$ cuts.

\begin{lemma} \label{lemma_weak_acyclic}
  If all the cuts of $G$ are $?W/!P$ cuts, then $G$ is acyclic
\end{lemma}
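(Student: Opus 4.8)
The statement to prove is Lemma~\ref{lemma_weak_acyclic}: if every cut of $G$ is a $?W/!P$ cut, then $G$ is acyclic. Recall that $G$ is cyclic iff there exist $e\in E_G$, $P\in L_G(e)$, $p\in Pol$ and $s,s'\in Sig$ such that $(e,P,[\oc_s],p)$ is a $\mapsto$-copy context and $(e,P,[!_s],p)\mapsto^* (e,P,[!_{s'}],p)$. So the plan is to assume such a configuration exists and derive a contradiction by analysing which $\mapsto$ steps can occur in a net all of whose cuts are $?W/!P$ cuts.

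\textbf{Key steps.} First I would observe that in such a net the only $\hookrightarrow$ rule that could fire is the jump across a box $B$ whose principal door is cut --- but here the only cuts are $?W/!P$ cuts, and the $!P$ side of such a cut is a principal door cut against a weakening, so there is no $?P$-door being cut, hence the $\hookrightarrow$ rule never applies along any path. Thus every step in the cycle $(e,P,[!_s],p)\mapsto^* (e,P,[!_{s'}],p)$ is in fact a $\rightsquigarrow$ step. Second, I would invoke Lemma~\ref{lemma_underlying_step}: along $\rightsquigarrow$ steps the underlying formulae are preserved up to substitution, so the underlying formula is well-behaved along the whole path; combined with the fact (implicit from the structure of $\rightsquigarrow$ and the definition of final context) that a $\mapsto$-copy context's path ends in a final context by Lemma~\ref{lemma_copy_context_final_context}, the path from $(e,P,[!_s],p)$ cannot loop back to $(e,P,\cdot,p)$ because it must terminate. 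More directly: I would argue that $\rightsquigarrow$ steps, when they do not cross exponential doors, are ``deterministic and reversible'' on the skeleton, and they monotonically decrease a suitable measure --- for instance, the exponential signatures strictly decrease for $\blacktriangleleft$ whenever a $?C$, $?D$ or $?N$ is crossed (as noted in the proof of Lemma~\ref{lemma_prec_left_right}), and since no $?C$/$?N$ can be entered upward (that would require a $\hookrightarrow$ or a cut that is not $?W/!P$) the signature component can only shrink, never return to $s$ unless $s=s'$ and the path is trivial, contradicting that it is a genuine cycle.

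The cleanest line, which I would try to push through, is the following. Take the purported cycle $(e,P,[!_s],p)\mapsto^* (e,P,[!_{s'}],p)$; since all steps are $\rightsquigarrow$, and $(e,P,[!_s],p)$ is a copy context, by Lemma~\ref{lemma_copy_context_final_context} there is a final context $C_f$ with $(e,P,[!_s],p)\mapsto^* C_f\not\mapsto$. If the cycle is nontrivial we can iterate it, obtaining arbitrarily long paths from $(e,P,[!_s],p)$; but each iteration of the loop crosses at least one $?C$, $?D$ or $?N$ link downward (otherwise the loop would be a path of purely multiplicative/quantifier/$?P$/$!P$ $\rightsquigarrow$ steps returning to its start, which is impossible since those steps strictly grow the trace length or move through the acyclic underlying syntactic structure), and each such crossing consumes a constructor of the exponential signature in the leftmost trace element. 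After finitely many loop iterations the signature is exhausted and the path is forced to stop at a context of the form $(g,R,[!_{\sige}],-)$ facing a $?C$, $?D$ or $?N$ link, which by the analysis in the proof of Lemma~\ref{lemma_copy_context_final_context} can only be a final context if the link is $?D$ or $?W$ --- but after looping back to $e$ the remaining signature would have to be simultaneously $\sige$ and equal to $s'$, forcing $s=s'=\sige$ and the loop to be empty. This contradiction establishes acyclicity.

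\textbf{Main obstacle.} The delicate point is making precise the claim that ``each traversal of the loop consumes part of the leftmost exponential signature'': I must rule out loops built entirely from multiplicative links, $?P$/$!P$ doors and quantifiers, which in principle could return a context to its starting edge without touching the signature. I would handle this by a planarity/level argument on the proof-net: such a loop would correspond to a directed cycle in the proof-net graph that stays within a fixed level and never enters a box through a $?C$, and since proof-nets are acyclic graph-like structures (and boxes are arborescent), no such cycle exists; any returning path must therefore pass through a contraction or digging, which lies necessarily on the boundary of a box cut against something, and since the only cuts are $?W/!P$ cuts, the only way to re-enter is downward through $?C$/$?N$, consuming the signature. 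Formalising this ``no same-level cycle'' observation, perhaps by reusing the subtree property (Lemma~\ref{subtree_property}) together with Lemma~\ref{lemma_skeletons_cons} to control skeletons, is where most of the care is needed.
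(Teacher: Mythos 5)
There is a genuine gap, and it starts with your first key step. The $\hookrightarrow$ rule, as defined in the paper, is a jump from the conclusion of an auxiliary door ($?P$) of a box to the conclusion of its principal door, triggered whenever a context sits on that auxiliary conclusion with polarity $-$ and a singleton trace $[\oc_t]$; it does not require any door of the box to be cut. So your claim that in a net whose only cuts are $?W/!P$ cuts ``the $\hookrightarrow$ rule never applies along any path'' is unjustified (and false in general: a path entering an enclosing box upward through its auxiliary door, or simply a starting context of the form $(e,P,[\oc_s],-)$ with $e$ an auxiliary conclusion, fires the jump with no cut anywhere nearby). The rest of your argument also does not go through as stated: the cycle $(e,P,[\oc_s],p)\mapsto^+(e,P,[\oc_{s'}],p)$ ends with a possibly different signature $s'$, so you cannot simply ``iterate the loop'' to build arbitrarily long paths; and your accounting of signature consumption is off --- crossing $?C$/$?N$ \emph{downward} extends the rightmost $\wn$ signature rather than consuming the leftmost $\oc$ one (consumption happens on \emph{upward} crossings whose rightmost trace element is a $\oc$), so the claimed strictly decreasing measure is not established. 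You flag the crucial step (``no loop avoiding signature consumption'') as the main obstacle, and indeed it is exactly the part that is missing.

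The paper's proof is much more direct and uses the hypothesis on cuts in a way your sketch never isolates. A cycle must return to the same edge with the same polarity, so (since a path of constant polarity follows the directed, acyclic structure of the net) it contains at least two polarity changes, and in particular at least one $+\to-$ change; the only $\mapsto$ step producing a $+\to-$ change is crossing a cut. Since every cut is a $?W/!P$ cut, such a crossing either lands on the conclusion of the $?W$ with polarity $-$, from which no rule applies --- so this context must be the endpoint $(e,P,[\oc_{s'}],p)$ of the cycle, but then the starting context $(e,P,[\oc_s],-)$, sitting on the same $?W$ conclusion, also has no successor, contradicting $\mapsto^+$ --- or it starts from the conclusion of the $?W$ with polarity $+$, a context into which no rule maps (the $?W$ has no premise), so it must be the initial context of the path; but then the final context of the cycle, on the same edge with the same polarity, also has no predecessor, contradicting the existence of the last step. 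This two-case analysis of the $?W/!P$ cut is the whole proof, and it is the ingredient your proposal would need in place of the claims above.
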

\begin{proof}
  Let us suppose that $(e,P,[!_t],p) \mapsto^+ (e,P,[!_u],p)$. By construction of proof-nets, there must be at least a change of direction to go back to the same edge. We return with the same direction. So there must have been at least two changes of direction. The $+$ to $-$ change has been done by crossing a cut, $(f,Q,T,+) \mapsto (g,Q,T,-)$. So either we go from a conclusion of a $!P$ link to a conclusion of a $?W$ link (in this case, the path can not continue so $(g,Q,T,-)=(e,P,[!_u],p)$, but it is impossible because $(g,Q,[!_t],-) \not \mapsto$) or we go from a conclusion of a $?W$ to a conclusion of a principal door (in this case this is the first step, so $(f,Q,T,+)=(e,P,[!_t],p)$, but it is impossible because $\not \mapsto (f,Q,[!_u],+)$). We have a contradiction, so there are no such cycle, $G$ is acyclic.
\end{proof}

\begin{lemma}
  \label{lemma_acyclicity} All proof-nets are acyclic 
\end{lemma}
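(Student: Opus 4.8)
The plan is to leverage the relation $\Rightarrow_{cut}$ introduced just before the statement, together with Lemma~\ref{lemma_weak_acyclic} and the fact that acyclicity is preserved backwards along $\Rightarrow_{cut}$-reduction. The key observation is that $\Rightarrow_{cut}$ is still strongly normalizing (it is a restriction of $\rightarrow_{cut}$, which strongly normalizes by~\cite{girard1987linear}), and that any $\Rightarrow_{cut}$-normal form has only $?W/!P$ cuts — indeed, if a net is $\Rightarrow_{cut}$-normal but not $\rightarrow_{cut}$-normal, the only reduction steps it can be missing are $?W$ steps (every other cut rule is always enabled in $\Rightarrow_{cut}$), so all its cuts must be $?W/!P$ cuts, and Lemma~\ref{lemma_weak_acyclic} then shows it is acyclic.

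First I would argue that $\Rightarrow_{cut}$ strongly normalizes: since $\Rightarrow_{cut} \subseteq \rightarrow_{cut}$ and $\rightarrow_{cut}$ is SN on proof-nets, any infinite $\Rightarrow_{cut}$-sequence would be an infinite $\rightarrow_{cut}$-sequence. Hence for any proof-net $G$ there is a sequence $G \Rightarrow_{cut} G_1 \Rightarrow_{cut} \cdots \Rightarrow_{cut} G_n$ with $G_n$ in $\Rightarrow_{cut}$-normal form. Next I would show $G_n$ has only $?W/!P$ cuts: suppose $G_n$ has some cut that is not a $?W/!P$ cut; by inspection of the cut-elimination rules (figures~\ref{mult_rules}, \ref{exp_rules}, \ref{quant_rules}) that cut is reducible, and the corresponding rewriting step is permitted in $\Rightarrow_{cut}$ (the side condition restricting $?W$ steps is irrelevant because the step in question is not a $?W$ step — or, if the only reducible cuts are $?W/!P$ cuts, then we are already done), contradicting $\Rightarrow_{cut}$-normality. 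So $G_n$ has only $?W/!P$ cuts, and Lemma~\ref{lemma_weak_acyclic} gives that $G_n$ is acyclic.

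Then I would propagate acyclicity backwards along the sequence. This requires a lemma of the form: if $G \Rightarrow_{cut} H$ and $H$ is acyclic, then $G$ is acyclic. The idea is that $\Rightarrow_{cut}$ is a cut simulation (or at least behaves like one with respect to the paths we care about), so a cycle $(e,P,[!_s],p) \mapsto^* (e,P,[!_{s'}],p)$ passing through a copy context in $G$ would be reflected, via the copymorphism associated to the reduction step $G \Rightarrow_{cut} H$ (constructed exactly as in the proof of Theorem~\ref{dallago_weight}), into a corresponding cycle in $H$; the rule~\ref{copymorphism_path_cons} (paths are preserved) is what makes this work, since a cyclic path maps to a cyclic path and a copy context maps to a copy context (Lemma~\ref{lemma_copymorphism_cons_copy_context}). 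Since $H$ is acyclic, this is a contradiction, so $G$ is acyclic. Applying this $n$ times down the chain $G \Rightarrow_{cut} G_1 \Rightarrow_{cut} \cdots \Rightarrow_{cut} G_n$ yields that $G$ is acyclic.

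The main obstacle I anticipate is the backwards-propagation lemma: one must be careful that the copymorphism built for a $\Rightarrow_{cut}$ step genuinely transports cyclic paths and copy contexts in both directions, and in particular that an edge--potential pair $(e,P)$ sitting on a cycle in $G$ has a well-defined image under $\phi$ (which it does, since cycles pass through copy contexts and principal doors are in $D_\phi$ by rule~\ref{copymorphism_princ_defined}, with final contexts also covered by rules~\ref{copymorphism_final_defined}--\ref{copymorphism_final_cons}). One also has to check the degenerate situation where the cycle lies entirely in a box that gets deleted or fused, but in the deletion case ($?W/!P$) and fusion case the copies of the affected box are trivial ($\sige$ only, or in bijection with those of the fused box), so no nontrivial cycle can hide there. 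Once that lemma is in place the argument is a short induction.
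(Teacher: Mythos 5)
Your overall skeleton is the same as the paper's: strong normalization of $\Rightarrow_{cut}$ (as a restriction of $\rightarrow_{cut}$), the observation that a $\Rightarrow_{cut}$-normal form has only $?W/!P$ cuts so Lemma~\ref{lemma_weak_acyclic} applies, and then backwards propagation of acyclicity along each $\Rightarrow_{cut}$ step via the copymorphism of Theorem~\ref{dallago_weight} and rule~\ref{copymorphism_path_cons}. However, the backwards-propagation step, which is the real content of the lemma, has a genuine gap. You assert that every edge--potential pair on a cycle of $G$ has a well-defined image under $\phi$ ``since cycles pass through copy contexts and principal doors are in $D_\phi$''. This is not justified and is false as stated: a cycle may pass through edges that are deliberately excluded from $D_\phi$ (the premises of the reduced cut in every rule, and all edges of a box deleted by a $?W$ step), and it may pass through edges on which $\phi(e,P,t)$ genuinely depends on $t$ (the principal door of the contracted box, sent to $B_l$ or $B_r$ according to the head of the signature, and similarly in the $?N$ case). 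In the latter situation rule~\ref{copymorphism_path_cons} does not directly produce a closed path in $H$: the two visits of the cycle to such an edge can be mapped into different copies, so the image path need not close up. The paper's proof is devoted precisely to this point: it first notes that on almost all edges outside $D_\phi$ a path cannot stay for two consecutive steps ($(e,P,[\oc_t],p) \not\mapsto^2$), and then disposes of the three residual cases (the cut premises in the $?C$, $ax$ and $?N$ steps) by showing that a cycle through them would force a cycle through a neighbouring edge that is in $D_\phi$, which then transfers to $H$ and contradicts its acyclicity. Your proposal contains none of this case analysis, and your aside about deleted or fused boxes does not fill the hole.

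Concerning the $?W$ case specifically, the intended argument is simpler than the one you sketch: by the very definition of $\Rightarrow_{cut}$, a $?W$ step can only be fired when all cuts of $G$ are already $?W/!P$ cuts, so $G$ is acyclic directly by Lemma~\ref{lemma_weak_acyclic}, without any appeal to $H$ or to the copymorphism. Your alternative claim that ``the copies of the affected box are trivial ($\sige$ only), so no nontrivial cycle can hide there'' does not by itself exclude a cycle $(e,P,[\oc_s],p) \mapsto^* (e,P,[\oc_{s'}],p)$ lying inside the deleted box, since such edges are outside $D_\phi$ and the copymorphism gives no information about them; it is the side condition on $?W$ steps that makes this case harmless.
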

\begin{proof}
  Let $G$ be a proof-net then, linear logic being strongly normalizing, there exists a sequence $G \Rightarrow G_1 \Rightarrow G_2 \Rightarrow \cdots \Rightarrow G_n \not \Rightarrow$. By lemma \ref{lemma_weak_acyclic}, $G_n$ is acyclic. We now have to show that if $G \Rightarrow H$ and $H$ is acyclic, then $G$ is  acyclic. 

If it is a $?W$ reduction, then the only cuts of $G$ are $?W/!P$ cuts, so $G$ is acyclic. 

Else, if $G$ has a cycle, then it must be on an edge $e$ which is not in the $D_\phi$ of the copymorphism associated to this cut elimination step or such that $\phi(e,P,t)$ depends on $t$, otherwise by rule \ref{copymorphism_path_cons} of the copymorphism definition, there would be a cycle in $H$. Almost all edges $e$ which are not in $D_\phi$ are ruled out because $(e,P,[!_t],p)\not \mapsto^2$. The only remaining possible edges are: the premises of the cut in a $?C$ sep (but it would mean one of the premises of the $?C$ link is also part of a cycle which is impossible), the premises of the cut in an $ax$ rule (but it would mean the other conclusion of the axiom is also part of a cycle which is impossible) or one of the premises of the cut in a $?N$ rule (but it would mean the premise of the $?N$ link is also part of a cycle which is impossible).
\end{proof}

%\begin{theorem}
%There exists a polynomial $p$ such that for every proof-net $G$, $T_G \leq p(W_G, |G|)$
%\end{theorem}
%\begin{proof}
%\begin{align*}
%T_G = & \sum_{v \in I_G} |L_{\mapsto}(v)|+ \sum_{e \in B_G} \left( D_G(e) \sum_{P \in L_{\mapsto}(e)} \sum_{t \in Si_{\mapsto}(e,P)}|t| \right ) \\
% \leq & \sum_{v \in I_G}W_G + \sum_{e \in B_G} \left ( 2 |G| \sum_{P \in L_{\mapsto}(e)} |Si_{\mapsto}(e,P)|) \right ) \\
% \leq & |G| W_G + \sum_{e \in B_G} 2 |G| W_G \\
%T_G \leq & |G| W_G + 2 |G|^2 W_G 
%\end{align*}
%\end{proof}

\subsection{The $ztrees$ are finite}
\label{subsection_ztreefinite}
In subsection \ref{subsection_ztree}, we defined $ztrees$ of a potential edge as (potentially infinite) trees of substitutions. It allowed us to prove that the last contexts of paths corresponding to simplifications of copies are final contexts. Here, we will prove that those $ztrees$ are, in fact, always finite. It will allow us to define {\textbf the} formula associated to a context, as opposed to the set $\bigcup_{\Theta \vartriangleleft ztree(C)} \beta_{\{\},\Theta}(C)$ of formulae associated to the context. To define this formula, each time we are asked (in the definition of the previous set) to choose a truncation of a $ztree$, we will directly use the $ztree$ (it is a finite truncation because the $ztrees$ will be proved to be finite).

\begin{lemma}
\label{lemma_underlying_finite}
If $G \rightarrow_{cut} H$, let $(D_{\phi},D_{\phi}',\phi, \psi)$ be the copymorphism from $G$ to $H$ chosen in the proof of theorem \ref{dallago_weight} and $(e,P) \in Pot(D_{\phi})$, $t \in Sig$ and $(e',P',t')=\phi (e,P,t)$. 

If $ztree(e',P')$ is finite, then $ztree(e,P)$ is finite.
\end{lemma}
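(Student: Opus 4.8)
The statement relates $ztree(e,P)$ in $G$ to $ztree(e',P')$ in $H$ under a copymorphism, asserting that finiteness transfers backwards. The natural strategy is to show that $ztree(e,P)$ is, up to the forgetful map $(\_)_{/LLL}$ and up to the substitution-tree structure, a ``sub-object'' of $ztree(e',P')$, so that infiniteness of the former would force infiniteness of the latter. Concretely, I would exhibit a map from the nodes of $ztree(e,P)$ to the nodes of $ztree(e',P')$ which is injective on each infinite branch, or better, show that for every node of $ztree(e,P)$ there is a corresponding node of $ztree(e',P')$ at the same or greater depth, so that an infinite branch in $ztree(e,P)$ produces an infinite branch in $ztree(e',P')$ (using König's lemma, since these trees are finitely branching by definition of substitution trees).

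\emph{Key steps, in order.} First I would recall the definition of $ztree(e,P)$: its root children are indexed by eigenvariables $Z_i$ free in $\beta(e)$ such that the $\forall$-link conclusion $f_i$, run forward through $\rightsquigarrow^*$ from potential $[p_1;\cdots;p_{\partial(f_i)}]$, reaches the conclusion of an $\exists$-link $(l_i,R_i)$; the child is then $(Z_i, B_i, ztree(prem(l_i),R_i))$ with $B_i$ the captured formula. Second, the heart of the argument: I must show that whenever $ztree(e,P)$ has such a child via eigenvariable $Z_i$, capture-formula $B_i$, and continuation at $(prem(l_i),R_i)$, then $ztree(e',P')$ has a corresponding child. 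Here I would use copymorphism rule~\ref{copymorphism_path_cons} (paths are preserved by $\phi$): the path $(f_i,\ldots,\forall,+) \rightsquigarrow^* (concl(l_i),R_i,\forall,-)$ inside $G$, being a $\rightsquigarrow$-path, corresponds to an analogous path in $H$, so the $\forall/\exists$ pairing survives, and the image eigenvariable (associated to the image $\forall$-link) appears in $\beta(e')_{/LLL}$ with an image capture-formula. I would need to check that $\phi$ respects the $\forall$-links and eigenvariables appropriately — this is where rules~\ref{copymorphism_boxwise} and the fact that $\phi$ is surjective on principal doors (\ref{copymorphism_princ_doors}) come in, together with the observation that the $ztree$ at $(prem(l_i),R_i)$ maps, again by $\phi$ (or by the correspondence of canonical potentials, Corollaries~\ref{lemma_copymorphism_cons_canonical} and~\ref{lemma_copymorphism_cons_canonical2}), to the $ztree$ at the image of $(prem(l_i),R_i)$. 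Third, having established a depth-nonincreasing correspondence of nodes, I would argue: if $ztree(e,P)$ were infinite, being finitely branching it would (König) have an infinite branch; pushing this branch through the correspondence yields an infinite branch in $ztree(e',P')$, contradicting finiteness. Hence $ztree(e,P)$ is finite.

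\emph{Main obstacle.} The delicate point is that a copymorphism need not send $\exists$- and $\forall$-links to $\exists$- and $\forall$-links in a way that is literally ``$\phi$ on those edges'' — the copymorphism is a partial map, and the images of the $\forall$-link conclusions or of $prem(l_i)$ may not lie in $D_\phi$ in the degenerate cases (dereliction, weakening, box fusion). So the real work is a case analysis over the $\rightarrow_{cut}$ rules, matching each copymorphism chosen in the proof of Theorem~\ref{dallago_weight} against the $ztree$ construction, and checking that in every case an eigenvariable–capture pair in $G$ forces a corresponding pair in $H$ (possibly after applying $\psi$ to relocate a final context, and using that the quantifier structure of formulas is preserved because $(\_)_{/LLL}$ is applied and the multiplicative/exponential rules do not touch quantifiers). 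For the quantifier $cut$-rule itself one must observe that eliminating a $\forall/\exists$ cut in $G$ substitutes $B$ for $Z$ globally, so a branch of $ztree(e,P)$ through $Z$ becomes a branch of $ztree(e',P')$ that has been ``contracted'' by exactly one node — still depth-nonincreasing in the relevant sense, and still forcing infiniteness to propagate. Once the case analysis is done, the König's-lemma wrap-up is routine.
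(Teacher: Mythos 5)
Your proposal follows essentially the same route as the paper: a case analysis over the $cut$-elimination rules in which only the $\forall/\exists$ case is delicate, a node correspondence between $ztree(e,P)$ and $ztree(e',P')$ extracted from path preservation (rule~\ref{copymorphism_path_cons}) of the copymorphism — the paper makes this exact by proving coinductively that $ztree(e',P')$ is $ztree(e,P)$ with the edges labelled by substitutions on the eliminated eigenvariable $Z$ contracted — and the conclusion that an infinite branch survives into $ztree(e',P')$. One correction to your quantifier case: a single branch may pass through \emph{several} $Z$-labelled edges, not ``exactly one'', so you need the paper's extra observation that two contracted edges can never be adjacent (since $Z$ cannot occur free in the captured formula $B$ by construction of proof-nets), hence at most every other node of a branch is removed and infiniteness still propagates.
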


%\tikzset{external/remake next} 
\begin{figure}
  \subfigure{\label{contraction_a}
    \begin{tikzpicture}
      [
        level 1/.style={sibling distance=2.5cm},
        level 2/.style={sibling distance=1cm}
      ]
      \node {} [level distance = 2.5cm]
      child{ node {}
        child{ node{} edge from parent node [above, sloped]  {$Z_1 \rightarrow B_1$} }
        child{ node{} edge from parent node [above, sloped] {$Z_2 \rightarrow B_2$} }
        edge from parent node[above, sloped]{$Z_4 \rightarrow B_4$}
      }
      child{ node {}
        child{node {} edge from parent node[above, sloped]{$Z_2 \rightarrow B_2$}}
        child{node {} edge from parent node[above, sloped]{$Z_1 \rightarrow B_1$}}
        edge from parent node[above, sloped]{$Z \rightarrow B$}      
      }
      child{ node {}
        child{ node {}
          child{ node{} edge from parent node [above, sloped]  {$Z_4 \rightarrow B_5$} }
          child{ node{} edge from parent node [above, sloped] {$Z_2 \rightarrow B_6$} }
          edge from parent node[above, sloped]{$Z \rightarrow B$}
        }
        edge from parent node[above, sloped]{$Z_3 \rightarrow B_3$}
      };
    \end{tikzpicture}
  }
  \subfigure{ \label{contraction_b}
    \begin{tikzpicture}
      [
        level 1/.style={sibling distance=2.5cm},
        level 2/.style={sibling distance=1cm}
      ]
      \node {} [level distance = 3.7cm]
      child{ 
        node {}
        child{ node{} edge from parent node [above, sloped]  {$Z_1 \rightarrow B_1$} }
        child{ node{} edge from parent node [above, sloped] {$Z_2 \rightarrow B_2$} }
        edge from parent node[above, sloped]{$Z_4 \rightarrow B_4$}
      }
      child{node {} edge from parent node[above, sloped]{$Z_2 \rightarrow B_2$}}
      child{node {} edge from parent node[above, sloped]{$Z_1 \rightarrow B_1$}}
      child{ node {}
        child{ node{} edge from parent node [above, sloped] {$Z_4 \rightarrow B_5$} }
        child{ node{} edge from parent node [above, sloped] {$Z_2 \rightarrow B_6$} }
        edge from parent node[above, sloped]{$Z_3 \rightarrow B_3[B/Z]$}
      };
    \end{tikzpicture}
  }
  \caption{\label{fig_example_contraction_ztree}In case of a cut between the $\forall$ link associated to the eigenvariable $Z$ and a $\exists$ link whose associated formula is $B$, we transform the left tree into the right tree}
\end{figure}

\begin{proof}
  We will only prove the statement in the case of a $\exists / \forall$ cut elimination. All the other cases are simpler. Let $Z$ be the eigenvariable corresponding to the reduced $\forall$ link and $B$ be the formula corresponding to the reduced $\exists$ link. Let $g$ be the conclusion of the $\exists$ link and $h$ its premise. We prove by coinduction that, when $\phi(e,P,t)=(e',P',t')$, $ztree(e',P')$ is equal to the tree obtained from $ztree(e,P)$ by contracting the branches whose label is a substitution on $Z$ (as shown in Figure \ref{fig_example_contraction_ztree}).

Let $[p_1; \cdots ; p_{\partial(e)}]=P$ and $[p'_1;\cdots;p_{\partial(e')}]=P'$. For $1\leq i \leq \partial(e)$ we define $P_i$ as $[p_1;\cdots;p_i]$, and for $1 \leq i \leq \partial(e')$ we define $P'_i$ as $[p'_1;\cdots;p'_i]$. Let $E$ (resp. $E'$) be the  subset of the free eigenvariables of $\beta(e)$ (resp. $\beta(e')$) whose associated $\forall$ link has a corresponding $\exists$ link,
\begin{align*}
  E=\Set{ Z_i }{\exists (g_i,R_i)\in Pot(E_G), (f_i,[p_1;\cdots ; p_{\partial (f_i)}],\forall,+) \mapsto^* (g_i,R_i,\forall,-) \text{ with }g_i\text{ the conclusion of a }\exists\text{ link $l_i$} }\\
  E'=\Set{ Z_i }{\exists (g'_i,R'_i)\in Pot(E_H), (f_i',[p_1';\cdots ; p'_{\partial (f'_i)}],\forall,+) \mapsto^* (g_i',R_i',\forall,-) \text{ with }g_i'\text{ the conclusion of a }\exists\text{ link $l_i$} }\\
\end{align*}
 with $f_i$ (resp. $f'_i$) the $\forall$ link associated to eigenvariable $Z_i$ in $G$ (resp. $H$). Then, by definition, 
 \begin{align*}
   ztree(e,P) = \Set{(Z_i,B_i,ztree(h_i,R_i))}{ Z_i \in E }\\
 \end{align*}
 with $h_i$ the premise of the $\exists$ link corresponding to $Z_i \in E$ and $B_i$ the formula associated to this $\exists$ link. We want to prove that if there exists $(h,R) \in Pot(E_G)$ such that $(Z,B,ztree(h,R)) \in ztree(e,P)$, then 
 \begin{equation*}
  ztree(e',P')= 
  \begin{array}{c}
    \Set{(Z_i,B_i,ztree(h_i',R_i'))}{ Z_i \in E \text{ and } Z_i \neq Z }\\
    \cup \\
    \Set{(Y,C,ztree(d',S'))}{ (Y,C,ztree(d,S)) \in ztree(h,R) }
  \end{array}
 \end{equation*}
 And else,
 \begin{equation*}
   ztree(e',P') = \Set{(Z_i,B_i,ztree(h_i',R_i'))}{ Z_i \in E }
 \end{equation*}
 
 We first consider the case where there exists $(h,R) \in Pot(E_G)$ such that $(Z,B,ztree(h,R)) \in ztree(e,P)$. We will consider the eigenvariables which are in $E$ and show that they are in $E'$ with their corresponding $\exists$ link for $(e',P')$ being the reduct of their corresponding $\exists$ link for $(e,P)$. Then, we will consider the elements of $ztree(h,R)$ and prove that they belong to $ztree(e',P')$. Finally we will prove that the other eigenvariables do not belong to $ztree(e',P')$. These three results put together, show that $ztree(e',P')$ is equal to the expected result.
 \begin{itemize}
  \item If $Z_i \in E$ and $Z_i \neq Z$, then $Z_i$ is a free variable of $\beta_H(e')$. The conclusion $(f_i,P_i)$ of the $\forall$ link associated to $Z_i$ in $G$ is in $D_{\phi}$. We know that $(f_i,P_i,[\forall],+) \mapsto^* (g_i,R_i,[\forall],-)$. So, $(f'_i,P'_i,[\forall],+) \mapsto^* (g'_i,R'_i,[\forall],-)$.

%Observing the rules of $\mapsto$, we can notice we also have $(f_i,P_i,[\oc_{\sige}],+) \mapsto^* (g_i,R_i,[\oc_{\sige}],-) \not \mapsto$. Changing the exponential signature of the trace would not make the path longer, so $(f_i,P_i,[\oc_{\sige}],+)$ and $(g_i,R_i,[\oc_{\sige}],-)$ are $\mapsto$-copy contexts.

%Let $(f'_i,P'_i,\sige)= \phi(f_i,P_i,\sige)$, according to the definition of this copymorphism in the proof of theorem \ref{dallago_weight} the tail of $f'_i$ is the $\forall$ link whose associated eigenvariable is $Z_i$. According to the definition of this copymorphism, we also know that $g_i \in D_{\phi}$, let us define $(g'_i,R'_i,u')= \phi(g_i,R_i,\sige)$. By rule \ref{copymorphism_path_cons} of the definition of copymorphisms, we have $(f'_i,P'_i,[\oc_{\sige}],+) \mapsto^* (g'_i,R'_i,[\oc_{\sige}],-)$. 

%If there were a $\hookrightarrow$ step in this path, the contexts of the last such step would have a not-empty set of underlying formulae. So, by theorem \ref{lemma_underlying_step}, the set of underlying formulae of $(g_i',R'_i,[\oc_{\sige}],-)$ would not be empty. The top connective of $\beta(g'_i)$ is $\exists$ so there is a mismatch with the trace, the set of underlying formulae of $(g'_i,R'_i,[\oc_{\sige}],-)$ is empty. We have a contradiction, so there is no $\hookrightarrow$ step in this path. $(f'_i,P'_i,[\oc_{\sige}],+) \rightsquigarrow^* (g'_i,R'_i,[\oc_{\sige}],-)$. So $(f'_i,P'_i,[\forall],+) \rightsquigarrow^* (g'_i,R'_i,[\forall],-)$.

 The tail of $g'_i$ is an $\exists$ link whose associated formula is $B$. Its premise is $h'_i$, with $\phi(h_i,R_i,\sige)=(h'_i,R_i,\sige)$. So $(Z_i,B,ztree(h'_i,R'_i)) \in ztree(e',P')$.
  \item $Z \not \in FV(\beta(e'))$, however $Z$ being replaced by $B$ in the whole net, all free eigenvariables of $B$ are free variables of $\beta(e')$. By construction of the proof-nets, $Z$ can not be a free eigenvariable of $B$. So we can use the arguments of the first case to prove that for all $(Y,C,ztree(d,S)) \in ztree(h,R)$, ${(Y,C,ztree(d',S'))} \in ztree(e',P')$.
  \item If $Y \not \in E$ and $Y$ is not in the first branches of $ztree(h,R)$. So either $Y \not \in FV(\beta(e)) \cup FV(B)$ (so $Y \not \in FV(\beta(e'))$) or $(f_Y,R_Y,[\forall],+)$ does not end with a $\exists$ link. In this case, let $(f_Y,P_{\partial(f_Y)})$ be the conclusion of the $\exists$ link. If the path beginning by $(f_Y,R_Y,[\forall],+)$ does not end with a $\exists$ link, by Theorem \ref{lemma_underlying_mapsto}, this means that either it runs an infinite path or it arrives at a pending edge. In both cases, it means that if we set $(f_Y',R_Y',t'_Y)=\phi(f_Y,R_Y,t_Y)$, the path beginning by $(f_Y',R_Y',[\forall],+)$ does not end or arrives at a pending edge. So $Y$ is not in the first branches of $ztree(e',P')$.
\end{itemize}

If $ztree(e,P)$ is infinite, then there is an infinite branch path. The corresponding branch path in $ztree(e',P')$ is also infinite because at most one over two branch is contracted.
\end{proof}

\begin{definition}\label{def_underlyingformula}
  Let $C$ be a context, the {\em underlying formula} of $C$ (written $\beta(C)$) is the element of $\beta_{\{\}}(C)$ which we obtain by following the definition of $\beta_{\{\}}(C)$ and, whenever we have to choose a truncation of some $ztree(f,Q)$, we choose $ztree(f,Q)$ itself.
\end{definition}

\begin{theorem}\label{theorem_underlying_formula}
If $C \rightsquigarrow D$ and $\beta(C)$ is defined, then $\beta(D)$ is defined and $\beta(C)=\beta(D)$.
\end{theorem}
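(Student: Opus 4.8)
The statement is the ``stability of the underlying formula along $\rightsquigarrow$'' result, and it is the deterministic refinement of Lemma~\ref{lemma_underlying_step}. The plan is to piggy-back on that lemma plus the finiteness of $ztree$s (Lemma~\ref{lemma_underlying_finite}, together with Corollary~\ref{lemma_tg_finite} which guarantees every proof-net has finite $T_G$ and hence, via the machinery of that subsection, finite $ztree$s). The key observation is that $\beta(C)$ is by definition the particular element of the set $\beta_{\{\}}(C)$ obtained by always choosing the \emph{full} $ztree$ as the truncation, i.e. the ``largest'' choice at every $\forall$/$\exists$ node. Lemma~\ref{lemma_underlying_step} already tells us that $\beta_{\{\}}(C)$ and $\beta_{\{\}}(D)$ agree up to substitution on a common sub-part, and that ``the difference is always on the formulae which use the smallest truncations; the larger truncations are possible for both contexts and lead to the same formulae''. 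So the heart of the argument is to check that the maximal-truncation element is exactly one of those ``large'' formulae that survives unchanged.

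\textbf{Key steps, in order.} First I would do a case analysis on the $\rightsquigarrow$ rule used, exactly as in Lemma~\ref{lemma_underlying_step}. For all multiplicative, exponential, $?D$, $?N$, $?C$, $\S$ and axiom/cut rules, the proof of Lemma~\ref{lemma_underlying_step} shows $\beta_{\{\}}(C)=\beta_{\{\}}(D)$ as \emph{sets}, and more precisely it shows the computation proceeds through syntactically identical inductive calls after the $_{/LLL}$ projection erases index shifts and the head-connective is peeled off together with the rightmost trace element; since $\beta(C)$ and $\beta(D)$ are the results of running that same computation with the canonical (full-$ztree$) choice at each quantifier step, and the quantifier steps encountered are literally the same ones (the $\rightsquigarrow$ rules only look at the rightmost trace element, so the $\exists$/$\forall$ potential links referenced are identical), we get $\beta(C)=\beta(D)$. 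Second, the $\exists$-downward and $\exists$-upward cases: here $ztree(e,P)$ may have a strictly larger domain than $ztree(f,Q)$ because the captured formula $C$ may carry extra eigenvariables; but taking the full $ztree$ on the $C$-branch in $\Theta_e$ and the full $ztree$ in $\Theta_f$ and forming the graph union $\Theta$ (which is itself $\vartriangleleft ztree(e,P)$, using Lemma~\ref{lemma_underlying_finite} to know it is finite) yields precisely the maximal choice, and by the argument in Lemma~\ref{lemma_underlying_step} the resulting formula is in the intersection — but since it is the maximal choice it \emph{is} $\beta(C)$ and also $\beta(D)$. Third, the $\forall$-downward and $\forall$-upward cases are symmetric: the only difference is whether the $\{Z\mapsto C\}$ root is forced; with the full $ztree$ it is always present on both sides, so again $\beta(C)=\beta(D)$. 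Throughout, I would invoke Lemma~\ref{lemma_underlying_mapsto} to guarantee $\beta_{\{\}}(D)\neq\varnothing$ so that ``$\beta(D)$ is defined'' is meaningful.

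\textbf{Main obstacle.} The delicate point is the bookkeeping with eigenvariables in the $\exists$/$\forall$ cases: one must be careful that, when the full $ztree(e,P)$ is used, the substitution it induces is well-defined on the branch corresponding to $Z$ (no circular dependency, which is exactly the content of ``$Z$ cannot be a free eigenvariable of $B$'' from the proof of Lemma~\ref{lemma_underlying_finite}), and that the induced substitution $\theta_{\Theta}$ on $B[C/X]$ versus $B[C[\theta]/X]$ really does land on the same formula regardless of the order in which one performs the instantiations. This is essentially the observation that the graph-union $\Theta$ computes $\theta_\Theta$ coherently; it is routine but it is where all the care is needed. Everything else reduces mechanically to re-reading the case analysis of Lemma~\ref{lemma_underlying_step} with ``the full truncation'' substituted for ``some large enough truncation''.
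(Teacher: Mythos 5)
Your proposal is correct and takes essentially the same route as the paper: the paper's own (very terse) proof simply observes that in the inductive computation of the underlying formulae one may always pick the full $ztree$ as the truncation, that this maximal choice is available on both sides of a $\rightsquigarrow$ step, and that the inductive cases only ever extend truncations, never restrict them — which is exactly your elaborated case analysis on top of Lemma~\ref{lemma_underlying_step} (with Lemma~\ref{lemma_underlying_mapsto} for definedness and the finiteness of $ztree$s making the full-$ztree$ choice legitimate). Your version just spells out the quantifier cases and the graph-union bookkeeping that the paper leaves implicit.
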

\begin{proof}
  In the induction proof, we observe that for the base case we can choose any truncation of $ztree(e,P)$ we want. So, in particular, we can choose $ztree(e,P)$ itself. The induction steps always extend the truncation, they never restrain it.
\end{proof}

\section{Stratification}
\label{section_stratification}
\subsection{History and motivations}
\paragraph{} A stratification designs a restriction of a framework, which forbids the contraction (or identification) of two objects belonging to two morally different ``strata''. Russell's paradox in naive set theory relies on the identification of two formulae which belong morally to different strata. The non-terminating $\lambda$-term $(\lambda x. x x)(\lambda x. x x)$ depends on the identification of an argument with the function duplicating it. In recursion theory, to create from the elementary sequences $\theta_m(n)=2^n_m$ (tower of exponential of height $m$ in $n$), the non elementary sequence $n \mapsto 2^n_n$, we also need to identify $n$ and $m$ which seem to belong to different strata. Stratification restrictions might be applied to those frameworks (naive set theory, linear logic, lambda calculus and recursion theory) to entail coherence or complexity properties~\cite{baillot2010linear}.

\paragraph{}The first example of a stratification condition in linear logic appears in~\cite{girard1995light}, though Girard did not use the word ``stratification'' at that time. Girard's inspiration came from a sharp analysis of Russell's paradox in naive set theory. This paradox needs the contraction of two formulae, the second being obtained from the first by the application of a ``specification rule''. Therefore, we can avoid the paradox if:
\begin{enumerate}
\item We index each formula in the sequents with a natural number (called the stratum of the formula)
\item The use of the specification rule on a formula increases its stratum 
\item We only allow contraction between formulae with the same stratum
\end{enumerate}

\paragraph{}Concretely, in~\cite{girard1995light} and~\cite{danos2003linear}, the stratification condition is ``use neither digging nor dereliction ($?N$ and $?D$ links)'' and is presented as a subsystem of linear logic, named $ELL$. Any proof-net of $ELL$ reduces to its normal form in a number of step bounded by an elementary function of its size. In~\cite{girard1995light} and~\cite{danos2003linear}, the stratum of an occurence of a formula in a proof is the depth of its corresponding edge (in the proof net corresponding to the proof) in terms of box inclusion. The name ``stratification'' is given in~\cite{danos2003linear} for this technique, but in this work the only kind of stratification considered is still the one where strata correspond to depths. In~\cite{baillot2010linear}, Baillot and Mazza present an analysis of the concept of stratification, and a generalization of the stratification of~\cite{girard1995light} and~\cite{danos2003linear}. Their stratification condition is enforced by a labelling of edges. It also enforces elementary time.

\paragraph{} In this paper, we present an even more general stratification. This generalization is not given by a new linear logic subsystem but by a criterion on proof-nets. Then, to prove that a system is elementary time sound, we only have to prove that all the proof-nets of the system satisfy the criterion. Here, we apply the criterion to $ELL$ and $L^3$, the only two linear logic subsystems discovered characterizing elementary time. However, if a better system was discovered, it might satisfy our criterion. To prove the soundness of this new system, we would only have to prove that it satisfies our criterion. Our work may simplify proofs of soundness of several systems by factoring out a common part.

\subsection{Stratification on $\lambda$-calculus}
Our definition of stratification is based on context semantics paths and may be difficult to grasp at first read. To motivate the criterion, we first state a criterion on $\lambda$-calculus, the formal system whose terms are generated by $\Lambda = x \mid \lambda x. \Lambda \mid \Lambda \Lambda$. Where $x$ ranges over a countable set of variables. Parentheses are added when a term is ambiguous. We think this criterion corresponds to the criteria on proof-nets. Unfortunately, we did not prove any statement precising this equivalence yet. Thus, the criterion on $\lambda$-term can only be taken as a guide for intuition. 

Let $t,t',u,u' \in \Lambda$ such that $t' \rightarrow_{\beta} u'$, $u$ is a subterm of $t$ and $u'$ is a subterm of $t'$. We say that $u'$ is a {\em residue} of $u$ if it is a ``copy by $\beta$-reduction'' of $u$ where, possibly, the free variables have been substituted. Complete definition can be found in appendix~\ref{def_residues}. Here we give two examples: 
\begin{itemize}
\item If $t=(\lambda x.x x) (\lambda y.y) (\lambda v. \lambda w. w) \rightarrow_\beta (\lambda y.y) (\lambda y.y) (\lambda v. \lambda w. w)= t'$. Then, the  residues of $\lambda y.y$ through $t \rightarrow_\beta t'$ are the two occurrences of $\lambda y.y$ in $t'$.
\item If $t=(\lambda x. \lambda y. x y) (\lambda z.z)$, $t'= \lambda y. (\lambda z.z) y$ and $t \rightarrow_\beta t'$. Then, the only residue of $\lambda y. x y$ through $t \rightarrow_\beta t'$ is the occurrence of $\lambda y. (\lambda z.z) y$ in $t'$.
\end{itemize}

We define ``hole-terms'' as $\lambda$-terms $h$ with a special variable $\circ$ which appears free exactly once in $h$. Then, if $t$ is a $\lambda$-term, $h[t]$ designs $h[t/\circ]$.

A $\lambda$-term is said stratified if the following $\twoheadrightarrow$ relation on subterms is acyclic. Intuitively, $v \twoheadrightarrow w$ if, during $\beta$-reduction a residue of $w$ will be applied to a term containing a residue of $v$. With the additional constraint that $v$ is not on the left of an application.
\begin{definition}
  Let $v$, $w$ be subterms of $t$, then $v \twoheadrightarrow w$ if there exists hole-terms $h_1,h_2$ and $\lambda$-terms $v',w'$ such that: $t \rightarrow_\beta^* h_1[w' h_2[v']]$, $v',w'$ are residues of $v,w$ along the $\beta$-reduction. With the additional constraint that this residue of $v$ is not applied to something, i.e. we do not have  $(v) (u)$ but either $(u) (v)$ or $\lambda x.v$.
\end{definition}

A $\lambda$-term is said stratified if $\twoheadrightarrow$ relation on subterms is acyclic. 

As an example, we can observe that $(\lambda x. x x) (\lambda y. y y)$ is not stratified, because $(\lambda y. y y) \twoheadrightarrow (\lambda y. y y)$. Indeed $t=(\lambda x. x x) \underbrace{(\lambda y. y y)}_{v=w} \rightarrow_\beta \underbrace{(\lambda y. y y)}_{w'} \underbrace{(\lambda y. y y)}_{v'}$

Similarly, let $\underline{n}= \lambda f. \lambda x. \underbrace{f(f(\cdots (f}_{n\text{ applications}} x)))$ be the Church-numeral corresponding to $n$ and $S=\lambda n. \lambda f. \lambda x. n f(f x)$ the successor on Church-numeral. Then, the $\lambda$-term $(\lambda n. n (\lambda a. \lambda k. k a (a \underline{1})) S n)  \underline{2}$, which represents the Ackermann function applied to $2$ is not stratified. Indeed, the following $\beta$-reduction sequence shows that $(a \underline{1}) \twoheadrightarrow (a \underline{1})$, tracking the residues of $(a \underline{1})$ with braces.

\begin{align*}
  &(\lambda n. n (\lambda a. \lambda k. k a \underbrace{(a \underline{1})}_{}) S n)  \underline{2} ~\rightarrow_{\beta}^* ~ 
  (\lambda a. \lambda k. k a \underbrace{(a \underline{1})}_{})   ( (\lambda a. \lambda k. k a \underbrace{(a \underline{1})}_{}) S  ) \underline{2} \rightarrow_{\beta}^* \\
  &(\lambda a. \lambda k. k a \underbrace{(a \underline{1})}_{})   (\lambda k. k S \underbrace{(S \underline{1})}_{} ) \underline{2} ~\rightarrow_{\beta}^*~
  (\lambda k. k (\lambda k. k S \underbrace{\underline{2}}) \underbrace{((\lambda k. k S \underbrace{\underline{2}}) \underline{1}))} \underline{2} \\
  &\underline{2} (\lambda k. k S \underbrace{\underline{2}}) \underbrace{(\underline{1} S \underbrace{\underline{2}})} ~\rightarrow_{\beta}^*~ 
  (\lambda k. k S \underbrace{\underline{2}}) ( (\lambda k. k S \underbrace{\underline{2}}) \underbrace{\underline{3}}) ~\rightarrow_{\beta}^*~\\&(\lambda k. k S \underbrace{\underline{2}}) (\underbrace{\underline{3}}_{w'} S \underbrace{\underline{2}}_{v'})
\end{align*}

We think that this stratification on $\lambda$-terms corresponds to the notion of stratification on proof-net which we will define in the next subsection.
\begin{conjecture} \label{conj_eq}
  Let $t$ be a $\lambda$-term typable in System F, and $G$ be the proof-net obtained by encoding the type derivation of $t$ in linear logic (by Girard's encoding, transforming $A \rightarrow B$ into $!A \multimap B$~\cite{girard1987linear}). Then, $t$ is stratified if and only if $G$ is stratified. 
\end{conjecture}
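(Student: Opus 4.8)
The statement is a conjecture, so what I would actually provide is not a full proof but a structured plan of attack together with an identification of the genuine obstructions; but let me lay out the route I would follow if I were to try to settle it. The conjecture asserts an equivalence between the combinatorial stratification relation $\twoheadrightarrow$ on subterms of a System F term $t$ and the context-semantics-based stratification of the proof-net $G$ obtained by Girard's $!A\multimap B$ encoding. The natural strategy is to show that the two "dependence" relations that define acyclicity on each side are \emph{the same relation transported along the Curry--Howard translation}: a subterm $v$ of $t$ corresponds (via the syntactic-tree-to-proof-net map illustrated in Figure~\ref{curry_howard_iso}) to a box $B_v$ of $G$ (the box wrapping the translation of $v$, which exists precisely because each application argument is placed under a $!$-box by Girard's encoding), and I would prove $v\twoheadrightarrow w$ in $t$ if and only if $B_v$ "depends on" $B_w$ in the sense used to define stratification of proof-nets in Section~\ref{section_stratification}.

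\textbf{Key steps, in order.} First I would make precise the dictionary between subterms and boxes: to each subterm occurrence $v$ of $t$ associate the box $B_v$ of $G$ (using that $v$, not being on the left of an application, is encoded under a $!$-box), and conversely every box of $G$ arises this way. Second, I would relate $\beta$-reduction to $cut$-elimination: if $t\rightarrow_\beta^* t'$ then $G\rightarrow_{cut}^* G'$ by a corresponding sequence, and the residue relation on subterms (Appendix~\ref{def_residues}) matches the reducts-of-a-box relation — this is where I would lean on the copymorphism machinery of Section~\ref{subsection_proof_dal_lago} and Corollaries~\ref{lemma_copymorphism_cons_canonical} and~\ref{lemma_copymorphism_cons_canonical2}, which track how canonical boxes of $G$ correspond to canonical boxes of $G'$. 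Third, and this is the crux, I would show that the context-semantics path underlying the proof-net stratification relation between $B_v$ and $B_w$ is "witnessed" exactly by the configuration $h_1[w'\,h_2[v']]$ appearing in the definition of $v\twoheadrightarrow w$: the path from $\sigma(B_w)$ that re-enters $B_v$ through an auxiliary door (crossing a $\otimes$ corresponding to the application $w'\,h_2[v']$, then descending into the argument) is precisely the path that, in a stratification condition à la Section~\ref{section_stratification}, records that $B_v$ lies under $B_w$. The side condition "$v$ is not applied to something" is exactly what guarantees $v$ sits under a $!$-box (hence $B_v$ exists and the relevant trace element is a $?_t$/$!_t$ rather than a bare $\otimes$), so the two conditions line up. Finally, acyclicity of $\twoheadrightarrow$ on subterms transfers to acyclicity of the box-dependence relation, and conversely, giving both directions.

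\textbf{Main obstacle.} The hard part will be Step three — establishing that the existence of a $\beta$-reduction $t\rightarrow_\beta^* h_1[w'\,h_2[v']]$ with the stated residue conditions is genuinely \emph{equivalent} to the existence of the corresponding context-semantics path in $G$, rather than merely one implication. The forward direction (a $\beta$-reduction witness yields a path) is a bookkeeping exercise using the $cut$-elimination/$\mapsto$ correspondence. The reverse direction is delicate: a context-semantics path need not correspond to a single linear reduction sequence — it anticipates \emph{all} reduction strategies simultaneously, and paths that cross $?C$ links make duplication choices (encoded by exponential signatures) that must be "realized" by some actual reduction sequence before one can extract an $h_1[w'\,h_2[v']]$ configuration. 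One would need a normalization-of-paths argument, presumably using Theorem~\ref{charac_final_contexts} (our copies are Dal Lago's copies, hence correspond to genuine duplicates) and the acyclicity lemma~\ref{lemma_acyclicity}, to show every such path is realized by a concrete reduction. A secondary subtlety is that System F types need not survive Girard's encoding with the \emph{same} box structure under all reductions — second-order instantiation ($\forall$/$\exists$ cuts, which substitute eigenvariables globally, see Figure~\ref{quant_rules} and the $ztree$ analysis of Section~\ref{subsection_underlying}) can move formulas in and out of boxes, so one must check the dictionary $v\mapsto B_v$ is stable enough along reduction; Lemma~\ref{lemma_underlying_finite} and Theorem~\ref{theorem_underlying_formula} are the tools I would use to control this. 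Because of these difficulties — and because the paper itself flags that "we did not prove any statement precising this equivalence yet" — I present this as a conjecture with a proof strategy rather than a theorem, and leave the realization-of-paths lemma as the decisive missing ingredient.
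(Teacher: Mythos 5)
There is nothing in the paper to compare your submission against: the statement is given there as Conjecture~\ref{conj_eq}, explicitly left unproven (``we did not prove any statement precising this equivalence yet''), and the paper offers no argument beyond the informal motivation surrounding it. Your text is likewise not a proof but a strategy, and you say so yourself; judged as a proof attempt it therefore has a genuine gap, and it is exactly the one you flag as the ``decisive missing ingredient''. The crux of the conjecture \emph{is} the equivalence between the existence of a $\beta$-reduction witness $t\rightarrow_\beta^* h_1[w'\,h_2[v']]$ (with the residue conditions) and the existence of a $\rightsquigarrow$-path from $\sigma(B_w)$ entering $B_v$ by its principal door; deferring the reverse direction to an unproved ``realization-of-paths lemma'' is deferring the whole content of the statement, since the forward direction alone would only give that stratification of $G$ implies stratification of $t$, not the equivalence.

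Two further points where the plan as written would need repair before it could become a proof. First, your dictionary ``subterm occurrence $v$ $\mapsto$ box $B_v$, and every box arises this way'' is too optimistic: in Girard's encoding only the arguments of applications (and the contracted uses of variables) are placed under $\oc$-boxes, while the $\twoheadrightarrow$ relation on $\lambda$-terms is defined on arbitrary subterms not in function position (e.g.\ bodies of abstractions), so the two sides of the correspondence do not line up one-to-one; moreover $\twoheadrightarrow$ on the term side quantifies over residues along arbitrary reduction sequences, whereas $\twoheadrightarrow$ on the proof-net side is defined statically on the boxes of $G$ via context-semantics paths, and matching ``residues of $v$'' with ``copies of $B_v$'' requires precisely the quantitative copy/copymorphism analysis you cite but do not carry out. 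Second, the paper itself warns that the statement is sensitive to the encoding (stratified proof-nets exist whose underlying $\lambda$-term is not stratified under other decorations), so any argument must use the specific shape of Girard's translation ($A\Rightarrow B$ as $\oc A\multimap B$) in an essential way; your sketch does invoke it for the existence of $B_v$, but the step where the side condition ``$v$ is not applied'' is claimed to ``line up'' with entering $B_v$ by its principal door is asserted rather than argued, and it is not obvious, since $\twoheadrightarrow$ on proof-nets requires the path to reach $\sigma(B_v)$ with a specific trace, not merely to reach the box. In short: the strategy is reasonable and consistent with the intuition the paper gives, but it leaves open exactly what the conjecture asks, so it cannot be counted as a proof.
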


Notice that, for any $\lambda$-term $t$ typable in System F, there may be many proof-nets $G$ whose underlying $\lambda$-term is $t$. The proof-net obtained by Girard's encoding is very special in the sense that every function is supposed non-linear (if $A \multimap B$ appears in the proof-net, $A=\oc A'$ for some $A'$). There are stratified proof-nets whose corresponding $\lambda$-term is not stratified according to our definition. For example $(\lambda <f,g>.<(f)g,(g)f>) <\lambda x.x, \lambda y. <y,y>>$ can be decorated in a stratified proof-net even if $\lambda x.x \twoheadrightarrow \lambda y.<y,y> \twoheadrightarrow \lambda x.x$.
    
\subsection{Definition of ``principal door'' stratification}

\paragraph{}We will define a relation $\twoheadrightarrow$ between boxes of proof nets. Intuitively, $B \twoheadrightarrow B'$ means that $B$ can be duplicated before being passed to $B'$ as an argument. In terms of context semantics paths, it means that there is a path beginning by the principal door of $B$ which enters $B'$ by its principal door.

    \begin{equation*}\label{def_twoheadrightsquigarrow}
      B \twoheadrightarrow B' \Leftrightarrow \exists P,P' \in Pot, s\in Sig, T \in Tra, (\sigma(B), P, !_s, +) \rightsquigarrow^* (\sigma(B'),P',T,-)
    \end{equation*}

\paragraph{} This definition of stratification may not be the most general possible for linear logic. Maybe we will find better, more general, simpler conditons for elementary time. Because we anticipate future definitions, we want to distinguish ``stratification'', which is the general idea of forbidding the identification of objects belonging to different levels, and this particuliar version of stratification in linear logic, which we name ``principal door stratification''. However, as we will write about prinicipal door stratification dozens of times in this article, we will use ``stratification'' (respectively ``stratified'') as a shortcut for ``principal door stratification'' (respectively ``principal door stratified'').

\begin{definition}\label{def_stratified}
  A proof net $G$ is principal door stratified if $\twoheadrightarrow$ is acyclic.
\end{definition}

\begin{definition}[strata of a box/context]\label{def_strata}
  The strata of a box $B$, written $S(B)$, is the depth of $B$ in terms of $\twoheadrightarrow$, i.e. $S(B)= \max \Set{k \in \mathbb{N}}{\exists B_1,\cdots,B_k, B \twoheadrightarrow B_1 \twoheadrightarrow \cdots \twoheadrightarrow B_k}$. Let $C$ be a context such that $(\sigma(B),P,!_t,+) \rightsquigarrow^* C$, the stratum of $C$ (written $S(C)$ ) is the stratum of $B$. We also write $S_G$ for $\max_{B \in B_G}S(B)$.
\end{definition}
Notice that the definition of the strata of a context is not ambiguous because $\rightsquigarrow$ is bideterministic and $(\sigma(B),P,[\oc_t],+)$ can not have an antecedent by $\rightsquigarrow$.

We will prove that stratified proof-nets terminate in elementary time, the height of the exponential tower depending only on the depth of the $\twoheadrightarrow$ relation. To prove this, let us consider a path beginning by $(\sigma(B),P,[\oc_t],+)$. Such a path can not go through two contexts of the shape $(e,Q,[\oc_u],q)$ and $(e,Q,[\oc_v],q)$ (because proof-nets are acyclic by Theorem \ref{lemma_acyclicity}). In fact, we can refine the result. Let us assume $e \in B_{\partial(e)} \subset \cdots \subset B_1$, then such a path can not go through two contexts of the shape $(e,[q_1;\cdots;q_{\partial(e)}],[\oc_u],q)$ and $(e,[r_1;\cdots;r_{\partial(e)}],[\oc_v],q)$ where $q_i=r_i$ for every $B \twoheadrightarrow B_i$. We will refer to this result as the strong acyclicity lemma.

This bounds the number of times we can go through the same $\wn C$ or $\wn N$ link with a trace of one element, by $\max_{\substack{B \twoheadrightarrow B'\\ P' \in L_{\mapsto}(B)}}C_{\mapsto}(B',P')^{\partial_G}$. So the height of any copy of $(B,P)$ will be inferior to $|E_G| \cdot \max_{\substack{B \twoheadrightarrow B'\\P' \in L_{\mapsto}(B')}}|C_{\mapsto}(B',P')|^{\partial_G}$.

Finally, we will use this inequality to prove that the number of copies of a potential box $(B,P)$ is bounded by an elementary function on the maximal number of copies of potential boxes $(B',P')$ such that $B \twoheadrightarrow B'$. The depth of $\twoheadrightarrow$ being finite (at most equal to the number of boxes), this entails an elementary bound on the maximum number of copies of potential boxes. 

%\tikzset{external/remake next}
\begin{figure}
  \centering
  \begin{tikzpicture}
    \node [princdoor] (bprinc) at (0,0) {};
    \node [below]     (bname)  at ($(bprinc)+(-0.4,0)$) {$\mathbf{B}$};
    \node [auxdoor]   (baux)   at ($(bprinc)+(1,0)$) {};
    \node [ax]        (bax)    at ($(bprinc)!0.5!(baux)+(0,0.5)$) {};
    \draw [ar] (bax) to [out=180,in= 70] (bprinc);
    \draw [ar] (bax) to [out=  0,in=110] (baux);
    \draw (bprinc) -| ++(-0.35,0.8) -| ($(baux)+(+0.35,0)$) -- (baux)--(bprinc);
    \node [par]       (bpar)   at ($(bprinc)!0.5!(baux)+(0,-0.8)$) {};
    \draw [ar] (bprinc) -- (bpar); \draw [ar] (baux) -- (bpar);
    \node [princdoor] (cprinc) at ($(bpar)+(0,-0.8)$) {};
    \draw [ar] (bpar)   -- (cprinc);
    \draw (cprinc) -| ++(1,3) -| ($(cprinc)+(-1,0)$) -- (cprinc);

    \node [auxdoor]   (bpaux)   at ($(cprinc) +(2.5,0)$) {};
    \node [princdoor] (bpprinc) at ($(bpaux)  +(1.5,0)$) {};    
    \node [below]     (bpname)  at ($(bpprinc)+(0.5,0)$) {$\mathbf{B'}$};
    \draw (bpprinc) -| ++(0.6,3.85) -| ($(bpaux)+(-1,0)$) -- (bpaux) -- (bpprinc);
    \node [tensor]    (bptens)  at ($(bpaux)  +(0,0.8)$) {};
    \node [par]       (bppar)   at ($(bpprinc)+(0,0.8)$) {};
    \draw [ar] (bptens) -- (bpaux); \draw [ar] (bppar)  -- (bpprinc);
    \node [ax]        (bpax)    at ($(bptens)!0.5!(bppar)+(0,0.4)$) {};
    \draw [ar] (bpax) to [out=180,in=60] (bptens); \draw [ar] (bpax) to [out=0,in=120] (bppar);
    \node [cont]      (bpcont)  at ($(bptens)+(-0.1,0.8)$) {};
    \draw [ar] (bpcont) -- (bptens) node [edgename] {$\mathbf{e}$};

    \node [auxdoor]   (daux1)  at ($(bpcont)+(120:0.8)$) {};
    \node [auxdoor]   (daux2)  at ($(bpcont)+( 60:0.8)$) {};
    \node [princdoor] (dprinc) at ($(daux2)+(1.3,0)$) {};
    \draw [ar] (daux1) -- (bpcont); \draw [ar] (daux2)--(bpcont);
    \draw (dprinc) -| ++(0.35,1.45) -| ($(daux1)+(-0.35,0)$) -- (daux1) -- (daux2) -- (dprinc); 
    \node [tensor]    (dtens)  at ($(dprinc)+( 0,0.6)$) {};
    \node [ax]        (dax1)   at ($(daux1) +( 0.4,1.3)$) {};
    \node [ax]        (dax2)   at ($(daux2) +( 0.2,1.05)$) {};
    \draw [ar] (dax1) to [out=-140,in=90] (daux1); \draw [ar] (dax1) to [out=0, in=60] (dtens);
    \draw [ar] (dax2) to [out=-140,in=90] (daux2); \draw [ar] (dax2) to [out=-10, in=120] (dtens);        
    \draw [ar] (dtens) -- (dprinc);

    \draw [ar] (dprinc) to [out=-70,in=80] (bppar);
    \node [cut] (cutl) at ($(bpaux)!0.5!(cprinc)+(0,-0.7)$) {};
    \draw [ar] (cprinc) to [out=-80,in=180] (cutl); \draw [ar] (bpaux) to [out=-100,in=0] (cutl);

    \node [auxdoor]  (eaux)   at ($(bpprinc)+(0,-1.1)$) {};
    \node [princdoor](eprinc) at ($(eaux) +(-4.5,0)$) {};
    \node [below]     (ename)  at ($(eprinc)+(0.5,0)$) {$\mathbf{D}$};
    \draw (eaux) -| ++(1.5,5) -| ($(eprinc)+(-1,0)$) -- (eprinc) -- (eaux);
    \draw [ar] (bpprinc)--(eaux);
    \node [weak]     (weak) at ($(eprinc)+(0,0.8)$) {};
    \draw [ar] (weak) -- (eprinc);
    \node [dig]      (extdig)  at ($(eprinc)+(-2,0)$) {};
    \node [cut]      (extcut2) at ($(eprinc)!0.5!(extdig)+(0,-0.8)$) {};
    \node [ax]       (extax)   at ($(extdig)+(-0.5,0.8)$) {};
    \draw [ar] (extax) to [out=  0, in=90] (extdig);
    \draw [ar] (extax) to [out=180, in=90] ($(extax)+(-0.5,-0.8)$);
    \draw [ar]       (extdig) to [out=-90,in=170] (extcut2);
    \draw [ar]       (eprinc)   to [out=-90,in= 10] (extcut2);
    
    \node [cont]     (extcont) at ($(eaux)   +(-120:1.1)$) {};
    \node [weak]     (extweak) at ($(extcont)+( 120:0.8)$) {};
    \draw [ar]       (eaux) -- (extcont);     \draw [ar] (extweak) -- (extcont);
    \node [princdoor](fprinc)  at ($(eaux)+(4,0)$)   {};
    \node [below]     (fname)  at ($(fprinc)+(0.5,0)$) {$\mathbf{C}$};
    \draw (fprinc) -| ++(1.5,5) -| ($(fprinc)+(-1.9,0)$) -- (fprinc);
    \node [cont]     (fcont)   at ($(fprinc)+(0,0.7)$) {};
    \node [der]      (fder1)   at ($(fcont)+(120:0.8)$) {};
    \node [der]      (fder2)   at ($(fcont)+( 60:0.8)$) {};
    \node [tensor]   (ftens1)  at ($(fder1)+(110:0.8)$){};
    \node [tensor]   (ftens2)  at ($(fder2)+( 70:0.8)$){};
    \node [ax]       (fax)     at ($(ftens1)!0.5!(ftens2)+(0,0.5)$) {};
    \draw [ar] (fcont) -- (fprinc);
    \draw [ar] (ftens1) to [out=-90,in=120] (fder1);
    \draw [ar] (ftens2) to [out=-90,in= 60] (fder2);
    \draw [ar] (fder1) -- (fcont);
    \draw [ar] (fder2) -- (fcont);
    \draw [ar] (fax) to [out=180,in=60] (ftens1); \draw [ar] (fax) to [out=0,in=120] (ftens2);
    \node [princdoor] (gprinc) at ($(ftens1)+(120:0.9)$) {};
    \node [par]       (gpar)   at ($(gprinc)+(0,0.8)$)   {};
    \node [ax]        (gax)    at ($(gpar)  +(0,0.7)$)   {};
    \draw [ar] (gax)  to [out= -60,in= 60] (gpar);
    \draw [ar] (gax)  to [out=-120,in=120] (gpar);
    \draw [ar] (gpar)   -- (gprinc);
    \draw [ar] (gprinc) -- (ftens1);
    \draw (gprinc) -| ++(0.5,1.7) -| ($(gprinc)+(-0.5,0)$) -- (gprinc);
    \node [weak]      (fweak)  at ($(ftens2)+(60:0.8)$) {};
    \draw [ar]        (fweak) -- (ftens2);

    \node [cut]      (extcut)  at ($(fprinc)!0.5!(eaux)+(0,-1.5)$) {};
    \draw [ar] (extcont) to [out=-50, in=180] (extcut);
    \draw [ar] (fprinc)  to [out=-90, in=  0] (extcut);
  \end{tikzpicture}
  \caption{\label{fig_ex_stratified}$(\sigma(B),[x_D;\sigl(\sige)],[\oc_{\sigl(\sigr(\sige))}],+)\mapsto^5 (e,[x_D;\sigl(\sige)],[\oc_{\sigl(\sigl(\sige))}],-) \mapsto^{7} (\sigma(C),[],[\oc_{\sigr(\sige)};\parr_r;\oc_{\sigl(\sige)};\wn_{\sigr(x_D)}],-) \mapsto^9 (\sigma(C),[],[\oc_{\sigr(\sige)};\otimes_l;\wn_{\sigr(\sige)};\oc_{\sigr(x_D)}],+) \mapsto^3 (\sigma(B'),[x_D],[\oc_{\sigr(\sige)};\otimes_l; \wn_{\sigr(\sige)}],-) \mapsto^8 (\sigma(B),[x_D;\sigr(\sige)],[\oc_{\sigr(\sigr(\sige))}],+) \mapsto^5 (e,[x_D;\sigr(\sige)],[\sigr(\sige)],-)  \mapsto^{11} (w,[\sigr(x_D)],[\oc_{\sige}],-)$}
 \end{figure}

As an example, we can observe the path presented in Figure \ref{fig_ex_stratified}. We have $B \twoheadrightarrow B'$ but $B \not \twoheadrightarrow C$. And indeed, in this path, there are not two contexts of the shape $(e,[p_B;p_{B'}],[\oc_u],-)$ and $(e,[q_B;q_{B'}],[\oc_v],-)$ with $p_{B'}=q_{B'}$. On this proof-net, we can get the intuition underlying the strong acyclicity lemma. Let us suppose $(e,[p_1,…,p_{\partial(e)}],[\oc_u],p) \mapsto^* (e,[q_1,…q_{\partial(e)}],[\oc_v],p)$ and $p_i=q_i$ for all $i$ such that $B \twoheadrightarrow B_i$. Then, we can take the path between those two contexts backward (we will name this reverse path an {\em antipath}), forgetting the exponential signatures $q_i$ corresponding to boxes $B_i$ with $B \not \twoheadrightarrow B_i$ (as in the path of Figure \ref{fig_ex_stratified} where we replaced the exponential signature corresponding to box $D$ by a generic variable named $x_D$). Then, we can observe that we have enough information to do the antipath, because the $q_i$s we forgot are never really used. In Figure \ref{fig_ex_stratified}, if we supposed $(e,[p_{D};p_{B'}],[\oc_u],-) \mapsto^* (e,[q_D;p_{B'}],[\oc_v],-)$, we could follow the antipath beginning by $(e,[x_D;p_{B'}],[\oc_v],-)$. This antipath leaves $D$ by its auxiliary door with the contexts $(\sigma(B'),[x_D],[\oc_v;\otimes_l;\wn_{p_{B'}}],-) \mapsfrom (\sigma_1(D),[~],[\oc_v;\otimes_l;\wn_{p_{B'}};\oc_{x_D}],-)$. Then the antipath crosses a contraction node downwards, in this direction, there is no choice to make and $\oc_{x_D}$ is transformed into $\oc_{\sigr(x_D)}$. A bit later, the antipath crosses the contraction node upwards, so we have to know where we came from, so we have to look at our trace. But our trace is $\sigr(x_D)$ so we do not need to know $x_D$ to make the choice. The only possibility we could imagine where we would need to know $x_D$ is if the antipath crossed the $\wn N$ link upwards, but it would mean that the antipath left $D$ by its principal door, so the path would enter $D$ by its principal door. But in this case we would have $B \twoheadrightarrow D$, so $p_{D}=q_D$.

So, as we said we do not need to know the values of the $q_i$s corresponding to the boxes $B_i$ with $B \not \twoheadrightarrow B_i$. So, we could do the same antipath by replacing those $q_i$s by the corresponding $p_i$s, which would give us a context $(e,[o_1;\cdots;o_{\partial(e)}],[\oc_t],-)$ such that $(e,[o_1;\cdots;o_{\partial(e)}],[\oc_t],-) \mapsto^* (e,[p_1;\cdots;p_{\partial(e)}],[\oc_u],-)$ and for all $i$ such that $B \twoheadrightarrow B_i$, $o_i=p_i$. Then we could repeat the same antipath again and again until we get a cycle $(e,[r_1;\cdots;r_{\partial(e)}],[\oc_w],-) \mapsto^* (e,[r_1;\cdots;r_{\partial(e)}],[\oc_x],-)$. This is a contradiction, because proof-nets are acyclic. So our assumption was false, there are not two contexts of the shape $(e,[p_1;\cdots;p_{\partial(e)}],[\oc_u],p)$ and $(e,[q_1;\cdots;q_{\partial(e)}],[\oc_v],p)$ where $p_i=q_i$ for every $B \twoheadrightarrow B_i$.

To see the relationship between $\twoheadrightarrow$ and the number of copies of box, we can notice that we could replace the contraction in box $C$ of Figure \ref{fig_ex_stratified} by a tree of $n-1$ contraction (with $n$ derelictions above them and $n$ tensors above the derelictions). Thus $B'$ would have $n$ copies, so a path beginning by the principal door of $B$ could go $n$ times through $e$ making independent choices each times. So $B$ would have $2^n$ copies. Intuitively, for each additionnal copy of $B'$, we multiply the number of copies of $B$ by $2$. We can build proof-nets where there is a box $B$ such that $B \twoheadrightarrow B$ and there is a similar relationship between some copies of $B$ and other copies of $B$ (the more copies of $B$ there are, the more copies there are). This is the case for the proof-net of Figure \ref{ackermann}, representing the Ackermann function applied to $3$, where $B_1\twoheadrightarrow B_1$. The proof-net representing the Ackermann function does not normalize in elementary time, as this function is not even primitive recursive.

%\tikzset{external/remake next}
 \begin{figure}\centering
  \begin{tikzpicture}
    \node [princdoor] (Bbang) at (0,0) {};
    \node at ($(Bbang)+(0.6,-0.4)$) {$\mathbf B_1$}; 
    \node [auxdoor]   (Bwhyn) at ($(Bbang)+(-3,0)$) {};
    \draw (Bbang) -| ++ (3,6.5) -| ($(Bwhyn)+(-1,0)$) -- (Bwhyn) -- (Bbang);
    \node [par]       (parm)  at ($(Bbang)+(0,0.8)$) {};
    \draw [ar] (parm) -- (Bbang);
    \node [cut]       (cutSm) at ($(parm)+(0.2,0.7)$) {};
    \node [tensor]    (tensSm)at ($(cutSm)+(-0.4,0.6)$) {};
    \draw [ar, out=-90,in=180] (tensSm) to (cutSm); 
    \node             (S)     at ($(cutSm)+( 0.5,0.6)$) {S};
    \draw [ar, out=-90,in=0] (S) to (cutSm);
    \node [ax]        (axM)   at ($(tensSm)+(-0.9,0.7)$) {};
    \draw [ar, out=0,in=120] (axM) to (tensSm);
    \draw [ar, out=-140,in=120] (axM) to (parm);
    \node [exists]    (exSm)  at ($(tensSm)+( 0.9,0.7)$) {};
    \draw [ar, out=-90, in=60] (exSm) to (tensSm);
    \node [tensor]    (Smf)   at ($(exSm)+(0,0.8)$) {};
    \draw [ar] (Smf) -- (exSm);
    \node [ax]        (axf)   at ($(Smf)+(-3,1.2)$){};
    \draw [ar, out=-170,in=100] (axf) to (Bwhyn);
    \draw [ar, out= -10,in=130] (axf) to (Smf);
    \node [der]       (derSmf)at ($(Smf)+(0.4,0.7)$) {};
    \draw [ar] (derSmf)--(Smf);
    \node [tensor]    (tensSmf)  at ($(derSmf)+(0,0.7)$) {};
    \draw [ar] (tensSmf)--(derSmf);
    \node             (n1)    at ($(tensSmf)+(120:1)$) {1};
    \draw [ar] (n1) -- (tensSmf);
    \node [ax]        (axSmf1) at ($(tensSmf)+( 70:1)$) {};
    \draw [ar, out=180, in=60] (axSmf1) to (tensSmf);
    \draw [ar, out=-20, in=40] (axSmf1) to (parm);

    \node [par]       (parf)  at ($(Bbang)!0.5!(Bwhyn)+(0,-1.2)$) {};
    \draw [ar] (Bbang) to (parf);
    \node [dig]       (digf)  at ($(Bwhyn)!0.5!(parf)$) {};
    \draw [ar] (Bwhyn) to (digf);
    \draw [ar] (digf)  to (parf);
    \node [princdoor] (BEbang)at ($(parf)+(0,-0.9)$) {};

    \draw [ar] (parf) to (BEbang);
    \draw (BEbang) -| ++ (5.5,9) -| ($(BEbang)+(-3,0)$) -- (BEbang);
    \node [tensor]    (tensNaux) at ($(BEbang)+(1,-4)$) {};
    \node [exists]    (exN)      at ($(tensNaux)+(-0.7,-0.7)$) {};
    \node [der]       (derNr)     at ($(exN)+(-0.7,-0.7)$) {};
    \node [cont]      (contN)    at ($(derNr)+(-0.9,-0.9)$) {};
    \node [par]       (parN)     at ($(contN)+(2,-1)$) {};
    \node [der]       (derIter)  at ($(tensNaux)+(0.7,0.7)$) {};
    \node [tensor]    (tensIterS)at ($(derIter)+(0.7,0.7)$) {};
    \node [princdoor] (BSbang)   at ($(tensIterS)+(-0.7,0.7)$) {};
    \node             (Sbang)    at ($(BSbang)+(0,0.7)$) {S};
    \draw (BSbang) -| ++ (0.5,1) -| ($(BSbang)+(-0.5,0)$) -- (BSbang) {};
    \node [der]       (derIterS) at ($(tensIterS)+(0.7,0.7)$) {};
    \node [tensor]    (tensIterSn)at ($(derIterS)+(0.7,0.7)$) {};
    \node [ax]        (axN)      at ($(tensIterSn)+(-1,0.8)$) {};
    \node [der]       (derNl)    at ($(axN)+(-2,-0.7)$) {};
    \node [ax]        (axResFun) at ($(tensIterSn)+(1,0.8)$) {};
    \draw [ar] (tensNaux) to (exN);
    \draw [ar] (exN) to (derNr);
    \draw [ar,out=-120,in= 60] (derNr)to(contN);
    \draw [ar, out=0, in=120] (axN) to node [edgename, pos=0.3, right] {$\mathbf e$} (tensIterSn);
    \draw [ar] (tensIterSn) to (derIterS);
    \draw [ar] (derIterS) to (tensIterS);
    \draw [ar] (tensIterS) to (derIter);
    \draw [ar]  (BSbang) to (tensIterS);
    \draw [ar]  (Sbang) to (BSbang);
    \draw [ar] (derIter) to (tensNaux);
    \draw [ar,out=180,in=50] (axN) to (derNl);
    \draw [ar,out=-110,in=110] (derNl) to (contN);
    \draw [ar] (contN) to (parN);
    \draw [ar,out=180,in=60] (axResFun) to (tensIterSn);
    \draw [ar] (axResFun) to [out=-20, in=60] (parN); 
    \draw [ar] (BEbang) to (tensNaux);

    \node [tensor]    (appn)    at ($(parN)+(11,0)$) {};
    \node [cut]       (cutappn) at ($(parN)!0.5!(appn)+(0,-0.8)$) {};
    \draw [ar, out=-50,in=180] (parN) to (cutappn);
    \draw [ar, out=-130,in=  0] (appn) to (cutappn);

    \node [princdoor]    (CEbang)   at ($(appn)+(-0.4,3)$) {};
    \draw (CEbang) -| ++ (2.8,6.5) -| ($(CEbang)+(-4.5,0)$) -- (CEbang);
    \node [forall]       (forall)   at ($(CEbang)+(0,0.8)$) {};
    \node [par]          (parg)     at ($(forall)+(0,0.8)$) {};
    \node [princdoor]    (princ3g)  at ($(parg)+(1, 2)$) {};
    \node at ($(princ3g)+(0.6,-0.4)$) {$\mathbf B_2$}; 
    \node [auxdoor]      (aux1g)    at ($(princ3g)+(-4,0)$) {};
    \node [auxdoor]      (aux2g)    at ($(aux1g)!0.333!(princ3g)$) {};
    \node [auxdoor]      (aux3g)    at ($(aux1g)!0.666!(princ3g)$) {};
    \node [cont]         (cont1g)   at ($(aux1g)!0.5!(aux2g)+(0,-0.9)$) {};
    \node [cont]         (cont2g)   at ($(cont1g)!0.5!(parg)$) {};
    \nvar{\hautTens}{1.3cm}
    \node [tensor]       (tens1g)   at ($(aux1g)+(0,\hautTens)$) {};
    \node [tensor]       (tens2g)   at ($(aux2g)+(0,\hautTens)$) {};
    \node [tensor]       (tens3g)   at ($(aux3g)+(0,\hautTens)$) {};
    \nvar{\decAx}{0.6cm}
    \node [ax]        (ax1n)     at ($(tens1g)+(-0.5,\decAx)$) {};
    \node [ax]        (ax2n)     at ($(tens1g)!0.5!(tens2g)+(0,\decAx)$) {};
    \node [ax]        (ax3n)     at ($(tens2g)!0.5!(tens3g)+(0,\decAx)$) {};
    \node [ax]        (ax4n)     at ($(princ3g)+(0,\decAx + \hautTens)$) {};
    \node [par]          (parx)     at ($(princ3g)+(0,0.7)$) {};
    \draw [ar] (CEbang) -- (appn.120) ;
    \draw [ar] (forall) -- (CEbang);
    \draw [ar] (parg) -- (forall);
    \draw [ar] (cont2g) -- (parg);
    \draw [ar] (cont1g) -- (cont2g);
    \draw [ar] (aux1g) -- (cont1g);
    \draw [ar] (aux2g) -- (cont1g);
    \draw [ar] (aux3g) -- (cont2g);
    \draw [ar] (parx) -- (princ3g);
    \draw [ar] (tens1g) -- (aux1g);
    \draw [ar] (tens2g) -- (aux2g);
    \draw [ar] (tens3g) -- (aux3g);
    \draw [ar,out=-140,in= 160] (ax1n) to (parx);
    \draw [ar,out=-30 ,in= 120] (ax1n) to (tens1g);
    \draw [ar,out=-150 ,in=  60] (ax2n) to (tens1g);
    \draw [ar,out=0   ,in=120 ] (ax2n) to (tens2g);
    \draw [ar,out=180 ,in=  60] (ax3n) to (tens2g);
    \draw [ar,out=0   ,in=120 ] (ax3n) to (tens3g);
    \draw [ar,out=180 ,in=  60] (ax4n) to (tens3g);
    \draw [ar,out=-20 ,in=  60] (ax4n) to (parx);
    \draw (princ3g) -| ++(1,\hautTens + \decAx +0.5cm) -| ($(aux1g)+(-1.2,0)$) -- (aux1g) -- (aux2g) -- (aux3g) -- (princ3g);
    \draw (princ3g) -- (parg);
    \coordinate (concl) at ($(appn)+(2,0)$);
    \node  [ax]    (axConcl) at ($(appn)!0.5!(concl)+(0,1)$) {};
    \draw [ar,out=  0,in=100] (axConcl) to (concl);
    \draw [ar,out=180,in= 80] (axConcl) to (appn.60);
  \end{tikzpicture}
  \caption{\label{ackermann}This proof-net, representing the ackermann function applied to 3, is not stratified. Indeed $
(\sigma(B_1),[\sigl(\sigl(\sige))], [\oc_{\sign(\sigl(\sige),\sign(\sigl(\sigr(\sige)),\sige))}],+) \rightsquigarrow^{14}
(\sigma_1(B_2),[\sigr(e)],[\oc_{\sign(\sigl(\sige),\sign(\sigl(\sigr(\sige)),\sige))};\parr_r; \oc_{\sige}],-) \rightsquigarrow^5 
(\sigma_2(B_2),[\sigr(\sige)],[\oc_{\sign(\sigl(\sige),\sign(\sigl(\sigr(\sige)),\sige))};\otimes_l;\wn_{\sige}],+) \rightsquigarrow^{15} 
(\sigma_1(B_1),[\sigl(\sigr(\sige))],[\oc_{\sigl(\sige)};\oc_{\sign(\sigl(\sigr(\sige)),\sige)}],-) \rightsquigarrow^{19}
(\sigma(B_1),[\sigl(\sigr(\sige))],[\oc_{\sige};\otimes_l;\exists;\parr_l;\oc_{\sign(\sigl(\sigr(\sige)),\sige)}],+) \rightsquigarrow^{53}
(\sigma(B_1),[\sigr(\sige)],[\oc_{\sige};\otimes_l;\exists;\parr_l;\oc_{\sigr(\sige)};\otimes_l;\exists;\parr_l;\oc_{\sige}],+) \rightsquigarrow^{18} 
(\sigma(B_2),[\sigr(\sige)],[\oc_{\sige};\otimes_l;\exists;\parr_l;\oc_{\sigr(\sige)};\otimes_l;\exists;\parr_l;\oc_{\sige};\parr_r;\oc_{\sige}],+) \rightsquigarrow^{14} 
(e,[], [\oc_{\sige};\otimes_l;\exists;\parr_l; \oc_{\sigr(\sige)}; \otimes_l;\exists],-) \rightsquigarrow^{10} 
(\sigma_3(B_2),[\sigl(\sige)],[\oc_{\sige};\otimes_l;\exists;\parr_l;\oc_{\sige}],-) \rightsquigarrow^{16} 
(e,[],[\oc_{\sige};\otimes_l;\exists;\otimes_r;\wn_{\sigl(\sigr(\sige))};\parr_l ;\forall],+) \rightsquigarrow^{14} 
(\sigma(B_2),[\sigr(\sige)],[\oc_{\sige};\otimes_l;\exists;\otimes_r;\wn_{\sigl(\sigr(\sige))};\parr_l ;\forall;\otimes_l;\wn_{\sige};\otimes_r;\wn_{\sige}],-) \rightsquigarrow^5 
(\sigma_3(B_2),[\sigr(\sige)],[\oc_{\sige};\otimes_l;\exists;\otimes_r;\wn_{\sigl(\sigr(\sige))};\parr_l ;\forall;\otimes_l; \wn_{\sige}; \otimes_l;\wn_{\sige}],+) \rightsquigarrow^{13} 
(\sigma(B_1),[\sigr(\sige)],[\oc_{\sige};\otimes_l;\exists;\otimes_r;\wn_{\sigl(\sigr(\sige))};\parr_l ;\forall;\otimes_l;\wn_{\sige}],-)$}
\end{figure}
    
    \paragraph{}In this section, we will assume that the proof-nets we work on are stratified. In order to prove elementary soundness for stratified proof-nets, we will make a careful analysis of paths of context semantics in stratified proof-nets. The weak bounds for systems such as $ELL$ and $L^3$ were proved using a strata by strata strategy (our notion of strata corresponds to depths in $ELL$ and corresponds to levels in $L^3$). They prove that reducing the cuts at strata $\leq i$ does not increase too much the size of the proof-net at stratum $i+1$. Here we will prove the strong bound for stratified proof nets in a similar way: we will bound the number of copies of a box has when we only reduce cuts in the strata $\leq i+1$ by the maximum number of copies of a box when reducing only cuts in strata $\leq i$.  Moreover, we need a notion of copies telling us whether a copy still corresponds to a duplicate if we only fire exponential cuts in strata $\leq i$. This is exactly what a $\mapsto_i$-copy will be.
    
    \begin{definition}\label{def_mapstos}
      Let $G$ be a stratified proof-net. For all $s \in \{0, \cdots n \}$, we define $\mapsto_s$ as follows:
      \begin{equation*}
        C \mapsto_s D \Leftrightarrow \left \{ \begin{array}{c} C \mapsto D \\ S(D) \leq s \end{array} \right.
      \end{equation*}
    \end{definition}
     
    \paragraph{}Concretely, it will prevent $\hookrightarrow$ jumps over a box whose stratum is too high. We define similarly $\rightsquigarrow_s$. Notice that if $S(D)$ is undefined (there is no box $B$ such that $(\sigma(B),P,\oc_t,+) \rightsquigarrow D$) then $C \not \mapsto_s D$.

    \begin{lemma}\label{lemma_impossible_jump}
      If $G$ is stratified and $(\sigma_i(B),P,[\oc_t],-) \mapsto_s (\sigma(B),P,[\oc_t],+)$ then $S(B) \leq s$
    \end{lemma}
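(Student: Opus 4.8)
The statement to prove is Lemma~\ref{lemma_impossible_jump}: if $G$ is stratified and a $\hookrightarrow$ jump over $B$ is performed in the restricted relation $\mapsto_s$, i.e.\ $(\sigma_i(B),P,[\oc_t],-) \mapsto_s (\sigma(B),P,[\oc_t],+)$, then $S(B) \leq s$. The plan is to unfold the two definitions involved and observe that the conclusion is almost immediate once the relevant path is identified. First I would note that the step in question is the $\hookrightarrow$-rule of the exponential context semantics (Figure~\ref{exponential_context_semantic}), crossing from an auxiliary door of box $B$ (going downward, polarity $-$) to the principal door of $B$ (going upward, polarity $+$). By Definition~\ref{def_mapstos}, $C \mapsto_s D$ requires $S(D) \leq s$; so the task reduces to showing that $S$ is \emph{defined} on the target context $D = (\sigma(B),P,[\oc_t],+)$ and equals $S(B)$, after which the inequality $S(D) \leq s$ provided by the definition of $\mapsto_s$ gives exactly $S(B) \leq s$.

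The key observation is that $D$ sits on the conclusion $\sigma(B)$ of a principal door, with trace $[\oc_t]$ and polarity $+$. By Definition~\ref{def_strata}, for any box $B$ the context $(\sigma(B),P',[\oc_{t'}],+)$ has stratum $S(B)$, since $(\sigma(B),P',[\oc_{t'}],+) \rightsquigarrow^* (\sigma(B),P',[\oc_{t'}],+)$ trivially (zero steps), and the remark following Definition~\ref{def_strata} guarantees the stratum of such a context is unambiguously read off as $S(B)$ because $\rightsquigarrow$ is bideterministic and $(\sigma(B),P',[\oc_{t'}],+)$ has no $\rightsquigarrow$-antecedent. Hence $S(D) = S(B)$. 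Combining this with the constraint $S(D) \leq s$ coming directly from the definition of $\mapsto_s$, we get $S(B) \leq s$, which is the claim.

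The only subtlety is to make sure that the context whose stratum we are reading really is of the form $(\sigma(B),P,[\oc_t],+)$ — that is, that the target edge of the $\hookrightarrow$ step is indeed the conclusion of the principal door of $B$, and that the trace really is a single $\oc_t$ element. Both are visible from the $\hookrightarrow$-rule: it has the shape $(f,P,!_t,-) \hookrightarrow (h,P,!_t,+)$ where $f = \sigma_i(B)$ is an auxiliary-door conclusion and $h = \sigma(B) = h$ is the principal-door conclusion of the \emph{same} box (the rule pairs the $?P$ and $!P$ doors of one box). The hypothesis already presents the step in exactly this form with trace $[\oc_t]$, so there is nothing further to check. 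I would therefore write the proof as: ``The step $(\sigma_i(B),P,[\oc_t],-) \mapsto_s (\sigma(B),P,[\oc_t],+)$ is a $\hookrightarrow$ step crossing box $B$. By Definition~\ref{def_strata} and the remark following it, $S(\sigma(B),P,[\oc_t],+) = S(B)$, since $(\sigma(B),P,[\oc_t],+) \rightsquigarrow^* (\sigma(B),P,[\oc_t],+)$. By Definition~\ref{def_mapstos}, $\mapsto_s$ requires $S(\sigma(B),P,[\oc_t],+) \leq s$, hence $S(B) \leq s$.''

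I do not expect a genuine obstacle here; this is a definitional lemma whose role is to prepare the strata-by-strata complexity analysis. If anything, the only thing to be careful about is not conflating $S$ of a \emph{box} with $S$ of a \emph{context}: the equality $S(\sigma(B),P,[\oc_t],+) = S(B)$ is precisely the content of the second sentence of Definition~\ref{def_strata}, and invoking it cleanly is all that is needed.
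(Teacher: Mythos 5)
Your proposal is correct and follows essentially the same argument as the paper: apply the definition of $\mapsto_s$ to obtain $S(\sigma(B),P,[\oc_t],+) \leq s$, then use the definition of the stratum of a context (via the trivial zero-step path $(\sigma(B),P,[\oc_t],+) \rightsquigarrow^* (\sigma(B),P,[\oc_t],+)$) to identify this stratum with $S(B)$. The extra care you take in checking that the $\hookrightarrow$ rule really lands on the principal-door conclusion with trace $[\oc_t]$ is harmless but not needed beyond what the hypothesis already states.
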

    \begin{proof}
      If $(\sigma_i(B),P,[\oc_t],-) \mapsto_s (\sigma(B),P,[\oc_t],+)$ then, by definition of $\mapsto_s$, $S(\sigma(B),P,[\oc_t],+) \leq s$. By definiton of the strata of a context, $ S(\sigma(B),P,[\oc_t],+) = S(B) \leq s$.
    \end{proof}

    \begin{lemma}\label{lemma_impossible_twohead}
      If $G$ is stratified and $(\sigma(B),P,[\oc_t]@T.\wn_u,-) \mapsto_s (e,P.u,[\oc_t]@T,-)$ then $S(B) < s$.
    \end{lemma}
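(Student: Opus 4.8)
The statement concerns a step $(\sigma(B),P,[\oc_t]@T.\wn_u,-) \mapsto_s (e,P.u,[\oc_t]@T,-)$. Looking at the exponential rules of the context semantics (Figure~\ref{exponential_context_semantic}), the only way a trace element $\wn_u$ can be removed while a new element of the potential is pushed --- and with the polarities observed here --- is the reverse of the $?P$ rule: $(e, P.u, T,+) \rightsquigarrow (f,P, T.?_u, +)$, read in its dual form as $(f, P, T.?_u, -) \rightsquigarrow (e, P.u, T, -)$. Hence the step is the reverse crossing of an \emph{auxiliary door} of the box $B$: we move from the conclusion $\sigma_i(B)$ of an auxiliary door, going upward (polarity $-$), into the interior of the box, landing on the premise edge $e$ of that auxiliary door, with the exponential signature $u$ appended to the potential. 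So first I would identify $f = \sigma_i(B)$, confirm that $e$ is the premise of that $\wn P$ link, and note that $e$ is an edge \emph{inside} $B$, so $B$ is one of the boxes containing $e$ and $P.u$ is indeed a potential of the right length for $e$.

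\textbf{Using the strong acyclicity / stratum machinery.} Since the context $(e,P.u,[\oc_t]@T,-)$ is reached, it has a defined stratum: by Definition~\ref{def_mapstos} of $\mapsto_s$, we have $S(e,P.u,[\oc_t]@T,-) \le s$. I would then trace backwards (or rather, use the fact that $\rightsquigarrow$ is deterministic in both directions and that stratum is preserved along $\rightsquigarrow$-paths): the stratum of a context $C$ is the stratum $S(B')$ of the unique box $B'$ such that $(\sigma(B'),Q,[\oc_v],+) \rightsquigarrow^* C$. The context $(\sigma(B),P,[\oc_t]@T.\wn_u,-)$ precedes $(e,P.u,[\oc_t]@T,-)$ in the path, so both have the same witness box $B'$, and $S(e,P.u,[\oc_t]@T,-) = S(\sigma(B),P,[\oc_t]@T.\wn_u,-) = S(B')$. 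The key geometric observation is now that the path leading to $(\sigma(B),P,[\oc_t]@T.\wn_u,-)$ --- which starts from the principal door of $B'$ --- has, at a strictly earlier point, entered the box $B$ through its principal door. Indeed, to be at the conclusion $\sigma_i(B)$ of an auxiliary door of $B$ with trace head $\wn_u$ and polarity $-$, ready to go \emph{up} into $B$, the path must already be ``inside'' the configuration associated to $B$; more precisely, since our context has $\oc_t$ as its leftmost trace element, the path began at some principal door $\sigma(B')$, and the only way for it to present a $\wn_u$ at the head while sitting on $\sigma_i(B)$ with polarity $-$ is that it previously crossed the principal door of $B$ downward (the $\oc_t$/$\wn_u$ pairing witnessing the box $B$ having been entered via $\sigma(B)$). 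Thus there is a context of the shape $(\sigma(B), Q, [\oc_t]@S, -)$ --- or, after the crossing, $(\sigma(B),Q',[\oc_{t'}]@\cdots,+)$ --- on the path from $\sigma(B')$, i.e.\ $(\sigma(B'),P',[\oc_v],+) \rightsquigarrow^* (\sigma(B),Q',T',-)$.

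\textbf{Concluding via $\twoheadrightarrow$ and strata.} By the definition of $\twoheadrightarrow$ (equation~\eqref{def_twoheadrightsquigarrow}), having $(\sigma(B'),P',[\oc_v],+) \rightsquigarrow^* (\sigma(B),Q',T',-)$ gives $B' \twoheadrightarrow B$. Then by the definition of strata (Definition~\ref{def_strata}), $S(B') \ge S(B) + 1 > S(B)$, so $S(B) < S(B') = S(e,P.u,[\oc_t]@T,-) \le s$, which is the claim $S(B) < s$. So the plan is: (1) identify the step as the reverse of an auxiliary-door crossing; (2) extract, from the shape of the context $(\sigma(B),P,[\oc_t]@T.\wn_u,-)$ with leftmost trace element $\oc_t$, that the path reaching it originated at some principal door $\sigma(B')$ and previously crossed $\sigma(B)$ downward, giving $B' \twoheadrightarrow B$; (3) combine $S(e,P.u,[\oc_t]@T,-) \le s$ with $S(B') = S(e,\ldots)$ and $S(B') > S(B)$.

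\textbf{Main obstacle.} The delicate point is step (2): rigorously arguing that a context sitting on $\sigma_i(B)$ with polarity $-$ and leftmost trace element $\oc_t$ must have been reached by a path that crossed $\sigma(B)$ downward beforehand. This is essentially a structural invariant about how $\oc$/$\wn$ trace elements are produced and consumed along $\rightsquigarrow$-paths that begin at a principal door --- morally, trace elements behave like a well-bracketed stack, and the $\wn_u$ at the head must have been pushed by an earlier downward crossing of a $\wn P$ door of $B$, which in turn (to be on the path from $\sigma(B')$ with the box $B$ being relevant at all) forces an even earlier downward crossing of $\sigma(B)$. I expect this to follow from the characterization of canonical contexts (Lemma~\ref{canonical_context_stable}) together with a bracketing/skeleton argument on traces (in the spirit of Lemma~\ref{lemma_skeletons_cons}), but spelling out exactly why the $\wn_u$ head forces a prior $\sigma(B)$-crossing --- rather than, say, the path having entered $B$ only through auxiliary doors --- is where the care is needed. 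Once that invariant is in hand, the rest is a short computation with strata.
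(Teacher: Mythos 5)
There is a genuine error at the very first step, and it propagates. You identify the step $(\sigma(B),P,[\oc_t]@T.\wn_u,-) \mapsto_s (e,P.u,[\oc_t]@T,-)$ as the upward crossing of an \emph{auxiliary} door, quoting a rule ``$(f,P,T.\wn_u,-) \rightsquigarrow (e,P.u,T,-)$''. That rule does not exist: when a rule is dualized, the trace is dualized too, so the dual of the $\wn P$ rule $(e,P.u,T,+) \rightsquigarrow (f,P,T.\wn_u,+)$ carries $\oc_u$ at the head, namely $(f,P,T.\oc_u,-) \rightsquigarrow (e,P.u,T,-)$. A context with head trace element $\wn_u$ and polarity $-$ sitting on the conclusion of a door can only come from the dual of the $\oc P$ rule: the step in the lemma is the upward crossing of the \emph{principal} door of $B$, and $e$ is the premise of that principal door. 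This is exactly what makes the lemma easy: the hypothesis context itself sits on $\sigma(B)$ with polarity $-$. Since the step is a $\rightsquigarrow$ step (it is not a $\hookrightarrow$, whose target has polarity $+$), and since the target has a defined stratum $\leq s$ by definition of $\mapsto_s$, there are $C,Q,v$ with $(\sigma(C),Q,[\oc_v],+) \rightsquigarrow^* (e,P.u,[\oc_t]@T,-)$ and $S(C)\leq s$; by bideterminism of $\rightsquigarrow$ that path passes through the unique predecessor $(\sigma(B),P,[\oc_t]@T.\wn_u,-)$, which is literally a witness for $C \twoheadrightarrow B$, whence $S(B) < S(C) \leq s$. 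This is the paper's proof, and your step (2) is not needed at all.

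Worse, the invariant you flag as the ``main obstacle'' in step (2) is false as stated: a path issued from a principal door can perfectly well enter a box through one of its auxiliary doors (with head $\oc_u$), turn around inside, and later produce $\wn$ trace elements, without ever having crossed that box's principal door -- this is precisely what happens with box $D$ in Figure \ref{fig_ex_stratified}, and it is why $B \not\twoheadrightarrow D$ there and why the paper needs the genuinely weaker Lemma \ref{lemma_impossible_jump} for the jump case. Had the step really been an auxiliary-door crossing, the conclusion $S(B)<s$ would simply not hold in general, so no bracketing argument could rescue the plan. (A smaller point: even a prior downward crossing of $\sigma(B)$ would give a context at $\sigma(B)$ with polarity $+$, whereas the definition of $\twoheadrightarrow$ requires polarity $-$; the polarity $-$ witness is the hypothesis context itself.) So the fix is not to strengthen your invariant but to re-read the dual exponential rules: once the step is recognized as a principal-door crossing, the lemma follows in three lines from the definitions of $\mapsto_s$, of the stratum of a context, and of $\twoheadrightarrow$.
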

    \begin{proof}
      If $(\sigma(B),P,[\oc_t]@T.\wn_u,-) \mapsto (e,P.u,[\oc_t]@T,-)$ then, by definition of $\mapsto_s$, $S(e,P.u,[\oc_t]@T,-) \leq s$. So there exists $C,Q,v$ such that $(\sigma(C),Q,[\oc_v],+) \rightsquigarrow^* (e,P.u,[\oc_t]@T,-)$ and $S(C) \leq s$. $C \twoheadrightarrow B$ so $S(B) < S(C) \leq s$.
    \end{proof}

    \begin{lemma}\label{def_copy_rescriction}
      For all $B \in B_G$ and $s,s' \in \mathbb{N}$ with $s' \leq s$:
      \begin{itemize}
        \item For all $P=[p_1; \cdots ; p_{\partial(B)}] \in L_{\mapsto_{s}}(B)$, there exists a unique $P^{/B,s'}=[p'_1; \cdots ; p'_{\partial (B)}] \in L_{\mapsto_{s'}}(B)$ such that for all $0 < i \leq \partial (B)$, $p_i' \vartriangleleft p_i$
        \item For all $P \in L_{\mapsto_s}(B)$ and $t$ such that for all $t \sqsubseteq u$, $(\sigma(B),P,[\oc_u],+)$ is a $\mapsto_s$-copy context of $(B,P)$, there exists a unique $t^{/B,P,s'} \vartriangleleft t$ such that for all $t^{/B,P,s'} \sqsubseteq u'$, $(\sigma(B),P^{/B,s'},[\oc_u'],+)$ is a $\mapsto_{s'}$-copy context. 
      \end{itemize}
    \end{lemma}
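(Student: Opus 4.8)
The plan is to prove the two statements of Lemma~\ref{def_copy_rescriction} essentially in parallel, by induction on the depth $\partial(B)$, since a potential of $B$ is built from a potential of the immediately-containing box together with a copy of $B$ itself, and the definitions of $L_{\mapsto_s}(B)$ and $\mapsto_s$-copy are already stated in exactly this layered way (Definitions~\ref{def_canonicalpotential} and~\ref{def_copy}). So the base case is $\partial(B)=0$, where the first item is vacuous ($P=[~]$) and only the second item, about copies, needs content.

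For the second item I would argue as follows. Fix $P \in L_{\mapsto_s}(B)$ (at depth $0$, $P=[~]$; in the inductive step $P$ is obtained from the outer potential by the first item) and fix $t$ such that every $t \sqsubseteq u$ makes $(\sigma(B),P,[\oc_u],+)$ a $\mapsto_s$-copy context. By Corollary~\ref{coro_dal_lago_theo} / Theorem~\ref{charac_final_contexts} (in the $\mapsto$ version) and Lemma~\ref{lemma_copy_context_final_context}, such a path has the form
\begin{equation*}
  (\sigma(B),P,[\oc_t],+) \mapsto_s (e_1,P_1,T_1,p_1) \mapsto_s \cdots \mapsto_s (e_n,P_n,T_n,p_n),
\end{equation*}
ending in a final context. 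I want to define $t^{/B,P,s'}$ by running ``the same path'' but forgetting, in each exponential signature occurring along it, the subtrees that record choices at boxes whose stratum is $> s'$; the forgetting is exactly a truncation in the sense of the $\vartriangleleft$ relation of paragraph~``Trees truncations''. Concretely I would proceed step by step along the path: whenever the $\mapsto_s$ step is a $\hookrightarrow$ jump $(\sigma_i(C),\cdot,[\oc_v],-) \mapsto_s (\sigma(C),\cdot,[\oc_v],+)$, Lemma~\ref{lemma_impossible_jump} gives $S(C)\le s$; if moreover $S(C)\le s'$ the step survives in the $\mapsto_{s'}$ world, and if $s' < S(C)\le s$ the step is killed, so the truncated path simply stops earlier — and I need to check that where it stops it is still a final context or, more precisely, that the truncation of the previous context is the $\mapsto_{s'}$-copy end we want. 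The key point making this coherent is that $\mapsto_s$ and $\mapsto_{s'}$ agree on all $\rightsquigarrow$ steps and on $\hookrightarrow$ steps over low-stratum boxes, and the trace elements $\wn_u$ produced while passing through a high-stratum box are never consulted again except to make the dual $\hookrightarrow$ jump (Lemma~\ref{lemma_impossible_twohead}), which is itself killed; this is precisely the intuition already spelled out in the discussion of Figure~\ref{fig_ex_stratified}. Uniqueness of $t^{/B,P,s'}$ follows because $\mapsto$ (hence each $\mapsto_s$) is bideterministic, so the truncated path, hence its terminal signature, is forced; and $t^{/B,P,s'}\vartriangleleft t$ holds by construction. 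That every $t^{/B,P,s'}\sqsubseteq u'$ yields a $\mapsto_{s'}$-copy context is then Theorem~\ref{charac_final_contexts} applied in the $\mapsto_{s'}$ direction, together with Lemma~\ref{lemma_copy_context_final_context} and the fact (Lemma~\ref{lemma_underlying_mapsto}) that underlying formulae stay nonempty, so the truncated path indeed ends at a final context and not at a spurious block.

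For the first item, in the inductive step I write $P=[p_1;\dots;p_{\partial(B)}]$ with $[p_1;\dots;p_{\partial(B)-1}] \in L_{\mapsto_s}(B')$ for $B'$ the box immediately containing $B$ and $p_{\partial(B)}\in C_{\mapsto_s}(B',\cdot)$ — wait, more precisely $p_{\partial(B)}$ is a $\mapsto_s$-copy of the potential box $(B_{\partial(B)},\cdot)$ among the $B_i\supsetneq B$; in any case each $p_i$ is a $\mapsto_s$-copy of the appropriate potential box, so I apply the induction hypothesis on the first item to get $[p'_1;\dots;p'_{\partial(B)-1}]$ and the second item (just proved) to produce $p'_{\partial(B)}:=p_{\partial(B)}^{/\cdot,\cdot,s'}$, and set $P^{/B,s'}=[p'_1;\dots;p'_{\partial(B)}]$. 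Uniqueness componentwise gives uniqueness of $P^{/B,s'}$, and $P^{/B,s'}\in L_{\mapsto_{s'}}(B)$ because each component is a $\mapsto_{s'}$-copy of its box, which is the definition of being canonical. I expect the main obstacle to be the bookkeeping in the second item: precisely defining the truncation map on the path context-by-context, checking it commutes with every $\mapsto$ rule (the $\wn C$, $\wn N$, $\wn P/\oc P$ and $\wn N$-digging cases are where $\sigp$/$\sign$ and the $\sqsubseteq$-relation interact), and — the genuinely delicate point — verifying that a step killed because it jumps over a box of stratum in $(s',s]$ always happens \emph{after} the path has already reached the context that will serve as the $\mapsto_{s'}$-final endpoint, rather than in the middle of an essential computation; this is exactly where stratification (acyclicity of $\twoheadrightarrow$) and Lemma~\ref{lemma_impossible_twohead} are used, and getting the quantifiers over $u,u'$ right so that the statement holds for \emph{all} simplifications $t\sqsubseteq u$ simultaneously will require care.
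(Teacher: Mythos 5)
Your handling of the first item is essentially the paper's: componentwise, by induction on $\partial(B)$, reducing everything to the second item. The gap is in the second item, and it is real. Your construction of $t^{/B,P,s'}$ --- follow the $\mapsto_s$ path of $t$ itself and truncate at the first $\hookrightarrow$ jump killed by the stratum bound --- only controls the path of $t$, whereas the statement demands that \emph{every} simplification $u'$ of the truncation be a $\mapsto_{s'}$-copy context. The right components of $\sign(\cdot,\cdot)$ nodes are explored only along the paths of simplifications (the path of $\sigp(t_2)$ is a different path from that of $\sign(t_1,t_2)$), so reading a truncation off the single path of $t$ neither determines nor validates how those branches must be cut; this is precisely the point you defer with ``will require care'', and it is the heart of the proof. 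The paper resolves it by an induction on the simplification order $\sqsupseteq$: it first obtains the truncation of the one-step simplification $t_0$ (deepest leftmost $\sign(v_l,v_r)$ turned into $\sigp(v_r)$), re-attaches $v_l$ to form a candidate $t_0'$, follows the $\mapsto_{s'}$ path of $(\sigma(B),P,[\oc_{t_0'}],+)$ (using Lemma \ref{lemma_underlying_mapsto} and canonicity to exclude spurious blockings), and then shrinks $v_l$ with Lemma \ref{lemma_prec_left_right} via $\preccurlyeq$. Relatedly, your appeal to Theorem \ref{charac_final_contexts} ``in the $\mapsto_{s'}$ direction'' is not available: that characterization is specific to $\mapsto$, and a $\mapsto_{s'}$-copy path may legitimately end at a non-final context, namely blocked at an auxiliary door of a box of stratum in $(s',s]$ --- one of the four terminal cases the paper enumerates. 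The copy-context property must be checked directly against its definition (non-extendability plus the $\preccurlyeq$-minimality clause), which is exactly why the paper replaced Dal Lago's final-context definition of copies in the first place.

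Your uniqueness argument is also too quick. Bideterminism forces the path of a \emph{fixed} starting signature, but the competing objects here are different truncations $t',t'' \vartriangleleft t$, i.e.\ different starting signatures, and they may differ on branches (under right components of $\sign$ nodes) that the path of $t'$ itself never consults. The paper's uniqueness proof goes through the induction hypothesis (uniqueness of $t_0^{/B,P,s'}$ for the simplification $t_0$), reduces the possible disagreement to the left branch of the distinguished $\sign$ node, and then uses the $\preccurlyeq$-minimality built into the definition of copy contexts to get $t' \preccurlyeq t''$ and $t'' \preccurlyeq t'$, hence equality. Without some substitute for this two-step argument, uniqueness is not established.
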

    \begin{proof}
      We prove the statement by induction on $\partial (B)$.
      \begin{itemize}
      \item If $\partial (B)=0$, then for all $P \in L_{\mapsto_s}(B)$, $P=[~]$. So we can take $[~]^{/B,s'}=[~]$. Else, $P=Q.p_{\partial(B)}$ with $C$ the deepest box containing $B$, $Q \in L_{\mapsto_s}(C)$ and $p_{\partial(B)} \in C_{\mapsto_s}(C,Q)$. By induction hypothesis, there exists a unique $Q^{/C,s'}=[p_1'; \cdots ;p'_{\partial(B)-1}] \in L_{\mapsto_{s'}}(C)$ such that for every $1 < i < \partial (B)$, $p_i' \vartriangleleft p_i$. Moreover, $p_{\partial(B)} \in C_{\mapsto}(C,Q)$, so by induction hypothesis there exists $p_{\partial(B)}^{/C,Q,s'} \vartriangleleft p_{\partial (B)}$ which is standard (the truncation of a standard signature is always standard) and such that all its simplifications are copy simplifications. So $p_{\partial(B)}^{/C.Q,s'} \in C_{\mapsto_{s'}}(C,Q^{/C,s'})$. So we can take $Q^{/C,s'}.p_{\partial(B)}^{/C,Q,s'}$. If it was not unique, it would break the unicity of either $Q^{/C,s'}$ or $p_{\partial(B)}^{/C,Q,s'}$, which are guaranteed by induction hypothesis.
      \item Now, we show the second property by induction on $\sqsupseteq$. Let us take $t$ such that for all $t \sqsubseteq u$, $u$ is a copy simplification of $(B,P)$. We suppose that the property is true for every $u \sqsupseteq t$ satisfying the hypothesis. We build an exponential signature $t_0'$ in the following way:
        \begin{itemize}
        \item If $t$ is minimal for $\sqsupseteq$ (i.e. there are no $\sign(~,~)$ in $t$), then $t_0' = t$ 
        \item Else we consider $t_0$ the exponential signature obtained by transforming the deepest leftmost $\sign(v_l,v_r)$ of $t$ into $\sigp(v_r)$. Then, $t_0'$ is obtained by replacing in $t_0^{/B,P,s'}$ the $\sigp(v_r')$ corresponding to $\sigp(v_r)$ (if if has not been cut) by $\sign(v_l,v_r')$.
        \end{itemize}
        We now consider the path beginning by $(\sigma(B),P, [\oc_{t_0'}],+)$. The underlying formula of $(\sigma(B),P,[\oc_{t_0'}],+)$ is well-defined so, by lemma \ref{lemma_underlying_mapsto}, the underlying formula of all the contexts in the path are well-defined. So the path will not be stopped by a mismatch between the right-most trace element and the top-most connective of the formula labelling the edge.
        
        Moreover, we can prove that for any context $(e,Q,[!_v]@U,p)$ of the path, $Q$ is a $\mapsto_{s'}$-canonical potential for $e$ and for any $v \sqsubset w$, $(e,Q,[!_w]@U,p)$ is a copy context. So the path will not be stopped by a mismatch between an exponential link and the root of the exponential signature on the right-most $\oc$ trace element (except if it is the left-most trace element, which we will deal with in the next paragraph).
        
        We know that $(\sigma(B),P,[\oc_{t'_0}],+)$ is a $\mapsto_s$-copy context so the $\mapsto_s$ beginning by this context ends with $\oc_{\sige}$ as its first trace element. We know that $\mapsto_{s'} \subseteq \mapsto_s$, so the $\mapsto_{s'}$ path will end and will not be stopped by a mismatch between an exponential link and the root of the exponential signature on the left-most $\oc$ trace element.
        
        So, there are four possibilites:
        \begin{itemize}
        \item $(\sigma(B),P,[\oc_{t'_0}],+) \mapsto^*_{s'} (e,Q,[!_{\sige}]@W,-) \not \mapsto_{s'}$ with the tail of $e$ being a $\wn W$ link.
        \item $(\sigma(B),P,[\oc_{t'_0}],+) \mapsto^*_{s'} (e,Q,[!_{\sige}],-) \not \mapsto_{s'}$ with the tail of $e$ being a $\wn D$ link.
        \item $(\sigma(B),P,[\oc_{t'_0}],+) \mapsto^*_{s'} (e,Q,[!_w],-) \not \mapsto_{s'}$ with the tail of $e$ being a $\wn P$ link of a box of stratum strictly greater than $s'$.
        \item $(\sigma(B),P,[\oc_{t'_0}],+) \mapsto^*_{s'} (e,Q,[!_{\sige}]@W,+) \not \mapsto_{s'}$ with $e$ being a pending edge.
        \end{itemize} 
        In each case, by Lemma \ref{lemma_prec_left_right}, we know that there exists $t'$ such that $t' \preccurlyeq^w t'_0$ and $(\sigma(B),P,!_{t'},+) \mapsto^*_{s'} (e,Q,[!_{\sige}]@W,p) \not \mapsto^*_{s'}$. Then, verifiying that $(\sigma(B),P,[\oc_{t'}],+)$ is a $\mapsto_{s'}$-copy context is straight forward. The induction hypothesis gives us that for every $t' \sqsubset u'$, $(\sigma(B),P,[\oc_{u'}],+)$ is a $\mapsto_{s'}$-copy context. So the property is true for $t$ because we did not touch any branches except the deepest left-most $\sign(~,~)$.

        To prove the unicity of such a $t'$, let us suppose there is another exponential signature $t''$ verifying the properties. Then, we define $u'$ (resp. $u''$) as $t'$ (resp. $t''$) if the deepest left-most $\sign(~,~)$ of $t$ is no longer in $t'$ (resp. $t''$). Else, we define $u'$ (resp. $u''$) as the exponential signature obtained by replacing this $\sign(v_l',v_r')$ by $\sigp(v_r')$ (resp. $\sign(v_l'',v_r'')$ by $\sigp(v_r'')$). We can notice that $u'$ (resp. $u''$) is a truncation of $t_0$ which satisfy the properties of $t_0^{/B,P,s'}$ so, by unicity, $u''=t_0^{/B,P,s'}=u'$.

        In particular the deepest left-most $\sign(~,~)$ of $t$ is in $t'$ if and only if it is in $t''$. So the only possibility for $t'$ and $t''$ to be different is that this $\sign$ is present in both and $v_l'\neq v_l''$. This exponential signature does not contain any $\sign(~,~)$ and they are both truncations of $v_l$ so either $v_l' \preccurlyeq v_l''$ or $v_l'' \preccurlyeq v_l'$. So either $t' \preccurlyeq t''$ or $t'' \preccurlyeq t'$. In both cases, knowing that $(\sigma(B),P,[\oc_{t'}],+)$ and $(\sigma(B),P,[\oc_{t''}],+)$ are $\mapsto_{s'}$-copy contexts, we can use the definition of copy-contexts to get that $t' \preccurlyeq t''$ and $t'' \preccurlyeq t'$. So $t'=t''$.
      \end{itemize}
    \end{proof}

    \paragraph{}\label{def_ls}For matters of readability, we will often write $L_s(x)$ for $L_{\mapsto_s}(x)$ and $C_s(x,P)$ for $C_{\mapsto_s}(x,P)$. We will also write $P^{/s}$ for $P^{/B,s}$ and $t^{/s}$ for $t^{/B,P,s}$ when the box $B$ (and the potential $P$) can be guessed. Let us notice that if $s \geq S(G)$ then $\mapsto_s = \mapsto$. So the upper bounds on $|L_{\mapsto_{S(G)}}(x)|$ and $|C_{\mapsto_{S(G)}}(x,P)|$ will give us upper bounds on $|L_{\mapsto}(x)|$ and $C_{\mapsto}(x,P)|$. Then we will use these upper bounds to prove an upper bound on $T_G$, so on the maximum length of the reduction paths of $G$.

    \subsection{Definition of $\sim$ equivalence between contexts}
    \paragraph{}The idea of the injection lemma is the following. Suppose that two different $\mapsto_s$-copies of $(B,P)$ ($t$ and $t'$) lead respectively to the final contexts $(g, Q, [!_{\sige}], -)$ and $(g,Q', [!_{\sige}], -)$ and $Q^{/s}=Q'^{/s}$. Let us go back from $g$ to $\sigma(B)$ by the two paths. Because we are trying to follow paths, beginning by their end, words like ``beginning (or end) of the path '' can be confusing: in which way are we taking the paths? If we go from $\sigma(B)$ to $g$ we will talk about the ``paths'', but if we go from $g$ to $\sigma(B)$ we talk about the ``antipaths''.

    On the begining of our antipaths, the contexts are on the same edge. The only way for the antipaths to separate is to cross a $?C$, $\parr$ or $\otimes$ link upward with a different right-most trace element. The only way to have different traces between the two antipaths is to go out from a box and that the potentials of the two contexts for this box are different. The potential for boxes of stratum $< s$ are the same in the two antipaths. So the only way to have different traces between the two antipaths is to leave a box of stratum $\geq s$. It is only possible by their auxiliary door (the strata of the contexts are $\leq s$ along the path), so the only difference between the traces of the contexts is on exponential stacks of $!$ trace element. So, the antipaths will never separate (notice that for the antipaths to be separated by a $?C$ link, the difference must be on a $?$ signature), the two copies take exactly the same path, they are equal. 
    
In fact, it is a little bit more complex, as we can see in figure \ref{ex_for_equivs} supposing $s(B_2) \geq s$ and $s(B_1) < s$. Indeed, a difference on the potential $[t]$ associated to $a$ in the beginning of the antipath transforms into a difference on the signatures of a $\oc$ trace element (if we take another potential $[t']$ for $a$ in the beginning of the antipath, it leads to a $\oc_{t'}$ in the trace of the context in $b$). This difference can, in turn, lead to a difference in the potential of the context corresponding to a copy of a box of stratum $<s$ (we would have $[\sigr(t')]$ as a potential when entering $B_1$). This allows, by leaving the boxes by the principal door in the antipaths, to have different exponential signatures on $?$ trace element, so that the antipaths would separate on a $?C$. Although this possibility complexify the proof, the antipaths will not separate because the surface of both exponential signatures is the same and will lead them back into the box where the difference originated (whatever $t'$ we choose, we always take the right premise of the contraction and go back to $B_2$). And, this box being in a strata $\geq s$, it is still impossible to leave the box by its principal door, hence impossible to have a difference on a $?$ trace element so to make the antipaths separate.

    %\tikzset{external/remake next} 
\begin{figure}\centering
      \begin{tikzpicture}
        \node[princdoor] (princ1) at (0,0) {};
        \node at ($(princ1)+(0.8,-0.4)$) {$B_1$}; 
        \node[par] (par1) at ($(princ1)+(0,1)$) {};
        \node[ax] (ax1) at ($(par1)+(0,1)$) {};
        \draw[ar,out= -20,in= 60] (ax1) to (par1);
        \draw[ar,out=-160,in=120] (ax1) to (par1);
        \draw[ar] (par1) -- (princ1) node [edgename] {$d$};
        \node[cont] (cont) at ($(princ1)+(4,0)$) {};
        \node[cut] (cut) at ($(princ1)!0.5!(cont) + (0,-1)$) {};
        \draw[ar,out=-85,in=175] (princ1) to (cut);
        \draw[ar,out=-95,in=  5] (cont)   to [edgename, below right] node {$c$} (cut);
        \draw (princ1) -| ($(princ1)+(1,2.7)$) -| ($(princ1)+(-1,0)$) -- (princ1);
        \node[etc]   (etcCont) at  ($(cont)+(140:1.5)$)  {};
        \node[auxdoor]  (aux2l) at ($(cont)+( 40:1.5)$) {};
        \draw [ar] (etcCont) -- (cont);
        \draw [ar] (aux2l)   -- (cont) node [edgename, below right]{$b$} ;
        \node[auxdoor]  (aux2r) at ($(aux2l)+(1.5,0)$)  {};
        \node[princdoor](princ2) at ($(aux2r)+(1,0)$) {};
        \node at ($(princ2)+(0.8,-0.4)$) {$B_2$}; 
        \node[tensor] (tens2) at ($(aux2l)+(0,1)$) {};
        \node[der] (der2) at ($(tens2)+(120:1)$) {};
        \node[ax]  (ax2r) at ($(tens2)+( 60:1)+(0.6,0)$) {};
        \draw[ar] (tens2)--(aux2l);
        \draw[ar, out=-80 , in=120] (der2) to node [edgename] {$a$} (tens2);
        \draw[ar, out=-160, in= 60](ax2r) to (tens2);
        \node[ax] (ax2l) at ($(ax2r)+(0,0.7)$) {};
        \draw[ar,out=-180,in=80] (ax2l) to (der2);
        \draw[ar,out=-20 ,in=90] (ax2l) to (princ2);
        \draw[ar,out=-40 ,in=90] (ax2r) to (aux2r);
        \draw (princ2) -| ($(princ2)+(0.8,3)$) -| ($(aux2l)+(-0.8,0)$) -- (aux2l)--(aux2r)--(princ2);
        \draw [ar] (princ2)--++(0,-1);
        \draw [ar] (aux2r) --++(0,-1) node [edgename] {$e$};
      \end{tikzpicture}
      \caption{\label{ex_for_equivs} $(a,[t],[!_{\sige}],-) \leftsquigarrow $}%^2  (b,[],[!_{\sige};\parr_l;!_t],-) \leftsquigarrow (c,[],[!_{\sige};\parr_l;!_{\sigr(t)}],-) \leftsquigarrow^2  (d,[\sigr(t)],[!_{\sige};\parr_l],+) \leftsquigarrow^5  (c,[],[!_{\sige};\otimes_r;?_{\sigr(t)}],+) \leftsquigarrow (b,[],[!_{\sige};\otimes_r;?_{\sigr(t)}],+) \leftsquigarrow^4 (e,[],[!_{\sige};!_{t}],-)

    \end{figure}

In order to prove that the antipaths never separate, we will prove that their contexts are pairwise equivalent for a complex equivalence relation: $\sim_s$. The idea of the $\sim_s$ equivalence between contexts is: their edges and their $\mapsto_{s-1}$ canonical potential are equal. And, indeed, the actual $\sim_s$ definition will be (almost) equivalent to this when the traces are reduced to one element. But, the actual definition must be a bit more complex because we want this equivalence to be stable under anti-$\rightsquigarrow_s$ step. We will need two relations ($\approx_{s}$ and $\simeq_s$) to define $\sim_s$
    
Let $C$ and $C'$ be two contexts, $C \approx_s C'$ means: ``It is possible to make the same number of $\mapsto_s$ steps beginning by those two contexts and to reach contexts which have the same trace or a box of stratum $\geq s$. The edges of the contexts in the paths must be pairwise equal''. $(e,P) \simeq_s (e,Q)$ means: ``the exponential signatures of $P$ and $Q$ corresponding to box of stratum $< s$ must be either pairwise equal, or at least their surface is pairwise equal because a $\mapsto_s$ path leaving this box will arrive at a box of stratum $\geq s$''. These definitions are made to take into account cases similar to figure \ref{ex_for_equivs} where a difference on an exponential signature corresponding to the copy of a box of stratum greater than $s$ is transformed into a harmless difference in an exponential signature corresponding to the copy of a box of stratum strictly lower than $s$.
    
    \begin{definition}
      Let $(e,P,T,p)$ and $(e,Q,U,p)$ be two contexts and $s \in \mathbb{N}$. $(e,P,T,p) \approx_s (e,Q,U,p)$ if one of the following conditions holds:
      \begin{itemize}
      \item $T = U$
      \item $(e,P,T,p) \not \rightsquigarrow_s$, $(e,Q,U,p) \not \rightsquigarrow_s$, $(e,P,T,p) \not \hookrightarrow_{s-1}$ and $(e,Q,U,p) \not \hookrightarrow_{s-1}$
      \item $(e,P,T,p) \rightsquigarrow_s (e',P',T',p')$, $(e,Q,U,p) \rightsquigarrow_s (e',Q',U',p')$ and $(e',P',T',p') \approx_s (e',Q',U',p')$
      \item $(e,P,T,p) \hookrightarrow_{s-1} (e',P',T',p')$, $(e,Q,U,p) \hookrightarrow_{s-1} (e',Q',U',p')$ and $(e',P',T',p') \approx_s (e',Q',U',p')$
      \end{itemize}
    \end{definition}

    \begin{definition}
      Let $P,Q$ be canonical potentials of edge $e$ and $s \in \mathbb{N}$. $(e,P) \simeq_{s} (e,Q)$ is defined by:
      \begin{itemize}
      \item If $P=Q= [~]$, $(e,P) \simeq_{s} (e,Q)$
      \item If $P=P'.t$, $Q=Q'.u$ and $B$ is the deepest box containing $e$
        \begin{itemize}
        \item If $S(B) \geq s$, then $(e,P)\simeq_{s}(e,Q) \Leftrightarrow (\sigma(B),P') \simeq_{s}(\sigma(B),Q')$
        \item If $S(B) < s$, then $(e,P) \simeq_{s}(e,Q) \Leftrightarrow \left\{\begin{array}{l}(\sigma(B),P') \simeq_{s}(\sigma(B),Q') \\ (\sigma(B),P',[\oc_t],+) \approx_s (\sigma(B), Q',[\oc_u],+) \end{array} \right.$
        \end{itemize}
      \end{itemize}

      \paragraph{}We can notice that whenever $e$ and $f$ belong to the same boxes, $(e,P) \simeq_{s} (e,Q) \Leftrightarrow (f,P) \simeq_{s} (f,Q)$. We will often write $P \simeq_{s} Q$ for $(e,P) \simeq_{s} (e,Q)$ when the edge $e$ we refer to (or at least the boxes containing it) can be deduced from the sentences around it.
    \end{definition}

    \paragraph{} The definition of $\sim_s$ says: the contexts are the same except that the potentials and traces may differ on exponential signatures corresponding to boxes of strata $\geq s$ or on other exponential signatures in such a way that it will not make the antipaths separate until the respective signatures go in the potential of a $\geq s$ box.

    \begin{definition}
      $(e,P,[\oc_t]@T,p) \sim_s (e,P',[\oc_{t'}]@T',p)$ if all the following conditions stand:
      \begin{enumerate}
      \item \label{sim_simeq} $(e,P) \simeq_{s} (e,P')$
      \item \label{sim_exp} The skeletons of $T$ and $T'$ are equal.
      \item \label{sim_bang} If $T = U@[!_u]@V$, $T'=U'@[!_{u'}]@V'~$ and $|U|=|U'|$ then $(e,P,[!_u]@V~~,p~) \approx_s (e,P',[!_{u'}]@V'~~,p~~)$
      \item \label{sim_why}  If $T = U@[?_u]@V$, $T'=U'@[?_{u'}]@V'$ and $|U|=|U'|$  then $(e,P,[!_u]@V^\perp,p^\perp) \approx_s (e,P',[!_{u'}]@V'^\perp,p^\perp)$
      \end{enumerate}
    \end{definition}

    \begin{lemma}\label{injection_lemma}
      Let $C_e,C_e',C_f,C_f' \in C_G$, if $\left \{\begin{array}{l} C_e \rightsquigarrow_s C_f \\ C_e' \rightsquigarrow_s C_f' \\C_f \sim_s C_f' \end{array} \right \}$ then $C_e \sim_s C_e' $.
    \end{lemma}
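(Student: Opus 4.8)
The plan is to prove the \emph{injection lemma} (Lemma \ref{injection_lemma}) by a case analysis on the pair of dual $\rightsquigarrow_s$-rules that produce $C_f$ from $C_e$ and $C_f'$ from $C_e'$. Since $\rightsquigarrow$ is bideterministic and $\sim_s$ forces the edges of $C_f$ and $C_f'$ to be equal, the \emph{same} rule is applied on both sides (the edge determines the link being crossed and its orientation). So I would fix one link $l$ and one orientation, write $C_e=(e,P,[\oc_t]@T,p)$, $C_e'=(e,P',[\oc_{t'}]@T',p)$, and check that the four conditions of $\sim_s$ hold for $(C_e,C_e')$ given that they hold for $(C_f,C_f')$.

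First I would dispatch the ``inert'' rules: multiplicative links ($\parr$, $\otimes$) and quantifier links ($\forall$, $\exists$) and the axiom/cut rules. For these, the potentials are unchanged between $C_e$ and $C_f$ (and between $C_e'$ and $C_f'$), so condition~\ref{sim_simeq} is immediate; the trace is modified only by pushing or popping a non-exponential trace element at the end, so condition~\ref{sim_exp} on skeletons is preserved, and conditions~\ref{sim_bang}/\ref{sim_why} follow because the $\approx_s$ witnesses for $C_f,C_f'$ can be prepended with the single rule-step $C_e\rightsquigarrow_s C_f$, $C_e'\rightsquigarrow_s C_f'$ (using that $\approx_s$ is, by definition, closed under prepending a common $\rightsquigarrow_s$ or $\hookrightarrow_{s-1}$ step, and that $\rightsquigarrow$ depends only on the rightmost trace element, so the two sides take the same step). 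The $?D$, $?N$, $?W$ rules that only reshuffle the top $?$-signature are analogous, again pushing/popping a single exponential element whose skeleton is fixed, and the $\approx_s$ conditions propagate backward by prepending the step. Next the $?C$ rule: crossing $?C$ downward turns $?_u$ into $?_{\sigl(u)}$ or $?_{\sigr(u)}$; since $C_f\sim_s C_f'$ the skeletons agree so both sides chose the \emph{same} premise, hence condition~\ref{sim_exp} is kept, and the $\approx_s$ condition on the relevant slot of the trace again extends backward by one step.

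The genuinely delicate cases are the box-door rules. Crossing an auxiliary door $?P$ downward is $(e,P_1.v,[\oc_t]@T,+)\rightsquigarrow_s(f,P_1,[\oc_t]@T.?_v,+)$, where $B$ is the box whose door is $f$, so $\sigma(B)=f$ here. On the $C_e'$ side we get $(e,P_1'.v',[\oc_t']@T',+)\rightsquigarrow_s(f,P_1',[\oc_{t'}]@T'.?_{v'},+)$. From $C_f\sim_s C_f'$ I know $(f,P_1)\simeq_s(f,P_1')$ and (via the new $?_v$, $?_{v'}$ slot, which is at the same position by the skeleton condition) that $(f,P_1,[\oc_v],-)\approx_s(f,P_1',[\oc_{v'}],-)$, i.e. $(\sigma(B),P_1,[\oc_v],+)\approx_s(\sigma(B),P_1',[\oc_{v'}],+)$ up to orientation. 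I must deduce $(e,P_1.v)\simeq_s(e,P_1'.v')$, and this is exactly where the definition of $\simeq_s$ splits on $S(B)$: if $S(B)\geq s$ I only need $(\sigma(B),P_1)\simeq_s(\sigma(B),P_1')$, which I have; if $S(B)<s$ I additionally need $(\sigma(B),P_1,[\oc_v],+)\approx_s(\sigma(B),P_1',[\oc_{v'}],+)$, which is precisely the $\approx_s$ witness extracted from condition~\ref{sim_why} of $\sim_s$ at $C_f$. The remaining $\sim_s$ conditions for $C_e,C_e'$ on $T$ vs.\ $T'$ are inherited from $C_f,C_f'$ (the suffix $.?_v$, $.?_{v'}$ has been stripped, and the other slots are untouched). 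The principal-door rule $g\rightsquigarrow_s h$ with $(g,P.v,T,+)\rightsquigarrow_s(h,P,T.!_v,+)$ is handled symmetrically using condition~\ref{sim_bang}. Finally the $\hookrightarrow$-style jump is \emph{not} a $\rightsquigarrow_s$ step, so it does not occur in this lemma; but a subtlety is the principal-door \emph{downward} read of the dual rule $(f,P,!_t,-)\hookrightarrow(h,P,!_t,+)$, which is a $\hookrightarrow$ and hence excluded — I should note the hypothesis is $\rightsquigarrow_s$, not $\mapsto_s$, so only the $\rightsquigarrow_s$ rules need checking.

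The main obstacle I anticipate is the bookkeeping in the auxiliary-door case with $S(B)<s$: one must be careful that the position of the newly created $?$-trace element is the same on both sides (so that condition~\ref{sim_why} applies to \emph{that} slot), which requires the skeleton-equality condition~\ref{sim_exp} together with the fact that $|T|=|T'|$ is forced; and one must check the reverse direction, that an $\approx_s$ witness for $(\sigma(B),P_1,[\oc_v],+)$ and $(\sigma(B),P_1',[\oc_{v'}],+)$ composed appropriately still witnesses $\simeq_s$ at the deeper level — this is where the recursive structure of $\simeq_s$ (peeling one box at a time) has to mesh exactly with the recursive structure of $\approx_s$ (following the $\mapsto_s$ path until traces coincide or a stratum-$\geq s$ box is hit). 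Once the correspondence between ``one layer of $\simeq_s$'' and ``one $\approx_s$ witness'' is set up cleanly, every case reduces to prepending a single common step to existing $\approx_s$ witnesses, which is routine. I expect the write-up to be long but each case short, with the auxiliary/principal door cases carrying all the content.
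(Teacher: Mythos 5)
Your overall strategy is the paper's own: a case analysis on the $\rightsquigarrow_s$ rule (using bideterminism to see that both sides cross the same link), with the inert cases handled by prepending the common step to the $\approx_s$ witnesses and all the content concentrated in the door-crossing cases, split on $S(B)\geq s$ versus $S(B)<s$. But two of your case treatments have real holes. First, in the auxiliary-door-downward case you pass from $(f,P_1,[\oc_v],-)\approx_s(f,P_1',[\oc_{v'}],-)$ (which is what condition~\ref{sim_why} of $C_f\sim_s C_f'$ gives you at the auxiliary door $f=\sigma_i(B)$) to $(\sigma(B),P_1,[\oc_v],+)\approx_s(\sigma(B),P_1',[\oc_{v'}],+)$ ``up to orientation''. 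This is not a reorientation: the two statements are at different edges, and the bridge between them is precisely the non-trivial step of the paper's proof. One has to invert the definition of $\approx_s$: since $S(B)<s$, the contexts $(f,P_1,[\oc_v],-)$ admit a $\hookrightarrow_{s-1}$ jump and no $\rightsquigarrow_s$ step (a one-element trace cannot re-enter the box), so the ``both stuck'' and ``common $\rightsquigarrow_s$ step'' clauses are impossible, leaving either equal traces ($v=v'$) or $\approx_s$ of the $\hookrightarrow_{s-1}$ successors --- and only then does either alternative yield the statement at $\sigma(B)$ needed for the extra layer of $\simeq_s$. Your proposal never produces this argument, and your ``obstacles'' paragraph points at a different worry (meshing the recursions), so the crux is genuinely missing.

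Second, your justification in the $\wn C$ case is wrong as stated: the skeleton of a trace erases the exponential signatures, so $\wn_{\sigl(t)}$ and $\wn_{\sigr(t')}$ have the same skeleton and condition~\ref{sim_exp} cannot tell you that both sides chose the same premise. The fact that $C_e$ and $C_e'$ sit on the same premise edge (which is needed even for $C_e\sim_s C_e'$ to be possible, since $\sim_s$ requires equal edges) must instead be extracted from condition~\ref{sim_why} applied to the topmost $\wn$ trace element of $C_f$, again by analysing which clause of $\approx_s$ can hold. Finally, you only discuss the downward door crossings; the upward ones (entering a box through an auxiliary or the principal door) are treated explicitly in the paper and are not vacuous: entering through an auxiliary door creates a new $\oc_u$ slot whose $\approx_s$ witness must be manufactured from the $\simeq_s$ layer of the potentials, using the stuck clause when $S(B)\geq s$ and the $\hookrightarrow_{s-1}$ clause when $S(B)<s$, and entering through the principal door uses stratification to force $S(B)<s$. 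These fit your general scheme, but they are cases you would still have to write out.
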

    \begin{prf}
      The $\rightsquigarrow$-steps for which the result is hard to prove are the ones entering and leaving boxes.
      \begin{wrapfigure}{l}{1.5cm}
        \centering
        \begin{tikzpicture}
          \node[auxdoor] (aux) at (0,0) {};
          \coordinate (auxup) at ($(aux)+(0, 0.5)$);
          \coordinate (auxdo) at ($(aux)+(0,-0.5)$);
          \draw[->] (aux)--(auxup) node [pos=1,edgename] {$f$};
          \draw (auxup)--++(0, 0.2);
          \draw (aux)--(auxdo) node [pos=1,edgename] {$e$};
          \draw[<-] (auxdo)--++(0,-0.2);
          \draw (aux)--++(0.5,0) (aux)--++(-0.5,0);
        \end{tikzpicture}
      \end{wrapfigure}
        
      Suppose $C_e= (e,P,[!_{t}]@T.!_{u},-) \rightsquigarrow (f,P.u,[!_t]@T,-)=C_f$ (crossing an auxiliary door of box $B$ upwards). Then $C_f'= (f,P'.u',[!_{t'}]@T',-)$ with $(f,P.u) \simeq_{s} (f,P'.u')$ and the skeletons of $T$ and $T'$  are equal (definition of $\sim_s$). The predecessor of $C_f'$ is unique, so we have $C_e'=(e,P',[!_{t'}]@T.!_{u'},-)$. We can verify that $C_e \sim_s C_e'$. The only interesting point to prove is that $(e,P,[!_u],-)\approx_s (e,P',[!_{u'}],-)$ (for condition \ref{sim_bang} ):

      \begin{itemize}
      \item If $S(B) < s$, $(e,P,[!_u],-) \hookrightarrow_{s-1} (\sigma(B),P, [!_u],+)=C_g$ and $(e,P',[!_{u'}],-) \hookrightarrow_{s-1} (\sigma(B),P',[!_{u'}],+)=C_g'$. We know that $C_g \approx_s C_g'$ because $(f,P.u) \simeq_s (f,P'.u')$. So $(e,P,[!_u],-) \approx_s (e,P',[!_{u'}],-)$ (by rule 3 of the definition of $\approx_s$).
      \item If $S(B) \geq s$, $(e,P,[!_u],-) \approx_s (e,P',[!_{u'}],-)$ by point 2 of the definition of $\approx_s$
      \end{itemize}

      \begin{wrapfigure}{l}{1.5cm}
        \begin{tikzpicture}
          \node[princdoor] (aux) at (0,0) {};
          \coordinate (auxup) at ($(aux)+(0, 0.5)$);
          \coordinate (auxdo) at ($(aux)+(0,-0.4)$);
          \draw[->] (aux)--(auxup) node [pos=1,edgename] {$f$};
          \draw (auxup)--++(0, 0.2);
          \draw (aux)--(auxdo) node [pos=1,edgename] {$e$};
          \draw[<-] (auxdo)--++(0,-0.2);
          \draw (aux)--++(0.5,0) (aux)--++(-0.5,0);
        \end{tikzpicture}
      \end{wrapfigure}

      \paragraph{}Suppose $C_e=(e,P,[!_t]@T.?_{u},-) \rightsquigarrow (f,P.u,[!_{t}]@T,-)=C_f$ (crossing the principal door of $B$ upwards). Because of the hypothesis and the stratification of the proof net, $S(B)<s$. So $C_f'= (f,P'.u',[!_{t'}].T',-)$ with $(e,P,[!_{u}],+) \approx_s (e,P',[!_{u'}],+)$ (because $(f,P.u) \simeq (f,P'.u')$). So $(e,P, [!_u],(-)^\perp) \approx_s (e,P', [!_{u'}],(-)^\perp)$. This gives us the condition \ref{sim_why} for the only new exponential signature on the trace. Because we know that $(f,P.u) \simeq_s (f,P'.u')$, we have that $(\sigma(f)=e,P) \simeq_s (\sigma(f)=e,P')$. The other conditions are straightforward.

      \begin{wrapfigure}{l}{1.5cm}
        \begin{tikzpicture}
          \node[auxdoor] (aux) at (0,0) {};
          \coordinate (auxup) at ($(aux)+(0, 0.4)$);
          \coordinate (auxdo) at ($(aux)+(0,-0.5)$);
          \draw (aux)--(auxup) node [pos=1,edgename] {$e$};
          \draw[<-] (auxup)--++(0, 0.2);
          \draw[->] (aux)--(auxdo) node [pos=1,edgename] {$f$};
          \draw (auxdo)--++(0,-0.1);
          \draw (aux)--++(0.5,0) (aux)--++(-0.5,0);
        \end{tikzpicture}
      \end{wrapfigure}

      \paragraph{}Suppose $C_e=(e,P.u, [!_t]@T,+) \mapsto (f,P, [!_t]@T.?_u,+)=C_f$ (crossing an auxiliary door of box $B$ downwards). Then $C_f'$ has the shape: $C_f'=(f,P',[!_{t'}]@T'.?_{u'},+)$. The only possibility for $C_e'$ is $C_e'=(e,P'.u',[!_{t'}]@T',+)$. We will show that $C_e \sim_s C'_e$. Only the first point offers some difficulties, the others are straightforward.
      \begin{itemize}
      \item If $S(B) \geq s$, then we only have to show that $(\sigma(B),P)\simeq_s (\sigma(B),P')$. $\sigma(B)$ and $f$ are included in the same boxes, so we only have to prove that $(f,P) \simeq_s (f,P')$. This is given by the hypothesis of $C_f \sim_s C'_f$
      \item If $S(B) < s$, then we have to show that $(\sigma(B),P) \simeq_s (\sigma(B),P')$ (which we can show as in the previous case) and that $(\sigma(B),P,[!_u],+) \approx_s (\sigma(B),P',[!_{u'}],+)$. We know that $C_f \sim_s C'_f$, so $(f,P,[!_u],-) \approx_s (f,P',[!_{u'}],-)$. Which of the four possible conditions of the definition of $\approx_s$ holds? It can not be the second or third one. So, this means that either $!_u = !_{u'}$ (and in this case $(\sigma(B),P,[!_u],+)\approx_s (\sigma(B),P',[!_{u'}],+)$ because of the first condition) or the $\hookrightarrow_{s-1}$ successors of $C_f$ and $C_f'$ (which are respectively $(\sigma(B),P,!_u,+)$ and $(\sigma(B),P',!_{u'},+)$) are $\approx_s$ equivalent which is the result needed.
      \end{itemize}

      \begin{wrapfigure}{l}{1.5cm}
      \begin{tikzpicture}
        \node[princdoor] (aux) at (0,0) {};
        \coordinate (auxup) at ($(aux)+(0, 0.5)$);
        \coordinate (auxdo) at ($(aux)+(0,-0.4)$);
        \draw (aux)--(auxup) node [pos=1,edgename] {$e$};
        \draw[<-] (auxup)--++(0, 0.2);
        \draw[->] (aux)--(auxdo) node [pos=1,edgename] {$f$};
        \draw (auxdo)--++(0,-0.2);
        \draw (aux)--++(0.5,0) (aux)--++(-0.5,0);
      \end{tikzpicture}
      \end{wrapfigure}

      \paragraph{}Suppose $C_e= (e,P.u, [!_t]@T,+) \mapsto_s (f,P,[!_t]@T.!_u,+)=C_f$ (crossing the principal door of box $B$ downwards). We supposed that $C_f \sim_s C_f'$, so $C'_f$ has shape $C_f'=(f,P',[!_{t'}]@T'.!_{u'},+)$ and $C_e'=(e,P'.u',[\oc_{t'}]@T',+)$. We have $(f,P,[!_u],+) \approx_s (f,P',[!_{u'}],+)$ (condition 3 of the definition of $\sim$). Moreover we know that $(\sigma(B),P) \simeq_s (\sigma(B),P')$ (because of condition \ref{sim_simeq} of $C_f \sim_s C_f'$). So $(e,P.u) \simeq_s (e,P'.u')$. 

The other conditions are straightforward, for example, if you consider the decompositions $T=U.\oc_v@V$ and $T'=U'.\wn_{v'}@V'$ with $|U|=|U'|$. Then, we can find a similar decomposition for the traces of $C_f$ and $C_f'$: $T.!_u=U.\oc_v@(V.!_u)$ and $T'.!_{u'}=U'.\oc_{v'}@(V'.\oc_{u'})$. So according to condition \ref{sim_bang} of $C_f \sim_s C'_f$, we have $(f,P,[\oc_v]@V.\oc_u,+) \approx_s (f,P',[\oc_{v'}]@V'.\oc_{u'},+)$. Moreover, $(e,P.u,[\oc_v]@V,+) \rightsquigarrow_s (f,P,[\oc_v]@V.\oc_u,+)$ and $(e,P'.u',[\oc_{v'}]@V',+) \rightsquigarrow_s (f,P',[\oc_{v'}]@V'.\oc_{u'},+)$. So $(e,P.u,[\oc_v]@V,+) \approx_s (e,P'.u',[\oc_{v'}]@V',+)$. This proves the condition \ref{sim_bang}.

      \paragraph{}In the other cases, it is straightforward after unfolding the definitions.
      $\square$
    \end{prf}

    \subsection{Elementary bound for stratified proof-nets}
    \begin{lemma}[strong acyclicity]\label{lemma_strong_acyclicity}
      If $(e,P,[\oc_{t}],p) \mapsto_{s}^+ (e,Q,[\oc_{u}],p)$ then $P^{/s-1} \neq Q^{/s-1}$. 
    \end{lemma}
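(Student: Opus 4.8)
The plan is to derive a contradiction from the existence of a $\mapsto_s$-cycle that is ``visible'' at stratum $s-1$. Suppose $(e,P,[\oc_t],p) \mapsto_s^+ (e,Q,[\oc_u],p)$ with $P^{/s-1} = Q^{/s-1}$. The heart of the argument is the ``antipath'' technique described in the text before the lemma: we take the $\mapsto_s$-path from $(e,Q,[\oc_u],p)$ back towards $(e,P,[\oc_t],p)$ (an anti-$\rightsquigarrow_s$ walk, interleaved with anti-$\hookrightarrow_{s-1}$ jumps), and argue that the exponential signatures of $P$ and $Q$ corresponding to boxes of stratum $< s$ are never actually inspected along this reversed walk. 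The only way a potential's low-stratum signature can influence a $\rightsquigarrow$-step is when one leaves a box by its principal door (which pushes a $?$-signature onto the trace, later read when one re-enters going up through a $?C$ or $?N$). But leaving a box $B$ of stratum $< s$ by its principal door, along a path emanating from $\sigma(e\text{'s box chain})$, would force $e$'s box (via $\twoheadrightarrow$) to point to $B$, putting $B$ at stratum $> S(B')$... more precisely, if the path from some $\sigma(B_0)$ enters $B$ by its principal door then $B_0 \twoheadrightarrow B$ and $B$'s signature would be forced equal in $P$ and $Q$, not free to differ. So the low-stratum signatures that differ between $P$ and $Q$ can never be the ones causing a branching.

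Concretely, I would proceed as follows. First, reformulate the statement using the $\sim_s$ and $\simeq_s$ machinery: the hypothesis $P^{/s-1}=Q^{/s-1}$ together with $p_i=q_i$ on boxes of stratum $\geq s-1$ can be packaged so that $(e,P,[\oc_t],p)$ and $(e,Q,[\oc_u],p)$, or rather suitable single-trace-element restrictions of the contexts along the cycle, are $\sim_{s}$-related (this uses Lemma~\ref{def_copy_rescriction} to guarantee $P^{/s-1}$, $Q^{/s-1}$ exist and are unique, and the definition of $\simeq_s$ which says exactly that the low-stratum signatures either agree or differ only ``harmlessly''). Then invoke the injection lemma, Lemma~\ref{injection_lemma}, iteratively: $\sim_s$ is stable under anti-$\rightsquigarrow_s$ steps, and one checks it is also stable under anti-$\hookrightarrow_{s-1}$ steps (this is the content of the box-entering cases in the proof of Lemma~\ref{injection_lemma}, using Lemma~\ref{lemma_impossible_jump} to control strata of $\hookrightarrow$-jumps). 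Pushing $\sim_s$ backward along the entire $\mapsto_s$-path connecting the two ends of the cycle, we conclude that the full contexts $(e,P,[\oc_t],p)$ and $(e,Q,[\oc_u],p)$ are $\sim_s$-related — in particular they sit on the same edge $e$ with the same polarity, and their traces (here, single $\oc$ trace elements) have equal skeleton, which is automatic.

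The key new ingredient beyond the injection lemma is to turn $\sim_s$-equivalence of the two cycle endpoints into an honest $\mapsto$-cycle, contradicting acyclicity (Lemma~\ref{lemma_acyclicity}). The idea: since $(e,P,[\oc_t],p) \sim_s (e,Q,[\oc_u],p)$ and $\mapsto_s \subseteq \mapsto$, the same path $(e,P,[\oc_t],p) \mapsto^+ (e,Q,[\oc_u],p)$ can be replayed starting from $(e,Q,[\oc_u],p)$ — the $\sim_s$ relation guarantees that every step that was legal for the $P$-context is legal for the $Q$-context (the branchings in $?C$, $?N$, $\parr$, $\otimes$ are determined by skeleton-equal traces, and the $\hookrightarrow$-jumps stay available because $\simeq_s$ preserves exactly the data those jumps consume). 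Iterating, we get an infinite $\mapsto_s$-path $(e,P,[\oc_t],p) \mapsto^+ (e,Q,[\oc_u],p) \mapsto^+ (e,R,[\oc_v],p) \mapsto^+ \cdots$ through contexts all on edge $e$ with fixed direction $p$. Since there are only finitely many $\mapsto$-canonical potentials and the relevant exponential signatures along a single $\mapsto$-step can only shrink in the $\blacktriangleleft$ order (Lemma~\ref{lemma_prec_left_right}-style reasoning), among these infinitely many returns to $e$ two must coincide exactly, say $(e,R,[\oc_v],p) \mapsto^* (e,R,[\oc_v],p)$, yielding a genuine cycle and contradicting Lemma~\ref{lemma_acyclicity} (more carefully, one first shows we may take $R \in L_{\mapsto}(e)$, so that $(e,R,[\oc_v],p)$ is a copy context and the strong form of acyclicity applies).

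The main obstacle, I expect, is the bookkeeping in the replay step: making precise that a $\sim_s$-equivalence between the two endpoints genuinely licenses replaying the exact same sequence of $\mapsto_s$-rule applications from the second endpoint, including correctly tracking how the differing low-stratum signatures evolve (they must re-enter a stratum-$\geq s$ box before they could ever cause a divergence, by the stratification hypothesis, exactly as in the Figure~\ref{ex_for_equivs} discussion). A secondary subtlety is ensuring the ``eventually an exact repeat'' pigeonhole is valid: one needs that the exponential signature $v$ on the $\oc$ trace element cannot grow along $\mapsto$, which follows because $\mapsto$-steps only read the surface of signatures and pass sub-signatures along, so the signature at the $n$-th return is a subtree of the one at the first return — after finitely many returns this stabilizes, and with finitely many canonical potentials a pair must repeat.
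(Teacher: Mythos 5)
Your setup matches the paper's: observe $C_0=(e,P,[\oc_t],p) \sim_s (e,Q,[\oc_u],p)=D_0$, use Lemma~\ref{injection_lemma}, and conclude by pigeonholing over the finitely many canonical potentials against acyclicity (Lemma~\ref{lemma_acyclicity}). But the central mechanism of your argument is reversed, and the reversed version has a genuine gap. The paper builds the contradiction by repeatedly extending the path \emph{backwards}: since $D_1 \mapsto_s D_0$ and $C_0 \sim_s D_0$, the injection lemma (which is exactly a backward-stability statement, and whose case analysis constructs the needed predecessor) yields $C_1 \mapsto_s C_0$ with $C_1 \sim_s D_1$; iterating produces an infinite path that revisits edge $e$ with a one-element $\oc$-trace infinitely often, and then two visits share the same canonical potential, contradicting acyclicity. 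You instead try to \emph{replay the path forward} from $D_0$, claiming that $\sim_s$ guarantees every step legal from the $P$-context is legal from the $Q$-context because ``the branchings in $?C$, $?N$, $\parr$, $\otimes$ are determined by skeleton-equal traces.'' That claim is false for the exponential links: crossing a $?C$ or $?N$ link upwards, the branch taken (and whether the step exists at all) is determined by the actual surface constructor of the rightmost exponential signature, not by its skeleton. Moreover $\sim_s$ imposes no relation whatsoever between the two \emph{leading} signatures $t$ and $u$ (conditions \ref{sim_bang} and \ref{sim_why} only constrain trace elements strictly to the right of the leading $\oc$), and in the intended situation $t$ and $u$ are typically distinct copies with different surfaces. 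So the replayed path from $(e,Q,[\oc_u],p)$ can branch onto different edges at the first $?C$ met upwards, or get stuck, and no cycle is produced. Nothing in the paper supplies a forward-determinism statement of the kind you need; the whole $\sim_s$/$\approx_s$ machinery was designed for the antipath (backward) direction, which is why the paper's proof goes that way.

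Two smaller points. First, your opening step — ``pushing $\sim_s$ backward along the entire path to conclude the endpoints are $\sim_s$-related'' — is circular: endpoint equivalence is read off directly from $P^{/s-1}=Q^{/s-1}$ (it is the starting point), and the backward pushing is what should instead be iterated indefinitely to create the infinite path. Second, your final pigeonhole is more demanding than necessary: the definition of acyclicity only requires the \emph{potential} to repeat, i.e.\ $(e,R,[\oc_v],p) \mapsto^* (e,R,[\oc_{v'}],p)$ with possibly $v \neq v'$, so the ``signatures can only shrink, hence eventually coincide'' argument is not needed (and your concern about $R$ being canonical so that the cycle involves a copy context is a point the paper glosses over as well).
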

    \begin{proof}
      We will make a proof by contradiction. Let us suppose that $C_0=(e,P,[\oc_{t}],p) \mapsto_{s}^+ (e,Q,[\oc_{u}],q)=D_0$ and $P^{/s-1}=Q^{/s-1}$. Then, $C_0 \sim_s D_0$. Then, we define $D_1$ as the last-but-one context in the $C_0 \mapsto_s^+ D_0$ path. According to Lemma \ref{injection_lemma}, there exists a context $C_1$ such that $C_1 \mapsto C_0$ and $D_1 \sim_s C_1$. Moreover, $C_1 \mapsto^+_s D_1$. We can repeat this, creating an infinitely long path. In particular, this path will go through infinitely many contexts of shape $(e,R,[\oc_v],p)$. According to Theorem \ref{dallago_weight}, the number of canonical potentials for an edge is finite. So there is some $(e,R) \in Can(E_G)$, $v,v' \in Sig$ and $r \in Pol$ such that $(e,R,[\oc_v],p) \mapsto^* (e,R,[\oc_{v'}],p)$. This is impossible as we proved proof-nets to be acyclic (Lemma \ref{lemma_acyclicity}).
    \end{proof}

    \begin{theorem}\label{theoStratElementaryBound}
      If a proof-net is stratified, then the length of its longest path of reduction is bounded by  $2^{3.|E_G|}_{3.S_G+1}$
    \end{theorem}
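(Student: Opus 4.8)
The plan is to bound, by induction on the strata, the two quantities that control $T_G$: the number of canonical potentials of an edge and the number of copies of a potential box. Concretely, I would prove by induction on $s$ from $0$ to $S_G$ that there is an elementary function $f_s$ of $|E_G|$ and $\partial_G$ such that for every edge $e$, $|L_s(e)| \le f_s$, and for every $(B,P)$, $|C_s(B,P)| \le f_s$. Since $\mapsto_s = \mapsto$ whenever $s \ge S_G$, this gives elementary bounds on $|L_{\mapsto}(e)|$ and $|C_{\mapsto}(B,P)|$, hence on $Si_{\mapsto}(B,P)$ (a copy has length bounded by the same kind of quantity, as explained below), hence on $T_G$ via Definition \ref{def_tg}, hence on the reduction length via Corollary \ref{coro_dal_lago_theo}.

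\paragraph{Base case.} For $s = -1$ (or $s=0$ with the convention that $\mapsto_{-1}$ fires no exponential jump at all, so that $P^{/0}$ makes sense), every potential is forced to $[~]$ and every copy is forced to $\sige$: a path $\mapsto_{-1}$ never crosses a box principal door upwards by a $\hookrightarrow$ jump, so it always goes ``down'', never re-enters a $\wn C$ or $\wn N$ upwards, and therefore ends with $[\oc_{\sige}]$. So $|L_{-1}(e)| = 1$ and $|C_{-1}(B,P)| = 1$. This is essentially the argument already used in Corollary \ref{lemma_tg_finite} for normal proof-nets.

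\paragraph{Inductive step: from $s-1$ to $s$.} Fix a box $B$ with $S(B) = s$ (boxes of strictly smaller stratum are already handled, and for a box of stratum $s$, all boxes $B'$ with $B \twoheadrightarrow B'$ have $S(B') < s$, hence $|C_{\mapsto}(B',P')| \le f_{s-1}$ by the induction hypothesis, since for those boxes $\mapsto_{S(B')}$ copies and $\mapsto$ copies coincide up to the bound). The key sublemma is the strong acyclicity Lemma \ref{lemma_strong_acyclicity}: along a $\mapsto_s$-path starting at $(\sigma(B), P, [\oc_t], +)$, we cannot visit two contexts $(g, R, [\oc_v], p)$ and $(g, R', [\oc_{v'}], p)$ with $R^{/s-1} = R'^{/s-1}$. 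Now a $\mapsto_s$-copy path of $(B,P)$ only crosses, upwards, $\wn C$ and $\wn N$ links inside boxes of stratum $\le s$; each such crossing consumes one constructor of the relevant exponential signature, and each ``branching'' of the signature $t$ corresponds to going through such a link. By strong acyclicity, the number of distinct such visits to a fixed edge $g$ is bounded by the number of distinct values of $R^{/s-1}$, i.e. by $|L_{s-1}(g)| \le f_{s-1}$ times the number of copies of boxes $B'$ with $B \twoheadrightarrow B'$ at depth $> $ (that of $g$) — itself bounded by $f_{s-1}^{\partial_G}$. Summing over all edges of $G$, the \emph{height} of any copy $t$ of $(B,P)$ is bounded by $|E_G| \cdot f_{s-1}^{\partial_G}$ (this is the ``strong acyclicity gives a bound on the height'' remark in the text). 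A standard exponential signature (tree with branching $\le 2$ and constructors of arity $\le 2$) of height $h$ has at most $2^{h+1}$ nodes and there are at most $2^{2^{h+1}}$ of them of height $\le h$; so $|C_s(B,P)| \le 2^{2^{O(|E_G| \cdot f_{s-1}^{\partial_G})}}$, which is elementary in $f_{s-1}$ (two more exponentials). For $|L_s(e)|$: a canonical potential is a choice, for each of the $\le \partial_G$ boxes containing $e$, of one copy, so $|L_s(e)| \le (f_s^{\text{box part}})^{\partial_G}$. Set $f_s$ to dominate all of these.

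\paragraph{Assembling the constant.} Unwinding: each step $s \mapsto s+1$ adds a fixed number of exponentials (roughly three: one from the height-to-size blow-up $2^{h}$, essentially one from $h \approx |E_G|\cdot f^{\partial_G}$ absorbing $\partial_G \le |E_G|$ into the tower, and one from the $2^{2^{\cdot}}$ counting of signatures; the $\partial_G$-th power is absorbed because $a^{b^c_n} \le a^c_{n+1}$ when $b \le a$ and $c$ is in the exponent range). After $S_G + 1$ steps we get $|L_{\mapsto}(e)|, |C_{\mapsto}(B,P)|, \, \text{and copy lengths} \le 2^{c|E_G|}_{3 S_G}$ for a small constant $c$, and then $T_G \le |E_G|\cdot(\cdots) + 2\cdot|E_G|\cdot D_G \cdot (\cdots)$ adds at most one more exponential and a polynomial factor, giving $T_G \le 2^{3|E_G|}_{3 S_G + 1}$; Corollary \ref{coro_dal_lago_theo} finishes it. The main obstacle I anticipate is making the counting in the inductive step fully rigorous: precisely relating the number of ``upward $\wn C / \wn N$ crossings'' of a copy path to the quantity $|L_{s-1}(g)|$ (this is where $\sim_s$, $\simeq_s$, $\approx_s$ and Lemma \ref{injection_lemma} do the real work, via the ``antipath cannot separate'' argument), and checking that the crude size-vs-height estimate on standard exponential signatures is tight enough to keep the tower height at exactly $3$ per stratum rather than $4$ or more — this requires carefully absorbing the $\partial_G \le |E_G|$ powers into the base of the exponentials using the identity $a^c_b$ from the notations paragraph.
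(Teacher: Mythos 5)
Your overall route is the paper's: strong acyclicity (Lemma \ref{lemma_strong_acyclicity}) bounds the number of contexts of shape $(g,R,[\oc_v],q)$ that a $\mapsto_s$-copy path can visit, hence the height of copies; a height-to-size count bounds $|C_s(B,P)|$; canonical potentials are bounded as products over enclosing boxes; the induction on $s$ costs three exponentials per stratum; and $T_G$ together with Corollary \ref{coro_dal_lago_theo} concludes. The arithmetic you sketch lands on the same tower as the paper.

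There is, however, one step that is wrong as stated and leaves a gap in your induction. You restrict the inductive step to boxes $B$ with $S(B)=s$, declaring lower-stratum boxes ``already handled'' on the grounds that for them $\mapsto_{S(B')}$-copies and $\mapsto$-copies coincide. That coincidence is false: for a box $B'$ of stratum $s'<s$, a $\mapsto_s$-path starting at $(\sigma(B'),P',[\oc_t],+)$ may perform $\hookrightarrow$ jumps over boxes of stratum up to $s$ --- a jump is triggered by arriving at an \emph{auxiliary} door, and auxiliary-door arrivals are not recorded by $\twoheadrightarrow$, so the stratum of $B'$ puts no constraint on the jumped box --- and after such a jump the path can cross further $\wn C$ or $\wn N$ links upwards, consuming additional constructors of $t$. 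Thus $C_s(B',P')$ genuinely grows with $s$ (in Figure \ref{path_example}, if $S(B')>S(B)$ then $C_{S(B)}(B,[~])=\{\sige\}$ while $C_s(B,[~])=\{\sigl(\sige),\sigr(\sige)\}$ once $s\geq S(B')$), so your inductive statement ``$|C_s(B,P)|\leq f_s$ for every $(B,P)$'' is not established for boxes of stratum $<s$, even though you need it for every enclosing box, of whatever stratum, when you bound $|L_s(e)|$ as a product. The repair is immediate and is exactly what the paper does: the strong-acyclicity/height argument never uses $S(B)=s$, so it must be run for \emph{every} potential box at every level $s$, the induction being purely on the index of the relation $\mapsto_s$ and not on the stratum of the box considered. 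Relatedly, the extra factor ``number of copies of boxes $B'$ with $B \twoheadrightarrow B'$'' in your count of visits to a fixed edge $g$ both rests on the same false coincidence and is unnecessary: Lemma \ref{lemma_strong_acyclicity} alone bounds those visits by the number of possible values of $R^{/s-1}$, i.e. by $|L_{s-1}(g)|$, which summed over edges is the paper's $|Can_{s-1}(E_G)|$.
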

    \begin{proof}
      Let us consider a $\mapsto_s$-copy simplification $u$ of a potential box $(B,P)$, as a tree. During the path beginning by $(\sigma(B),P,[\oc_u],+)$, the height of the left-most branch of $t$ (viewed as a tree) decreases to $0$ (the height of $\sige$). The height of the left-most branch decreases only by crossing a $\wn C$ or $\wn N$ upwards (which correspond to contexts of the shape $(e,Q,[\oc_v],q)$) and during those steps it decreases by exactly $1$. So the height of the left-most branch of $t$ is inferior to the number of contexts of the shape $(e,Q,[\oc_v],q)$ by which the path go through. From the strong acyclicity lemma, we can deduce that the height of the left-most branch is inferior to $\Set*{(e,[q_{i_1};…;q_{i_k}])}{
  \begin{array}{c} 
    e \in B_{\partial(e)} \subset … \subset B_1 \\
    \Set{j}{S(B_j) \leq s} = \{i_1,\cdots,i_k\} \\
    ~[q_1 ; \cdots ; q_{\partial(e)}] \in L_{\mapsto}(e)\\
  \end{array}
}$ which is itself inferior to $|Pot_{s-1}(E_G)|$. Let $t$ be a $\mapsto_s$-copy of $(B,P)$, then there exists a simplification $u$ of $t$ such that the heigth of $t$ is equal to the heigth of the left-most branch of $u$. So the size of such a copy is bounded by $2^{|Pot_{s-1}(E_G)|}$. Copies are standard signatures so there are $4$ possible symbols for each nodes of the signatures. Thus, for any potential box $(B,P)$, $|C_s(B,P)| \leq 4^{2^{|Pot_{s-1}(E_G)|}}$.
      \begin{align*}
        \max_{e \in E_G}|Pot_{s}(e)| \leq& \left(\max_{(C,Q) \in Pot(B_G)}|C_s(C,Q)|\right)^{\partial_G}\\
        \max_{e \in E_G}|Pot_{s}(e)| \leq& \left(4^{2^{|Pot_{s-1}(E_G)|}}\right)^{\partial_G}\\       
        \max_{e \in E_G}|Pot_{s}(e)| \leq& 2^{\partial_G \cdot 2^{1+|Pot_{s-1}(E_G)|}}\\       
        \max_{e \in E_G}|Pot_{s}(e)| \leq& 2^{\partial_G \cdot 2^{1+|E_G|\cdot \max_{e \in E_G}|Pot_{s-1}(e)|}}\\       
      \end{align*}

      We define $u_n$ as $2^{3.E_G}_{3 \cdot n}$. We will show by induction that, for every $n \in \mathbb{N}$, $\max_{e \in E_G}|Pot_{n-1}(e)| \leq u_n$. For $n=0$, for every $e \in E_G$, we have $|Pot_{-1}(e)|=1$ (the only canonical potentials are lists of $\sige$) and $u_0=2^{3.|E_G|}_0=3.|E_G|$.

      If $n \geq 0$, we have the following inequalities:
      \begin{align*}
        \max_{e \in E_G}|Pot_{n}(e)| \leq& 2^{\partial_G \cdot 2^{1+|E_G|\cdot \max_{e \in E_G}|Pot_{n-1}(e)|}} \\
        \max_{e \in E_G}|Pot_{n}(e)| \leq& 2^{2^{\log(\partial_G)+1+|E_G|\cdot u_n}} \\        
        \max_{e \in E_G}|Pot_{n}(e)| \leq& 2^{2^{u_n+|E_G|\cdot u_n}} \\        
        \max_{e \in E_G}|Pot_{n}(e)| \leq& 2^{2^{(1+|E_G|)\cdot u_n}} \\        
        \max_{e \in E_G}|Pot_{n}(e)| \leq& 2^{2^{\frac{u_n^2}{2}}} \\        
        \max_{e \in E_G}|Pot_{n}(e)| \leq& 2^{2^{2^{u_n}}} \\        
        \max_{e \in E_G}|Pot_{n}(e)| \leq& 2^{3.|E_G|}_{3.n} \\        
      \end{align*}
      
      Now that we have bounded canonical potentials, we can bound $T_G$. 
      \begin{align*}
        T_G =& \sum_{e \in E_G}|L_{\mapsto}(e)|+\sum_{B \in B_G}\left( |D_G(B)|\cdot \sum_{P \in L_{\mapsto}(B)}\sum_{t \in C_{\mapsto}(B,P)}|t| \right)\\
        T_G \leq& |E_G|.2^{3.|E_G|}_{3.S_G}+2 S_G \cdot \max_{B \in B_G}|D_G(B)| \cdot 2^{3.|E_G|}_{3.S_G} \cdot 2^{3.|E_G|}_{3.S_G}\\
        T_G \leq& |E_G|.2^{3.|E_G|}_{3.S_G}+ |E_G| \cdot (2^{3.|E_G|}_{3.S_G})^2\\
        T_G \leq& 2^{3.|E_G|}_{3.S_G+1}\\
      \end{align*}
    \end{proof}

    \section{Dependence control}
    \label{section_dependence}
    \paragraph{}Though stratification gives us a bound on the length of the reduction, elementary time is not considered as a reasonable bound. Figure \ref{exp} shows us a way for the complexity to arise, despite stratification. On this proof net, the box A duplicates the box B. Each copy of B duplicates C, each copy of C… In~\cite{roversi2009some}, this situation is called a chain of ``spindle''\label{def_spindle}. We call ``dependence control condition'' any restriction on linear logic which aims to tackle this kind of spindle chains. The solution chosen by Girard~\cite{girard1995light} was to limit the number of auxiliary doors of each $!$-boxes to 1. To keep some expressivity, he introduced a new modality $\S$ with $\S$-boxes which can have an arbitrary number of auxiliary doors.

    %\tikzset{external/remake next} 
    \begin{figure}
      \centering
      \begin{tikzpicture}
        \tikzstyle{door}=[draw, circle, inner sep=0.02cm]
        \draw (0,0) node [door] (bang1) {!P};
        \draw (bang1) node [below right] {$\mathbf{C}$}; 
        \draw (bang1)++(-1.2,-0.7) node (why1) {?C};
        \draw (why1)++(-0.4,0.7) node [door] (ancl1) {?P};
        \draw (why1)++(0.4,0.7) node [door] (ancr1) {?P};
        \draw (bang1) ++(0,0.6) node (tens1) {$\otimes$};
        \draw (tens1) ++(-0.6,0.3) node (ax1b) {$ax$};
        \draw (tens1) ++(-0.6,0.6) node (ax1h) {$ax$};
        \draw (bang1) ++ (1,-1) node (cut1) {$cut$};
        \draw [<-] (bang1)--(tens1);
        \draw [<-] (tens1) to [bend right] (ax1h);
        \draw [<-] (tens1) to [bend right] (ax1b);
        \draw [->] (ax1h) to [bend right] (ancl1);
        \draw [->] (ax1b) to [bend right] (ancr1);
        \draw (3,0) node (bang2) [door] {!P};
        \draw (bang2) node [below right] {$\mathbf{B}$}; 
        \draw (bang2)++(-1.2,-0.7) node (why2) {?C};
        \draw (why2)++(-0.4,0.7) node [door]  (ancl2) {?P};
        \draw (why2)++(0.4,0.7) node  [door] (ancr2) {?P};
        \draw (bang2) ++(0,0.6) node (tens2) {$\otimes$};
        \draw (tens2) ++(-0.6,0.3) node (ax2b) {$ax$};
        \draw (tens2) ++(-0.6,0.6) node (ax2h) {$ax$};
        \draw (bang2) ++ (1,-1) node (cut2) {$cut$};
        \draw [<-] (bang2)--(tens2);
        \draw [<-] (tens2) to [bend right] (ax2h);
        \draw [<-] (tens2) to [bend right] (ax2b);
        \draw [->] (ax2h) to [bend right] (ancl2);
        \draw [->] (ax2b) to [bend right] (ancr2);
        \draw (6,0) node (bang3)  [door] {!P};
        \draw (bang3) node [below right] {$\mathbf{A}$}; 
        \draw (bang3)++(-1.2,-0.7) node (why3) {?C};
        \draw (why3)++(-0.4,0.7) node [door] (ancl3) {?P};
        \draw (why3)++(0.4,0.7) node [door] (ancr3) {?P};
        \draw [->](bang1) to [out=-80, in=180] (cut1);
        \draw [<-](cut1) to [out=0, in=-140] (why2);
        \draw [->](bang2) to [out=-80, in=180] (cut2);
        \draw [<-] (cut2) to [out=0, in=-140] (why3);
        \draw (bang3) ++(0,0.6) node (tens3) {$\otimes$};
        \draw (tens3) ++(-0.6,0.3) node (ax3b) {$ax$};
        \draw (tens3) ++(-0.6,0.6) node (ax3h) {$ax$};
        \draw [<-] (bang3)--(tens3);
        \draw [<-] (tens3) to [bend right] (ax3h);
        \draw [<-] (tens3) to [bend right] (ax3b);
        \draw [->] (ax3h) to [bend right] (ancl3);
        \draw [->] (ax3b) to [bend right] (ancr3);
        \draw [->] (bang3) -- ++ (0,-1);
        \draw (bang1) --++(0.4,0) --++(0,1.5) --++(-2.4,0) |- (ancl1)--(ancr1)--(bang1);
        \draw (bang2) --++(0.4,0) --++(0,1.5) --++(-2.4,0) |- (ancl2)--(ancr2)--(bang2);
        \draw (bang3) --++(0.4,0) --++(0,1.5) --++(-2.4,0) |- (ancl3)--(ancr3)--(bang3);
        \draw [<-](why1) to [bend left] (ancl1);
        \draw [<-](why1) to [bend right](ancr1);
        \draw [<-](why2) to [bend left] (ancl2);
        \draw [<-](why2) to [bend right] (ancr2);
        \draw [<-](why3) to [bend left] (ancl3);
        \draw [<-](why3) to [bend right](ancr3);
        \draw (why1) ++(-1,-0.5) node (suite) {};
        \draw [->,dashed] (why1) to [bend left] (suite);
        
      \end{tikzpicture}
      \caption{This proof-net (if extended to $n$ boxes) reduces in $2^n$ steps}
      \label{exp}
    \end{figure}

    Baillot and Mazza generalized $ELL$ with $L^3$, a system capturing elementary time~\cite{baillot2010linear}. Contrary to $ELL$, $L^3$ allows dereliction and digging ($\wn D$ and $\wn N$ links). The presence of digging allows another way to create an exponential blow up, shown in Figure \ref{exp2}. Notice that in this second proof-net, all the boxes have at most one auxiliary door. So, contrary to the case of $ELL$ where the ``one auxiliary door'' condition alone ensures polynomial time, Baillot and Mazza added another restriction. They defined the $L^4$ proof-nets as the $L^3$ proof-nets without digging and with at most one auxiliary door by box. $L^4$ proof-nets normalize in polynomial time. However, we think that having all the links of linear logic (with some restriction on them) in $L^3$ was a nice feature and it is unfortunate that the authors could not keep the digging in $L^4$. In fact, there are no implicit characterization of polynomial time by subsystems of linear logic which keeps the digging. The criterion enforcing polynomial time normalization that we define in this paper does not forbid the digging. This could lead to a subsystem of linear logic characterizing $Ptime$ with a digging link.

    %\tikzset{external/remake next} 
    \begin{figure}
      \centering
      \begin{tikzpicture}
        \tikzstyle{door}=[draw, circle, inner sep=0.02cm]
        \node [princdoor] (bang1) at (0,0) {};
        \node (name1) at ($(bang1)+(0.4,-0.3)$) {$\mathbf{C}$}; 
        \node [auxdoor]   (aux1)  at ($(bang1)+(-1.4,  0)$) {};
        \node [cont]      (cont1) at ($(aux1) +(   0,0.7)$) {};
        \node [tensor]    (tens1) at ($(bang1)+(   0,0.8)$) {};
        \node [ax]        (ax1b)  at ($(tens1)+(-0.7,0.4)$) {};
        \node [ax]        (ax1h)  at ($(tens1)+(-0.7,0.7)$) {};
        \node [dig]       (dig1)  at ($(aux1) +(   0,-0.7)$){};
        \draw (bang1) -| ++(0.4,1.8) -| ($(aux1)+(-0.4,0)$) -- (aux1) -- (bang1);
        \draw [ar] (tens1)--(bang1);
        \draw [ar] (ax1b) to [out=  0,in=120] (tens1);
        \draw [ar] (ax1h) to [out=-10,in=70]  (tens1);
        \draw [ar] (ax1b) to [out=-170,in=60] (cont1);
        \draw [ar] (ax1h) to [out=-170,in=120](cont1);
        \draw [ar] (cont1) to (aux1);        
        \draw [ar] (aux1)  to (dig1);
        \draw [ar,dashed] (dig1) to [out=-90,in=0] ($(dig1)+(-0.8,-0.5)$);
        
        \node [princdoor] (bang2) at ($(bang1)+(3.5,0)$) {};
        \node (name2) at ($(bang2)+(0.4,-0.3)$) {$\mathbf{B}$}; 
        \node [auxdoor]   (aux2)  at ($(bang2)+(-1.4,  0)$) {};
        \node [cont]      (cont2) at ($(aux2) +(   0,0.7)$) {};
        \node [tensor]    (tens2) at ($(bang2)+(   0,0.8)$) {};
        \node [ax]        (ax2b)  at ($(tens2)+(-0.7,0.4)$) {};
        \node [ax]        (ax2h)  at ($(tens2)+(-0.7,0.7)$) {};
        \node [dig]       (dig2)  at ($(aux2) +(   0,-0.7)$){};
        \draw (bang2) -| ++(0.4,1.8) -| ($(aux2)+(-0.4,0)$) -- (aux2) -- (bang2);
        \draw [ar] (tens2)--(bang2);
        \draw [ar] (ax2b) to [out=  0,in=120] (tens2);
        \draw [ar] (ax2h) to [out=-10,in=70]  (tens2);
        \draw [ar] (ax2b) to [out=-170,in=60] (cont2);
        \draw [ar] (ax2h) to [out=-170,in=120](cont2);
        \draw [ar] (cont2) to (aux2);        
        \draw [ar] (aux2)  to (dig2);
        \node [cut](cut12) at ($(bang1)!0.5!(aux2)+(0,-1.2)$) {};
        \draw [ar] (bang1) to [out=-90,in=180] (cut12);
        \draw [ar] (dig2)  to [out=-100,in=  0] (cut12);

        \node [princdoor] (bang3) at ($(bang2)+(3.5,0)$) {};
        \draw [ar] (bang3) --++ (0,-1);
        \node (name3) at ($(bang3)+(0.4,-0.3)$) {$\mathbf{A}$}; 
        \node [auxdoor]   (aux3)  at ($(bang3)+(-1.4,  0)$) {};
        \node [cont]      (cont3) at ($(aux3) +(   0,0.7)$) {};
        \node [tensor]    (tens3) at ($(bang3)+(   0,0.8)$) {};
        \node [ax]        (ax3b)  at ($(tens3)+(-0.7,0.4)$) {};
        \node [ax]        (ax3h)  at ($(tens3)+(-0.7,0.7)$) {};
        \node [dig]       (dig3)  at ($(aux3) +(   0,-0.7)$){};
        \draw (bang3) -| ++(0.4,1.8) -| ($(aux3)+(-0.4,0)$) -- (aux3) -- (bang3);
        \draw [ar] (tens3)--(bang3);
        \draw [ar] (ax3b) to [out=  0,in=120] (tens3);
        \draw [ar] (ax3h) to [out=-10,in=70]  (tens3);
        \draw [ar] (ax3b) to [out=-170,in=60] (cont3);
        \draw [ar] (ax3h) to [out=-170,in=120](cont3);
        \draw [ar] (cont3) to (aux3);        
        \draw [ar] (aux3)  to (dig3);
        \node [cut](cut23) at ($(bang2)!0.5!(aux3)+(0,-1.2)$) {};
        \draw [ar] (bang2) to [out=-90,in=180] (cut23);
        \draw [ar] (dig3)  to [out=-100,in=  0] (cut23);
      \end{tikzpicture}
      \caption{This proof-net (if extended to $n$ boxes) reduces in $2^n$ steps}
      \label{exp2}
    \end{figure}

    The ``one auxiliary door''condition forbids a great number of proof-nets where there are boxes with more than one auxiliary door but whose complexity is still polynomial. The complexity explosion in Figure \ref{exp} comes from the fact that two copies of a box $B$ fuse with the same box $A$. A box with several auxiliary doors is only harmful if two of its auxiliary edges are contracted as in Figure \ref{exp}. Moreover, let us recall that we are interested in the complexity of functions, not stand-alone proof-nets. We say that the complexity of proof-net $G$ is polynomial if there is a polynomial $P_G$ such that whenever $G$ is cut with a proof-net $H$ in normal form, the resulting proof-net normalizes in at most $P_G(|E_H|)$ cut-elimination steps. $G$ is fixed and $P_G$ depends on $G$, so we can create a proof-net which has Figure \ref{exp} as a subproof-net and still is in $Ptime$. In fact, as Figure \ref{eightIsConst} shows, such a proof-net can even normalize in constant time.

    %\tikzset{external/remake next} 
    \begin{figure}
      \centering
      \begin{tikzpicture}
        \tikzstyle{door}=[draw, circle, inner sep=0.02cm]
        \draw (0,0) node [door] (bang1) {!P};
        \draw (bang1) node [below right] {$\mathbf{C}$}; 
        \draw (bang1)++(-1.2,-0.7) node (why1) {?C};
        \draw (why1)++(-0.4,0.7) node [door] (ancl1) {?P};
        \draw (why1)++(0.4,0.7) node [door] (ancr1) {?P};
        \draw (bang1) ++(0,0.6) node (tens1) {$\otimes$};
        \draw (tens1) ++(-0.6,0.3) node (ax1b) {$ax$};
        \draw (tens1) ++(-0.6,0.6) node (ax1h) {$ax$};
        \draw (bang1) ++ (1,-1) node (cut1) {$cut$};
        \draw [<-] (bang1)--(tens1);
        \draw [<-] (tens1) to [bend right] (ax1h);
        \draw [<-] (tens1) to [bend right] (ax1b);
        \draw [->] (ax1h) to [bend right] (ancl1);
        \draw [->] (ax1b) to [bend right] (ancr1);
        \draw (3,0) node (bang2) [door] {!P};
        \draw (bang2) node [below right] {$\mathbf{B}$}; 
        \draw (bang2)++(-1.2,-0.7) node (why2) {?C};
        \draw (why2)++(-0.4,0.7) node [door]  (ancl2) {?P};
        \draw (why2)++(0.4,0.7) node  [door] (ancr2) {?P};
        \draw (bang2) ++(0,0.6) node (tens2) {$\otimes$};
        \draw (tens2) ++(-0.6,0.3) node (ax2b) {$ax$};
        \draw (tens2) ++(-0.6,0.6) node (ax2h) {$ax$};
        \draw (bang2) ++ (1,-1) node (cut2) {$cut$};
        \draw [<-] (bang2)--(tens2);
        \draw [<-] (tens2) to [bend right] (ax2h);
        \draw [<-] (tens2) to [bend right] (ax2b);
        \draw [->] (ax2h) to [bend right] (ancl2);
        \draw [->] (ax2b) to [bend right] (ancr2);
        \draw (6,0) node (bang3)  [door] {!P};
        \draw (bang3) node [below right] {$\mathbf{A}$}; 
        \draw (bang3)++(-1.2,-0.7) node (why3) {?C};
        \draw (why3)++(-0.4,0.7) node [door] (ancl3) {?P};
        \draw (why3)++(0.4,0.7) node [door] (ancr3) {?P};
        \draw [->](bang1) to [out=-80, in=180] (cut1);
        \draw [<-](cut1) to [out=0, in=-140] (why2);
        \draw [->](bang2) to [out=-80, in=180] (cut2);
        \draw [<-] (cut2) to [out=0, in=-140] (why3);
        \draw (bang3) ++(0,0.6) node (tens3) {$\otimes$};
        \draw (tens3) ++(-0.6,0.3) node (ax3b) {$ax$};
        \draw (tens3) ++(-0.6,0.6) node (ax3h) {$ax$};
        \draw [<-] (bang3)--(tens3);
        \draw [<-] (tens3) to [bend right] (ax3h);
        \draw [<-] (tens3) to [bend right] (ax3b);
        \draw [->] (ax3h) to [bend right] (ancl3);
        \draw [->] (ax3b) to [bend right] (ancr3);
        \draw (bang1) --++(0.4,0) --++(0,1.5) --++(-2.4,0) |- (ancl1)--(ancr1)--(bang1);
        \draw (bang2) --++(0.4,0) --++(0,1.5) --++(-2.4,0) |- (ancl2)--(ancr2)--(bang2);
        \draw (bang3) --++(0.4,0) --++(0,1.5) --++(-2.4,0) |- (ancl3)--(ancr3)--(bang3);
        \draw [<-](why1) to [bend left] (ancl1);
        \draw [<-](why1) to [bend right](ancr1);
        \draw [<-](why2) to [bend left] (ancl2);
        \draw [<-](why2) to [bend right] (ancr2);
        \draw [<-](why3) to [bend left] (ancl3);
        \draw [<-](why3) to [bend right](ancr3);
        \draw (why1) ++(-1,-0.5) node (suite) {};

        \node [par]    (par)  at ($(why1)!0.5!(bang3)+(0,-1.5)$) {};
        \draw [ar]     (why1)  to [out=- 65,in=175] (par);
        \draw [ar]     (bang3) to [out=-115,in= 15] (par);
        \node [tensor] (tens) at ($(par)+(5,0)$) {};
        \node [cut]    (cut)  at ($(par)!0.5!(tens)+(0,-0.5)$) {};
        \draw [ar]     (tens)  to [out=-145,in=  0] (cut);
        \draw [ar]     (par)   to [out= -35,in=180] (cut);
        \node [ax]     (finax) at ($(tens)+(0.8,0.5)$) {};
        \draw [ar]     (finax) to [out=180,in= 60] (tens);
        \draw [ar]     (finax) to [out=  0,in=120] ($(finax)+(0.8,-0.5)$);
        \node [proofnet] (H) at ($(tens)+(0,1.5)$) {$H$};
        \draw [ar] (H) -- (tens);
      \end{tikzpicture}
      \caption{If $H$ is in normal form, this proof-net reduces in exactly 32 $cut$-elimination steps}
      \label{eightIsConst}
    \end{figure}

    What really leads to an exponential blow up is when the length of such a chain of spindles depends on the input, as in Figure \ref{reallyExp}. If we replace the sub proof-net $H$ (which represents $3$) by a proof-net $H'$ representing $n$, the resulting proof-net normalizes in time $2^n$.

    %\tikzset{external/remake next} 
    \begin{figure}
      \centering
      \begin{tikzpicture}
        \node [princdoor]  (bang) at (0,0) {};
        \node at ($(bang)+(0.4,-0.3)$) {$\mathbf{B}$};
        \node [auxdoor]    (aux2)  at ($(bang)+(-1.6,0)$) {};
        \node [auxdoor]    (aux1)  at ($(aux2) +(-0.8,0)$) {};
        \node [cont]       (cont)  at ($(aux1)!0.5!(aux2)+(0,-0.7)$) {};
        \node [tensor]     (tens)  at ($(bang)+(0,0.7)$)  {};
        \node [ax]         (axb)   at ($(tens)+(-0.7,0.35)$) {};
        \node [ax]         (axh)   at ($(tens)+(-0.7,0.7)$) {};
        \draw (bang) -| ++(0.4,1.65) -| ($(aux1)+(-0.4,0)$) -- (aux1) -- (aux2) -- (bang);
        \draw [ar] (tens)--(bang);
        \draw [ar] (axh) to [out=  0,in= 75]  (tens);
        \draw [ar] (axb) to [out=  0,in=120]  (tens);
        \draw [ar] (axh) to [out=180,in= 90] (aux1);
        \draw [ar] (axb) to [out=180,in= 90] (aux2);
        \draw [ar] (aux1) -- (cont);
        \draw [ar] (aux2) -- (cont);
        \node [par]        (par)   at ($(cont)!0.5!(bang)+(0,-1.7)$) {};
        \node [forall]     (fa)    at ($(cont)!0.5!(par)$) {};
        \draw [name path=bangPar,opacity=0] (bang)--(par);
        \draw [name path=horizontFa,opacity=0] ($(fa)+(-2,0)$) -- ($(fa)+(2,0)$);
        \node [exists, name intersections={of=bangPar and horizontFa}] (ex) at (intersection-1) {};
        \draw [ar] (cont) -- (fa); \draw [ar] (fa)--(par);
        \draw [ar] (bang) -- (ex); \draw [ar] (ex)--(par);

        \node [princdoor] (bangf) at ($(par)+(0,-0.7)$) {};
        \draw (par) -- (bangf);
        \draw (bangf) -| ++(1.7,4.8) -| ($(bangf)+(-1.9,0)$) -- (bangf);
        \node [tensor]    (tensf) at ($(bangf)+(-18:3.2)$) {};
        \node [der]       (derf)  at ($(tensf)+( 50:0.8)$) {};
        \node [tensor]    (tensfx)at ($(derf) +( 60:0.8)$) {};
        \node [exists]    (exx)   at ($(tensfx)+(120:0.8)$) {};
        \node [par]       (parx)  at ($(exx)  +(0,1)$) {};
        \node [ax]        (axx)   at ($(parx) +(0,1)$) {};
        \node [ax]        (axfx)  at ($(tensfx)+(0.8,0.5)$) {};
        \node [exists]    (exf)   at ($(tensf)+(-60:0.8)$)  {};
        \node [par]       (parf)  at ($(exf)  +(-60:0.8)$)  {};
        \draw [ar] (axx) to [out= -60,in= 60] (parx);
        \draw [ar] (axx) to [out=-120,in=120] (parx);
        \draw [ar] (parx)-- (exx); 
        \draw [ar] (exx) -- (tensfx);
        \draw [ar] (tensfx)--(derf);
        \draw [ar] (derf)  --(tensf);
        \draw [ar] (bangf) to [out=-80, in=162] (tensf);
        \draw [ar] (tensf) --(exf);
        \draw [ar] (exf)   --(parf);
        \draw [ar] (axfx) to [out=-170,in=60] (tensfx);
        \draw [ar] (axfx) to [out=-10, in=60] (parf);
        
        \node [tensor]  (app) at ($(parf)+(7,0)$) {};
        \node [cut]     (cut) at ($(parf)!0.5!(app)+(0,-0.9)$) {};
        \draw [ar] (parf) to [out=-50,in=180] (cut);
        \draw [ar] (app) to [out=-130,in=  0] (cut);
        \node [ax]     (finax) at ($(app)+(0.8,0.5)$) {};
        \draw [ar]     (finax) to [out=180,in= 60] (app);
        \draw [ar]     (finax) to [out=  0,in=120] ($(finax)+(0.8,-0.5)$);

        \node [forall]       (forall)   at ($(app)+(-0.3,1.4)$) {};
        \draw (forall) -- (app);
        \node [par]          (parg)     at ($(forall)+(0,0.8)$) {};
        \node [princdoor]    (princ3g)  at ($(parg)+(1, 2)$) {};
        \node at ($(princ3g)+(0.6,-0.4)$) {$\mathbf B_2$}; 
        \node [auxdoor]      (aux1g)    at ($(princ3g)+(-4,0)$) {};
        \node [auxdoor]      (aux2g)    at ($(aux1g)!0.333!(princ3g)$) {};
        \node [auxdoor]      (aux3g)    at ($(aux1g)!0.666!(princ3g)$) {};
        \node [cont]         (cont1g)   at ($(aux1g)!0.5!(aux2g)+(0,-0.9)$) {};
        \node [cont]         (cont2g)   at ($(cont1g)!0.5!(parg)$) {};
        \nvar{\hautTens}{1.3cm}
        \node [tensor]       (tens1g)   at ($(aux1g)+(0,\hautTens)$) {};
        \node [tensor]       (tens2g)   at ($(aux2g)+(0,\hautTens)$) {};
        \node [tensor]       (tens3g)   at ($(aux3g)+(0,\hautTens)$) {};
        \nvar{\decAx}{0.6cm}
        \node [ax]        (ax1n)     at ($(tens1g)+(-0.5,\decAx)$) {};
        \node [ax]        (ax2n)     at ($(tens1g)!0.5!(tens2g)+(0,\decAx)$) {};
        \node [ax]        (ax3n)     at ($(tens2g)!0.5!(tens3g)+(0,\decAx)$) {};
        \node [ax]        (ax4n)     at ($(princ3g)+(0,\decAx + \hautTens)$) {};
        \node [par]          (parx)     at ($(princ3g)+(0,0.7)$) {};
        \draw [ar] (parg) -- (forall);
        \draw [ar] (cont2g) -- (parg);
        \draw [ar] (cont1g) -- (cont2g);
        \draw [ar] (aux1g) -- (cont1g);
        \draw [ar] (aux2g) -- (cont1g);
        \draw [ar] (aux3g) -- (cont2g);
        \draw [ar] (parx) -- (princ3g);
        \draw [ar] (tens1g) -- (aux1g);
        \draw [ar] (tens2g) -- (aux2g);
        \draw [ar] (tens3g) -- (aux3g);
        \draw [ar,out=-140,in= 160] (ax1n) to (parx);
        \draw [ar,out=-30 ,in= 120] (ax1n) to (tens1g);
        \draw [ar,out=-150 ,in=  60] (ax2n) to (tens1g);
        \draw [ar,out=0   ,in=120 ] (ax2n) to (tens2g);
        \draw [ar,out=180 ,in=  60] (ax3n) to (tens2g);
        \draw [ar,out=0   ,in=120 ] (ax3n) to (tens3g);
        \draw [ar,out=180 ,in=  60] (ax4n) to (tens3g);
        \draw [ar,out=-20 ,in=  60] (ax4n) to (parx);
        \draw (princ3g) -| ++(1,\hautTens + \decAx +0.5cm) -| ($(aux1g)+(-1.2,0)$) -- (aux1g) -- (aux2g) -- (aux3g) -- (princ3g);
        \draw (princ3g) -- (parg);

        \draw [dashed] ($(aux1)+(-0.7,2.2)$) rectangle ($(parf)+(1.8,-0.5)$);
        \node at ($(parf)+(2,0)$) {$\mathbf{G}$};
        \draw [dashed] ($(forall)+(2.5,-0.5)$) rectangle ($(ax1n)+(-1,1.2)$);
        \node at ($(forall)+(2.7,0)$) {$\mathbf{H}$};
      \end{tikzpicture}
      \caption{\label{reallyExp}The sub proof-net $G$ is not polynomial time}
      \end{figure}
 That is the reason why, in the system $L^{3a}$~\cite{dorman2009linear}, Dorman and Mazza replaced the ``one auxiliary door'' condition by a looser dependence control condition: each edge is labelled with an integer, the label of an auxiliary edge must be greater or equal to the label of the principal edge of the box, for a given box at most one auxiliary edge can have the same label as the principal edge. Thus, in figure \ref{exp}, either $\sigma_0(A)$ or $\sigma_1(A)$ has a label greater than the label of $\sigma(A)$. They are contracted so they must have the same label, which will also be the label of $\sigma(B)$. Thus the label of $\sigma(A)$ is inferior to the label of $\sigma(B)$ which is inferior to the label of $\sigma(C)$. In general, the length of chains of spindles is bounded by the maximum label of the proof-net, which does not depend on the input. The dependence control of $L^{3a}$ seems to give a greater expressive power than the dependence control of $LLL$. In our view, the main limitation of $L^{3a}$ is that it uses the same labels to control dependence and to enforce stratification. This entails useless constraints on the strata corresponding to the auxiliary edges of boxes.

 Our dependence control condition is closer to $MS$.

In ~\cite{roversi2009some}, Roversi and Vercelli proposed to relax this discipline by considering a framework of logics, $MS$. $MS$ is defined as a set of subsystems of $ELL$ with indexes on $\oc$ and $\wn$ connectives. Roversi and Vercelli provide a sufficient criterion on those systems to ensure that a system is $Ptime$. This criterion intuitively says that a $MS$ system is $Ptime$ if and only if one of the two following condition holds:
\begin{itemize}
\item If $\wn_{i}A$ and $\wn_jA$ can be contracted in $\wn_k A$, then $i \geq k$, $j \geq k$ {\bf and at least one of those comparison is strict}. And for every boxes, the indexes on the $\wn$-s of the auxiliary doors are greater or equal to the index of the $\oc$ of the principal doors.
\item If $\wn_{i}A$ and $\wn_jA$ can be contracted in $\wn_k A$, then $i \geq k$, $j \geq k$. And for every boxes, the indexes on the $\wn$-s of the auxiliary doors are greater or equal to the index of the $\oc$ of the principal doors, {\bf with all but (at most) one of those comparisons being strict}.
\end{itemize}
In the following, we propose instead a criterion on proof-nets implying a polynomial time bound. Our criterion is more general, every proof-net satisfying the criterion of~\cite{roversi2009some} satisfies our criterion. On $ELL$ proof-nets, our criterion seems close to the $MS$ criterion. However, our dependence criterion entails polynomial time normalization on any stratified proof-net, while the $MS$ criterion entails polynomial time entails polynomial time normalization only on $ELL$ proof-nets. Intuitively, their criterion only deals with the kind of blow-up of Figure \ref{exp} and does not deal with the kind of blow-up of Figure \ref{exp2}. At the end of Section \ref{section_applications}, we will give more comparison between the two approaches.

We try to have as few false negatives as possible for our criterion (proof-nets which are in $Ptime$ but do not satisfy the criterion) so we will only forbid proofnets where, along the cut elimination, two (or more) duplicates of a box $B$ join the same duplicate of a box $B$. Indeed, suppose a chain of spindles appears during cut elimination and that the boxes of the sequence are duplicates of pairwise distinct boxes of the original proofnet. Then, the length of the sequence is bounded by the number of boxes of the original proofnet, so the sequence would be harmless. Our condition is given in the following way: we first define a relation $B \succcurlyeq_k B'$ ($B$ $k$-joins $B'$) on boxes meaning that at least $k$ duplicates of $B$ join $B'$ and we say that a proof net controls dependence if $\succcurlyeq_2$ is acyclic.

    \begin{definition}[$B$ $k$-joins $B'$]\label{def_kjoins}~\\
      \begin{itemize}
      \item
        We define $(B,P)$ $k$-joins directly $(B',P')$ in stratum $s$ as
        \begin{equation*}
          (B,P) \succcurlyeq_k^s (B',P') \Leftrightarrow k = \left | \Set*{t \in C_{s}(B,P) }{ \exists u, t\sqsubseteq u \text{ and } (\sigma(B),P, [!_{u}], +) \rightsquigarrow^*_s ( \sigma_i(B'), P', [!_{\sige}], -) } \right|.
        \end{equation*}
      \item We define $B$ $k$-joins $B'$ in stratum $s$ as: 
        \begin{equation*}
          B \succcurlyeq_k^s B' \Leftrightarrow k \leq \max \limits_{\substack{P \in L_{\mapsto}(B)\\ P' \in L_{\mapsto}(B')}} \sum_{(B,P) \succcurlyeq_{k_1}^s (B_1,P_1) \succcurlyeq_{k_2}^s \cdots \succcurlyeq_{k_n}^s (B',P')}{k_1 \cdot  k_2 \cdots \cdot k_n}
        \end{equation*}
      \end{itemize}
    \end{definition}

    \begin{definition}\label{def_controlsdependence}
      A principal door stratified proof net $G$ controls dependence if $\succcurlyeq_2^{S(G)}$ is irreflexive.
    \end{definition}
    For example, in Figure \ref{exp2}, we have $(B,[]) \succcurlyeq^0_2 (A,[])$ because $\sign(\sigr(\sige),\sige)$ and $\sign(\sigl(\sige),\sige)$ are $\mapsto_0$-copies of $(B,[])$, $\sigp(\sige)$ is a simplification of both and $(\sigma(B),[],[\oc_{\sigp(\sige)}],+) \mapsto_0^2 (\sigma_0(A),[],[\oc_{\sige}],-)$.

    The proof-net of Figure \ref{reallyExp} does not control dependence because $(B,[\sigl(\sigr(\sige))]) \succcurlyeq^s_2 (B,[\sigr(\sige)])$ so $B \succcurlyeq^0_2 B'$. Indeed, $\sigl(\sige)$ and $\sigr(\sige)$ are $\mapsto_0$-copies of $(B,[\sigl(\sigr(\sige))])$, any signature is its own simplification, $(\sigma(B),[\sigl(\sigr(\sige))],[\oc_{\sigl(e)}],+) \mapsto^{29} (\sigma_1(B),[\sigr(\sige)],[\oc_{\sige}],-)$ and $(\sigma(B),[\sigl(\sigr(\sige))],[\oc_{\sigr(e)}],+) \mapsto^{29} (\sigma_2(B),[\sigr(\sige)],[\oc_{\sige}],-)$.
    
    As in Section \ref{section_stratification}, we defined our criterion as the acyclicity of a relation on boxes. The methodology will be similar, we will prove that the number of $\mapsto_s$-copies of a box $B$ can be bounded by the number of $\mapsto_s$-copies of the boxes $B'$ with $B \succcurlyeq^s_2 B'$.\label{def_nest} If a proof net controls dependence, then for every box $B$ we define the nest of $B$ at stratum $s$ (written $N_s(B)$) as the depth of $B$ in terms of the $\succcurlyeq^s_2$ relation. So the notion of nest is the equivalent for dependence control of the notion of stratum for stratification. We can notice that for every $s \leq s'$, $\succcurlyeq^s_2 \subseteq \succcurlyeq^{s'}_2$, so $N_s(B) \leq N_{s'}(B)$. We will write $N(B)$ for $N_{S_G}(B)$, thus for every $s \in \mathbb{N}$, $N_s(B) \leq N(B)$. Finally, $N_G$ will stand for $\max_{B \in B_G}N(B)$.
   
    The proof will be done in two main steps. First, in Subsection \ref{subsection_itineraries}, given a potential box $(B,P)$ and a potential edge $(e,Q)$, we will bound the number of ``different paths'' that the $\mapsto_s$-copies of $(B,P)$ whose associated paths go through the context $(e,Q,[\oc_{\sige}],-)$ can take. Those ``different paths'' will be captured by the notion of itinerary. This subsection deals with the proof-nets similar to Figure \ref{exp}. 
    Then, in Subsection \ref{subsection_depcontrol_digging}, given a potential edge $(e,Q)$ and an itinerary $I$ from $(B,P)$ to $(e,Q)$, we will bound the number of $\mapsto_s$-copies $t$ of $(B,P)$ whose associated paths take the itinerary $I$ from $(B,P,[\oc_{t}],+)$ to $(e,Q,[\oc_{\sige}],-)$. This subsection is heavily involved with digging and deals with the proof-nets similar to Figure \ref{exp2}.
    
    Finally, in Subsection \ref{subsection_depcontrol_final}, we will compose those two results to give a polynomial bound on stratified proof-nets controlling dependence.

\subsection{Bound on the number of itineraries}  
\label{subsection_itineraries}
As we said, this subsection deals with the proof-nets similar to Figure \ref{exp2}, we will refine the $\succcurlyeq^s_k$ relation on potential boxes to deal only with this kind of dependence.

\label{def_geqsk}Let $(B,P)$ and $(B',P')$ be potential boxes, 
\begin{equation*}
  (B,P) \geq^s_k (B',P') \Leftrightarrow k = \left| \Set{(\sigma_i(B'),P')}{\exists t \in Sig, (\sigma(B),P,[\oc_t],+) \rightsquigarrow^*_s (\sigma_i(B'),P',[\oc_{\sige}],-)}\right|
\end{equation*}

We can notice that if $(B,P) \succcurlyeq^s_j (B',P')$ and $(B,P) \geq^s_k (B',P')$ then $j \geq k$.

In Section \ref{section_stratification}, we proved Lemma \ref{injection_lemma} which tells us that, if $(\sigma(B),P,[\oc_t],+) \rightsquigarrow^*_s (e,P,[\oc_{\sige}],-)$ and $(\sigma(B),P,[\oc_u],+) \rightsquigarrow^*_s (e,P',[\oc_{\sige}],-)$ with $P^{/s-1}=P'^{/s-1}$, then we can follow the paths from their end and we can observe that the edges of the paths are pairwise equal. So, we crossed exactly the same $\wn N$ and $\wn C$ nodes in the two paths. Thus, we can deduce that $t=u$. To be precise, we can do so only in the case where there is no digging, but we will deal with it in the next subsection.

Now, let us suppose that $(\sigma(B),P,[\oc_t],+) \mapsto^*_s (e,P,[\oc_{\sige}],-)$ and $(\sigma(B),P,[\oc_u],+) \mapsto^*_s (e,P',[\oc_{\sige}],-)$ with $P^{/s-1}=P'^{/s-1}$. We can follow the paths from their end and we will deduce that we are in the situation $(\sigma(C),R,[\oc_{v}],+) \mapsto^*_s (e,P,[\oc_{\sige}],-)$ and $(\sigma(C),R',[\oc_{v}],+) \mapsto^*_s (e,P',[\oc_{\sige}],-)$ with $R^{/s-1}=R'^{s-1}$. However, if $B \neq C$ and the box $C$ has more than one auxiliary door, then we do not know if the contexts $(\sigma(C),R,[\oc_{v}],+)$ and $(\sigma(C),R',[\oc_{v}],+)$ come from the same auxiliary door. So to know exactly the edges by which the paths has gone through between $(\sigma(B),P,[\oc_{t}],+)$ and $(e,Q,[\oc_{\sige}],-)$ we not only have to know $(e,Q^{/s-1})$, but also from which auxiliary door we came from for each $\hookrightarrow$ step. We will capture this notion of choices of auxiliary door with the notion of {\em itineraries}.

    \begin{definition}\label{def_itinerary}
      Let $C,C'$ be contexts of $G$ such that $C \mapsto C'$, the itinerary between $C$ and $C'$ (written $I(C,C')$) is the list of natural numbers $[i_1; i_2; \cdots ;  i_n]$ such that 

        \begin{tabular}{cccc}
          & $C$ &$\rightsquigarrow^*$ & $(\sigma_{\boldsymbol{i_1}}(B_1),P_1,[!_{t_1}],-)$ \\
          $\hookrightarrow$  & $(\sigma(B_1),P_1, [!_{t_1}],+)$ &$\rightsquigarrow^*$ & $(\sigma_{\boldsymbol{i_2}}(B_2),P_2,[!_{t_2}],-)$ \\
          $\hookrightarrow$  & $\cdots$  & $\rightsquigarrow^*$ &$(\sigma_{\boldsymbol{i_n}}(B_n),P_n,[!_{t_n}],-)$ \\
          $\hookrightarrow$  & $(\sigma(B_n),P_n, [!_{t_n}],+)$ &$\rightsquigarrow^*$ & $C'$
        \end{tabular}
        If $t \in Sig$ and $(\sigma(B),P,[\oc_t],+) \mapsto^* C' \not \mapsto$,then $I(B,P,t)$ refers to $I((\sigma(B),P,[\oc_t],+),C')$. 

        We will also write $I_s(B,P)$ for $\Set{I(B,P,t) }{ t \in Si_{s}(B,P) }$ and for every $(e,Q_{s-1}) \in Can_{s-1}(E_G)$ and $(B,P) \in Pot(E_G)$ we define
        \begin{equation*}
          I_s((B,P),(e,Q_{s-1}))= \Set{I((\sigma(B),P,[\oc_t],+),(e,Q,[\oc_{\sige}],-))}{ t \in Si_{s}(B,P) \text{ and } Q^{/s-1}=Q_{s-1}}
        \end{equation*}
    \end{definition}

    \label{def_colony}We will need more details on $\geq^s_2$, so we define for any potential box $(B,P)$, the colonies of $(B,P)$ at stratum $s$ (written $Col_s(B,P)$). The colonies of $(B,P)$ are the first auxiliary doors that a path from $(\sigma(B),P,[\oc_t],+)$ can reach (with $t \in Sig$) which belong to a box $B'$ with $N(B)>N(B')$.
    \begin{equation}
      Col_s(B,P)=\Set*{ (\sigma_i(B'),P') \in Can_{s-1}(E_G)}{
        \begin{array}{c}
          (B,P)=(B_0,P_0) \geq^s_1 (B_1,P_1) \geq^s_1 \cdots \geq^s_1 (B_n,P_n) \\
          %\forall i, S(B_i,P_i) \leq s \\
          \exists t \in Sig, (\sigma(B_n),P_n,!_t,+) \rightsquigarrow^* (\sigma_i(B'),P',!_{\sige},-) \\
          N(B)=N(B_1)= \cdots = N(B_n) > N(B') 
        \end{array}
      }
    \end{equation}

    \begin{lemma}
      If a stratified proof net controls dependence, then for all $s \in \mathbb{N}$, 
      \begin{equation*}
        |I_s((B,P),(e,Q))| \leq (D_G.|Can_{s-1}(E_G)|)^{2.N_s(B)} \cdot |Col_s(B,P)|
      \end{equation*}
    \end{lemma}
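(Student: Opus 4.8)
The plan is to prove this by induction on the nest $N_s(B)$, mirroring the strata-by-strata strategy already used for stratification in Section~\ref{section_stratification}. The base case is $N_s(B) = 0$: then $B$ has no $\succcurlyeq^s_2$-successor, in particular no $\geq^s_2$-successor, so every path beginning at $(\sigma(B),P,[\oc_t],+)$ reaches each auxiliary door by a $\rightsquigarrow^*_s$ segment before the first $\hookrightarrow$ jump, but any box $B_1$ with $(B,P)\geq^s_1(B_1,P_1)$ has only a single auxiliary door reachable this way. Combined with the injection property, once we fix $(e,Q^{/s-1})$ and the (unique) sequence of doors, the itinerary is determined up to the entries coming from boxes whose nest is strictly smaller, which is exactly where $Col_s(B,P)$ enters. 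So in the base case $|I_s((B,P),(e,Q))| \leq |Col_s(B,P)|$, matching the bound since $(D_G.|Can_{s-1}(E_G)|)^0 = 1$.

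For the inductive step, consider a copy $t\in Si_s(B,P)$ whose path reaches $(e,Q,[\oc_{\sige}],-)$, and look at the itinerary $I((\sigma(B),P,[\oc_t],+),(e,Q,[\oc_{\sige}],-)) = [i_1;\cdots;i_n]$. I would split the path at the $\hookrightarrow$ jumps, so that between consecutive jumps the path is a pure $\rightsquigarrow^*_s$ segment from a principal door $(\sigma(B_j),P_j)$ to an auxiliary door $(\sigma_{i_{j+1}}(B_{j+1}),P_{j+1})$. The key observation is that each $(B_j,P_j)$ with $(B,P)\geq^s_1(B_j,P_j)$ either satisfies $N_s(B_j) < N_s(B)$ — in which case the door $(\sigma_{i_j}(B_j),\cdot)$ is one of the colonies in $Col_s(B,P)$ and we need no induction — or $N_s(B_j) = N_s(B)$. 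In the latter case, by the injection lemma (Lemma~\ref{injection_lemma}) applied backwards along the $\rightsquigarrow_s$ segment, the door index $i_{j+1}$ and the segment are determined once we know $(\sigma_{i_{j+1}}(B_{j+1}), P_{j+1}^{/s-1})$ and $i_j$. The number of choices for $(B_{j+1},P_{j+1}^{/s-1})$ paired with an incoming/outgoing door is bounded by $D_G \cdot |Can_{s-1}(E_G)|$ per jump, and the controlling-dependence hypothesis ($\succcurlyeq^s_2$ irreflexive, hence $\succcurlyeq^s_2$ acyclic) bounds the number of jumps staying at nest $N_s(B)$ by $2\cdot N_s(B)$ — this is where the exponent $2.N_s(B)$ comes from. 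Once the path drops to a box of strictly smaller nest, the remainder of the itinerary is accounted for by $|Col_s(B,P)|$, or by applying the induction hypothesis after recognizing that the tail itinerary factors through such a box.

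More carefully, I would phrase the counting as follows: partition $I_s((B,P),(e,Q))$ according to the maximal prefix $[i_1;\cdots;i_m]$ of door-choices for which all the intermediate boxes $B_1,\dots,B_m$ satisfy $N_s(B_j) = N_s(B)$ and $(B,P)\geq^s_1(B_1,P_1)\geq^s_1\cdots\geq^s_1(B_m,P_m)$. Since $\succcurlyeq^s_2$ is acyclic and each such step with a genuinely $\geq^s_2$ many copies would create a $\succcurlyeq^s_2$-relation, the chain of $\geq^s_1$-steps at a fixed nest has length at most $2\cdot N_s(B)$ once we take into account multiplicities; each step offers at most $D_G$ choices of auxiliary door and at most $|Can_{s-1}(E_G)|$ choices of $\mapsto_{s-1}$-canonical potential, and the two-sided nature of the injection argument (needing both the target-side and a door on each side) explains the square. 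After this prefix, the path enters a box $B'$ with $N_s(B') < N_s(B)$ via a door that lies in $Col_s(B,P)$, so the remaining itinerary is one of at most $|Col_s(B,P)|$ possibilities — but actually the cleanest bookkeeping is just: the whole itinerary is recovered from (i) the prefix of door-choices at nest $N_s(B)$, at most $(D_G\cdot|Can_{s-1}(E_G)|)^{2.N_s(B)}$ of them, times (ii) which colony we land in, at most $|Col_s(B,P)|$.

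The main obstacle I anticipate is making precise the claim that once the target context $(e,Q^{/s-1})$ and the door indices $i_1,\dots,i_n$ are fixed, the rest of the data (in particular the precise $\mapsto_{s-1}$-canonical potentials along the way, and hence whether two a priori distinct copies give the same itinerary) is pinned down — this requires running the injection lemma in reverse along each $\rightsquigarrow_s$ segment and carefully tracking that the only freedom is in exponential signatures corresponding to boxes of stratum $\geq s$ or those handled by the $\sim_s$ equivalence, which do not affect the itinerary. A secondary subtlety is handling digging correctly: a $\hookrightarrow$ jump may traverse a $\wn N$-produced $\sign(\cdot,\cdot)$ structure, and I would need to check that such jumps still correspond to well-defined door indices and that they are subsumed by the $\geq^s_1$/$\geq^s_2$ accounting rather than escaping it; this is precisely the reason the statement is stated for $\geq^s_k$ and colonies rather than directly for $\succcurlyeq^s_k$, and why the genuinely digging-heavy estimates are deferred to Subsection~\ref{subsection_depcontrol_digging}. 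I expect the bookkeeping of multiplicities (the $k_1\cdots k_n$ products in Definition~\ref{def_kjoins}) versus plain chain length to be the place where one must be most careful to avoid an off-by-a-factor error in the exponent.
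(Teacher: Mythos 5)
There is a genuine gap, and it sits exactly at the step you flag as the source of the exponent. You claim that dependence control bounds the number of $\hookrightarrow$ jumps staying at nest $N_s(B)$ by $2\cdot N_s(B)$, and you count a factor $D_G\cdot|Can_{s-1}(E_G)|$ per such jump. But acyclicity of $\succcurlyeq^s_2$ says nothing about the length of chains of $\geq^s_1$ steps at a \emph{fixed} nest: a chain $(B,P)\geq^s_1(B_1,P_1)\geq^s_1\cdots$ in which every box $1$-joins the next is perfectly compatible with dependence control (this is precisely the harmless situation the criterion is designed to allow), and its length is only bounded by $|Can_{s-1}(E_G)|$, via the strong acyclicity lemma (Lemma \ref{lemma_strong_acyclicity}), independently of $N_s(B)$. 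So a per-jump count of choices with ``at most $2N_s(B)$ jumps'' does not yield the stated bound; if you count choices per jump you get an exponent equal to the chain length, not $2N_s(B)$. Your alternative explanation of the square (``two-sided nature of the injection argument'') is likewise not where the $2$ comes from.

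The paper's proof avoids this by proving the refined statement $|I_s((B,P),(e,Q))|\leq (D_G\cdot l\cdot|Can_{s-1}(E_G)|)^{N(B)}\cdot|Col_s(B,P)|$, where $l$ is the maximal length of a $\geq^s_1$ chain from $(B,P)$, and inducting on $l$ (well founded because $\geq^s_1$ is acyclic on \emph{potential} boxes). The recursive inequality $|I_s((B,P),(e,Q))|\leq 1+\sum_{(B,P)\geq^s_j(B',P')}j\cdot|I_s((B',P'),(e,Q))|$ is then split according to whether $N(B')=N(B)$ (forcing $j=1$, since $j\geq 2$ would give $B\succcurlyeq^s_2 B'$ and drop the nest) or $N(B')<N(B)$ (in which case the entry doors are colonies of $(B,P)$); the crucial combinatorial point, which your sketch only gestures at, is that the $\geq^s_1$ chain leading to a given colony is \emph{unique} (proved by running Lemma \ref{injection_lemma} backwards along the chain), so the colony sets of the same-nest successors together with the directly reachable colonies partition $Col_s(B,P)$ and absorb all the branching. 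The factor $2$ in the exponent of the stated lemma then appears only at the very end, from $l\leq|Can_{s-1}(E_G)|$, i.e.\ $(D_G\cdot l\cdot|Can_{s-1}(E_G)|)^{N(B)}\leq(D_G\cdot|Can_{s-1}(E_G)|)^{2N(B)}$. Your induction on nest could in principle be repaired along these lines, but only by importing exactly this chain-length parameter and the colony-uniqueness partition, which are the parts missing from your argument.
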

    \begin{proof}
      In fact we will prove $|I_s((B,P),(e,Q))| \leq (D_G.l.|Can_{s-1}(E_G)|)^{N(B)} \cdot |Col_s(B,P)|$ with $l$ the maximum length of a $\geq^s_1$ sequence beginning by potential box $(B,P)$. The announced result is then immediate because, if we suppose there is a $\geq^s_1$ path longer than $|Can_{s-1}(E_G)|$, then there are two contexts $C_1=(\sigma(B'),Q,[\oc_t],+)$ and $C_2(\sigma(B'),Q',[\oc_{t'}],+)$ such that $C_1 \mapsto^+_s C_2$ and $C_1 \sim_s C_2$ which is impossible because of the strong acyclicity lemma.

      We will prove this by induction on $l$, the depth of $(B,P)$ in terms of the $\geq^s_1$ relation. The relation $\geq^s_1$ may be cyclic on boxes. But it is acyclic on potential boxes (by the strong acyclicity lemma). So the induction is well founded. Explanations of the calcultations done between each line are given at the end of the calculation.
      \begin{align}
          |I_s((B,P),(e,Q))| &\leq  1+ \sum_{(B,P) \geq^s_j (B',P')}j.|I_s((B',P'),(e,Q))| \label{depcont_i_1}\\
          |I_s((B,P),(e,Q))| & \leq 1+ \sum_{(B,P) \geq^s_j (B',P')} j.\left ( D_G.(l-1).|Can_{s-1}(E_G)| \right)^{N(B')}|Col_s(B',P')| \label{depcont_i_2}\\
          |I_s((B,P),(e,Q))| & \leq 1 + A + B
      \end{align}

      With $A$ and $B$ quantities defined and bound below (the size of the expressions made it impossible to keep both on the same line). The previous calculations may need some explanations. 
      
      The first line is given by a counting argument. Let us choose an itinerary from $(B,P)$ to $(e,Q)$. Either it has some $\hookrightarrow$ rule in it, jumping over a $(B',P')$ box, or it goes directly to $(e,Q)$. In the first case we still have to choose an itinerary from $(B',P')$ to $(e,Q)$, in the second case there is only $[]$.

      The second line of the previous calculus was obtained by induction hypothesis. The last line is obtained by the separation of the set of potential boxes $(B',P')$ directly joined by $(B,P)$ into two disjoint sets.
      
      \begin{align}
        A &= \sum_{\substack{(B,P) \geq^s_j (B',P')\\ N(B')=N(B)}}j \left( D_G.(l-1).|Can_{s-1}(E_G)| \right)^{N(B')}|Col_s(B',P')| & \notag \\
        & \leq\sum_{\substack{(B,P) \geq^s_j (B',P')\\(\sigma_i(C),Q) \in Col_s(B',P')\\ N(B')=N(B)}} j \left(D_G.(l-1).|Can_{s-1}(E_G)|\right)^{N(B')} & \notag \\
        & \leq\sum_{\substack{(B,P) \geq^s_1 (B',P')\\(\sigma_i(C),Q) \in Col_s(B',P')\\ N(B')=N(B)}} \left(D_G.(l-1).|Can_{s-1}(E_G)|\right)^{N(B')} & \text{ if $j$ was $>1$, $N(B')<N(B)$} \notag \\
        & \leq \sum_{\substack{(B,P) \geq^s_1 (B',P')\\(\sigma_i(C),Q) \in Col_s(B',P')\\ N(B')=N(B) \\ (\sigma_i(C),Q) \in Col_s(B,P)}}\left(D_G.(l-1).|Can_{s-1}(E_G)|\right)^{N(B)} & \label{ext_col_a}
      \end{align}
      
      To obtain the inequality \ref{ext_col_a}, we notice that if $(\sigma_i(C),Q) \in Col_s(B',P')$, there is a sequence $(B',P') \geq^1 (B_1,P_1) \geq^1 \cdots \geq^1 (B_n,P_n)$ and $(\sigma(B_n),P_n,[!_t],+) \rightsquigarrow^* (\sigma_i(C),Q,[\oc_{\sige}],-)$. We can extend the sequence in the following way: $(B,P) \geq^1 (B',P') \geq^1 (B_1,P_1) \geq^1 \cdots \geq^1 (B_n,P_n)$. We know that $N(B)=N(B')$, so the condition on the nests is also respected and this sequence proves that $(\sigma_i(C),Q) \in Col_s(B,P)$.

      \begin{align}
        B & = \sum_{\substack{ (B,P) \geq^s_j (B',P')\\ N(B') < N(B)}}j\left(D_G.(l-1).|Can_{s-1}(E_G)|\right)^{N(B')}|Col_s(B',P')| & \notag \\
        & \leq \sum_{\substack{ (B,P) \geq^s_j (B',P')\\N(B')<N(B)}}j\left(D_G.(l-1).|Can_{s-1}(E_G)|\right)^{N(B)-1}|Col_s(B',P')| &  \notag \\
        & \leq \sum_{\substack{(B,P) \geq^s_j (B',P')\\N(B')<N(B)}} j \left(D_G.(l-1).|Can_{s-1}(E_G)|\right)^{N(B)-1}|Can_{s-1}(E_G)| \notag\\
        B & \leq \sum_{\substack{(B,P) \geq^s_j (B',P')\\(\sigma_i(B'),P')\in Col_s(B,P)}} \left(D_G.(l-1).|Can_{s-1}(E_G)|\right)^{N(B)} \label{final_b}
      \end{align}
      
      %To obtain the inequality \ref{pot_smoinsun_b}, we need to prove that if $(\sigma_i(C),Q_1) \in Col_s(B',P')$, $(\sigma_i(C),Q_2) \in Col_s(B',P')$ and $Q_1^{/s} = Q_2^{/s}$ then $Q_1=Q_2$. By definition of $Col_s(\_,\_)$, there exists $t_1$ and $t_2$ such that $(\sigma(B'),P',[\oc_{t_1}],+) \mapsto^* (\sigma_i(C),Q,[\oc_{\sige}],-)$ and $(\sigma(B'),P',[\oc_{t_1}],+) \mapsto^* (\sigma_i(C),Q,[\oc_{\sige}],-)$ the strata of the boxes corresponding to the $\hookrightarrow$ along the path are inferior or equal to $s$. So, thanks to Lemma \ref{injection_lemma}, we can follow the paths backward to show that they come from contexts $(\sigma(B'),P'_1,[!_t],+)$ and $(\sigma(B'),P'_2,[!_t],+)$, with $P'_1 \sim_s P'_2$. If $P'_1  \neq P'_2$ we can show that the proof-net is cyclic. Thus $P'_1=P'_2=P'$. The signatures in the two initial contexts are equal, so the paths are equal, so $Q=Q'$. Notice that there are no choices to do on the auxiliary door during backward $\hookrightarrow$ steps, because $\forall i, N(B_i)=N(B')$.

      Now, it is possible to assemble the two inequalities. Indeed, there can be considered as sums of the same term $(D_G (l-1).|Can_{s-1}(E_G)|)^{N(B)}$ over disjoint sets. More precisely, we will show that the following set is a partition of $Col_s(B,P)$:
      \begin{equation*}
        \left\{
        \Set{(\sigma_i(C),Q) \in Col_s(B,P)}{(B,P) \geq^s_1 (C,Q) }
        \right\} 
        \cup
        \bigcup_{\substack{(B,P) \geq^s_1 (B',P')\\N(B')=N(B)}}\left\{\Set*{(\sigma_i(C),Q) \in Col_s(B,P)}{
          (\sigma_i(C),Q) \in Col_s(B',P')
        } \right\}
      \end{equation*}

      It is enough to prove that for all $(\sigma_i(C),Q) \in Col_s(B,P)$, the sequence $(B,P) \geq^s_1 (B_1,P_1) \geq^s_1 \cdots \geq^s_1 (B_n,P_n)$ such that $(\sigma(B_n),P_n,[\oc_t],+) \rightsquigarrow_s (\sigma_i(C),Q,[\oc_{\sige}],-)$ is unique.

      Suppose $(\sigma_i(C),Q) \in Col_s(B,P)$, let us consider two sequences $(B,P) \geq^s_1 (B_1,P_1) \geq^s_1 \cdots \geq^s_1 (B_n,P_n)$ and $(B,P) \geq^s_1 (B_1',P'_1) \geq^s_1 \cdots \geq^s_1 (B'_{n'},P'_{n'})$ such that there exists $t,t' \in Sig$ such that $(\sigma(B_n),P_n,[\oc_t],+) \rightsquigarrow_s (\sigma_i(C),Q,[\oc_{\sige}],-)$ and $(\sigma(B'_{n'}),P'_{n'},[\oc_{t'}],+) \rightsquigarrow_s (\sigma_i(C),Q,[\oc_{\sige}],-)$. Then, by Lemma \ref{injection_lemma}, we have $(B'_{n'},P'_{n'})=(B_n,P_n)$. We know that $N_s(B_n,P_n)=N_s(B'_{n'},P'_{n'})=N_s(B,P)$. So if we write $\sigma_{i_n}(B_n)$ and $\sigma_{i'_{n'}}(B_n)$ the auxiliary doors such that, respectively, $(\sigma(B_{n-1}),P_{n-1},[\oc_{t_{n-1}}],+) \rightsquigarrow^*_s (\sigma_{i_n}(B_n),P_n,[\oc_{\sige}],-)$ and $(\sigma(B'_{n'-1}),P'_{n'-1},[\oc_{t'_{n'-1}}],+) \rightsquigarrow^*_s (\sigma_{i'_{n'}}(B_n),P_n,[\oc_{\sige}],-)$. Then $i_n=i'_{n'}$. So, using Lemma \ref{injection_lemma}, we can show that $(B_{n-1},P_{n-1})=(B'_{n'-1},P'_{n'-1})$. By induction, we show that for every $i \leq min(n,n')$, we have $(B_{n-i},P_{n-i})=(B'_{n'-i},P'_{n'-i})$. So one of the sequence is a suffix of the other. Moreover, we know that $\geq^s_1$ is acyclic, so the sequences $(B,P) \geq^s_1 (B_1,P_1) \geq^s_1 \cdots \geq^s_1 (B_n,P_n)$ and $(B,P) \geq^s_1 (B_1',P'_1) \geq^s_1 \cdots \geq^s_1 (B'_{n'},P'_{n'})$ are equal.
      
      \begin{align}
        |I_s((B,P),(e,Q))| & \leq 1+\sum_{(\sigma_i(B'),P') \in Col_s(B,P)}\left(D_G.(l-1).|Can_{s-1}(E_G)|\right)^{N(B)} \label{depcont_i_8}\\ 
        |I_s((B,P),(e,Q))| & \leq 1+\left(D_G.(l-1).|Can_{s-1}(E_G)|\right)^{N(B)}|Col_s(B,P)| \label{depcont_i_9}\\
        |I_s((B,P),(e,Q))| & \leq \left(D_G.l.|Can_{s-1}(E_G)|\right)^{N(B)}|Col_s(B,P)| \label{depcont_i_10}
      \end{align}
    \end{proof}

    \subsection{Digging and dependence control}
    \label{subsection_depcontrol_digging}
    As we said in the previous subsection, if there are no $\wn N$ links, $(\sigma(B),P,[\oc_t],+) \rightsquigarrow^*_s (e,P,[\oc_{\sige}],-)$ and $(\sigma(B),P,[\oc_u],+) \rightsquigarrow^*_s (e,P',[\oc_{\sige}],-)$ with $P^{/s-1}=P'^{/s-1}$, then $t=u$. To understand why the $\wn N$ links break this properties, we can take an example in Figure \ref{exp2}. We have $(\sigma(C),[],[\oc_{\sign(\sigl(\sige),\sign(\sigr(\sige),\sige))}],+) \rightsquigarrow_0 (e,[\sign(\sigr(\sige),\sige)],[\oc_{\sige}],-)$ and $(\sigma(C),[],[\oc_{\sign(\sigl(\sige),\sign(\sigl(\sige),\sige))}],+) \rightsquigarrow_0 (e,[\sign(\sigl(\sige),\sige)],[\oc_{\sige}],-)$ and $[\sign(\sigl(\sige),\sige)]^{/-1}=[\sign(\sigr(\sige),\sige)]^{/-1}=[\sige]$. However $\sign(\sigl(\sige),\sign(\sigl(\sige),\sige)) \neq \sign(\sigl(\sige),\sign(\sigr(\sige),\sige))$. If we follow the paths backwards we see that the crucial step is $(\sigma_1(B),[],[\oc_{\sign(\sigl(\sige),\sige)};\oc_{\sign(\sigr(\sige),\sige)}],-) \leftsquigarrow (f,[],[\oc_{\sign(\sign(\sigl(\sige),\sige), \sign(\sigr(\sige))}],-)$ where a difference on the second trace element (which comes from a box $B$ of same strata than $C$) becomes a difference on the first trace element, which will correspond to the copy. The paths of  $\sign(\sign(\sigl(\sige),\sige), \sign(\sigl(\sige))$ and $\sign(\sign(\sigl(\sige),\sige), \sign(\sigr(\sige))$ may be the same, but their simplifications are different and have different paths. 

    So if we choose the $\mapsto_{-1}$-potential edge $(e,[\sige])$ and the itinerary $[]$, there are as many $\mapsto_0$-copies of $(\sigma(C),[])$ $t$ going through a context of the shape $(e,P,[\oc_{\sige}],-)$ with $P^{-1}=[\sige]$ and $I((\sigma(C),[],[\oc_t],+),(e,P,[\oc_{\sige}],-))=[]$ as there are $\mapsto_0$-copies of $(\sigma(B),[])$. Similarly, if we choose the $\mapsto_{-1}$-potential edge $(f,[])$ and the itinerary $[]$, there are as many $\mapsto_0$-copies of $(\sigma(B),[])$ $t$ going through a context of the shape $(f,P,[\oc_{\sige}],-)$ with $P^{-1}=[\sige]$ and $I((\sigma(C),[],[\oc_t],+),(f,P,[\oc_{\sige}],-))=[]$ as there are $\mapsto_0$-copies of $(\sigma(A),[])$. Let us notice that if the number of $\mapsto_s$-copies of $(B,P)$ depend on the number of $\mapsto_s$-copies of $(C,Q)$ in this manner, then $(B,P) \succcurlyeq_k^s (C,Q)$ with $k \geq |C_{s}(C,Q)|$. For example $(C,[]) \succcurlyeq^0_{2} (B,[])$ and $(B,[])  \succcurlyeq^0_1 (A,[])$.

    So if we fix a $\mapsto_{s-1}$ potential edge $(e,Q_{s-1})$, an itinerary $I$, and a $\mapsto_s$-copy $t_i$for every potential box $(C_i,R_i)$ containing $(e,Q)$ such that $B \succcurlyeq^s_2 C_i$, then there are at most one $\mapsto_s$-copy $t$ of $(B,P)$ such that $(\sigma(B),P,[\oc_t],+) \mapsto^s (e,Q,[\sige],-)$, $I((\sigma(B),P,[\oc_t],+),(e,Q,[\sige],-))$, $Q^{/s-1}=Q_{s-1}$ and the exponential signature of $Q$ corresponding to $C_i$ is equal to $t_i$. So we bound the number of $\mapsto_s$-copies of a box $B$ by $\mapsto_{s-1}$-copies of some boxes and $\mapsto_{s}$-copies of boxes $C$ such that $B \succcurlyeq_2^s C$. If we make an induction on $N(B)$ inside an induction on $s$, we can bound the number of $\mapsto_s$-copies of any potential box.

    To prove this, we will keep finely track of the exponential signature which ``come from'' a box of nest greater than $n$ during a $\mapsfrom$ path. To do this we will need some kind of pointer to refer to a precise location in an exponential signature in a context $C$. This is exactly what $Pos(C)$ will be. First we define the notion of positions on a single exponential signature. An element of $Pos(t)$ represents the path from the root of the exponential signature (viewed as a tree) to the location we want to point to. A $0$ means ``take the left branch (or the only branch if there is only one)'', a $1$ means ``take the right branch''.

    \begin{definition}\label{def_positions}
      Let $t$ be an exponential signature, the set of positions of $t$ (written $Pos(t)$) is defined by induction by: 
      \begin{itemize}
      \item $Pos(\sige)=\{[]\}$
      \item $Pos(\sigl(u))=Pos(\sigr(u))=Pos(\sigp(u))=\{[]\} \cup \Set{[0]@p}{p \in Pos(u)}$
      \item $Pos(\sign(u,v))= \{[]\} \cup \Set{[0]@p}{p \in Pos(u)} \cup \Set{[1]@p}{p \in Pos(v)}$
      \end{itemize}
      \label{def_tbarp}Let $p \in Pos(t)$, $t_{|p}$ is the exponential signature defined by induction on $p$ by: $t_{|[]}=t$, $\sigl(u)_{|[0]@q}=\sigr(u)_{|[0]@q}=\sigp(u)_{|[0]@q}=\sign(u,v)_{|[0]@q}=u_{|q}$ and $\sign(u,v)_{|[1]@q}=v_{|q}$.
    \end{definition}
      
    We also write $Pos$ the set of lists of $0$ and $1$. Now that we can point to a precise location in an exponential signature, we can replace the exponential signature at this place by another exponential signature. 

    \begin{definition}
      \label{def_parallelpositions}If $p_1,…p_n$ are parallel positions of $t$ (i.e. for every $1 \leq i < j \leq n$, if there are no $q$ such that $p_i=p_j@q$ or $p_j=p_i@q$), and $f$ a mapping from $\{p_1,\cdots,p_n\}$ to $Sig$, then we define $f_0$ as the function $p \mapsto f([0]@p)$ and $f_1$ as the function $p \mapsto f([1]@p)$. Then, we define $t[f]$ by:\label{def_tcrochetf}
      \begin{itemize}
      \item $t[\varnothing]=t$
      \item $t[\{[]\mapsto u\}]=u$
      \item $\sigl(t)[f]=\sigl(t[f_0])$, $\sigr(t)[f]=\sigr(t[f_0])$, $\sigp(t)[f]=\sigp(t[f_0])$
      \item $\sign(t,u)[f]=\sign(t[f_0],u[f_1])$
      \end{itemize}
    \end{definition}

    Then, we define those notions on contexts. A position in a context must first explain if the location we want to point to is in the potential (we will then set the first component to $POT$) or in the trace (we will then set the first component to $TRA$). Then we have to point to some exponential signature in the potential or some trace element. We do so with an integer representing the indice of the object in the list it belongs to. Finally we have to precise the location inside the exponential signature $t$ we defined by the two first components. We do so with some element of $Pos(t)$.

    \begin{definition}[exponential position]\label{def_contextpositions}
      Let $C=(e,[p_1;\cdots;p_k],[t_1;\cdots;t_n],p)$ be a context of a proof-net $G$. An exponential position of $C$ is:
      \begin{itemize}
      \item Either $(POT,i,q)$ with $1 \leq i \leq k$ and $q \in Pos(p_i)$.
      \item Either $(TRA,i,q)$ with $1 \leq i \leq n$ such that $t_i$ is either of the shape $t_i=\oc_t$ or $t_i=\wn_t$, and $q \in Pos(t)$.
      \end{itemize}
      The set of the exponential positions of $C$ is writen $Pos(C)$. We define $C_{|p}$ as \label{def_cbarp}
      \begin{itemize}
      \item $(e,[p_1;\cdots;p_k],[t_1;\cdots;t_n],p)_{|(POT,i,q)}=(p_i)_{|q}$
      \item $(e,[p_1;\cdots;p_k],[t_1;\cdots;t_n],p)_{|(TRA,i,q)}=(t)_{|q}$ (with $t_i=\oc_t$ or $t_i=\wn_t$)
      \end{itemize}
      \end{definition}
    
    \begin{definition}\label{def_ccrochetf}
      If $q_1,\cdots,q_m$ are parallel positions of $(e,[p_1;\cdots;p_k],[t_1;\cdots;t_n],p)$ (i.e. for every $i \neq j$, either the two first components of $q_i$ and $q_j$ design different exponential signatures or their third components are parallel) and $f$ is a mapping from $\{q_1,\cdots,q_m\}$, then we define $(e,[p_1;\cdots;p_k],[t_1;\cdots;t_n],p)[f]$ as the context $(e,[p_1[\{x \mapsto f(POT,1,x)\}];\cdots;p_k[\{x \mapsto f(POT,k,x)\}]],[t_1[\{x \mapsto f(TRA,1,x)\}];\cdots;t_n[\{x \mapsto f(TRA,n,x)\}]],p)$.
    \end{definition}

    For example, if we set $C=(e,[\sige;\sign(\sigl(\sige),\sige)],[\parr_r;\wn_{\sigr(\sige)}],+)$, then 
    \begin{equation*}
      Pos(C) = \{(POT,1,[]), (POT,2,[]), (POT,2,[0]), (POT,2,[0;0]), (POT,2,[1]), (TRA,2,[]), (TRA,2,[0])\}
    \end{equation*}
    We also have $C_{|(POT,2,[0])}=\sigl(\sige)$ and $C[\{(TRA,2,[0])\mapsto \sigr(\sige)\}]=(e,[\sige;\sign(\sigl(\sige),\sige)],[\parr_r;\wn_{\sigr(\sigr(\sige))}],+)$.

    We need a last notation. Let us suppose that $C=(e,P,[\oc_{\sign(t,u)}],-) \mapsto (f,P,[\oc_t;\oc_u],-)=C'$ (crossing a $\wn N$ link upwards) and that for every $v \in Sig$, $(f,P,[\oc_v],-) \mapsto^* E$. Then, in this subsection, we will want to state this property without mentioning the contents of $C'$ ($f$, $P$, $t$, $u$ and $-$). A way of saying it is: ``If we restrict the trace of $C'$ to its last trace element and replace its exponential signature by $v$, giving us a context $D'$, then $D' \mapsto^* E$''. If we look at the context $C$, the equivalent property is that for every $v \in Sig$, $(e,P,[\oc_{\sigp(v)}],-) \mapsto^* E$. Making this statement without mentioning the contents of $C$ will be troublesome. So, to make such a statement in a general, yet concise, manner we will define an operation $(~)\downarrow^t_p$ first on signature and then on contexts such that $(e,P,[\oc_{\sign(t,u)}],-)\downarrow^v_{(TRA,1,[2])}=(e,P,[\oc_{\sigp(v)}],-)$. This will allow us to simply state ``$\forall v \in Sig, C \downarrow^v_{(TRA,1,[2])} \mapsto_s E$''. In general $C^t_p$ represents the context obtained by replacing the exponential signature at position $p$ by $v$, replacing the $\sign(t_1,t_2)$ above it by $\sigp(t_2)$, and (if $p$ refers to a trace element) delete the trace elements on the left of $p$.

    \begin{definition}\label{def_downarrowtp}
      Let $t \in Sig$ and $p \in Pos(t)$, we define $t\downarrow_{p}$ as:
      \begin{itemize}
      \item If there is a prefix $q'$ of $q$ such that $t_{|q'}$ is of the shape $\sign(t,u)$, let us consider the longest such $q'$. Then $t \downarrow_p=\sigp(u)$.
      \item Else, $t \downarrow_p=t$
      \end{itemize}
      
      Let $e \in B_{\partial(e)} \subset \cdots \subset B_1$, $C=(e,[p_1;\cdots;p_k],[t_1;\cdots;t_n],p)$, $p=(X,i,q) \in Pos(C)$ and $t \in Sig$, we define $C\downarrow_{p}^t$ as:
      \begin{itemize}
      \item If $p=(TRA,i,q)$ and $t_i=\oc_u$, then $C\downarrow_p^t= (e,[p_1; \cdots ;p_k],[\oc_{u[q \mapsto t] \downarrow_q};t_{i+1};\cdots;t_n],p)$
      \item If $p=(TRA,i,q)$ and $t_i=\wn_u$, then $C\downarrow_p^t= (e,[p_1; \cdots ;p_k],[\oc_{u[q \mapsto t] \downarrow_q};t^\perp_{i+1};\cdots;t^\perp_n],p^\perp)$
      \item If $p=(POT,i,q)$, then $C\downarrow_p^t= (\sigma(B_i),[p_1; \cdots ;p_{i-1}],[\oc_{p_i[q \mapsto t] \downarrow_q}],+)$
      \end{itemize}
    \end{definition}

    Now, we will prove the core lemma of this subsection. We state that if $C \mapsto^n_s (e,P,[\oc_{\sige}],-)$ then we can find the exponential signatures $p_i$ of $P=[p_1;\cdots;p_{\partial(e)}]$ inside $C$ in a position $\phi(i)$. And replacing $p_i$ by $p'_i$ in $C$ lead to a context $(e,[p_1;\cdots;p_{i-1};p'_i;p_{i+1};\cdots;p_{\partial(e)}],[\oc_{\sige}],-)$.

    \begin{lemma}\label{lemma_digging_depcont}
      Let $C$ be a context of $G$, $e \in B_{\partial(e)} \subset \cdots \subset B_1$, $P=[p_1;\cdots;p_{\partial(e)}] \in L_s(e)$ and $C_e=(e,P,[\oc_{\sige}],-)$. If $C \mapsto_s^n C_e$ then there exists an injective mapping $\phi$ from $\{1,\cdots,\partial(e)\}$ to $Pos(C)$ and a mapping $\psi$ which, to $1 \leq i \leq \partial(e)$ associates an element of $Pos(p_i)$ such that:
      \begin{itemize}
      \item For every contexts $C'$, $P'=[p'_1;\cdots;p'_{\partial(e)}] \in L_s(e)$ with $P^{/s-1}=P'^{/s-1}$ and $C_e'=(e,P',[\oc_{\sige}],-)$, then
        \begin{equation*}
          \left(C' \mapsto^n_s C_e' \text{ and } I(C',C_e')=I(C,C_e) \right) \Leftrightarrow C[\{\phi(i) \mapsto (p'_i)_{|\psi(i)}\}]=C'
        \end{equation*}
      \item For every $1\leq i \leq \partial(e)$ there exists a potential box $(D_i,Q_i)$ such that:
        \begin{itemize}
        \item Either $\exists j, \forall t \in Sig, C\downarrow_{\phi(i)}^t \mapsto^*_s (\sigma_j(D_i),Q_i,[\oc_{t}],-)$.
        \item Or $\phi(i)=(\_,\_,[])$ does not correspond to a $\oc$ trace element and $\forall t \in Sig,(\sigma(D_i),Q_i,[\oc_t],+) \rightsquigarrow^* C\downarrow_{\phi(i)}^t$.
        \end{itemize}
      \end{itemize}
    \end{lemma}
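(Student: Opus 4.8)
The proof proceeds by induction on $n$, the length of the reduction $C \mapsto_s^n C_e$, and is guided by reading the path backwards (following an antipath from $C_e$ to $C$). The base case $n=0$ forces $C = C_e$, so $P$ is a $\mapsto_{s-1}$-canonical potential $[~]$ if $\partial(e)=0$, and in general each $p_i$ is literally a component of $C$; we set $\phi(i) = (POT,i,[~])$ and $\psi(i) = [~]$, and the witnessing potential boxes come from the fact that each $P \in L_s(e)$ means $p_i \in C_s(B_i,[p_1;\cdots;p_{i-1}])$, so by Theorem~\ref{charac_final_contexts} (and its relativized version for $\mapsto_s$, which follows from the cut-simulation structure) there are final contexts reached from the principal doors.

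For the inductive step I would make a case analysis on the first $\mapsto_s$-rule used in $C \mapsto_s D \mapsto_s^{n-1} C_e$, applying the induction hypothesis to $D$ to get $\phi_D$ and $\psi_D$ and then pulling them back along the single step $C \mapsto_s D$. Most cases are bookkeeping: crossing a multiplicative or quantifier link, or crossing an exponential link that neither creates nor destroys an $!$/$?$ signature stack, only shifts positions in a controlled way, and Lemma~\ref{injection_lemma} (the injection lemma) together with Lemma~\ref{lemma_skeletons_cons} guarantees that the antipath from $C_e'$ forced by $C[\{\phi(i)\mapsto(p'_i)_{|\psi(i)}\}]$ stays synchronized with the antipath from $C_e$. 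The genuinely delicate cases are: (i) entering a box by an auxiliary door upwards, $C=(e',P''.u,[!_{t}]@T,-)\rightsquigarrow(\sigma_j(B'),P'',[!_t]@T.?_u,-)$ read backwards, where an exponential signature leaves the potential and enters the trace — here $\phi(i)$ for the index $i$ corresponding to that box must be redirected from a $POT$-position to a $TRA$-position; (ii) the $\hookrightarrow$ step over a box $B'$, read backwards as an antipath that leaves $B'$ by its principal door, which is exactly where the itinerary data $I(C,C_e)$ records the index $j$ of the auxiliary door and is needed to keep $\phi$ well-defined; (iii) crossing a $?N$ link upwards, $C=(f,P,[!_{\sign(t_1,t_2)}]@T,-)\rightsquigarrow(g,P,[!_{t_1};!_{t_2}]@T,-)$, which is the one place where a single exponential signature splits into two trace elements: here the position $\phi_D(i)$ living in the second copy $!_{t_2}$ must be translated into a position inside the combined signature $\sign(t_1,t_2)$, and the operation $(~)\downarrow^t_p$ is precisely designed so that the second clause of the lemma ($C\downarrow^t_{\phi(i)}$ with the $\sign$ above $p$ collapsed to $\sigp$) transports correctly; the first branch ($t_1$) of such a $\sign$ is exactly the material that "comes from a box of higher nest", which is why the witnessing $(D_i,Q_i)$ can be taken with $B \succcurlyeq_2^s D_i$ when we later instantiate $B = \sigma$ of the box we are bounding.

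The equivalence $\left(C'\mapsto^n_s C_e' \text{ and } I(C',C_e')=I(C,C_e)\right)\Leftrightarrow C[\{\phi(i)\mapsto(p'_i)_{|\psi(i)}\}]=C'$ is then proved in both directions using determinism of $\mapsto_s$ going forward and bideterminism going backward: given $P'$ with $P'^{/s-1}=P^{/s-1}$, the antipath from $C_e'$ with the prescribed itinerary is forced step-by-step to mirror the antipath from $C_e$, and at each step the only freedom is in the exponential signatures which are exactly the $(p'_i)_{|\psi(i)}$ substituted at the $\phi(i)$; conversely, substituting those values into $C$ and running $\mapsto_s^n$ forward lands at $C_e'$ because the skeletons agree (Lemma~\ref{lemma_skeletons_cons}) so no rule is blocked, and the underlying formula is defined throughout (Lemma~\ref{lemma_underlying_mapsto}), and Lemmas~\ref{lemma_prec_left_right}–\ref{lemma_prec_right_left} control that the signatures propagate as expected. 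Injectivity of $\phi$ is maintained because at each step the single position that gets relocated or split stays disjoint from the others (parallel positions are preserved).

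The main obstacle is case (iii), the $?N$ step, combined with the requirement to exhibit the potential boxes $(D_i,Q_i)$ in the right form: one must check that after collapsing a $\sign(t_1,t_2)$ to $\sigp(t_2)$ via $\downarrow$, the antipath from $C\downarrow^t_{\phi(i)}$ still reaches a principal or auxiliary door of some genuine box, and that this box has the nest relationship needed downstream; handling the interaction of several nested $?N$ links along one antipath — so that $\phi(i)$ may need to be threaded through several collapsed $\sign$-nodes — is where the notation $t[f]$ and $t\downarrow_p$ from Definitions~\ref{def_positions}–\ref{def_downarrowtp} has to be manipulated carefully. I expect the bulk of the write-up to be an exhaustive but individually-routine verification of how $\phi$, $\psi$, and the witnesses transform under each of the roughly a dozen $\mapsto_s$ rules, with the $?N$, auxiliary-door, and $\hookrightarrow$ cases carrying all the real content.
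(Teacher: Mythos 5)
Your plan is essentially the paper's own proof: induction on the path length $n$, pulling back $\phi$ and $\psi$ from the successor context through a case-by-case analysis of the $\mapsto_s$ step, with the real content in the auxiliary/principal door crossings, the $\wn C$/$\wn N$ position threading via the $\downarrow$ operation, and the $\wn D$ case, and with the equivalence clause argued forward/backward exactly as you describe (via Lemma~\ref{injection_lemma} and the skeleton preservation). The only cosmetic difference is the base case, where the witnesses $(D_i,Q_i)=(B_i,[p_1;\cdots;p_{i-1}])$ satisfy the second disjunct trivially in zero $\rightsquigarrow$-steps, so the detour through Theorem~\ref{charac_final_contexts} you mention is unnecessary.
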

    \begin{proof}
      We will prove the result by induction on $n$ (the length of the path from $C$ to $C_e$). 
      \begin{itemize}
      \item If $n=0$, then we can consider $\phi: i \mapsto (POT,i,[])$ and $\psi: i \mapsto []$. 
        \begin{itemize}
        \item Let us suppose that $P^{/s-1}=P'^{s-1}$, $C' \mapsto^0_s C'_e$ and $I(C',C'_e)=I(C,C_e)$.  Then, by definition of $(~)^{/s-1}$, $\forall i, p_i=p'_i$. So $C_e[\{(POT,i,[]) \mapsto p'_i\}]=C'_e$. Moreover, we know that $C=C_e$ and $C'=C'_e$. So $C[\{(POT,i,[]) \mapsto p'_i\}]=C'$. Finally $\phi(i)=\{(POT,i,[])\}$ and $(p'_i)_{|\psi(i)}=(p'_i)_{[]}=p'_i$ so $C[\{\phi(i) \mapsto (p'_i)_{|\psi(i)}\}]=C'$.
        \item Let us suppose that $P^{/s-1}=P'^{/s-1}$ and $C[\{\phi(i) \mapsto (p'_i)_{|\psi(i)} \}]=C'$. We defined $\phi(i)$ as $(POT,i,[])$ and $\psi(i)=[]$ so $C[\{(POT,i,[]) \mapsto p'_i\}]=C'$. We know that $C \mapsto^0_s C_e$, so $C_e[\{(POT,i,[]) \mapsto p'_i\}]=C'$. Moreover, $P^{/s-1}=P'^{/s-1}$ so for every $B_i$, $p_i=p'_i$, thus $C_e[\{(POT,i,[]) \mapsto p'_i\}]=C'_e$. By transitivity of equality, $C'=C'_e$ so $C' \mapsto^s_0 C'_e$ and $I(C,C_e)=I(C',C'_e)=[]$.
        \item Let $1 \leq i \leq \partial(e)$, we set $(D_i,Q_i)=(\sigma(B_i),[p_1;\cdots;p_{i-1}])$. Let $t \in Sig$, then $C\downarrow^t_{\phi(i)}=C\downarrow^t_{(POT,i,[])}=(\sigma(B_i),[p_1;\cdots;p_{i-1}],[\oc_{t}],-)$ so $(\sigma(D_i),Q_i,[\oc_t],+) \rightsquigarrow^0 C\downarrow^t_{\phi(i)}$.
        \end{itemize}
      \item If $n>0$, then $C \mapsto_s C_1 \mapsto^n_s C_e$. So, the hypothesis of induction gives us mappings $\phi_1$ from $\{1,\cdots,\partial(e)\}$ to $Pos(C_1)$ and $\psi_1$ from $\{1,\cdots,\partial(e)\}$ to $Pos$. We will transform it into mappings $\phi$ from $\{1,\cdots,\partial(e)\}$ to $Pos(C)$ and $\psi$ from $\{1,\cdots,\partial(e)\}$ to $Pos$. This transformation will depend on the $\mapsto_s$ step between $C$ and $C_1$.
        \begin{itemize}
        \item For the steps which are not involved with the $\oc$ and $\wn$ exponentials (crossing a $cut$, $ax$, $\parr$, $\otimes$, $\forall$, $\exists$ or $\S$ link), we set $\phi=\phi_1$ and $\psi=\psi_1$. All these cases being similar, we will only consider one of them: $C = (f,Q,T,+) \mapsto_s (g,Q,T.\parr_l,+)=C_1$ (crossing a $\parr$ link downwards). 
          \begin{itemize}
          \item Let us suppose that $P^{/s-1}=P'^{s-1}$, $C' \mapsto_s C'_1 \mapsto_s^{n-1} C'_e$ and $I(C',C'_e)=I(C,C_e)$.  We have $I(C_1,C_e)=I(C,C_e)$ (because the step between $C$ and $C_1$ is not a $\hookrightarrow$ step). We also have $C_1 \sim_s C_1'$ (Lemma \ref{injection_lemma} so the step from $C'$ to $C_1'$ is not a $\hookrightarrow$ so $I(C_1',C_e')=I(C',C'_e)$). So $I(C'_1,C'_e)=I(C_1,C_e)$. So, by induction hypothesis, $C_1[\{\phi_1(i) \mapsto (p'_i)_{|\psi_1(i)}\}]=C'_1$. We have $C'_1=(g,Q',T'.\parr_l,+)=C_1[\{\phi(i) \mapsto (p'_i)_{|\psi(i)}\}$, so $C'=(f,Q',T',+)=C[\{\phi(i) \mapsto (p'_i)_{|\psi(i)}\}]=C[\{\phi(i) \mapsto (p'_i)_{|\psi(i)}\}]$.
          \item Let us suppose that $P^{/s-1}=P'^{/s-1}$ and $C[\{\phi(i) \mapsto (p'_i)_{|\psi(i)} \}]=C'$. So $C'$ is of the shape $C'=(f,Q',T',+)$, let us set $C'_1=(f,Q',T'.\parr_l,+)=C_1[\{\phi(i) \mapsto (p'_i)_{|\psi(i)} \}]$. We can use the induction hypothesis, and we get that $C_1' \mapsto^{n-1}_s C_e'$ and $I(C'_1,C'_e)=I(C_1,C_e)$. Moreover, $C' \mapsto_s C'_1$, $I(C'_1,C'_e)=I(C',C'_e)$ and $I(C_1,C_e)=I(C,C_e)$. So $C' \mapsto_s^n C'_e$ and $I(C',C'_e)=I(C,C_e)$.
          \item Let $1 \leq i \leq \partial(e)$, then we take the same $(D_i,Q_i)$ as in the $C_1$ case
            \begin{itemize}
            \item If $\phi_1(i)$ corresponds to a $\oc$ trace element and there exists $j$ such that, $\forall t \in Sig, C_1\downarrow^t_{\phi_1(i)} \mapsto^*_s (\sigma_j(D_i),Q_i,[\oc_{t}],-)$, then for any $t \in Sig$, $C\downarrow^t_{\phi(i)} \mapsto_s C_1 \downarrow^t_{\phi_1(i)}$ so $C \downarrow^t_{\phi(i)} \mapsto^*_s (\sigma_j(D_i),Q_i,[\oc_t],-)$.
            \item If $\phi_1(i)$ corresponds to a $\wn$ trace element and there exists $j$ such that $\forall t \in Sig, C_1\downarrow^t_{\phi_1(i)} \mapsto^*_s (\sigma_j(D_i),Q_i,[\oc_{t}],-)$, then $C_1\downarrow^t_{\phi_1(i)} \mapsto_s C \downarrow^t_{\phi(i)}$. Let us suppose that $C_1 \downarrow^t_{\phi_1(i)} \mapsto_s^{\mathbf{0}} (\sigma_j(D_i),Q_i,[\oc_{t}],-)$, then either $C_1$ is equal to $(\sigma_j(D_i),Q_i,T_1@[\oc_{u}]@T_2,-)$ with $\phi(i)=(TRA,|T_1|+1,q)$ or $C_1$ is equal to $(\sigma_j(D_i),Q_i,T_1@[\wn_{u}]@T_2,+)$ with $\phi(i)=(TRA,|T_1|+1,q)$. The first case is ruled out because we supposed that $\phi(i)$ corresponds to a $\wn$ trace element, the second case is ruled out because we supposed that the step between $C$ and $C_1$ does not involve exponentials. So we have a contradiction, our supposition was false, so $C_1 \downarrow^t_{\phi_1(i)} \mapsto_s^{\mathbf{+}} (\sigma_j(D_i),Q_i,[\oc_{t}],-)$. So $C \downarrow^t_{\phi(i)} \mapsto_s^{*} (\sigma_j(D_i),Q_i,[\oc_{t}],-)$.
            \item If $\phi_1(i)$ corresponds to a $\wn$ trace element and $\forall t \in Sig, (\sigma(D_i),Q_i,[\oc_t],+) \rightsquigarrow^* C_1\downarrow^t_{\phi(i)}$. Then, for any $t \in Sig$, $C_1 \downarrow^t_{\phi_1(i)} \rightsquigarrow C \downarrow^t_{\phi(i)}$. So $(\sigma(D_i),Q_i,[\oc_t],+) \rightsquigarrow^* C \downarrow^t_{\phi(i)}$. 
            \item If $\phi_1(i)=(POT,\_,\_)$, then $C \downarrow^t_{\phi(t)} = C_1 \downarrow^t_{\phi_1(t)}$. So whatever the case we were in for $C_1$, we are in the same case for $C$.
            \end{itemize}
          \end{itemize}
          
        \item We will now consider the steps which cross $\wn C$, $\wn D$, $\wn N$, $\wn P$ or $\oc P$ links. In each case, we will only detail some part of the proofs. The parts that we do not detail are quite similar to the non-exponential cases described above.
          
          Let us suppose that $C=(\sigma_k(B),Q,T.\oc_{u},-) \mapsto_s (g,Q.u,T,-)= C_1$, crossing an auxiliary door upwards. We define $\psi=\psi_1$ and $\phi$ as a function almost equal to $\phi_1$, the only difference being that when $\phi_1(POT,i,[])$ corresponds to some position in $u$, we have to change the image to find the corresponding position in $u$ in the context $C$.
          \begin{equation*}
            \phi: i \mapsto \left\{ \begin{array}{l}(TRA,|T|+1,q)\text{, if }\phi_1(i)=(POT,|Q|+1,q)\\ \phi_1(i)\text{ otherwise}\end{array}\right.
          \end{equation*}
          Here the particular point is to prove that for every $1 \leq i \leq \partial(e)$ either there exists $j$ such that $\forall t \in Sig, C \downarrow^t_{\phi(i)} \mapsto^*_s (\sigma_j(D_i),Q_i,[\oc_{p_i}],-)$ or $\forall t \in Sig (\sigma(D_i),Q_i,[\oc_{t}],-) \rightsquigarrow^* C \downarrow^t_{\phi(i)}$. Let us consider some $1 \leq i \leq \partial(e)$. Then, by induction hypothesis, there exists $(D_i^1,Q_i^1)$ such that either there exists $j$ such that $\forall t \in Sig, C_1 \downarrow^t_{\phi_1(i)} \mapsto^*_s (\sigma_j(D_i^1),Q_i^1,[\oc_{t}],-)$ or $\phi_1(i)$ does not correspond to a $\oc$ trace element and $\forall t \in Sig (\sigma(D_i^1),Q_i^1,[\oc_{t}],-) \rightsquigarrow^* C_1 \downarrow^t_{\phi_1(i)}$. We set $(D_i,Q_i)=(D^1_i,Q^1_i)$ and make a disjunction over the case we are in,
          \begin{itemize}
          \item In the first case,
            \begin{itemize}
            \item If $\phi_1(i)= (POT,|Q|+1,q)$, we know that $C\downarrow^t_{\phi(i)}=C\downarrow^t_{(TRA,|T|+1,q)}= (\sigma_k(B),Q,[\oc_{u\downarrow^t_q}],-)$. So $C\downarrow^t_{\phi(i)} \mapsto_s (\sigma(B),Q,[\oc_{u \downarrow^t_q}],+)= C_1 \downarrow^t_{\phi_1(i)}$. Moreover, by hypothesis, $C_1 \downarrow^t_{\phi_1(i)} \mapsto^*_s (\sigma_j(D^1_i),Q^1_i,[\oc_t],-)$. So, by transitivity, $C \downarrow^t_{\phi(i)} \mapsto^*_s (\sigma_j(D_i),Q_i,[\oc_t],-)$.
            \item If $\phi_1(i) \neq (POT,|Q|+1,q)$, then as in the non-exponential cases, we have either $C \downarrow^t_{\phi(i)} \rightsquigarrow C_1 \downarrow^t_{\phi_1(i)}$ or $C_1 \downarrow^t_{\phi_1(i)} \rightsquigarrow C \downarrow^t_{\phi(i)}$ which gives us the expected result.
            \end{itemize}
          \item In the second case,
            \begin{itemize}
            \item If $\phi_1(POT,i,[])= (POT,|Q|+1,q)$, then $\forall t \in Sig, (\sigma(D_i^1),Q_i^1,[\oc_{u\downarrow^t_q}],+) \rightsquigarrow^* (\sigma(B),Q,[\oc_t],+)$. So $(D_i,Q_i)=(B,Q)$ and $q=[]$. So for any $t \in Sig$, $C \downarrow^t_{\phi(i)}= C \downarrow^t_{(TRA,|T|+1,q)} = (\sigma_k(B),Q,[\oc_t],-)= (\sigma_k(D_i),Q_i,[\oc_t],-)$. Thus, there exists some $j$ (precisely, $j=k$) such that $\forall t \in Sig, C \downarrow^t_{\phi(i)} \mapsto^0_s (\sigma_j(D_i),Q_i,[\oc_t],-)$.
            \item If $\phi_1(POT,i,[]) \neq (POT,|Q|+1,q)$, then $\forall t \in Sig, C_1 \downarrow^t_{\phi_1(i)} \rightsquigarrow C \downarrow^t_{\phi(i)}$, so $\forall t \in Sig, (\sigma(D_i),Q_i,[\oc_t],+) \rightsquigarrow^* C\downarrow^t_{\phi(i)}$.
            \end{itemize}
          \end{itemize}
          
        \item Let us suppose that $C=(g,Q.u,T,+) \mapsto_s (\sigma_k(B),Q,T.\wn_{u},+)$, crossing an auxiliary door downwards. We set $\psi=\psi_1$ and 
          \begin{equation*}
            \phi: i \mapsto \left\{ \begin{array}{l}(POT,|Q|+1,q)\text{, if }\phi_1(i)=(TRA,|T|+1,q)\\ \phi_1(i)\text{ otherwise}\end{array}\right.
          \end{equation*}
          Here the important point is in the case where we have $1 \leq i \leq \partial(e)$ such that $\phi_1(i)=(TRA,|T|+1,q)$.
          \begin{itemize}
          \item If there exists $(D_i^1,Q_i^1,j) \neq (B,Q,k)$ such that $\forall t \in Sig, C_1\downarrow^t_{\phi_1(i)} \mapsto^*_s (\sigma_j(D_i^1),Q_i^1,[\oc_t],-)$. Then we set $(D_i,Q_i)=(D^1_i,Q^1_i)$. For any $t \in Sig$, $C_1\downarrow^t_{\phi(i)}=(\sigma_k(B),Q,[\oc_{u\downarrow^t_q}],-) \mapsto_s^{\mathbf{+}} (\sigma_j(D_i),Q_i,[\oc_t],-)$ and $(\sigma_k(B),Q,[\oc_{u \downarrow^t_q}],-) \mapsto_s (\sigma(B),Q,[\oc_{u \downarrow^t_q}],+)= C \downarrow^t_{\phi(i)}$. So $C \downarrow^t_{\phi(i)} \mapsto^*_s (\sigma_j(D_i),Q_i,[\oc_t],-)$.
          \item If $\forall t \in Sig, C_1 \downarrow^t_{\phi_1(i)} \mapsto^* (\sigma_k(B),Q,[\oc_t],-)$. Then $q=[]$ (otherwise the proof-net would be cyclic). We set $(D_i,Q_i)=(D^1_i,Q^1_i)$. For any $t \in Sig$, $(\sigma(D_i),Q_i,[\oc_t],+) \mapsto^0 C \downarrow^t_{(POT,|Q|+1,[])}= C \downarrow^t_{\phi(i)}$.
          \item If there exists $(D^1_i,Q^1_i)$ such that $\forall t \in Sig, (\sigma(D_i^1),Q_i^1,[\oc_t],+) \rightsquigarrow^* C_1 \downarrow^t_{\phi_1(i)}$. Then, $q=[]$. We set $(D_i,Q_i)=(B,Q)$. For any $t \in Sig$, $(\sigma(D_i),Q_i,[\oc_t],+) \rightsquigarrow^0 (\sigma(B),Q,[\oc_t],+) = C \downarrow^t_{(POT,|Q|+1,[])}=C \downarrow^t_{\phi(i)}$.
          \end{itemize}

        \item Let us suppose that $C=(\sigma(B),Q,T.\wn_{u},-) \mapsto_s (g,Q.u,T,-)= C_1$, crossing a principal door upwards. We define $\psi=\psi_1$ and $\phi$ as a function almost equal to $\phi_1$, the only difference being that when $\phi_1(i)$ corresponds to some position in $u$, we have to change the image to find the corresponding position in $u$ in the context $C$.
          \begin{equation*}
            \phi: i \mapsto \left\{ \begin{array}{l}(TRA,|T|+1,q)\text{, if }\phi_1(i)=(POT,|Q|+1,q)\\ \phi_1(i)\text{ otherwise}\end{array}\right.
          \end{equation*}
          Let $1\leq i \leq \partial(e)$, then we set $(D_i,Q_i)=(D_i^1,Q^1_i)$. Let us notice that if $\phi_1(i)=(POT,|Q|+1,q)$, then $C\downarrow^t_{\phi(i)}=C_1 \downarrow^t_{\phi_1(i)}$.

        \item Let us suppose that $C=(g,Q.u,T,+) \mapsto_s (\sigma(B),Q,T.\oc_u,+)= C_1$, crossing a principal door downwards. We define $\psi=\psi_1$ and $\phi$ as a function almost equal to $\phi_1$, the only difference being that when $\phi_1(i)$ corresponds to some position in $u$, we have to change the image to find the corresponding position in $u$ in the context $C$.
          \begin{equation*}
            \phi: (POT,i,[]) \mapsto \left\{ \begin{array}{l}(POT,|Q|+1,q)\text{, if }\phi_1(i)=(TRA,|T|+1,q)\\ \phi_1(i)\text{ otherwise}\end{array}\right.
          \end{equation*}
          Let $1\leq i \leq \partial(e)$, then we set $(D_i,Q_i)=(D_i^1,Q^1_i)$. Let us notice that if $\phi_1(i)=(TRA,|T|+1,q)$, then $C\downarrow^t_{\phi(i)}=C_1 \downarrow^t_{\phi_1(i)}$.

        \item Let us suppose that $C=(\sigma_k(B),Q,[\oc_u],-) \mapsto_s (\sigma(B),Q,[\oc_u],-)$, jumping from an auxiliary door of a box to its principal door, then we define $\psi=\psi_1$ and $\phi=\phi_1$.

        \item Let us suppose that $C= (f,Q,T.\wn_{u},+) \mapsto_s (g,Q,T.\wn_{\sigl(u)},+)=C_1$, crossing a $\wn C$ link downwards.  The only difference between $\psi$ and $\psi_1$ is in the case of $\phi_1(i)=(TRA,|T|+1,[])$. In this case, there are no position of $C$ corresponding to  $\sigl(u)$. In this case, we will define $\psi(i)=psi_1(i)@[0]$. Then the only difference between $\phi$ and $\phi_1$ is that when $\phi_1(POT,i,[])$ refers to a position in $t$ in $C_1$, we have to delete the first $0$ so that $\phi(POT,i,[])$ corresponds to the same position in $C$.
          \begin{align*}
            \psi:& i \mapsto \left\{ \begin{array}{l}\psi_1(i)@[0]\text{, if }\phi_1(i)=(TRA,|T|+1,[])\\ \psi_1(i)\text{ otherwise}\end{array}\right.\\
            \phi:& i \mapsto \left\{ \begin{array}{l}(TRA,|T|+1,q)\text{, if }\phi_1(i)=(TRA,|T|+1,[0]@q)\\ \phi_1(i)\text{ otherwise}\end{array}\right.
          \end{align*}
          \begin{itemize}
          \item Let $P'=[p'_1;\cdots;p'_{\partial(e)}] \in L_s(e)$, let $C'$ and $C'_e=(e,P',[\oc_{\sige}],-)$ be contexts such that $C' \mapsto^n_s C'_e$ and $I(C',C'_e)=I(C,C_e)$. Then, we define $C'_1$ as the context such that $C' \mapsto_s C'_1$. By induction hypothesis, $C_1[\{\phi_1(i) \mapsto (p'_i)_{|\psi_1(i)}\}]=C'_1$. We want to prove that $C[\{\phi(i) \mapsto (p'_i)_{|\psi(i)}\}]=C'$. We have $C[\{\phi(i) \mapsto (p'_i)_{|\psi(i)}\}]=(f,Q,T.\wn_u,+)[\{\phi(i) \mapsto (p'_i)_{|\psi(i)}\}]$. The only interesting thing to prove is that if there exists $i$ with $\phi_1(i)=(TRA,|T|+1,q_1)$ (so $\phi(i)=(TRA,|T|+1,q)$ for some $q$) and  $\sigl(u)[q_1 \mapsto (p'_i)_{|\psi_1(i)}]=\sigl(u')$ then $u[q \mapsto (p'_i)_{|\psi(i)}] = u'$. If $q_1=[]$, then $q=[]$ and $\psi(i)=\psi_1(i).0$ so $u[q \mapsto (p'_i)_{|\psi(i)}] = (p'_i)_{|\psi_1(i).0}=((p'_i)_{|\psi_1(i)})_{|[0]}=(\sigl(u)[q_1 \mapsto (p'_i)_{|\psi_1(i)}])_{|[0]}$. So if $\sigl(u)[q_1 \mapsto (p'_i)_{|\psi_1(i)}]=\sigl(u')$, then $u[q \mapsto (p'_i)_{|\psi(i)}]=(\sigl(u'))_{|[0]}=u'$. If $q_1 \neq []$, $q_1=0.q$ and $\psi_1(i)=\psi(i)$. Let us suppose that $\sigl(u)[q_1 \mapsto (p'_i)_{|\psi_1(i)}]=\sigl(u')$, then $\sigl(u[q \mapsto (p'_i)_{|\psi_1(i)}])=\sigl(u')$. So $u[q \mapsto (p'_i)_{|\psi(i)}]=u'$.
          \item Let $P'=[p'_1;\cdots;p'_{\partial(e)}] \in L_s(e)$, let $C'$ and $C'_e=(e,P',[\oc_{\sige}],-)$ be contexts such that $C[\{\phi(i) \mapsto (p'_i)_{|\psi(i)} \}]=C'$. Then, there is a context $C'_1$ such that $C' \mapsto_s C'_1$. Repeating the calculus done for the other implication and using the hypothesis, we can deduce that $C_1[\{\phi_1(i) \mapsto (p'_i)_{|\psi_1(i)}\}]=C'_1$. So we can use the induction hypothesis, we get $C_1' \mapsto^{n-1}_s C_e'$ and $I(C'_1,C'_e)=I(C_1,C_e)$. So $C' \mapsto^n_s C_e$ and $I(C',C'_e)=I(C,C_e)$.
          \end{itemize}
          %The important point is to prove that there is no $i \in \mathbb{N}$ such that $S(B_i) \geq s$ and $\phi_1(POT,i,[])=(TRA,|T|+1,[])$ (it points to the $\sigl$ of $\sigl(t)$ so we have no corresponding position in $C$). Let us prove it by contradiction, let us suppose that $\phi_1(POT,i,[])=(TRA,|T|+1,[])$. Then, by induction hypothesis, for every $t \in Sig$, $C_1\downarrow^t_{(TRA,|T|+1,[])} \mapsto^* (\sigma_j(B_i),[p_1;\cdots;p_{i-1}],[\oc_t],-)$. In particular, $C_1\downarrow^{\sige}_{(TRA,|T|+1,[])}=(g,Q,[\oc_{\sige}],-) \mapsto^* (\sigma_j(B_i),[p_1;\cdots;p_{i-1}],[\oc_{\sige}],-)$ which is impossible because we can not cross a $\wn C$ link upwards with a $\oc_{\sige}$ as the right-most trace element.
          
          \item Let us suppose that $C= (g,Q,T.\oc_{\sigl(t)},-) \mapsto_s (f,Q,T.\oc_{t},-)=C_1$, crossing a $\wn C$ link upwards. Then $\psi=\psi_1$ and the only difference between $\phi$ and $\phi_1$ is that when $\phi_1(i)$ refers to a position in $t$ in $C_1$, we have to add a $0$ on the left so that $\phi(i)$ corresponds to the same position in $C$.
          \begin{equation*}
            \phi: i \mapsto \left\{ \begin{array}{l}(TRA,|T|+1,[0]@q)\text{, if }\phi_1(i)=(TRA,|T|+1,q)\\ \phi_1(i)\text{ otherwise}\end{array}\right.
          \end{equation*}
          
        \item Let us suppose that $C= (f,Q,T.\wn_{u_1}.\wn_{u_2},+) \mapsto_s (g,Q,T.\wn_{\sign(u_1,u_2)},+)$, crossing a $\wn N$ link downwards. Then,
          \begin{align*}
            \psi:& i \mapsto \left\{ \begin{array}{l}\psi_1(i)@[0]\text{, if }\phi_1(i)=(TRA,|T|+1,[])\\ \psi_1(i)@[1]\text{, if }\phi_1(i)=(TRA,|T|+2,[])\\\psi_1(i)\text{ otherwise}\end{array}\right.\\
            \phi:& (POT,i,[]) \mapsto \left\{ \begin{array}{l}(TRA,|T|+1,q)\text{, if }\phi_1(POT,i,[])=(TRA,|T|+1,[0]@q)\\(TRA,|T|+2,q)\text{, if }\phi_1(POT,i,[])=(TRA,|T|+2,q)\\ \phi_1(POT,i,[])\text{ otherwise}\end{array}\right.
          \end{align*}
          The proof is quite similar to the proof done for the case of crossing a $\wn C$ link downwards.
          %The important point to prove is that there is no $i \in \mathbb{N}$ such that $S(B_i) \geq s$ and $\phi_1(POT,i,[])=(TRA,|T|+1,[])$ (it points to the $\sign$ of $\sigl(u_1,u_2)$ so we have no corresponding position in $C$). Let us prove it by contradiction, let us suppose that $\phi_1(POT,i,[])=(TRA,|T|+1,[])$. Then, by induction hypothesis, for every $t \in Sig$, $C_1\downarrow^t_{(TRA,|T|+1,[])} \mapsto^* (\sigma_j(B_i),[p_1;\cdots;p_{i-1}],[\oc_t],-)$. In particular, $C_1\downarrow^{\sige}_{(TRA,|T|+1,[])}=(g,Q,[\oc_{\sige}],-) \mapsto^* (\sigma_j(B_i),[p_1;\cdots;p_{i-1}],[\oc_{\sige}],-)$ which is impossible because we can not cross a $\wn N$ link upwards with a $\oc_{\sige}$ as the right-most trace element.
          
        \item Let us suppose that $C= (g,Q,T.\oc_{\sign(u_1,u_2)},-) \mapsto_s (f,Q,T.\oc_{u_1}.\oc_{u_2},-)$, crossing a $\wn N$ link upwards. Then, we set $\psi=\psi_1$ and 
          \begin{equation*}
            \phi: i \mapsto \left\{ \begin{array}{l}(TRA,|T|+1,[0]@q)\text{, if }\phi_1(i)=(TRA,|T|+1,q)\\(TRA,|T|+1,[1]@q)\text{, if }\phi_1(i)=(TRA,|T|+2,q)\\ \phi_1(i)\text{ otherwise}\end{array}\right.
          \end{equation*}

        \item Let us suppose that $C=(f,Q,T,+) \mapsto_s (g,Q,T.\wn_{\sige},+)=C_1$, crossing a $\wn D$ link downwards. Then, we set $\psi=\psi_1$ and $\phi=\phi_1$. The important point to prove is that there is no $1 \leq i \leq \partial(e)$ such that $\phi_1(i)=(TRA,|T|+1,q)$. If such a $i$ existed, then either there exists $j$ such that $\forall t \in Sig,  C_1 \downarrow^{\sige}_{(TRA,|T|+1,[])}= (g,Q,[\oc_{\sige[q \mapsto t]}],-) \mapsto^*_s (\sigma_j(D_i),Q_i,[\oc_{t}],-)$ (which is impossible because $(g,Q,[\oc_{\sige[q \mapsto t]}],-) \not \mapsto_s$) or $\forall t \in Sig, (\sigma(D_i),Q_i,[\oc_{t}],+) \rightsquigarrow^* (g,Q,[\oc_{\sige[q \mapsto t]}],-)$ (which is impossible because $\not \rightsquigarrow (g,Q,[\oc_{\sige[q \mapsto t]}],-)$.

            When crossing a $\wn D$ link upwards, we set $\psi=\psi_1$ and $\phi=\phi_1$ and the proofs are the same as in the non-exponential cases.
        \end{itemize}
      \end{itemize}
    \end{proof}
    
    \begin{theorem}
      Let $(B,P) \in Pot(B_G)$ and $(e,Q_{s-1}) \in Can_{s-1}(E_G)$, and $I$ an itinerary. Then,
      \begin{equation*}
        \left|\Set*{t \in C_s(B,P)}{
          \begin{array}{c}
            \exists Q_t, Q_t^{/e,s-1}=Q_{s-1}\\
            (\sigma(B),P,[\oc_t],+) \mapsto^*_s (e,Q,[\oc_{\sige}],-)\\
            I((\sigma(B),P,[\oc_t],+),(\sigma(B),P,[\oc_t],+))=I
          \end{array}
        } \right| 
        \leq 
        \max_{\substack{(B',P') \in Pot(B_G)\\B \succcurlyeq^{S_G}_2 B'}}|C_s(B',P')|^{\partial(B)}
      \end{equation*}
      \end{theorem}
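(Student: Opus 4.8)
The plan is to fix $(B,P)$, $(e,Q_{s-1})$ and an itinerary $I$, and to set up an injection from the set on the left-hand side into a product of copy-sets $\prod_i C_s(D_i,Q_i)$ where $B \succcurlyeq^{S_G}_2 D_i$. The crucial tool is Lemma~\ref{lemma_digging_depcont} applied to the context $C=(\sigma(B),P,[\oc_t],+)$ and its target $C_e=(e,Q,[\oc_{\sige}],-)$. For each $\mapsto_s$-copy $t$ in the set, write $n_t$ for the length of the path $(\sigma(B),P,[\oc_t],+) \mapsto^*_s (e,Q,[\oc_{\sige}],-)$; I would first argue that, although $n_t$ may a priori depend on $t$, the itinerary $I$ and the canonical potential $Q_{s-1}$ together pin down the ``skeleton'' of the path, so the length $n$ and the mappings $\phi,\psi$ furnished by Lemma~\ref{lemma_digging_depcont} can be chosen uniformly for all $t$ in the set (this uses Lemma~\ref{injection_lemma} in the same spirit as the proof of the itinerary bound). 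This is where most of the care is needed, because Lemma~\ref{lemma_digging_depcont} is stated for a single fixed path.

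\textbf{The injection.}
Given this uniform $\phi,\psi$, each $t$ in the set is, by the ``$\Leftrightarrow$'' in Lemma~\ref{lemma_digging_depcont}, completely determined by the tuple $\big( (p_i^t)_{|\psi(i)} \big)_{1\le i\le \partial(e)}$ of sub-signatures of the components $p_i^t$ of $Q_t$, where $Q_t=[p_1^t;\cdots;p_{\partial(e)}^t]$. Here I use that $C=C[\{\phi(i)\mapsto (p_i^t)_{|\psi(i)}\}]$ recovers the starting context, hence $t$ itself (it is the exponential signature on the unique $\oc$ trace element of $C$), from that tuple together with the fixed $C$ with its positions filled by a placeholder. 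So the map $t \mapsto \big( (p_i^t)_{|\psi(i)} \big)_i$ is injective. It remains to bound the number of possible values of each coordinate: by the second bullet of Lemma~\ref{lemma_digging_depcont}, for each $i$ there is a potential box $(D_i,Q_i)$ such that either $C\downarrow^u_{\phi(i)} \mapsto^*_s (\sigma_j(D_i),Q_i,[\oc_u],-)$ for all $u$, or $(\sigma(D_i),Q_i,[\oc_u],+)\rightsquigarrow^* C\downarrow^u_{\phi(i)}$ for all $u$. In either case, running the path and tracking that position shows that the value $(p_i^t)_{|\psi(i)}$, as $t$ ranges over the set, is a $\mapsto_s$-copy (or a simplification of one) of $(D_i,Q_i)$; and the defining property of $\succcurlyeq^s_2$ combined with the observation that two distinct copies of $(D_i,Q_i)$ feeding the same target context of $(B,P)$ would force $B \succcurlyeq^s_2 D_i$, hence $B\succcurlyeq^{S_G}_2 D_i$ since $\succcurlyeq^s_2\subseteq\succcurlyeq^{S_G}_2$, gives $(p_i^t)_{|\psi(i)}\in Si_s(D_i,Q_i)$ with $|Si_s(D_i,Q_i)|\le |C_s(D_i,Q_i)|$ up to the usual bookkeeping. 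Since there are $\partial(e)$ coordinates and $\partial(e)\le\partial(B)$ is the relevant count here (the context $C$ lives at $\sigma(B)$, so the potential components that vary are exactly those of $(e,Q)$, of which at most $\partial(B)$ are not forced by $Q_{s-1}$), the cardinality is bounded by $\prod_i |C_s(D_i,Q_i)| \le \max_{B\succcurlyeq^{S_G}_2 B'}|C_s(B',P')|^{\partial(B)}$.

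\textbf{Main obstacle.}
The hard part will be the uniformity argument: showing that a single $n$, $\phi$, $\psi$ works simultaneously for every $t$ in the set, given only that the paths agree on their itinerary and on $Q^{/s-1}$. I expect to handle this by an induction on the length of the common tail of the antipaths, invoking Lemma~\ref{injection_lemma} at each branching link exactly as in the proof bounding $|I_s((B,P),(e,Q))|$, to conclude that the two antipaths never separate and in particular have equal length; then Lemma~\ref{lemma_digging_depcont}, applied once to a representative path, transports to all of them because its statement is already quantified over all $C'$ with $P'^{/s-1}=P^{/s-1}$ and the same itinerary. A secondary technical point is the precise relationship between ``$(p_i^t)_{|\psi(i)}$ is a copy of $(D_i,Q_i)$'' and membership in $C_s(D_i,Q_i)$ versus $Si_s(D_i,Q_i)$; this is routine given Theorem~\ref{charac_final_contexts} and the definitions of copy and simplification, but it must be checked in both bullet cases of Lemma~\ref{lemma_digging_depcont}.
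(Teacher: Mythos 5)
Your proposal follows essentially the same route as the paper's proof: a single application of Lemma \ref{lemma_digging_depcont} to one representative path, injectivity of $t \mapsto \big((p^t_i)_{|\psi(i)}\big)_i$ via the biconditional, and a coordinate-wise bound through the boxes $(D_i,Q_i)$, using that two distinct copies appearing at such a coordinate force $B \succcurlyeq^s_2 D_i$, hence $B \succcurlyeq^{S_G}_2 D_i$. The remaining differences are bookkeeping: the paper restricts the tuple to the set $D$ of trace-positions $\phi(i)=(TRA,1,q)$ with $|C_s(D_i,Q_i)|\geq 2$ (the other coordinates are forced, since all starting contexts share the same potential $P$, so they contribute a factor $1$), it obtains $(q_i)_{|\psi(i)} \in C_s(D_i,Q_i)$ from standardness of the sub-signatures of the copy $t$ rather than from the inequality $|Si_s(D_i,Q_i)| \leq |C_s(D_i,Q_i)|$ (which is backwards, since $C_s(D_i,Q_i) \subseteq Si_s(D_i,Q_i)$), and it silently assumes the equal-length uniformity you flag — which does hold, and more easily than you anticipate, because an extra backward step from a context $(\sigma(B),P,[\oc_u],+)$ could only be a $\hookrightarrow$ step and would therefore change the itinerary.
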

      \begin{proof}
        To prove this, we only need to exhibit an injection from the set on the left (which we will name $F$ in this proof) to $\mapsto_s$-copies of $\partial(B)$ fixed potential boxes. If $F$ is not empty, there exists $t_0\in Sig$, $n\in \mathbb{N}$ and $Q_0 \in Pot$ such that $C=(\sigma(B),P,[\oc_{t_0}],+) \mapsto^n_s (e,Q_{0},[\oc_{\sige}],-)=C_e$. Then, by Lemma~\ref{lemma_digging_depcont}, there exists an injection $\phi: \{1,\cdots,\partial(e)\} \mapsto Pos(C)$, a mapping $\psi$ and for each $1 \leq i \leq \partial(e)$ a potential box $(D_i,Q_i)$ such that:
        \begin{itemize}
          \item For every $Q'=[q_1;\cdots;q_{\partial(e)}] \in L_s(e)$ with $Q'^{s-1}=Q_{s-1}$, if we set $C'_e=(e,Q',[\oc_{\sige}],-)$ and suppose that $C' \mapsto^n_s C'_e$ and $I(C',C'_e)=I$ then $C[\{ \phi(i) \mapsto (q'_i)_{|\psi(i)}\}]=C'$.
          \item For every $1 \leq i \leq \partial(e)$ and $t \in Sig$,
            \begin{itemize}
            \item Either $\exists j, \forall t \in Sig, C\downarrow_{\phi(i)}^t \mapsto^* (\sigma_j(D_i),Q_i,[\oc_{t}],-)$.
            \item Or $\phi(i)=(\_,\_,[])$ does not correspond to a $\oc$ trace element and $\forall t \in Sig,(\sigma(D_i),Q_i,[\oc_t],+) \rightsquigarrow^* C\downarrow_{\phi(i)}^t$.
            \end{itemize}
        \end{itemize}
        We set $D=\Set*{1\leq i \leq \partial(e)}{\begin{array}{c}\exists q \in Sig, \phi(i)=(TRA,1,q)\\ |C_s(D_i,Q_i)| \geq 2 \end{array}}$. Then, to every $t \in F$ we associates the mapping $\nu_t:i \in D \mapsto (q_i)_{|\psi(i)}$ (where $[q_1;\cdots;q_{\partial(e)}]$ is the potential $Q_t$ in the definition of $f$). To finish the proof we have to prove that:
        \begin{enumerate}
        \item \label{stat1}For every $i \in D$, $t \in F$ and $Q=[q_1,\cdots,q_{\partial(e)}]$ the potential for $e$associated to $t$, we have $(q_i)_{|\psi(i)} \in C_s(D_i,Q_i)$ 
        \item \label{stat2}The mapping $t \mapsto \nu_t$ is an injection
        \item \label{stat3}For every $i \in D$, $(B,P) \succcurlyeq^s_2 (D_i,Q_i)$
        \end{enumerate}
        Then, we will have 
        \begin{align*}
          |F| &\leq |\Set{\nu_t}{t \in F}| \\
          |F| &\leq |\Set{\nu \in Sig^D}{\forall i \in D, \nu(i) \in C_s(D_i,Q_i)}\\
          |F| &\leq (\max_{i \in D}|C_s(D_i,Q_i)|)^{|D|}\\
          |F| &\leq (\max_{\substack{(B',P') \in Pot(B_G)\\B \succcurlyeq^{S_G}_2 B'}}|C_s(B',P')|)^{\partial(e)}\\
          |F| &\leq \max_{\substack{(B',P') \in Pot(B_G)\\B \succcurlyeq^{S_G}_2 B'}}|C_s(B',P')|^{\partial(e)}
        \end{align*}
        Which is the lemma stated. We will successively prove the three needed statements.
        \begin{enumerate}
        \item Let $i \in D$, $t \in F$ and $Q=[q_1,\cdots,q_{\partial(e)}]$ the potential for $e$ associated to $t$, let us show that $(q_i)_{|\psi(i)} \in C_s(D_i,Q_i)$. We know that $\phi(i)$ corresponds to a $\oc$ trace element. So there exists $j$ such that $\forall t \in Sig, C\downarrow^t_{\phi(i)} \mapsto^*_s (\sigma_j(D_i),Q_i,[\oc_t],-)$. In particular, $C\downarrow^{(q_i)_{|\psi(i)}}_{\phi(i)} \mapsto^*_s (\sigma_j(D_i),Q_i,[\oc_{(q_i)_{|\psi(i)}}],-)$. Moreover, we know that $C[\phi(i) \mapsto (q_i)_{|\psi(i)}]=(\sigma(B),P,[\oc_u],+)$ with $u \in C_s(B,P)$. So $C\downarrow^{(q_i)_{|\psi(i)}}_{\phi(i)}=(\sigma(B),P,[\oc_v],+)$ with $v \in S_s(B,P)$. So $C\downarrow^{(q_i)_{|\psi(i)}}_{\phi(i)}$ is a $\mapsto_s$-canonical context. By Lemma~\ref{canonical_context_stable}, $(\sigma_j(D_i),Q_i,[\oc_{(q_i)_{|\psi(i)}}],-)$ is $\mapsto_s$-canonical. So, in particular $(q_i)_{|\psi(i)} \in S_s(D_i,Q_i)$. Notice that $u$ is standard, so $(q_i)_{|\psi(i)}$ is standard, so $(q_i)_{|\psi(i)} \in C_s(D_i,Q_i)$.
        \item Let us suppose that there are $t,u \in F$ such that $\nu_t=\nu_u$. Let us prove that $t=u$. Let $Q=[q_1;\cdots;q_{\partial(e)}]$ (respectively $R=[r_1;\cdots;r_{\partial(e)}]$) the potential such that $(\sigma(B),P,[\oc_t],+) \mapsto^n_s (e,Q,[\oc_{\sige}],-)$ (respectively $(\sigma(B),P,[\oc_u],+) \mapsto^n_s (e,R,[\oc_{\sige}],-)$. Then, we have $(\sigma(B),P,[\oc_{t_0}],+)[\{\phi(i) \mapsto (q_i)_{|\psi(i)}\}]= (\sigma(B),P,[\oc_t],+)$ so $t_0[\Set{p \mapsto (q_i)_{|\psi(i)}}{\phi(i)=(TRA,1,p)}]=t$. Similarly $t_0[\Set{p \mapsto (r_i)_{|\psi(i)}}{\phi(i)=(TRA,1,p)}]=u$. To prove that $t=u$, it is enough to prove that for each $1 \leq i \leq \partial(e)$ such that $\phi(i)=(TRA,1,p)$, $(q_i)_{|\psi(i)}=(r_i)_{|\psi(i)}$. From statement \ref{stat1}, we know that $(q_i)_{|\psi(i)} \in C_s(D_i,Q_i)$ and $(r_i)_{|\psi(i)} \in C_s(D_i,Q_i)$. If $|C_s(D_i,Q_i)| < 2$, then there is only one element in the set, so $(q_i)_{|\psi(i)}=(r_i)_{|\psi(i)}$. If $|C_s(D_i,Q_i)| \geq 2$, then $i \in D$. So $\nu_t(i)=\nu_u(i)$, more explicitly $(q_i)_{|\psi(i)}=(r_i)_{|\psi(i)}$.
        \item Let $i \in D$, we will show that $(B,P) \succcurlyeq^s_2 (D_i,Q_i)$. Let $(TRA,1,p)$ be $\phi(i)$. Let us consider $u$ and $v$ two different $\mapsto_s$-copies of $(D_i,Q_i)$. Then $t_0[p \mapsto u_{|\psi(i)}]$ and $t_0[p \mapsto v_{|\psi(i)}]$ are different copies of $(B,P)$. So $t_0[p \mapsto u_{|\psi(i)}]\downarrow_{\phi(i)}$ and $t_0[p \mapsto v_{|\psi(i)}]\downarrow_{\phi(i)}$ are simplifications of two different copies. Moreover, $(\sigma(B),P,t_0[p \mapsto u_{|\psi(i)}]\downarrow_{\phi(i)},+) \mapsto^*_s (\sigma_j(D_i),Q_i,u_{|\psi(i)},-)$ and $(\sigma(B),P,t_0[p \mapsto v_{|\psi(i)}]\downarrow_{\phi(i)},+) \mapsto^*_s (\sigma_j(D_i),Q_i,v_{|\psi(i)},-)$. So $(B,P) \succcurlyeq^s_2 (D_i,Q_i)$.
        \end{enumerate}
      \end{proof}

    \subsection{Polynomial bound on stratified proof-nets controlling dependence}
    \label{subsection_depcontrol_final}

    \begin{theorem}\label{theo_polynomial_bound}
      The maximal reduction length of a stratified proof-net $G$ which controls dependence, with $x=|E_G|$, $N=N_G+1$, $S=S_G+1$ and $\partial=\partial_G+1$, is bounded by 
      \begin{equation*}
        x^{3+4\left(4 N\cdot \partial^{2\cdot N\cdot S}\right)}
      \end{equation*}
    \end{theorem}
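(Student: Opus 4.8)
The plan is to chain together the two main estimates from the preceding subsections -- the bound on the number of itineraries and the bound on the number of copies realizing a fixed itinerary -- and then feed the resulting bound on $|C_s(B,P)|$ into the weight $T_G$, exactly as in the proof of Theorem~\ref{theoStratElementaryBound} but now tracking the dependence-control parameter $N$. Concretely, fix a stratum $s$ and a potential box $(B,P)$. Any $\mapsto_s$-copy $t$ of $(B,P)$ yields a path $(\sigma(B),P,[\oc_t],+)\mapsto^*_s C_f$ ending at a final context; decomposing $C_f$ and using Lemma~\ref{lemma_copy_context_final_context}, $t$ is determined by a pair consisting of a $\mapsto_{s-1}$-canonical potential edge $(e,Q_{s-1})$ together with an itinerary $I\in I_s((B,P),(e,Q_{s-1}))$, and then, for that fixed data, by the last theorem of Subsection~\ref{subsection_depcontrol_digging} there are at most $\max_{B\succcurlyeq^{S_G}_2 B'}|C_s(B',P')|^{\partial(B)}$ such copies. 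Multiplying by $|Can_{s-1}(E_G)|$ (choices of $(e,Q_{s-1})$) and by the itinerary bound $(D_G\cdot|Can_{s-1}(E_G)|)^{2N_s(B)}\cdot|Col_s(B,P)|$ from Subsection~\ref{subsection_itineraries}, and bounding $|Col_s(B,P)|\le|Can_{s-1}(E_G)|$ and $D_G\le|E_G|$, gives a recurrence for $\max_{(B,P)}|C_s(B,P)|$ in terms of the same quantity restricted to boxes of strictly smaller nest, with coefficients that are polynomial in $|Can_{s-1}(E_G)|$ and $|E_G|$.

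The next step is to turn this into a closed bound by a double induction: an outer induction on the nest $N_s(B)$ (legitimate since $\succcurlyeq^s_2$ is acyclic when $G$ controls dependence, Definition~\ref{def_controlsdependence}), wrapped inside an induction on $s$ that propagates the bound on $|Can_{s-1}(E_G)|$ up to $|Can_s(E_G)|$. At nest $0$ the box $B$ has no $\succcurlyeq^s_2$-successors, so $\max_{B\succcurlyeq_2 B'}|C_s(B',P')|$ is trivially $1$ (the empty $\max$, or rather the copy $\sige$), and $|C_s(B,P)|$ is bounded by a polynomial in $|Can_{s-1}(E_G)|$ of degree $O(N)$; each further nest level multiplies the exponent by $O(\partial)$, so after $N$ levels we get $|C_s(B,P)|\le (|E_G|\cdot|Can_{s-1}(E_G)|)^{O(\partial^N\cdot N)}$, and using $|Can_s(E_G)|\le(\max_{(B,P)}|C_s(B,P)|)^{\partial_G}$ together with $|E_G|$ as a bound on the number of boxes we obtain $|Can_s(E_G)|\le x^{c\cdot N\cdot\partial^{cN}\cdot|Can_{s-1}(E_G)|}$ for a suitable constant; iterating over $s\le S_G$ and being careful with the arithmetic of the exponents yields $|Can_{S_G}(E_G)|\le x^{O(N\cdot\partial^{2NS})}$. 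Since $\mapsto_{S_G}=\mapsto$, this is a bound on $|Can(E_G)|$ and hence on every $|L_{\mapsto}(e)|$ and every $|C_{\mapsto}(B,P)|$, and the size of each copy is similarly bounded via the strong acyclicity lemma as in Theorem~\ref{theoStratElementaryBound}.

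Finally I would substitute into $T_G=\sum_{e}|L_{\mapsto}(e)|+2\sum_{B}D_G(B)\sum_{P}\sum_{t\in Si_{\mapsto}(B,P)}|t|$: the first sum is at most $x$ times the potential bound, and the second is at most $x\cdot x\cdot(\text{potential bound})\cdot(\text{copy-size bound})$, so $T_G$ is bounded by a constant power of $x^{O(N\cdot\partial^{2NS})}$, which after collecting the constants (they can be absorbed into the ``$+3$'' and ``$+4\cdot$'' shape of the stated exponent) gives $x^{3+4(4N\partial^{2NS})}$; Corollary~\ref{coro_dal_lago_theo} then bounds the reduction length by $T_G$. The main obstacle is bookkeeping rather than conceptual: making the double induction genuinely well-founded and uniform (the recurrence at stratum $s$ involves $\mapsto_s$-copies of $\succcurlyeq^s_2$-successor boxes, which may have the \emph{same} stratum, so one must be sure the nest strictly decreases and that the itinerary-count theorem's use of $\succcurlyeq^{S_G}_2$ is compatible with working at an intermediate $s$), and then controlling the growth of the exponents through $S_G$ iterations so that the final expression really collapses to the clean closed form $x^{3+4(4N\partial^{2NS})}$ rather than to a tower; this is where most of the care is needed, and it is essentially the only place where the precise constants in the statement get pinned down.
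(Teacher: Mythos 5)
Your proposal is correct and follows essentially the same route as the paper: combine the itinerary bound of Subsection~\ref{subsection_itineraries} with the copies-per-itinerary bound of Subsection~\ref{subsection_depcontrol_digging} to get a recurrence for $\max_{(B,P)}|C_s(B,P)|$, resolve it by induction on the nest inside an induction on the stratum (via a recurrence of the form $u_{s+1}\leq(a\cdot u_s)^b$), and then substitute the resulting bounds on $|L_{\mapsto}(e)|$ and $|C_{\mapsto}(B,P)|$ into $T_G$ and conclude with Corollary~\ref{coro_dal_lago_theo}. The only differences are in the explicit arithmetic of the exponents, which you leave at the $O(\cdot)$ level but which the paper also handles with deliberately rough over-estimates to reach the stated closed form.
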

    \begin{proof}
      Let us consider any potential box $(B,P)$, and any $s \in \mathbb{N}$, then we have
      \begin{align*}
        |C_s(B,P)| &\leq |Can_{s-1}(E_G)|\cdot I_s(B,P) \cdot \max_{\substack{(B',P') \in Pot(B_G)\\B \succcurlyeq^{S_G}_2 B'}}|C_s(B',P')|^{\partial_G}\\
        |C_s(B,P)| &\leq |Can_{s-1}(E_G)|\cdot (D_G\cdot |Can_{s-1}(E_G)|)^{2\cdot N_s(B)}. |Can_{s-1}(E_G)| \cdot \max_{\substack{(B',P') \in Pot(B_G)\\B \succcurlyeq^{S_G}_2 B'}}|C_s(B',P')|^{\partial_G}\\
        |C_s(B,P)| &\leq (D_G\cdot |Can_{s-1}(E_G)|)^{2\cdot N(B)+2} \cdot \max_{\substack{(B',P') \in Pot(B_G)\\N(B') < N(B')}}|C_s(B',P')|^{\partial_G}\\
      \end{align*}
        
      So, for any $s,n \in \mathbb{N}$,
      \begin{align*}
        \max_{\substack{(B,P)\in Pot(B_G) \\ N(B)=n}}|C_s(B,P)| &\leq (D_G\cdot |Can_{s-1}(E_G)|)^{2\cdot n+2)} \cdot \max_{\substack{(B',P') \in Pot(B_G)\\N(B')<n}}|C_s(B',P')|^{\partial_G}\\
        \max_{\substack{(B,P)\in Pot(B_G) \\ N(B)=n}}|C_s(B,P)| &\leq (D_G\cdot |Can_{s-1}(E_G)|)^{(2\cdot n+2)\cdot (1+\partial_G)^n} \\
      \end{align*}

      For any $s \in \mathbb{N}$,
      \begin{align*}
        \max_{(B,P)\in Pot(B_G)}|C_s(B,P)| &\leq \left(D_G\cdot |Can_{s-1}(E_G)|\right)^{(2\cdot N_G+2)\cdot (1+\partial_G)^{N_G}} \\
        \max_{(B,P)\in Pot(B_G)}|C_s(B,P)| &\leq \left(D_G\cdot |E_G|.\max_{(B,P)\in Pot(B_G)}|C_{s-1}(B,P)|^{\partial_G}\right)^{(2\cdot N_G+2)\cdot (1+\partial_G)^{N_G}} \\
        \max_{(B,P)\in Pot(B_G)}|C_s(B,P)| &\leq \left(D_G\cdot |E_G|.\max_{(B,P)\in Pot(B_G)}|C_{s-1}(B,P)|\right)^{(2\cdot N_G+2)\cdot (1+\partial_G)^{N_G+1}} \\
      \end{align*}

      If we consider the sequence $\left(\max_{(B,P)\in Pot(B_G)}|C_s(B,P)|\right)_{s \in \mathbb{N}}$, we have bounded it by an inequality of the shape $u_{s+1} \leq (a.u_s)^b$ and $u_0 \leq a$. In this case we have $\forall s, u_s \leq a^{b^{2\cdot s +1}}$. So, if we set $S=S_G+1$, $N=N_G+1$ and $\partial=\partial_G+1$, then
      \begin{align*}
        \max_{(B,P)\in Pot(B_G)}|C_s(B,P)| &\leq \left(D_G\cdot |E_G|\right)^{(2\cdot N_G+2)\cdot (1+\partial_G)^{(N_G+1)\cdot (2\cdot s +1)}}\\
        \max_{(B,P)\in Pot(B_G)}|C_{\mapsto}(B,P)| &\leq \left(D_G\cdot |E_G|\right)^{2 N \cdot \partial^{2\cdot N\cdot S}}\\
        \max_{(B,P)\in Pot(B_G)}|C_{\mapsto}(B,P)| &\leq |E_G|^{4 N \cdot \partial^{2\cdot N\cdot S-1}}\\
        \max_{e \in E_G}|L_{\mapsto}(e)| &\leq |E_G|^{4\partial \cdot N\cdot \partial^{2\cdot N\cdot S-1}}\\
        \max_{e \in E_G}|L_{\mapsto}(e)| &\leq |E_G|^{4 N\cdot \partial^{2\cdot N\cdot S}}\\
        \end{align*}
      \begin{align*}
        T_G & =  \sum_{e \in E_G}|L_{\mapsto}(e)|+\sum_{B \in B_G}\left( |D_G(B)|\cdot \sum_{P \in L_{\mapsto}(B)}\sum_{t \in C_{\mapsto}(B,P)}|t| \right)\\
        & \leq |E_G|.|E_G|^{4 N\cdot \partial^{2\cdot N\cdot S}}+ |B_G|\cdot |D_G(B)|\cdot |E_G|^{4 N\cdot \partial^{2\cdot N\cdot S}} |E_G|^{4 N\cdot \partial^{2\cdot N\cdot S}} |E_G|^{4 N\cdot \partial^{2\cdot N\cdot S}}\\
        & \leq |E_G|^{1+4 N\cdot \partial^{2\cdot N\cdot S}}+ |E_G|^{2+3\left(4 N\cdot \partial^{2\cdot N\cdot S}\right)}\\
        T_G & \leq |E_G|^{3+4\left(4 N\cdot \partial^{2\cdot N\cdot S}\right)}\\
      \end{align*}
    \end{proof}
    
    The degree of the polynomial in the bound only depends on the depth, maximal stratum and maximal nest of the proof-net. Those three parameters are bounded by the number of boxes. So a stratified proof-net controlling dependence normalizes in a time bounded by a polynomial on the size of the proof-net, the polynomial depending only on the number of boxes of the proof-net.
    
    In Church encoding, binary words correspond to the type 
    \begin{equation*}
      W=\forall X. !(X \multimap X) \multimap !(X \multimap X)\multimap \S(X \multimap X)
    \end{equation*}
    The normalized proof-nets whose only conclusion have type $W$ have at most 1 box.

    Thus, let $G$ be a proof-net representing a function on binary words (the only conclusion of $G$ has type $W \multimap Y$ for some $Y$).  Let us suppose that for all normal proof-net $H$ representing a binary word of length $n$, the application of $G$ to $H$ is stratified and controls dependence. Then, there exists a polynomial $P$ such that for all normal proof-net $H$ representing a binary word of length $n$, the application of $G$ to $H$ normalizes in at most $P(n)$ $cut$-elimination steps.
    
    We can notice that the degree of the polynomial rises extremely fast. In fact, during the proof we used rough bounds. Otherwise, the sheer statement of the bound would have been quite complex. We believe that any real-world application would not use our general bounds, but would instead infer tighter bounds by computing the exact $\hookrightarrow$ and $\succcurlyeq_2$ relations.

    \section{Applications}
    \label{section_applications}
    In this section we will consider several restrictions of linear logic and prove that all their proof-nets are stratified, and in some cases they also control dependence. Then we will deduce strong bounds on the $cut$-elimination of those systems. For some systems ($L^3$, $L^4$ and $L^{3a}$), only weak bounds were known, for farfetched strategies of reduction. Thus, the strong bounds we prove for those systems are important steps if one wants to transform those systems into type systems for functional languages. For other systems, strong bounds on $cut$-elimination were already known and the bounds we prove are higher. We nonetheless think it is important to include them in this paper, to show how it simple the proofs become.

    \subsection{Elementary bounds}
    \subsubsection{$ELL$, Elementary Linear Logic}
    \paragraph{}$ELL$ (Elementary Linear Logic) is the first defined subsystem of linear logic which characterizes elementary time (tower of exponential of fixed height). It was hinted by Girard~\cite{girard1995light} and made explicit by Danos and Joinet~\cite{danos2003linear}. The principle of $ELL$ is to forbid dereliction and digging. Therefore, the depth of any edge is not changed by cut elimination. We can reduce at depth 0, then at depth 1, then at depth 2,... During one of those round, the size of the proof-net will at most be exponentiated: there are at most $|G|$ $?C$ nodes at depth $i$ at the beginning of round $i$. In the ``worst'' case, each of these $?C$ nodes will double the size of the proof net. So at the end of the round, the size of the net is inferior to $2^{|G|}$. By induction on the depth, we have the elementary bound.

    \begin{definition}
      $ELL$ is the fragment of $LL_0$ proof-nets where:
      \begin{itemize}
        \item There is neither $\S$, $?D$ nor $?N$ link
        \item There is no $\S$ in the formulae labelling the edges
        \item All the indexes on the ground formulae are $0$
      \end{itemize}
    \end{definition}

    \paragraph{} The equivalent of the property ``the depth is not changed by cut elimination'' in context semantics is a property of conservation of some quantity (representing the depth of a context) through the $\mapsto$ relation.

    \begin{lemma}\cite{lago2006context} \label{ellConsDepthContext}
      If $G$ is a $ELL$ proof net and $(e,P,T,p) \mapsto^*_G (f,Q,U,q)$, then $\partial(e)+ |T|_{\{\oc,\wn \}} = \partial(f) + |U|_{\{\oc,\wn\}}$
    \end{lemma}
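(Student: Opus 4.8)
The statement is a conservation law for the quantity $\partial(e) + |T|_{\{\oc,\wn\}}$ along $\mapsto$ paths in $ELL$ proof-nets. The natural approach is to prove it for a single $\mapsto$ step and then conclude by induction on the length of the path $(e,P,T,p) \mapsto^* (f,Q,U,q)$. Since $\mapsto = \rightsquigarrow \cup \hookrightarrow$, and each $\rightsquigarrow$ rule is accompanied by its dual rule (which has the same effect on trace length counted in the skeleton, since dualizing a trace element preserves whether its skeleton is $\oc$, $\wn$, or neither), it suffices to check each of the rules in Figures \ref{multiplicative_context_semantic} and \ref{exponential_context_semantic}, restricted to the links actually present in $ELL$.

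First I would set up the one-step claim: if $C \mapsto D$ with $C=(e,P,T,p)$ and $D=(g,Q,U,q)$, then $\partial(e) + |T|_{\{\oc,\wn\}} = \partial(g) + |U|_{\{\oc,\wn\}}$. The proof is a case analysis on the link crossed. For the multiplicative and quantifier rules ($\parr$, $\otimes$, $\forall$, $\exists$, $cut$, $ax$), the tail/head edges lie in the same boxes, so $\partial(e)=\partial(g)$, and the trace is modified only by pushing or popping a trace element whose skeleton is one of $\parr_l,\parr_r,\otimes_l,\otimes_r,\forall,\exists$ — none of which lies in $\{\oc,\wn\}$ — so $|T|_{\{\oc,\wn\}}=|U|_{\{\oc,\wn\}}$ as well. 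The interesting cases are the exponential ones. For the $?C$ link, both premises and the conclusion lie at the same depth, and the rule replaces a $?_t$ element on the top of the trace by a $?_{\sigl(t)}$ (or $?_{\sigr(t)}$): the skeleton count in $\{\oc,\wn\}$ is unchanged, and $\partial$ is unchanged. Note that the $?D$ and $?N$ rules, as well as the $\S$ rule and the degenerate $?_{\sigp(t)}$ case of $?N$, simply never occur: $ELL$ proof-nets have no $?D$, $?N$, or $\S$ links by definition, so those branches of the case analysis are vacuous. The crucial cases are the $?P/!P$ box-door rules. When we cross an auxiliary door $?P$ downward, $(e,P.t,T,+) \rightsquigarrow (f,P,T.?_t,+)$: here $e$ is one depth deeper than $f$ (it lies inside the box, $f$ is the door which is not in the box), so $\partial(e) = \partial(f)+1$, while $|T.?_t|_{\{\oc,\wn\}} = |T|_{\{\oc,\wn\}}+1$; the two changes cancel. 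Symmetrically for the principal door $!P$ rule $(g,P.t,T,+) \rightsquigarrow (h,P,T.!_t,+)$, and likewise for the $\hookrightarrow$ rule $(f,P,!_t,-) \hookrightarrow (h,P,!_t,+)$, where both edges are conclusions of doors of the \emph{same} box and hence at the same depth, and the trace is unchanged. In every case the invariant $\partial + |\cdot|_{\{\oc,\wn\}}$ is preserved, which establishes the one-step claim.

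Then I would conclude: given $(e,P,T,p) = C_0 \mapsto C_1 \mapsto \cdots \mapsto C_n = (f,Q,U,q)$, apply the one-step claim $n$ times to get $\partial(e)+|T|_{\{\oc,\wn\}} = \partial(e_1)+|T_1|_{\{\oc,\wn\}} = \cdots = \partial(f)+|U|_{\{\oc,\wn\}}$.

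\textbf{Main obstacle.} There is no deep difficulty here; the only thing requiring care is being exhaustive about the door-crossing rules and confirming that the depth convention — doors of a box are \emph{not} considered inside the box (as stated in the paragraph on boxes, where $\sigma(B)$, $\sigma_i(B)$ are conclusions of links not in $B$) — is applied correctly, so that crossing a door always moves $\partial$ by exactly $\pm 1$ while pushing/popping exactly one $\oc$- or $\wn$-trace element. One should also double-check the dual rules: for instance the dual of the auxiliary-door rule is $(f,P,T.?_t^{\perp},-) = (f,P,(T.?_t)^\perp,+^\perp) \rightsquigarrow (e,P.t,T^\perp,-)$, i.e. going \emph{up} into the box pops a $?_t$ from the trace and increases depth by $1$ — again the changes cancel. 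Once this bookkeeping is checked on all the exponential rules present in $ELL$, the lemma follows immediately.
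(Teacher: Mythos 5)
Your proof is correct and takes essentially the same route as the paper: the paper's proof simply observes that the only $\mapsto$ transitions violating the invariant are those crossing $?D$ or $?N$ links, which cannot occur in an $ELL$ proof-net, and your proposal is the same case analysis spelled out rule by rule (with the correct bookkeeping at box doors and for the $\hookrightarrow$ jump).
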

    \begin{proof}
      The only transitions which break this property are crossing $?D$ or $?N$. There can be no such transitions in a $ELL$ proof net
    \end{proof}

    \begin{lemma}
      If $G$ is a $ELL$ proof net and $B \twoheadrightarrow B'$ then $\partial(B) > \partial(B')$
    \end{lemma}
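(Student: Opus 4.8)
The statement to prove is: if $G$ is an $ELL$ proof-net and $B \twoheadrightarrow B'$, then $\partial(B) > \partial(B')$. Recall the definition: $B \twoheadrightarrow B'$ means there exist potentials $P, P'$, a signature $s$ and a trace $T$ such that $(\sigma(B), P, !_s, +) \rightsquigarrow^* (\sigma(B'), P', T, -)$. The plan is to exploit the depth-conservation invariant of Lemma~\ref{ellConsDepthContext}, together with a careful analysis of the two endpoints of this path, using the fact that crossing a principal door \emph{downwards} adds a $!$ trace element while crossing one \emph{upwards} (the $\hookrightarrow$ jump, or an anti-$\rightsquigarrow$) also manipulates the exponential part of the trace in a controlled way.

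\textbf{Key steps.} First I would apply Lemma~\ref{ellConsDepthContext} to the path $(\sigma(B), P, [!_s], +) \rightsquigarrow^* (\sigma(B'), P', T, -)$. The starting context has trace $[!_s]$, so $|[!_s]|_{\{\oc,\wn\}} = 1$, and $\sigma(B)$ is the conclusion of the principal door of $B$, hence $\partial(\sigma(B)) = \partial(B)$ (the principal door is not inside $B$ but its depth equals that of $B$'s contents minus nothing — more precisely, $\sigma(B)$ is at the same depth as $B$ itself as an element). So the invariant gives $\partial(B) + 1 = \partial(\sigma(B')) + |T|_{\{\oc,\wn\}}$. Similarly $\sigma(B')$ is the conclusion of the principal door of $B'$, so $\partial(\sigma(B')) = \partial(B')$. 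This reduces the goal to showing $|T|_{\{\oc,\wn\}} \geq 2$, i.e. that the trace $T$ of the arrival context contains at least two exponential trace elements.

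\textbf{The main obstacle.} The crux is to analyze \emph{how} a $\rightsquigarrow$-path can arrive at a principal door conclusion $\sigma(B')$ with polarity $-$. Looking at the context semantics rules (Figure~\ref{exponential_context_semantic}), the $\hookrightarrow$ rule $(f, P, !_t, -) \hookrightarrow (h, P, !_t, +)$ jumps from an auxiliary door to a principal door, but in the forward direction of $\rightsquigarrow$ we need the rule that produces a context on $\sigma(B')$ pointing \emph{upwards} (polarity $-$, meaning into the box). By the ``dual rule'' convention, the $!P$ rule $(g, P.t, T, +) \rightsquigarrow (h, P, T.!_t, +)$ dualizes to a rule arriving at the premise of the principal door; and the relevant way to reach $(\sigma(B'), P', T, -)$ is by the dual of the principal-door-downward rule, forcing $T$ to end in a $!_u$ element for some $u$ (the token must carry the information that it is about to enter box $B'$ through its principal door). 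So $T = T'.!_u$, contributing one exponential element. That gives $|T|_{\{\oc,\wn\}} \geq 1$, hence $\partial(B) + 1 \geq \partial(B') + 1$, which only yields $\partial(B) \geq \partial(B')$ — not the strict inequality. To get strictness I would argue that $T$ must contain a \emph{second} exponential element: the path starts with trace $[!_s]$ (a single $!$ element, contributed by starting at a principal door), and in $ELL$ no rule ever \emph{consumes} the leftmost $!$ trace element except crossing a $?D$, $?N$, or reaching a final context — none of which can occur while the token is still ``in transit'' arriving at polarity $-$ on $\sigma(B')$. More carefully: the leftmost $!_s$ is never touched along the path (each $\rightsquigarrow$ rule only reads/writes the \emph{rightmost} trace element, except the $?P/!P$ and $?C/?N$ rules which also only affect the right end), so it survives into $T$ as a leftmost $!$ element; and the arrival rule adds the $!_u$ at the right end. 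These are two distinct positions (the path has positive length since $B \neq B'$ forces at least one step, and actually the start context $[!_s]$ has length $1 < 2 \leq |T|$). Hence $|T|_{\{\oc,\wn\}} \geq 2$, and the invariant delivers $\partial(B) = \partial(B') + |T|_{\{\oc,\wn\}} - 1 \geq \partial(B') + 1 > \partial(B')$. The delicate point I expect to spend the most care on is justifying rigorously that the initial $!_s$ trace element is genuinely preserved (never becomes the rightmost element and gets rewritten) along the whole path in an $ELL$ proof-net — this is where the absence of $?D$ and $?N$ links is essential, and it may require a small auxiliary lemma tracking the leftmost exponential element.
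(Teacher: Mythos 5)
Your overall skeleton is the same as the paper's: apply Lemma~\ref{ellConsDepthContext} to the path $(\sigma(B),P,[\oc_s],+) \rightsquigarrow^* (\sigma(B'),P',T,-)$ and reduce the claim to $|T|_{\{\oc,\wn\}}\geq 2$, with one exponential element coming from the initial $\oc_s$, which indeed cannot be erased in the absence of $\wn D$ and $\wn N$ (a trace element is only removed when it is rightmost and the remaining trace is non-empty, so the leftmost element survives as an exponential element; the paper asserts this just as briefly). The genuine gap is in your second exponential element. You claim that the only way to arrive at $(\sigma(B'),P',T,-)$ is ``by the dual of the principal-door-downward rule'', forcing $T$ to end in $\oc_u$. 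This is wrong on two counts. First, that dual rule has $(\sigma(B'),\ldots,-)$ as its \emph{source} (it is the rule for continuing upward into the box through its principal door), not as its target; the arrival at $(\sigma(B'),P',T,-)$ is typically performed by the $cut$ rule, or by the dual rule of whatever link has $\sigma(B')$ as a premise, and none of these constrains the rightmost trace element by itself. Second, the rightmost element of $T$ at such a context is in fact a $\wn$ element, not a $\oc$ element: $\beta(\sigma(B'))=\oc C$, and with polarity $-$ the formula against which the trace must be compatible is $\wn C^\perp$ (compare the path of Figure~\ref{fig_ex_stratified}, which reaches $\sigma(C)$ with trace ending in $\wn_{\sigr(x_D)}$). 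Since this mistaken claim also carries your assertion that $|T|\geq 2$ (and your side remark uses $B\neq B'$, which is not a hypothesis of the lemma), the strict inequality is not actually established by your argument.

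The missing ingredient is precisely the underlying-formula machinery. Because the starting context $(\sigma(B),P,[\oc_s],+)$ has a well-defined underlying formula, Lemma~\ref{lemma_underlying_step} (equivalently Lemma~\ref{lemma_underlying_mapsto}) guarantees that every context on the path has one, and for the arrival context on $\sigma(B')$ with polarity $-$ this forces its rightmost trace element to be some $?_t$. This is how the paper obtains $|T|_{\wn}\geq 1$; combined with the surviving leftmost $\oc$ element (which gives $|T|_{\oc}\geq 1$, and the two occurrences are automatically distinct since one is a $\oc$ and the other a $\wn$), one gets $|T|_{\{\oc,\wn\}}\geq 2$ and hence $\partial(B)>\partial(B')$. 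Your proof is repaired by replacing your rule-based claim about the last step with this typing argument.
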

    \begin{proof}
      If $B \twoheadrightarrow B'$, there exists $P,P' \in Pot$, $T' \in Tra$ and $s \in Sig$ such that $(\sigma(B),P,[\oc_s],+) \twoheadrightarrow (\sigma(B'), P',T',-)$. According to lemma \ref{ellConsDepthContext}, $\partial(\sigma(B))+1 = \partial(\sigma(B'))+|T'|_{!}+|T'|_{?}$. 

      The initial $!$ in the trace can not disappear along the path, so $|T'|_{\{\oc\}} \geq 1$. Moreover, by lemma \ref{lemma_underlying_step}, $(\sigma(B'),P',T',-)$ has an underlying formula, so the rightmost trace element of $T'$ is a $?_{t}$ trace element, so $|T'|_{\{\wn\}} \geq 1$. So, 
      \begin{align*}
        \left(\partial(B)-1\right)+1 & \geq \left(\partial(B')-1\right)+1+1 \\
        \partial(B)     & > \partial(B')
      \end{align*}
    \end{proof}

    \begin{theorem}\label{lemma_ell_stratifiee}
      Let $G$ be an $ELL$ proof-net, the length of its longest path of reduction is bounded by  $2^{3.|E_G|}_{3.\partial_G+1}$
    \end{theorem}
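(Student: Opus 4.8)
The plan is to obtain this bound as an immediate corollary of the elementary bound for stratified proof-nets (Theorem \ref{theoStratElementaryBound}), after observing two things about $ELL$ proof-nets: they are stratified, and their maximal stratum $S_G$ is bounded by the depth $\partial_G$. The preceding lemma already shows that whenever $B \twoheadrightarrow B'$ in an $ELL$ proof-net, $\partial(B) > \partial(B')$. First I would use this to conclude that $\twoheadrightarrow$ is acyclic: along any chain $B \twoheadrightarrow B_1 \twoheadrightarrow \cdots \twoheadrightarrow B_k$ the depths $\partial(B) > \partial(B_1) > \cdots > \partial(B_k)$ form a strictly decreasing sequence of natural numbers, so no chain can close into a cycle. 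Hence $G$ is principal door stratified in the sense of Definition \ref{def_stratified}, and Theorem \ref{theoStratElementaryBound} applies to it.

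Next I would bound the strata. For any box $B$, a $\twoheadrightarrow$-chain $B \twoheadrightarrow B_1 \twoheadrightarrow \cdots \twoheadrightarrow B_k$ yields $\partial(B) > \partial(B_1) > \cdots > \partial(B_k) \geq 0$, so its length $k$ cannot exceed $\partial(B)$. By Definition \ref{def_strata}, $S(B) \leq \partial(B) \leq \partial_G$, and therefore $S_G = \max_{B \in B_G} S(B) \leq \partial_G$. Invoking Theorem \ref{theoStratElementaryBound}, the length of the longest reduction path of $G$ is bounded by $2^{3.|E_G|}_{3.S_G+1}$; since the exponential tower $2^{3.|E_G|}_{b}$ is non-decreasing in its height $b$ (using $3.|E_G| \geq 1$, the empty proof-net being trivial), substituting $S_G \leq \partial_G$ gives $2^{3.|E_G|}_{3.S_G+1} \leq 2^{3.|E_G|}_{3.\partial_G+1}$, which is the claimed bound.

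I do not anticipate a serious obstacle here; essentially all the work has been done in Section \ref{section_stratification} and in the depth-decrease lemma for $ELL$. The only point requiring a moment of care is the verification that the stratification framework genuinely applies to $ELL$ proof-nets — that is, that $\twoheadrightarrow$ is acyclic and the strata $S(B)$ are well-defined finite numbers — but this is exactly what the strict decrease of box depth along $\twoheadrightarrow$ delivers. The remainder is the elementary monotonicity of the tower function $a^c_b$ in its height argument.
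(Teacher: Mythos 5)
Your proposal is correct and follows essentially the same route as the paper: use the strict decrease of box depth along $\twoheadrightarrow$ in $ELL$ to conclude that $G$ is stratified with $S(B) \leq \partial(B)$, and then apply Theorem \ref{theoStratElementaryBound}. The only addition is your explicit check that the tower $2^{3.|E_G|}_{b}$ is monotone in its height, which the paper leaves implicit when it says the theorem ``immediately'' gives the bound.
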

    \begin{proof}
      Let $B$ be a box of $G$, a $ELL$ proof net. Then the depth of $B$ in terms of $\twoheadrightarrow$ is finite and inferior or equal to $\partial(B)$. Indeed, let us suppose that $B \twoheadrightarrow B_1 \twoheadrightarrow \cdots \twoheadrightarrow B_n$, then $\partial(B) > \partial(B_1) > \cdots > \partial(B_n) = 0$. So $n \leq \partial(B)$. So $G$ is stratified, with $S(B) \leq \partial(B)$ for every box $B$.
    
      Theorem \ref{theoStratElementaryBound} immediately give us the stated theorem. 
    \end{proof}
    
    This bound is not new. Dal Lago already proved a strong bound using context semantics~\cite{lago2006context}. Amadio and Madet also proved a strong bound for a modal $\lambda$-calculus inspired by $ELL$~\cite{madet2011elementary}, this proof could easily be adapted to $ELL$ itself. Both proofs use tighter bounds than us. However, none of them explicit the bound. The normalization sequences are proved to have length inferior to $P_{\partial(G)}(|E_G|)$ with $P_i$ being defined by induction on $i$.

    Notice that the strata only depends on the depth of the proof-net. So the height of the exponential tower depends only on the depth and maximal level of the proof-net. Moreover, the depth of a normalized proof-net representing a church numeral or a binary word are both bound by $1$. So, if $G$ is a proof-net whose only pending edge has type $!W_{LL} \multimap A$ (with $W_{LL}=\forall \alpha. \oc (\alpha \multimap \alpha) \multimap \oc (\alpha \multimap \alpha) \multimap \oc (\alpha \multimap \alpha)$ the type for binary words), \textbf{for all} normalized proof-net $H$ of type $!W_{LL}$, the maximum number of steps to eliminate the $cut$ in the application of $G$ to $H$ is bounded by $2^{3|E_G|+3|E_H|}_{3\partial_G+1}$.
    
    \subsubsection{$L^3$, Linear Logic by Levels}
    $L^3$ (Linear Logic by Levels) is a system introduced by Baillot and Mazza~\cite{baillot2010linear} which generalizes $ELL$. $L^3$ is defined as the fragment of linear logic containing exactly the proof-nets for which we can label each edge $e$ with an integer $l(e)$ verifying the rules of figure \ref{figureLabelL3}.
    
    Let $G$ be a $L^3$ proof-net, we define $l_G$ as $\max \Set{l(e)}{e \in E_G}$. There has been a weak elementary bound proved for $L^3$ proof-nets in~\cite{baillot2010linear} for a particular strategy, but no strong bound.
    %\tikzset{external/remake next}
    \begin{figure}\centering
      \begin{tikzpicture}
        \begin{scope}[scale = 0.85]
          \tikzstyle{type}=[opacity=0]
          \begin{scope}[shift={(0,0.5)}]
            \draw (0,0) node [ax] (ax) {};
            \draw[ar, out=-20,in=100] (ax) to node [type] {$A$} node [level] {$i$} (0.6,-0.6);
            \draw[ar, out=-160,in=80] (ax) to node [type] {$A^\perp$} node [level,right] {$i$} (-0.6,-0.6);
          \end{scope}
          
          \begin{scope}[shift={(2.5,0)}]
            \draw (0,0) node [cut] (cut) {};
            \draw (cut) ++ (-0.6,0.6) node (G) {};
            \draw (cut) ++ (0.6,0.6) node (H) {};
            \draw[ar,out=-80,in=160] (G) to node [type] {$A$} node [level] {$i$} (cut);
            \draw[ar,out=-100,in=20] (H) to node [type] {$A^\perp$} node [level,right] {$i$} (cut);
          \end{scope}
          
          \begin{scope}[shift={(5,0)}]
            \draw (0,0) node [tensor] (tens)  {};
            \draw [revar] (tens)--++(120:0.8) node [type] {$A$} node [level] {$i$};
            \draw [revar] (tens)--++(60:0.8) node [type] {$B$} node [level,right] {$i$};
            \draw [ar] (tens)--++(0,-0.8) node [type] {$A \otimes B$} node [level] {$i$};
          \end{scope}
          
          \begin{scope}[shift={(7.5,0)}]
            \draw (0,0) node [par] (par)  {};
            \draw [revar] (par)--++(120:0.8) node [type] {$A$} node [level] {$i$};
            \draw [revar] (par)--++(60:0.8) node [type] {$B$} node [level,right] {$i$};
            \draw [ar] (par)--++(0,-0.8) node [type] {$A \parr B$} node [level] {$i$};
          \end{scope}
          
          \begin{scope}[shift={(10,0)}]
            \draw (0,0) node [forall] (forall)  {};
            \draw [revar] (forall)--++(0,0.8) node [type] {$A$} node [level] {$i$};
            \draw [ar] (forall)--++(0,-0.8) node [type] {$\forall X.A$} node [level] {$i$};
          \end{scope}
          
          \begin{scope}[shift={(12,0)}]
            \draw (0,0) node [exists] (exists)  {};
            \draw [revar] (exists)--++(0,0.8) node [type] {$A$} node [level] {$i$};
            \draw [ar] (exists)--++(0,-0.8) node [type] {$\exists X.A$} node [level] {$i$};
          \end{scope}

          \begin{scope}[shift={(0,-2.)}]
            \begin{scope}[shift={(4,0)}]
              \draw (0,0) node [der] (der)  {};
              \draw [revar] (der)--++(0,0.8) node [type] {$A$} node [level] {$i$};
              \draw [ar] (der)--++(0,-0.8) node [type] {$?A$} node [level] {$i$-1};
            \end{scope}
            \begin{scope}[shift={(9,0)}]
              \draw (0,0) node [cont] (cont)  {};
              \draw [revar] (cont)--++(120:0.8) node [type] {$?A$} node [level] {$i$};
              \draw [revar] (cont)--++(60:0.8) node [type] {$?A$} node [level,right] {$i$};
              \draw [ar] (cont)--++(0,-0.8) node [type] {$?A$} node [level] {$i$};
            \end{scope}
            %\begin{scope}[shift={(6,-2.5)}]
            %  \draw (0,0) node [weak] (weak)  {};
            %  \draw [ar] (weak)--++(0,-0.8) node [type] {$?A$} node [level] {$i$};
            %\end{scope}
            \begin{scope}[shift={(6.5,0)}]
              \draw (0,0) node [neut] (neut)  {};
              \draw [revar] (neut)--++(0,0.8) node [type] {$A$} node [level] {$i$};
              \draw [ar] (neut)--++(0,-0.8) node [type] {$\S A$} node [level] {$i$-1};
            \end{scope}
            
            \begin{scope}[shift={(1,0)}]
              \draw (0.8,0) node [princdoor] (bang) {};
              \draw (-0.2,0) node [auxdoor] (whyn2) {};
              \draw (-1.2,0) node [auxdoor] (whyn1) {};
              \draw[revar] (bang) --++ (0,0.8) node [type] {$A$} node [level] {$i$};
              \draw[ar] (bang) --++ (0,-0.8) node [type] {$! A$} node [level] {$i$-1};
              \draw[revar] (whyn1) --++(0,0.8) node [type] {$A$} node  [level] {$i$};
              \draw[ar] (whyn1) --++ (0,-0.8) node [type] {$?A$} node  [level] {$i$-1};
              \draw[revar] (whyn2) --++(0,0.8) node [type] {$A$} node  [level] {$i$};
              \draw[ar] (whyn2) --++ (0,-0.8) node [type] {$?A$} node  [level] {$i$-1};
              \draw (whyn2)--(bang) -| ++(0.5,1) -| ($(whyn1)+(-0.6,0)$) -- (whyn1);
              \draw [dotted] (whyn1)-- (whyn2);
            \end{scope}
          \end{scope}
        \end{scope}
      \end{tikzpicture}
      \caption{\label{figureLabelL3}Relations between levels of neighbour edges.}
    \end{figure}

    Baillot and Mazza noticed that the level of a box is not changed by $cut$-elimination~\cite{baillot2010linear}. This property also has an equivalent in the context semantics presentation: lemma \ref{L3ConsLevelContext}. Notice that the conservation of the quantity is only true for the $\rightsquigarrow$ relation, not the $\mapsto$ relation. This is why we used the $\rightsquigarrow$ relation in the definition of stratification. This makes the reasonings on $L^3$ more complex and partly explains why it was difficult to prove a strong bound for $L^3$.
    
    \begin{lemma} \label{L3ConsLevelContext}
      If $G$ is a $L^3$ proof-net and $(e,P,T,p) \rightsquigarrow^*_G (f,Q,U,q)$, then \begin{equation*}l(e)+ |T|_{\{!,?,\S\}} = l(f) + |U|_{\{!,?,\S\}} \end{equation*}
    \end{lemma}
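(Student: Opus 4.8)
The statement is a context-semantics analogue of the syntactic invariant ``the level of a box is not changed by $cut$-elimination'', phrased instead as the conservation of a numerical quantity along $\rightsquigarrow$ paths. Since $\rightsquigarrow^*$ is the reflexive-transitive closure of $\rightsquigarrow$, it suffices to prove the equality for a single $\rightsquigarrow$ step, $(e,P,T,p) \rightsquigarrow (f,Q,U,q)$, and then conclude by a trivial induction on the length of the path. So the plan is: first reduce to one step, then check the one-step claim by exhausting the rules of context semantics (figures \ref{multiplicative_context_semantic} and \ref{exponential_context_semantic}), reading off from figure \ref{figureLabelL3} the level constraint that the $L^3$ labelling imposes on the edges involved in each rule.

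\textbf{Key steps.} The case analysis splits into the rule families. For the $cut$ and $ax$ rules, the trace is unchanged ($U=T$) and the two edges joined carry the same level by the $L^3$ labelling constraints for $ax$ and $cut$; so both sides of the equation are unchanged. For the multiplicative rules ($\parr$, $\otimes$) and the quantifier rules ($\forall$, $\exists$), crossing the link pushes a trace element onto $T$ whose skeleton is not in $\{!,?,\S\}$, so $|U|_{\{!,?,\S\}} = |T|_{\{!,?,\S\}}$; and figure \ref{figureLabelL3} says the premise and conclusion of a $\parr$, $\otimes$, $\forall$, $\exists$ link all have the same level, so $l(e)=l(f)$ and the equality holds. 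The interesting cases are the exponential ones. When crossing a $?C$ link (either premise to conclusion) the trace gains a $?_{\sigl(t)}$ or $?_{\sigr(t)}$ from a $?_t$: the $\{!,?,\S\}$-count is unchanged, and figure \ref{figureLabelL3} gives all three edges of $?C$ the same level, so the equality holds. When crossing a $?D$ link (the first rule in figure \ref{exponential_context_semantic} for $?D$), the trace gains a $?_{\sige}$ so $|U|_{\{!,?,\S\}} = |T|_{\{!,?,\S\}}+1$; but the $L^3$ rule for $?D$ says $l(f) = l(e)-1$, so $l(e)+|T|_{\{!,?,\S\}} = (l(f)+1) + |T|_{\{!,?,\S\}} = l(f) + |U|_{\{!,?,\S\}}$, as required. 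The $\S$ rule (the $(h,P,T,+)\rightsquigarrow(i,P,T.\S,+)$ rule) is entirely parallel: trace count up by one, level down by one via the $\S$ rule of figure \ref{figureLabelL3}. For the $?N$ rule, crossing replaces $?_{t_1}.?_{t_2}$ by $?_{\sign(t_1,t_2)}$ (or $?_t$ by $?_{\sigp(t)}$), decreasing the $\{!,?,\S\}$-count by one, while the $L^3$ $?N$ labelling gives $l(f)=l(e)-1$ (the premise is $??A$, the conclusion $?A$), so again both sides balance. Finally the box rules ($?P$, $!P$): crossing an auxiliary door $(e,P.t,T,+)\rightsquigarrow(f,P,T.?_t,+)$ adds one $?$ to the trace, and figure \ref{figureLabelL3} says the conclusion of the $?P$ door is at level $i-1$ while its premise (inside the box) is at level $i$; crossing the principal door downward $(g,P.t,T,+)\rightsquigarrow(h,P,T.!_t,+)$ adds one $!$ and drops the level by one; the equation is preserved in both. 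The axiom-type case (the $p.X$ index translations) does not affect levels, which by definition ignore indices.

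\textbf{Main obstacle.} There is no deep obstacle here: this is a routine, if lengthy, verification that the conservation law $l + |\cdot|_{\{!,?,\S\}}$ is preserved by each $\rightsquigarrow$ rule, exactly mirroring the syntactic observation of Baillot and Mazza. The only point requiring a little care is the bookkeeping around the dual rules (``for any rule $C \rightsquigarrow D$ the dual rule $D^\perp \rightsquigarrow C^\perp$ holds''): one should note that $(\_)^\perp$ on traces preserves skeletons up to the involution $!_t \leftrightarrow ?_t$, which leaves the $\{!,?,\S\}$-count invariant, and that $A$ and $A^\perp$ have the same level under the $L^3$ labelling; hence it is enough to check one representative of each dual pair. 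I would write the proof simply as ``By induction on the length of the path; for the base case of a single step, case analysis on the $\rightsquigarrow$ rule using the level constraints of figure \ref{figureLabelL3}'' and spell out explicitly only the $?D$, $\S$, $?N$ and box-door cases, leaving the multiplicative and quantifier cases to the reader as the trace-count and the level are both unchanged there.
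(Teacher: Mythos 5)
Your overall strategy is exactly the paper's: the paper's proof is literally ``examine each $\rightsquigarrow$ rule'', and your rule-by-rule verification of the invariant $l(\cdot)+|\cdot|_{\{\oc,\wn,\S\}}$, including the reduction to one step and the remark that dual rules need not be rechecked, is the intended argument. However, your digging case is wrong as written, and it is the one case that actually distinguishes $L^3$ from $ELL$ (where the analogous Lemma \ref{ellConsDepthContext} already suffices). Crossing a $\wn N$ link downwards turns $T.\wn_{t_1}.\wn_{t_2}$ into $T.\wn_{\sign(t_1,t_2)}$, so the exponential count \emph{decreases} by one; for the invariant to be preserved the level must therefore \emph{increase} by one from premise to conclusion, i.e.\ $l(f)=l(e)+1$. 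You assert instead $l(f)=l(e)-1$ (by analogy with $\wn D$ and $\S$) and then claim ``both sides balance'', but with your constraint the quantity drops by $2$ across the link, so the case fails; with the correct constraint it balances. The correct sign is not a convention you get to choose: it is forced both by this lemma and by the Baillot--Mazza property that cut-elimination of a box against a digging (Figure \ref{exp_rules}) preserves levels --- there the former premise of the $\wn N$ ends up cut against the principal conclusion of a doubly nested box, which sits two levels below the box contents, so that premise must lie one level \emph{below} the $\wn N$ conclusion. (Figure \ref{figureLabelL3} does not display the $\wn N$ constraint, which is presumably what misled you; the analogy with dereliction gives the wrong sign precisely because digging merges two exponential trace elements instead of adding one.)

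A smaller point: your parenthetical treating the degenerate rule $(g,P,[\wn_t],+)\rightsquigarrow(h,P,[\wn_{\sigp(t)}],+)$ as also ``decreasing the count by one'' is inaccurate --- that rule leaves the count unchanged --- but it can only fire when the whole trace is a single $\wn$ element, which cannot occur on the paths the lemma is applied to (traces headed by a $\oc$ element), so it does not affect the argument once the $\sign$ case is fixed.
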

    \begin{proof}
      We can examine each $\rightsquigarrow$ rule. It is straightforward to see that none of those rules break the property. Notice that the $\hookrightarrow$ would break the property because the level on two doors of the same box can be different.
    \end{proof}

    \begin{lemma}
      If $G$ is a $L^3$ proof-net and $B \twoheadrightarrow B'$ then $l(\sigma(B)) > l(\sigma(B'))$
    \end{lemma}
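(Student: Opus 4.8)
The plan is to mimic exactly the structure of the analogous lemma for $ELL$ (Theorem~\ref{lemma_ell_stratifiee} and the lemma preceding it), but using levels instead of depths, and being careful to use $\rightsquigarrow$ rather than $\mapsto$ throughout. So first I would unfold the definition of $\twoheadrightarrow$: if $B \twoheadrightarrow B'$ then there are potentials $P,P'$, a signature $s$ and a trace $T'$ with $(\sigma(B),P,[\oc_s],+) \rightsquigarrow^* (\sigma(B'),P',T',-)$. The starting trace has exactly one element, $\oc_s$, which contributes $1$ to $|{\cdot}|_{\{!,?,\S\}}$, so by Lemma~\ref{L3ConsLevelContext} we get $l(\sigma(B)) + 1 = l(\sigma(B')) + |T'|_{\{!,?,\S\}}$.

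Next I would argue that $|T'|_{\{!,?,\S\}} \geq 2$, which gives $l(\sigma(B)) + 1 \geq l(\sigma(B')) + 2$, i.e. $l(\sigma(B)) > l(\sigma(B'))$. There are two contributions to count. On one hand, the leftmost trace element of every context along a $\rightsquigarrow$-path beginning at $(\sigma(B),P,[\oc_s],+)$ stays a $\oc$ trace element: no $\rightsquigarrow$ rule ever removes or alters the leftmost element of a trace of length $>1$, and in fact (as in the $ELL$ proof) the path cannot shrink the trace below that initial $\oc_s$ without the context being $\not\rightsquigarrow$, so $|T'|_{\{!\}} \geq 1$. On the other hand, $(\sigma(B'),P',T',-)$ arrives at the \emph{conclusion} of a principal door of a box going upward; for this $\rightsquigarrow$-step to even have been possible the rightmost trace element of $T'$ must be a $?_u$ element (this is the content of the underlying-formula machinery, Lemma~\ref{lemma_underlying_step}, which guarantees the context has a well-defined underlying formula and hence the trace is compatible with $\beta(\sigma(B'))^{-} = \oc(\ldots)$, whose topmost connective forces a $?$ on the trace). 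Since the initial $\oc_s$ is not at the rightmost position (the trace has length $\geq 2$ because it contains both a $!$ at the left and a $?$ at the right), these two are distinct trace elements, so $|T'|_{\{!,?,\S\}} \geq 2$.

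From $l(\sigma(B)) > l(\sigma(B'))$ the theorem follows immediately as stated. I would finish (in the same spirit as the $ELL$ case, though the statement here is just the single-step comparison) by noting this is exactly the ingredient needed to conclude that $\twoheadrightarrow$ is acyclic on any $L^3$ proof-net and that $S(B) \leq l(\sigma(B))$, so $S_G \leq l_G$, which plugged into Theorem~\ref{theoStratElementaryBound} yields the strong elementary bound $2^{3.|E_G|}_{3.l_G+1}$ for $L^3$.

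The main obstacle, and the subtle point distinguishing this from the $ELL$ case, is justifying that $\rightsquigarrow$-paths preserve the level quantity (Lemma~\ref{L3ConsLevelContext}) \emph{but} that the $\hookrightarrow$ jump does not — this is precisely why the definition of $\twoheadrightarrow$ uses $\rightsquigarrow^*$ and not $\mapsto^*$, and it is the reason the argument goes through: a $\hookrightarrow$ step moves between two doors of the same box whose levels may differ, so it would break the counting, whereas along a pure $\rightsquigarrow$-path entering $B'$ through its principal door the level of $\sigma(B')$ is the one that appears in the invariant. The delicate bookkeeping is making sure that the leftmost $\oc$ and the rightmost $?$ of $T'$ are genuinely two different positions; if the path were so short that $T' = [\oc_s]$ this would fail, but that is impossible because the final step is a $\rightsquigarrow$ arriving at a principal door from above, which cannot produce a trace whose only element is the original $\oc_s$ (such a context has a $!$ as its rightmost element and would have no underlying formula compatible with the $\oc$-formula on $\sigma(B')$, contradicting Lemma~\ref{lemma_underlying_step}).
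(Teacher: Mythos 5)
Your proof is correct and follows essentially the same route as the paper's: apply Lemma~\ref{L3ConsLevelContext} to the $\rightsquigarrow$-path given by $B \twoheadrightarrow B'$, note the initial $\oc$ survives so $|T'|_{\{\oc\}}\geq 1$, and use the underlying-formula lemmas to force the rightmost element of $T'$ to be a $\wn_u$, giving $l(\sigma(B))+1 \geq l(\sigma(B'))+2$. The only nitpick is a notational slip: with polarity $-$ the relevant formula is $\beta(\sigma(B'))^- = \wn(\ldots)$ (the dual of the $\oc$-formula), which is what forces the $?$ trace element — but your conclusion is the paper's.
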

    \begin{proof}
      If $B \twoheadrightarrow B'$, there exists $P,P' \in Pot$, $T' \in Tra$ and $t \in Sig$ such that $(\sigma(B),P,[!_t],+) \twoheadrightarrow (\sigma(B'), P',T',-)$. According to lemma \ref{L3ConsLevelContext}, $l(\sigma(B))+1 = l(\sigma(B'))+|T'|_{!}+|T'|_{?}$. 
      
      The initial $!$ in the trace can not disappear along the path, so $|T'|_{\{\oc\}} \geq 1$. Moreover, the underlying formula of $(\sigma(B),P,[!_t],+)$ is well-defined, so the underlying formula of $(\sigma(B'),P',T',-)$ is well-defined (Lemma \ref{lemma_underlying_mapsto}). The rightmost trace element of $T'$ is a $?_{u}$ trace element, so $|T'|_{\{\wn\}} \geq 1$. So, 
  \begin{align*}
    l(\sigma(B))+1 & \geq l(\sigma(B'))+1+1 \\
    l(\sigma(B))   & > l(\sigma(B'))
  \end{align*}
\end{proof}

    \begin{theorem}\label{lemma_l3_stratifiee}
      Let $G$ be an $L^3$ proof-net, the length of its longest path of reduction is bounded by  $2^{3.|E_G|}_{3.l_G+1}$
    \end{theorem}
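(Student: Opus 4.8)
The plan is to mirror exactly the structure of the proof of Theorem~\ref{lemma_ell_stratifiee} for $ELL$, replacing "depth" by "level" throughout. The key point is that the preceding two lemmas already do all the real work: Lemma~\ref{L3ConsLevelContext} gives the conservation of the quantity $l(\cdot)+|\cdot|_{\{!,?,\S\}}$ along $\rightsquigarrow$-paths, and the lemma immediately before this theorem shows that $B \twoheadrightarrow B'$ forces $l(\sigma(B)) > l(\sigma(B'))$. So the first step is purely to observe that $\twoheadrightarrow$ is acyclic: if $B \twoheadrightarrow B_1 \twoheadrightarrow \cdots \twoheadrightarrow B_n$, then $l(\sigma(B)) > l(\sigma(B_1)) > \cdots > l(\sigma(B_n)) \geq 0$, whence $n \leq l(\sigma(B)) \leq l_G$. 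In particular $G$ is principal door stratified (Definition~\ref{def_stratified}) and, for every box $B$, the stratum $S(B)$ (Definition~\ref{def_strata}) satisfies $S(B) \leq l(\sigma(B)) \leq l_G$, so $S_G \leq l_G$.

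Second, I would feed this into the generic elementary bound, Theorem~\ref{theoStratElementaryBound}, which states that any stratified proof-net has its longest reduction path bounded by $2^{3.|E_G|}_{3.S_G+1}$. Since $S_G \leq l_G$ and $b \mapsto 2^{c}_{b}$ is monotone in $b$ (for fixed $c \geq 0$, by the definition $a^c_0 = c$, $a^c_{b+1} = a^{a^c_b}$), we get immediately that the longest reduction path of $G$ is bounded by $2^{3.|E_G|}_{3.l_G+1}$, which is the stated theorem. That is the whole argument; it is essentially a two-line corollary once the level-conservation lemma and the $\twoheadrightarrow \Rightarrow$ strict-level-decrease lemma are in place.

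The only subtlety worth flagging — and the place where the $L^3$ case genuinely differs from $ELL$ — is that level conservation holds only for $\rightsquigarrow$, not for the full $\mapsto$ relation (the $\hookrightarrow$ jump can move between two doors of the same box whose levels differ). This is exactly why the definition of $\twoheadrightarrow$ (equation~\ref{def_twoheadrightsquigarrow}) uses $\rightsquigarrow^*$ rather than $\mapsto^*$, and why the strata are defined via $\rightsquigarrow$-paths from principal doors. Since the acyclicity of $\twoheadrightarrow$ and the stratum bound $S(B) \leq l(\sigma(B))$ only ever refer to $\rightsquigarrow$-paths, this causes no problem here; but it does mean I must be careful, when invoking the preceding lemma, that the witnessing path $(\sigma(B),P,[!_t],+) \rightsquigarrow^* (\sigma(B'),P',T',-)$ is indeed a $\rightsquigarrow$-path and that the conservation lemma applies to it verbatim. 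The main (very mild) obstacle is thus simply bookkeeping: checking that the chain-of-boxes argument uses $l(\sigma(B_i))$ and not $l$ of some interior edge, so that the strict decrease is legitimately along principal-door levels. No new calculation is required.

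\begin{proof}
  Let $B$ be a box of $G$, an $L^3$ proof-net. The depth of $B$ in terms of $\twoheadrightarrow$ is finite and at most $l_G$: if $B \twoheadrightarrow B_1 \twoheadrightarrow \cdots \twoheadrightarrow B_n$, then by the previous lemma $l(\sigma(B)) > l(\sigma(B_1)) > \cdots > l(\sigma(B_n)) \geq 0$, so $n \leq l(\sigma(B)) \leq l_G$. Hence $\twoheadrightarrow$ is acyclic, so $G$ is stratified, and $S(B) \leq l(\sigma(B)) \leq l_G$ for every box $B$; therefore $S_G \leq l_G$. Applying Theorem~\ref{theoStratElementaryBound} and the monotonicity of $b \mapsto 2^{3.|E_G|}_b$, the length of the longest path of reduction of $G$ is bounded by $2^{3.|E_G|}_{3.S_G+1} \leq 2^{3.|E_G|}_{3.l_G+1}$.
\end{proof}
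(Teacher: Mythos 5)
Your proof is correct and follows essentially the same route as the paper: use the preceding lemma to show $\twoheadrightarrow$ strictly decreases principal-door levels, deduce stratification with $S_G \leq l_G$, and conclude by Theorem~\ref{theoStratElementaryBound}. Your extra remarks on monotonicity of the tower and on $\rightsquigarrow$ versus $\mapsto$ are accurate but not needed beyond what the paper already does.
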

    \begin{proof}
      Let $B$ be a box of $G$, a $ELL$ proof net. Then the depth of $B$ in terms of $\twoheadrightarrow$ is finite and inferior or equal to $l(B)$. Indeed, let us suppose that $B \twoheadrightarrow B_1 \twoheadrightarrow \cdots \twoheadrightarrow B_n$, then $l(B) > l(B_1) > \cdots > l(B_n) = 0$. So $n \leq l(B)$. So $G$ is stratified, with $S(B) \leq l(B)$ for every box $B$.
    
      Theorem \ref{theoStratElementaryBound} immediately give us the stated theorem. 
    \end{proof}
    
    This bound is the first strong elementary bound for $L^3$. Nonetheless, we can compare it to the weak bound proved by Baillot and Mazza. Their bound was tighter. Indeed, they prove that their strategy reaches a normal form in at most $(l_G+1)\cdot2^{|E_G|}_{2\cdot l_G}$ steps. So their exponential tower has height $2 \cdot l_G$ while ours has height $3 \cdot l_G+1$. We do not think that the strategy used by Baillot and Mazza is particularly efficient. So, we believe that the gap between our two bounds is partly due to the fact that their proof is specialized for $L^3$ and mostly due to the rough bounds we used.

    \subsection{Polynomial bounds}
    \subsubsection{$LLL$: $ELL$ restricted to one auxiliary door boxes}
    \paragraph{}$LLL$~\cite{girard1995light} is a milestone in the implicit complexity field. Though $BLL$~\cite{girard1992bounded} was the first subsystem of linear logic characterizing polynomial time, $BLL$ was not totally implicit because it was based on polynomials indexing the formulae. So $LLL$ is the first totally implicit characterization of polynomial time based on linear logic. In order to obtain polynomial time soundness, Girard restricts linear logic in multiple ways: $?N$ and $?D$ are forbidden (so $LLL$ is a subsystem of $ELL$) and we allow at most one auxiliary door by box.

    \paragraph{}This system may be the most studied system among the linear logic based systems in implicit computational complexity. Therefore, there are already several proofs of the strong polynomial bound it entails. One of those is already done in the context semantics framework~\cite{lago2006context}. So, as in the $ELL$ case, the following result is not new. We include it, as an introduction to the next examples which will be slightly more complex.

    \begin{lemma}\label{lemma_lll_dependence_control}
      Let $G$ be a proof-net without digging, where all boxes have at most one auxiliary door. Then, $\forall B,C \in B_G, B \succcurlyeq_k C \Rightarrow k \leq 1$
    \end{lemma}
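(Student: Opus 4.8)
The plan is to unfold the definitions of $\succcurlyeq_k$ and $\succcurlyeq_k^s$ and show that each $\succcurlyeq_1$-step forces the "source" potential box to have exactly one relevant $\mapsto_s$-copy reaching a given auxiliary door, so that no product $k_1 \cdot k_2 \cdots k_n$ in the definition of $B \succcurlyeq_k C$ can exceed $1$. Recall from Definition~\ref{def_kjoins} that $(B,P) \succcurlyeq_k^s (B',P')$ means $k$ is the number of $\mapsto_s$-copies $t$ of $(B,P)$ admitting a simplification $u \sqsupseteq t$ with $(\sigma(B),P,[!_u],+) \rightsquigarrow_s^* (\sigma_i(B'),P',[!_{\sige}],-)$, and that $B \succcurlyeq_k^s B'$ bounds $k$ by the maximal value of $\sum k_1 \cdots k_n$ over $\succcurlyeq^s$-chains. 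So it suffices to prove that for every potential box $(B,P)$ and every $(B',P')$, $(B,P) \succcurlyeq_k^s (B',P') \Rightarrow k \leq 1$; the bound on $B \succcurlyeq_k C$ then follows since every factor in the product is $\leq 1$.

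First I would note that since $G$ has no digging, there are no $?N$ links, so all exponential signatures appearing in potentials and traces are standard and in fact built only from $\sige, \sigl, \sigr$ (no $\sigp$ either, as $\sigp$ only arises from the $?N$ rule of context semantics). Next, the key structural observation: because every box has at most one auxiliary door, the index $i$ in $\sigma_i(B')$ is always $0$, so whenever a $\rightsquigarrow_s$-path enters a box $B'$ from outside and leaves through an auxiliary door (in the reverse direction of the path, i.e. the path goes $(\sigma(B'),P',[!_{\sige}],-)$ in the "arrival" sense), there is a unique door to consider. The real work is then to show that two distinct $\mapsto_s$-copies $t \neq t'$ of $(B,P)$ cannot both have simplifications whose associated $\rightsquigarrow_s$-paths reach $(\sigma_0(B'),P',[!_{\sige}],-)$. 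This is exactly where I would invoke the injection lemma (Lemma~\ref{injection_lemma}): if $(\sigma(B),P,[!_u],+) \rightsquigarrow_s^* (\sigma_0(B'),P',[!_{\sige}],-)$ and $(\sigma(B),P,[!_{u'}],+) \rightsquigarrow_s^* (\sigma_0(B'),P',[!_{\sige}],-)$ with $u \sqsupseteq t$, $u' \sqsupseteq t'$, then running both antipaths backward from the common final context $(\sigma_0(B'),P',[!_{\sige}],-)$, Lemma~\ref{injection_lemma} forces the two antipaths to stay $\sim_s$-equivalent, hence (since they start on the same edge $\sigma_0(B')$ with the same potential) they traverse the same links and the final contexts $(\sigma(B),P,[!_u],+)$ and $(\sigma(B),P,[!_{u'}],+)$ must coincide, so $u = u'$. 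Then $t \sqsubseteq u = u' \sqsupseteq t'$, and since $t,t'$ are both standard and minimal for $\sqsubseteq$ among copies, $t = t'$.

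I would then need to be careful about one subtlety: Lemma~\ref{injection_lemma} is stated for $\rightsquigarrow_s$ (not $\mapsto_s$), and it gives $C_e \sim_s C_e'$ given $C_f \sim_s C_f'$; I need to iterate it down the whole antipath and conclude that equal edges and equal $\mapsto_{s-1}$-canonical potentials at the end, together with the structural constraint that differences can only live on $!$-stacks exiting boxes of stratum $\geq s$, forces literal equality of the starting contexts when the starting edge is a principal door $\sigma(B)$ with a fixed canonical potential $P$ (so there is no "stratum $\geq s$ slack" left to absorb a difference). The absence of digging is what makes this clean: without $?N$, the only way the trace can differ between two $\sim_s$-related contexts on the same edge with the same $\simeq_s$-class of potential is via $!_u$ versus $!_{u'}$ with $u,u'$ both copies of the same box, and by the definition of copies this already forces $u=u'$. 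Putting this together: $k \leq 1$ for every $\succcurlyeq_k^s$, hence every factor in the product defining $B \succcurlyeq_k C$ is at most $1$, hence $k \leq 1$.

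The main obstacle I expect is the bookkeeping in the $\sim_s$-equivalence argument: I must make sure that when I run the antipath from $(\sigma_0(B'),P',[!_{\sige}],-)$ back toward $\sigma(B)$, the $\sim_s$ relation is genuinely preserved at every step (Lemma~\ref{injection_lemma} handles the $\rightsquigarrow_s$ steps, but I should double-check there are no $\hookrightarrow$ jumps that escape its scope, which is fine here because the statement is about $\rightsquigarrow_s^*$-paths, not $\mapsto_s$-paths) and that the conclusion "the two starting contexts are equal" really follows from "same edge, $\sim_s$-equivalent, and the edge is a principal door whose canonical potential is fixed to $P$ in both". The cleanest route may actually be to bypass $\sim_s$ entirely and argue directly: with no $?N$ and at most one auxiliary door per box, the $\rightsquigarrow_s$ transition system restricted to paths from principal doors is "backward deterministic on the pair (edge, trace-skeleton)" in a strong enough sense that the final context $(\sigma_0(B'),P',[!_{\sige}],-)$ has at most one $\rightsquigarrow_s$-antecedent path starting at any given $(\sigma(B),P)$ — essentially because $\rightsquigarrow$ is bideterministic (noted just after Definition~\ref{def_strata}) and the only place bideterminism could fail, namely a $?C$ link entered from below where the premise is chosen by the top trace element, is resolved by the trace, while a $?N$ link (the other source of ambiguity) is excluded by hypothesis. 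I would present whichever of these two arguments turns out shorter once the details are checked.
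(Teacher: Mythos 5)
Your per-pair argument is essentially sound, and your closing remark already identifies the paper's actual mechanism: since $\rightsquigarrow$ is bideterministic and a context $(\sigma(B),P,[\oc_t],+)$ has no $\rightsquigarrow$-antecedent, two $\rightsquigarrow_s^*$-paths ending at the same context $(\sigma_i(C),Q,[\oc_{\sige}],-)$ (unique $i$, since $C$ has a single auxiliary door) must coincide entirely, so the source context is unique; the heavier $\sim_s$/antipath machinery of Lemma~\ref{injection_lemma} is not needed for that step, and in the absence of digging the simplification $u$ equals the copy $t$ itself, so uniqueness of $u$ gives uniqueness of $t$ directly.

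The genuine gap is in your final assembly. By Definition~\ref{def_kjoins}, $B \succcurlyeq_k C$ bounds $k$ by a \emph{sum over all} $\succcurlyeq^s$-chains from $(B,P)$ to $(C,Q)$ of the products $k_1\cdots k_n$, maximized over potentials. Showing that every factor $k_i$ is at most $1$ only bounds each product by $1$; if two distinct chains from $(B,P)$ to $(C,Q)$ each contributed $1$, the sum would be $2$ and the lemma would fail. So you must additionally prove that at most one chain contributes. The paper does this by proving a statement strictly stronger than your per-pair bound, namely $\sum_{(B',P') \succcurlyeq_k (C,Q)} k \leq 1$ for every potential box $(C,Q)$ — i.e.\ uniqueness of the joining potential box \emph{across all source boxes}, not just uniqueness of the copy for a fixed source — and then arguing by backward induction along the two chains (the paper invokes Lemma~\ref{injection_lemma}; your bideterminism argument works just as well) that any two chains ending at $(C,Q)$ coincide, so the sum has a single term equal to $1$. (One also needs acyclicity, e.g.\ via Lemma~\ref{lemma_strong_acyclicity}, to exclude the degenerate case where one chain is a proper suffix of the other, which would force a potential box to join itself.) Your backward-determinism argument contains all the raw material for this, but as stated your plan never establishes uniqueness of the chain, and the inference ``every factor is $\leq 1$ hence $k \leq 1$'' is invalid against the actual definition.
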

    \begin{proof}
      As a first step, we will prove the following property: $\forall (C,Q) \in Pot(B_G), \sum_{(B,P) \succcurlyeq_k (C,Q)} k  \leq 1$. Suppose there are two potential boxes $(B_1,P_1)$ and $(B_2,P_2)$ such that $(B_1,P_1) \succcurlyeq_{k_1} (C,Q)$ and $(B_2,P_2) \succcurlyeq_{k_2} (C,Q)$. Box $C$ has only one auxiliary door (by definition of $LLL$) so, $(\sigma(B_1),P_1,[!_{t_1}],+) \rightsquigarrow^* (\sigma_1(C),Q,[!_{\sige}],-)$ and $(\sigma(B_2),P_2,[!_{t_2}],+) \rightsquigarrow^* (\sigma_1(C),Q,[!_{\sige}],-)$. However, $\rightsquigarrow$ is bideterministic and $\not \rightsquigarrow (\sigma(B_i), P_i, [\oc_{t_i}],+)$ so $(B_1,P_1)=(B_2,P_2)$. To prove that $k$ is always $\leq 1$, we use the same argument.
      
      Then to prove the lemma, we suppose that there exists two sequels: $(B,P) \succcurlyeq_{k_1} \cdots \succcurlyeq_{k'_n} (B_n,P_n) \succcurlyeq (C,Q)$ and  $(B,P) \succcurlyeq_{k_1'} \cdots \succcurlyeq^1_{k'_{n'}} (B'_{n'},P'_{n'}) \succcurlyeq (C,Q)$. Using the injection lemma (lemma \ref{injection_lemma}), we can prove by induction on $i$ that $(B_{n-i},P_{n-i})=(B'_{n'-i},P'_{n'-i})$ and $k'_{n'-i}=k_{n-i}=1$. This proves that the sum in the definition of $B \succcurlyeq C$ has at most one term and this term is $1$.
    \end{proof}

    \begin{theorem}
      Let $G$ be a $LLL$ proof-net, $x=|E_G|$ $\partial=\partial_G+1$. The lenght of the longest path of reduction of $G$ is bounded by
      \begin{equation*}
        x^{3+4\left(4  (\partial_G +1)^{2+ 2 \cdot \partial_G }\right)}
      \end{equation*}
    \end{theorem}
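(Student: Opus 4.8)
The plan is to instantiate the general polynomial bound of Theorem~\ref{theo_polynomial_bound} to the special case of $LLL$ proof-nets, after checking that such proof-nets satisfy both hypotheses of that theorem, namely that they are principal door stratified and that they control dependence. The stratification is immediate: $LLL$ is a subsystem of $ELL$, so Theorem~\ref{lemma_ell_stratifiee} (or more precisely the stratification argument behind it, that $B \twoheadrightarrow B'$ implies $\partial(B) > \partial(B')$) shows that $\twoheadrightarrow$ is acyclic and moreover that $S(B) \leq \partial(B)$ for every box $B$, hence $S_G \leq \partial_G$. For dependence control, I would invoke Lemma~\ref{lemma_lll_dependence_control}, which states that for $LLL$ proof-nets $B \succcurlyeq_k C$ implies $k \leq 1$; in particular $\succcurlyeq_2^{S(G)}$ is empty, hence irreflexive, so $G$ controls dependence (Definition~\ref{def_controlsdependence}) and moreover $N_G = 0$.

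Once both hypotheses are verified, the bound is obtained by plugging the $LLL$-specific parameter values into the conclusion of Theorem~\ref{theo_polynomial_bound}. That theorem gives, with $x = |E_G|$, $N = N_G + 1$, $S = S_G + 1$, $\partial = \partial_G + 1$, a maximal reduction length bounded by $x^{3 + 4\left(4 N \cdot \partial^{2 N S}\right)}$. For $LLL$ we have $N_G = 0$, so $N = 1$, and $S_G \leq \partial_G$, so $S = S_G + 1 \leq \partial_G + 1 = \partial$. Substituting $N = 1$ and $S \leq \partial$ into the exponent: $4 N \cdot \partial^{2 N S} = 4 \cdot \partial^{2 S} \leq 4 \cdot \partial^{2\partial} = 4 (\partial_G+1)^{2(\partial_G+1)}$. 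Hmm — I should double-check against the exponent actually claimed in the statement, which is $4\left(4(\partial_G+1)^{2 + 2\partial_G}\right)$, i.e. it uses $(\partial_G+1)^{2 + 2\partial_G}$ rather than $(\partial_G+1)^{2(\partial_G+1)} = (\partial_G+1)^{2 + 2\partial_G}$ — these are in fact equal, so the claimed exponent is exactly $4(4 N \cdot \partial^{2NS})$ with $N=1$, $S \le \partial$, $\partial = \partial_G+1$. So the computation is just careful bookkeeping of which parameter collapses to which value.

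The key steps, in order, are: (1) observe $LLL \subseteq ELL$ and cite the $ELL$ stratification argument to get that $G$ is stratified with $S_G \leq \partial_G$; (2) cite Lemma~\ref{lemma_lll_dependence_control} to get $\succcurlyeq_2$ empty, hence $G$ controls dependence and $N_G = 0$; (3) apply Theorem~\ref{theo_polynomial_bound} with $x = |E_G|$, $\partial = \partial_G + 1$, $N = N_G + 1 = 1$, $S = S_G + 1 \leq \partial_G + 1$; (4) simplify the exponent $3 + 4(4 N \cdot \partial^{2NS})$ using $N = 1$, $S \leq \partial$, yielding $3 + 4\left(4(\partial_G+1)^{2 + 2\partial_G}\right)$, and note that replacing $S$ by its upper bound only enlarges the bound since the base $\partial_G+1 \geq 1$. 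I do not anticipate a genuine mathematical obstacle here; the only delicate point is making sure the parameter collapses are applied consistently (in particular that $S_G$, not $N_G$, is the one bounded by $\partial_G$, and that $N_G = 0$ for $LLL$), and that monotonicity of the bound in $S$ justifies substituting the upper bound $S \leq \partial$ into the final expression. This is essentially a corollary-style argument and the proof should be short.

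\begin{proof}
  Since $LLL$ is a fragment of $ELL$, Theorem~\ref{lemma_ell_stratifiee} applies; more precisely, the argument used there shows that $B \twoheadrightarrow B'$ implies $\partial(B) > \partial(B')$, so $\twoheadrightarrow$ is acyclic and $G$ is principal door stratified, with $S(B) \leq \partial(B)$ for every box $B$ and hence $S_G \leq \partial_G$. By Lemma~\ref{lemma_lll_dependence_control}, for every $B, C \in B_G$ we have $B \succcurlyeq_k C \Rightarrow k \leq 1$, so in particular $\succcurlyeq_2^{S(G)}$ is empty, hence irreflexive, and $G$ controls dependence (Definition~\ref{def_controlsdependence}); moreover, since no box $k$-joins another with $k \geq 2$, we get $N_G = 0$.

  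We may thus apply Theorem~\ref{theo_polynomial_bound}. With $x = |E_G|$, $N = N_G + 1 = 1$, $S = S_G + 1$ and $\partial = \partial_G + 1$, the maximal reduction length of $G$ is bounded by $x^{3 + 4\left(4 N \cdot \partial^{2 \cdot N \cdot S}\right)}$. Since $N = 1$ and $S = S_G + 1 \leq \partial_G + 1 = \partial$, and the base $\partial \geq 1$, we have
  \begin{equation*}
    4 N \cdot \partial^{2 \cdot N \cdot S} = 4 \cdot \partial^{2 S} \leq 4 \cdot \partial^{2 \partial} = 4 (\partial_G + 1)^{2 + 2 \partial_G}.
  \end{equation*}
  Therefore the maximal reduction length of $G$ is bounded by
  \begin{equation*}
    x^{3 + 4\left(4 (\partial_G + 1)^{2 + 2 \partial_G}\right)},
  \end{equation*}
  which is the announced bound.
\end{proof}
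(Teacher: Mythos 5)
Your proposal is correct and follows the same route as the paper: stratification with $S_G \leq \partial_G$ via the $ELL$ inclusion (Theorem~\ref{lemma_ell_stratifiee}), dependence control with $N_G = 0$ via Lemma~\ref{lemma_lll_dependence_control}, then instantiation of Theorem~\ref{theo_polynomial_bound}. Your explicit arithmetic check that the exponent collapses to $3+4\bigl(4(\partial_G+1)^{2+2\partial_G}\bigr)$ is just a more detailed spelling-out of what the paper leaves implicit.
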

    \begin{proof}
      $LLL$ is included in $ELL$ so, $G$ is stratified and $S_G \leq \partial_G$ (Theorem \ref{lemma_ell_stratifiee}). Moreover, from Lemma \ref{lemma_lll_dependence_control}, $LLL$ controls dependence and $N_G =0$. So, Theorem \ref{theo_polynomial_bound} gives us the expected bound.
    \end{proof}
    The polynomial only depends on the depth of the proof net. The depth of cut-free binary words (in the church encoding) is bounded by $1$, so for any proof-net $G$ of $LLL$ representing a function on binary words, there exists a polynomial $p$ such that the number of steps to reduce $G$ applied to $H$ ($H$ being a cut-free proof-net representing a word of size $n$) is inferior to $p(n)$.
    
    \subsubsection{$L^4$: $L^3$ restricted to one auxiliary door boxes}
    Similarly to $LLL$ which is obtained from $ELL$ by forbidding boxes with more than one auxiliary door, Baillot and Mazza restricted $L^3$ to capture polynomial time. However, forbidding boxes with more than one auxiliary door was not enough to ensure $Ptime$ soundness so they also forbid the digging. $L^4$ is defined as the fragment of $L^3$ containing exactly the proof-nets for without digging and such that all the $!$ boxes have at most one auxiliary door. 

We define $l_G$ as $\max \Set{l(e)}{e \in E_G}$. There has been a weak polynomial bound proved for $L^4$ proof-nets in~\cite{baillot2010linear} for a particular strategy, but no strong bound\footnote{In fact, a proof of a strong bound is claimed in~\cite{vercelli2010phd}, but it contains flaws which do not seem to be easily patchable (more details in appendix \ref{appendix_vercelli}).}. This is problematic for the goal of designing a type system for $\lambda$-calculus based in $L^4$, because it is unclear wether the particular strategy on proof-nets of~\cite{baillot2010linear} could be converted into a $\beta$-reduction strategy. Therefore we want to prove a strong polynomial bound.

    \begin{theorem}\label{theo_polynomial_bound_l4}
      Let $G$ be a $L^4$ proof-net, the maximal reduction length of $G$, with $x=|E_G|$, is bounded by 
      \begin{equation*}
        x^{3+4\left(4 (\partial_G+1)^{2+2\cdot S_G}\right)}
      \end{equation*}
    \end{theorem}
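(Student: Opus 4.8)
The statement for $L^4$ parallels the $LLL$ case almost exactly, only swapping the source of the stratification bound (depth for $ELL/LLL$, level for $L^3/L^4$). The plan is therefore to combine two facts already established in the excerpt: that every $L^4$ proof-net is principal-door stratified with small stratum, and that it controls dependence with nest $0$. Once both are in hand, Theorem~\ref{theo_polynomial_bound} yields the bound directly.

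First I would observe that $L^4$ is by definition a fragment of $L^3$, so Theorem~\ref{lemma_l3_stratifiee} (and more precisely the argument in its proof, via Lemma~\ref{L3ConsLevelContext} and the subsequent lemma showing $B \twoheadrightarrow B' \Rightarrow l(\sigma(B)) > l(\sigma(B'))$) tells us that $G$ is stratified with $S(B) \le l(\sigma(B))$ for every box $B$, hence $S_G \le l_G$. Second, $L^4$ is a fragment of the class of proof-nets ``without digging, all boxes having at most one auxiliary door''. Therefore Lemma~\ref{lemma_lll_dependence_control} applies verbatim: for all $B, C \in B_G$, $B \succcurlyeq_k C \Rightarrow k \le 1$, so $\succcurlyeq_2^{S(G)}$ is irreflexive and $G$ controls dependence. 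Moreover, since $\succcurlyeq_2^{S_G}$ is empty, every box has nest $0$, i.e.\ $N_G = 0$.

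Putting these together: $G$ is a stratified proof-net controlling dependence, so Theorem~\ref{theo_polynomial_bound} applies with parameters $x = |E_G|$, $N = N_G + 1 = 1$, $S = S_G + 1$ and $\partial = \partial_G + 1$. Substituting $N = 1$ into the bound $x^{3 + 4(4 N \cdot \partial^{2 N S})}$ gives $x^{3 + 4(4 \cdot (\partial_G+1)^{2(S_G+1)})} = x^{3 + 4(4 (\partial_G+1)^{2 + 2 S_G})}$, which is exactly the claimed expression. (One should remark, as in the $LLL$ case, that the degree of the polynomial depends only on $\partial_G$ and $S_G \le l_G$, which are themselves bounded by the number of boxes of $G$; this is what makes the bound a genuine polynomial-time soundness statement for the function represented by $G$.)

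The only genuinely non-routine point is checking that Lemma~\ref{lemma_lll_dependence_control} really does apply without modification, i.e.\ that the hypothesis ``no digging, at most one auxiliary door per $!$-box'' is met by all $L^4$ proof-nets — this is immediate from the definition of $L^4$, but it is worth spelling out that the lemma's proof only uses bideterminism of $\rightsquigarrow$ and the injection lemma, neither of which is sensitive to the level labelling. Everything else is bookkeeping: the stratification half is a direct quotation of the $L^3$ result, and the arithmetic simplification of the exponent is trivial once $N_G = 0$ is known. So the expected main obstacle is essentially nil; the work has all been done in Sections~\ref{section_stratification} and~\ref{section_dependence} and in the $LLL$ subsection.
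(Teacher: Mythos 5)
Your proposal is correct and follows exactly the paper's proof: stratification with $S_G \le l_G$ from Theorem~\ref{lemma_l3_stratifiee}, dependence control with $N_G = 0$ from Lemma~\ref{lemma_lll_dependence_control} (which applies since $L^4$ forbids digging and allows at most one auxiliary door per box), and then Theorem~\ref{theo_polynomial_bound} with $N = 1$, $S = S_G+1$, $\partial = \partial_G+1$. The arithmetic simplification of the exponent you carry out is the same as the paper's implicit substitution, so there is nothing to add.
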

    \begin{proof}
      From Theorem \ref{lemma_l3_stratifiee}, we get that $G$ is stratified with $S_G \leq l_G$. From Lemma \ref{lemma_lll_dependence_control}, we get that $G$ controls dependence with $N_G =0$. So, we can use Theorem \ref{theo_polynomial_bound} to get the expected result.
    \end{proof}

    \subsubsection{$MS$, extending the ``one auxiliary door'' condition}
    In \cite{roversi2009some}, Roversi and Vercelli define $MS$, a framework of subsystems of $ELL$. First, for any $M \in \mathbb{N}$, they define a set of formulae $\mathcal{F}^M_{MS}$ defined as $\mathcal{F}_{LL}$ where we index $\oc$ and $\wn$ modalities by integers in $\{1,2,\cdots,M\}$. 

    Then, they pay special attention to the contraction and promotion schemes of Figure \ref{rules_labeling}. They name $Y_q(n,m)$ the scheme allowing a contraction link with $\wn_qA$ as a conclusion, $\wn_nA$ as a left premise and $\wn_mA$ as a right premise. They name $P_q(m_1,\cdots,m_k)$ the scheme allowing a box with conclusions $\wn_{m_1}A_1$, $\wn_{m_2}A_2$, …, $\wn_{m_k}A_k$, $\oc_q C$. 
    
    A subsystem of $MS$ is a set of contraction and promotion schemes instances. It represents the set of proof-nets which can be build using only those instances of contraction and promotion. For example $\{Y_6(1,2)\}$ represents a very restricted system with no box and with contraction allowed only with premises $\wn_1A$ and $\wn_2A$ and conclusion $\wn_6(A)$. We can get $ELL$ with the following $MS$ subsystem (compared to $ELL$ this system has indices on $\oc$ and $\wn$ modalities but do not use them):$\Set{P_q(m_1,\cdots,m_k)}{k,q,m_1,\cdots,m_k \in \mathbb{N}} \cup \Set{Y_q(n,m)}{q,n,m \in \mathbb{N}}$.

    In \cite{roversi2010local}, they prove that the subsystems containing (up to permutation of indices):
    \begin{itemize}
    \item All the rules $Y_q(m,n)$ for every $q < m,n$
    \item All the rules $Y_q(q,n)$ for every $q < n$
    \item All the rules $P_q(m_1,\cdots,m_k)$ for every $m_1,\cdots,m_k < q$
    \item All the rules $P_q(q,m_1,\cdots,m_k)$ for every $m_1,\cdots,m_k < q$
    \item Either $Y_q(q,q)$ or $P_q(m_1,\cdots,m_k)$ for every $m_1,\cdots,m_k \leq q$
    \end{itemize}
    are $PTIME$ sound. The ``or'' in the last line is exclusive. Those systemes are named the $PTIME$-maximal $MS$ systems (because they are the maximal $PTIME$ sound subsystems of $MS$ verifying some other conditions).

    For the following lemma, let us observe that we could extend the notion of underlying formula to the $MS$ proof-nets. Contrary to the indices on ground formulae which are deleted in the underlying formula, we will keep the indices on the $\wn$ and $\oc$. Then, Theorem \ref{theorem_underlying_formula} becomes:
    \begin{lemma}
      Let $C$ and $D$ be contexts of a proof-net $G$ of a $PTIME$-maximal $MS$ system. If $C \mapsto D$ and $\beta(C)=\oc_m A$ then $\beta(D)=\oc_nB$ with $m \leq n$
    \end{lemma}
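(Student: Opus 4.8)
The plan is to mimic the proof of Theorem~\ref{theorem_underlying_formula}, but now tracking the $\oc/\wn$ indices rather than merely the shape of the formula. Since $\beta(C)$ is defined and $C \mapsto D$, the analogue of Theorem~\ref{theorem_underlying_formula} (extended to keep the modal indices) already tells us that $\beta(D)$ is defined; what remains is to compare the index on the head $\oc$ of $\beta(C)$ with that of $\beta(D)$. First I would reduce to the case of a single $\mapsto$ step, and then do a case analysis on which rule of context semantics is used. In every case where the head connective of the underlying formula is unchanged (crossing an $ax$, $cut$, $\parr$, $\otimes$, $\forall$, $\exists$, $\S$, $\oc P$ or $\wn P$ link in a way that does not touch the outermost $\oc_m$, or a $\hookrightarrow$ step), the head $\oc_m$ is literally preserved, so $m=n$ and there is nothing to prove. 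The interesting steps are those that change the outermost modality: crossing a $?D$, $?C$, $?N$ link downward or upward, and crossing the principal or auxiliary door of a box.

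The heart of the argument is to observe that each such step corresponds to exactly one instance of a contraction scheme $Y_q(k,l)$ or a promotion scheme $P_q(m_1,\dots,m_k)$ present in the $PTIME$-maximal $MS$ system we are working in, and that the defining inequalities of those schemes give precisely $m \le n$. Concretely: when the path goes up through a $?C$ link, the underlying formula on the conclusion is $\wn_q A$ and on the chosen premise $\wn_k A$ (or $\wn_l A$), and the scheme $Y_q(k,l)$ being in the system forces $q \le k$ (or $q \le l$); dually for the descending case. When the path crosses a box door, the underlying formula of the principal door is $\oc_q C$ and of an auxiliary door $\wn_{m_j} A_j$, and membership of $P_q(m_1,\dots,m_k)$ in the system forces $m_j \ge q$ for all $j$ — and again the direction of the path (entering versus leaving the box, passing to the principal versus an auxiliary door) makes the comparison come out as $m \le n$. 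The $?D$ and $?N$ cases need a little separate care: a $?D$ link erases a modality entirely, so it only occurs at the very end of a copy path (the final-context of Lemma~\ref{lemma_copy_context_final_context}), where the hypothesis $\beta(C)=\oc_m A$ with the head still a box modality rules it out as an internal step that produces another $\oc_n B$; and a $?N$ link relates $\oc_q$ to $\oc_{q'}$ where the digging scheme's constraint gives the needed monotonicity — or, since the $PTIME$-maximal $MS$ systems are subsystems of $ELL$ and thus have no $\wn N$ link, this case simply does not arise.

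I expect the main obstacle to be purely bookkeeping: getting the orientation right in each door-crossing and contraction case so that the inequality from the $MS$ scheme lands on the correct side of $m \le n$ rather than $m \ge n$. The underlying formula is defined via $\beta(\cdot)^p$, so the polarity $p$ and the duality $(\oc_q)^\perp = \wn_q$ must be threaded carefully through the trace manipulations of Definition~\ref{def_betasetshit}; a sign error there would flip the conclusion. A secondary point to check is that extending the notion of underlying formula to $MS$ proof-nets (keeping indices on $\oc,\wn$, erasing indices on ground formulae) does not break Lemma~\ref{lemma_underlying_step} and hence Theorem~\ref{theorem_underlying_formula} — but this is immediate since the index on a modality is never consulted by any $\rightsquigarrow$ rule, exactly as the $\S$ connective and the ground-formula indices were handled in the original proof. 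Modulo that verification, the statement follows by the one-step case analysis sketched above.
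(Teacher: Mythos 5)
The paper states this lemma without a proof, so there is no official argument to compare yours against; judged on its own terms, however, your case analysis locates the difficulty in the wrong place. With Definition~\ref{def_underlyingformulae}, crossing a $\wn C$ link or a box door by a $\rightsquigarrow$ step never exposes the indices $q,k,l$ or $m_1,\dots,m_k$ of the schemes $Y_q(k,l)$, $P_q(m_1,\dots,m_k)$: the modality whose index differs between premise and conclusion is peeled off together with the trace element being added or removed, and what remains is the same formula $A$ on both sides, inner indices included. Hence for every $\rightsquigarrow$ step the underlying formula is preserved on the nose (this is just the MS-indexed reading of Theorem~\ref{theorem_underlying_formula}), so $m=n$ there and the scheme inequalities are never consulted. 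Your sentence ``the underlying formula on the conclusion is $\wn_q A$ and on the chosen premise $\wn_k A$'' conflates the formula labelling an edge with the underlying formula of the context --- and it is in any case incompatible with the hypothesis $\beta(C)=\oc_m A$, which says the head of the underlying formula is a $\oc$. (Your remarks on $\wn D$ and $\wn N$ are fine, since the $MS$ systems sit inside $ELL$.)

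The genuinely delicate case is the one you dismiss as trivial: the $\hookrightarrow$ jump. There $C=(\sigma_i(B),P,[\oc_t],-)$ and $D=(\sigma(B),P,[\oc_t],+)$, and by Definition~\ref{def_underlyingformulae} $\beta(C)$ is the dual of the formula under the auxiliary door's $\wn_{m_i}$, while $\beta(D)$ is the formula under the principal door's $\oc_q$; these are in general unrelated formulas, so the head $\oc_m$ is certainly not ``literally preserved''. This is exactly why the paper proves strict invariance only for $\rightsquigarrow$ (Lemma~\ref{lemma_underlying_mapsto} preserves only non-emptiness across $\hookrightarrow$), and why the MS lemma is stated for $\mapsto$ with an inequality rather than an equality: all of its content, and the only point where the promotion constraint ``auxiliary indices $\geq$ principal index'' of a $PTIME$-maximal system can enter, is this jump, for which your sketch supplies no argument. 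To repair the proof you would have to treat the jump explicitly, relating the index attached to the leftmost $\oc$ trace element before the jump (an auxiliary-door index $m_i$) to the one after it (the principal-door index $q$), and check that the promotion inequality really lands on the side $m\leq n$ claimed in the statement; with the literal definition of $\beta$ (where the modality pointed at by the leftmost trace element has already been peeled away) this is not automatic, and it is precisely the step missing from your proposal.
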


    \begin{theorem}
      Let $G$ be a proof-net of a $PTIME$-maximal $MS$ system. The maximal reduction length of $G$, with $x=|E_G|$ and $m$ the maximal index on $\oc$ and $\wn$ modalities in $G$, is bounded by 
      \begin{equation*}
        x^{3+4\left(4 (m+1)\cdot (\partial_G+1)^{2\cdot (m+1) \cdot (\partial_G+1)}\right)}
      \end{equation*}
    \end{theorem}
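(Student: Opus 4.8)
The plan is to reduce this to Theorem~\ref{theo_polynomial_bound} exactly as was done for $LLL$ and $L^4$, by showing that every proof-net $G$ of a $PTIME$-maximal $MS$ system is principal door stratified and controls dependence, with the parameters $S_G$ and $N_G$ controlled by the maximal modality index $m$. First I would establish the stratification. The key is the conservation-of-level property: a $PTIME$-maximal $MS$ system has no digging (the schemes are all $P_q$ and $Y_q$ promotions and contractions of $ELL$-style, so depth is preserved), and moreover the modality indices play the role of levels. Using the lemma stated just above the theorem (the $MS$-analogue of Theorem~\ref{theorem_underlying_formula}: if $C \mapsto D$ and $\beta(C)=\oc_m A$ then $\beta(D)=\oc_n B$ with $m\leq n$) together with Lemma~\ref{lemma_underlying_step} and Lemma~\ref{lemma_underlying_mapsto}, I would show that if $B \twoheadrightarrow B'$ then the index of $\sigma(B)$ is strictly less than the index of $\sigma(B')$. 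Indeed, a path $(\sigma(B),P,[\oc_t],+) \rightsquigarrow^* (\sigma(B'),P',T',-)$ begins with a $\oc$ trace element that can never disappear and ends, by Lemma~\ref{lemma_underlying_mapsto}, with a well-defined underlying formula hence a $\wn$ rightmost trace element; so at least two modal trace elements are accumulated, which combined with the index-monotonicity forces a strict increase. This bounds the $\twoheadrightarrow$-depth of any box by $m$, so $G$ is stratified with $S_G \leq m$; this mirrors the proofs of Theorem~\ref{lemma_ell_stratifiee} and Theorem~\ref{lemma_l3_stratifiee}.

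Next I would prove dependence control. The $PTIME$-maximality conditions state that in every contraction $Y_q(n,m)$ we have $q \leq n$ and $q \leq m$, and in every promotion $P_q(m_1,\dots,m_k)$ we have $q \leq m_i$ for all auxiliary doors; moreover the exclusive-or clause says that for each index $q$, either $Y_q(q,q)$ is present \emph{or} some $P_q$ with an auxiliary door of the same index $q$ is present, but not both. The argument is the $MS$-analogue of Lemma~\ref{lemma_lll_dependence_control}: I would show that $\succcurlyeq^{S_G}_2$ is irreflexive by tracking modality indices along the relevant paths. If $(B,P) \succcurlyeq^s_2 (B',P')$ then two distinct copies of $(B,P)$ reach $(\sigma_i(B'),P',[\oc_{\sige}],-)$ through the same auxiliary door, which requires either passing an auxiliary door of index $q$ equal to the box index $q$ \emph{and} a contraction with both premises of that index, or some other configuration; the exclusive-or forbids exactly this simultaneous use, forcing a strict decrease of the relevant index along any $\succcurlyeq^s_2$ step (or along $\succcurlyeq^s_1$ composed into a chain contributing a factor $\geq 2$). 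Hence $\succcurlyeq^s_2$ is acyclic with $N_G \leq m$, and $G$ controls dependence.

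With $S_G \leq m$ and $N_G \leq m$ in hand, Theorem~\ref{theo_polynomial_bound} applied with $N=N_G+1\leq m+1$, $S=S_G+1\leq m+1$ and $\partial=\partial_G+1$ gives the bound
\begin{equation*}
  |E_G|^{3+4\left(4 (m+1)\cdot (\partial_G+1)^{2\cdot (m+1) \cdot (\partial_G+1)}\right)},
\end{equation*}
which is exactly the stated inequality with $x=|E_G|$. I expect the main obstacle to be the dependence-control step: unlike $LLL$ and $L^4$ where the ``at most one auxiliary door'' hypothesis makes the paths bideterministic and the argument almost immediate via the injection lemma (Lemma~\ref{injection_lemma}), here boxes may have several auxiliary doors, so I must carefully use the exclusive-or clause of $PTIME$-maximality to rule out the ``spindle'' configuration of Figure~\ref{exp}, and simultaneously the index monotonicity lemma to rule out the digging-free analogue; reconciling these two ingredients on the level of context-semantics paths (in particular handling the $\hookrightarrow$ jumps and the colonies $Col_s(B,P)$ correctly) is the delicate part. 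A secondary subtlety is verifying that the $MS$-version of the underlying-formula theorem stated above genuinely holds for $\mapsto$ and not merely $\rightsquigarrow$, since the $\hookrightarrow$ rule can jump between doors of a box carrying different indices --- but as in the $L^3$ analysis (Lemma~\ref{L3ConsLevelContext}) the stratification relation $\twoheadrightarrow$ is defined via $\rightsquigarrow$ only, so this causes no real problem for the argument.
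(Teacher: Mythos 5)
Your overall architecture (verify principal-door stratification and dependence control, then invoke Theorem~\ref{theo_polynomial_bound}) is the intended one --- the paper in fact states this theorem without giving a proof --- and your dependence-control half, bounding $N_G$ by $m$ via the index-monotonicity lemma and the exclusive-or clause of $PTIME$-maximality, is the right analogue of the $L^{3a}$ argument (note, though, that you quote the promotion condition backwards: the paper requires the auxiliary-door indices $m_i$ to be $\leq q$, not $\geq q$, and it is this direction that makes the index weakly increase along a $\hookrightarrow$ jump). The genuine gap is the stratification step. You claim $S_G \leq m$ by arguing that $B \twoheadrightarrow B'$ forces a strict increase of the box index, but nothing supports this: $MS$ indices are constrained only at the contraction and promotion schemes, so they are not levels in the sense of Lemma~\ref{L3ConsLevelContext}; there is no conservation law relating the index of $\sigma(B)$ to the modalities met along an arbitrary path, and along pure $\rightsquigarrow$ paths (which is all that $\twoheadrightarrow$ uses) the underlying formula is simply invariant (Theorem~\ref{theorem_underlying_formula}), so the quoted monotonicity lemma --- whose content concerns $\hookrightarrow$ jumps from a $\wn_{m_i}$ auxiliary door to a $\oc_q$ principal door --- yields no strict inequality for $\twoheadrightarrow$. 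What is actually needed (and what the shape of the stated bound encodes) is the $ELL$ argument: $PTIME$-maximal $MS$ systems have neither dereliction nor digging, so by the depth-conservation argument of Theorem~\ref{lemma_ell_stratifiee}, $B \twoheadrightarrow B'$ implies $\partial(B) > \partial(B')$, hence $S_G \leq \partial_G$.

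Moreover, even granting $S_G \leq m$, your final substitution is arithmetically wrong: with $N=N_G+1\leq m+1$ and $S=S_G+1\leq m+1$, Theorem~\ref{theo_polynomial_bound} gives the exponent $(\partial_G+1)^{2(m+1)^2}$, not $(\partial_G+1)^{2(m+1)(\partial_G+1)}$, and when $m>\partial_G$ this is weaker than the bound to be proved. The stated bound is exactly $x^{3+4(4N\partial^{2NS})}$ with $N=m+1$ and $S=\partial_G+1$, so the proof needs $N_G\leq m$ (your dependence-control argument) together with $S_G\leq \partial_G$ (the $ELL$ depth argument), not $S_G\leq m$.
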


    \subsubsection{$L^{3a}$, merging the ideas of $L^3$ and $MS$}
    The system $L^{3a}$ relies on the idea of relaxing the ``one auxiliary door'' condition~\cite{dorman2009linear}on $L^4$. It is defined as the subset of $L^3$ where : there is no digging and the level of the auxiliary doors of a box are greater or equal to the level of the principal door (with a maximum of one auxiliary door at the same level than the principal door).

    \begin{lemma}
      Let $G$ be a $L^3_a$ proof-net then $G$ control dependence and for every $B \in B_G$, $N(B) \leq l(B)$
    \end{lemma}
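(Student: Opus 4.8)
The plan is to show that in an $L^{3a}$ proof-net the box relation $\succcurlyeq^{s}_2$ strictly decreases the level of the principal door, and to read off both conclusions from this. First I would record two preliminary facts. Since $L^{3a}\subseteq L^3$, the proof of Theorem~\ref{lemma_l3_stratifiee} already shows that $G$ is principal door stratified, so that ``$G$ controls dependence'' is even meaningful. And since $L^{3a}$ forbids digging, $G$ has no $?N$ link; as the constructors $\sign$ and $\sigp$ are produced only by $?N$-steps, no exponential signature arising in the context semantics of $G$ contains $\sign$ or $\sigp$, and on such signatures $\sqsubseteq$ coincides with equality. In particular a copy $t$ of a potential box $(B,P)$ is counted in $(B,P)\succcurlyeq^s_k(B',P')$ precisely when $(\sigma(B),P,[\oc_t],+)\rightsquigarrow^*_s(\sigma_i(B'),P',[\oc_{\sige}],-)$ for some $i$.

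The core lemma I would prove is: if $(B,P)\succcurlyeq^s_k(B',P')$ with $k\geq 1$, then every auxiliary door $\sigma_i(B')$ through which a counted copy enters $B'$ has $l(\sigma_i(B'))=l(\sigma(B))$, so that $l(\sigma(B))\geq l(\sigma(B'))$; and if moreover $k\geq 2$ then $l(\sigma(B))>l(\sigma(B'))$. The first point is Lemma~\ref{L3ConsLevelContext} applied to the $\rightsquigarrow^*_s$ path, both relevant traces having $|\cdot|_{\{\oc,\wn,\S\}}=1$, combined with the $L^{3a}$ rule that auxiliary doors have level $\geq$ that of the principal door. For the second point, note that two distinct copies cannot enter $B'$ through the same door: $\rightsquigarrow$ is bideterministic, hence $\rightsquigarrow_s$ is backward-deterministic, and $(\sigma(B),P,[\oc_t],+)$ has no $\rightsquigarrow$-predecessor, so two such source contexts would both sit at the top of the unique backward $\rightsquigarrow_s$-chain issued from $(\sigma_i(B'),P',[\oc_{\sige}],-)$ and therefore be equal (using also that $\sqsubseteq$ is $=$ here, so distinct copies are distinct $t$'s). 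Thus $k\geq 2$ forces at least two distinct auxiliary doors of $B'$, all of level $l(\sigma(B))$; since at most one auxiliary door of $B'$ has level equal to $l(\sigma(B'))$, one of them has level $>l(\sigma(B'))$, whence $l(\sigma(B))>l(\sigma(B'))$.

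Passing to the box-level relation, suppose $B\succcurlyeq^{s}_2 B'$, witnessed by potentials $P,P'$ for which the sum, over $\succcurlyeq^s$-chains from $(B,P)$ to $(B',P')$, of the products of multiplicities is at least $2$. Along any single such chain the core lemma gives $l(\sigma(B))\geq l(\sigma(B_1))\geq\cdots\geq l(\sigma(B'))$. If some chain has a step with multiplicity $\geq 2$, the core lemma produces a strict drop there, hence $l(\sigma(B))>l(\sigma(B'))$. Otherwise all summands equal $1$, so there are two distinct chains; reading them backwards from $(B',P')$, let $(D,R)$ be the last common potential box before they diverge, with distinct successors $(C_1,Q_1)\neq(C_2,Q_2)$, so $(C_1,Q_1)\succcurlyeq^s(D,R)$ and $(C_2,Q_2)\succcurlyeq^s(D,R)$. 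Counted copies of $(C_1,Q_1)$ and of $(C_2,Q_2)$ enter $D$ by doors $\sigma_i(D)$ and $\sigma_{i'}(D)$; if $i=i'$ the backward-determinism argument forces $(\sigma(C_1),Q_1)=(\sigma(C_2),Q_2)$, contradicting $(C_1,Q_1)\neq(C_2,Q_2)$, so $i\neq i'$. As $l(\sigma(C_1))=l(\sigma_i(D))$ and $l(\sigma(C_2))=l(\sigma_{i'}(D))$ and these two doors are distinct, one of them, say $\sigma_i(D)$, has level $>l(\sigma(D))$; then along the first chain $l(\sigma(B))\geq l(\sigma(C_1))=l(\sigma_i(D))>l(\sigma(D))\geq l(\sigma(B'))$. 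In every case $B\succcurlyeq^{s}_2 B'$ implies $l(\sigma(B))>l(\sigma(B'))$.

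Finally I would take $s=S_G$. If $B\succcurlyeq^{S_G}_2 B$ then $l(\sigma(B))>l(\sigma(B))$, absurd, so $\succcurlyeq^{S_G}_2$ is irreflexive and $G$ controls dependence. Moreover $N(B)=N_{S_G}(B)$ is the length of the longest $\succcurlyeq^{S_G}_2$-chain issued from $B$, and along $B\succcurlyeq^{S_G}_2 B_1\succcurlyeq^{S_G}_2\cdots\succcurlyeq^{S_G}_2 B_m$ the levels strictly decrease while remaining $\geq 0$, so $m\leq l(\sigma(B))=l(B)$. The one genuinely delicate point is the box-level relation: since $B\succcurlyeq_k B'$ is defined through a sum over chains of products of multiplicities, one must separately rule out the case where $\succcurlyeq_2$ arises from two parallel chains each of product $1$, which is exactly what the divergence-point analysis handles, again leaning on bideterminism of $\rightsquigarrow$ and the ``at most one auxiliary door at principal level'' constraint of $L^{3a}$.
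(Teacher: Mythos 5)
Your proof is correct and follows essentially the same route as the paper's: you show that $B \succcurlyeq^s_2 B'$ forces $l(\sigma(B)) > l(\sigma(B'))$ by combining level conservation along $\rightsquigarrow$-paths (Lemma~\ref{L3ConsLevelContext}), the $L^{3a}$ constraint that at most one auxiliary door sits at the level of the principal door, and backward determinism of $\rightsquigarrow$ at the divergence point of two $\succcurlyeq^s$-chains, and then deduce irreflexivity of $\succcurlyeq^{S_G}_2$ and $N(B)\leq l(B)$ from the strict level decrease. Your write-up is in fact more detailed than the paper's (you separate the case where the multiplicity $2$ comes from a single step from the two-distinct-chains case, and you justify why distinct copies must enter by distinct doors); the only shared loose end is the degenerate situation where one chain is a suffix of the other, which neither argument treats explicitly.
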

    \begin{proof}
      We will prove that if $B \succcurlyeq^s_2 C$, then $l(B)>l(C)$. The lemma will follow by induction.

      Let us suppose that $B \succcurlyeq^s_2 C$, then there exists $P \in L_{\mapsto}(B)$, $Q \in L_{\mapsto}(C)$ and two sequences of $\succcurlyeq^s$: $(B,P) \succcurlyeq^s (B_0,P_0) \succcurlyeq^s \cdots \succcurlyeq^s (B_n,P_n) \succcurlyeq^s (C,Q)$ and $(B,P) \succcurlyeq (B'_0,P'_0) \succcurlyeq^s \cdots \succcurlyeq^s (B_{n'},P_{n'}) \succcurlyeq^s (C,Q)$. Let $i= \max \Set{j \in \mathbb{N}}{ \forall k \leq j, (B_{n-k},P_{n-k})=(B'_{n'-k},P'_{n'-k})}$, then either $l(B_{n-k-1}) > l(B_{n-k})$ or $l(B_{n'-k-1}) > l(B_{n'-k})$. The level of boxes is decreasing or stable during the remaining of the sequence, so $l(B)>l(C)$.
    \end{proof}

\begin{theorem} \label{theorem_strong_bound_l3a}
  Let $G$ be a $L^{3a}$ proof-net, with $x=|E_G|$, then the length of the longest reduction path is inferior to:
    \begin{equation*}
      x^{3+4\left(4 (S_G+1)\cdot (\partial_G+1)^{2\cdot (S_G+1)^2}\right)}
    \end{equation*}
\end{theorem}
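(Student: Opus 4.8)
The plan is to derive this bound as a direct instance of the general polynomial bound of Theorem~\ref{theo_polynomial_bound}: one only has to check that every $L^{3a}$ proof-net meets its two hypotheses --- being (principal door) stratified and controlling dependence --- and then to express the parameters $N_G$, $S_G$, $\partial_G$ occurring in that bound in terms of the data available for $L^{3a}$.

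First I would settle stratification. Since $L^{3a}$ is by definition a subsystem of $L^3$, Theorem~\ref{lemma_l3_stratifiee} applies, so $G$ is stratified; moreover its proof establishes more than the stated elementary bound, namely that the level $l$ strictly decreases along $\twoheadrightarrow$, whence $S(B)\le l(B)$ for every box $B$ and so $S_G\le l_G$. Second, dependence control is exactly the content of the lemma immediately preceding the present theorem: every $L^{3a}$ proof-net controls dependence, and for every box $B$ one has $N(B)\le l(B)$, so $N_G\le l_G$. At this point both hypotheses of Theorem~\ref{theo_polynomial_bound} are in place.

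Next I would feed this into Theorem~\ref{theo_polynomial_bound}. Writing $x=|E_G|$, $\partial=\partial_G+1$, $S=S_G+1$ and $N=N_G+1$, that theorem bounds the maximal reduction length by $x^{3+4(4N\cdot\partial^{2NS})}$. The remaining step is arithmetic: since $x\ge 1$ and the exponent $3+4(4N\cdot\partial^{2NS})$ is non-decreasing in $N$ and in $S$, it suffices to bound $N$ and $S$ above by $S_G+1$, after which the exponent becomes $3+4(4(S_G+1)\cdot(\partial_G+1)^{2(S_G+1)^2})$, which is precisely the claimed bound. The inequality $S\le S_G+1$ is trivial. The inequality $N\le S_G+1$, i.e.\ $N_G\le S_G$, is the one genuine point: it is where the characteristic feature of $L^{3a}$ is used, namely that a single family of labels $l$ simultaneously enforces stratification and controls dependence, so that a box's nest cannot exceed its stratum. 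Concretely, combining $N(B)\le l(B)$ from the previous lemma with the tightness of the $L^{3a}$ labelling relative to $\twoheadrightarrow$ (which yields $l(B)\le S(B)$, hence $l(B)=S(B)$ together with $S(B)\le l(B)$) gives $N(B)\le S(B)$ for every box, and therefore $N_G\le S_G$.

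I expect the main obstacle to be exactly this reconciliation of $N_G$ and $S_G$, rather than anything in the context-semantics machinery --- all of which is already packaged inside Theorems~\ref{theo_polynomial_bound} and~\ref{lemma_l3_stratifiee} and the preceding lemma. The delicate point is to be sure that the labelling $l$ witnessing membership of $G$ in $L^3$ can be taken so that its maximum actually equals the stratification parameter $S_G$ and does not merely dominate it; without this one would only obtain the bound with $l_G$ in place of $S_G$. Once $N_G\le S_G$ is established, the theorem follows by the one-line substitution described above.
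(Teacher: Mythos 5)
Your overall route is the paper's route: check that $L^{3a}$ proof-nets are stratified (via Theorem~\ref{lemma_l3_stratifiee}, giving $S(B)\le l(B)$) and control dependence (via the lemma preceding the statement, giving $N(B)\le l(B)$), then instantiate Theorem~\ref{theo_polynomial_bound}. The problem is the step you yourself single out as the crux. To replace $N=N_G+1$ by $S_G+1$ in the exponent you need $N_G\le S_G$, and you try to get it from a claimed ``tightness'' inequality $l(B)\le S(B)$, hence $l(B)=S(B)$. That inequality is not available and is false in general: the level rules force $l$ to increase across $\wn D$, $\S$ and box borders regardless of whether any $\twoheadrightarrow$-dependence is created, so levels dominate strata but never the reverse. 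A concrete obstruction is the spindle chain of Figure~\ref{exp} (which can be decorated with admissible $L^{3a}$ levels, all auxiliary doors of a box lying strictly above its principal door): there every path starting at a principal door enters the next box only through its \emph{auxiliary} doors, so no box satisfies $B\twoheadrightarrow B'$ and all strata are $0$, while the nests along the chain (and the levels) are strictly positive. So $N_G\le S_G$ fails in general, and nothing specific to $L^{3a}$ in your argument rescues it; nor can one simply ``choose'' the labelling to make $l_G=S_G$, since the local level constraints are insensitive to $\twoheadrightarrow$.

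What the two lemmas actually deliver is $N_G\le l_G$ and $S_G\le l_G$, and substituting these into Theorem~\ref{theo_polynomial_bound} gives the bound with $l_G$ in place of $S_G$ --- exactly the weaker conclusion you note you would be left with. That is the derivation supported by the paper's own lemmas (the statement's use of $S_G$ should be read against the chain $N_G, S_G \le l_G$ rather than against an inequality $N_G\le S_G$). So the gap in your proposal is genuine: either settle for the $l_G$-bound, or supply a separate argument bounding the nest by the stratum, which the route through $l(B)\le S(B)$ cannot provide.
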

    
    \section{Strong polynomial bound for $L^4_0$}
    \label{section_l40}
    \paragraph{} When Baillot and Mazza created $L^4$, they noticed that $\S$ commuted with every other connective (for example $\forall X, \forall Y, \S (X \otimes Y) \multimap (\S X \otimes \S Y)$ and $\forall X, \forall Y, (\S X \otimes \S Y) \multimap \S (X \otimes Y)$ are provable in $L^4$. Therefore, it was a pity to differentiate proof-nets which had not any computationnal difference. They created $L^4_0$ which can be considered as a modification of $L^4$ where all the $\S$ connectives of the formulae are pushed down to the ground formulae and where all the $\S$ links of the proof-nets are pushed up to the axioms.
    
    \begin{definition}
      $L^4_0$ is the subset of $LL_0$ proof-nets where there is no $\S$ (neither as connectives in the formulae nor as links in the proof-nets), where all boxes have at most one auxiliary door and where we can label each edge with a level according to the rules of figure \ref{figureLabelL3}.
    \end{definition}

    \paragraph{} The commuting diagram of figure \ref{fig_commute_l4_l40} is used in~\cite{baillot2010linear} to prove that $L^4_0$ captures $Ptime$ in some sense. In the following, we will use it to prove a strong polynomial bound for $L^4_0$. Basically, it shows that a calculus in $L^4_0$ can be simulated in $\eta$-expansed $L^4_0$ and that a calculus in $\eta$-expansed $L^4_0$ is the same that a calculus in $L^4$. The rules of $\eta$-expansion can be seen in figure \ref{def_eta_expansion}. The strong polynomial bound for $L^4_0$ results from the following inequality:
    \begin{equation} \label{ineq_l40}
      T_{G_0} \leq T_{G_1} \leq T_{G} \leq poly(|G|) \leq poly(|G_0|)
    \end{equation}
    
    %\tikzset{external/remake next} 
    \begin{figure}\centering
      \begin{tikzpicture}
        \node [ax]     (ax)   at (0,0)  {};
        \draw [ar, out=-160, in=90] (ax) to node [type, left] {$A^\perp \parr B^\perp$} ($(ax)+(-1,-0.7)$);
        \draw [ar, out= -20, in=90] (ax) to node [type] {$p.(A    \otimes B)$}     ($(ax)+( 1,-0.7)$);

        \node [par]    (par)  at (7.5,0)            {};
        \node [tensor] (tens) at ($(par)+(3,0)$) {};
        \node [ax]     (axA)  at ($(tens)!0.5!(par)+(0,0.7)$) {};
        \node [ax]     (axB)  at ($(tens)!0.5!(par)+(0,1.5)$)   {};
        \draw [ar,dashed, out=-180, in=130] (axA) to node [type, above left]  {$A^\perp$} (par) ;
        \draw [ar,dashed, out=   0, in=130] (axA) to node [type, above]       {$p.A$}    (tens);
        \draw [ar,dashed, out=-180, in= 50] (axB) to node [type, above]       {$B^\perp$} (par);
        \draw [ar,dashed, out=   0, in= 50] (axB) to node [type, above right] {$p.B$}    (tens);
        \draw [ar] (par) -- ($(par) +(0,-0.8)$) node [type,left]      {$A^\perp \parr B^\perp$};
        \draw [ar] (tens)-- ($(tens)+(0,-0.8)$) node [type] {$p.A \otimes p.B = p. (A \otimes B)$};
        
        \draw [->]  ($(ax)!0.3!(par)$) -- ($(ax)!0.7!(par)$) node [below left] {$\eta$};

        \begin{scope}[shift={(0,-3)}]
          \node [ax]     (ax)   at (0,0)  {};
          \draw [ar, out=-160, in=90] (ax) to node [type, left] {$\wn A^\perp$} ($(ax)+(-1,-0.7)$);
          \draw [ar, out= -20, in=90] (ax) to node [type] {$p.(\oc A)$}     ($(ax)+( 1,-0.7)$);

          \node [auxdoor]    (aux)  at (7.5,0)             {};
          \node [princdoor]  (princ) at ($(aux)+(3,0)$) {};
          \node [ax]     (axA)  at ($(aux)!0.5!(princ)+(0,1)$) {};
          \draw [ar,dashed, out=-180, in=90] (axA) to node [type, left]  {$A^\perp$} (aux) ;
          \draw [ar,dashed, out=   0, in=90] (axA) to node [type]        {$p.A$}    (princ);
          \draw [ar] (aux)  -- ($(aux)  +(0,-0.8)$) node [type,left]      {$\wn (A^\perp)$};
          \draw [ar] (princ)-- ($(princ)+(0,-0.8)$) node [type] {$\oc (p.A) = p. (\oc A)$};
          \draw (princ) -| ++ (1,1.5) -| ($(aux)+(-1,0)$) -- (aux) -- (princ);

          \draw [->]  ($(ax)!0.3!(aux)$) -- ($(ax)!0.7!(aux)$) node [below left] {$\eta$};
        \end{scope}
      \end{tikzpicture}
      \caption{\label{def_eta_expansion}Rules of the $\eta$ relation}
    \end{figure}

    The polynomial will only depend of the depth, the maximum level of $G_0$, and the size and depth of its formulae. Thus, when a $L^4_0$ proof-net is applied to binary words in normal form, the length of the reduction is bounded by a fixed polynomial on the size of the argument word.

    \paragraph{} The diagram of figure \ref{fig_commute_l4_l40} uses several functions $(\_)_0,(\_)_1,(\_)^-$ on proof-nets and a relation $\rightarrow_{\eta}$ on proof-nets, which we will define below. $(\_)_0$ and $(\_)_1$ were first defined in~\cite{baillot2010linear} (along with the definition of $L^4_0$). As in the original paper, we will first define $(\_)_0$ and $(\_)_1$ on formulae: We set $X_0 = X$, $(X^\perp)_0 = X^\perp$, $(\S A)_0 = 1.(A_0)$ and $(\_)_0$ commutes with all the other connectives. Similarly, $(i.X)_1= \S^i X$, $(i.X^\perp)_1 = \S^i X^\perp$ and $(\_)_1$ commutes with all the other connectives. We can notice that $(\_)_0 \circ (\_)_1$ is the identity on $L^4_0$ while $(\_)_1 \circ (\_)_0$ is not the identity on $L^4$ because it pushes all the $\S$ to the axioms. Now, we will define $(\_)_0$ and $(\_)_1$ on proof-nets. 
    
    $G_0$ is the proof-net obtained from $G$ in the following way. For every edge $e$ we replace the label $\beta(e)$ by $(\beta(e))_0$ and we delete all $\S$ link.
    
    $G_1$ is the proof-net obtained from $G$ by: 
    \begin{itemize}
      \item First, $\eta$-expansing each axiom link
      \item Then, replace the axiom of conclusions labelled by $p.X^\perp$ and $q.X$ (with $p > q$) by the axiom of conclusions $e$, labelled by $X^\perp$ and $f$, labelled by $X$. Finally we add $p$ links labelled by $\S$ on edge $e$ and $q$ links labelled by $\S$ on edge $f$.
    \end{itemize}

    \begin{lemma}
      If $G$ is a $L^4$ proof-net, $G_0$ is a $L^4_0$ proof-net. If $G$ is a $L^4_0$ proof-net, $G_1$ is a $L^4$ proof-net.
    \end{lemma}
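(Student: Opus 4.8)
The plan is to prove the two implications by a direct check that the transformed net satisfies, in each case, the three defining conditions of the target system: absence of the $\S$ connective (for $L^4_0$), at most one auxiliary door per box, and the existence of a valid level labelling in the sense of Figure \ref{figureLabelL3} — read in its $LL_0$ variant, where the axiom link is allowed to carry an index and its two conclusions may accordingly sit at levels differing by that index, as in Figure \ref{rules_labeling}. Since all the conditions involved are local (constraints on single links) except level-labelability, and the operations $(\_)_0$ and $(\_)_1$ commute with every connective, most cases are routine; the work is concentrated in the treatment of levels and of the axiom link.

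For the first implication, start from a valid level labelling $l$ of the $L^4$ proof-net $G$. First, $(\_)_0$ eliminates $\S$ from every formula: an easy induction, using $(\S A)_0 = 1.(A_0)$ and the fact that $q.(\_)$ only shifts the indices of atoms and commutes with every connective, hence preserves $\S$-freeness; deleting the $\S$ links then removes the last occurrences of $\S$. The box structure of $G_0$ is exactly that of $G$ with the $\S$ links deleted — doors are untouched — so every box still has at most one auxiliary door. The delicate point is that the deletion of the $\S$ links (after relabelling by $(\_)_0$) must still yield a proof-net equipped with a valid labelling: each $\S$ link sits, after relabelling, between an edge carrying $A_0$ and an edge carrying $1.(A_0)$, so the offset created by $(\_)_0$ must be absorbed. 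I would handle this by induction on the construction of $G$, observing that the offset introduced at a $\S$ link propagates downward through the commuting relabelling until it reaches either another $\S$ link — where it stacks — or, at the leaves of the formula trees, an axiom; and the $LL_0$ axiom rule of Figure \ref{rules_labeling} is precisely flexible enough to absorb it. Concretely, one assigns to each edge of $G_0$, which corresponds to a maximal chain of $G$-edges linked by $\S$ links, the $l$-level of the bottom end of that chain, and then checks the rule of Figure \ref{figureLabelL3} at every surviving link; the only non-immediate case is the axiom, whose two conclusions then lie at levels $i-p$ and $i-q$ where $p,q$ count the $\S$ links crossed on the two sides and $i$ is the common axiom level in $G$, which is exactly what the indexed axiom rule prescribes.

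For the second implication, recall $G_1$ is obtained from a $L^4_0$ proof-net $G$ by first $\eta$-expanding every axiom (Figure \ref{def_eta_expansion}) and then replacing each resulting axiom on indexed atoms by a plain axiom on $X,X^\perp$ decorated with as many $\S$ links on each conclusion as that conclusion carried index. The net $G_1$ carries no index on atoms, which is what $L^4$ requires, and although $\S$ connectives reappear, these are permitted in $L^4$. The one-auxiliary-door condition is preserved: $\eta$-expanding the axiom on $\wn A^\perp,\oc A$ creates a box with a single auxiliary door, the other $\eta$-steps and the $\S$-decorations create no boxes, and $G$'s own boxes already satisfy it. Since $G$ is a $L^4_0$ proof-net it contains no digging link, and neither the $\eta$-expansion nor the replacement of indexed axioms introduces a $\wn N$ link, so $G_1$ contains none either. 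Finally, $G_1$ is level-labellable: extend $l$ through the $\eta$-expansion, keeping the level of the $\eta$-expanded conclusion and giving the new internal edges the levels dictated by the $\otimes$, $\parr$ and box rules of Figure \ref{figureLabelL3}; at the replacement step, use that a $\S$ link decrements the level by one going down, so a conclusion carrying index $p$ in $G$ (and thus contributing a level offset $p$ at the axiom in the $LL_0$ labelling) is faithfully simulated by $p$ stacked $\S$ links. The levels then agree by construction, and $G_1$, being level-labellable, index-free, digging-free and with all boxes having at most one auxiliary door, is a $L^4$ proof-net.

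The main obstacle in both directions is the bookkeeping of levels under the collapse (respectively creation) of $\S$ links, together with the verification that the indexed axiom rule — the single place where the two conclusions of a link may live at different levels — absorbs exactly the offsets produced by $(\_)_0$ and is reproduced exactly by the stacked $\S$ links in $G_1$. Everything else reduces to a finite case analysis over the link types, using that $(\_)_0$ and $(\_)_1$ commute with all connectives and that the remaining structural constraints of $L^4$ and $L^4_0$ are local and manifestly unaffected by the transformations; one also notes in passing that $(\_)_0$ and $(\_)_1$ are then genuinely well-defined on proof-nets and mutually inverse up to $\eta$-expansion on this fragment, which is immediate from the definitions.
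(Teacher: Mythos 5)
The paper itself states this lemma without proof (it is essentially imported from Baillot and Mazza's construction), so there is no official argument to compare yours against; judged on its own merits, your proposal is fine on the second implication and on the structural conditions (no digging, one auxiliary door per box, $\eta$-expansion creating only one-door boxes), and your decision to read the axiom rule in its indexed $LL_0$ form of Figure~\ref{rules_labeling} rather than literally as in Figure~\ref{figureLabelL3} is the right interpretive move. But the first implication, as you write it, has a genuine gap. Your ``concretely'' step assigns to each merged edge the level of the bottom end of its $\S$-chain, leaves every other level and every formula label $(\beta(e))_0$ untouched, and claims the only non-immediate check is the axiom. That is only true when every $\S$ link of $G$ sits directly under an axiom conclusion. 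In general a $\S$ link can sit below an arbitrary link: take an axiom on $X,X^\perp$ and one on $Y,Y^\perp$, a $\otimes$ link forming $X\otimes Y$ at level $i$, and a $\S$ link below it giving $\S(X\otimes Y)$ at level $i-1$. After deleting the $\S$ link, the $\otimes$ link has premises at level $i$ and conclusion at level $i-1$ with no level-decrementing link left, violating the $\otimes$ rule of Figure~\ref{figureLabelL3}; and the formula constraint breaks at the same spot, since the merged edge must carry either $(\S(X\otimes Y))_0=1.X\otimes 1.Y$ (mismatching the premises $X,Y$ of the $\otimes$) or $X\otimes Y$ (mismatching whatever consumes it below, e.g.\ a cut against a $1$-shifted dual). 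The offset created at a deleted $\S$ link is \emph{not} absorbed at an axiom under the edge-wise relabelling; it is stranded at the link immediately above the deleted $\S$.

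The sentence in your proof about the offset ``propagating until it reaches an axiom'' points at the correct repair, but you never carry it out, and it is incompatible with the concrete assignment you then make (note also that axioms are sources here, so the propagation is upward, against the edge orientation). The fix is to relabel every edge $e$ of $G_0$ by $s(e).(\beta(e)_0)$ and give it level $l(e)-s(e)$, where $s(e)$ is the number of $\S$ links met on the unique downward path from $e$ to its cut or pending conclusion: then the two sides of every deleted $\S$ link agree, all multiplicative, quantifier, exponential and door rules are preserved (the shift $q.(\,)$ commutes with the connectives), cut premises get $s=0$ and stay dual, and the whole discrepancy accumulates only at axioms, where the indexed axiom of Figure~\ref{rules_labeling} (conclusions $B$ and $r.B^\perp$ at levels differing by $r$) absorbs it exactly as you intended. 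Alternatively, you could prove the lemma only in the case actually used in the paper, namely $G=(G')_1$ for an $L^4_0$ proof-net $G'$, where by construction every $\S$ link does sit immediately under an atomic axiom and your axiom-absorption argument suffices; but as a proof of the general statement, the first direction needs the shifted relabelling made explicit.
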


    %\tikzset{external/remake next}
    \begin{figure}
      \centering
      \begin{tikzpicture}
        \begin{scope}[scale=0.7]
          \tikzstyle{every node}=[minimum width=0.3]
          \draw [lightblue, fill] (-2.2,4.5) rectangle (4.5,1.4);
          \draw [lightgreen,fill] (-2.2,0.5) rectangle (4.5,-1);
          \draw [gray, fill, opacity=0.2] (5.6,-1.2) rectangle (6.8,4.5);
          \draw [gray] (7.2,1.7) node {\large {$LL$}};
          \draw [darkgreen] (-2.7,0) node {\large {$L^4$ }};
          \draw [blue] (-2.7,3) node {\large {$L^4_0$}};
          \draw (0,4) node (pi0) {$G_0$};
          \draw (0,-0.5) node (pi)  {$G$};
          \draw (4,-0.5) node (pi') {};
          \draw (4,4) node (pi0'){};
          \draw (0,2) node (pi1) {$G_1$};
          \draw (4,2) node (pi1'){};
          \draw (6.3,1) node (fin) {};
          
          \draw[->, thick] (pi0) to [bend right=45] node [midway, left] {$(.)_1$} (pi);
          \draw[->, thick, dashed] (pi0')-- (pi1') node [midway, right] {$\eta$} node [midway, left] {$*$};
          \draw[->, very thick] (pi0)--(pi0') node [midway, below, black] {$NF$} node [near end, above, red]{$n_0$};
          \draw[->, very thick](pi1)--(pi1') node [midway, below, black] {$NF$} node[near end, above, red]{$n_0^\eta$};
          \draw[->, very thick]  (pi) --(pi') node [midway, below, black] {$NF$} node [near end, above, red] {$n_1$};
          \draw[->, thick] (pi)--(pi1) node [midway, right] {$(.)_0$};
          \draw[->, thick] (pi0)--(pi1) node [midway, right] {$\eta$} node [left, midway]{*};
          \draw[->, thick] (pi1')--(fin);
          \draw[->, thick] (pi')--(fin);
        \end{scope}
      \end{tikzpicture}
      \caption{ \label{fig_commute_l4_l40} The computations of $\eta$-expansed $L^4$ and $\eta$-expansed $L^4_0$ are quite related.}
    \end{figure}

We will prove the inequalities of \ref{ineq_l40} one by one. Here are the ideas of the proof. The proofs will be detailed in the following subsections. 
\begin{itemize}
\item $T_{G_0} \leq T_{G_1}$: the $\eta$-expansion increases locally the net, leaving the rest unchanged. $T_G$ is a sum over edges. $\eta$-expansion will leave most terms unchanged ($L_{G_0}(e)=L_{G_1}(e)$) and add some terms in the sum.
\item $T_{G_1} \leq T_{G}$: $G_1$ is exactly $G$ without some $\S$ nodes. The reasoning is similar as above.
\item $T_G \leq poly(|G|)$: it is given by Theorem \ref{theo_polynomial_bound_l4}
\item $poly(|G|) \leq poly(|G_0|)$: $G$ is just $G_0$ $\eta$-expansed with $\S$ made explicit. We show that those operations do not increase the size of the net too much. The exact statement takes into account the level of $G_0$, its depth and the maximum size of its formula.
\end{itemize}

\subsection{$\eta$-expansion increases $T$:  $T_{G_0} \leq T_{G_1}$}
\begin{lemma}
  \label{eta_incr_t}
  Let $G_0$ be a proof net. Suppose $G_0 \rightarrow_\eta G_1$, then $T_{G_0} \leq T_{G_1}$.
\end{lemma}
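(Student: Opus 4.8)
The plan is to analyze what the $\eta$-expansion of an axiom does to the proof-net locally and argue that $T_G = \sum_{e \in E_G} |L_{\mapsto}(e)| + 2\sum_{B \in B_G}\left(D_G(B)\sum_{P \in L_{\mapsto}(B)}\sum_{t \in Si_{\mapsto}(B,P)}|t|\right)$ can only increase. Looking at Figure~\ref{def_eta_expansion}, an $\eta$-step on a $\otimes/\parr$ axiom replaces one axiom link (two edges) by a configuration with one $\parr$ link, one $\otimes$ link, and two axiom links (so six edges, plus the two conclusion edges which are kept). An $\eta$-step on a $\oc/\wn$ axiom replaces one axiom by a box containing a single axiom, with a $?P$ and $!P$ door. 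The key observation is that $\eta$-expansion is a strictly local transformation: there is a natural inclusion of $E_{G_0}$ (minus the disappearing axiom edges, plus the kept conclusion edges) into $E_{G_1}$, and every other edge, box, and door of $G_0$ has a canonical counterpart in $G_1$.

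First I would set up a bijection between ``most of'' $G_0$ and ``most of'' $G_1$. Concretely, for the $\otimes/\parr$ case, the two conclusion edges of the old axiom are present in both nets (just relabelled by pushing an index, which does not affect $(\_)_{/LLL}$ and does not affect potentials), and all edges outside the rewritten region are untouched. The boxes of $G_0$ are in bijection with a subset of the boxes of $G_1$ (in the $\oc/\wn$ case one new box appears), with the same number of doors, and with the same sets of canonical potentials and copies — because the $\mapsto$-paths traversing these regions are unchanged up to the obvious identification (crossing a new $\parr$/$\otimes$ pair just adds matching $\parr_x$ and $\otimes_x$ trace elements that cancel along the path, and entering/leaving a new trivial box adds matching $!_t$ and $?_t$ trace elements; in particular $\beta_{\{\}}$ is preserved by Lemma~\ref{lemma_underlying_step}, so no path gets blocked, and Theorem~\ref{charac_final_contexts} tells us copies are detected by reaching final contexts, which is also preserved). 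Hence for every edge $e$ of $G_0$ that survives, $|L_{\mapsto}(e)|$ in $G_0$ equals $|L_{\mapsto}(\hat e)|$ in $G_1$ for its image $\hat e$; and for every box $B$, the term $D_G(B)\sum_{P}\sum_{t}|t|$ is identical in $G_0$ and $G_1$.

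Then I would observe that $T_{G_1}$ is $T_{G_0}$ plus the contributions of the new edges of $G_1$ (the premises and conclusion of the new $\parr$ and $\otimes$, or the box edges in the $\oc/\wn$ case) and possibly one new box (in the $\oc/\wn$ case). Every such contribution is nonnegative: each new edge $f$ contributes $|L_{\mapsto}(f)| \geq 0$ (in fact $\geq 1$ since $G_1$ has positive weights by Corollary~\ref{lemma_tg_finite}, but nonnegativity suffices), and the new box contributes $2D_{G_1}(B)\sum_P\sum_t|t| \geq 0$. Therefore $T_{G_1} \geq T_{G_0}$. Since $\rightarrow_\eta$ as used in the diagram is the reflexive-transitive closure (the $*$ in Figure~\ref{fig_commute_l4_l40}), the general statement follows by induction on the number of $\eta$-steps, each step being handled by the single-step argument above.

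The main obstacle will be making precise the claim that the sets of canonical potentials and copies are genuinely preserved — i.e.\ establishing the ``obvious identification'' of $\mapsto$-paths rigorously. The cleanest route is to exhibit a copymorphism (Definition~\ref{def_copymorphism}) from $G_0$ to $G_1$ whose domain $D_\phi$ is all edges except the pair being expanded (with $D_\phi'$ all edges of $G_1$ except the newly created internal ones), with $\phi$ the evident identity-up-to-relabelling map and $\psi$ empty since no final contexts are destroyed; then rule~\ref{copymorphism_path_cons} gives path preservation and Corollaries~\ref{lemma_copymorphism_cons_canonical} and~\ref{lemma_copymorphism_cons_canonical2} give the bijection of canonical potentials. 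Verifying the copymorphism axioms for this very simple $\phi$ is routine but must be done case by case for the $\otimes/\parr$ and $\oc/\wn$ variants; alternatively, since $\eta$-expansion is known to commute with cut-elimination, one could note $G_0$ and $G_1$ have the same normal form and the same reduction structure, but the copymorphism approach is the one that directly yields the term-by-term comparison of $T_{G_0}$ and $T_{G_1}$ needed here.
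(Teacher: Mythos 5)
Your proposal is correct and takes essentially the same route as the paper: the paper's proof also exhibits a copymorphism from $G_0$ to $G_1$ with the evident identity-up-to-relabelling $\phi$ and empty $\psi$, checks that paths across the expanded axiom are preserved by a case split on the trace (the exponential case via the $\hookrightarrow$ jump over the new box, the multiplicative case via three $\rightsquigarrow$ steps, non-blocking being guaranteed because canonical contexts have well-defined underlying formulae), and then concludes $T_{G_0}\leq T_{G_1}$ from the $T^1/T^2$ difference formulas of Subsection~\ref{subsection_proof_dal_lago}, which is just your term-by-term comparison in disguise. The only nominal difference is that the paper keeps $D_\phi=E_{G_0}$ in full, mapping the old axiom's two conclusion edges to the corresponding surviving conclusions in $G_1$, rather than excluding "the pair being expanded"; your exclusion would needlessly complicate the final-context conditions, but this does not affect the substance of the argument.
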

\begin{proof}
  We will define a copymorphism $(E_{G_0},E'_{G_0},\phi,\psi)$ from $G_0$ to $G_1$. Were $E_{G_0}'$ represent the edges of $G_1$ which have a canonical corresponding edge in $E_{G_0}$ (the edges which are not dashed in figure \ref{def_eta_expansion}). $\phi$ represents the canonical correspondence, and $\psi$ is empty. We need to prove that this 4-uple is a copymorphism. All conclusions of $?D$ and $?W$ links of $G_1$ are in $E_{G_0}'$, as are the pending edges of $G_1$. So the only difficult point to prove is that the paths between canonical contexts are kept intact by $\phi$. We have to prove that whenever $C=(e,P,T,-) \mapsto_{G_0} (f,P,T,+)=D$ (with $e$ and $f$ the two conclusions of the $\eta$-expansed axiom and $C$ canonical) then $(\phi(e),P,T,-) \mapsto_{G_1}^* (\phi(f),P,T,+)$.

  If $C$ is canonical, then $\beta(e,P,T,-)$ is well-defined. We will make a disjunction over the size of $T$:
  \begin{itemize}
  \item If it has length $1$, then $T=!_{s}$. So, $\beta(e)=?A$ for some formula $A$ (otherwise, $\beta(e,P,T,-)$ would be undefined). The $\eta$-expansion creates a box, and $(\phi(e),P,T,-) \hookrightarrow (\phi(f),P,T,+)$.
  \item Else, let's suppose the left-most trace element of $T$ is $\parr_l$ (the other cases are similar), then $\beta(e)=A_l \otimes A_r$ (otherwise $\beta(e,P,T,-)$ would be undefined). So $(e,P,T,-)\mapsto_{G_1}^3 (f,P,T,+)$.
  \end{itemize}

  As proved in the subsection \ref{subsection_proof_dal_lago}, $T_{G_1}-T_{G_0}= \left( T_{G_1}^1-T_{G_0}^1 \right) + \left( T_{G_1}^2 - T_{G_0}^2 \right)$ with
  \begin{align*}
    T_{G_0}^1-T_{G_1}^1 & = \sum_{\substack{e \in E_{G_0} \cap D_{\phi} \\ P \in L_{G_0}(e)}}1-|\{ t / \phi(e,P,t)=(\_,\_, \sige) \}|+ \sum_{e\in E_{G_0} \cap \overline{D_{\phi}}}|L_{G_0}(e)| -\sum_{f \in E_{G_1} \cap \overline{D_{\phi}'}}|L_{G_0}(e)| \\
    T_{G_0}^1-T_{G_1}^1& = \sum_{\substack{e \in E_{G_0} \cap D_{\phi} \\ P \in L_{G_0}(e)}}1-1+ \sum_{e\in \varnothing}|L_{G_0}(e)| -\sum_{f \in E_{G_1} \cap \overline{D_{\phi}'}}|L_{G_0}(e)| \\
    T_{G_0}^1-T_{G_1}^1& = -\sum_{f \in E_{G_1} \cap \overline{D_{\phi}'}}|L_{G_0}(e)| \\
    T_{G_0}^1-T_{G_1}^1& < 0
  \end{align*}

  and 
  \begin{align*}
    T_{G_0}^2-T_{G_1}^2 & = 2 \left( \sum_{\substack{e \in P_{G_0}\cap D_{\phi} \\ P \in L_{G_0}(e) \\ s \in Si_{G_0}(e,P)\\ (f,Q,t)=\phi(e,P,s)}}D_{G_0}(e).|s|-D_{G_1}(f).|t| + \sum_{\substack{e \in P_{G_0}\cap \overline{D_{\phi}} \\ P \in L_{G_0}(e) \\ s \in Si_{G_0}(e,P)}}D_{G_0}(e).|s| \right)\\
    T_{G_0}^2-T_{G_1}^2 & = 2 \left( \sum_{\substack{e \in P_{G_0} \\ P \in L_{G_0}(e) \\ s \in Si_{G_0}(e,P)\\ (f,Q,t)=\phi(e,P,s)}}D_{G_0}(e).|s|-D_{G_0}(e).|s| + \sum_{\substack{e \in \varnothing \\ P \in L_{G_0}(e) \\ s \in Si_{G_0}(e,P)}}D_{G_0}(e).|s| \right)\\
    T_{G_0}^2-T_{G_1}^2 & = 0
  \end{align*}
  So we have, as expected, $T_{G_0} \leq T_{G_1}$
\end{proof}

\begin{lemma} \label{trans_incr_t}
If $G_1 = (G)_0 $, then $T_{G_1} \leq T_G$
\end{lemma}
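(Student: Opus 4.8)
The strategy mirrors the proof of Lemma~\ref{eta_incr_t}: exhibit a copymorphism and read the inequality off the weight formula of Subsection~\ref{subsection_proof_dal_lago}. Recall that $(\_)_0$ erases every $\S$ link and turns $\S$ connectives into level indices on atoms; in the proof-nets at stake every $\S$ link sits in a chain directly above an axiom, so each such $\S$-chain of $G$ collapses in $G_1$ to a single edge, namely the (re-indexed) conclusion of the corresponding axiom, while every other link and edge of $G$ has an obvious counterpart in $G_1$. For an edge $e$ of $G_1$ write $\widehat{e}$ for the bottom edge (conclusion of the last $\S$ link) of the $\S$-chain of $G$ that $e$ comes from, with $\widehat{e}=e$ when $e$ is not obtained by collapsing a chain.

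The plan is to define the copymorphism $(D_\phi,D_\phi',\phi,\psi)$ from $G_1$ to $G$ by $D_\phi=E_{G_1}$, $D_\phi'=E_G$ minus the edges of $\S$-chains that are not the bottom of their chain (the axiom-conclusion edges and the intermediate $\S$-conclusion edges), $\phi(e,P,t)=(\widehat{e},P,t)$ for all $(e,P)\in Pot(E_{G_1})$ and $t\in Sig$, and $\psi$ empty ($D_\psi=\varnothing$). Checking that this is a copymorphism is routine and strictly parallel to Lemma~\ref{eta_incr_t}. The one point requiring attention is rule~\ref{copymorphism_path_cons}: no context-semantics rule ever inspects a $\S$ trace element, so traversing (resp. back-traversing) a $\S$ link merely appends (resp. pops) a $\S$ without touching the rest of the context, and these $\S$'s sit strictly below the leading $\oc$ trace element; hence every $\mapsto^*$ path of $G_1$ between edges of $D_\phi$ corresponds to its $\S$-refined path in $G$ with the same endpoint traces $[\oc_t]@T$ and $[\oc_u]@U$, which is exactly the equivalence demanded. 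Rules~\ref{copymorphism_subset_cons}, \ref{copymorphism_sqsubset} and \ref{copymorphism_boxwise} are immediate since $\phi$ is the identity on potentials and signatures and $B_{G_1}=B_G$ with unchanged depth; rules~\ref{copymorphism_princ_defined}, \ref{copymorphism_princ_doors}, \ref{copymorphism_final_defined}, \ref{copymorphism_final_cons} and \ref{copymorphism_phi_copy_not_empty} hold because a principal door, a conclusion of a $\wn D$ or $\wn W$ link, and a pending edge are never an intermediate or top edge of a $\S$-chain (they are conclusions of non-$\S$ links, or bottom edges, or pending), so they all belong to $D_\phi'$ with final images final, and no edge of $\overline{D_{\phi}'}$ ever carries a context of $F_G$; rule~\ref{copymorphism_phi_division} holds because $\phi$ is injective.

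Finally I would plug this copymorphism into the weight-difference formula of Subsection~\ref{subsection_proof_dal_lago}. Since $\phi$ is the identity on signatures, $\Set{t}{\phi(e,P,t)=(\_,\_,\sige)}=\{\sige\}$; since $\overline{D_{\phi}}=\varnothing$ and $\phi(\sigma(B),P,s)=(\sigma(B),P,s)$ with $D_{G_1}(B)=D_G(B)$, the entire $T^2$ difference vanishes and the only surviving term comes from $\overline{D_{\phi}'}$, giving
\begin{equation*}
  T_{G_1}-T_{G} \;=\; \bigl(T_{G_1}^1-T_{G}^1\bigr)+2\bigl(T_{G_1}^2-T_{G}^2\bigr) \;=\; -\sum_{f\in\overline{D_{\phi}'}}\bigl|L_{\mapsto}(f)\bigr| \;\le\; 0,
\end{equation*}
which is the claim. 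The only genuinely delicate point is the invariance of the context-semantics data — $\mapsto$, canonical potentials, copies, underlying formulae — under erasing the axiom-$\S$-chains together with the relabelling $\beta(e)\mapsto\beta(e)_0$: this rests on the inertness of $\S$ trace elements for $\mapsto$ and on $\beta(e)_{/LLL}$ being left unchanged by $(\_)_0$, so the formula-dependent parts of context semantics (the $\exists/\forall$ rules, the underlying-formula computation and hence the $ztree$s) see identical data in $G$ and $G_1$.
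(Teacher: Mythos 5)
Your proposal is correct and follows essentially the paper's own route: the paper likewise takes the canonical mapping from $G_1$ to $G$ as a copymorphism with empty $\psi$, reduces the only delicate point (conservation of paths, condition~\ref{copymorphism_path_cons}) to the underlying-formula machinery — inertness of $\S$ trace elements together with the fact that $(\_)_0$ does not change $\beta(e)_{/LL}$ — and then reads off $T_{G_1}\leq T_G$ from the same weight computation as in Lemma~\ref{eta_incr_t}. The paper's argument is in fact only a two-sentence sketch of exactly this, so your write-up is, if anything, more detailed than the original.
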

\begin{proof}
  The proof is quite similar to the proof of lemma \ref{eta_incr_t}, but easier. We consider the canonical mapping from $G_1$ to $G$ and show that this is a copymorphism. The main point being the conservation of paths. And this conservation is shown using the $\beta(\_,\_,\_,\_)$ function on contexts.
\end{proof}

\subsection{$G$ is not much bigger than $G_0$}
For purposes of concision, we will define for any proof-net $G$, $maxF(G)= \max_{e \in E_G}|\beta(e)|$ and $maxP(G)=\max( \max_{\substack{e \in E_{G_1}\\ \beta_{G_1}(e)=C[p.X]}}p, \max_{\substack{e \in E_{G_1}\\ \beta_{G_1}(e)=C[p.X^\perp]}}p)$.

\begin{lemma} \label{comp_g1_g0}
  If $G_0 \rightarrow_\eta^* G_1$, then:
  \begin{itemize}
  \item $|E_{G_1} | \leq | E_{G_0} | + 3 * |E_{G_0}| * maxF(G_0)$
  \item $l(G_1)  \leq l(G)  + |G| * maxF(G_0)$ 
  \item $\partial_{G_1}) \leq \partial (G) + |G| * maxF(G_0)$
  \item $maxF(G_1) = maxF(G_0)$
  \end{itemize}
\end{lemma}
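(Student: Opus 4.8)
The plan is to analyze the $\eta$-expansion rules of Figure \ref{def_eta_expansion} step by step and track how each of the four quantities ($|E|$, the maximal level $l$, the depth $\partial$, and $maxF$) evolves under a single $\eta$-step, then iterate. First I would observe that $\eta$-expansion is a purely local operation: it replaces a single axiom link (whose two conclusions are labelled $A^\perp \parr B^\perp$ and $p.(A \otimes B)$, or $\wn A^\perp$ and $p.(\oc A)$) by a small gadget, leaving the rest of the proof-net untouched. So for each of the four quantities it suffices to bound the local increment and then sum/maximize over the expansion steps performed.

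For the edge count: a single $\otimes/\parr$-style $\eta$-step on an axiom of conclusion formula $C$ with $|C| = n$ unfolds that axiom into a tree of at most $n$ axioms, $n-1$ $\parr$ links and $n-1$ $\otimes$ links (and similarly $\oc P/\wn P$ boxes in the modal case), which is bounded by $3n$ new links (hence new edges, up to a constant). Since we only $\eta$-expand the original axioms of $G_0$, of which there are at most $|E_{G_0}|$, and each has conclusion formula of size at most $maxF(G_0)$, the total increment is at most $3 \cdot |E_{G_0}| \cdot maxF(G_0)$, giving the first bound. For the level and depth bounds: the modal $\eta$-step is the only one that creates a box (raising depth of the edges inside by $1$) and the only one that changes levels (the auxiliary/principal door edges get a fresh level, and a chain of such nested expansions can stack levels); but each individual modal connective occurring in a formula of $G_0$ contributes at most one extra unit of depth/level, and the number of such modal connective occurrences is bounded by $|E_{G_0}| \cdot maxF(G_0)$ — actually I would phrase it as: along any branch of the expansion the number of boxes/levels created is at most the number of nested $\oc$/$\wn$ connectives in the expanded formula, which is at most $maxF(G_0)$, and summing over all axioms gives the stated $|G| \cdot maxF(G_0)$ additive term. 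For $maxF$: the $\eta$-expansion replaces each edge labelled by a formula $C$ by edges labelled by subformulae of $C$ (the modal indices $p$ get collected on the $\S$-expanded axioms, but the new formulae are always subformulae of the old ones up to the $\S$ prefix), and the $\S$-links introduced carry formulae no larger than the atoms they decorate; hence $maxF$ does not increase, i.e. $maxF(G_1) = maxF(G_0)$ — here one must be slightly careful that $(\_)_0$ followed by $(\_)_1$ and $\eta$-expansion does not blow up formula size, but since $(\_)_1$ replaces $\S$ by $p.X$-indexed atoms which count as a single symbol, and $\eta$ only takes subformulae, the claim holds; equality (rather than $\leq$) holds because the original axioms themselves are expanded, so the maximum is attained already inside the gadgets on formulae of the same size.

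The main obstacle I expect is the bookkeeping for the level and depth bounds: one has to be careful that the $\eta$-expansion is applied iteratively and that a modal $\eta$-step performed inside a box created by a previous modal $\eta$-step can compound. I would handle this by doing the induction on the length of the $\rightarrow_\eta^*$ reduction $G_0 \rightarrow_\eta^* G_1$, showing that each single step adds at most a constant to depth and level at the site of that step, and that the total number of steps needed to fully $\eta$-expand $G_0$ is bounded by $|E_{G_0}| \cdot maxF(G_0)$ (each step consumes one connective occurrence of one of the finitely many formulae labelling the original axioms). The bound then follows by summing. A secondary subtlety is that $\eta$-expansion as drawn only handles $\otimes/\parr$ and $\oc/\wn$, so one must note that for $L^4_0$ proof-nets every axiom conclusion formula is built from these plus quantifiers and indexed atoms, and quantifier nodes are $\eta$-expanded transparently without affecting any of the four measures beyond the edge count; I would state this explicitly as a remark rather than belabor it. Everything else is a routine local computation of the kind already carried out in Lemma \ref{eta_incr_t} and Lemma \ref{trans_incr_t}.
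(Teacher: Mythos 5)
Your proposal is correct and follows essentially the same route as the paper: both arguments proceed by induction on the length of the $\rightarrow_\eta^*$ reduction, analyse a single local expansion step, and amortize the increments against the total size of the axiom conclusion formulae (bounded by $|E_{G_0}|\cdot maxF(G_0)$), which also bounds the number of steps. The only difference is presentational — the paper folds the amortization into a per-step monotone invariant of the form ``quantity $+$ weighted $\sum_{\alpha(l)=ax}|\beta(l)|$'', whereas you bound each step's increment and the step count separately — and both yield the stated (deliberately loose) bounds, including the preservation of $maxF$ via the unchanged gadget conclusions.
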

\begin{proof}
  We first prove the following properties:
  \begin{itemize}
  \item If $G_0 \rightarrow_\eta G_1$, then $| E_{G_1}| +\frac{3}{2} * \Sigma_{\alpha_{G_1}(l)=ax} |\beta_{G_1}(l,\_)| \leq | E_{G_0} | + \frac{3}{2} * \Sigma_{\alpha_{G_0}(l)=ax}  |\beta_{G_0}(e)|$
  \item If $G_0 \rightarrow_\eta G_1$, then $l(G_1)+\Sigma_{\alpha_{G_1}(l)=ax} \frac{|\beta_{G_1}(l,\_)|}{2} \leq l(G_0) + \Sigma_{\alpha_{G_0}(l)=ax} \frac{|\beta_{G_0}(l,\_)|}{2}$
  \item If $G_0 \rightarrow_\eta G_1$, then $\partial_{G_1}) + \Sigma_{\alpha_{G_1}(l)=ax} \frac{|\beta_{G_1}(l,\_)|}{2} \leq \partial_{G_0}) + \Sigma_{\alpha_{G_0}(l)=ax} \frac{|\beta_{G_0}(l,\_)|}{2}$ 
  \item If $G_0 \rightarrow_\eta G_1$, then $maxF(G_1)= maxF(G_0)$
  \end{itemize}
  The expected results are then obtained because each $\eta$-expansion step decreases by at least $1$, the quantity $\Sigma_{\alpha_{G_1}(l)=ax}|\beta_{G_1}(l,\_)|$.
\end{proof}

\begin{lemma} \label{comp_g_g1}
  If $(G)_0=G_1$, then:
  \begin{itemize}
  \item $|E_{G}| \leq |E_{G_1}|+ |E_{G_1}|.maxP(G_1)$
  \item $l(G) \leq l(G_1) + maxP(G_1)$
  \item $\partial_G \leq \partial_{G_1}) + maxP(G_1)$
  \item $maxF(G) \leq maxF(G_1).maxP(G_1)$
  \end{itemize}
\end{lemma}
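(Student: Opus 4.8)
The final statement to prove is Lemma \ref{comp_g_g1}, comparing the size parameters of a $L^4$ proof-net $G$ with those of its image $G_1 = (G)_0$ under the forgetful map which pushes the $\S$ connectives down to the ground formulae. Equivalently, since $(\_)_0$ and $(\_)_1$ are related as described, $G$ is obtained from $G_1$ by making the $\S$ links explicit: wherever an axiom of $G_1$ has conclusions $p.X^\perp$ and $q.X$ with (say) $p > q$, we replace it by an axiom with conclusions $X^\perp$, $X$ together with $p$ many $\S$ links on one side and $q$ many on the other, and correspondingly we replace the labels $r.Y$ appearing inside formulae by $\S^r Y$, inserting $\S$ links at the appropriate places in the net.

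\medskip

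\textbf{Plan.} The plan is to track, for each of the four quantities, exactly what the transformation $G_1 \mapsto G$ does locally. The key observation is that each occurrence of an indexed atom $r.X$ (or $r.X^\perp$) inside a formula labelling an edge of $G_1$ gives rise to at most $r \leq maxP(G_1)$ new $\S$ links in $G$, and no other links or edges are created. First I would make precise the correspondence between edges of $G_1$ and edges of $G$: every edge $e$ of $G_1$ corresponds to a ``segment'' of edges of $G$, namely $e$ itself together with the edges interposed by the freshly-inserted $\S$ links along the strands carrying $\S^r Y$ for the various indexed atoms $r.Y$ of $\beta_{G_1}(e)$. The number of such interposed edges on a given strand is bounded by the index $r$, hence by $maxP(G_1)$; since $\beta_{G_1}(e)$ has at most $|\beta_{G_1}(e)| \leq maxF(G_1)$ atomic subformulae — but more crudely, the total count of new edges is at most $|E_{G_1}| \cdot maxP(G_1)$ because each of the $\leq |E_{G_1}|$ original edges contributes at most $maxP(G_1)$ new edges if we charge the $\S$-links to where they sit. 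This gives $|E_G| \leq |E_{G_1}| + |E_{G_1}| \cdot maxP(G_1)$. (A cleaner bookkeeping charges each new $\S$ link to the axiom it decorates, but the stated bound is the loose one and is what we need.)

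\medskip

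For the level bound $l(G) \leq l(G_1) + maxP(G_1)$: inserting a $\S$ link decreases the level by one as we go downward (rule for $\S$ in Figure \ref{figureLabelL3}), so adding $\S$ links \emph{above} an axiom can only raise the levels on the strands between the new axiom and the original conclusion, by at most the number of $\S$ links stacked, i.e. at most $maxP(G_1)$; all other edges keep their level. Thus the maximum level increases by at most $maxP(G_1)$. Similarly, the depth: the new $\S$ links are not boxes and lie on the strands descending from the new axioms; they do not change which boxes contain a given original edge, and the new edges are inside exactly the same boxes as the original conclusion of the axiom — so in fact $\partial_G = \partial_{G_1}$, and $\partial_G \leq \partial_{G_1} + maxP(G_1)$ holds a fortiori. (I would state it as the inequality to match the symmetry with Lemma \ref{comp_g1_g0}.) Finally, for $maxF(G) \leq maxF(G_1) \cdot maxP(G_1)$: a formula $\beta_{G_1}(e) = C[r.X]$ becomes $C[\S^r X]$, so each indexed atom expands from size $1$ to size $r+1 \leq maxP(G_1)+1$; since the number of indexed atoms in $C$ is at most $|\beta_{G_1}(e)| \leq maxF(G_1)$ and every other node of the formula tree is preserved, we get $|\beta_G(e')| \leq |\beta_{G_1}(e)| \cdot (maxP(G_1)+1)$ for the ``largest'' reduct $e'$ of $e$ — and absorbing the $+1$ into the bound (or noting $maxP(G_1) \geq 1$ whenever any index is present, the trivial case being vacuous) yields $maxF(G) \leq maxF(G_1) \cdot maxP(G_1)$.

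\medskip

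\textbf{Main obstacle.} None of the steps is deep; the only thing requiring care is getting the counting to land on the stated (loose) bounds rather than on slightly different expressions — in particular making sure the ``$|E_{G_1}| \cdot maxP(G_1)$'' charge for new edges and the ``$maxF(G_1) \cdot maxP(G_1)$'' charge for formula size are genuine upper bounds and not off by a factor from double-counting the atoms of a formula. I would handle this by fixing once and for all the correspondence ``edge of $G$ $\leftrightarrow$ (edge of $G_1$, position on its $\S$-decorated strand)'' and a correspondence ``$\S$ link of $G$ $\leftrightarrow$ (axiom of $G_1$, which side, which of the $\leq maxP(G_1)$ copies)'', from which all four inequalities are immediate arithmetic. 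This proof, together with Lemma \ref{comp_g1_g0}, Lemma \ref{eta_incr_t}, Lemma \ref{trans_incr_t} and Theorem \ref{theo_polynomial_bound_l4}, furnishes the chain of inequalities \ref{ineq_l40} and hence the strong polynomial bound for $L^4_0$.
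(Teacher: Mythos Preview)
Your proposal is correct and follows essentially the same approach as the paper. The paper's own proof is a single sentence: it observes that along a directed path from an axiom $l$ of $G$ to a cut or pending edge, the number of $\S$ links is exactly the index $p$ assigned to the atoms of the corresponding axiom $l_1$ in $G_1$. Your argument unpacks precisely this observation into the four inequalities, with the same bookkeeping (charging each $\S$ link to the axiom it decorates, noting that $\S$ links do not affect the box structure, and expanding each indexed atom $r.X$ to $\S^r X$). Your remark that in fact $\partial_G = \partial_{G_1}$ is correct and slightly sharper than what is stated; your handling of the $+1$ in the formula-size bound is at the same level of looseness as the paper's, which is harmless for the final polynomial estimate.
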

\begin{proof}
  It is enough to notice that, along a directed path (from an axiom $l$ to a cut or pending edge) of $G$ the number of $\S$ link is exactly the index $p$ which is assigned to the variables of $l_1$ (the link of $G_1$ corresponding to $l$)
\end{proof}

\begin{theorem}
Let $G$ be a $L^4_0$ proof net, any reduction of $G$ terminates in at most $p_{\partial_G,l(G),f(G)}(|G|)$ steps, with $p_{a,b,c}$ being a fixed polynomial for any $a,b,c \in \mathbb{N}^3$ and $f(G)=\max_{F \in E_G}\partial(F)$
\end{theorem}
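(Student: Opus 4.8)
The plan is to chain together the four inequalities displayed just before the statement, namely $T_{G_0} \leq T_{G_1} \leq T_{G} \leq poly(|G|) \leq poly(|G_0|)$, and then invoke Corollary~\ref{coro_dal_lago_theo} (the length of any reduction of $G_0$ is bounded by $T_{G_0}$). Concretely, given a $L^4_0$ proof-net $G_0$, I would first form $G_1 = (G_0)_1$, which is a $L^4$ proof-net by the lemma stating that $(\_)_1$ lands in $L^4$, and then form $G = (G_1)_0$ — wait, more precisely I would use the commuting diagram of Figure~\ref{fig_commute_l4_l40}: $G_0 \rightarrow_\eta^* G_1'$ for an $\eta$-expanded version, and there is a $L^4$ proof-net $G$ with $(G)_0 = G_1$. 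The three middle inequalities come respectively from Lemma~\ref{eta_incr_t} ($\eta$-expansion only increases $T$, via the copymorphism argument), Lemma~\ref{trans_incr_t} (erasing $\S$-links, i.e. the $(\_)_0$ map, only increases $T$), and Theorem~\ref{theo_polynomial_bound_l4} (the strong polynomial bound for $L^4$, which gives $T_G \leq |E_G|^{3+4(4(\partial_G+1)^{2+2 S_G})}$).

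The last inequality, $poly(|G|) \leq poly(|G_0|)$, is where the real work lies, and it relies on Lemmas~\ref{comp_g1_g0} and~\ref{comp_g_g1}. These bound $|E_G|$, $l(G)$, $\partial_G$ and $maxF(G)$ in terms of the corresponding quantities of $G_0$, the blow-up factors being controlled by $maxF(G_0)$ (the maximal formula size of $G_0$) and $maxP(G_1)$ (the maximal index appearing on ground formulae after the translation). Since the exponent in Theorem~\ref{theo_polynomial_bound_l4} depends only on $\partial_G$ and $S_G = l(G) - 1$ (the stratum of a $L^4$ proof-net being bounded by its level, by Theorem~\ref{lemma_l3_stratifiee}), and these two quantities for $G$ are bounded by affine functions of $\partial_{G_0}$, $l(G_0)$ and $maxF(G_0)$, the whole bound $T_G$ becomes a polynomial in $|E_{G_0}|$ whose degree depends only on $\partial_{G_0}$, $l(G_0)$ and $f(G_0) = maxF(G_0)$. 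I would assemble the explicit polynomial $p_{a,b,c}$ by substituting the bounds of Lemmas~\ref{comp_g1_g0} and~\ref{comp_g_g1} into the expression of Theorem~\ref{theo_polynomial_bound_l4}; the precise closed form is unenlightening, so I would simply state that it exists and is fixed once $a = \partial_{G_0}$, $b = l(G_0)$, $c = f(G_0)$ are fixed.

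The structure of the argument is thus: (1) instantiate $G_1$ and $G$ and check they are in the appropriate systems; (2) apply Lemmas~\ref{eta_incr_t}, \ref{trans_incr_t} to get $T_{G_0} \leq T_G$; (3) apply Theorem~\ref{theo_polynomial_bound_l4} to bound $T_G$ by a polynomial in $|E_G|$ with exponent depending on $\partial_G, S_G$; (4) use $S_G \leq l(G) - 1$ and Lemmas~\ref{comp_g1_g0}, \ref{comp_g_g1} to bound $|E_G|, \partial_G, l(G)$ in terms of $|E_{G_0}|, \partial_{G_0}, l(G_0), f(G_0)$; (5) collect everything into a single polynomial $p_{\partial_{G_0}, l(G_0), f(G_0)}(|G_0|)$; (6) conclude by Corollary~\ref{coro_dal_lago_theo} that any reduction of $G_0$ has length at most $T_{G_0} \leq p_{\partial_{G_0}, l(G_0), f(G_0)}(|G_0|)$.

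The main obstacle I anticipate is bookkeeping rather than conceptual difficulty: one must be careful that the chain $G_0 \rightsquigarrow G_1 \rightsquigarrow G$ in the diagram is set up so that $(G)_0$ is exactly the $\eta$-expanded $G_0$ (not $G_0$ itself), and that the copymorphism arguments of Lemmas~\ref{eta_incr_t} and~\ref{trans_incr_t} apply to this specific pair. A subtle point is that Lemma~\ref{trans_incr_t} is stated for $G_1 = (G)_0$, so the direction of the $(\_)_0$ map and of the inequality must be tracked precisely; similarly $maxP(G_1)$ must be shown finite and bounded by a quantity intrinsic to $G_0$ (this is Lemma~\ref{comp_g_g1}, where the index $p$ equals the number of $\S$-links along a directed path, hence is bounded by the depth-contribution of formulas, i.e. essentially by $maxF(G_0)$ times the path length, which is $O(|E_{G_0}|)$). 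Once those dependencies are pinned down, the polynomial $p_{a,b,c}$ is obtained by routine substitution and no further creativity is required.
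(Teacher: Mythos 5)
Your proposal is correct and follows essentially the same route as the paper: chain $T_{G_0} \leq T_{G_1}$ (Lemma~\ref{eta_incr_t}), $T_{G_1} \leq T_G$ (Lemma~\ref{trans_incr_t}), the strong $L^4$ bound for $T_G$, and then Lemmas~\ref{comp_g1_g0} and~\ref{comp_g_g1} to re-express the parameters of $G$ in terms of those of $G_0$, concluding via the weight bound on reduction length. The only nitpick is your claim $S_G \leq l(G)-1$, where the paper's Theorem~\ref{lemma_l3_stratifiee} only gives $S_G \leq l_G$, but this has no effect on the polynomial dependence.
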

\begin{proof} ~
For matters of readability, we will write $\partial$ for $\partial_G$, $l$ for $l(G)$, $f$ for $\max_{F \in E_G} |F|$and $d$ for $\max_{F \in E_G}\partial(F)$. Adding $~_0$ or $~_1$ to those symbols designate the corresponding notions for $G_0$ or $G_1$.
\begin{align*}
  T_{G_0} & \leq T_{G_1} \text{(lemma~\ref{eta_incr_t})}\\
         & \leq T_{G}  \text{(lemma~\ref{trans_incr_t})}\\
         & \leq \left(\partial.|E_G| \right)^{(\partial . (6.l+11))^{l+1}} \text{(theorem~\ref{theorem_strong_bound_l4})}\\
         & \leq \left(l_1.\partial_1.|E_{G_1}| \right)^{(\partial . (6.l_1+f_1+11))^{l_1+f_1+1}} \\
         & \leq \left((l_0+f_0).(\partial_0+f_0).(1+3.f_0).|E_{G_0}| \right)^{(\partial . (6.l_0+7.f_0+11))^{l_0+2.f_0+1}} \\
\end{align*}
\end{proof}

\bibliographystyle{plain}
\bibliography{megabib}

\appendix
\section{Definition of Residues}\label{def_residues}
First, we will define positions of a term. A position of $t$ indicates a node in the syntactic tree of $t$. It will be useful to manage occurrences of subterms.

Let $t$ be a $\lambda$-term, the positions of $t$ is a set of lists of integers defined recursively by:
\begin{itemize}
\item If $t=x$, then $Pos(t)= \{[]\}$
\item If $t=(u) v$, then $Pos(t)= \{[]\} \cup \Set{p.0}{p \in Pos(u)} \cup \Set{p.1}{p \in Pos(v)}$
\item If $t=\lambda x.u$, then $Pos(t)= \{[]\} \cup \Set{p.0}{p \in Pos(u)}$
\end{itemize}

If $p \in Pos(t)$, the subterm of $t$ at position $p$, denoted $t|_p$ is defined inductively on $p$ by:
\begin{itemize}
\item If $p=[]$, then $t|_p=t$
\item If $p=q.0$ and $t=\lambda x.u$ then $t|_p=u|_q$
\item If $p=q.0$ and $t=(u)v$ then $t|_p=u|_q$
\item If $p=q.1$ and $t=(u)v$ then $t|_p=v|_q$
\end{itemize}

Then, during a $\beta$-reduction step $(\lambda x. t)u \rightarrow_\beta t[u/x]$, we want to know the positions of $t[u/x]$ at which $u$ has been copied. These correspond to the positions of $t$ where $x$ appear free. Thus, we need to define what ``the positions of $t$ where $x$ appears free'' means. 

If $x \in FV(t)$ and $x$ does not appear bound in $t$ (we can use $\alpha$-conversion to be in this case), we define the {\it positions of $x$ in $t$} as the set:
\begin{equation*}
  pos_t(x)= \Set{p \in Pos(t)}{t|_p=x}
\end{equation*}

Finally, we have all the tools to define the residue of a subterm through a $\beta$-reduction step. If $p \in Pos(t)$ and $t \rightarrow_{\beta} t'$ then we define the residues of $p$ in $t'$ by induction on $p$
\begin{itemize}
  \item If $p=[]$, then $R_{t,t'}([])=[]$
  \item If $p=q.0$ and $t=\lambda x.u$ then $t'=\lambda x.u'$ with $u \rightarrow_{\beta}u'$ and $R_{t,t'}(p)=\Set{r.0}{ r \in R_{u,u'}(q)}$
  \item If $t=(u) v$, $p=q.0$ and $t'=(u)v'$ with $v \rightarrow_{\beta}v'$ then $R_{t,t'}(p)=p$.
  \item If $t=(u) v$, $p=q.1$ and $t'=(u')v$ with $u \rightarrow_{\beta}u'$ then $R_{t,t'}(p)=p$.
  \item If $t=(u) v$, $p=q.1$ and $t'=(u)v'$ with $v \rightarrow_{\beta}v'$ then $R_{t,t'}(p)=\Set{r.1}{ r \in R_{v,v'}(q)}$.
  \item If $t=(u) v$, $p=q.0$ and $t'=(u')v$ with $u \rightarrow_{\beta}u'$ then $R_{t,t'}(p)=\Set{r.0}{ r \in R_{u,u'}(q)}$.
  \item If $t=(\lambda x.u) v$, $p=q.0$ and $t'=u[v/x]$ then $R_{t,t'}(p)=p$
  \item If $t=(\lambda x.u) v$, $p=q.1$ and $t'=u[v/x]$ then $R_{t,t'}(p)=\Set{q@r}{r \in pos_u(x)}$
\end{itemize}

\section{Discussion on previous work on $L^4$ strong bound}
\label{appendix_vercelli}
In Section 8.2.2 of \cite{vercelli2010phd}, Vercelli claims a proof of strong polynomial bounds for some subsystems of $MS^\dagger$. $L^4$ is one of these systems. However, the proof of the strong bound contains some flaws. Indeed, the $\mapsto$ relation used in this section has no rule to leave a box by its principal door. Moreover, the weight $T_G$ used differs from the weight used by Dal Lago~\cite{dal2005quantitative} and us. In the following, $T_G$ designs the weight defined by Vercelli and we will show that the lemma 8.2.15 - which corresponds to the ``Dal Lago's weight theorem'' - is false. Indeed, in Figure \ref{vercelli_fail}, $G \rightarrow_{cut} H$ but $T_G=0+2+2+10=14$: $0$ for box $B$ door because no maximal $CS$-path begin by $\sigma(B)$, $2$ for both boxes at depth $0$, $1$ for each node which is neither an axiom nor a door. And $T_H=2.4.2+2+2+10=22$: $2.4.2$ for $B$ door because each of the $4$ $B$ copies has length $2$, $2$ for the box at depth $0$, $1$ for each node which is neither an axiom nor a door. So $T_G < T_H$.

%\tikzsetnextfilename{vercelli}
%\tikzset{external/remake next}
\begin{figure}
  \centering
  \begin{tikzpicture}
    \node [princdoor] (bi)  at (0,0) {};
    \node at ($(bi)+(0.4,-0.2)$) {$B$};
    \node [par]       (par) at ($(bi) +(0, 0.7)$) {};
    \node [ax]        (ax)  at ($(par)+(0, 0.7)$) {};
    \draw [ar,out=- 40,in= 40] (ax) to (par);
    \draw [ar,out=-140,in=140] (ax) to (par);
    \draw [ar] (par) -- (bi);
    \node [princdoor] (be)  at ($(bi) +(0,-0.8)$) {};
    \draw [ar] (bi)  -- (be);
    \draw (bi) -| ++ (0.5,1.5) -| ($(bi)+(-0.5,0)$) -- (bi);
    \draw (be) -| ++ (0.8,2.5) -| ($(be)+(-0.8,0)$) -- (be);
    \node [auxdoor]  (aux) at ($(be)+(2.5,0)$) {};
    \node [cut]      (cut) at ($(be)!0.5!(aux)+(0,-0.5)$) {};
    \draw [ar, out= -60, in=180] (be)  to (cut);
    \draw [ar, out=-120, in=  0] (aux) to (cut);
    \node [cont] (cont11) at ($(aux)+(0,0.8)$) {};
    \node [cont] (cont21) at ($(cont11)+( 40:1)$) {};
    \node [cont] (cont22) at ($(cont11)+(140:1)$) {};
    \node [weak] (weak1) at ($(cont21)+(60:0.7)$) {};
    \node [weak] (weak2) at ($(cont21)+(120:0.7)$) {};
    \node [weak] (weak3) at ($(cont22)+(60:0.7)$) {};
    \node [weak] (weak4) at ($(cont22)+(120:0.7)$) {};
    \draw [ar] (weak1) -- (cont21);
    \draw [ar] (weak2) -- (cont21);
    \draw [ar] (weak3) -- (cont22);
    \draw [ar] (weak4) -- (cont22);
    \draw [ar] (cont21) -- (cont11);
    \draw [ar] (cont22) -- (cont11);
    \draw [ar] (cont11) -- (aux);
    \node [princdoor] (pd)  at ($(aux)+(1.8,0)$) {};
    \node [par]       (par) at ($(pd) +(0, 0.7)$) {};
    \node [ax]        (ax)  at ($(par)+(0, 0.7)$) {};
    \draw [ar,out=- 40,in= 40] (ax) to (par);
    \draw [ar,out=-140,in=140] (ax) to (par);
    \draw [ar] (par) -- (pd);
    \draw (pd) -| ++ (0.7,2.5) -| ($(aux)+(-1.5,0)$) -- (aux) -- (pd);
    \draw [ar] (pd) -- ++ (0,-0.8);

    \draw [->,very thick] (5.5,0) --++ (1.5,0) node [below left] {$cut$};
    \begin{scope}[xshift=8cm]
    \node [princdoor] (bi)  at (0,0) {};
    \node at ($(bi)+(0.4,-0.2)$) {$B$};
    \node [par]       (par) at ($(bi) +(0, 0.7)$) {};
    \node [ax]        (ax)  at ($(par)+(0, 0.7)$) {};
    \draw [ar,out=- 40,in= 40] (ax) to (par);
    \draw [ar,out=-140,in=140] (ax) to (par);
    \draw [ar] (par) -- (bi);
    \coordinate (be)  at ($(bi) +(0,-0.8)$) ;
    \draw (bi) -| ++ (0.5,1.5) -| ($(bi)+(-0.5,0)$) -- (bi);
    \coordinate (aux) at ($(be)+(2.5,0)$) {};
    \node [cont] (cont11) at ($(aux)+(0,0.8)$) {};
    \node [cont] (cont21) at ($(cont11)+( 40:1)$) {};
    \node [cont] (cont22) at ($(cont11)+(140:1)$) {};
    \node [weak] (weak1) at ($(cont21)+(60:0.7)$) {};
    \node [weak] (weak2) at ($(cont21)+(120:0.7)$) {};
    \node [weak] (weak3) at ($(cont22)+(60:0.7)$) {};
    \node [weak] (weak4) at ($(cont22)+(120:0.7)$) {};
    \node [cut] (cut) at ($(bi)!0.5!(cont11)+(0,-0.5)$) {};
    \draw [ar, out= -60, in=180] (bi)  to (cut);
    \draw [ar, out=-120, in=  0] (cont11) to (cut);
    \draw [ar] (weak1) -- (cont21);
    \draw [ar] (weak2) -- (cont21);
    \draw [ar] (weak3) -- (cont22);
    \draw [ar] (weak4) -- (cont22);
    \draw [ar] (cont21) -- (cont11);
    \draw [ar] (cont22) -- (cont11);
    \node [princdoor] (pd)  at ($(aux)+(1.8,0)$) {};
    \node [par]       (par) at ($(pd) +(0, 0.7)$) {};
    \node [ax]        (ax)  at ($(par)+(0, 0.7)$) {};
    \draw [ar,out=- 40,in= 40] (ax) to (par);
    \draw [ar,out=-140,in=140] (ax) to (par);
    \draw [ar] (par) -- (pd);
    \draw (pd) -| ++ (0.7,2.5) -| ($(aux)+(-3.2,0)$) -- (aux) -- (pd);
    \draw [ar] (pd) -- ++ (0,-0.8);

    \end{scope}

  \end{tikzpicture}
  \caption{\label{vercelli_fail} The proof-net $G$ reduces to $H$, but $T_G < T_H$}
\end{figure}

If we want the lemma 8.2.15 to hold, we could allow the contexts to leave boxes by their principal door (as in our $\mapsto$ relation). Then there would be a problem in the way the $\wn D$ is handled. Indeed, crossing a $\wn D$ node upwards adds a signature on the potential without entering a box. Thus, the lemma 8.2.15 would still fail, as shown on Figure \ref{vercelli_fail2}: $(\sigma(B),[\sigl(\sige);\sige],[\oc_t],+) \rightsquigarrow (a,[\sigl(\sige)],[\oc_t; \oc_{\sige}],+) \rightsquigarrow (b,[\sigl(\sige)],[\oc_t; \oc_{\sige}],-) \rightsquigarrow (c,[\sigl(\sige);\sige],[\oc_t],-) \rightsquigarrow^2 (d,[\sigl(\sige)],[\oc_t;\oc_{\sige}],+) \rightsquigarrow (e,[\sigl(\sige)],[\oc_t;\oc_{\sige}],-)  \not \rightsquigarrow$.

%\tikzsetnextfilename{vercelli2}
%\tikzset{external/remake next}
\begin{figure}
  \centering
  \begin{tikzpicture}
    \node [princdoor] (bi)  at (0,0) {};
    \node at ($(bi)+(0.4,-0.2)$) {$B$};
    \node [par]       (par) at ($(bi) +(0, 0.7)$) {};
    \node [ax]        (ax)  at ($(par)+(0, 0.7)$) {};
    \draw [ar,out=- 40,in= 40] (ax) to (par);
    \draw [ar,out=-140,in=140] (ax) to (par);
    \draw [ar] (par) -- (bi);
    \node [princdoor] (be)  at ($(bi) +(0,-0.8)$) {};
    \draw [ar] (bi)  -- (be);
    \draw (bi) -| ++ (0.5,1.5) -| ($(bi)+(-0.5,0)$) -- (bi);
    \draw (be) -| ++ (0.8,2.5) -| ($(be)+(-0.8,0)$) -- (be);
    \node [der] (der) at ($(be) +(1.8,0)$) {};
    \node [cut] (cut) at ($(be)!0.5!(der)+(0,-0.4)$) {};
    \draw [ar, out= -60, in=180] (be)  to node [edgename,below] {$a$} (cut);
    \draw [ar, out=-120, in=  0] (der) to node [edgename,below] {$b$} (cut);
    \node [princdoor] (bb) at ($(der)+(1,-1.2)$) {};
    \node [ax] (ax) at ($(der)!0.5!(bb)+(0,1.5)$) {};
    \draw [ar, out=-160, in=80] (ax) to node [edgename, above left] {$c$} (der);
    \draw [ar, out= -20, in=90] (ax) to (bb);
    \draw (bb) -| ++(0.8,4) -| ($(bb)+(-4,0)$) -- (bb);
    \node [cont] (cont) at ($(bb)+(1.6,0)$) {};
    \node [cut] (cut) at ($(bb)!0.5!(cont)+(0,-0.5)$) {};   
    \draw [ar, out= -60, in=180] (bb)   to node [edgename, below left]  {$d$} (cut);
    \draw [ar, out=-120, in=  0] (cont) to node [edgename, below right] {$e$} (cut);
    \node [etc] (etc1) at ($(cont)+(120:0.8)$) {};
    \node [etc] (etc2) at ($(cont)+( 60:0.8)$) {};
    \draw [ar] (etc1)--(cont);
    \draw [ar] (etc2)--(cont);
  \end{tikzpicture}
  \caption{\label{vercelli_fail2} The path beginning by $(\sigma(B),[\sigl(\sige);\sige],[\oc_t],+)$ can not cross the contraction}
\end{figure}

If we fix this problem by taking our $\mapsto$ relation, then lemma 8.2.17 would fail. Indeed, crossing a $\wn D$ node changes the number of exponential stack element in the stack without changing the length of the potential. If we fixed it by replacing the length of the potential by the level of the edge in the enunciation of lemma 8.2.17 then the lemma would fail on the $\hookrightarrow$ steps because the doors of a same box may have different levels. So the correct form of the lemma is:

If $G$ is a $L^4$ proof-net and $(e,P,T,p) \rightsquigarrow^*_G (f,Q,U,q)$, then 
\begin{equation*}l(e)+ |T|_{\{!,?,\S\}} = l(f) + |U|_{\{!,?,\S\}} \end{equation*}

This is exactly our lemma \ref{L3ConsLevelContext}. However, proving that this weaker lemma is enough is far from trivial.

\section{Notations}
\subsection{Subscripts and superscripts}
\begin{itemize}
\item $A_{/LL}$, with $A$ a $LL_0$ formula, designs the formula $A$ where the indices on the atomic formulae and the $\S$ connectives are deleted. It is defined in page \pageref{def_proj_fll}.
\item $A_{/L^4}$, with $A$ a $LL_0$ formula, designs the formula $A$ where the indices on the atomic formulae are deleted. It is defined in page \pageref{def_proj_fll}.
\item $A_{/L^4_0}$, with $A$ a $LL_0$ formula, designs the formula $A$ where the $\S$ connectives are deleted. It is defined in page \pageref{def_proj_fll}.
\item $a^c_b$ is equal to $c$ if $b=0$, otherwise it is defined as $a^{a^c_{b-1}}$. It is defined in page \pageref{def_exponentialtower}.
\item $b^\perp$, with $b$ a polarity designs the other polarity ($+^\perp=-$ and $-^\perp=+$). It is defined in page \pageref{def_perp}.
\item $t^\perp$ with $t$ a trace element is defined by ${\parr_l}^\perp=\otimes_l$, ${\parr_r}^\perp=\otimes_r$, ${\otimes_l}^\perp=\parr_l$, ${\otimes_r}^\perp=\parr_r$, $\forall^\perp=\exists$, $\exists^\perp=\forall$, $\S^\perp=\S$, ${\oc_t}^\perp=\wn_t$ and ${\wn_t}^\perp=\oc_t$. It is defined in page \pageref{def_perp}.
\item $A^\perp$, with $A$ a formula designs the dual formula $(X)^\perp=X^\perp$, $(X^\perp)^\perp$ and the connectives are changed by their duals. It is defined in page \pageref{def_perpformula}.
\item $t_{|p}$, with $t$ an exponential signature and $p$ a position of $t$ refers to the sub-exponential signature pointed by $p$. It is defined in page \pageref{def_tbarp}.
\item $C_{|p}$, with $C$ a context and $p$ a position of $C$ refers to the exponential signature pointed by $p$. It is defined in page \pageref{def_cbarp}.
\item $C\downarrow^t_p$, with $C$ a context, $t \in Sig$ and $p \in Pos(C)$, represents the context obtained by replacing in $C$ the exponential signature at position $p$ by $t$, replacing the $\sign(t_1,t_2)$ above it by $\sigp(t_2)$ (if it exists), and (if $p$ refers to a trace element) delete the trace elements on the left of $p$. It is defined in page \pageref{def_downarrowtp}.
\end{itemize}

\subsection{Arrows}
\begin{itemize}
\item $\rightsquigarrow$: local relation on contexts. It is defined in page \pageref{def_rightsquigarrow}.
\item $\hookrightarrow$: relation on contexts which makes a ``jump'' between an auxiliary door and a principal door of a box. It is defined in page \pageref{def_hookrightarrow}.
\item $\mapsto$: relation on contexts, it is the union of $\rightsquigarrow$ and $hookrightarrow$. It is defined in page \pageref{def_mapsto}.
\item $\twoheadrightarrow$ is a relation defined on the boxes of a proof-net. $B \twoheadrightarrow B'$ if there exists a path beginning by the principal door of $B$ (with trace $[\oc_t]$) and entering $B'$ by its principal door.
\end{itemize}

\subsection{Orders}
\begin{itemize}
\item $\sqsubseteq$: cf. the definition of ``simplification''.
\item $\sqsubset$: $t \sqsubset t'$ if $t'$ is a simplification of $t'$ ($t \sqsubseteq t'$) and $t \neq t'$. It is defined in page \pageref{def_sqsubset}.
\item $\Subset$: relation on $Pot(B_G) \times Sig$. $(B,P,t) \Subset (B',P',t')$ intuitively means ``In the normal forms of $G$, the reduct of $B$ corresponding to the copy $t$ of its potential $P$ is strictly included in the reduct of $B'$ corresponding to the copy $t'$ of its potential $P'$. It is defined in page \pageref{def_Subset}. 
\item $\preccurlyeq$ is a relation on exponential signatures: intuitively $t \preccurlyeq u$ if $t$ is ``shorter'' than ``u''. It is defined in page \pageref{def_preccurlyeqsig}.
\item $\succcurlyeq^s_k$, cf. the definition of ``$k$-joins''.
\item $\geq^s_k$, let $(B,P)$ and $(B',P')$ be potential boxes, $(B,P) \geq^s_k (B',P')$ iff there are $k$ auxiliary doors of $(B',P')$ reachable from contexts of the shape $(\sigma(B),P,[\oc_t],+)$. It is defined in page \pageref{def_geqsk}.
\item $\vartriangleleft$, cf. the definition of ``tree truncation''.
\item $\blacktriangleleft$, cf. the definition of ``subtree''.
\end{itemize}

\subsection{Others}
\begin{itemize}
\item $q.A$, with $q \in \mathbb{N}$ and $A$ a formula stands for the formula $A$ where we add $q$ to all indices on atomic formulae. It is defined in page \pageref{def_decindices}.
\item $l.x$, with $l$ a list, is the list obtained by adding the element $x$ on the right of $l$. It is defined in page \pageref{def_insertion}.
\item $\partial(~)$, cf. the definition of ``depth''.
\item $l_1@l_2$ is equal to the concatenation of the lists $l_1$ and $l_2$. It is defined in page \pageref{def_arobase}.
\item $X$, whenever $X$ is a set is the cardinal of $X$ (its number of elements). It may be infinite. It is defined in page \pageref{def_cardinal}. 
\item $|[a_1;\cdots;a_k]|$ is equal to $k$, the number of elements of the list. It is defined in page \pageref{def_listrestriction}.
\item $|[a_1;\cdots;a_k]|_X$ is the number of indices $i$ such that $a_i$ is in $X$. It is defined in page \pageref{def_listrestriction}.
\item $f(A)$, with $f$ is a mapping and $A$ a subset of the domain of $f$, refers to the set of images of elements of $A$ by $f$. It is defined in page \pageref{def_setimage}.
\item $(B,P) \curvearrowright_{I,(e,Q)}(B',P')$: intuitively means that there exists a copy of $(B,P)$ whose itinerary is $I$ and arrive at context $(\sigma(e),Q,[\oc_{\sige}],-)$.
  \item $\partial_G$, whenever $G$ is a proof-net, stands for the maximal depth of an edge of $G$. It is defined in page \pageref{def_maxdepth}.
  \item $A[\theta]$, is the formula obtained by applying the substitution $\theta$ to the formula $A$. It is defined in page \pageref{def_acrochettheta}.
  \item $C[f]$, with $C$ a context and $f$ a mapping from positions of $t$ to $Sig$. Then $C[f]$ refers to the context obtained from $C$ by replacing the exponential signature at position $p$ by $f(p)$ (if $p$ is in the domain of $f$). It is defined in page \pageref{def_ccrochetf}.
  \item $t[f]$, with $t$ an exponential signature and $f$ a mapping from positions of $t$ to $Sig$. Then $t[f]$ refers to the exponential signature obtained from $t$ by replacing the sub-exponential signature at position $p$ by $f(p)$ (if $p$ is in the domain of $f$). It is defined in page \pageref{def_tcrochetf}.
\end{itemize}

\subsection{Letters}
\begin{itemize}
\item $C_G$, cf. the definition of ``context''.
\item $C_\rightarrow(B,P)$, cf. the definition of ``$\rightarrow$-copy''.
\item $C_s(B,P)$, cf. the definition of ``$\rightarrow$-copy''.
\item $C_s(x,P)$, with $s$ an integer, is a (more readable) synonym for $C_{\mapsto_s}(x,P)$. It is defined in page \pageref{def_ls}.
\item $D_G(B)$, where $B$ is a box of $G$ refers to the the doors of $B$. It is defined in page \pageref{def_dgb}.
\item $D_G$ is defined as the maximal number of doors of boxes in $B$. It is defined in page \pageref{def_dg}.
\item $E_G$, where $G$ is a proof-net, designs the set of edges of $G$. It is defined in Definition \ref{def_proofnet} in page \pageref{def_proofnet}.
\item $F_G$, cf. the definition of ``final contexts''.
\item $\mathcal{F}_{LL}$: designs the formulae of linear logic. It is defined in page \pageref{def_fll}.
\item $\mathcal{F}_{LL_0}$: designs the formulae of the system $LL_0$. It corresponds to the formulae of linear logic extended with indices on atomic formulae and the $\S$ modality. It is defined in page \pageref{def_fll0}.
\item $I(C,C')$, $I_s((B,P),(e,Q))$: cf. the definition of itineraries. It is defined in page \pageref{def_itinerary}.
\item $L_{\rightarrow}(B)$, cf. the definition of ``$\rightarrow$-canonical potential''.
\item $L_s(x)$, with $s$ an integer, is a (more readable) synonym for $L_{\mapsto_s}(x)$. It is defined in page \pageref{def_ls}.
%\item $P_G$ is the set of principal edges of $B$. It is defined in page \pageref{def_pg}.
\item $N_s(B)$, with $s \in \mathbb{N}$ and $B$ a box, cf. the definition of ``nest''.
\item $N(B)$ is defined as $N_{S_G}(B)$, cf. the definition of ``nest''. It is defined in page \pageref{def_nest}.
\item $N_G$ is the maximum nest of boxes: $N_G= \max_{B \in B_G}N(B)$. It is defined in page \pageref{def_nest}.
\item $Si_\rightarrow(B,P)$, cf. the definition of ``$\rightarrow$-copy''.
\item $S(B)$, with $B$ a box, cf. the definition of ``strata of a box''.
\item $S(C)$, with $C$ a context, cf. the definition of ``strata of a context''.
\item $S_G$, with $G$ a proof-net, cf. the definition of ``stratified proof-net''.
\item $T_G$, where $G$ is a proof-net is a weight associated to this proof-net. This weight decreases along $cut$-eslimination. It is defined as $T_G = \sum_{e \in E_G} | L_{\mapsto}(e)| + 2. \sum_{B \in B_G} \left( D_G(B) \sum_{P \in L_{\mapsto}(B)} \sum_{t \in S_{\mapsto}(B,P)} |t| \right )$. It is defined in Definition \ref{def_tg} in page \pageref{def_tg}.
\end{itemize}

\subsection{Greek letters}
\begin{itemize}
\item $\alpha(l)$, where $l$ is a link of a proof-net, refers to the label of $l$ ($ax$, $cut$, $\parr$, $\otimes$, $\exists$, $\forall$, $\oc P$, $\wn P$, $\wn C$, $\wn W$, $\wn D$ or $\wn N$). It is defined in Definition \ref{def_proofnet} in page \pageref{def_proofnet}.
\item $\beta(e)$, where $e$ is an edge of a proof-net, refers to the formula labelling $e$. It is defined in Definition \ref{def_proofnet} in page \pageref{def_proofnet}.
\item $\beta_{\{\}}(A,e,P,T,T',p)$, where $A$ is a formula, $(e,P)$ a potential edge, $T$ and $T'$ traces and $p$ a polarity, is the set of underlying formulae of $(A,e,P,T,T',p)$. Its unique purpose is to be used to define the underlying formulae of a context. It is defined in page \pageref{def_betasetshit}.
\item $\beta_{\{\}}(C)$, where $C$ is a context, cf. the definition of the underlying formula{\bf e} of a context.
\item $\beta(C)$, where $C$ is a context, cf. the definition of the underlying formula of a context.
\item $\beta(C)$, where $C$ is a context, cf. the definition of the underlying formula of a context.
\item $\rho_G(e)$ is the deepest box of $G$ containing $e$. It is defined in page \pageref{def_rhoge}.
\item $\sigma(B)$, where $B$ is a box, is the edge going out of the principal door of $B$. It is defined in page \pageref{def_sigmab}.
\item $\sigma_i(B)$, where $B$ is a box, is the edge going out of the $i$-th auxiliary door of $B$. It is defined in page \pageref{def_sigmaib}.
\end{itemize}

\subsection{Words}
\begin{itemize}
\item acyclic: A proof net is said acyclic if there is no $\mapsto$-copy context $(e,P,[\oc_t],p)$, $(e,Q) \in Pot(e)$ and $u \in Sig$ such that $(e,P,[\oc_t],p) \mapsto^* (e,Q,[\oc_u],p)$. It is defined in page \pageref{def_acyclic}.
\item auxiliary doors: The auxiliary doors of box $B$ are the links, labelled by $\wn P$ on the bottom side of a box. It is defined in page \pageref{def_auxiliarydoor}.
\item box: set of links of a proof-net. The boxes are usually by rectangles. It is defined in page \pageref{def_box}.
\item $Can(x)$ is a shortcut for $Can_{\mapsto}(x)$
\item $Can_s(x)$ is a shortcut for $Can_{\mapsto_s}(x)$
\item $Can_\rightarrow(x)$, with $x$ an edge or a box, is defined as the set $\Set{(x,P)}{P \in L_\rightarrow(x)}$. It is defined in Definition \ref{def_canonicalpotential} in page \pageref{def_canonicalpotential}.
\item $\rightarrow$-canonical box: A potential box $(B,P)$ is called a canonical edge if $P \in C_{\rightarrow}(B,P)$. It is defined in page \pageref{def_canonicalbox}.
\item $\rightarrow$-canonical context. Intuitively, a context $C$ is $\rightarrow$-canonical if there exists some $(B,P) \in Can_{\rightarrow}(B_G)$ and $t \in S_s(B,P)$ such that $(\sigma(B),P,[\oc_t],+) \rightarrow^* C$. It is defined in Definition \ref{def_canonicalcontext} in page \pageref{def_canonicalcontext}.
\item $\rightarrow$-canonical edge: A potential edge $(e,P)$ is called a canonical edge if $P \in C_{\rightarrow}(e,P)$. It is defined in page \pageref{def_canonicaledge}.
\item $\rightarrow$-canonical potential. Let $\rightarrow$ be a cut simulation, and $e$ an edge such as $e \in B_{\partial(e)} \subset \cdots \subset B_1$, then a canonical potential for $e$ is a potential $P$ whose length is $\partial(e)$ and such as the $i$-th exponential signatures of $P$ is a copy for $B_i$. The set of $\rightarrow$canonical potentials of $e$ is written $L_\rightarrow(e)$, $L_\rightarrow(B)$ refers to $L_\rightarrow(\sigma(B)$. It is defined in Definition \ref{def_canonicalpotential} in page \pageref{def_canonicalpotential}.
\item $Col_s(B,P)$, cf. the definition of colonies.
\item colony: Let $(B,P)$ be a potential box. The colonies of $(B,P)$ at stratum $s$ are the first auxiliary doors that a $\mapsto_s$-path from $(\sigma(B),P,[\oc_t],+)$ can reach (with $t \in Sig$) which belong to a box $B'$ with $N(B)>N(B')$. The set of colonies of $(B,P)$ at stratum $s$ is written $Col_s(B,P)$. $Col(B,P)$ refers to $Col_{S_G}(B,P)$. It is defined in page \pageref{def_colony}.
\item $concl$: cf. the definition of ``conclusion''.
\item conclusion: the conclusions of the link $l$ refers to the outgoing edges of $l$. The set of conclusions of $l$ is written $concl(l)$. It is defined in page \pageref{def_conclusion}
\item context: A context of $G$ is an element $(e,P,T,p)$ with $e$ an edge of $G$, $P$ a potential, $T$ a trace and $p$ a polarity. The set of contexts is written $C_G$. It is defined in page \pageref{def_context}.
\item controls dependence: A principal door stratified proof net $G$ controls dependence if $\succcurlyeq_2^{S(G)}$ is irreflexive. It is defined in page \pageref{def_controlsdependence}.
\item $\rightarrow$-copy: a $\rightarrow$-copy of a potential box $(B,P)$ corresponds to duplicates of the box $B$, restricting the $cut$-elimination according to the cut simulation $\rightarrow$ and knowing in which duplicates of the box including $B$ we are. The set of $\rightarrow$-copy of potential box $(B,P)$ is written $C_\rightarrow(B,P)$. We also have special notations for specific cut simulations: $C_{s}(B,P)$ refers to $C_{\mapsto_s}(B,P)$. The simplification of $\rightarrow$-copies of $(B,P)$ is written $S_\rightarrow(B,P)$. It is defined in Definition \ref{def_copy} in page \pageref{def_copycontext}. More intuitions can be found in the beginning of Subsection \ref{timecomplexity}.
\item $\rightarrow$-copy context: informally, a context is a $\rightarrow$-copy context if the path beginning by it can not be extended by extending the exponential signature of its first trace element, and taking a shorter exponential signature for this first trace element will shorten the path. It is defined in page \pageref{def_copycontext}. More intuitions can be found in the beginning of Subsection \ref{timecomplexity}.
\item copymorphism: Let $G$ and $H$ be two proof-nets, a copymorphism from $G$ to $H$ is a tuple $(D_\phi,D'_\phi,\phi,\psi)$ with $D_\phi \subseteq E_G$, $D'_\phi \subseteq E_H$, $\phi: Pot(D_\phi) \times Sig \mapsto Pot(D'_\phi) \times Sig$ and $\psi: C_G \mapsto C_G$. These objects are required to satisfy many more properties. A copymorphism is meant to explicit the relations between a proof-net and its reduct by the $\mapsto$ relation. It is defined in Definition \ref{def_copymorphism} in page \pageref{def_copymorphism}.
\item $cut$-elimination: relation on proof-nets defined of figures \ref{mult_rules}, \ref{exp_rules} and \ref{quant_rules}.
\item $cut$-simulation: relation on contexts defined to simulate $cut$-elimination. $\mapsto$ simulates the full $cut$-elimination. $\rightsquigarrow$ and $\mapsto_s$ are other examples of $cut$ simulations. It is defined in page \pageref{def_cutsimulation}.
\item cyclic: A proof net is said cyclic if there is a $\mapsto$-copy context $(e,P,[\oc_t],p)$, $(e,Q) \in Pot(e)$ and $u \in Sig$ such that $(e,P,[\oc_t],p) \mapsto^* (e,Q,[\oc_u],p)$. It is defined in page \pageref{def_acyclic}.
\item depth: The depth of an $x$, if the relation is not precised, is its depth in terms of box inclusion. The depth of $x$, written $\partial(x)$ designs the number of boxes containing $x$. Formally, it only makes sense if $x$ is a link. We extend it to edges and boxes. $\partial(B)$ is the number of boxes in which $B$ is strictly included. $\partial(e)$ refers to the depth of the tail of $e$. It is defined in page \pageref{def_depth}.
\item eigen variables: the eigenvariables of a proof-net are the variables which are replaced in a $\forall$ link. It is defined in page \pageref{def_eigenvariable}.
\item exponential signature: objects used to represent sequences of choices during a path. They are defined by $Sig = \sige \mid \sigl(Sig) \mid \sigr(Sig) \mid \sigp(Sig) \mid \sign(Sig,Sig)$. It is defined in page \pageref{def_expsignature}.
\item final context: contexts which may correspond to the end of paths of copies. It is defined in Definition \ref{def_finalcontext} in page \pageref{def_finalcontext}.
\item head: The head of the edge $(l,m)$ of a proof-net refers to $l$. So, the head of an edge is a link. It is defined in page \pageref{def_head}.
\item itinerary: Let $C$ and $C'$ be two contexts. The itinerary between $C$ and $C'$ is the list of the indices of the auxiliary doors on which there is $\hookrightarrow$ steps in the $\mapsto$-path from $C$ to $C'$. It is denoted $I(C,C')$. We also write $I_s((B,P),(e,Q))$ for the set of itineraries of the shape $I((\sigma(B),P,[\oc_t],+),(e,Q,[\oc_{\sige}],-))$ with $t \in Si_s(B,P)$. It is defined in page \pageref{def_itinerary}.
\item $k$-joins: For any $k,s \in \mathbb{N}$, we first define a relation $\succcurlyeq_k^s$ on potential boxes by: $(B,P) \succcurlyeq_k^s (B',P')$ (we say that $(B,P)$ $k$-joins $(B',P')$) iff at least $k$ duplicates of $(B,P)$ join $(B',P')$ (while firing only cuts at level $\leq s$). Then, we define a relation $\succcurlyeq_k^s$ on boxes by: $B \succcurlyeq_k^s B'$ ($B$ $k$-joins $B'$) iff at least $k$ duplicates of $B$ join $B'$ (while firing only cuts at level $\leq s$). It is defined in page \pageref{def_kjoins}.
\item nest: Let $B$ be a box of a stratified proof-net $G$ which controls dependence. Let $s \in \mathbb{N}$, the nest of $B$ at stratum $s$ (denoted $N_s(B)$) is the depth of $B$ in terms of the $\succcurlyeq^s_2$ relation. $N(B)$ refers to $N_{S_G}(B)$. It is defined in page \pageref{def_nest}.
\item parallel position: Let $p$ and $q$ be positions. $p$ and $q$ are parallel iff there are no $r$ such that $p=q@r$ or $q=p@r$. It is defined in page \pageref{def_parallelpositions}.
\item pending edges: the pending edges of a proof-net are its edges which have no conclusions. By convention, we write that their conclusions are $\bullet$. It is defined in page \pageref{def_pending}.
\item $Pol$: cf. the definition of ``polarity''.
\item polarity: either $+$ or $-$. The set $\{+,-\}$ is written $Pol$. It is defined in page \pageref{def_polarity}.
\item $Pos(t)$, with $t$ an exponential signature: cf. the definition of ``position''.
\item $Pos(C)$, with $C$ a context: cf. the definition of ``position''.
\item Position: Let $t$ be an exponential signature. If we consider it as a tree, the positions of $t$ (denoted $Pos(t)$) refers to the node of the tree. It is defined in page \pageref{def_positions}.Let $C$ be a context. The positions of $C$ (denoted $Pos(C)$) refers to the positions of the exponential signatures of $C$ (in the potential of $C$ and in the trace of $C$). It is defined in page \pageref{def_contextpositions}.
\item positive weights: a proof-net $G$ has positive weights if for all potential boxes $(B,P) \in Pot(B_G)$, $C_{\mapsto}(G) > 0$. It is defined in page \pageref{def_positiveweights}.
\item $Pot$: cf. the definition of ``potential''.
\item $Pot(x)$: cf. the definition of ``potential box'', ``potential edge'' or ``potential link'' depending on the nature of $x$.
\item potential: list of exponential signatures. The set of potentials is written $Pot$. It is defined in page \pageref{def_potential}.
\item potential box: couple $(B,P)$ with $B$ a box and $|P|=\partial(B)$. $Pot(B)$ refers to the potential boxes which have $B$ as a first component. It is defined in page \pageref{def_potentialbox}.
\item potential edge: couple $(e,P)$ with $e$ an edge and $|P|=\partial(e)$. $Pot(e)$ refers to the potential edges which have $e$ as a first component. It is defined in page \pageref{def_potentialbox}.
\item potential link: couple $(l,P)$ with $l$ a link and $|P|=\partial(l)$. $Pot(l)$ refers to the potential boxes which have $l$ as a first component. It is defined in page \pageref{def_potentialbox}.
%\item predecessor: If $(l,m)$ is an edge of a proof-net, then $m$ is a predecesor of $l$. It is defined in page \pageref{def_successor}.
\item premise: the premises of the link $l$ refers to the incoming edges of $l$. It is defined in page \pageref{def_premise}.
\item principal door: The principal door of box $B$ is the link, labelled by $\oc P$ on the bottom side of a box. It is defined in page \pageref{def_principaldoor}.
\item proof-net: A proof-net is a graph-like structure representation of a proof. It is defined in Definition \ref{def_proofnet} in page \pageref{def_proofnet}.
\item quasi-standard: an exponential signature $t$ is said quasi-standard if for every subtree $n(t_1,t_2)$ of $t$, the exponential signature $t_2$ is standard. It is defined in page \pageref{def_standard}.
\item $Sig$: cf. the definition of ``exponential signatures''.
\item simplification: We say that $t'$ is a simplification of $t$ (written $t \sqsubseteq t'$) if we can transform $t$ into $t'$ by transforming some of the subtrees $n(t_1,t_2)$ of $t$ into $p(t_2)$. It is defined in page \pageref{def_sqsubseteq}.
\item skeleton: The skeleton of a trace is the trace where we drop the exponential signatures on $\oc_t$ and $\wn_t$ trace elements. It is defined in page \pageref{def_skeleton}.
\item spindle: A spindle is a couple of boxes $(B,C)$ such that the principal doors of two copies of $B$ are cut with auxiliary doors of $C$. It is defined in page \pageref{def_spindle}.
\item standard: an exponential signature is said standard if it does not contain the constructor $p$. It is defined in page \pageref{def_standard}.
\item strata of a box: the strata of a box $B$, written $S(B)$, is its depth in terms of $\twoheadrightarrow$. It is defined in page \pageref{def_strata}.
\item strata of a context: Let $C$ be a context such that $(\sigma(B),P,[\oc_t],+) \rightsquigarrow^* C$, the stratum of $C$ (written $S(C)$) is the stratum of box $B$. It is defined in page \pageref{def_strata}.
\item stratified proof-net: a proof-net $G$ is stratified if the relation $\twoheadrightarrow$ defined on its boxes is acyclic. The strata of $G$ refers to the maximum strata of its boxes: $S_G=\max_{B \in B_G}(B)$. It is defined in page \pageref{def_stratified}.
\item substitution tree: A tree with internal nodes labelled by substitutions on a single variable and leafs labelled by the void function. It is defined in page \pageref{def_substitutiontree}.
\item subtree: Let $T$ and $U$ be trees, we say that $T$ is a subtree of $U$ (denoted $T \blacktriangleleft U$) if $T$ corresponds to a branch of $U$: it contains exactly a node of $U$ and all its descendents. It is defined in page \pageref{def_subtree}.
%\item successor: If $(l,m)$ is an edge of a proof-net, then $m$ is a successor of $l$. It is defined in page \pageref{def_successor}.
\item tail: The tail of the edge $(l,m)$ of a proof-net refers to $m$. So, the tail of an edge is a link. It is defined in page \pageref{def_tail}.
\item $Tra$: cf. the definition of ``trace''.
\item trace: A trace is a non-empty list of trace element. The set of traces is written $Tra$. It is defined in page \pageref{def_trace}.
\item trace element: A trace element is one of the following: $\parr_l$, $\parr_r$, $\otimes_l$, $\otimes_r$, $\forall$, $\exists$, $\S$, $\oc_t$ and $\wn_t$ (with $t$ an exponential signature). The set of trace elements is written $TrEl$. It is defined in page \pageref{def_traceelement}.
\item $TrEl$: cf. the definition of ``trace element''.
\item truncation: Let $T$ and $U$ be trees. We say that $T$ is a truncation of $U$ (denoted $T \vartriangleleft U$) if $T$ can be obtained from $U$ by cutting some branches of $U$. It is defined in page \pageref{def_truncation}.
\item underlying formulae of a context: Let $C$ be a context, the underlying formulae of $C$, written $\beta_{\{\}}(C)$ represents the formula of the potential edge it ``comes from'' and its possible evolutions along the $cut$-elimination of the proof-net. It is defined in Definition \ref{def_underlyingformulae} in page \pageref{def_underlyingformulae}.
\item underlying formula of a context: Let $C$ be a context of the proof-net $G$, $e$ the edge $C$ ``comes from'' and $e'$ the reduct of $e$ in the normal form of $G$. The underlying formula of $C$, written $\beta(C)$, intuitively is the formula indexing $\beta(e')$. It is defined in page \pageref{def_underlyingformula}.
\item $ztree$: Let $(e,P)$ be a potential edge of $G$, then the complete substitution of $(e,P)$, written $ztree(e,P)$ is a substitution tree meant to represent the substitutions of eigenvariables that will occur on this edge during the $cut$ normalization of $G$. It is defined in page \pageref{def_ztree}.
\end{itemize}

\end{document}